\newtheorem{defn}{Definition}[section]
\newtheorem{prop}{Proposition}[section]
\newtheorem{lem}{Lemma}[section]
\newtheorem{thm}{Theorem}[section]
\newtheorem{cor}{Corollary}[section]
\newtheorem{example}{Example}
\newtheorem{conj}{Conjecture}
\newtheorem{rem}{Remark}
\numberwithin{equation}{section}
\newenvironment{proof}[1][Proof]{\textbf{#1.} }{\hfill\rule{0.5em}{0.5em}}
{\catcode`\@=11\global\let\AddToReset=\@addtoreset
\AddToReset{equation}{section}


\usepackage{fancyhdr}
\fancyhead{}
\fancyhead[L]{\rightmark}			
\fancyfoot[C]{\thepage}	
\setcounter{secnumdepth}{3}

\newcommand{\nc}{\newcommand}

\nc{\cW}{{\cal W}}
\nc{\W}{{\mathcal W}}
\nc{\cWN}{\stackrel{\footnotesize[N]}{\cW}}
\nc{\NWex}[1]{\stackrel{\footnotesize[#1]}{\cW}}

\makeindex
\begin{document}
\author{
\textbf{Thi-Thao VU}\\
~\\ ~\\
Laboratoire de Math\'ematiques et Physique Th\'eorique, Universit\'e Fra\c cois Rabelai, Tours, \textsc{France}\\
~\\
\textit{Thi-thao.Vu@lmpt.univ-tours.fr}\\
}

\title{
\textbf{\huge THE HIGHER ORDER $q$-DOLAN-GRADY RELATIONS AND QUANTUM INTEGRABLE SYSTEMS}
\\
Version 
}
\chapter*{\begin{center}\huge{THE HIGHER ORDER $q$-DOLAN-GRADY RELATIONS AND QUANTUM INTEGRABLE SYSTEMS} \end{center}}
\begin{center}
TH\`ESE\\
LMPT, Universit\'e Fran\c cois-Rabelais, Tours, \textsc{France}\\
\vspace{1 cm}
\large{\textbf{Thi Thao Vu}}\\

\end{center}
\vspace{0.7cm}
		 
\begin{tabular}{lll}
\textbf{\large TH\`ESE dirig\'ee par : }  \vspace{0.01cm}\\
\textbf{M. P. BASEILHAC} & C.R. CNRS, Universit\'e Fran\c cois-Rabelais de Tours\\
	
\vspace{0.20cm} \\

\textbf{\large RAPPORTEURS :}  \vspace{0.01cm}\\
	\textbf{M. H. KONNO}& Professeur, Universit\'e des Sciences et Technologies Marines de Tokyo\\
	\textbf{M. P. TERWILLIGER}& Professeur, Universit\'e du Wisconsin-Madison\\
\vspace{0.20cm} \\

\textbf{\large JURY : }  \vspace{0.01cm} \\
	\textbf{M. P. BASEILHAC} & C.R. CNRS, Universit\'e Fran\c cois-Rabelais de Tours\\
	\textbf{M. C. LECOUVEY} & Professeur, Universit\'e Fran\c cois-Rabelais de Tours \\
	\textbf{M. V. ROUBTSOV} & Professeur, Universit\'e d'Angers \\
	\textbf{M. H. SALEUR} & Chercheur CEA, Institut de Physique Th\'eorique, CEA Saclay\\
	\textbf{M. P. TERWILLIGER}& Professeur, Universit\'e du Wisconsin-Madison
\end{tabular}
\newpage
\markright{\MakeUppercase{Abstract}}

\begin{center}  {\huge \bf Abstract} \end{center} \medskip
\paragraph{}
In this thesis, the connection between recently introduced algebraic structures (tridiagonal algebra, $q$-Onsager algebra, generalized $q-$Onsager algebras), related  representation theory (tridiagonal pair, Leonard pair, orthogonal polynomials), some properties of these algebras and the analysis of related quantum integrable models on the lattice (the $XXZ$ open spin chain at roots of unity) is first reviewed. Then, the main results of the thesis are described: (i) for the class of $q-$Onsager algebras associated with $\widehat{sl_2}$ and ADE type simply-laced affine Lie algebras, higher order analogs of Lusztig's relations are conjectured and proven in various cases, (ii) for the open $XXZ$ spin chain at roots of unity, new elements (that are divided polynomials of $q-$Onsager generators) are introduced and some of their properties are studied. These two elements together with the two basic elements of the $q-$Onsager algebra generate a new algebra, which can be understood as an analog of Lusztig's quantum group for the $q-$Onsager algebra. Some perspectives are presented.\\

\paragraph{Keywords : }

Tridiagonal algebra; Tridiagonal pair; $q$-Onsager algebra; Generalized $q$-Onsager algebra; $XXZ$ open spin chain; root of unity.

\chapter*{Notations}
\markright{\MakeUppercase{Notations}}
Throughout this thesis, we use the following notations:
\begin{enumerate}
\item Let $A, B$ denote generators, then
\begin{eqnarray*}
[A,B]&=&AB-BA.\\
~[A,B]_q&=&qAB-q^{-1}BA.
\end{eqnarray*}
\item Let $n, m$ be integers, then
\begin{eqnarray*}
\left[ \begin{array}{c}
n \\

m 
\end{array}\right]_q
&=&\frac{[n]_q!}{[m]_q!\,[n-m]_q!}\ , \qquad
[n]_q!=\prod_{l=1}^n[l]_q\ ,\qquad
[n]_q=\frac{q^n-q^{-n}}{q-q^{-1}}, \quad [0]_q=1 \ .
\end{eqnarray*}
\item $\{ x\}$ denotes the integer part of $x$. Let $j,m,n$ be integers, write $j=\overline{m,n}$  for $j=m,m+1,...,n-1,n$.
\item The Pauli matrices $\sigma_1, \sigma_2, \sigma_z, \sigma_{\pm}$:
\begin{eqnarray*}
\sigma_1&=&\left(\begin{array}{cc}
0&1\\
1&0
\end{array}\right),~
\sigma_2=
\left(\begin{array}{cc}
0&-i\\
i&0
\end{array}\right),~
\sigma_z=
\left(\begin{array}{cc}
1&0\\
0&-1
\end{array}\right),~\\
\sigma_+&=&\left(\begin{array}{cc}
0&1\\
0&0
\end{array}\right),~
\sigma_-=\left(\begin{array}{cc}
0&0\\
1&0
\end{array}\right).\nonumber
\end{eqnarray*}.
\end{enumerate}
 
\tableofcontents
\newpage

\thispagestyle{empty}
\pagenumbering{arabic} \setcounter{page}{1} 
\chapter*{Introduction}
\addcontentsline{toc}{chapter}{Introduction}
\markright{\MakeUppercase{Introduction}}
In the literature, the Onsager algebra and the Dolan-Grady relations first appeared in study of integrable systems (the $XY$, the Ising models,\dots) \cite{O, DG, B.D, B.D1}. Later on, they appeared in the context of mathematics in relation with certain subalgebras of $\widehat{sl_2}$ \cite{DR}. From 2003, a $q-$deformed analog of the Dolan-Grady relations appeared in the context of mathematics as a special case of tridiagonal algebras  [Terwilliger et al.]. Almost simultaneously, the $q-$Dolan-Grady relations appeared in the context of quantum integrable systems on the lattice and continuum: the $q-$Onsager
 algebra was defined (which $q-$Dolan-Grady relations are the defining relations) in relation with the quantum reflection equation, as an algebra generating a large class of quantum integrable systems on the lattice or continuum [Baseilhac et al.].\vspace{1mm}

There is now a rather vast literature on the subject of tridiagonal algebras \cite{PTIII}, the representation theory of tridiagonal pairs \cite{TD00} and Leonard pairs \cite{Ter01}, the
$q-$Onsager algebra and its generalizations \cite{T05, B1}, the connections with coideal subalgebras of $U_q(\widehat{sl_2})$ \cite{BB3} and with a new infinite dimensional algebra called ${\cal A}_q$ \cite{BasS}. In the context of mathematical physics, these structures and the explicit analysis of some of their properties lead to several new exact non-perturbative results for the open XXZ spin chain \cite{BK3}, for the half-infinite XXZ spin chain \cite{BB3, BK14}, for the open affine Toda field theories \cite{BB1, BF}. From a general point of view, a new approach called `$q-$Onsager approach' has emerged as an alternative to existing ones in quantum integrable systems (Bethe ansatz \cite{Bet, FST}, separation of variable \cite{Sk1}, $q-$vertex operators \cite{JM, JKKKMW}). Since 2007, this approach has been currently developed in different directions. \vspace{1mm}

In this thesis, we explore one direction which overlaps between  mathematics and physics. Namely, we investigate in detail some properties of the $q-$Onsager algebras (in particular the existence and explicit construction of higher order relations between monomials of the fundamental generators) which will find application in the analysis of the open XXZ spin chain at roots of unity (characterization of the symmetry of the Hamiltonian at roots of unity). 
At the moment, the results of this thesis have been published in two articles \cite{BV1, BV2}. There is another article in preparation \cite{BGSV}.\vspace{1mm} 

The manuscript of the thesis is divided into three main Chapters. 
\\

\textbf{Chapter 1.} \textit{We summarize without proofs the relevant material on tridiagonal algebras, the $q$-Onsager algebra and some aspects of its representation theory: Leonard pairs, tridiagonal pairs and orthogonal polynomials.} \\

In the first part, tridiagonal algebras are defined by generators and relations. Several special cases corresponding to particular parameter sequences of the tridiagonal relations are mentioned such as the $q$-Serre relations or the Dolan-Grady relations.\vspace{1mm}

The $q$-Onsager algebra is introduced in a second part. Its defining relations are the $q-$Dolan-Grady relations: these are $\rho_i$-deformed analogues of the $q$-Serre relations, and correspond to a special parameter sequence of the tridiagonal algebra. In connection with the quantum affine algebra $U_q(\widehat{sl_2})$ and the $U_q(sl_2)$-loop algebra, we thus recall homomorphisms from the $q$-Onsager algebra to these algebras. Finally, we recall the reflection equation algebra and indicate its relation with the $q$-Onsager algebra: the reflection equation algebra is defined by generators which are entries of the solution of the ``RKRK'' equations for the $U_q(\widehat{sl_2})$ $R$-matrix. The isomorphisms between the reflection equation algebra, the current algebra $O_q(\widehat{sl_2})$, and the infinite dimensional algebra $\mathcal{A}_q$ generated by $\{\mathcal{W}_{-k},  \mathcal{W}_{k+1}, \mathcal{G}_{k+1}, \tilde{\mathcal{G}}_{k+1}|k \in \mathbb{Z}_+\}$ are recalled. We also recall the construction of a coaction map for the $q$-Onsager algebra and the defining relations of the $K$-matrix as an intertwiner of irreducible finite dimensional representations of the $q$-Onsager algebra. Thus, a quotient of the $q$-Onsager algebra is isomorphic to a quotient of the reflection equation algebra.\vspace{1mm}

The last part recalls some aspects of the representation theory of tridiagonal algebras (including the case of the $q-$Onsager algebra), in particular the results of Terwilliger et al. about irreducible finite dimensional representations and the concept of tridiagonal pairs. For convenience, Leonard pairs (a subclass of the tridiagonal pairs) are introduced first. We recall the notion of Leonard pair, Leonard system as well as modification of a given Leonard pair in several ways. The relation between the Leonard pair and the tridiagonal algebra is also clarified. Namely, there exists a scalar sequence such that the Leonard pair satisfies the corresponding Askey-Wilson relations. Inversely, a pair of linear transformations satisfying the Askey-Wilson relations allows to define a Leonard pair under certain conditions. One more important result is the classification of Leonard pairs, it is asserted that a sequence of scalars satisfying conditions $(i)-(v)$ in Theorem (\ref{clas}) is necessary and sufficient to obtain a Leonard pair. In addition, we show that Leonard pairs arise naturally in relation with the Lie algebra $sl_2$ and $U_q(sl_2)$. A more general object than the concept of Leonard pair, namely the concept of tridiagonal pair, is also introduced and described in some details. We first recall the concept of a tridiagonal pair, of a tridiagonal system as well as properties of its (dual) eigenvalue sequence, the corresponding (dual) eigenspace sequence. It is asserted that the tridiagonal pair of $q$-Racah type satisfies the tridiagonal relations, inversely a tridiagonal pair can be obtained from a tridiagonal algebra under several conditions. We also describe some special classes of tridiagonal pairs, such as Leonard pairs, tridiagonal pairs of the $q$-Serre type, mild tridiagonal pairs, sharp tridiagonal pairs. Especially, the classification of the sharp tridiagonal pairs is clarified \cite{INT}. Last but not least, the relation between tridiagonal algebras and the theory of orthogonal polynomials is briefly described. We recall hypergeometric orthogonal polynomials, and describe the connection between the theory of Leonard pairs and the Askey-scheme of orthogonal polynomials \cite{Ter06}. Note that the extension
to the theory of tridiagonal pairs leads to hypergeometric polynomials of several variables defined on a discrete support (Gasper-Rahman), as recently discovered in \cite{BM}.

\textbf{Chapter 2.} \textit{We recall the known presentations of the Onsager algebra and how the so-called $q-$Onsager algebra appeared in the context of mathematical physics. We briefly recall the `$q-$Onsager approach'. 
}
\\

First, we provide a historical background about the two different known presentations of the Onsager algebra: either the original presentation with generators $A_n,G_m$ \cite{O} or the presentation in terms of the Dolan-Grady relations \cite{DG}.
We also recall the relation with the loop algebra of  $sl_2$.
Secondly, we recall how the $q-$Onsager algebra surprisingly appeared in 2004 in the context of quantum integrable systems, through an analysis of solutions of the reflection equation. In particular, we recall how the new infinite dimensional algebra ${\cal A}_q$ arises and how it is related with the $q-$Onsager algebra.
It is explained how its connections with the quantum loop algebra of $sl_2$ and with $U_q(\widehat{sl_2})$ naturally appear from the Yang-Baxter and reflection equation algebra formulation.
Then, the so-called `$q-$Onsager approach' is briefly recalled.  
\\

\textbf{Chapter 3.} \textit{The three main results of the thesis are presented in some details. (i) and (ii): For the family of $q-$Onsager algebras ($\widehat{sl_2}$ and ADE type), analogs of the higher order relations of Lusztig  are conjectured and supporting evidence is presented in detail; (iii) The open XXZ spin chain is considered at roots of unity in the framework of the $q-$Onsager approach. A new algebra, an analog of Lusztig's quantum group, naturally arises. For a class of finite dimensional representations, explicit generators and relations are described. With respect to the new algebra, symmetries of the Hamiltonian are explored.}
\\

Recall the homomorphism from the $q-$Onsager algebra to the quantum affine Lie algebra $U_q(\widehat{sl_2})$ \cite{Bas1, BB3}. Recall the homomorphism from the generalized $q-$Onsager algebras (higher rank generalizations of the $q-$Onsager algebra) to the quantum affine Lie algebra $U_q(\widehat{g})$ \cite{BB1, Kolb}. By analogy with Lusztig's higher order relations \cite{Luszt} which arise for any quantum affine Lie algebra, it is thus expected that higher order relations are satisfied. Successively, we obtained:

\begin{enumerate}
\item \textit{The higher order relations for the $q$-Onsager algebra (the $\widehat{sl_2}$ case)} \cite{BV1}\vspace{1mm}

Let $A,A^*$ be the standard generators of the $q-$Onsager algebra. The $r-$th higher order relations for the $q$-Onsager algebra are conjectured. First, a generalization of the conjecture is proven for the case of tridiagonal pairs (i.e. certain irreducible finite dimensional representations on which $A,A^*$ act).  Two-variable polynomials which determine the relations are given. Then, the special case of the $q-$Onsager algebra is considered in details. The conjecture is proven for $r = 2, 3$. For $r$ generic, the conjecture is studied recursively.
A Maple software program is used to check the conjecture, which is confirmed for $r\leq 10$. Also, for a special case, the higher order relations of Lusztig are recovered.\\
 
\item \textit{The higher order relations for the generalized $q$-Onsager algebra (the ADE serie)}  \cite{BV2}\\

For each affine Lie algebra, a generalized $q$-Onsager algebra has been defined in \cite{BB1}. Let $A_i$, $i=0,1,...,rank(g)$ be the standard generators of this algebra. By analogy with the $\widehat{sl_2}$ case, for any simply-laced affine Lie algebra analogues of Lusztig's higher order relations are conjectured. The conjecture is proven for $r \le 5$. For $r$ generic, the conjecture is studied recursively. A Maple software program is used to check the conjecture, which is confirmed for $r\leq 10$.
According to the parity of $r$, two new families of two-variable polynomials are proposed, which determine the structure of the higher order relations. Several independent checks are done, which support the conjecture.\\

\item \textit{The $XXZ$ open spin chain at roots of unity} \cite{BGSV}\\ 

Inspired by the fact that the XXZ periodic spin chain at roots of unity enjoys a $sl_2$ loop algebra symmetry in certain sectors of the spectrum \cite{DFM}, the aim is to settle an algebraic framework for the analysis of the open XXZ spin chain at roots of unity within the $q-$Onsager approach. First, the two basic generators of the $q-$Onsager algebra are recalled, and their properties are studied for $q$ a root of unity (spectrum, structure of the eigenspaces and action). They form a new object that we call a `cyclic tridiagonal pair'. Secondly, two new operators that are  divided polynomials of the fundamental generators of the $q$-Onsager algebra are introduced. We study some of their properties (spectrum, structure of the eigenspaces and action). The relations satisfied by the four operators are described in details. They generate an explicit realization and first example of an analog of Lusztig quantum group for the $q-$Onsager algebra.  Finally, we briefly discuss the conditions on the boundary parameters such that the Hamiltonian of the open XXZ spin chain commutes with some of the generators.   

\end{enumerate}
In the end of this thesis, three families of open problems are presented in Chapter 4 and appendices are reported in Chapter 5. 
\chapter{Mathematics: background}
The aim of this Chapter is to present various aspects of the tridiagonal and $q$-Onsager algebras, their relations with other types of quantum algebras, their relations with the theory of Leonard and tridiagonal pairs, and finally the remarkable connection with the theory of orthogonal polynomials and Askey-scheme.\vspace{1mm}

In the first part, the definition of the tridiagonal algebra is given in terms of generators and relations. Several special cases of tridiagonal relations corresponding to the particular parameter sequences are recalled such as the $q$-Serre relations, the Dolan-Grady relations.\vspace{1mm}

In the second part, the $q$-Onsager algebra is defined as the $\rho_i$-deformed analogues of the $q$-Serre relations. Indeed, it is a special case of the tridiagonal algebra. Furthermore, the connection  between the $q$-Onsager algebra and other algebras such that the quantum enveloping algebra $U_q(\widehat{sl_2})$, the quantum loop algebra $U_q({\cal L}(sl_2)$, the reflection equation algebra and the current algebra denoted $O_q(\widehat{sl_2})$ recently introduced is briefly described.\vspace{1mm}

In the last part, several results due to Terwilliger {et al.} about the representation theory of tridiagonal algebras - in particular the $q-$Onsager algebra - are recalled. Finite dimensional irreducible representations of the tridiagonal algebras and of the $q-$Onsager algebras have been considered in details by Terwilliger {\it et al.}.  Provided two matrices $A,A^*$ satisfy the defining relations of 
the tridiagonal algebra, are diagonalizable on the vector space and the representation is irreducible,  it implies the following: in the basis in which the first matrix $A$ is diagonal with degeneracies, the other matrix $A^*$ takes a block tridiagonal structure. Furthermore, there exists another basis with respect to which the first matrix $A$ transforms into a block tridiagonal matrix, whereas the second one transforms into a diagonal matrix with degeneracies. In the simplest case (no degeneracy in the spectra), tridiagonal pairs are called Leonard pairs, which definition and properties will be first recalled. Then, we will describe definition and properties of tridiagonal pairs based on Terwilliger and collaborators' investigations. In addition, the connection between the theory of special functions and orthogonal polynomial and the $q-$Onsager algebra
is briefly described. Note that this subject was one of the motivations for studying tridiagonal algebras and the theory of tridiagonal pairs. For the simplest examples of tridiagonal pairs, namely the Leonard pairs, Askey-Wilson polynomials arise.\vspace{1mm}

Note that some of the material presented in this Chapter will play a crucial role in the analysis of the higher-order $q-$Dolan-Grady relations of the $q-$Onsager algebra, that will be considered in Chapter 3 entitled ``MAIN RESULTS''.
\section{Tridiagonal algebras}
Tridiagonal algebras come up in the theory of $Q$-polynomial distance-regular graphs \cite[Lemma 5.4]{PTIII} and tridiagonal pairs \cite[Theorem 10.1]{TD00}, \cite[Theorem 3.10]{Ter03}. A tridiagonal algebra has a presentation by two generators and two relations as follows
\begin{defn}\cite{Ter03}
Let $\mathbb{K}$ denote a field, and let $\beta, \gamma, \gamma^*, \delta, \delta^*$ denote a sequence of scalars taken from $\mathbb{K}$. The corresponding tridiagonal algebra $T$ is the associative $\mathbb{K}$-algebra with 1 generated by the generators $A, A^*$ subject to the relations 
\begin{eqnarray}
\label{tri1}
\lbrack A, A^2A^*-\beta AA^*A+A^*A^2-\gamma(AA^*+A^*A)-\delta A^*\rbrack&=& 0,\\
\label{tri2}
\lbrack A^*, {A^*}^2A-\beta A^*AA^*+A{A^*}^2-\gamma^*(A^*A+AA^*)-\delta^* A\rbrack&=& 0.
\end{eqnarray}
These relations are called the tridiagonal relations.
\end{defn}
The algebra generated by $A$ and $A^*$ satisfy (\ref{tri1}), (\ref{tri2}) is known as the subconstituent algebra or the Terwilliger algebra \cite{Terc1, Terc2, PTIII}. Furthermore, relations (\ref{tri1}) and (\ref{tri2}) are satisfied by the generators of both the classical and quantum `Quadratic Askey-Wilson algebra' introduced by Granovskii et al. \cite{GLZ}.\\

A tridiagonal algebra depends on the parameter sequence $\beta, \gamma, \gamma^*, \delta, \delta^*$. Now we consider some cases of the tridiagonal algebras that correspond to the particular parameter sequences.\\
 
A special case of (\ref{tri1}) and (\ref{tri2}) occurs in the context of quantum groups.
For $\beta = q^2+q^{-2}, \gamma = \gamma^* =0, \delta =\delta^* =0$, the tridiagonal relations are the $q$-Serre relations \cite{CP}.
\begin{eqnarray}
\label{1.3}
A^3A^*-[3]_qA^2A^*A+[3]_qAA^*A^2-A^*A^3&=&0,\\
\label{1.4}
{A^*}^3A-[3]_q{A^*}^2AA^*+[3]_qA^*A{A^*}^2-A{A^*}^3&=&0.
\end{eqnarray}
Note that equations (\ref{1.3}) and (\ref{1.4}) are among the defining relations for the quantum affine algebra $U_q(\widehat{sl_2})$.\\

Another special case of (\ref{tri1}) and (\ref{tri2}) has come up in the context of exactly solvable models in statistical mecanics. For $\beta = 2, \gamma=\gamma^* =0, \delta =b^2, \delta^*={b^*}^2$, the tridiagonal relations are the so-called Dolan-Grady relations \cite{DG}
\begin{eqnarray}
\lbrack A,\lbrack A,\lbrack A,A^*\rbrack\rbrack\rbrack&=&b^2[A,A^*],\\
\lbrack A^*,\lbrack A^*,\lbrack A^*,A\rbrack\rbrack\rbrack&=&{b^*}^2[A^*,A].
\end{eqnarray}
One more example of a tridiagonal algebra is the $q$-Onsager algebra. It will be considered in the following part.

\section{The $q$-Onsager algebra from different points of view}
In \cite{T05, B1, Bas1}, the $q$-Onsager algebra has been defined by standard generators and relations which are called the `$q$-deformed' Dolan-Grady relations. Applications of the $q$-Onsager algebra to tridiagonal pairs can be found in \cite{TD00, IT2, IT004, Ter03, T05}. The $q$-Onsager algebra has applications to quantum integrable models \cite{B1, Bas1, Bas2, Bas3, BK, BK1, BK3, BasS, BB3, BK} and quantum symmetric pairs \cite{Kolb}.
\begin{defn}\cite{T05, B1, Bas1}
Let $\mathbb{K}$ denote a field. The $q$-Onsager algebra is the associative algebra with unit and standard generators $A, A^*$ subject to the following relations
\begin{eqnarray}
\label{qO1}
\sum\limits_{k=0}^3{(-1)^k\left[\begin{array}{c}
3\\
k
\end{array}
\right]_qA^{3-k}A^*A^k}=\rho_0[A,A^*],\\
\label{qO2}
\sum\limits_{k=0}^3{(-1)^k\left[\begin{array}{c}
3\\
k
\end{array}
\right]_q{A^*}^{3-k}A{A^*}^k}=\rho_1[A^*,A],
\end{eqnarray}
where $q$ is a deformation parameter, and $\rho_0, \rho_1$ are fixed scalars in $\mathbb{K}$.
\end{defn}
The relations (\ref{qO1}), (\ref{qO2}) can be seen as $\rho_i$-deformed analogues of the $q$-Serre relations because for $\rho_0 = \rho_1 =0$ these relations are reduced to the $q$-Serre relations.\\
Clearly, the generators $A, A^*$ of the $q$-Onsager algebra satisfy the defining relations of the tridiagonal algebra corresponding to the scalars $\beta= q^2+q^{-2}, \gamma = \gamma^* = 0$, and $\delta= \rho_0, \delta^*= \rho_1$. This parameter sequence is said to be reduced.\\
For simplicity, the $q$-Onsager algebra can be defined by generators $A, A^*$ subject to the relations
 \begin{eqnarray}
 \label{126}
 \lbrack A, A^2A^*-\beta AA^*A+A^*A^2\rbrack &=&\rho[A,A^*],\\
 \label{127}
 \lbrack A^*, {A^*}^2A-\beta A^*AA^*+A{A^*}^2\rbrack &=&\rho[A^*,A],
 \end{eqnarray}
 where $\beta = q^2+q^{-2}$ and $\rho_0=\rho_1 = \rho $ \cite{IT004}.
 Relations (\ref{126}) and (\ref{127}) can be regarded as a $q$-analogue of the Dolan-Grady relations.\\
 
 The $q$-Onsager algebra (\ref{126}), (\ref{127}) has a basis as follows \cite{IT004}.  \\
 
 Let $r$ denote a positive integer. Set
 \begin{eqnarray*}
 \Lambda_r &=& \{\lambda=(\lambda_0, \lambda_1,\dots, \lambda_r) \in \mathbb{Z}^{r+1}|\lambda_0\ge 0, \lambda_i \ge 1 (1 \le i \le r)\},\\
 \Lambda &=&\bigcup\limits_{r \in  \mathbb{N}\cup \{0\}} {\Lambda_r}. 
 \end{eqnarray*}
 If there exists an integer $i~ (0 \le i\le r)$ such that 
 \[\lambda_0 < \lambda_1 < \dots <\lambda_i \ge \lambda_{i+1}\ge \dots \ge \lambda_{r},\]
 then $\lambda=(\lambda_0, \lambda_1, \dots, \lambda_r)$ is said to be irreducible.\\
 
 Denote $\Lambda^{irr}= \{\lambda\in \Lambda|\lambda~ \text{is irreducible}\}$.\\
 
 Let $X, Y$ denote noncommuting indeterminates. For $\lambda = (\lambda_0, \lambda_1, \dots, \lambda_r) \in \Lambda$, put
 \[\omega_\lambda(X,Y)=\left\{\begin{array}{c}
 X^{\lambda_0}Y^{\lambda_1}\dots X^{\lambda_r}~~~~ \text{ if}~ r~ \text{is even}\\
 X^{\lambda_0}Y^{\lambda_1}\dots Y^{\lambda_r}~~~~\text{if}~ r~ \text{is odd}
 \end{array},\right.\]
 where $X^{\lambda_0}=1$ if $\lambda_0 = 0$.
 \begin{thm}\cite{IT004}
 \label{basis}
 The following set is a basis of the $q$-Onsager algebra as a $\mathbb{C}$-vector space:
 \[\{\omega_\lambda(A,A^*)|\lambda \in \Lambda^{irr}\}.\]
 \end{thm}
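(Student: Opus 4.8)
The plan is to prove the two halves of this basis statement separately: that the words $\omega_\lambda(A,A^*)$ with $\lambda\in\Lambda^{irr}$ span the $q$-Onsager algebra, and that they are linearly independent over $\mathbb{C}$. Spanning is the easier half in principle. First observe that the full collection $\{\omega_\lambda(A,A^*)\mid\lambda\in\Lambda\}$ already spans the algebra, since any monomial in $A,A^*$ is obtained by collecting its maximal consecutive blocks of equal generators, which is precisely some $\omega_\lambda$. Hence it suffices to show that every $\omega_\lambda$ with $\lambda$ reducible lies in the span of the irreducible ones.

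To carry this out I would read the defining relations (\ref{qO1})--(\ref{qO2}) as rewriting rules. Among the four cubic terms of (\ref{qO1}), exactly one, namely $AA^*A^2$ (the sequence $(1,1,2)$), is reducible; solving (\ref{qO1}) for it expresses $AA^*A^2$ as a $\mathbb{C}$-linear combination of the irreducible words $A^3A^*$, $A^2A^*A$, $A^*A^3$ together with the lower-degree correction $\rho_0(AA^*-A^*A)$, and symmetrically for (\ref{qO2}). Inserting these rules into longer words, one reduces a general reducible $\omega_\lambda$ by induction. The delicate point is termination. Because the nonzero $\rho$-terms strictly lower the total degree, an induction on degree absorbs them; but within a fixed degree the degree-four exchange above is \emph{not} oriented by the natural weight or length-lexicographic orders (one checks directly that no linear functional on the exponent vector makes the reducible word maximal, and that length-lex would instead single out an irreducible word). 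I therefore expect to need a bespoke well-founded order, for instance a lexicographic comparison of a statistic measuring the failure of the unimodality condition ``$\lambda_0<\cdots<\lambda_i\ge\cdots\ge\lambda_r$'' that defines $\Lambda^{irr}$, and proving that each rule application strictly decreases it is the technical heart of the spanning half.

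For linear independence I would invoke the Bergman Diamond Lemma: once a terminating reduction order is fixed as above, it remains to verify that every overlap ambiguity between the two cubic rules (and between their images after the $\rho$-corrections are applied) resolves to a common normal form. Confluence then identifies the irreducible words with the normal forms and yields independence automatically. As a more robust alternative, given the awkwardness of the order, I would prove independence by a dimension sandwich: the spanning result bounds the dimension of each graded piece from above by the number of irreducible words, and a sufficiently faithful representation, either transporting a PBW basis through the homomorphism of the $q$-Onsager algebra into $U_q(\widehat{sl_2})$ (or the loop algebra of $sl_2$), or an explicit module on which the $\omega_\lambda$ with $\lambda\in\Lambda^{irr}$ act by visibly independent operators, would supply a matching lower bound and force independence.

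The hard part will be this independence/confluence step. Spanning is a careful but essentially mechanical induction once the correct well-founded order is in hand, whereas the genuine obstacle is that the cubic relations admit no natural monomial orientation: either the Diamond Lemma ambiguities must be resolved by hand with the inhomogeneous $\rho$-terms tracked throughout every overlap, or a module faithful enough to pin down the graded dimensions exactly must be constructed and verified. Either route is where the substantive work lies.
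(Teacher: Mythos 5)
First, a point of order: this thesis never proves Theorem \ref{basis}. It is quoted from Ito--Terwilliger \cite{IT004} inside a chapter explicitly presented as background ``without proofs,'' so your proposal can only be measured against the cited literature, not against an in-paper argument. Measured that way, your overall architecture (spanning by reduction, independence via a faithful map into the loop algebra) points in the same direction as the actual proof, but both halves have concrete holes, and the spanning half is broken as stated.

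The spanning gap is not, as you suggest, merely a matter of finding the right termination order; the rule set itself is incomplete. You propose to solve (\ref{qO1}) for its unique reducible word $AA^*A^2$ (and (\ref{qO2}) for $A^*A{A^*}^2$) and then ``insert these rules into longer words'' by induction. But these two rules, applied as subword substitutions, cannot reach all reducible words. Take $\lambda=(2,2,3)$, i.e.\ $\omega_\lambda = A^2{A^*}^2A^3$: no index $i$ satisfies $\lambda_0<\cdots<\lambda_i\ge\cdots\ge\lambda_r$, so $\lambda$ is reducible, yet the string $AAA^*A^*AAA$ contains no occurrence of $AA^*A^2$ or of $A^*A{A^*}^2$ as a consecutive subword. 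Your induction therefore cannot even begin on such words. To express them in terms of irreducible words one must use derived consequences of the relations (the new rules produced by overlaps/compositions, i.e.\ a completion of the rewriting system, with no a priori guarantee of finiteness), or run a genuinely different induction on the shape of words --- which is what the cited proof does. So the step you call ``essentially mechanical'' is in fact one of the two substantive obstacles.

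The independence half is left entirely unexecuted, and the route you call ``more robust'' contains an error as formulated. ``Transporting a PBW basis through the homomorphism'' into $U_q(\widehat{sl_2})$ or the loop algebra is not meaningful: the map $\varphi_{s,t}$ goes \emph{from} the $q$-Onsager algebra \emph{into} the loop algebra $\mathcal{L}$, and a basis of the target cannot be pulled back to a basis of a subalgebra. What is actually needed --- and what constitutes the bulk of the Ito--Terwilliger argument --- is to expand the images $\varphi_{s,t}(\omega_\lambda)$, $\lambda\in\Lambda^{irr}$, in a known basis of $\mathcal{L}$ and prove by a leading-term analysis that they are linearly independent; this simultaneously yields the injectivity of $\varphi_{s,t}$ quoted in this thesis from \cite{IT005}. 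Note also that citing that injectivity as a black box would not rescue you: injectivity makes independence of the $\omega_\lambda$ \emph{equivalent} to independence of their images, so you would still face exactly the same problem in $\mathcal{L}$. Finally, the Diamond Lemma alternative inherits the first gap: given that the two defining rules are incomplete, the confluence check would have to be carried out on the completed (potentially infinite) system, not just on the two ambiguities you mention.
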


\subsection{The $q$-Onsager algebra and $U_q(\widehat{sl_2})$}
There is an explicit relationship between the $q$-Onsager algebra and the quantum affine algebra $U_q(\widehat{sl_2})$. There are homomorphisms from the $q$-Onsager algebra into $U_q(\widehat{sl_2})$ \cite{Bas1, Bas3, BB3, Kolb}. Firstly, recall the definition of $U_q(\widehat{sl_2})$ in Drinfeld-Jimbo presentation.
\begin{defn}\cite{CP}
\label{defiUq}
Let $\mathbb{K}$ denote an algebraically closed field. The quantum affine algebra $U_q(\widehat{sl_2})$ is the associative $\mathbb{K}$-algebra with unit 1, defined by generators $e_i^{\pm}, K_i^{\pm 1}, i \in \{0,1\}$ and the following relations:
\begin{eqnarray*}
K_iK_i^{-1} &=& K_i^{-1}K_i = 1,\\
K_0K_1 &=&K_1K_0,\\
K_ie_i^{\pm}K_i^{-1}&=&q^{\pm2}e_i^{\pm},\\
K_ie_j^{\pm}K_i^{-1}&=&q^{\mp 2}e_j^{\pm},~ i \ne j,\\
\lbrack e_i^+,e_i^-\rbrack&=&\frac{K_i-K_i^{-1}}{q-q^{-1}},\\
\lbrack e_0^{\pm},e_1^{\mp}\rbrack&=&0,
\end{eqnarray*}
\begin{equation}
\label{123}
(e_i^{\pm})^3e_j^{\pm}-[3]_q(e^{\pm}_i)^2e_j^{\pm}e_i^{\pm}+[3]_qe_i^{\pm}e_j^{\pm}(e_i^{\pm})^2-e_j^{\pm}(e_i^{\pm})^3 = 0, ~ i \ne j,
\end{equation}
where the expression $[r,s]$ means $rs - sr$. We call $e_i^{\pm}, K_i^{\pm}, i \in \{0, 1\}$ the Chevalley generators for $U_q(\widehat{sl_2})$.
\end{defn}
\begin{thm}\cite{BB3}
There exist algebra homomorphisms $\varphi, \varphi^*$ from the $q$-Onsager algebra to $U_q(\widehat{sl_2})$
\begin{eqnarray}
\varphi(A)&=& k_-e_0^++k_+q^{-1}e_0^-K_0+\epsilon_- K_0,\\
\varphi(A^*)&=& k_+e_1^++k_-q^{-1}e_1^-K_1+\epsilon_+K_1,
\end{eqnarray}
where $k_+, k_-, \epsilon_+, \epsilon_- \in \mathbb{K}$, and $\rho= k_+k_-(q+q^{-1})^2$.\\
And
\begin{eqnarray}
\varphi^*(A)&=&\bar{k}_-q^{-1}e_0^+K_0^{-1}+\bar{k}_+e_0^-+\bar{\epsilon}_+K_0^{-1},\\
\varphi^*(A^*)&=&\bar{k}_+q^{-1}e_1^+K_0^{-1}+\bar{k}_-e_1^-+\bar{\epsilon}_-K_1^{-1},
\end{eqnarray}
where $\bar{k}_+, \bar{k}_-,  \bar{\epsilon}_+, \bar{\epsilon}_- \in \mathbb{K}$, and $\rho=\bar{k}_+\bar{k}_-(q+q^{-1})^2$.
\end{thm}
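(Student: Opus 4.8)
The plan is to use the fact that the $q$-Onsager algebra is presented by the two generators $A, A^*$ subject only to the $q$-Dolan--Grady relations (\ref{qO1}), (\ref{qO2}) (equivalently (\ref{126}), (\ref{127}) with $\beta = q^2+q^{-2}$ and $\rho_0=\rho_1=\rho$). By the universal property of an algebra defined by generators and relations, any assignment of $A, A^*$ to elements of a target algebra extends (uniquely) to an algebra homomorphism precisely when those images satisfy the defining relations. Hence it suffices to set $a := k_- e_0^+ + k_+ q^{-1} e_0^- K_0 + \epsilon_- K_0$ and $a^* := k_+ e_1^+ + k_- q^{-1} e_1^- K_1 + \epsilon_+ K_1$ inside $U_q(\widehat{sl_2})$, and to check that the pair $(a,a^*)$ satisfies (\ref{qO1}) and (\ref{qO2}); then $\varphi$ is automatically well defined, and $\varphi^*$ is treated the same way with the barred data.

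The heart of the matter is the direct verification of the first relation, which I would write in the reduced form $[\,a,\ a^2a^*-\beta a a^* a + a^* a^2\,] = \rho[a,a^*]$. I would substitute the expressions for $a$ and $a^*$, expand, and then bring everything to a normal form with all group-like elements $K_i$ moved to the right, using the Cartan-type relations $K_0 e_1^{\pm}K_0^{-1}=q^{\mp2}e_1^{\pm}$, $K_1 e_0^{\pm}K_1^{-1}=q^{\mp2}e_0^{\pm}$ together with $[e_0^{\pm},e_1^{\mp}]=0$; these let each mixed monomial be reordered up to an explicit power of $q$. After normal ordering, the monomials that are cubic in $e_0^{\pm}$ of a single sign and linear in $e_1^{\pm}$ of the matching sign assemble, up to a common $K$-factor, into the $q$-Serre combination (\ref{123}) for $U_q(\widehat{sl_2})$ and therefore vanish. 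What survives comes from the mixed-sign triples, which are reduced by the Cartan relation $[e_0^+,e_0^-]=(K_0-K_0^{-1})/(q-q^{-1})$; tracking the coefficient of these residual terms is exactly what produces the right-hand side $\rho[a,a^*]$, and the numerical identity $\rho=k_+k_-(q+q^{-1})^2$ is not an extra hypothesis but the precise value of that coefficient under which the leftover terms reorganize into $\rho[a,a^*]$.

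The second relation (\ref{qO2}) for $(a,a^*)$ follows from the same computation after interchanging the indices $0\leftrightarrow1$, which is a symmetry of both the relations of $U_q(\widehat{sl_2})$ and of the stated formulas. For $\varphi^*$ I would proceed identically with $\bar k_\pm, \bar\epsilon_\pm$ and the inverse group-likes $K_i^{-1}$; alternatively, one can observe that $\varphi^*(A), \varphi^*(A^*)$ are obtained from $\varphi(A),\varphi(A^*)$ by applying a suitable (anti)automorphism of $U_q(\widehat{sl_2})$ (for instance the one with $e_i^{\pm}\mapsto e_i^{\mp}$, $K_i\mapsto K_i^{-1}$), which preserves the $q$-Dolan--Grady relations, so that no genuinely new computation is required beyond matching parameters, the condition then reading $\rho=\bar k_+\bar k_-(q+q^{-1})^2$.

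The main obstacle is purely the bookkeeping in the cubic expansion: each of $a^2a^*$, $aa^*a$, $a^*a^2$ produces a large number of ordered monomials, and the cancellations depend on applying the $q$-Serre relation (\ref{123}) and the Cartan relations in exactly the right groupings. The real work is to keep the powers of $q$ and the products of the scalars $k_\pm,\epsilon_\pm$ consistent so that the $q$-Serre contributions cancel cleanly while the residual Cartan contributions collapse to $\rho[a,a^*]$; choosing a fixed normal ordering from the outset is what makes this tractable, and it is the step I expect to demand the most care.
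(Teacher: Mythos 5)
The paper never proves this theorem: it sits in Chapter~1, which the thesis explicitly presents as a summary ``without proofs,'' and the statement is quoted verbatim from \cite{BB3}. So your proposal can only be measured against the standard literature argument, and it \emph{is} that argument: since the $q$-Onsager algebra is presented by $A,A^*$ subject only to the relations (\ref{qO1})--(\ref{qO2}), the universal property reduces the theorem to verifying those two relations for the proposed images inside $U_q(\widehat{sl_2})$; the second relation then follows from the symmetry $0\leftrightarrow 1$, $k_+\leftrightarrow k_-$, $\epsilon_+\leftrightarrow\epsilon_-$ (note you need the parameter swaps too, not just the index swap, and this is where $\rho=k_+k_-(q+q^{-1})^2$ being symmetric in $k_\pm$ matters), and $\varphi^*$ follows from the automorphism $e_i^\pm\mapsto e_i^\mp$, $K_i\mapsto K_i^{-1}$, which indeed preserves all defining relations of $U_q(\widehat{sl_2})$. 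Two refinements to your cancellation bookkeeping, which is where the actual content lives. First, only the \emph{equal-sign} cubic monomials (three $e_0^+$ against $e_1^+$, or three $e_0^-K_0$ against $e_1^-K_1$) reduce to the $q$-Serre combination (\ref{123}) once the group-like factors are pushed to the right (one then checks that the powers of $q$ generated by this reordering cancel exactly, leaving unit coefficients); the \emph{opposite-sign} cubic monomials, e.g.\ three $e_0^+$ against $e_1^-K_1$, never see the Serre relation at all --- they vanish because $[e_0^\pm,e_1^\mp]=0$ identifies all four orderings up to powers of $q$ coming only from moving $K_1$, and the resulting scalar $1-[3]_qq^{-2}+[3]_qq^{-4}-q^{-6}$ is zero. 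Second, the images contain the group-like pieces $\epsilon_-K_0$ and $\epsilon_+K_1$, and the theorem holds for \emph{arbitrary} $\epsilon_\pm$; your sketch never explains why all $\epsilon$-dependent terms cancel, and this is a genuine (if routine) part of the same normal-ordering computation. Neither point breaks your strategy --- they are exactly the places where the ``care'' you anticipate must actually be spent before the proposal counts as a proof rather than a plan.
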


\subsection{The $q$-Onsager algebra and the $U_q(sl_2)$-loop algebra}
There are algebra homomorphisms from the $q$-Onsager algebra into the $U_q(sl_2)$-loop algebra \cite{IT2, IT004}.
\begin{defn}\cite{CP}
The $U_q(sl_2)$-loop algebra $\mathcal{L}$ is the associative $\mathbb{C}$-algebra with 1 generated by $e_i^+, e_i^-, k_i, k_i^{-1}$ $(i =0, 1)$ subject to the relations
\begin{eqnarray*}
k_ik_i^{-1} &=& k_i^{-1}k_i = 1,\\
k_0k_1 &=&k_1k_0 =1,\\
k_ie_i^{\pm}k_i^{-1}&=&q^{\pm2}e_i^{\pm},\\
k_ie_j^{\pm}k_i^{-1}&=&q^{\mp 2}e_j^{\pm},~ i \ne j,\\
\lbrack e_i^+,e_i^-\rbrack&=&\frac{k_i-k_i^{-1}}{q-q^{-1}},\\
\lbrack e_0^{\pm},e_1^{\mp}\rbrack&=&0,
\end{eqnarray*}
\[(e_i^{\pm})^3e_j^{\pm}-[3]_q(e^{\pm}_i)^2e_j^{\pm}e_i^{\pm}+[3]_qe_i^{\pm}e_j^{\pm}(e_i^{\pm})^2-e_j^{\pm}(e_i^{\pm})^3 = 0, ~ i \ne j.\]
\end{defn}
Note that if we replace $k_0k_1=k_1k_0 =1$ in the definition for $\mathcal{L}$ by $k_0k_1=k_1k_0$ then we have the quantum affine algebra $U_q(\widehat{sl_2})$. Namely, $\mathcal{L}$ is isomorphic to the quotient algebra of $U_q(\widehat{sl_2})$ by the two-sided ideal generated by $k_0k_1-1$.
\begin{thm}\cite{IT005}
For arbitrary nonzero $s, t \in \mathbb{C}$, there exists an algebra homomorphism $\varphi_{s,t}$ from the $q$-Onsager algebra to $\mathcal{L}$ that sends $A,A^*$ to 
\begin{eqnarray*}
A_t(s)&=&x(s)+tk(s)+t^{-1}k(s)^{-1},\\
A_t^*(s)&=&y(s)+t^{-1}k(s)+tk(s)^{-1},
\end{eqnarray*}
respectively, where 
\begin{eqnarray*}
x(s)&=&\alpha(se_0^++s^{-1}e_1^-k_1)~~\text{with}~ \alpha=-q^{-1}(q-q^{-1})^2,\\
y(s)&=&se_0^{-}k_0+s^{-1}e_1^+,\\
k(s)&=&sk_0,\\
\rho & =& -(q^2-q^{-2})^2.
\end{eqnarray*}
Moreover $\varphi_{s,t}$ is injective.
\end{thm}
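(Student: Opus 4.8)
The plan is to exploit the universal property of the $q$-Onsager algebra. Since it is presented by the two generators $A,A^*$ subject only to the $q$-Dolan-Grady relations \eqref{126}, \eqref{127}, the assignment $A\mapsto A_t(s)$, $A^*\mapsto A_t^*(s)$ extends to an algebra homomorphism $\varphi_{s,t}\colon$ (\,$q$-Onsager\,)$\to\mathcal{L}$ as soon as the images satisfy those two relations in $\mathcal{L}$. So the first task is the purely computational verification that $A_t(s)$ and $A_t^*(s)$ obey \eqref{126}; the verification of \eqref{127} is entirely analogous after interchanging the roles of $e_0$ and $e_1$. The value $\rho=-(q^2-q^{-2})^2$ asserted in the statement is not an extra hypothesis but is forced by this computation, so part of the work is to see that exactly this scalar makes both sides agree.

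To organize the verification I would grade $\mathcal{L}$ by the adjoint action of the finite Cartan element, setting $\mathcal{L}[n]=\{u\in\mathcal{L}\mid k_1uk_1^{-1}=q^{2n}u\}$. A direct use of the relations $k_ie_j^{\pm}k_i^{-1}=q^{\mp2}e_j^{\pm}$ shows that
\[
A_t(s)=x(s)+tk(s)+t^{-1}k(s)^{-1},\qquad A_t^*(s)=y(s)+t^{-1}k(s)+tk(s)^{-1},
\]
with $x(s)\in\mathcal{L}[-1]$, $y(s)\in\mathcal{L}[+1]$ and $k(s)^{\pm1}\in\mathcal{L}[0]$, and that $k(s)x(s)k(s)^{-1}=q^{2}x(s)$, $k(s)y(s)k(s)^{-1}=q^{-2}y(s)$. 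Both sides of \eqref{126} then split into homogeneous components of weights $-3,-2,-1,0,+1$, and the identity may be checked weight by weight. The extreme components (weights $-3$ and $-2$) involve only $x(s)$ together with $k(s)^{\pm1}$, and they vanish by a quantum-plane cancellation that uses precisely $\beta=q^2+q^{-2}$ (the coefficient of $k(s)^{\pm1}x(s)^2$ in the weight-$(-2)$ part of $A^2A^*-\beta AA^*A+A^*A^2$ is $q^{\mp4}-\beta q^{\mp2}+1=0$). The remaining components of weight $-1,0,+1$ are where the right-hand side $\rho[A,A^*]$ is produced: reordering the mixed monomials in $x(s),y(s)$ generates the commutators $[e_i^+,e_i^-]=(k_i-k_i^{-1})/(q-q^{-1})$, while $[e_0^{\pm},e_1^{\mp}]=0$ removes the unwanted cross terms, and collecting these Cartan contributions reproduces $\rho[A,A^*]$ with $\rho=-(q^2-q^{-2})^2$.

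For injectivity the plan is to combine the basis theorem for the $q$-Onsager algebra (Theorem \ref{basis}) with a PBW basis of $\mathcal{L}$. Since $\{\omega_\lambda(A,A^*)\mid\lambda\in\Lambda^{irr}\}$ is a $\mathbb{C}$-basis, it suffices to show that the images $\{\omega_\lambda(A_t(s),A_t^*(s))\mid\lambda\in\Lambda^{irr}\}$ are linearly independent in $\mathcal{L}$. I would refine the weight grading above to a bigrading, pairing the finite-weight grading $\mathcal{L}[n]$ with the loop degree (the power of the loop variable carried by $e_0^{\pm}$), and attach to each word $\omega_\lambda(A_t(s),A_t^*(s))$ a distinguished extremal bihomogeneous component. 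The aim is to prove that, once expanded in the PBW basis of $\mathcal{L}$, this extremal component is a single monomial (up to a nonzero scalar) determined by $\lambda$, and that the irreducibility pattern $\lambda_0<\cdots<\lambda_i\ge\cdots\ge\lambda_r$ forces these extremal monomials to be pairwise distinct, so that no nontrivial linear combination of the images can vanish. (Alternatively, one could test the images on a family of evaluation modules and separate the $\omega_\lambda$ representation-theoretically, but the PBW leading-term route is the cleaner bookkeeping.)

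The principal difficulty is this injectivity step, and specifically the extraction of a well-defined leading term from the alternating words $\omega_\lambda$. Because $A_t(s)$ has its extreme piece $x(s)$ in weight $-1$ while $A_t^*(s)$ has its extreme piece $y(s)$ in weight $+1$, naively taking the top or bottom weight of a product $A^{\lambda_0}(A^*)^{\lambda_1}A^{\lambda_2}\cdots$ causes partial cancellation of weights and fails to isolate a single PBW monomial. One must therefore choose a monomial order adapted simultaneously to the finite-weight/loop bigrading and to the left-to-right shape of $\omega_\lambda$, and then check that the $q$-commutators created while reordering the factors into PBW form only ever produce strictly lower-order terms. Verifying that the resulting leading monomials genuinely separate the irreducible $\lambda$, with no accidental coincidences surviving the reordering, is the delicate combinatorial heart of the proof; the homomorphism property, by contrast, reduces to the finite weight-by-weight calculation described above.
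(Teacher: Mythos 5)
Your attempt cannot be compared against a proof inside this thesis, because the thesis gives none: the theorem is quoted from the cited work of Ito--Terwilliger \cite{IT005}, and Chapter 1 is explicitly a survey without proofs. Measured on its own merits, your treatment of the homomorphism half is the right (and routine) strategy: by the universal property of the presentation (\ref{126})--(\ref{127}), it suffices to verify the two $q$-Dolan-Grady relations on $A_t(s)$, $A_t^*(s)$, and grading $\mathcal{L}$ by the adjoint action of $k_1$ (equivalently of $k(s)$) reduces this to finitely many homogeneous identities. One bookkeeping slip: write $I=A^2A^*-\beta AA^*A+A^*A^2$ and $I_n$ for its weight-$n$ component, and $c'=t^{-1}k(s)+tk(s)^{-1}$. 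The quantum-plane cancellation $q^{\mp 4}-\beta q^{\mp 2}+1=0$ indeed kills $I_{-2}=x^2c'-\beta xc'x+c'x^2$, and hence the weight-$(-3)$ component of the equation; but the weight-$(-2)$ component of the equation is $[x(s),I_{-1}]$, and $I_{-1}$ contains $x^2y-\beta xyx+yx^2$, so that check involves $y(s)$ and the loop-algebra relations, not only $x(s)$ and $k(s)^{\pm 1}$ as you claim. This does not endanger the verification; it only means the case analysis is heavier than you describe.

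The genuine gap is injectivity, which is the substance of the theorem (it is singled out by the ``moreover''). You correctly reduce it, via Theorem \ref{basis}, to the linear independence in $\mathcal{L}$ of the images $\omega_\lambda(A_t(s),A_t^*(s))$ for $\lambda\in\Lambda^{irr}$, but what you then offer is a plan, not an argument: the monomial order on a PBW basis of $\mathcal{L}$ is never defined, the claim that each image has a single well-defined extremal monomial is never proved, and the claim that these extremal monomials are pairwise distinct across irreducible $\lambda$ is never proved. You yourself identify the obstruction --- $x(s)$ and $y(s)$ sit in opposite weights, so alternating words suffer weight collapse and a naive top-term extraction fails --- and then defer exactly the step that is supposed to overcome it, calling it the ``delicate combinatorial heart.'' As written, nothing excludes the possibility that two distinct irreducible words share the same extremal monomial, or that the $q$-commutators generated while reordering into PBW form produce terms that are not strictly lower in whatever order one picks; and linear independence of this kind is precisely the nontrivial half of any such basis/injectivity statement (spanning being the easy reduction argument). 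This is the main technical content of the cited Ito--Terwilliger proof, and it does not follow from a generic leading-term argument. Until that lemma is formulated and proved --- or until the evaluation-module alternative you mention in passing is actually carried out --- the injectivity claim of the theorem remains unproven.
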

\begin{thm}\cite{Bas3}
Let $k_+, k_-, \epsilon_+, \epsilon_-$ denote scalars in $\mathbb{K}$. There is an algebra homomorphism from the $q$-Onsager algebra to $\mathcal{L}$ such that
\begin{eqnarray}
A& \mapsto & k_-e_0^+k_0^{1/2}+k_+e_0^-k_0^{1/2}+\epsilon_-k_0,\\
A^*& \mapsto & k_+e_1^+k_1^{1/2}+k_-e_1^-k_1^{1/2}+\epsilon_+k_1,
\end{eqnarray}
with $\rho_0=\rho_1=k_+k_-(q+q^{-1})^2$.
\end{thm}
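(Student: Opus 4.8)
The plan is to invoke the universal property of the $q$-Onsager algebra: since it is presented by the generators $A,A^*$ and the two relations (\ref{qO1})--(\ref{qO2}), an algebra homomorphism into $\mathcal{L}$ with the prescribed action exists if and only if the proposed images
\[
\tilde A = k_-e_0^+k_0^{1/2}+k_+e_0^-k_0^{1/2}+\epsilon_-k_0,\qquad \tilde A^* = k_+e_1^+k_1^{1/2}+k_-e_1^-k_1^{1/2}+\epsilon_+k_1
\]
satisfy (\ref{qO1})--(\ref{qO2}) in $\mathcal{L}$ with $\rho_0=\rho_1=k_+k_-(q+q^{-1})^2$. So the whole content reduces to this verification. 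Before computing I would halve the work by symmetry: the exchange of node labels $0\leftrightarrow 1$ together with $k_+\leftrightarrow k_-$ and $\epsilon_-\leftrightarrow\epsilon_+$ sends $\tilde A\leftrightarrow\tilde A^*$ and preserves all defining relations of $\mathcal{L}$ (which are symmetric in the two nodes; recall $k_0k_1=1$, so the roots $k_i^{1/2}$ are interchanged consistently). This symmetry carries relation (\ref{qO1}) for $(\tilde A,\tilde A^*)$ exactly to (\ref{qO2}), so it suffices to establish (\ref{qO1}).

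For (\ref{qO1}) I would organize the computation by $k_0$-weight. Writing $k_1=k_0^{-1}$ throughout, decompose $\tilde A=X_++X_-+X_0$ with $X_+=k_-e_0^+k_0^{1/2}$ of weight $+2$, $X_-=k_+e_0^-k_0^{1/2}$ of weight $-2$, and $X_0=\epsilon_-k_0$ of weight $0$ (weights meaning conjugation by $k_0$), and similarly for $\tilde A^*$. The left-hand side of (\ref{qO1}) is linear in $\tilde A^*$, so every monomial contains exactly one node-$1$ factor; using the weight relations $k_ie_j^\pm k_i^{-1}=q^{\pm 2}e_j^\pm$ to normal-order, I would split the expression into (a) the purely cubic-in-$e$ contributions and (b) the remaining terms, each carrying at least one Cartan factor $k_0$ or a scalar $\epsilon_-$. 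In block (a) the relevant monomials match the $q$-Serre pattern of (\ref{1.3})--(\ref{1.4}) (with $e_0^\pm$ in place of $A$ and $e_1^\pm$ in place of $A^*$), while the cross terms are controlled by $[e_0^\pm,e_1^\mp]=0$; together these force block (a) to vanish identically.

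The crux is block (b). After moving all Cartan elements to one side (which is where the powers of $q$ enter, through the weight factors $q^{\pm2}$ weighted against the Gaussian coefficients $[3]_q$ appearing in (\ref{qO1})), and after reducing the single surviving diagonal pairing via $[e_0^+,e_0^-]=(k_0-k_0^{-1})/(q-q^{-1})$, all terms not proportional to $[\tilde A,\tilde A^*]$ must cancel and the residue must collapse to exactly $k_+k_-(q+q^{-1})^2\,[\tilde A,\tilde A^*]$. Tracking this residue is the main obstacle: one must verify that the numerous monomials quadratic and cubic in the raising/lowering generators all disappear, and then pin down the surviving scalar. The factor $k_+k_-$ originates from the unique $e_0^+e_0^-$ pairing not killed by the Serre relations, while $(q+q^{-1})^2$ emerges from combining $[3]_q$ with the weight shifts $q^{\pm2}$; keeping this bookkeeping clean is exactly why I would grade the entire calculation by $k_0$-weight, so that it decouples into a handful of independent finite sectors, each a short check.

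As an alternative route I would note that the statement can also be deduced from the homomorphism $O_q\to U_q(\widehat{sl_2})$ recalled above, by composing with the quotient map $U_q(\widehat{sl_2})\twoheadrightarrow\mathcal{L}$ (which imposes $k_0k_1=1$) followed by a Cartan rescaling/conjugation absorbing the square-root normalization $k_0^{1/2}$. This shortcut, however, still requires checking that the rescaling reconciles the two prescriptions and that the value of $\rho$ is preserved, so in practice it reduces to the same normal-ordering bookkeeping identified above.
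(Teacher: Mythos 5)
First, a point of reference: the paper gives no proof of this theorem --- it is quoted from \cite{Bas3} inside a survey chapter explicitly written ``without proofs'' --- so your proposal has to stand on its own, and as written it does not: it is a plan with two genuine gaps. The first gap is your treatment of ``block (a)''. The left-hand side of (\ref{qO1}), expanded in $\tilde A=X_++X_-+X_0$, $\tilde A^*=Y_++Y_-+Y_0$, contains \emph{mixed} monomials involving both $e_0^+$ and $e_0^-$ (after normal ordering, terms such as $e_0^-(e_0^+)^2e_1^{\pm}$ times Cartan factors). These are cubic in the raising/lowering generators, carry no $k_0$ factor and no $\epsilon_-$, so by your own definition they belong to block (a); yet they are neither of $q$-Serre type (\ref{1.3})--(\ref{1.4}) nor controlled by $[e_0^{\pm},e_1^{\mp}]=0$. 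Their cancellation, and the fact that their residues under $[e_0^+,e_0^-]=(k_0-k_0^{-1})/(q-q^{-1})$ are exactly what assembles the right-hand side $\rho_0[\tilde A,\tilde A^*]$, is the actual content of the theorem --- your two-step argument simply does not reach them. The second gap is that block (b) is never computed: asserting that the unwanted monomials ``must cancel'' and the residue ``must collapse to exactly $k_+k_-(q+q^{-1})^2[\tilde A,\tilde A^*]$'' restates the conclusion rather than deriving it.

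The shortest genuine proof is the route you mention last and then undersell. Push the homomorphism $\varphi$ of the earlier theorem from \cite{BB3} ($A\mapsto k_-e_0^+ + k_+q^{-1}e_0^-K_0+\epsilon_-K_0$, $A^*\mapsto k_+e_1^+ + k_-q^{-1}e_1^-K_1+\epsilon_+K_1$, with the same $\rho=k_+k_-(q+q^{-1})^2$) through the quotient $U_q(\widehat{sl_2})\twoheadrightarrow\mathcal{L}$ imposing $k_0k_1=1$, and compose with the Cartan twist $\psi$ defined by $e_i^+\mapsto e_i^+k_i^{1/2}$, $e_i^-\mapsto q\,e_i^-k_i^{-1/2}$, $k_i\mapsto k_i$. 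Then $\psi$ carries $\varphi(A)$, $\varphi(A^*)$ \emph{exactly} onto the stated images: e.g.\ $k_+q^{-1}e_0^-k_0\mapsto k_+q^{-1}(q\,e_0^-k_0^{-1/2})k_0=k_+e_0^-k_0^{1/2}$. Checking that $\psi$ is an automorphism of $\mathcal{L}$ is a handful of two-line normal-ordering computations: $[e_0^+k_0^{1/2},\,q\,e_0^-k_0^{-1/2}]=[e_0^+,e_0^-]$ because the two conjugation factors $q^{-1}$ cancel against the prefactor $q$; the weight and cross relations are immediate; and each monomial of the $q$-Serre combination in the twisted generators reduces to the corresponding untwisted monomial times a \emph{common} factor (e.g.\ $(e_0^+k_0^{1/2})^{3-k}(e_1^+k_1^{1/2})(e_0^+k_0^{1/2})^{k}=(e_0^+)^{3-k}e_1^+(e_0^+)^{k}\,k_0$ for every $k$), so the Serre relations are preserved verbatim. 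This route is not ``the same bookkeeping'' as the direct verification: it replaces the expansion of roughly eighty monomials of (\ref{qO1}) by finitely many short checks, because the hard $q$-Dolan-Grady verification is already encapsulated in the cited theorem of \cite{BB3}. Your symmetry reduction ($0\leftrightarrow1$, $k_+\leftrightarrow k_-$, $\epsilon_-\leftrightarrow\epsilon_+$) is correct and can be kept to halve even these checks.
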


\subsection{The $q$-Onsager algebra, the reflection equation and the algebra $\mathcal{A}_q$}
\label{Onsref}
The aim of this part is to recall the relation between the reflection equation algebra \cite{Cher84, Skly88} and the $q$-Onsager algebra (\ref{qO1}), (\ref{qO2}). In the literature, the connection between the two algebras appeared as follows:
First, the structure of the solutions the $K$-operators satisfying ``RKRK" defining relations for the $U_q(\widehat{sl_2})$ with $R$-matrix  (\ref{soluybequation}) had been studied in details in the case where the entries of the $K-$matrix act on  an irreducible finite dimensional vector space. Recall that the entries of the $K-$matrix depends on a spectral parameter $u$. Expanding the entries in terms of a new spectral parameter $U=(qu^2+q^{-1}u^{-2})/(q+q^{-1})$, it was observed that the first modes of the expansion of the diagonal entries of the $K-$matrix generate a $q-$Onsager algebra \cite{BK2}. This will be recalled in details in Chapter \ref{ChapPhys}.  This observation suggested that the reflection equation algebra and the $q-$Onsager algebra are closely related. \vspace{1mm}

In \cite{BasS}, this connection was further studied. Let $K(u)$ be a solution of the reflection equation algebra with $U_q(\widehat{sl_2})$ with $R$-matrix  (\ref{soluybequation}). Assume $q$ is not a root of unity. Let $V$ be an irreducible finite dimensional vector space on which  the entries of the  $K-$matrix act. Then, it is known that the solution of the reflection equation is unique (up to an overall scalar factor). Independently, consider the $q-$Onsager algebra. Let $\delta$ denote a coaction map (see the definition below) that is introduced explicitly.  Then, one introduces the $K-$matrix as the intertwiner between tensor product representations of the $q-$Onsager algera. By construction, the $K-$matrix automatically satisfies the reflection equation algebra.
As a consequence, for $V$ an irreducible finite dimensional vector space, the uniqueness of the solution $K$ implies that a quotient of the $q-$Onsager algebra is isomorphic to a quotient of the reflection equation algebra.\vspace{1mm}

Below, in a first part we recall the reflection equation algebra. Then, we recall the structure of the general $K-$matrix solutions of the reflection equation algebra for the $R-$matrix in terms of currents. The modes of these currents are  denoted $\{\mathcal{W}_{-k}, \mathcal{W}_{k+1}, \mathcal{G}_{k+1}, \tilde{\mathcal{W}}_{k+1}|k\in \mathbb{Z}_+\}$ and generate the algebra ${\cal A}_q$. The isomorphism between the current algebra and the reflection equation algebra is given (see Theorem \ref{isocurA}).  In a second part, we independently recall the construction of a coaction map of the $q-$Onsager algebra (see Proposition \ref{coa}). The defining relations of the $K-$matrix as an intertwiner of tensor product representations of the $q-$Onsager algebra is mentioned. One of the main result of \cite{BasS} is finally recalled: if the elements of the $q-$Onsager algebra act on an irreducible finite dimensional vector space (in which case a quotient of the $q-$Onsager algebra is considered), then it is isomorphic to a quotient of the algebra ${\cal A}_q$ or, equivalently, to a quotient of the reflection equation algebra.
\vspace{1mm}

Let $R: \mathbb{C}^* \to End(\mathcal{V}\otimes \mathcal{V})$ denote the intertwining operator between the tensor product of two fundamental representations $\mathcal{V}=\mathbb{C}^2$ associated with the algebra $U_q(\widehat{sl_2})$. The element $R(u)$ depends on the deformation parameter $q$ and is defined by \cite{Baxter1}
\begin{equation}
\label{soluybequation}
R(u)=\left(\begin{array}{cccc}
uq-u^{-1}q^{-1} & 0 &0  & 0 \\ 
0& u-u^{-1} & q-q^{-1} &0  \\ 
0& q-q^{-1} & u-u^{-1} &0  \\ 
0& 0 & 0 & uq-u^{-1}q^{-1} \\ 
\end{array} \right),
\end{equation}
where $u$ is called the special parameter.\\

By construction, $R(u)$ satisfies the Yang-Baxter equation in the space $\mathcal{V}\otimes \mathcal{V}\otimes\mathcal{V}$. Namely,
\begin{equation}
\label{ybequation}
R_{12}(u/v)R_{13}(u)R_{23}(v)=R_{23}(v)R_{13}(u)R_{12}(u/v),~~~\forall u, v.
\end{equation}
\begin{defn}\cite{Skly88} Reflection equation algebra $B_q(\widehat{sl_2})$ is an associative algebra with unit 1 and generators $K_{11}(u)\equiv A(u)$, $K_{12}(u)\equiv B(u)$, $K_{21}(u)\equiv C(u)$, $K_{22}(u)\equiv D(u)$ considered as the entries of the $2 \times 2$ square matrix $K(u)$ which obeys the defining relations 
\begin{equation}
\label{reflectequ}
R_{12}(u/v)(K(u)\otimes \mathbb{I})R_{12}(uv)(\mathbb{I}\otimes K(v))=(\mathbb{I}\otimes K(v))R_{12}(uv)(K(u)\otimes \mathbb{I})R_{12}(u/v), ~~~\forall u, v.
\end{equation}
\end{defn}
\begin{defn}\cite{BasS}
\label{dfcurren}
The current algebra $O_q(\widehat{sl_2})$ is an associative algebra with unit 1, current generators $\mathcal{W}_{\pm}(u)$,  $\mathcal{G}_{\pm}(u)$ and parameter $\rho \in \mathbb{C}^*$. Define the formal variables $U = (qu^2+q^{-1}u^{-2})/(q+q^{-1})$ and $V=(qv^2+q^{-1}v^{-2})/(q+q^{-1}), ~\forall u, v.$ The defining relations are:
\begin{equation}
[\mathcal{W}_{\pm}(u), \mathcal{W}_{\pm}(v)] = 0,
\end{equation}
\begin{equation}
[\mathcal{W}_+(u),\mathcal{W}_-(v)]+[\mathcal{W}_-(u),\mathcal{W}_+(v)]=0,
\end{equation}
\begin{eqnarray}
(U-V)[\mathcal{W}_{\pm}(u), \mathcal{W}_{\mp}(v)]&=& \frac{(q-q^{-1})}{\rho(q+q^{-1})}\left(\mathcal{G}_{\pm}(u)\mathcal{G}_{\mp}(v)-\mathcal{G}_{\pm}(v)\mathcal{G}_{\mp}(u)\right)~~\nonumber\\
& & +\frac{1}{q+q^{-1}}(\mathcal{G}_{\pm}(u)-\mathcal{G}_{\mp}(u)+\mathcal{G}_{\mp}(v)-\mathcal{G}_{\pm}(v)),~~~
\end{eqnarray}
\begin{eqnarray}
\mathcal{W}_{\pm}(u)\mathcal{W}_{\pm}(v)&-&\mathcal{W}_{\mp}(u)\mathcal{W}_{\mp}(v)+\frac{1}{\rho(q^2-q^{-2})}[\mathcal{G}_{\pm}(u),\mathcal{G}_{\mp}(v)]\nonumber\\
&+&\frac{1-UV}{U-V}(\mathcal{W}_{\pm}(u)\mathcal{W}_{\mp}(v)-\mathcal{W}_{\pm}(v)\mathcal{W}_{\mp}(u))=0,
\end{eqnarray}
\begin{eqnarray}
U[\mathcal{G}_{\mp}(v), \mathcal{W}_{\pm}(u)]_q&-& V[\mathcal{G}_{\mp}(u), \mathcal{W}_{\pm}(v)]_q-(q-q^{-1})(\mathcal{W}_{\mp}(u)\mathcal{G}_{\mp}(v)-\mathcal{W}_{\mp}(v)\mathcal{G}_{\mp}(u))\nonumber\\
&+&\rho(U\mathcal{W}_{\pm}(u)-V\mathcal{W}_{\pm}(v)-\mathcal{W}_{\mp}(u)+\mathcal{W}_{\mp}(v))=0,
\end{eqnarray}
\begin{eqnarray}
U[\mathcal{W}_{\mp}(u), \mathcal{G}_{\mp}(v)]_q&-& V[\mathcal{W}_{\mp}(v), \mathcal{G}_{\mp}(u)]_q-(q-q^{-1})(\mathcal{W}_{\pm}(u)\mathcal{G}_{\mp}(v)-\mathcal{W}_{\pm}(v)\mathcal{G}_{\mp}(u))\nonumber\\
&+&\rho(U\mathcal{W}_{\mp}(u)-V\mathcal{W}_{\mp}(v)-\mathcal{W}_{\pm}(u)+\mathcal{W}_{\pm}(v))=0,
\end{eqnarray}
\begin{equation}
[\mathcal{G}_\epsilon(u),\mathcal{W}_{\pm}(v)]+[\mathcal{W}_{\pm}(u), \mathcal{G}_\epsilon(v)]=0, ~\forall \epsilon = \pm,
\end{equation}
\begin{equation}
[\mathcal{G}_{\pm}(u), \mathcal{G}_{\pm}(v)]=0,
\end{equation}
\begin{equation}
[\mathcal{G}_+(u), \mathcal{G}_-(v)]+[\mathcal{G}_-(u), \mathcal{G}_+(v)]=0.
\end{equation}
\end{defn}
\begin{thm}\cite{BasS}
\label{isorefcur}
The map $\Phi: B_q(\widehat{sl_2})\longrightarrow O_q(\widehat{sl_2})$  defined by
\begin{eqnarray}
\label{1224}
A(u)&=& uq\mathcal{W}_+(u)-u^{-1}q^{-1}\mathcal{W}_-(u),\\
D(u)&=& uq\mathcal{W}_-(u)-u^{-1}q^{-1}\mathcal{W_+}(u),\\
B(u)& =& \frac{1}{k_-(q+q^{-1})}\mathcal{G}_+(u)+\frac{k_+(q+q^{-1})}{q-q^{-1}},\\
\label{1227}
C(u)&=& \frac{1}{k_+(q+q^{-1})}\mathcal{G}_-(u)+\frac{k_-(q+q^{-1})}{q-q^{-1}}
\end{eqnarray}
is an algebra isomorphism, where $q \ne 1, u \ne q^{-1}$ and $k_{\pm}\in \mathbb{C}^*$.
\end{thm}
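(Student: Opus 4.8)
The plan is to exhibit an explicit two-sided inverse for $\Phi$ and to verify that both $\Phi$ and this inverse respect the defining relations; once that is done the two maps are mutually inverse algebra homomorphisms, hence $\Phi$ is an isomorphism. First I would construct the candidate inverse $\Psi:O_q(\widehat{sl_2})\to B_q(\widehat{sl_2})$ by solving the linear system (\ref{1224})--(\ref{1227}) for the current generators. Relations (\ref{1224}) and the companion for $D(u)$ form a $2\times 2$ linear system in $\mathcal{W}_\pm(u)$ whose determinant is $u^2q^2-u^{-2}q^{-2}$, which is nonzero under the stated genericity hypotheses $q\neq 1$, $u\neq q^{-1}$; inverting it gives
\[ \mathcal{W}_+(u)=\frac{uq\,A(u)+u^{-1}q^{-1}D(u)}{u^2q^2-u^{-2}q^{-2}}, \]
and symmetrically for $\mathcal{W}_-(u)$. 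Relations (\ref{1226})--(\ref{1227}) are affine-linear and, since $k_\pm\in\mathbb{C}^*$, are immediately inverted to express $\mathcal{G}_\pm(u)$ in terms of $B(u),C(u)$. Thus $\Phi$ and $\Psi$ are mutually inverse \emph{on generators} by construction; the content of the theorem is that each extends to an algebra map.

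To show $\Phi$ is an algebra homomorphism I would expand the reflection equation (\ref{reflectequ}) entry by entry. Using the explicit form of $R(u)$ in (\ref{soluybequation}), both sides of (\ref{reflectequ}) are $4\times 4$ matrices whose entries are quadratic expressions in $A(u),B(u),C(u),D(u)$ and $A(v),\dots,D(v)$; equating entries produces a finite list of relations, many of which coincide or are void owing to the block structure (the zero pattern) of $R$. Substituting the definitions (\ref{1224})--(\ref{1227}) and re-expressing everything through the variables $U$ and $V$ of Definition \ref{dfcurren}, I would check that this list reproduces exactly the displayed defining relations of $O_q(\widehat{sl_2})$. The matching of coefficients across these relations is what fixes the parameter identification: the additive constants in (\ref{1226})--(\ref{1227}) and the scalar $\rho$ must be chosen so that the inhomogeneous terms in the $O_q$ relations line up with the homogeneous reflection-equation entries.

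Conversely, exactly the same computation read backwards shows that $\Psi$ is a homomorphism: substituting the inverse expressions into each relation of Definition \ref{dfcurren} returns the corresponding entry of (\ref{reflectequ}). Because the two relation sets are thereby placed in bijection, verifying one direction simultaneously verifies the other. Having both $\Phi$ and $\Psi$ algebra homomorphisms, and $\Phi\circ\Psi=\mathrm{id}$, $\Psi\circ\Phi=\mathrm{id}$ on generators (hence on all of each algebra), we conclude that $\Phi$ is an algebra isomorphism.

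The main obstacle is the bookkeeping in the entrywise expansion of (\ref{reflectequ}). One must carefully collect the sixteen matrix components, track the noncommutativity of the entries evaluated at $u$ versus $v$, and systematically rewrite the spectral dependence in the combinations $U=(qu^2+q^{-1}u^{-2})/(q+q^{-1})$ and $V=(qv^2+q^{-1}v^{-2})/(q+q^{-1})$. The genuinely delicate point is the consistent emergence of the central parameter $\rho$: since $B_q(\widehat{sl_2})$ is presented without reference to $\rho$, this scalar must surface uniformly from the constant shifts in the definitions of $B(u)$ and $C(u)$, and confirming that a single choice of these constants makes \emph{all} the $O_q$ relations hold at once is the crux that upgrades a one-sided homomorphism to a genuine isomorphism.
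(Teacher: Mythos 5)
You should first be aware that the thesis does not prove Theorem \ref{isorefcur} at all: it is stated as imported background, with the proof residing in the cited reference \cite{BasS}, so there is no in-paper argument to measure yours against. Judged on its own terms, your strategy is essentially that of the reference. The change of generators (\ref{1224})--(\ref{1227}) is affine-linear with invertible linear part (your $2\times 2$ inversion for $\mathcal{W}_\pm(u)$ is correct, and $k_\pm\in\mathbb{C}^*$ handles solving for $\mathcal{G}_\pm(u)$), hence it identifies the two underlying free algebras, and the theorem reduces to showing that under this identification the ideal generated by the entries of the reflection equation (\ref{reflectequ}) coincides with the ideal generated by the relations of Definition \ref{dfcurren}. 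That is exactly how $O_q(\widehat{sl_2})$ was obtained in \cite{BasS}: its defining relations are what the entrywise expansion of (\ref{reflectequ}) with the $R$-matrix (\ref{soluybequation}) produces. Your observation that a single matching computation settles well-definedness of both $\Phi$ and its inverse is correct, provided it is phrased as equality of the two ideals under the free-algebra isomorphism rather than as two separate one-way implications; your construction does supply that framing.

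Two caveats. First, your genericity claim is wrong as stated: the determinant $u^2q^2-u^{-2}q^{-2}$ vanishes whenever $u^4q^4=1$, and the hypotheses $q\neq 1$, $u\neq q^{-1}$ exclude only $uq=1$; the inversion must be understood for generic $u$, or equivalently at the level of modes after expanding the currents in powers of $U^{-1}$ as in Theorem \ref{isocurA}. Second, the entire substance of the theorem --- that the sixteen entries of (\ref{reflectequ}), after substitution and rewriting in the variables $U,V$, reduce exactly to the relations of Definition \ref{dfcurren}, with one consistent identification of $\rho$ in terms of $k_+k_-$ (elsewhere in the thesis $\rho=k_+k_-(q+q^{-1})^2$) --- is announced but not performed in your write-up. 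As written, the proposal is a correct plan, and you correctly isolate the emergence of $\rho$ from the constant shifts in $B(u)$, $C(u)$ as the delicate point, but the crux remains the unexecuted entrywise computation that \cite{BasS} carries out.
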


\begin{defn}\cite{BasS}\label{alA}
$\mathcal{A}_q$ is an associative algebra with parameter $\rho \in \mathbb{C}^*$, unit 1 and generators $\{\mathcal{W}_{-k}$, $ \mathcal{W}_{k+1}$, $ \mathcal{G}_{k+1}$, $ \mathcal{\tilde{G}}_{k+1}|k \in \mathbb{Z}_+\}$ satisfying the following relations:
\begin{eqnarray}
\label{aq28}
[\mathcal{W}_0, \mathcal{W}_{k+1}]=[\mathcal{W}_{-k}, \mathcal{W}_1]=\frac{1}{q+q^{-1}}(\mathcal{\tilde{G}}_{k+1}-\mathcal{G}_{k+1}),\\
~[\mathcal{W}_0,\mathcal{G}_{k+1}]_q=[\mathcal{\tilde{G}}_{k+1}, \mathcal{W}_0]_q=\rho\mathcal{W}_{-k-1}-\rho\mathcal{W}_{k+1},\\
~[\mathcal{G}_{k+1},\mathcal{W}_1]_q=[\mathcal{W}_1, \mathcal{\tilde{G}}_{k+1}]_q=\rho\mathcal{W}_{k+2}-\rho\mathcal{W}_{-k},\\
~[\mathcal{W}_{-k},\mathcal{W}_{-l}]=0,~~~[\mathcal{W}_{k+1},\mathcal{W}_{l+1}]=0,\\
~[\mathcal{W}_{-k}, \mathcal{W}_{l+1}]+[\mathcal{W}_{k+1},\mathcal{W}_{-l}]=0,\\
~[\mathcal{W}_{-k}, \mathcal{G}_{l+1}]+[\mathcal{G}_{k+1},\mathcal{W}_{-l}]=0,\\
~[\mathcal{W}_{-k}, \mathcal{\tilde{G}}_{l+1}]+[\mathcal{\tilde{G}}_{k+1},\mathcal{W}_{-l}]=0,\\
~[\mathcal{W}_{k+1}, \mathcal{G}_{l+1}]+[\mathcal{G}_{k+1},\mathcal{W}_{l+1}]=0,\\
~[\mathcal{W}_{k+1}, \mathcal{\tilde{G}}_{l+1}]+[\mathcal{\tilde{G}}_{k+1},\mathcal{W}_{l+1}]=0,\\
~[\mathcal{G}_{k+1}, \mathcal{G}_{l+1}]=0,~~[\mathcal{\tilde{G}}_{k+1},\mathcal{\tilde{G}}_{l+1}]=0,\\
\label{aq38}
~[\mathcal{\tilde{G}}_{k+1}, \mathcal{G}_{l+1}]+[\mathcal{G}_{k+1},\mathcal{\tilde{G}}_{l+1}]=0,
\end{eqnarray}
where $\mathbb{Z}_+$ is the set of all nonnegative integers.
\end{defn}

By analogy with Drinfeld's construction, we are now looking for an infinite dimensional set of elements (the so-called `modes') of an algebra in terms of which the currents $\mathcal{W}_{\pm}(u)$, $\mathcal{G}_{\pm}(u)$ are expanded.
\begin{thm}\cite{BasS}
\label{isocurA}
Define the formal variable $U= (qu^2+q^{-1}u^{-2})/(q+q^{-1})$. Let $\Psi: O_q(\widehat{sl_2})\longrightarrow \mathcal{A}_q$ be the map defined by
\begin{eqnarray}
\mathcal{W}_+(u)=\sum\limits_{k\in \mathbb{Z}_+}{\mathcal{W}_{-k}U^{-k-1}},~~\mathcal{W}_-(u)= \sum\limits_{k\in \mathbb{Z}_+}{\mathcal{W}_{k+1}U^{-k-1}},\\
\mathcal{G}_+(u)=\sum\limits_{k\in \mathbb{Z}_+}{\mathcal{G}_{k+1}U^{-k-1}},~~\mathcal{G}_-(u)= \sum\limits_{k\in \mathbb{Z}_+}{\mathcal{\tilde{G}}_{k+1}U^{-k-1}}.
\end{eqnarray}
Then, $\Psi$ is an algebra isomorphism between $O_q(\widehat{sl_2})$ and $\mathcal{A}_q$.
\end{thm}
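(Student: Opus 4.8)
The plan is to treat $\Psi$ as a mode expansion in the spirit of Drinfeld's current realization, and to reduce the isomorphism claim to a coefficient-by-coefficient comparison of defining relations. First I would regard each current of Definition \ref{dfcurren} as a formal Laurent series in $U^{-1}$, where $U=(qu^2+q^{-1}u^{-2})/(q+q^{-1})$, whose coefficients are precisely the generators of $\mathcal{A}_q$ in Definition \ref{alA}: the coefficient of $U^{-k-1}$ in $\mathcal{W}_+(u)$ is $\mathcal{W}_{-k}$, in $\mathcal{W}_-(u)$ is $\mathcal{W}_{k+1}$, in $\mathcal{G}_+(u)$ is $\mathcal{G}_{k+1}$, and in $\mathcal{G}_-(u)$ is $\tilde{\mathcal{G}}_{k+1}$. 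These assignments biject the spanning family of current coefficients with the generators of $\mathcal{A}_q$, so $\Psi$ will be surjective as soon as it is shown to be a well-defined homomorphism, and a candidate two-sided inverse is obtained by reading the same formulas backwards.

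The core of the argument is to show that the defining relations of $O_q(\widehat{sl_2})$ and those of $\mathcal{A}_q$ correspond under this expansion. I would substitute the four series of Theorem \ref{isocurA} into each relation of Definition \ref{dfcurren} and collect the coefficient of a fixed monomial $U^{-k-1}V^{-l-1}$. The relations carrying no rational prefactor --- the commutativity relations $[\mathcal{W}_\pm(u),\mathcal{W}_\pm(v)]=0$ and $[\mathcal{G}_\pm(u),\mathcal{G}_\pm(v)]=0$, together with the $\pm$-antisymmetric pairs --- reproduce the commutativity and antisymmetry relations among (\ref{aq28})--(\ref{aq38}) essentially verbatim. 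For relations carrying a polynomial prefactor $U$, $V$, or $(U-V)$, I would use the elementary index shift $U\cdot U^{-k-1}=U^{-k}$, so that multiplication by $U$ lowers a mode label by one; matching coefficients then produces exactly the first three relations of Definition \ref{alA}, including the two-sided $q$-commutator identities and the right-hand sides of the form $\rho\mathcal{W}_{-k-1}-\rho\mathcal{W}_{k+1}$.

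The step I expect to be the main obstacle is the single relation involving the genuinely rational factor $\frac{1-UV}{U-V}$. Here I would avoid expanding $(U-V)^{-1}$ as a geometric series, which would force an unnatural choice of domain and risk producing asymmetric mode relations; instead I would clear the denominator by multiplying through by $(U-V)$, turning the relation into an honest identity of Laurent series in $U^{-1},V^{-1}$ with operator coefficients, after which coefficient extraction is unambiguous and yields the remaining relations of $\mathcal{A}_q$. Once every relation of $O_q(\widehat{sl_2})$ has been shown to expand into a $\mathbb{C}$-linear combination of the relations (\ref{aq28})--(\ref{aq38}), and conversely every relation of $\mathcal{A}_q$ is recovered as the coefficient of some current relation, the two presentations coincide after the identification of generators. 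Since the two mode expansions are mutually inverse, this two-way matching shows that $\Psi$ is a bijective algebra homomorphism, hence an isomorphism. The only genuine care required is bookkeeping: verifying completeness of the correspondence so that no spurious mode relation is created (which would obstruct well-definedness of $\Psi$) and none is lost (which would obstruct well-definedness of its inverse).
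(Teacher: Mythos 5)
The paper itself contains no proof of this statement: Theorem \ref{isocurA} is recalled verbatim from \cite{BasS} as background material, so your attempt can only be measured against the argument in that reference, which is indeed a mode-expansion argument of the general kind you describe. Your outline is sound in several respects: expanding the currents in $U^{-1}$ and matching coefficients is the right framework; the relations of Definition \ref{dfcurren} carrying no prefactor do match the commutativity and antisymmetry families of Definition \ref{alA} verbatim and exhaustively; and your handling of the factor $\frac{1-UV}{U-V}$ by clearing the denominator is legitimate, since $U-V$ is not a zero divisor on formal series in $U^{-1},V^{-1}$ (so the cleared relation is equivalent to the original one).

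The genuine gap is your claim that, for the relations with prefactors, "matching coefficients then produces exactly the first three relations of Definition \ref{alA}." It does not. Those relations arise only as the \emph{boundary} coefficients (of $U^{0}V^{-l-1}$ and $U^{-k-1}V^{0}$); the bulk coefficients produce an infinite family of relations with no counterpart among (\ref{aq28})--(\ref{aq38}). For instance, the coefficient of $U^{-a}V^{-b}$ with $a,b\geq 1$ in the third relation of Definition \ref{dfcurren} reads
\[
[\mathcal{W}_{-a},\mathcal{W}_{b}]-[\mathcal{W}_{-a+1},\mathcal{W}_{b+1}]
=\frac{(q-q^{-1})}{\rho(q+q^{-1})}\left(\mathcal{G}_{a}\tilde{\mathcal{G}}_{b}-\mathcal{G}_{b}\tilde{\mathcal{G}}_{a}\right),
\]
and the relation involving $\frac{1-UV}{U-V}$ expands into identities among ordered products $\mathcal{W}_{-k}\mathcal{W}_{-l}$, $\mathcal{W}_{k+1}\mathcal{W}_{l+1}$, $\mathcal{W}_{-k}\mathcal{W}_{l+1}$ that appear nowhere in Definition \ref{alA}. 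Since the defining relations of $\mathcal{A}_q$ pin down only commutators and $q$-commutators against $\mathcal{W}_0$ and $\mathcal{W}_1$, the well-definedness of $\Psi$ --- i.e., that the four series satisfy the current-algebra relations \emph{inside} $\mathcal{A}_q$ --- requires deriving all of these general-index relations from the sparse set (\ref{aq28})--(\ref{aq38}), which is done in \cite{BasS} by a nontrivial induction on the mode indices. That derivation is the real substance of the theorem; it is not the "bookkeeping" your closing paragraph suggests, and your proposal provides no mechanism for it. By contrast, the direction you could complete as stated is the easy one: well-definedness of the inverse map, since every defining relation of $\mathcal{A}_q$ is visibly a boundary coefficient of some current relation. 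Note also that the obstacle you singled out as the main one --- the rational prefactor --- becomes a minor point once the denominator is cleared; the asymmetry between boundary and bulk coefficients is where the difficulty actually lives.
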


In \cite{Jim, Jim2}, Jimbo pointed out that intertwiners $R$ of quantum loop algebra lead to trigonometric solutions of the quantum Yang-Baxter equation (\ref{ybequation}). Any tensor product of two evaluation representations with generic evaluation parameters $u$ and $v$ being indecomposable, by Schur's lemma the solution $R$ is unique up to an overall scalar factor. Consider the quantum affine algebra $U_q(\widehat{sl_2})$ the construction of the solution $R(u)$ given by (\ref{soluybequation}) goes below. Similar arguments apply to solutions of the reflection equation algebra. If the vector space on which the entries of the $K-$matrix act is irreducible, then the solution is unique (up to an overall factor).\\

We now turn to the construction of intertwiners between representations of the $q-$Onsager algebra. First we recall some basic ingredients that are necessary for the discussion. \vspace{1mm}

Recall the realization of the quantum affine algebra $U_q(\widehat{sl_2})$ in the Chevalley presentation $\{H_j, E_j, F_j\}, j \in \{0, 1\}$ \cite{CP}
\begin{defn}
Define the extended Cartan matrix $\{a_{ij}\} ~ (a_{ii} = 2, a_{ij} = -2 ~\text{for}~ i \ne j)$. The quantum affine algebra $U_q{(\widehat{sl_2})}$ is generated by the elements $\{H_j, E_j, F_j\}, j \in \{0, 1\}$ which satisfy the defining relations
\begin{eqnarray}
[H_i, H_j] &=&0,~~~ [H_i, E_j] = a_{ij}E_j,\\
~[H_i, F_j] &=& -a_{ij}F_j,~~~ [E_i, F_j] = \delta_{ij}\frac{q^{H_i}-q^{-H_i}}{q-q^{-1}},
\end{eqnarray}
together with the $q$-Serre relations
\begin{eqnarray}
[E_i, [E_i, [E_i, E_j]_q]_{q^{-1}}] &=&0,\\
~[F_i,[F_i, [F_i, F_j]_q]_{q^{-1}}]&=&0.
\end{eqnarray}
\end{defn}
Clearly, this realization of $U_q(\widehat{sl_2})$ is equivalent to the one in Definition \ref{defiUq}. We endow the quantum group with a Hopf algebra structure that is ensured by the existence of a coproduct $\Delta: U_q(\widehat{sl_2})\longrightarrow U_q({\widehat{sl_2}})\otimes U_q(\widehat{sl_2})$, an antipode $\mathcal{S}: U_q(\widehat{sl_2})\longrightarrow U_q({\widehat{sl_2}})$ and a counit $\mathcal{E}: U_q(\widehat{sl_2})\longrightarrow \mathbb{C}$ with
\begin{eqnarray}
\Delta(E_i)&=&E_i\otimes q^{-H_i/2}+q^{H_i/2}\otimes E_i,\\
\Delta(F_i) &=& F_i\otimes q^{-H_i/2}+q^{H_i/2}\otimes F_i,\\
\Delta(H_i)&=& H_i\otimes \mathbb{I}+\mathbb{I}\otimes H_i,
\end{eqnarray}
\begin{equation}
\mathcal{S}(E_i) = -E_iq^{H_i}, ~~~ \mathcal{S}(F_i) = -q^{H_i}F_i, ~~~ \mathcal{S}(H_i) = -H_i, ~~~ \mathcal{S}(\mathbb{I}) = 1,
\end{equation}
and 
\begin{equation}
\mathcal{E}(E_i) = \mathcal{E}(F_i) = \mathcal{E}(H_i) = 0, ~~~\mathcal{E}(\mathbb{I}) = 1.
\end{equation}
Note that the opposite coproduct $\Delta'$ can be similarly defined with $\Delta' \equiv \sigma \circ \Delta$ where the permutation map $\sigma(x\otimes y) = y\otimes x$ for all $x, y \in U_q(\widehat{sl_2})$ is used.\\

Recall that, by definition the intertwiner $R(u/v): \mathcal{V}_u \otimes \mathcal{V}_v \longrightarrow \mathcal{V}_v \otimes \mathcal{V}_u$ between two fundamental $U_q(\widehat{sl_2})$- evaluation representation obeys
\begin{equation}
\label{interR}
R(u/v)(\pi_u\times \pi_v)[\Delta(x)]=(\pi_u \times \pi_v)[\Delta'(x)]R(u/v)~~~~ \forall x \in U_q(\widehat{sl_2}),
\end{equation}
where the (evaluation) homomorphism $\pi_u: U_q(\widehat{sl_2})\longrightarrow \text{End}(\mathcal{V}_u)$ is chosen such that $(\mathcal{V} \equiv \mathbb{C}^2)$ 
\begin{eqnarray}
\pi_u(E_1) &=& uq^{1/2}\sigma_+,~~~ \pi_u(E_0) = uq^{1/2}\sigma_-,\\
\pi_u(F_1) &=& u^{-1}q^{-1/2}\sigma_-,~~~ \pi_u(F_0) = u^{-1}q^{-1/2}\sigma_+,\\
\pi_u(q^{H_1})&=& q^{\sigma_z},~~~ \pi_u(q^{H_0}) = q^{-\sigma_z}.
\end{eqnarray}
It is not difficult to check the matrix $R(u)$ given by (\ref{soluybequation}) indeed satisfies the required conditions (\ref{interR}). The intertwiner $R(u)$ is unique (up to an overall scalar factor) and satisfies the Yang-Baxter algebra (\ref{ybequation}).\\

By analogy with the construction described above for the $R$-matrix and along the lines described in \cite{DelG, DM}, an intertwiner for the $q$-Onsager algebra can be easily constructed. Before, we need to introduce the concept of the comodule algebra using the analogue of the Hopf's algebra coproduct action called the coaction map.
\begin{defn}\cite{CP}
Given a Hopf algebra $\mathcal{H}$ with comultiplication $\Delta$ and counit $\mathcal{E}$, $\mathcal{I}$ is called a left $\mathcal{H}$-comodule if there exists a left coaction map $\delta: \mathcal{I}\longrightarrow \mathcal{H}\otimes \mathcal{I}$ such that
\begin{equation}
(\Delta \times  id)\circ \delta = (id\times \delta)\circ \delta, ~~~ (\mathcal{E}\times id)\circ \delta \cong id.
\end{equation}
Right $\mathcal{H}$-comodules are defined similarly.
\end{defn}

\begin{prop}\cite{B1}\label{coa}
Let $k_{\pm} \in \mathbb{C}^*$ and set $\rho_1=\rho_2 = k_+k_-(q+q^{-1})^2$. The $q$-Onsager algebra $\mathbb{T}$ is a left $U_q(\widehat{sl_2})$-comodule algebra with coaction map $\delta: \mathbb{T} \longrightarrow U_q(\widehat{sl_2})\otimes \mathbb{T}$ such that
\begin{eqnarray}
\delta(A)&=&(k_+E_1q^{H_1/2}+k_-F_1q^{H_1/2})\otimes \mathbb{I} +q^{H_1}\otimes A,\\
\delta(A^*)&=&(k_-E_0q^{H_0/2}+k_+F_0q^{H_0/2})\otimes \mathbb{I} +q^{H_0}\otimes A^*.
\end{eqnarray}   
\end{prop}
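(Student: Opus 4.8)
The statement asserts two things: that $\delta$ is a well-defined algebra homomorphism $\mathbb{T}\to U_q(\widehat{sl_2})\otimes\mathbb{T}$, and that together with $\Delta$ and $\mathcal{E}$ it satisfies the left-comodule axioms. Since $\mathbb{T}$ is presented by the generators $A,A^*$ subject to the $q$-Dolan-Grady relations (\ref{qO1}), (\ref{qO2}), the first point reduces to checking that the prescribed images $\delta(A),\delta(A^*)$ again satisfy (\ref{qO1}), (\ref{qO2}) inside $U_q(\widehat{sl_2})\otimes\mathbb{T}$; this is the crux. The comodule axioms will then be routine on the two generators.

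The organizing principle is the coideal structure of the first tensor components. Writing $\mathcal{A}=k_+E_1q^{H_1/2}+k_-F_1q^{H_1/2}$ and $\mathcal{A}^*=k_-E_0q^{H_0/2}+k_+F_0q^{H_0/2}$, so that $\delta(A)=\mathcal{A}\otimes\mathbb{I}+q^{H_1}\otimes A$ and $\delta(A^*)=\mathcal{A}^*\otimes\mathbb{I}+q^{H_0}\otimes A^*$, a short computation with the coproduct formulas yields, inside $U_q(\widehat{sl_2})^{\otimes 2}$,
\begin{equation}
\Delta(\mathcal{A})=\mathcal{A}\otimes\mathbb{I}+q^{H_1}\otimes\mathcal{A},\qquad \Delta(\mathcal{A}^*)=\mathcal{A}^*\otimes\mathbb{I}+q^{H_0}\otimes\mathcal{A}^*,
\end{equation}
using only $\Delta(q^{H_i/2})=q^{H_i/2}\otimes q^{H_i/2}$ and the primitivity-type formulas for $\Delta(E_i),\Delta(F_i)$. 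Thus $\mathcal{A},\mathcal{A}^*$ generate a left coideal subalgebra of $U_q(\widehat{sl_2})$, and in particular they satisfy the $q$-Dolan-Grady relations with parameter $\rho$; this is the point at which the normalization $\rho=k_+k_-(q+q^{-1})^2$ is forced, and it can be seen either directly or by identifying $\mathcal{A},\mathcal{A}^*$ (up to the standard $A\leftrightarrow A^*$ relabelling and the usual dictionary between Chevalley generators) with the images of the $\epsilon_\pm=0$ specialization of the homomorphism $\varphi$ into $U_q(\widehat{sl_2})$ recalled above. I would then establish (\ref{qO1}), (\ref{qO2}) for $\delta(A),\delta(A^*)$ by expanding the cubic $q$-commutators and collecting terms according to the number of second-factor letters $A,A^*$: the terms with no second-factor letter reproduce the relation for $\mathcal{A},\mathcal{A}^*$ and vanish against $\rho[\,\cdot\,,\cdot\,]$, the top-degree terms reproduce the relation for $A,A^*$ in $\mathbb{T}$ and vanish, and the mixed terms must be made to cancel using the $U_q(\widehat{sl_2})$ commutation relations among $\mathcal{A},\mathcal{A}^*,q^{\pm H_0},q^{\pm H_1}$ together with the same value of $\rho$. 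A computation-free variant is available whenever the $\epsilon_\pm=0$ homomorphism $\varphi_0$ is injective: then $\mathrm{id}\otimes\varphi_0$ embeds $U_q(\widehat{sl_2})\otimes\mathbb{T}$ into $U_q(\widehat{sl_2})^{\otimes2}$ and carries $\delta(A),\delta(A^*)$ to $\Delta\bigl(\varphi_0(A^*)\bigr),\Delta\bigl(\varphi_0(A)\bigr)$, which satisfy the relations because $\Delta\circ\varphi_0$ is an algebra homomorphism; injectivity transports the relations back and simultaneously realizes $\delta$ as $(\mathrm{id}\otimes\varphi_0)^{-1}\circ\Delta\circ\varphi_0$ composed with the $A\leftrightarrow A^*$ automorphism (available since $\rho_0=\rho_1$).

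For the comodule axioms I would verify them on $A,A^*$, whereupon they extend to all of $\mathbb{T}$ because both sides are algebra homomorphisms agreeing on generators. Coassociativity $(\Delta\otimes\mathrm{id})\circ\delta=(\mathrm{id}\otimes\delta)\circ\delta$ is immediate from $\Delta(q^{H_1})=q^{H_1}\otimes q^{H_1}$ and the coideal formula above: both sides send $A$ to $\mathcal{A}\otimes\mathbb{I}\otimes\mathbb{I}+q^{H_1}\otimes\mathcal{A}\otimes\mathbb{I}+q^{H_1}\otimes q^{H_1}\otimes A$, and similarly for $A^*$. The counit axiom $(\mathcal{E}\otimes\mathrm{id})\circ\delta\cong\mathrm{id}$ follows from $\mathcal{E}(E_i)=\mathcal{E}(F_i)=0$ and $\mathcal{E}(q^{H_i})=1$, which give $\mathcal{E}(\mathcal{A})=\mathcal{E}(\mathcal{A}^*)=0$ and hence $(\mathcal{E}\otimes\mathrm{id})\delta(A)=A$, $(\mathcal{E}\otimes\mathrm{id})\delta(A^*)=A^*$. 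The main obstacle is the mixed-term cancellation in the algebra-homomorphism step: unlike the two extreme degrees, the intermediate contributions do not close on a single copy of the $q$-Dolan-Grady relation and require careful use of the $U_q(\widehat{sl_2})$ relations and of the specific normalization $\rho=k_+k_-(q+q^{-1})^2$ in order to vanish; everything else is routine.
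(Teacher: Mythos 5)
The paper itself contains no proof of this proposition: it sits in the background chapter and is quoted from \cite{B1} without argument. In \cite{B1} the statement is obtained inside the reflection-equation formalism, where $\delta(A),\delta(A^*)$ arise as (leading terms of) entries of a Sklyanin-dressed $K$-matrix $L(u)K(u)L'(u)$, which solves the reflection equation whenever $K(u)$ does, and solutions of that particular form are equivalent to pairs of operators obeying the $q$-Dolan-Grady/Askey-Wilson relations; the comodule structure is a byproduct of the dressing. So your proposal necessarily takes a different route, and the question is whether it closes. The parts of it that are correct are: the reduction of well-definedness to checking (\ref{qO1})--(\ref{qO2}) on $\delta(A),\delta(A^*)$; the coideal identity $\Delta(\mathcal{A})=\mathcal{A}\otimes\mathbb{I}+q^{H_1}\otimes\mathcal{A}$ (and its analogue for $\mathcal{A}^*$), which does follow from the paper's coproduct formulas; and the verification of coassociativity and the counit axiom on generators, which indeed propagates to all of $\mathbb{T}$ once $\delta$ is known to be an algebra map.

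The gap is at the central step, and neither of your routes closes it. In route (a), the mixed-term cancellation is not bookkeeping: $q^{H_1}\mathcal{A}^*q^{-H_1}=k_-q^{-2}E_0q^{H_0/2}+k_+q^{2}F_0q^{H_0/2}$ is not proportional to $\mathcal{A}^*$, and $q^{H_1}$ does not $q$-commute with $\mathcal{A}$ either, so the cross terms in the expanded cubic $q$-commutators do not organize by any simple grading; their vanishing is the entire content of the proposition, and saying they ``must be made to cancel'' restates the claim rather than proving it. In route (b), the identity $(\mathrm{id}\otimes\varphi_0)\circ\delta=\Delta\circ\varphi_0$ on generators is a genuinely nice observation and would finish the proof, but only granted injectivity of $\varphi_0$; that injectivity is precisely the assertion that the $q$-Dolan-Grady relations are a complete presentation of the coideal subalgebra generated by $\mathcal{A},\mathcal{A}^*$ --- a quantum-symmetric-pair theorem of the type established by Kolb \cite{Kolb}, which is deeper than the proposition you are proving and is not available in the paper: the only injectivity recalled there (the Ito--Terwilliger theorem \cite{IT005}) concerns the different map $\varphi_{s,t}$ into the $U_q(sl_2)$-loop algebra $\mathcal{L}$, whose images carry the nonzero diagonal terms $tk(s)+t^{-1}k(s)^{-1}$. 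A smaller inaccuracy compounds this: $\mathcal{A},\mathcal{A}^*$ are \emph{not} the $\epsilon_\pm=0$ specializations of the paper's $\varphi$ under the obvious dictionary ($\varphi(A^*)$ with $\epsilon_+=0$ becomes $k_+E_1+k_-q^{-1}F_1q^{H_1}$, not $k_+E_1q^{H_1/2}+k_-F_1q^{H_1/2}$), so even the fact that $\mathcal{A},\mathcal{A}^*$ satisfy (\ref{qO1})--(\ref{qO2}) with $\rho=k_+k_-(q+q^{-1})^2$ requires its own verification, e.g.\ along the lines of the realization (\ref{realg}) recalled from \cite{BB1}. As written, the proposal therefore either defers the decisive computation or rests it on an unproven and heavier theorem.
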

\begin{prop}\cite{BasS}
\label{propreflec}
Let $\pi_u: U_q(\widehat{sl_2})\longrightarrow  \text{End} (\mathcal{V}_u)$ be the evaluation homomorphism for $\mathcal{V}\equiv \mathbb{C}^2$. Let $W$ denote a vector space over $\mathbb{C}$ on which the elements of the $q$-Onsager algebra $\mathbb{T}$ act. Assume the tensor product $\mathcal{V}_u\otimes W$ is irreducible. Then, there exists an intertwiner
\begin{equation}
K(u): \mathcal{V}_u\otimes W \longrightarrow \mathcal{V}_{u^{-1}}\otimes W,
\end{equation}
such that
\begin{equation}
\label{interK}
K(u)(\pi_u \times id)[\delta(a)] = (\pi_{u^{-1}} \times id)[\delta(a)]K(u), ~~~ \forall a \in \mathbb{T}.
\end{equation}
It is unique (up to an overall scalar factor), and it satisfies the reflection equation (\ref{reflectequ}).
\end{prop}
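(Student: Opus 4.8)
The plan is to mirror the construction of the $R$-matrix recalled just above the statement, replacing the Hopf-algebra coproduct $\Delta$ by the coaction map $\delta$ of Proposition \ref{coa} and the pair of evaluation modules $\mathcal{V}_u\otimes\mathcal{V}_v$ by $\mathcal{V}_u\otimes W$. First I would endow both $\mathcal{V}_u\otimes W$ and $\mathcal{V}_{u^{-1}}\otimes W$ with the structure of a left $\mathbb{T}$-module. Since $\mathbb{T}$ is a left $U_q(\widehat{sl_2})$-comodule algebra (Proposition \ref{coa}), the composites $\Pi_u=(\pi_u\times id)\circ\delta$ and $\Pi_{u^{-1}}=(\pi_{u^{-1}}\times id)\circ\delta$ define representations of $\mathbb{T}$ on $\mathcal{V}_u\otimes W$ and on $\mathcal{V}_{u^{-1}}\otimes W$, the comodule axiom $(\mathcal{E}\times id)\circ\delta\cong id$ ensuring these are genuine algebra representations. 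With this reading, the intertwining relation (\ref{interK}) says exactly that $K(u)$ is a morphism of $\mathbb{T}$-modules from $(\mathcal{V}_u\otimes W,\Pi_u)$ to $(\mathcal{V}_{u^{-1}}\otimes W,\Pi_{u^{-1}})$; and since it suffices to impose (\ref{interK}) on the two generators $A,A^*$, the condition reduces to a finite linear system for $K(u)$.

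For uniqueness I would invoke Schur's lemma: by hypothesis $\mathcal{V}_u\otimes W$ is irreducible, and substituting $u\mapsto u^{-1}$ shows $\mathcal{V}_{u^{-1}}\otimes W$ is irreducible as well, so any two nonzero solutions of (\ref{interK}) are proportional. For existence I would argue, exactly as in the $R$-matrix case, that these two irreducible $\mathbb{T}$-modules are in fact isomorphic, whence a nonzero intertwiner exists. Concretely, applying $\pi_u$ to $\delta(A)$ yields $\Pi_u(A)=(k_+u\sigma_++k_-u^{-1}\sigma_-)\otimes\mathbb{I}+q^{\sigma_z}\otimes A$ (and an analogous expression for $A^*$), which depends rationally on $u$, and the module is determined by invariants that are stable under $u\mapsto u^{-1}$; matching these invariants produces the isomorphism. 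This is the step I expect to be the main obstacle, just as the existence (rather than the uniqueness) of $R$ is the substantive point in Jimbo's construction.

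Finally, for the reflection equation (\ref{reflectequ}), I would run the standard railroad coherence argument on $\mathcal{V}_u\otimes\mathcal{V}_v\otimes W$. Using the coassociativity of the coaction, $(\Delta\times id)\circ\delta=(id\times\delta)\circ\delta$, together with the $U_q(\widehat{sl_2})$-intertwining property (\ref{interR}) of $R$ and the $\mathbb{T}$-intertwining property (\ref{interK}) of $K$, both sides of (\ref{reflectequ}) are seen to be morphisms of $\mathbb{T}$-modules with the same source and target, where one traces the auxiliary parameters $u,v\mapsto u^{-1},v^{-1}$ through the four factors $R,K,R,K$. Irreducibility then forces the two sides to be proportional by Schur's lemma, and comparing their action on a reference vector (or the leading asymptotics in $u$) fixes the proportionality constant to $1$, which is precisely (\ref{reflectequ}). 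The only delicate point here is the bookkeeping of which evaluation parameter sits in which tensor factor after each application of $R$ and of $K$.
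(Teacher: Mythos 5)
The thesis itself never proves this proposition: Chapter 1 is explicitly a summary ``without proofs'' and the statement is quoted from \cite{BasS}, so your proposal has to be measured against the argument of that reference and the standard intertwiner constructions it follows. Your architecture is the right one and matches that standard route: the composites $(\pi_u\times id)\circ\delta$ do define $\mathbb{T}$-module structures (this is exactly what the comodule-algebra property of Proposition \ref{coa} provides), Schur's lemma gives uniqueness up to a scalar once both modules are irreducible and finite dimensional, and the coassociativity-plus-railroad argument on $\mathcal{V}_u\otimes\mathcal{V}_v\otimes W$ is the usual way the reflection equation is deduced from the intertwining properties of $R$ and $K$.

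There is, however, a genuine gap, and it sits exactly where you flagged it: existence. Reducing existence to the claim $(\mathcal{V}_u\otimes W,\Pi_u)\cong(\mathcal{V}_{u^{-1}}\otimes W,\Pi_{u^{-1}})$ is fine, but your justification --- that ``the module is determined by invariants that are stable under $u\mapsto u^{-1}$'' --- is not an argument. In Jimbo's $R$-matrix case the analogous step works because $\mathcal{V}_u\otimes\mathcal{V}_v$ and $\mathcal{V}_v\otimes\mathcal{V}_u$ are modules with the same weight structure for $U_q(\widehat{sl_2})$; no weight or character theory of this kind is available for modules over the coideal $\mathbb{T}$, and indeed $\Pi_u(A)=(k_+u\sigma_++k_-u^{-1}\sigma_-)\otimes\mathbb{I}+q^{\sigma_z}\otimes A$ and $\Pi_{u^{-1}}(A)$ are not related by any obvious conjugation (exchanging $\sigma_+\leftrightarrow\sigma_-$ would also have to exchange $k_+\leftrightarrow k_-$). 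The way this is actually settled in \cite{BasS} --- and the thesis points to it in the explicit formulae (\ref{2215})--(\ref{2218}) of Chapter 2 --- is by solving the intertwining equations concretely: write $K(u)=\sum_{j\in\{0,z,\pm\}}\sigma_j\otimes\Omega_j(u)$ as a $2\times2$ matrix with entries $\Omega_j(u)\in \text{End}(W)$, observe that (\ref{interK}) for $a=A,A^*$ is a finite linear system for the $\Omega_j(u)$, and exhibit a nonzero solution whose entries are Laurent polynomials in $u$ with coefficients built from $A$, $A^*$ and $[A,A^*]_q$. Some such construction must be supplied; without it there is no $K$-matrix and the uniqueness and reflection-equation steps have nothing to apply to. Two smaller points: your Schur step for the reflection equation needs irreducibility of $\mathcal{V}_u\otimes\mathcal{V}_v\otimes W$, which is an additional genericity assumption not contained in the hypotheses; and fixing the proportionality constant to $1$ requires comparing explicit matrix elements or special values in $u,v$, since a priori the constant is an arbitrary function of the two spectral parameters, not a number.
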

According to Proposition (\ref{propreflec}), $K(u)$ is the unique intertwiner of the $q$-Onsager algebra $\mathbb{T}$ satisfying (\ref{interK}). By construction, it satisfies the reflection equation algebra (\ref{reflectequ}). This implies that a quotient of the $q$-Onsager algebra
is isomorphic to a quotient of the reflection equation algebra or, alternatively, a quotient of infinite dimensional algebra ${\cal A}_q$. Combining these results, it suggests that the $q$-Onsager algebra with $\rho_0=\rho_1=\rho$ has two different realizations: one in terms of the reflection equation algebra for the $U_q(\widehat{sl_2})$ $R$-matrix and another one in terms of the current algebra $O_q(\widehat{sl_2})\cong \mathcal{A}_q$. 
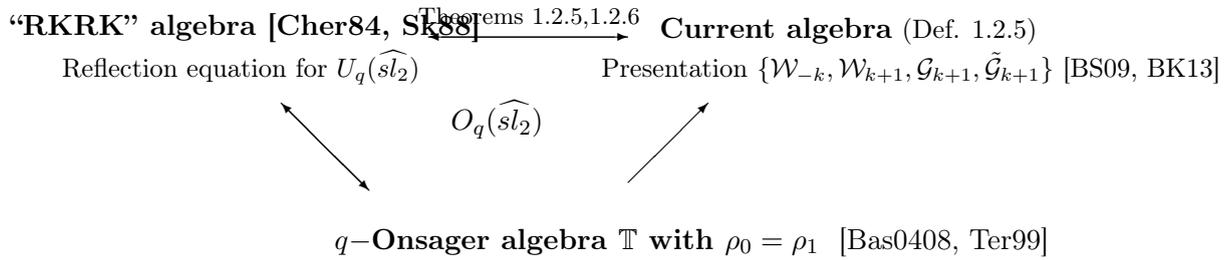
\begin{figure}[ht!]
\begin{picture}(390,90)
   \put(5,60){\shortstack[1]{\ \bf``RKRK'' algebra \cite{Cher84,Skly88} \\
        \small Reflection equation for $U_q(\widehat{sl_2})$}}
   \put(235,75){\vector(-2,0){70}}
   \put(170,75){\vector(2,0){70}}
   \put(152,80){\shortstack[l]{ \\
                                 \footnotesize \quad Theorems \ref{isorefcur},\ref{isocurA}}}
   \put(230,60){\shortstack[l]{\quad \quad  {\bf Current algebra} \small (Def. \ref{dfcurren})  \\
                                     \small Presentation \small $\{{\mathcal{W}}_{-k},{\mathcal{W}}_{k+1},{\mathcal{G}}_{k+1},{\tilde{\mathcal{G}}}_{k+1}\}$ \cite{BasS, BK}}} 
   \put(140,20){\vector(-1,1){30}}
   \put(113,47){\vector(1,-1){30}}
   \put(170,40){\shortstack[l]{\ {\bf $O_q({\widehat{sl_2}})$}}}                             
   \put(60,30){\shortstack[l]{}}
   
   \put(240,20){\vector(1,1){30}}
   \put(265,30){\shortstack[l]{}}

   \put(265,23){\shortstack[l]{}}
   \put(130,-5){\shortstack[l]{{\bf $q-$Onsager algebra  $\mathbb{T}$  with $\rho_0=\rho_1$} \ \cite{Bas1, Ter03}}}
\end{picture}
\vspace{5mm}
\caption{Presentations of $O_q({\widehat{sl_2}})$}
\end{figure}

\section{Representation theory}
The concept of a tridiagonal pair originated in algebraic graph theory, or more precisely, the theory of Q-polynomial distance-regular graphs. The standard generators for the subconstituent algebra (Terwilliger algebra) of a P- and Q- polynomial association scheme give rise to the concept of tridiagonal pair when they are restricted to an irreducible submodule of the standard module \cite[Example 1.4]{TD00}, \cite[Lemmas 3.9, 3.12]{Terc1}. This fact motivates the ongoing investigation of the tridiagonal pairs \cite{TD00, INT, IT, IT03, NT, Ter03}. \vspace{1mm} 

It is now well understood that the representation theory of tridiagonal algebras - in particular the $q-$Onsager algebra - is intimately connected with the theory of tridiagonal pairs. For instant, for an algebraically closed field and no  restrictions on $q$, note that a classification of tridiagonal pairs is given in \cite{INT} (see also \cite{IT2}). In this Section, we first recall what is a Leonard pair, and then turn to tridiagonal pairs. The material is taken from the work of Terwilliger et al.

\subsection{Leonard pairs}
We recall the notion of a Leonard pair, a special case of the tridiagonal pair \cite{TD00, Ter01, T05, Ter03} and illustrate how Leonard pairs arise in representation theory, combinatorics, and the theory of orthogonal polynomials \cite{Ter01}.
\begin{defn}
\label{dleo}\cite{T05}
Let $V$ denote a finite dimensional vector space over $\mathbb{K}$. By a Leonard pair we mean an ordered pair of linear transformations $A: V \to V,$ and $A^*: V \to V$ that satisfy the following conditions.\\

(i) There exists a basis for $V$ with respect to which the matrix representing $A$ is diagonal and the matrix representing $A^*$ is irreducible tridiagonal.\\

(ii) There exists a basis for $V$ with respect to which the matrix representing $A^*$ is diagonal and the matrix representing $A$ is irreducible tridiagonal.
\end{defn}
Note that a matrix is called tridiagonal whenever each nonzero entry lies on either the diagonal, the subdiagonal, or the superdiagonal. A tridiagonal matrix is said to be irreducible if all entries on the subdiagonal and superdiagonal are nonzero.\\

Suppose that $V$ has dimension $d+1$. Write $End(V)$ is the $\mathbb{K}$-algebra consisting of linear transformations from $V$ to $V$. Now let $A \in End(V)$, suppose that $A$ has $(n+1)$ distinct eigenvalues $\theta_0, \theta_1, \dots, \theta_n$. If $n = d$ and the eigenvalues $\theta_0, \theta_1, \dots, \theta_d \in \mathbb{K}$ then $A$ is said to be multiplicity-free. Assume $A$ is multiplicity-free, put
\[E_i=\prod\limits_{\begin{array}{c}
\scriptstyle 0 \le j \le d\\
\scriptstyle j \ne i \\
\end{array}}{\frac{A-\theta_j I}{\theta_i-\theta_j}}, ~~ 0 \le i \le d,\]
where $I$ denotes the identity of $End(V)$.

Apparently, by the elementary algebra,
\begin{eqnarray}
AE_i = E_iA = \theta_iE_i, ~~0 \le i \le d, \\
E_iE_j = \delta_{ij}E_i, ~~0\le i, j \le d,\\
\sum\limits_{i=0}^{d}{E_i} = I, ~~~~\sum\limits_{i = 0}^{d}{\theta_iE_i} = A,\\
V_i = E_iV ~\text{is eigenspace of}~ A ~ \text{corresponding with the eigenvalue}~\theta_i ,\\
E_0, E_1, \dots, E_d ~ \text{is a basic of subalgebra of}~ End(V)~ \text{generated by}~ A.
\end{eqnarray}
We call $E_i$ the primitive idempotent of $A$ associated with $\theta_i$. As a result,
\[V = E_0V+E_1V+\dots+E_dV~~~~~~\text{(direct sum)}.\]

\begin{lem}\cite{T05}
Let $V$ denote a finite dimensional vector space over a field $\mathbb{K}$. Let $(A, A^*)$ denote a Leonard pair on $V$. Then each of $A, A^*$ is multiplicity-free.
\end{lem}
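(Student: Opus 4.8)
The plan is to reduce everything to showing that $A$ is multiplicity-free, since the roles of $A$ and $A^*$ in Definition \ref{dleo} are symmetric: conditions (i) and (ii) are simply interchanged under $A \leftrightarrow A^*$. So I would prove that $A$ has exactly $d+1$ distinct eigenvalues, all lying in $\mathbb{K}$, where $d+1 = \dim V$, and then invoke this symmetry for $A^*$. The two halves of the Leonard pair axioms play complementary roles: condition (i) will give diagonalizability of $A$, while condition (ii) will control the multiplicities.

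First I would extract diagonalizability from condition (i). By (i) there is a basis of $V$ with respect to which $A$ is represented by a diagonal matrix; hence $A$ is diagonalizable over $\mathbb{K}$, every eigenvalue of $A$ lies in $\mathbb{K}$, and for each eigenvalue $\lambda$ the algebraic multiplicity coincides with the geometric multiplicity $\dim\ker(A-\lambda I)$. Next I would use condition (ii) to bound these geometric multiplicities. By (ii) there is a basis in which $A$ is represented by an irreducible tridiagonal matrix $T$, i.e. all sub- and superdiagonal entries of $T$ are nonzero. The key computation is that for every scalar $\lambda \in \mathbb{K}$ the matrix $T-\lambda I$ has rank at least $d$: deleting the first row and the last column of the $(d+1)\times(d+1)$ matrix $T-\lambda I$ leaves a $d\times d$ submatrix which is triangular and whose diagonal consists precisely of the subdiagonal entries of $T$; since these are nonzero, the submatrix is invertible, so $\mathrm{rank}(T-\lambda I)\ge d$ and therefore $\dim\ker(T-\lambda I)\le 1$. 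Thus every eigenvalue of $A$ has geometric multiplicity at most $1$.

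Combining the two steps finishes the argument. By the first step each eigenvalue of $A$ has algebraic multiplicity equal to its geometric multiplicity, and by the second step this common value is at most $1$; hence every eigenvalue of $A$ is simple. Counting with multiplicity, $A$ then has exactly $d+1$ distinct eigenvalues, all in $\mathbb{K}$, which is precisely the assertion that $A$ is multiplicity-free (with $n=d$ in the notation preceding the statement). Interchanging the roles of $A$ and $A^*$ yields the same conclusion for $A^*$.

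I expect no serious obstacle here; the single step that requires care is the rank estimate for an irreducible tridiagonal matrix, where one must choose the correct submatrix (delete the first row and the last column) and check that it is triangular with nonvanishing diagonal. It is worth emphasizing that this step uses only the nonvanishing of the off-diagonal entries and assumes no symmetry of $T$, which matters because the tridiagonal matrices arising from a Leonard pair need not be symmetric.
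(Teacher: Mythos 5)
Your proof is correct. Note that the paper itself states this lemma without proof, simply citing \cite{T05}; your argument — diagonalizability and $\mathbb{K}$-rationality of the eigenvalues from condition (i), plus the rank estimate $\mathrm{rank}(T-\lambda I)\ge d$ for the irreducible tridiagonal matrix from condition (ii) (via the triangular $d\times d$ submatrix obtained by deleting the first row and last column), and then the symmetry of the two conditions under $A\leftrightarrow A^*$ — is exactly the standard argument behind the cited result, so nothing further is needed.
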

When working with a Leonard pair, it is often convenient to consider a closely related object called a Leonard system.
\begin{defn}\cite{T05}
\label{dleosystem}
Let $d$ denote a nonnegative integer. Let $V$ denote a finite dimensional vector space over a field $\mathbb{K} $ with dimension $d+1$. By a Leonard system on $V$, we mean a sequence
\[\Phi=(A; E_0, E_1, \dots, E_d; A^*; E_0^*, E_1^*,\dots, E_d^*),\]
that satisfy the following conditions\\
(i) $A, A^*$ are both multiplicity-free elements.\\
(ii) $E_0, E_1, \dots, E_d$ is an ordering of the primitive idempotents of $A$.\\
(iii) $E_0^*, E_1^*,\dots, E_d^*$ is an ordering of the primitive idempotents of $A^*$.\\
(iv) $E_iA^*E_j=\left\{ \begin{array}{l}
0 ~~~~~~~~\text{if}~|i-j|>1\\
\ne 0 ~~~~~\text{if}~|i-j|=1
\end{array}\right. ~~(0\le i,j\le d)$. \\
(v) $E_i^*AE_j^*=\left\{ \begin{array}{l}
0 ~~~~~~~~\text{if}~|i-j|>1\\
\ne 0 ~~~~~\text{if}~|i-j|=1
\end{array}\right. ~~(0\le i,j\le d)$. 
\end{defn}
Ones refer to $d$ as the diameter of $\Phi$, and say $\Phi$ is over $\mathbb{K}$.\\

\begin{rem}
Let $\Phi=(A; E_0, E_1, \dots, E_d; A^*; E_0^*, E_1^*,\dots, E_d^*)$ denote a Leonard system on $V$, and let $\varphi$ denote an automorphism of $End(V)$. Then $\Psi=(\varphi(A); \varphi(E_0), \varphi(E_1), \dots, \varphi(E_d);$ $\varphi(A^*); \varphi(E_0^*), \varphi(E_1^*),\dots, \varphi(E_d^*))$ is also a Leonard system on $V$.
\end{rem}
The Leonard pairs $\Phi$ and $\Psi$ are said to be isomorphic whenever there exists an automorphism of $End(V)$ such that $\varphi(\Phi)=\Psi$.\\

A given Leonard system can be modified in several ways to get a new Leonard system. For instant, let $\Phi=(A; E_0, E_1, \dots, E_d; A^*; E_0^*, E_1^*,\dots, E_d^*)$ denote a Leonard system on $V$, let $\alpha, \alpha^*, \beta, \beta^*$ denote scalars in $\mathbb{K}$ such that $\alpha\ne 0, \alpha^*\ne 0$. Then,
\[(\alpha A+\beta I; E_0, E_1, \dots, E_d;\alpha^*A^*+\beta^*I; E_0^*, E_1^*,\dots, E_d^*) \]
is a Leonard system on $V$. Also,
\begin{eqnarray}
\Phi^*&:=&(A^*; E_0^*, E_1^*,\dots, E_d^*; A; E_0, E_1, \dots, E_d),\\
\Phi^{\downarrow}&:=&(A; E_0, E_1, \dots, E_d; A^*; E_d^*, E_{d-1}^*,\dots, E_0^*),\\
\Phi^{\Downarrow}&:=&(A; E_d, E_{d-1}, \dots, E_0; A^*; E_0^*, E_1^*,\dots, E_d^*)
\end{eqnarray}
are Leonard systems on $V$.\\

From the conditions $(ii)$ and $(iv)$ of Definition \ref{dleosystem}, with respect to an appropriate basis consisting of eigenvectors for $A$, the matrix representing $A^*$ is irreducible tridiagonal. Similarly, from the conditions $(iii)$ and $(v)$ of Definition \ref{dleosystem}, with respect to an appropriate basis consisting of eigenvectors for $A^*$, the matrix representing $A$ is irreducible tridiagonal. It means that in the Leonard system $\Phi$, $(A, A^*)$ is a Leonard pair. Inversely, if $(A, A^*)$ is a Leonard pair, and $E_i, i =0,\dots,d$ (resp.$E_i^*$) denotes an ordering of the primitive idempotents of $A$ (resp.$A^*$) corresponding with the basis of $V$ in the condition $(i)$ (resp.$(ii)$) of Definition \ref{dleo}, then $\Phi=(A; E_0, E_1, \dots, E_d; A^*; E_0^*, E_1^*,\dots, E_d^*)$ is a Leonard system.\\

Let $(A, A^*)$ be a Leonard pair on $V$. Let $\theta_0, \theta_1,\dots,  \theta_d$ (resp $\theta^*_0, \theta^*_1,\dots,\theta_d^*$) denote eigenvalues of $A$ (resp. $A^*$) such that the corresponding eigenvectors $v_0, v_1, \dots, v_d$ (resp. $v_0^*, v_1^*, \dots, v_d^*$) satisfy the condition \textit{(i)} (resp. \textit{(ii)}) in Definition \ref{dleo}. We refer to $\theta_0,$, $\theta_1$, $\dots$, $\theta_d$ as the eigenvalue sequence of $\Phi$, and $\theta^*_0, \theta^*_1,\dots,\theta_d^*$ as the dual eigenvalue sequence of $\Phi$. Let $E_i$ (resp. $E_i^*$) denote the primitive idempotent of $A$ (resp. $A^*$) associated with $\theta_i$ (resp. $\theta_i^*$), $0 \le i \le d$. Let $a_0, a_1,\dots, a_d$ be the diagonal of the matrix representing $A$ with respect to the basis $v_0^*, v_1^*,\dots, v_d^*$. Let $a_0^*, a_1^*,\dots, a_d^*$ be the diagonal of the matrix representing $A^*$ with respect to the basis $v_0, v_1,\dots, v_d$. As we know, $a_i = tr(E_i^*A),~ a_i^* = tr(E_iA^*)$ for $ 0\le i\le d$.
\begin{thm}\cite{Ter06}
Let $\Phi$ denote a Leonard system from Definition \ref{dleosystem}. Then the elements $A^rE_0^*A^s~ (0 \le r, s \le d)$ form a basis for the $\mathbb{K}$-vector space $End(V)$.
\end{thm}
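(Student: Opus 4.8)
The plan is to exploit the rank-one structure of $E_0^*$ together with a cyclicity argument, and then to close by a dimension count. Since $\dim_{\mathbb{K}}\mathrm{End}(V)=(d+1)^2$ and there are exactly $(d+1)^2$ elements $A^rE_0^*A^s$ with $0\le r,s\le d$, it suffices to prove that these elements form a basis. I will in fact exhibit them as the images of a tensor-product basis under the canonical isomorphism $V\otimes V^{*}\xrightarrow{\sim}\mathrm{End}(V)$, $u\otimes\phi\mapsto\bigl(x\mapsto\phi(x)u\bigr)$.

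First I would record that $A^{*}$ is multiplicity-free by condition (i) of Definition \ref{dleosystem}, so each of its eigenspaces is one-dimensional and each primitive idempotent has rank one. In particular there exist $v\in V$ spanning $E_0^{*}V$ (the $\theta_0^{*}$-eigenspace of $A^{*}$) and a functional $\eta\in V^{*}$ with $\eta(v)=1$ such that $E_0^{*}x=\eta(x)\,v$ for all $x\in V$. A direct computation then gives
\begin{equation*}
A^rE_0^{*}A^s\,x=\eta(A^s x)\,A^r v=(\eta\circ A^s)(x)\,(A^r v),
\end{equation*}
so that $A^rE_0^{*}A^s=(A^r v)\otimes(\eta\circ A^s)$ is the rank-one operator built from the vector $A^r v$ and the covector $\eta\circ A^s$. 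Under $V\otimes V^{*}\cong\mathrm{End}(V)$ the family $\{A^rE_0^{*}A^s\}$ is therefore the image of $\{(A^r v)\otimes(\eta\circ A^s)\}$, and the problem reduces to showing that $\{A^r v:0\le r\le d\}$ is a basis of $V$ and $\{\eta\circ A^s:0\le s\le d\}$ is a basis of $V^{*}$.

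Both statements are assertions of cyclicity, and this is where the Leonard-system hypotheses enter. By conditions (iii) and (v), the matrix $T=(T_{ij})$ representing $A$ in the eigenbasis $v_0^{*},\dots,v_d^{*}$ of $A^{*}$ is irreducible tridiagonal, while in this basis $v=v_0^{*}$ and $\eta$ is the coordinate functional dual to $v_0^{*}$. I would prove by induction on $r$ that $A^r v$ is a combination of $v_0^{*},\dots,v_r^{*}$ whose $v_r^{*}$-coefficient equals $\prod_{j=1}^{r}T_{j,j-1}$. Irreducibility forces every subdiagonal entry $T_{j,j-1}$ to be nonzero, so this leading coefficient is nonzero; hence the transition matrix from $\{v_i^{*}\}$ to $\{A^r v\}$ is triangular with nonzero diagonal, and $\{A^r v\}$ is a basis, i.e. $v$ is cyclic for $A$. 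The same induction applied to rows, using the nonvanishing superdiagonal entries $T_{j-1,j}$, shows that $\eta,\eta\circ A,\dots,\eta\circ A^{d}$ are linearly independent and so form a basis of $V^{*}$.

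Combining the two cyclicity statements, $\{(A^r v)\otimes(\eta\circ A^s)\}$ is a basis of $V\otimes V^{*}$, whence $\{A^rE_0^{*}A^s\}$ is a basis of $\mathrm{End}(V)$, as required. The only genuinely nontrivial step is the cyclicity: it is precisely the irreducibility of the tridiagonal form, that is the nonvanishing of the off-diagonal entries guaranteed by condition (v), that makes $v_0^{*}$ a cyclic vector and $\eta$ a cyclic covector for $A$; without irreducibility the leading coefficients could vanish and the family would fail to span. Everything else is the dimension count and the standard identification $\mathrm{End}(V)\cong V\otimes V^{*}$.
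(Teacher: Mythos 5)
Your proof is correct and complete. Note that the thesis itself gives no argument for this statement: it is quoted from Terwilliger's paper on Leonard pairs and the $q$-Racah polynomials, so there is no internal proof to compare against; your write-up therefore stands as an independent verification. Each step checks out: condition (i) of the Leonard-system definition makes $E_0^*$ a rank-one idempotent, so $E_0^*x=\eta(x)v$ with $v=v_0^*$ and $\eta$ the coordinate functional dual to $v_0^*$ in the $A^*$-eigenbasis; conditions (iii) and (v) make the matrix of $A$ in that basis irreducible tridiagonal, and your triangularity induction (leading coefficients $\prod_{j=1}^{r}T_{j,j-1}$ for the vectors $A^rv$, and $\prod_{j=1}^{s}T_{j-1,j}$ for the covectors $\eta\circ A^s$) correctly yields the two cyclicity statements; the identification $\mathrm{End}(V)\cong V\otimes V^{*}$ then converts the pair of bases into the desired basis $\{A^rE_0^*A^s\}$, with the count $(d+1)^2$ matching $\dim\mathrm{End}(V)$. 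This rank-one-plus-cyclicity mechanism is essentially the standard one in the Leonard-pair literature, and your observation that irreducibility of the tridiagonal matrix is the crux (without nonvanishing off-diagonal entries the span would collapse) is exactly the right point to emphasize.
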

\begin{cor}
Let $\Phi$ denote a Leonard system from Definition \ref{dleosystem}. Then the elements $A, E_0^*$ together generate $End(V)$. Moreover the elements $A, A^*$ together generate $End(V)$.
\end{cor}
\begin{cor}
Let $\Phi$ denote a Leonard system from Definition \ref{dleosystem}. Let $D$ denote the subalgebra of $End(V)$ generated by $A$. Let $X_0, X_1, \dots, X_d$ denote a basis for the $\mathbb{K}$-vector space $D$. Then the elements $X_rE_0^*X_s ~ (0 \le r, s \le d)$ form a basis for the $\mathbb{K}$-vector space $End(V)$.
\end{cor}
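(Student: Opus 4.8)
The plan is to deduce this Corollary from the Theorem of \cite{Ter06} just stated, which asserts that $\{A^r E_0^* A^s \mid 0 \le r,s \le d\}$ is a basis of $End(V)$, together with the observation that $\{X_0,\dots,X_d\}$ and $\{I,A,\dots,A^d\}$ are two bases of one and the same $(d+1)$-dimensional algebra $D$, so they are related by an invertible change of coordinates.

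First I would record that $\dim_{\mathbb{K}} D = d+1$. Indeed, since $(A,A^*)$ is a Leonard pair, $A$ is multiplicity-free, so it has $d+1$ distinct eigenvalues and its minimal polynomial has degree $d+1$; hence $I,A,\dots,A^d$ are linearly independent and span $D$, so they form a basis of $D$ (equivalently, the primitive idempotents $E_0,\dots,E_d$ form such a basis, as noted above). In particular the given basis $X_0,\dots,X_d$ of $D$ has exactly $d+1$ elements, and the collection $\{X_r E_0^* X_s\}$ has $(d+1)^2 = \dim_{\mathbb{K}} End(V)$ elements, matching the cardinality required of a basis.

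Next, since $\{X_t\}$ is a basis of $D$, I would expand each power of $A$ in it: there is an invertible matrix $P = (P_{st}) \in \mathrm{Mat}_{d+1}(\mathbb{K})$ with
\[
A^s = \sum_{t=0}^d P_{st}\, X_t, \qquad 0 \le s \le d.
\]
Substituting this into $A^r E_0^* A^s$ and using bilinearity of multiplication in $End(V)$ gives
\[
A^r E_0^* A^s = \sum_{a=0}^d \sum_{b=0}^d P_{ra}\, P_{sb}\, X_a E_0^* X_b .
\]
Thus the family $\{A^r E_0^* A^s\}$ is obtained from the family $\{X_a E_0^* X_b\}$ by the linear change of coordinates on $End(V)$ whose matrix, with rows indexed by $(r,s)$ and columns by $(a,b)$, is the Kronecker product $P \otimes P$.

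Finally I would invoke invertibility. Since $P$ is invertible, so is $P\otimes P$, its determinant being $(\det P)^{2(d+1)} \ne 0$. The Theorem of \cite{Ter06} guarantees that $\{A^r E_0^* A^s\}$ is a basis of $End(V)$; applying the inverse change of coordinates $P^{-1}\otimes P^{-1}$ then expresses each $X_a E_0^* X_b$ as a $\mathbb{K}$-linear combination of the basis elements $A^r E_0^* A^s$, so $\{X_a E_0^* X_b\}$ spans $End(V)$. As this spanning set has exactly $(d+1)^2$ elements, it is a basis, which is the claim. The only step meriting care is the passage from the change of basis $P$ on $D$ to the induced change of basis $P\otimes P$ on the $(d+1)^2$-dimensional space $End(V)$; once this tensor structure is identified, invertibility is immediate and the result drops out, so I expect no genuine obstacle beyond bookkeeping.
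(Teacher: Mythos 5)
Your proof is correct. Note that the paper itself states this corollary without proof (it is recalled material from Terwilliger's work), so there is no in-paper argument to compare against; your reduction to the preceding theorem via the invertible change of basis is exactly the natural and standard route. One small wording point: from the expansion $X_a E_0^* X_b = \sum_{r,s}(P^{-1})_{ar}(P^{-1})_{bs}\, A^r E_0^* A^s$ what follows most directly is linear \emph{independence} of the family $\{X_a E_0^* X_b\}$, since its coordinate matrix $P^{-1}\otimes P^{-1}$ with respect to the basis $\{A^r E_0^* A^s\}$ is invertible; \emph{spanning} follows instead from the forward expansion $A^r E_0^* A^s = \sum_{a,b} P_{ra}P_{sb}\, X_a E_0^* X_b$, which you also established, together with the fact that the left-hand sides span $End(V)$. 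Either half, combined with the count $(d+1)^2 = \dim_{\mathbb{K}} End(V)$, completes the argument, so this slip is harmless.
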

\begin{lem}
For $0 \le i, j \le d$, the following hold\\
(i)~~~ $E_i^*AE_j^*=\left\{ \begin{array}{l}
0 ~~~~~~~\text{if}~ |i-j|>1,\\
\ne 0 ~~~~\text{if}~ |i-j|=1,\\
a_i E_i^*~~~ \text{if}~ i =j.
\end{array}\right.$\\
(ii)~~~ $E_iA^*E_j=\left\{ \begin{array}{l}
0 ~~~~~~~\text{if}~ |i-j|>1,\\
\ne 0 ~~~~\text{if}~ |i-j|=1,\\
a_i^* E_i~~~ \text{if}~ i =j.
\end{array}\right.$\\
where $a_i = tr(E_i^*A), a_i^* = tr(E_iA^*)~ (0 \le i \le d)$
\end{lem}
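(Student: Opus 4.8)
The plan is to dispose of the off-diagonal cases immediately and then concentrate the work on the diagonal entries. For $|i-j|>1$ and $|i-j|=1$, the claims in both (i) and (ii) are precisely conditions (v) and (iv) of Definition \ref{dleosystem} (the defining properties of a Leonard system), so nothing needs to be proved there. The only genuine content is the evaluation $E_i^*AE_i^*=a_iE_i^*$ and its dual $E_iA^*E_i=a_i^*E_i$.

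For the diagonal case I would exploit that each primitive idempotent has rank one. Since $\Phi$ is a Leonard system, $A^*$ is multiplicity-free on the $(d+1)$-dimensional space $V$; it therefore has $d+1$ distinct eigenvalues, which forces each eigenspace $E_i^*V$ to be a line. Hence $E_i^*$ is a rank-one idempotent and $tr(E_i^*)=1$. The key structural observation is then that for any $X\in\mathrm{End}(V)$ the operator $E_i^*XE_i^*$ has image inside the one-dimensional space $E_i^*V$ and kernel containing $\ker E_i^*$; since $E_i^*$ is a nonzero operator with the same two properties and the space of operators with this image and this kernel is one-dimensional, we must have $E_i^*XE_i^*=\lambda\,E_i^*$ for some scalar $\lambda=\lambda(X)$.

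To identify $\lambda$ I would take traces on both sides. On one hand $tr(E_i^*XE_i^*)=\lambda\,tr(E_i^*)=\lambda$. On the other hand, cyclicity of the trace together with $E_i^*E_i^*=E_i^*$ gives $tr(E_i^*XE_i^*)=tr(E_i^*E_i^*X)=tr(E_i^*X)$. Comparing the two expressions yields $\lambda=tr(E_i^*X)$. Taking $X=A$ gives $E_i^*AE_i^*=a_iE_i^*$ with $a_i=tr(E_i^*A)$, which is the $i=j$ clause of (i). For (ii) I would run the identical argument with the roles of $A$ and $A^*$ exchanged: $A$ is likewise multiplicity-free, so $E_i$ is rank one, and the same rank-one-plus-trace computation with $X=A^*$ produces $E_iA^*E_i=a_i^*E_i$.

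There is no serious obstacle in this argument: everything reduces to the single identity $EXE=tr(EX)\,E$ valid for any rank-one idempotent $E$. The one point deserving care is the justification that the idempotents really are rank one, which is exactly where the multiplicity-freeness of $A$ and $A^*$ (automatic for a Leonard system) is used; on a space of dimension $d+1$, having $d+1$ distinct eigenvalues leaves no room for an eigenspace of dimension greater than one.
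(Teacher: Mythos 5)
Your proof is correct. Note that the thesis itself states this lemma without any proof (it is recalled material on Leonard systems, of the kind found in Terwilliger's papers), so there is no in-paper argument to compare against; your proposal supplies a complete justification where the paper gives none. Both pillars of your argument are sound. The off-diagonal cases are indeed verbatim conditions (iv) and (v) of Definition \ref{dleosystem}, so they require no work. For the diagonal case, the reduction to the identity $E X E = tr(EX)\,E$ for a rank-one idempotent $E$ is valid: multiplicity-freeness (condition (i) of the same definition) gives $d+1$ distinct eigenvalues on a space of dimension $d+1$, which forces every eigenspace to be a line, so each $E_i^*$ (and each $E_i$) is a rank-one idempotent with $tr(E_i^*)=1$; an operator whose image lies in the line $E_i^*V$ and whose kernel contains $\ker E_i^*$ factors through the one-dimensional space $V/\ker E_i^* \cong E_i^*V$, hence is a scalar multiple of $E_i^*$; and cyclicity of the trace, $tr(E_i^*XE_i^*)=tr(E_i^*E_i^*X)=tr(E_i^*X)$, pins that scalar down as $a_i$ when $X=A$. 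The symmetric argument with $A$ and $A^*$ exchanged settles part (ii). One could alternatively verify the diagonal case in coordinates (the paper's surrounding discussion already identifies $a_i=tr(E_i^*A)$ with the $i$-th diagonal entry of the matrix of $A$ in the eigenbasis $v_0^*,\dots,v_d^*$, and $E_i^*AE_i^*$ picks out exactly that entry), but your basis-free route is cleaner and equally rigorous.
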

\begin{lem}\cite{TV}
\label{lll}
For $0 \le r, s, i, j \le d$, the following hold\\
(i) ~~ $E_i^*A^rE_j^*= \left \{\begin{array}{l}
0 ~~~~~~\text{if}~~ 0 \le r < |i-j|,\\
\ne 0 ~~~\text{if}~~ r = |i -j|.
\end{array}
\right.$\\
(ii)~~ $E_i^*A^rA^*A^sE_j^* = \left\{\begin{array}{l}
\theta^*_{j+s}E_i^*A^{r+s}E_j^*, ~~~~\text{if}~ i-j = r + s,\\
\theta^*_{j-s}E_i^*A^{r+s}E_j^*, ~~~~\text{if}~ j-i = r + s,\\
0, ~~~~~~~~~~~~~~~~~~~~~\text{if}~|i-j| > r+s.
\end{array} \right.$
\end{lem}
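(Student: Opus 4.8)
The plan is to exploit two structural facts available for a Leonard system: first, since $A^*$ is multiplicity-free, each primitive idempotent $E_i^*$ has rank one; second, by the preceding Lemma the operator $A$ shifts the $A^*$-eigenspaces by at most one step, i.e. $E_i^*AE_j^*=0$ whenever $|i-j|>1$ and $E_i^*AE_j^*\ne 0$ whenever $|i-j|=1$. Throughout I would repeatedly insert the resolution of the identity $I=\sum_{k=0}^{d}E_k^*$ and use $A^*=\sum_{k=0}^{d}\theta_k^*E_k^*$.

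For (i), I first write $E_i^*A^rE_j^*=\sum E_i^*AE_{k_1}^*AE_{k_2}^*\cdots E_{k_{r-1}}^*AE_j^*$ by inserting $r-1$ copies of $I$ between the factors of $A^r$, setting $k_0=i$ and $k_r=j$. Each factor $E_{k_{t-1}}^*AE_{k_t}^*$ vanishes unless $|k_{t-1}-k_t|\le 1$, so a surviving term corresponds to a walk $i=k_0,k_1,\dots,k_r=j$ with unit steps; such a walk exists only if $|i-j|\le r$, which gives the vanishing for $r<|i-j|$. When $r=|i-j|$ the walk is forced to be strictly monotone (say $k_t=i-t$ for $i>j$), leaving the single term $E_i^*AE_{i-1}^*A\cdots AE_{j+1}^*AE_j^*$. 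To see this is nonzero I use that each $E_m^*$ is rank one: writing $E_m^*=v_m^*\phi_m$ with $\phi_m(v_m^*)=1$, each factor contributes the scalar $\phi_m(Av_{m-1}^*)$, nonzero exactly because $E_m^*AE_{m-1}^*\ne 0$, so the product telescopes to a nonzero scalar times the rank-one map $v_i^*\phi_j$. Equivalently, in the $A^*$-eigenbasis $A$ is irreducible tridiagonal, and the $(i,j)$ entry of $A^r$ is zero for $r<|i-j|$ by the bandwidth bound, and equals a product of nonzero off-diagonal entries when $r=|i-j|$.

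For (ii), I would substitute $A^*=\sum_k\theta_k^*E_k^*$ to get $E_i^*A^rA^*A^sE_j^*=\sum_k\theta_k^*(E_i^*A^rE_k^*)(E_k^*A^sE_j^*)$. By part (i) a term survives only when $|i-k|\le r$ and $|k-j|\le s$, hence only when $|i-j|\le|i-k|+|k-j|\le r+s$; this already disposes of the third case $|i-j|>r+s$, where the whole sum is $0$. If $i-j=r+s$, then equality in the triangle inequality forces $i-k=r$ and $k-j=s$, so $k=j+s$ is the unique surviving index and $E_i^*A^rA^*A^sE_j^*=\theta_{j+s}^*E_i^*A^rE_{j+s}^*A^sE_j^*$. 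To recognise the right-hand side I expand $E_i^*A^{r+s}E_j^*$ by inserting $I=\sum_m E_m^*$ between $A^r$ and $A^s$; the same uniqueness argument gives $E_i^*A^{r+s}E_j^*=E_i^*A^rE_{j+s}^*A^sE_j^*$, yielding $\theta_{j+s}^*E_i^*A^{r+s}E_j^*$ as claimed. The case $j-i=r+s$ is identical, with unique surviving index $k=j-s$.

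The bandwidth bookkeeping and the triangle-inequality selection of a single index are routine; the one genuine point is the nonvanishing assertion in (i) for $r=|i-j|$, i.e. that the forced monotone product of off-diagonal blocks does not collapse to zero. This is precisely where the Leonard-system hypotheses enter essentially (multiplicity-freeness, so that the $E_i^*$ are rank one, together with the irreducibility $E_m^*AE_{m-1}^*\ne 0$), and it is the step I would write out most carefully.
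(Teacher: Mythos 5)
Your proof is correct, but there is essentially nothing in the paper to compare it against: the thesis states Lemma \ref{lll} as a cited result from \cite{TV} and gives no proof of it. The nearest in-paper argument is the proof of Lemma \ref{lem1} for tridiagonal systems, which establishes only the vanishing statement $E_h^*A^rE_k^*=0$ for $|h-k|>r$ via the eigenspace inclusions $A V_k^* \subseteq V_{k-1}^*+V_k^*+V_{k+1}^*$; your insertion of resolutions of the identity between the factors of $A^r$ is the same mechanism written in idempotent rather than eigenspace language, so that half of your argument matches what the paper does elsewhere. Everything beyond that is supplied entirely by you and checks out: the telescoping of rank-one idempotents $E_m^*=v_m^*\phi_m$, with each scalar $\phi_m(Av_{m-1}^*)\neq 0$ by the Leonard-system axiom, gives the nonvanishing at $r=|i-j|$; and in part (ii) the expansion $A^*=\sum_k\theta_k^*E_k^*$ together with equality in the triangle inequality isolates the unique surviving index $k=j+s$ (resp.\ $k=j-s$), after which the identical expansion of $E_i^*A^{r+s}E_j^*$ closes the identity. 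You are also right about where the Leonard hypotheses enter essentially: for a general tridiagonal pair the idempotents need not have rank one, so the telescoping argument is unavailable (which is consistent with the fact that the paper's TD-pair lemma asserts only the vanishing half), whereas multiplicity-freeness plus irreducibility of the tridiagonal matrix of $A$ in the $A^*$-eigenbasis makes the forced product of off-diagonal entries manifestly nonzero. Your argument would serve as a complete replacement for the missing proof.
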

Obviously, the role of $A$ and $A^*$ in the Leonard pair $(A, A^*)$ are the same, hence the properties similar to Lemma \ref{lll} are also given for $A^*$. \\

Based on the relations between $A, A^*$ and the primitive idempotents $E_i, E_i^*$, the eigenvalues $\theta_i, \theta_i^*$, P. Terwilliger found elements of the algebra generated by $A, A^*$ commuting with elements in the subalgebra of $End(V)$ generated by $A$ and elements in the subalgebra of $End(V)$ generated by $A^*$. As a result, $A, A^*$ satisfy the defining relations of a tridiagonal algebra.
\begin{thm}\cite{Ter03}
\label{tthm1}
Let $V$ denote a finite dimensional vector space over a field $\mathbb{K}$. Let $(A, A^*)$ denote a Leonard pair on $V$. There exists a sequence of scalars $ \beta, \gamma, \gamma^*, \delta, \delta^*$ in $\mathbb{K}$ such that $A, A^*$ satisfy the tridiagonal relations (\ref{tri1}), (\ref{tri2}). Furthermore, if the dimension of $V$ is at least 3, then the sequence is uniquely determined.
\end{thm}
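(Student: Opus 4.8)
The plan is to establish the first relation (\ref{tri1}) by diagonalizing $A$, and then to obtain (\ref{tri2}) from the symmetry interchanging $A$ and $A^*$. Write $d+1=\dim V$ and let $\theta_0,\dots,\theta_d$ and $E_0,\dots,E_d$ be the (distinct) eigenvalues and primitive idempotents of the multiplicity-free operator $A$, with $\theta_0^*,\dots,\theta_d^*$ and $E_0^*,\dots,E_d^*$ the analogous data for $A^*$. Introduce
\begin{equation}
C=A^2A^*-\beta AA^*A+A^*A^2-\gamma(AA^*+A^*A)-\delta A^*,
\end{equation}
so that (\ref{tri1}) is precisely $[A,C]=0$. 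Since each $E_i$ is a polynomial in $A$ and $A=\sum_i\theta_iE_i$ with the $\theta_i$ distinct, an operator commutes with $A$ if and only if it is block diagonal for the decomposition $V=E_0V\oplus\cdots\oplus E_dV$. Thus the whole relation reduces to the block conditions $E_iCE_j=0$ for all $i\ne j$.

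First I would compute these blocks. Using $AE_i=E_iA=\theta_iE_i$, every term of $C$ contributes the same factor $E_iA^*E_j$, giving
\begin{equation}
E_iCE_j=\bigl(\theta_i^2-\beta\theta_i\theta_j+\theta_j^2-\gamma(\theta_i+\theta_j)-\delta\bigr)\,E_iA^*E_j.
\end{equation}
By the analogue of Lemma \ref{lll} with the roles of $A$ and $A^*$ exchanged, $E_iA^*E_j=0$ whenever $|i-j|>1$, so those blocks vanish for free; and for $|i-j|=1$ the factor $E_iA^*E_j$ is nonzero because $A^*$ is irreducible tridiagonal in an eigenbasis of $A$. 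Hence (\ref{tri1}) holds if and only if there exist scalars $\beta,\gamma,\delta$ with
\begin{equation}
\theta_i^2-\beta\theta_i\theta_{i+1}+\theta_{i+1}^2-\gamma(\theta_i+\theta_{i+1})-\delta=0,\qquad 0\le i\le d-1.
\end{equation}
Subtracting the equation at $i-1$ from that at $i$ and cancelling the nonzero factor $\theta_{i+1}-\theta_{i-1}$ collapses these quadratics to the linear three-term recurrence $\theta_{i-1}-\beta\theta_i+\theta_{i+1}=\gamma$ for $1\le i\le d-1$; conversely the recurrence plus fixing $\delta$ from the case $i=0$ recovers all of them.

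The hard part will be the genuinely structural input: that the eigenvalue sequence of a Leonard pair does satisfy such a recurrence with constant $\beta,\gamma$, equivalently that $(\theta_{i-1}-\theta_{i+2})/(\theta_i-\theta_{i+1})$ is independent of $i$. This does not follow formally from a single eigenbasis; I would derive it from the mutual tridiagonality by passing to the split decomposition $V=U_0\oplus\cdots\oplus U_d$ attached to the Leonard pair, in which $\dim U_i=1$ and $A,A^*$ act by compatible raising and lowering maps along the flag, so that tracking how the eigenvalues shift from one $U_i$ to the next forces the three-term relation. Granting this, the second relation (\ref{tri2}) follows by the symmetry of the definition of a Leonard pair under $(A,E_\bullet,\theta_\bullet)\leftrightarrow(A^*,E^*_\bullet,\theta^*_\bullet)$: the identical computation reduces (\ref{tri2}) to the dual recurrence $\theta^*_{i-1}-\beta\theta^*_i+\theta^*_{i+1}=\gamma^*$, the crucial extra point being that the \emph{same} $\beta$ governs both the primal and dual sequences, which is part of the structural theorem and which I would prove together with the recurrence. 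One then reads off $\gamma^*$ and $\delta^*$ exactly as for $\gamma,\delta$.

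Finally, for uniqueness I would use that, once $\dim V$ is large enough for the recurrence to hold at two consecutive indices (say $i=1,2$), subtracting them gives $(\theta_0-\theta_1)+(\theta_2-\theta_3)=\beta(\theta_1-\theta_2)$, which determines $\beta$ uniquely since $\theta_1\ne\theta_2$; then $\gamma$ is forced by the recurrence at $i=1$, $\delta$ by the quadratic at $i=0$, and dually $\gamma^*,\delta^*$. This is exactly where the hypothesis on $\dim V$ is used: one needs enough distinct eigenvalues to have two recurrence instances, and for a genuine Leonard pair the resulting overdetermined system is automatically consistent.
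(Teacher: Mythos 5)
Your reduction is correct as far as it goes: since $A$ is multiplicity-free, $[A,C]=0$ is equivalent to $E_iCE_j=0$ for all $i\neq j$, each such block equals $\bigl(\theta_i^2-\beta\theta_i\theta_j+\theta_j^2-\gamma(\theta_i+\theta_j)-\delta\bigr)E_iA^*E_j$, the blocks with $|i-j|>1$ vanish automatically, and those with $|i-j|=1$ are nonzero. But this bookkeeping is not the theorem. The entire mathematical content --- that there exist scalars $\beta,\gamma,\delta$ making the quadratic vanish at all consecutive pairs $(\theta_i,\theta_{i+1})$, equivalently that $(\theta_{i-2}-\theta_{i+1})/(\theta_{i-1}-\theta_i)$ is constant, and moreover that the \emph{same} $\beta$ works for the dual sequence $\theta_i^*$ --- is exactly what you set aside with ``Granting this''. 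The one sentence you offer in its place (``tracking how the eigenvalues shift from one $U_i$ to the next forces the three-term relation'') is not an argument, and it is not clear it can be completed without circularity: in this paper and the sources it follows, the constancy of that ratio is \emph{deduced from} the tridiagonal relations (see Corollary \ref{ch} and items (i)--(iii) stated right after Theorem \ref{tthm1}), not proved independently of them. So your proposal establishes ``recurrence $\Rightarrow$ relations, plus uniqueness'', which is the easy half, and leaves the hard half unproved.

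For comparison, the paper's actual proof (given for the more general TD-pair statement, Theorem \ref{dinhly2}, of which Theorem \ref{tthm1} is the rank-one special case) never needs the recurrence as an input. Using only $E_iA^*E_j=0$ for $|i-j|>1$, it shows $\mathrm{Span}\{XA^*Y-YA^*X\mid X,Y\in D\}=\{XA^*-A^*X\mid X\in D\}$, where $D$ is the subalgebra generated by $A$; hence $A^2A^*A-AA^*A^2=\sum_{i=0}^d a_i(A^iA^*-A^*A^i)$ for unique scalars $a_i$. The split decomposition then enters not to compute eigenvalue shifts but to kill high-order coefficients: sandwiching between the projections $F_t$ and $F_0$ and using $R^tF_0\neq0$ forces $a_i=0$ for $i\geq 4$ and $a_3\neq0$, after which $\beta,\gamma,\delta$ are read off from $a_3,a_2,a_1$, and the eigenvalue recurrences (your missing input) come out as corollaries. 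If you want to keep your architecture you must supply an independent proof of the recurrence; otherwise invert the logic as the paper does. A smaller point: your uniqueness argument uses $\theta_0,\dots,\theta_3$, so it needs $\dim V\geq 4$ (diameter at least $3$, which is the hypothesis in the paper's Theorem \ref{dinhly2}); it does not cover $\dim V=3$ as literally claimed in the statement, and indeed for $\dim V=3$ the two relations impose only four linear conditions on the five scalars, so that case cannot give uniqueness and deserves explicit comment.
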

The eigenvalues $\theta_i$ (resp. $\theta_i^*$) of $A$ (resp. $A^*$) satisfy recurrence relations below\\
\textit{(i)} ~The expressions $\frac{\theta_{i-2}-\theta_{i+1}}{\theta_{i-1}-\theta_i}$ and $\frac{\theta^*_{i-2}-\theta^*_{i+1}}{\theta^*_{i-1}-\theta^*_i}$ are independent of $i$ and equal to $\beta +1$, for all $2 \le i \le d-1$.\\
\textit{(ii)}~ $\theta_{i+1}-\beta\theta_i+\theta_{i-1}=\gamma, ~~\theta^*_{i+1}-\beta\theta^*_i+\theta^*_{i-1}=\gamma^*, ~~ 1 \le i \le d$.\\ 
\textit{(iii)}~ $\theta^2_i-\beta\theta_i\theta_{i-1}+\theta_{i-1}^2-\gamma(\theta_i+\theta_{i-1})= \delta,~~{\theta^*}^2_i-\beta{\theta}^*_i{\theta}^*_{i-1}+{\theta^*}_{i-1}^2-\gamma^*({\theta}^*_i+{\theta}^*_{i-1})= \delta^*, ~~ 1 \le i \le d$.\\

Theorem \ref{tthm1} is extended as follows.
\begin{thm}\cite{TV}
\label{lthm2}
Let $V$ denote a finite dimensional vector space over a field $\mathbb{K}$. Let $(A, A^*)$ denote a Leonard pair on $V$. There exists a sequence of scalars $ \beta, \gamma, \gamma^*, \delta, \delta^*, \omega, \eta , \eta^*$ in $\mathbb{K}$ such that both
\begin{eqnarray}
\label{AW1}
A^2A^*-\beta AA^*A+A^*A^2-\gamma(AA^*+A^*A)-\delta A^*&=&\gamma^*A^2+\omega A+\eta I,\\
\label{AW2}
{A^*}^2A-\beta A^*AA^*+A{A^*}^2-\gamma^*(A^*A+AA^*)-\delta^* A&=&\gamma {A^*}^2+\omega A^*+\eta^* I. ~~~~
\end{eqnarray}
If the dimension of $V$ is at least 4, then the sequence is uniquely determined.
\end{thm}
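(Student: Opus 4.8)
The plan is to prove each of the two Askey--Wilson relations by reducing it to a statement about the centralizer of a multiplicity-free operator, and then to identify the eight scalars. Write
\[
W := A^2A^*-\beta AA^*A+A^*A^2-\gamma(AA^*+A^*A)-\delta A^*,
\]
with $\beta,\gamma,\gamma^*,\delta,\delta^*$ the scalars furnished by Theorem \ref{tthm1}. The tridiagonal relation (\ref{tri1}) says exactly that $[A,W]=0$. Since $(A,A^*)$ is a Leonard pair, $A$ is multiplicity-free, so its centralizer in $\mathrm{End}(V)$ is the polynomial algebra $\mathbb{K}[A]=\mathrm{span}\{I,A,\dots,A^d\}$; hence $W=\sum_{k=0}^{d}c_kA^k$ for some scalars $c_k$. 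The whole content of (\ref{AW1}) is then that $c_k=0$ for $k\ge 3$ and that $c_2=\gamma^*$.

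First I would bound the degree. Every monomial of $W$ has the shape $A^rA^*A^s$ with $r+s\le 2$, so Lemma \ref{lll}(ii) gives $E_i^*WE_j^*=0$ whenever $|i-j|\ge 3$. If some $c_R\ne 0$ with $R\ge 3$ maximal, choose $i=R$, $j=0$; by Lemma \ref{lll}(i) the terms $A^k$ with $k<R$ contribute nothing to $E_R^*WE_0^*$, while $E_R^*A^RE_0^*\ne 0$, forcing $E_R^*WE_0^*=c_RE_R^*A^RE_0^*\ne 0$ and contradicting the vanishing above. Hence $W=c_2A^2+c_1A+c_0I$. To identify the leading coefficient, evaluate $E_i^*WE_j^*$ for $i-j=2$: only the cubic part survives, and Lemma \ref{lll}(ii) turns it into $(\theta^*_j-\beta\theta^*_{j+1}+\theta^*_{j+2})E_i^*A^2E_j^*$, which equals $\gamma^*E_i^*A^2E_j^*$ by the recurrence $\theta^*_{j+2}-\beta\theta^*_{j+1}+\theta^*_j=\gamma^*$ for the dual eigenvalue sequence. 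Comparing with the right-hand side $c_2E_i^*A^2E_j^*$ and using $E_i^*A^2E_j^*\ne 0$ gives $c_2=\gamma^*$. Setting $\omega:=c_1$ and $\eta:=c_0$ yields (\ref{AW1}), and relation (\ref{AW2}) is obtained verbatim by applying the same argument to the Leonard pair $(A^*,A)$, whose tridiagonal parameter sequence is $\beta,\gamma^*,\gamma,\delta^*,\delta$; this produces a relation of the shape (\ref{AW2}) with leading coefficient $\gamma$ and with some scalars $\omega'$ and $\eta^*$.

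The main obstacle is to show that the two relations share a single $\omega$, i.e. that $\omega=\omega'$. Here the clean degree formula of Lemma \ref{lll}(ii) no longer suffices, because the coefficient of $A$ in $W$ must be extracted from the $|i-j|=1$ matrix entries, which it does not govern. The plan is to expand $E^*_{j+1}WE^*_j$ by hand: inserting $A^*=\sum_k\theta^*_kE^*_k$ and $I=\sum_kE^*_k$ and using $E^*_kAE^*_k=a_kE^*_k$ gives the reduction $E^*_{j+1}A^2E^*_j=(a_j+a_{j+1})E^*_{j+1}AE^*_j$, and similarly for the cubic and quadratic terms, so that $\omega$ emerges as an explicit polynomial in $\theta^*_j,\theta^*_{j+1},a_j,a_{j+1}$ and $\beta,\gamma,\gamma^*$ (here $a_i=\mathrm{tr}(E_i^*A)$). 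The dual computation yields $\omega'$ as the same expression with $\theta^*\leftrightarrow\theta$ and $a\leftrightarrow a^*$, where $a_i^*=\mathrm{tr}(E_iA^*)$. Proving $\omega=\omega'$ then reduces to one symmetric identity among the two eigenvalue sequences and the diagonal entries $a_i,a_i^*$. I expect this to be the hardest step, and I would settle it either through the parameter-array relations of the Leonard system or, more transparently, by passing to the split basis in which $A$ is lower bidiagonal and $A^*$ is upper bidiagonal, where both sides of (\ref{AW1}) and (\ref{AW2}) can be evaluated on a basis vector and the coefficients of $A$ and $A^*$ compared directly.

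Finally, for uniqueness when $\dim V\ge 4$: since the right-hand sides of (\ref{AW1}) and (\ref{AW2}) commute with $A$ and $A^*$ respectively, any admissible eight-tuple forces the tridiagonal relations (\ref{tri1}),(\ref{tri2}) to hold with its $\beta,\gamma,\gamma^*,\delta,\delta^*$, which are therefore the unique values supplied by Theorem \ref{tthm1}; the leading-coefficient computation then fixes $\gamma^*$ and $\gamma$, after which $\omega,\eta,\eta^*$ are read off from the linear independence of $\{I,A,A^2\}$ and $\{I,A^*,{A^*}^2\}$. The reduction rests on the nine monomials entering (\ref{AW1}) (and their duals) being linearly independent in $\mathrm{End}(V)$, which is where the dimension hypothesis enters: the threshold $\dim V\ge 4$ is precisely what is needed, beyond the $\dim V\ge 3$ of Theorem \ref{tthm1}, to rule out linear dependences among these monomials that would otherwise allow the additional scalars to be traded off.
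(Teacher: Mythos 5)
Your centralizer strategy for each relation taken separately is sound: with $W := A^2A^*-\beta AA^*A+A^*A^2-\gamma(AA^*+A^*A)-\delta A^*$ and the scalars from Theorem \ref{tthm1}, the tridiagonal relation gives $[A,W]=0$, multiplicity-freeness of $A$ puts $W$ in $\mathbb{K}[A]$, Lemma \ref{lll} correctly bounds the degree by $2$ and identifies the leading coefficient as $\gamma^*$ via the recurrence $\theta^*_{j+2}-\beta\theta^*_{j+1}+\theta^*_j=\gamma^*$. But this only produces (\ref{AW1}) with some coefficient $c_1=:\omega$ and, after swapping $A\leftrightarrow A^*$, a relation of the shape (\ref{AW2}) with some other coefficient $c_1'=:\omega'$. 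The entire content of the Askey--Wilson relations beyond two decoupled tridiagonal-type identities is that these two middle coefficients coincide, and this is exactly the step you do not carry out: you state that you ``expect'' it to be the hardest step and ``would settle it'' via the parameter array or the split basis. That is a plan, not a proof, and it is not a formality. Indeed the paper (which itself gives no proof of Theorem \ref{lthm2}, citing \cite{TV}) records immediately after the theorem two \emph{different} expressions for $\omega$ --- item $(i)$, $\omega=a^*_i(\theta_i-\theta_{i+1})+a^*_{i-1}(\theta_{i-1}-\theta_{i-2})-\gamma^*(\theta_i+\theta_{i-1})$, and item $(ii)$, the same expression with $a^*\to a$, $\theta\to\theta^*$, $\gamma^*\to\gamma$ --- whose equality is precisely the identity you need; in \cite{TV} it is verified by explicit computation in the split (lower-bidiagonal/upper-bidiagonal) form using the constraints of the parameter array. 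Until that identity is established, what you have proven is strictly weaker than the theorem.

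The uniqueness paragraph also does not hold together. Your reduction (right-hand sides commute with $A$, resp.\ $A^*$, hence the tridiagonal relations hold, hence $\beta,\gamma,\gamma^*,\delta,\delta^*$ are the unique parameters, hence $\omega,\eta,\eta^*$ follow from independence of $\{I,A,A^2\}$ and $\{I,A^*,{A^*}^2\}$) would, if Theorem \ref{tthm1} is taken at its stated threshold ``dimension at least $3$'', already yield uniqueness on a $3$-dimensional space --- contradicting the theorem's own threshold of $4$; your closing assertion that $\dim V\geq 4$ is ``precisely what is needed'' to exclude dependences among the nine monomials is circular, since nothing is verified about which dependences occur in which dimension. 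The genuine source of the threshold is that uniqueness of the tridiagonal parameter sequence itself requires diameter at least $3$: for diameter $2$ (so $\dim V=3$), as in the $d=2$ case of the proof of Theorem \ref{dinhly2}, $\beta$ may be chosen arbitrarily and $\gamma,\delta,\gamma^*,\delta^*$ adjusted accordingly, so the Askey--Wilson sequence is not unique there. Your argument therefore needs the diameter-$\geq 3$ form of the uniqueness statement for the tridiagonal parameters, not the dimension-$\geq 3$ form quoted in Theorem \ref{tthm1}.
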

The relations (\ref{AW1}), (\ref{AW2}) are called the Askey-Wilson relations.\\

There is no doubt that if $A, A^*$ is a Leonard pair then $A, A^*$ satisfy the tridiagonal relations and the scalars $\beta, \gamma, \gamma^*, \delta, \delta^*$ in Theorem \ref{lthm2} coincide with the scalars of the tridiagonal relations. If the dimension of $V$ is at least 4, then they are uniquely defined by the recurrent relations. By \cite{TV} for the scalars $\omega, \eta, \eta^*$, they are obtained by the following\\
$(i)~~ \omega= a^*_i(\theta_i-\theta_{i+1})+a^*_{i-1}(\theta_{i-1}-\theta_{i-2})-\gamma^*(\theta_i+\theta_{i-1}), ~~2 \le i \le d-1$.\\
$(ii)~~ \omega= a_i(\theta^*_i-\theta^*_{i+1})+a_{i-1}(\theta^*_{i-1}-\theta^*_{i-2})-\gamma(\theta^*_i+\theta^*_{i-1}), ~~2 \le i \le d-1$.\\
$(iii)~~\eta = a^*_i(\theta_i-\theta_{i-1})(\theta_i-\theta_{i+1})-\gamma^*{\theta}_i^2-\omega\theta_i, ~~1 \le i \le d-1$.\\
$(iv)~~\eta^* = a_i(\theta^*_i-\theta^*_{i-1})(\theta^*_i-\theta^*_{i+1})-\gamma{\theta^*}_i^2-\omega\theta^*_i, ~~1 \le i \le d-1$.\\

We now show conditions in order to get a Leonard pair. Indeed, it is that a kind of converse of Theorem \ref{lthm2}.
\begin{thm}\cite{TV}
Let V denote a finite dimensional vector space on a field $\mathbb{K}$. Let $A, A^*$ denote linear transformations from $V$ to $V$. Suppose that:\\
(i) There exists a sequence of scalars $\beta, \gamma, \gamma^*, \delta, \delta^*, \omega, \eta, \eta^*$ in $\mathbb{K}$ which satisfy (\ref{AW1}), (\ref{AW2}).\\
(ii) There exists a scalar $q$ not a root of unity such that $q+q^{-1}=\beta$.\\
(iii) Each of $A, A^*$ is multiplicity-free.\\
(iv) There does not exist a subspace $W \subseteq V$ such that $AW \subseteq W, A^*W \subseteq W$ other than $0$ and $V$.\\
Then $(A, A^*)$ is a Leonard pair on $V$.
\end{thm}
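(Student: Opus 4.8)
The plan is to verify the two conditions of Definition~\ref{dleo} directly, producing for each a suitable ordering of the (pairwise distinct) eigenvalues. I would begin with condition \emph{(i)}. Since $A$ is multiplicity-free by hypothesis (iii), the equality $\dim V = d+1$ forces $A$ to have exactly $d+1$ distinct eigenvalues $\theta_0,\dots,\theta_d$ with one-dimensional eigenspaces $V_i=E_iV$, where the $E_i$ are the primitive idempotents recalled above. The decisive computation is to sandwich the Askey-Wilson relation (\ref{AW1}) between $E_i$ and $E_j$ with $i\neq j$. Because the right-hand side $\gamma^*A^2+\omega A+\eta I$ is a polynomial in $A$, one has $E_i(\gamma^*A^2+\omega A+\eta I)E_j=0$, while repeated use of $E_iA=\theta_iE_i$ and $AE_j=\theta_jE_j$ collapses the left-hand side to a scalar multiple of $E_iA^*E_j$. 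Writing
\[
p(x,y)=x^2-\beta xy+y^2-\gamma(x+y)-\delta,
\]
the relation becomes $p(\theta_i,\theta_j)\,E_iA^*E_j=0$, so that $E_iA^*E_j=0$ whenever $p(\theta_i,\theta_j)\neq 0$.

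This reduces the whole matter to understanding the adjacency relation $p(\theta_i,\theta_j)=0$. I would introduce two graphs on the vertex set $\{0,\dots,d\}$: the conic graph $\mathcal{C}$ with $i\sim j$ iff $p(\theta_i,\theta_j)=0$ $(i\neq j)$, and the support graph $\mathcal{G}$ with $i\sim j$ iff $E_iA^*E_j\neq 0$. The previous step gives $\mathcal{G}\subseteq\mathcal{C}$. Since $p(\theta_i,\cdot)$ is a genuine quadratic (its $y^2$-coefficient equals $1$), every vertex of $\mathcal{C}$ has degree at most $2$. Moreover $\mathcal{G}$ is connected and spanning: if the vertex set split into two parts with no $\mathcal{G}$-edge between them, the corresponding sum of eigenspaces would be invariant under both $A$ and $A^*$, contradicting the irreducibility hypothesis (iv). As $\mathcal{G}$ is connected and spanning and contained in $\mathcal{C}$, the graph $\mathcal{C}$ must itself be connected; being connected of maximum degree $2$, it is either a single path or a single cycle.

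The heart of the argument, and the step where hypothesis (ii) is indispensable, is to exclude the cyclic possibility. Along any path or cycle of $\mathcal{C}$ each interior vertex $\theta_{i_k}$ has its two neighbours as the two roots of the quadratic $p(\theta_{i_k},y)=0$, so Vieta's formulas yield the three-term recurrence $\theta_{i_{k-1}}+\theta_{i_{k+1}}=\beta\theta_{i_k}+\gamma$, precisely the eigenvalue recurrence recorded after Theorem~\ref{tthm1}. Its homogeneous characteristic equation $t^2-\beta t+1=0$ has roots $q,q^{-1}$, which are distinct because $q+q^{-1}=\beta$ with $q$ not a root of unity (in particular $q\neq\pm1$, hence $\beta\neq\pm2$); therefore $\theta_{i_k}=c_1q^{k}+c_2q^{-k}+c_3$ with $c_3=\gamma/(2-\beta)$. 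Closing a cycle of length $\ell$ would force $c_1q^{\ell}=c_1$ and $c_2q^{-\ell}=c_2$; since the $\theta_{i_k}$ are distinct, $c_1,c_2$ are not both zero, whence $q^{\ell}=1$, contradicting that $q$ is not a root of unity. Thus $\mathcal{C}$ carries no cycle, so $\mathcal{C}$ is a single path and, being the unique connected spanning subgraph of a path, $\mathcal{G}=\mathcal{C}$.

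Reordering the eigenvalues of $A$ along this path then gives a basis of eigenvectors of $A$ in which $A^*$ is tridiagonal, and irreducible because each path edge $E_iA^*E_{i\pm1}$ is nonzero; this establishes condition \emph{(i)}. Condition \emph{(ii)} follows by the identical argument applied to relation (\ref{AW2}), to the primitive idempotents $E_i^*$ of the operator $A^*$ (which is multiplicity-free by hypothesis (iii)), and to the polynomial $p^*(x,y)=x^2-\beta xy+y^2-\gamma^*(x+y)-\delta^*$. I expect the cycle-exclusion to be the main obstacle: the representation-theoretic bookkeeping is routine once the conic graph is forced to be an open path, and it is exactly the non-root-of-unity hypothesis that rules out the competing cyclic configuration.
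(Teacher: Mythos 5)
The paper itself does not prove this statement: it is quoted from Terwilliger--Vidunas \cite{TV} inside a background chapter that is explicitly presented without proofs, so your argument can only be judged against the statement itself and the cited literature. On that score your outline is the standard and correct one --- sandwiching the Askey--Wilson relation (\ref{AW1}) between primitive idempotents of $A$ to get $p(\theta_i,\theta_j)\,E_iA^*E_j=0$, using the irreducibility hypothesis (iv) to make the support graph connected, bounding degrees by $2$ via the quadratic $p(\theta_i,\cdot)$, and exploiting the three-term recurrence $\theta_{i_{k-1}}+\theta_{i_{k+1}}=\beta\theta_{i_k}+\gamma$ together with hypothesis (ii) to exclude a cycle: this is exactly where the non-root-of-unity assumption does its work, both in your write-up and in \cite{TV}.

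There is, however, one genuine gap at the final step. Your graph $\mathcal{G}$ is defined by the condition $E_iA^*E_j\neq 0$, which is not symmetric in $i,j$; what your connectivity-plus-path argument actually delivers is the symmetrized statement that for each edge $\{i,i+1\}$ of the path \emph{at least one} of $E_iA^*E_{i+1}$, $E_{i+1}A^*E_i$ is nonzero. But irreducible tridiagonality of the matrix of $A^*$ in the ordered eigenbasis of $A$ requires \emph{both} the subdiagonal and the superdiagonal entries to be nonzero, i.e. $E_{i+1}A^*E_i\neq 0$ and $E_iA^*E_{i+1}\neq 0$ for every $i$, and this does not follow from $\mathcal{G}=\mathcal{C}$ as you have argued it. The repair is short and uses a tool already in your proof: order the eigenvalues along the path, so that $E_iA^*E_j=0$ whenever $|i-j|>1$. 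If $E_{i+1}A^*E_i=0$ for some $i$, then $W=V_0+\dots+V_i$ satisfies $AW\subseteq W$ and $A^*W\subseteq W$, since for $j<i$ one has $A^*V_j\subseteq V_{j-1}+V_j+V_{j+1}\subseteq W$ and $A^*V_i\subseteq V_{i-1}+V_i\subseteq W$ once the entry $E_{i+1}A^*E_i$ vanishes; as $W$ is nonzero and proper this contradicts (iv). Symmetrically, $E_iA^*E_{i+1}=0$ makes $V_{i+1}+\dots+V_d$ invariant. With this insertion, and the same remark repeated for the dual relation (\ref{AW2}) and the idempotents of $A^*$, your proof is complete.
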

\begin{lem}\cite{N}
Let $A, A^*$ denote linear transformations from $V$ to $V$. Suppose
\[A^2A^*-\beta AA^*A+A^*A^2-\gamma(AA^*+A^*A)-\delta A^*= P(A)\]
holds for some polynomial $P(x)$ in $\mathbb{K}(x)$, and for some scalars $\beta, \gamma, \delta$ in $\mathbb{K}$. The eigenspaces of $A$ and the eigenspaces of $A^*$ all have one common dimension.\\
Furthermore if $\mathbb{K}$ is algebraically closed or $q$ is not a root of unity, where $\beta = q+q^{-1}$, then the eigenspaces of $A$ and the eigenspaces of $A^*$ all have dimension 1.
\end{lem}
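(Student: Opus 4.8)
The plan is to read the hypothesis one eigenvector of $A$ at a time and convert it into a statement about how $A^*$ moves the eigenspaces of $A$. As the word ``eigenspaces'' presupposes, I take $A$ to be diagonalizable; write $V=\bigoplus_i V_i$ for its eigenspace decomposition with distinct eigenvalues $\theta_i$, and let $E_i$ be the projection onto $V_i$. Applying the given identity to $v\in V_i$ and using $Av=\theta_i v$, every term on the left collapses to a polynomial in $A$ acting on $A^*v$, and collecting them yields $f_{\theta_i}(A)\,A^*v=P(\theta_i)\,v$, where $f_\theta(x)=x^2-(\beta\theta+\gamma)x+(\theta^2-\gamma\theta-\delta)$. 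This single identity is the engine for everything that follows.

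Next I would extract the combinatorial structure. Decomposing $A^*v=\sum_j w_j$ with $w_j\in V_j$ and comparing $V_j$-components in $f_{\theta_i}(A)\,A^*v=P(\theta_i)\,v$, I get $f_{\theta_i}(\theta_j)\,w_j=0$ for $j\neq i$, so $w_j\neq 0$ forces $\theta_j$ to be a root of the quadratic $f_{\theta_i}$. Since $f_{\theta_i}$ has at most two roots, I can order the eigenvalues so that these roots are $\theta_{i-1}$ and $\theta_{i+1}$; then $A^*$ is block tridiagonal with respect to $V=\bigoplus_i V_i$. Moreover the $j=i$ component reads $f_{\theta_i}(\theta_i)\,E_iA^*E_i=P(\theta_i)E_i$, and because $f_{\theta_i}(\theta_i)=(\theta_i-\theta_{i-1})(\theta_i-\theta_{i+1})\neq 0$ when the $\theta_i$ are distinct, each diagonal block $E_iA^*E_i$ is a scalar multiple of the identity. (That $\theta_{i\pm1}$ be the two roots of $f_{\theta_i}$ is exactly the three-term eigenvalue recurrence recorded after Theorem \ref{tthm1}.) This scalarity of the diagonal blocks is precisely what the strong hypothesis ``$=P(A)$'' buys, in contrast to the weaker tridiagonal relation, where the diagonal blocks are unconstrained.

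For the common-dimension claim I would study the raising and lowering maps $R_i=E_{i+1}A^*E_i\colon V_i\to V_{i+1}$ and $L_i=E_{i-1}A^*E_i\colon V_i\to V_{i-1}$. The easy input is irreducibility: a nonzero $u\in\ker R_0$ satisfies $A^*u=(E_0A^*E_0)u\in V_0$ and $Au=\theta_0 u$, so it spans an $(A,A^*)$-invariant line, which is excluded; hence $R_0$, and symmetrically $L_d$, are injective, giving $\dim V_0\le\dim V_1$ and $\dim V_d\le\dim V_{d-1}$. The substantive task is to upgrade these end-point inequalities to equalities at every level. Here I expect the main obstacle: the scalar diagonal blocks must be combined with the diagonalizability of $A^*$ to force each $R_i$ and $L_i$ to be an isomorphism, so that $\dim V_i$ is independent of $i$, equal to some $\rho$. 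Once the $A$-eigenspaces share the dimension $\rho$, a matching-of-shapes argument (noting that $(A^{\top},(A^*)^{\top})$ satisfies the same relation) transfers the conclusion to the eigenspaces of $A^*$.

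Finally, for the refinement over an algebraically closed field, or with $q$ not a root of unity where $\beta=q+q^{-1}$, I would solve the eigenvalue recurrence explicitly: its characteristic roots are $q^{\pm 2}$, so $\theta_i$ is a fixed linear combination of $q^{2i}$ and $q^{-2i}$ together with a constant, and under either hypothesis the $\theta_i$ are mutually distinct. With genuinely distinct eigenvalues the block-tridiagonal, scalar-diagonal datum is exactly that of a Leonard system, and irreducibility forbids any repetition inside an eigenspace: a common eigenspace of dimension $\rho>1$ would split off a proper $(A,A^*)$-invariant subspace. Hence $\rho=1$ and every eigenspace of $A$ and of $A^*$ is one-dimensional; equivalently, this final collapse can be read off from the classification of Leonard pairs once the Askey--Wilson data has been matched.
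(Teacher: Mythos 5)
The thesis gives no proof of this lemma to compare against: Chapter~1 is an explicit ``without proofs'' review and the statement is simply quoted from \cite{N}, so your proposal has to stand on its own. Its opening is sound and is the standard way in: applying the relation to $v\in V_i$ indeed gives $f_{\theta_i}(A)A^*v=P(\theta_i)v$ with $f_{\theta}(x)=x^2-(\beta\theta+\gamma)x+\theta^2-\gamma\theta-\delta$, whence $A^*V_i\subseteq V_{i-1}+V_i+V_{i+1}$ and, when $f_{\theta_i}(\theta_i)\neq0$, scalar diagonal blocks $E_iA^*E_i$. But the lemma's first assertion --- that all eigenspaces share one common dimension --- is exactly the statement that every raising map $R_i$ and every lowering map $L_i$ is bijective, and this you do not prove. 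You establish injectivity only at the two ends ($R_0$ and $L_d$, correctly, via irreducibility and the scalar diagonal blocks) and then declare the remaining levels ``the main obstacle,'' to be overcome by ``combining'' the scalar blocks with diagonalizability of $A^*$. That is a restatement of the problem, not an argument; it is the entire mathematical content of Nomura's theorem, whose proof works through the split decomposition and its raising/lowering maps rather than the eigenspace decomposition alone. A second defect in the same part: your device for transferring the conclusion to the eigenspaces of $A^*$ --- passing to $(A^{\top},(A^*)^{\top})$ --- is vacuous, because transposition preserves the relation and preserves each operator's eigenspace dimensions separately, so it says nothing about $A^*$; the relation is not symmetric under exchanging $A$ and $A^*$, and what is actually needed is the tridiagonal-pair fact $\dim V_i=\dim V_i^*$ quoted in this chapter from \cite{TD00}.

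The dimension-one part misidentifies what the extra hypotheses are for. Distinctness of the $\theta_i$ is automatic --- they are by definition the distinct eigenvalues of the diagonalizable operator $A$ --- so ``$\mathbb{K}$ algebraically closed or $q$ not a root of unity'' is not needed for that (also, with $\beta=q+q^{-1}$ the characteristic roots of the three-term recurrence are $q^{\pm1}$, not $q^{\pm2}$). Those hypotheses are what force the common dimension $\rho$ to be $1$: over an algebraically closed field one invokes the nontrivial sharpness theorem of \cite{NT}, quoted later in this same chapter (a tridiagonal pair over an algebraically closed field is sharp, i.e.\ $\rho_0=1$), and combines it with part one; a separate argument is required when $q$ is not a root of unity. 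Your substitute --- that an eigenspace of dimension $\rho>1$ ``would split off a proper $(A,A^*)$-invariant subspace'' --- is asserted without any justification, and it cannot be the right argument as it stands: it makes no use whatsoever of the hypotheses, so if it were provable the lemma would hold unconditionally, and the very existence of non-sharp irreducible tridiagonal pairs over non-algebraically-closed fields (the reason the theorem of \cite{NT} carries the closedness assumption) shows that irreducibility alone cannot do this work. So both halves of the lemma are left open at precisely the points where the mathematical content lies.
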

\begin{thm}\cite{N}
Suppose $\mathbb{K}$ is algebraically closed. Then the following are equivalent.\\
i) $(A, A^*)$ is a Leonard pair on $V$.\\
ii) $A, A^*$ satisfy the Askey-Wilson relations (\ref{AW1})-(\ref{AW2}) for some scalars $\beta, \gamma, \gamma^*, \delta, \delta^*$, $\omega, \eta, \eta^*$. 

\end{thm}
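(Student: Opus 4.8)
The plan is to treat the two implications separately, with almost all the work falling on the converse. The implication (i) $\Rightarrow$ (ii) I would simply read off from Theorem \ref{lthm2}: a Leonard pair on $V$ is there shown to satisfy (\ref{AW1})--(\ref{AW2}) for a suitable scalar sequence $\beta, \gamma, \gamma^*, \delta, \delta^*, \omega, \eta, \eta^*$, which is exactly assertion (ii). So this direction requires no new argument, and I would spend the proof establishing (ii) $\Rightarrow$ (i).

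For (ii) $\Rightarrow$ (i) the engine is the preceding lemma of \cite{N}. The key observation is that each Askey--Wilson relation already has the shape that lemma demands: the left-hand side of (\ref{AW1}) is the tridiagonal expression $A^2A^* - \beta AA^*A + A^*A^2 - \gamma(AA^* + A^*A) - \delta A^*$, while its right-hand side $\gamma^* A^2 + \omega A + \eta I$ is a polynomial $P(A)$ in $A$ alone, and dually (\ref{AW2}) equates the transposed expression with a polynomial $Q(A^*)$ in $A^*$. Since $\mathbb{K}$ is algebraically closed, the lemma forces every eigenspace of $A$ and every eigenspace of $A^*$ to have dimension $1$; together with the diagonalizability that the same analysis yields, this shows that each of $A, A^*$ is multiplicity-free. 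I would stress that this is precisely the payoff of algebraic closure: it supplies ``multiplicity-free'' automatically and thereby removes the separate ``$q$ not a root of unity'' hypothesis needed by the earlier converse of \cite{TV}.

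Granting multiplicity-free, I would manufacture the tridiagonal structure by hand. Commuting $A$ into (\ref{AW1}) kills $P(A)$ and returns the tridiagonal relation (\ref{tri1}); likewise (\ref{AW2}) gives (\ref{tri2}). Writing $E_0, \dots, E_d$ for the primitive idempotents of $A$ with eigenvalues $\theta_0, \dots, \theta_d$ and sandwiching (\ref{tri1}) between $E_i$ and $E_j$, the relation $AE_i = \theta_i E_i$ collapses everything to $(\theta_i - \theta_j)\bigl(\theta_i^2 - \beta\theta_i\theta_j + \theta_j^2 - \gamma(\theta_i + \theta_j) - \delta\bigr)\, E_i A^* E_j = 0$. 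After fixing a suitable ordering of the eigenvalues, the recurrences recorded just after Theorem \ref{tthm1} show the quadratic factor vanishes only for consecutive indices, so $E_i A^* E_j = 0$ whenever $|i - j| \ge 2$; because each $A$-eigenspace is one-dimensional this says exactly that $A^*$ is tridiagonal in an eigenbasis of $A$, which is condition (i) of Definition \ref{dleo}, and the symmetric computation gives condition (ii).

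The hard part, and the step I would confront last, is irreducibility of these tridiagonal matrices, namely that $E_i A^* E_{i\pm 1} \ne 0$ for all $i$ (and the parallel statement for $A$), since Definition \ref{dleo} demands irreducible tridiagonal form. The natural route is to show that a vanishing off-diagonal entry would produce a nonzero proper subspace invariant under both $A$ and $A^*$, and then to exclude this, so the argument ultimately rests on the no-common-invariant-subspace condition that appears as hypothesis (iv) of the \cite{TV} converse. I expect the real delicacy to be carrying the eigenvalue analysis through uniformly in $\beta$ --- in particular across the degenerate values $\beta = \pm 2$ where the ``$q$ not a root of unity'' approach collapses --- and it is exactly here that the dimension-one conclusion of the lemma of \cite{N}, available because $\mathbb{K}$ is algebraically closed, does the decisive work. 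Once the off-diagonal entries are known to be nonzero, both conditions of Definition \ref{dleo} hold and $(A, A^*)$ is a Leonard pair, completing the converse.
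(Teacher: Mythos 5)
First, a point of reference: the paper itself gives no proof of this statement --- it appears in the background chapter (which the thesis explicitly summarizes ``without proofs'') as a result cited from \cite{N} --- so your attempt has to stand entirely on its own. The direction (i) $\Rightarrow$ (ii) via Theorem \ref{lthm2} is fine. The converse, however, has a fatal gap that no amount of work along your lines can close: condition (ii), as literally stated, supplies neither diagonalizability nor irreducibility, and both are essential for a Leonard pair. You wave at the first with the phrase ``together with the diagonalizability that the same analysis yields,'' but the Lemma of \cite{N} does not yield it, and it is simply not a consequence of the Askey--Wilson relations: take $A$ nilpotent nonzero and $A^*=0$ on a two-dimensional space over $\mathbb{C}$ (or even $A=A^*=0$); then (\ref{AW1})--(\ref{AW2}) hold with $\omega=\eta=\eta^*=\delta^*=0$ and the remaining scalars arbitrary, the field is algebraically closed, and yet $(A,A^*)$ is not a Leonard pair. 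For the second, you concede that irreducibility of the tridiagonal matrices ``ultimately rests on the no-common-invariant-subspace condition that appears as hypothesis (iv) of the \cite{TV} converse'' --- but that condition is a hypothesis of a different theorem; it is not part of statement (ii) here, and, as the counterexamples show, it cannot be derived from (ii). So the step you postponed to the end is not merely delicate, it is impossible.

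The underlying issue is that the statement (and the lemma preceding it) are imprecise transcriptions of Nomura's results, which are formulated for tridiagonal pairs: there, diagonalizability and the absence of a common invariant subspace are built into the hypotheses, and the content of \cite{N} is that such a pair satisfying the Askey--Wilson relations has all eigenspaces of dimension one, whence it is a Leonard pair by the theorem of \cite{Ter03} recorded in the paper (a tridiagonal pair is a Leonard pair if and only if all its eigenspaces have dimension one). If ``$(A,A^*)$ is a tridiagonal pair'' is added to (ii), your plan collapses to exactly that two-line argument, and the hand-built idempotent computation becomes unnecessary. Note also that, as written, your appeal to the eigenvalue recurrences ``recorded just after Theorem \ref{tthm1}'' is circular: those recurrences are established there for Leonard pairs, which is precisely what you are trying to prove you have.
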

Now we move on to important properties about the classification of the Leonard pairs.
\begin{thm}\cite{T05}
Let $V$ denote a finite dimensional vector space over a field $\mathbb{K}$. Let $\Phi=(A; E_0, E_1, \dots, E_d; A^*; E_0^*, E_1^*,\dots, E_d^*)$ be a Leonard system. There exists an automorphism $\psi$ of $End(V)$, and there exist scalars $\varphi_1, \varphi_2, \dots, \varphi_d$ in $\mathbb{K}$ such that 
\[\psi \left( A \right) = \left[ {\begin{array}{*{20}{c}}
   {{\theta _0}} & {} & {} & {} & {} & 0  \\
   1 & {{\theta _1}} & {} & {} & {} & {}  \\
   {} & {1} & {{\theta _2}} & {} & {} & {}  \\
   {} & {} &  \cdot  &  \cdot  & {} & {}  \\
   {} & {} & {} &  \cdot  &  \cdot  & {}  \\
   0 & {} & {} & {} & 1 & {{\theta _d}}  \\
\end{array}}\right], ~~~ \psi(A^*)=\left[\begin{array}{*{20}{c}}
\theta^*_0 & \varphi_1 & {} & {} & {} & 0 \\
{} & \theta_1^* & \varphi_2 & {} &{}& {}\\
{}&{}&{\theta_2^*}&\cdot&{}&{}\\
{}&{}&{}&{\cdot}&\cdot{}&{}\\
{}&{}&{}&{}&{\cdot}&{\varphi_d}\\
0&{}&{}&{}&{}&\theta_d^*\\
\end{array} \right].\]
The sequence $\psi, \varphi_1, \dots, \varphi_d$ is uniquely determined by $\Phi$. Moreover, $\varphi_i \ne 0$ for $1 \le i \le d$.
\end{thm}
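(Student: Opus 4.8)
The plan is to realize the two bidiagonal shapes through the \emph{split decomposition} of $V$ attached to the Leonard system $\Phi$, and to take for $\psi$ the automorphism of $\mathrm{End}(V)$ given by conjugation to a basis adapted to that decomposition. Write $V_i=E_iV$ and $V^*_i=E^*_iV$ for the one–dimensional eigenspaces of $A$ and $A^*$, set $F_i=V^*_0+\cdots+V^*_i$ and $G_i=V_i+\cdots+V_d$, and define
\[
U_i=F_i\cap G_i=(V^*_0+\cdots+V^*_i)\cap(V_i+\cdots+V_d),\qquad 0\le i\le d .
\]
First I would record the two ``raising/lowering'' inclusions
\[
(A-\theta_iI)U_i\subseteq U_{i+1}\ (0\le i\le d-1),\qquad (A^*-\theta^*_iI)U_i\subseteq U_{i-1}\ (1\le i\le d).
\]
Each is immediate from the eigenspace structure together with conditions (iv), (v) of Definition \ref{dleosystem}: condition (v) gives $AV^*_j\subseteq V^*_{j-1}+V^*_j+V^*_{j+1}$, hence $AF_i\subseteq F_{i+1}$, while $A$ acts as $\theta_k$ on $V_k$, so $(A-\theta_iI)$ carries $G_i$ into $G_{i+1}$; intersecting the two conclusions yields the first inclusion, and the second is symmetric using (iv).

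Next I would prove that $V=\bigoplus_{i=0}^dU_i$ with every $U_i$ one–dimensional. The subspace $\widetilde V=\sum_iU_i$ is stable under both $A$ and $A^*$ by the inclusions above, and it is nonzero since $U_0=V^*_0$; because $A,A^*$ together generate $\mathrm{End}(V)$ (the corollary preceding this theorem, which forces the only common invariant subspaces to be $0$ and $V$), we get $\sum_iU_i=V$. For the dimensions I would use the pinching $\sum_{j\le i}U_j\subseteq F_i$ and $\sum_{j>i}U_j\subseteq G_{i+1}$, whence, as $V=\sum_jU_j$,
\[
d+1=\dim V\le\dim F_i+\dim G_{i+1}=(i+1)+(d-i)=d+1 .
\]
Equality forces $V=F_i\oplus G_{i+1}$ for every $i$, and then projecting gives $\sum_{j\le i}U_j=F_i$ and $\sum_{j>i}U_j=G_{i+1}$. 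Splitting any relation $\sum_ju_j=0$ with $u_j\in U_j$ across the decomposition $V=F_i\oplus G_{i+1}$ and varying $i$ forces each $u_i=0$, so the sum is direct; comparing with $\dim F_i=i+1$ then yields $\dim U_i=1$.

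With this in hand I would build the split basis. Choose $0\ne u_0\in U_0$ and put $u_i=(A-\theta_{i-1}I)u_{i-1}$; the raising inclusion places $u_i\in U_i$, and $u_i\ne0$ since otherwise $F_{i-1}=U_0+\cdots+U_{i-1}$ would be a proper nonzero subspace invariant under both $A$ and $A^*$, contradicting irreducibility. As each $U_i$ is one–dimensional, $(u_0,\dots,u_d)$ is a basis, and by construction $Au_i=\theta_iu_i+u_{i+1}$ (with $u_{d+1}=0$) while $A^*u_i=\theta^*_iu_i+\varphi_iu_{i-1}$, where $\varphi_i$ is defined by $(A^*-\theta^*_iI)u_i=\varphi_iu_{i-1}\in U_{i-1}$; these are precisely the stated lower– and upper–bidiagonal shapes. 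Taking $\psi$ to be conjugation by the change–of–basis matrix to $(u_i)$ gives the required automorphism. That $\varphi_i\ne0$ follows by the same invariant–subspace argument: if $\varphi_i=0$ then $G_i=U_i+\cdots+U_d$ would be a proper nonzero $A,A^*$–invariant subspace.

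Finally, for uniqueness I would note that the $U_i$ are defined intrinsically from $\Phi$ and that any basis $(u'_i)$ realizing the two bidiagonal shapes must be a split basis: the lower–bidiagonal form of $A$ makes $\mathrm{span}(u'_i,\dots,u'_d)$ an $A$–invariant subspace with eigenvalues $\theta_i,\dots,\theta_d$, hence equal to $G_i$, and dually $\mathrm{span}(u'_0,\dots,u'_i)=F_i$, so $u'_i\in U_i$. Matching the subdiagonal $1$'s then forces $u'_i=cu_i$ for a single scalar $c$, which leaves both the conjugation $\psi$ and the scalars $\varphi_i$ unchanged. I expect the main obstacle to be the dimension count establishing $V=\bigoplus_iU_i$ with $\dim U_i=1$: the raising/lowering inclusions and the final matrix read–off are formal, whereas the one–dimensionality is exactly where the multiplicity–free hypotheses and the irreducibility must be combined, and it is the step on which all of the rest rests.
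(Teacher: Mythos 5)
The paper itself offers no proof of this theorem --- it is quoted from \cite{T05} --- so there is nothing internal to compare against line by line; your argument is the standard split-decomposition proof from that reference, and it uses exactly the machinery (the subspaces $U_i=(V_0^*+\cdots+V_i^*)\cap(V_i+\cdots+V_d)$, the raising/lowering inclusions, and irreducibility via the fact that $A,A^*$ generate $End(V)$) that the paper deploys in its own proof of Theorem \ref{dinhly2} for tridiagonal pairs. All of your steps check out: the dimension count forcing $V=\bigoplus_i U_i$ with $\dim U_i=1$, the construction of the split basis $u_i=(A-\theta_{i-1}I)u_{i-1}$, the nonvanishing of the $u_i$ and of the $\varphi_i$ by the invariant-subspace argument, and the identification of any basis realizing the two bidiagonal shapes with a split basis. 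The only point you leave implicit concerns the uniqueness of $\psi$ itself: to reduce an arbitrary automorphism $\psi'$ with $\psi'(A),\psi'(A^*)$ of the stated forms to a basis $(u_i')$ of $V$, you need that every automorphism of $End(V)$ is inner (Skolem--Noether), after which your argument $u_i'\in U_i$ and $c_i=c_{i+1}$ does force $\psi'=\psi$ and $\varphi_i'=\varphi_i$.
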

The Leonard system $\psi(\Phi)=(\psi(A);\psi(E_0), \psi(E_1),\dots,\psi(E_d);\psi(A^*);\psi(E_0^*), \psi(E_1^*),\dots, $ $\psi(E_d^*))$ is called the split canonical form of $\Phi$, and the sequence $\varphi_1, \varphi_2,\dots, \varphi_d$ is called the first split sequence of $\Phi$.
\begin{lem}
Let $\Phi$ and $\Phi'$ denote Leonard systems over $\mathbb{K}$. Then the following are equivalent:\\
(i) $\Phi$ and $\Phi'$ are isomorphic.\\
(ii) $\Phi$ and $\Phi'$ share the same eigenvalue sequence, dual eigenvalue sequence, and first split sequence.
\end{lem}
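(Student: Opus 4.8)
The plan is to derive everything from the split canonical form theorem stated just above, which attaches to each Leonard system $\Phi$ a canonical pair of bidiagonal matrices built from its eigenvalue sequence $(\theta_i)_{i=0}^{d}$, dual eigenvalue sequence $(\theta_i^*)_{i=0}^{d}$, and first split sequence $(\varphi_i)_{i=1}^{d}$, together with the uniqueness of that data. The two implications then reduce to bookkeeping with this canonical form.

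First I would treat (i) $\Rightarrow$ (ii). Suppose $\sigma$ is an algebra isomorphism $End(V)\to End(V')$ with $\sigma(\Phi)=\Phi'$, so that $\sigma(A)=A'$, $\sigma(E_i)=E_i'$, and likewise for the starred data. Applying $\sigma$ to the relation $AE_i=\theta_i E_i$ gives $A'E_i'=\theta_i E_i'$; since the $E_i'$ are the primitive idempotents of $A'$ in the order prescribed by $\Phi'$, the eigenvalue sequence of $\Phi'$ equals $(\theta_i)$, and the identical computation with the starred generators yields equality of the dual eigenvalue sequences. For the first split sequence, let $\psi$ be the automorphism furnished by the split canonical form theorem, so that $\psi(\Phi)$ is in split canonical form with split sequence $(\varphi_i)$. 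Then $\psi\circ\sigma^{-1}$ is an algebra isomorphism carrying $\Phi'$ to the very same canonical form; by the uniqueness clause of that theorem the first split sequence is an invariant of the system, so $\Phi'$ also has first split sequence $(\varphi_i)$.

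Next I would treat (ii) $\Rightarrow$ (i), the substantive direction. Because the two eigenvalue sequences have equal length $d+1$, we have $\dim V=\dim V'=d+1$, and after fixing a linear identification $V\cong V'$ we may regard both systems inside the common matrix algebra $M_{d+1}(\mathbb{K})$. Let $\psi$ and $\psi'$ be the automorphisms supplied by the split canonical form theorem for $\Phi$ and $\Phi'$. Since the two systems share $(\theta_i),(\theta_i^*),(\varphi_i)$, the bidiagonal matrices $B=\psi(A)=\psi'(A')$ and $B^*=\psi(A^*)=\psi'((A^*)')$ literally coincide. The remaining entries of the two split systems are then forced to agree: applying $\psi$ to $AE_i=\theta_i E_i$ shows that $\psi(E_i)$ is the primitive idempotent of $B$ for the eigenvalue $\theta_i$, and by multiplicity-freeness the $\theta_i$ are distinct, so
\[ \psi(E_i)=\prod_{\substack{0\le j\le d\\ j\ne i}}\frac{B-\theta_j I}{\theta_i-\theta_j}, \]
which depends only on $B$ and $(\theta_i)$; the analogous formula expresses $\psi(E_i^*)$ through $B^*$ and $(\theta_i^*)$, and the same holds for the primed system. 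Hence $\psi(\Phi)=\psi'(\Phi')$ as Leonard systems, so $\Phi'=(\psi')^{-1}\circ\psi(\Phi)$ and $(\psi')^{-1}\circ\psi$ is an automorphism realizing $\Phi\cong\Phi'$.

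The genuine content has been pushed into the split canonical form theorem, so I expect the main obstacle to be the verification, in the second implication, that in the canonical form the primitive idempotents are \emph{completely} determined by $B$ (respectively $B^*$) and the eigenvalue orderings, rather than merely determined up to conjugation. The displayed idempotent formula, valid precisely because multiplicity-freeness makes the $\theta_i$ distinct, is what upgrades "same invariants" from a similarity statement to the literal equality $\psi(\Phi)=\psi'(\Phi')$ needed to produce the isomorphism; checking that all starred data and the matching of diameters are handled consistently is the only place requiring care.
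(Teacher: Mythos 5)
Your proposal is correct. Note that the paper itself contains no proof of this lemma: it sits in the background chapter, which explicitly recalls Terwilliger's results without proof, so there is no internal argument to compare against. Your route --- deriving both implications from the split canonical form theorem stated immediately above the lemma, using its uniqueness clause for (i) $\Rightarrow$ (ii), and using the interpolation formula $E_i=\prod_{j\ne i}(A-\theta_j I)/(\theta_i-\theta_j)$ to upgrade equality of the canonical bidiagonal matrices to literal equality $\psi(\Phi)=\psi'(\Phi')$ for (ii) $\Rightarrow$ (i) --- is precisely the standard argument from the cited literature, and evidently the one the paper's arrangement of material intends. The point you single out at the end is indeed the crux: multiplicity-freeness makes each primitive idempotent a polynomial in the operator itself, which is what turns ``same invariants'' into an identity of split systems rather than a mere similarity, and your treatment of it (including the transport of the idempotent relations $AE_i=\theta_iE_i$, $E_iE_j=\delta_{ij}E_i$, $\sum_i E_i=I$ under the automorphisms) is complete.
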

Let $\phi_1, \phi_2, \dots, \phi_d$ denote the first split sequence of $\Phi^{\Downarrow}$. We call $\phi_1, \phi_2, \dots, \phi_d$ the second split sequence of $\Phi$.
\begin{lem}
Let $\Phi$ and $\Phi'$ denote Leonard systems over $\mathbb{K}$. Then the following are equivalent:\\
(i) $\Phi$ and $\Phi'$ are isomorphic.\\
(ii) $\Phi$ and $\Phi'$ share the same eigenvalue sequence, dual eigenvalue sequence, and second split sequence.
\end{lem}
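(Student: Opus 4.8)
The plan is to deduce this statement from the preceding criterion involving the first split sequence, by passing to the ``down-reversed'' systems $\Phi^{\Downarrow}$ and $(\Phi')^{\Downarrow}$. The guiding observation is that the second split sequence of a Leonard system was \emph{defined} to be the first split sequence of its $\Downarrow$-reversal, so the desired equivalence should follow once the behaviour of the three invariants under $\Downarrow$ is understood and once one knows that $\Downarrow$ respects isomorphism.

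First I would record that the operation $\Phi \mapsto \Phi^{\Downarrow}$ commutes with every automorphism of $End(V)$: since $\varphi$ acts componentwise on a Leonard system and $\Downarrow$ merely reverses the listing $E_0,\dots,E_d$ of the primitive idempotents of $A$ (leaving $A$, $A^*$ and the $E_i^*$ untouched), one has $\varphi(\Phi^{\Downarrow}) = (\varphi(\Phi))^{\Downarrow}$. Because reversing the ordering twice restores it, $\Downarrow$ is an involution, $(\Phi^{\Downarrow})^{\Downarrow}=\Phi$. Together these give the equivalence
\[
\Phi \cong \Phi' \quad\Longleftrightarrow\quad \Phi^{\Downarrow}\cong(\Phi')^{\Downarrow},
\]
the forward direction by applying $\Downarrow$ to an isomorphism $\varphi(\Phi)=\Phi'$, and the reverse direction by applying $\Downarrow$ again to an isomorphism $\psi(\Phi^{\Downarrow})=(\Phi')^{\Downarrow}$ and invoking the involution property.

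Next I would identify the three invariants of $\Phi^{\Downarrow}$ in terms of those of $\Phi$. Since the idempotents of $A$ are relisted as $E_d,E_{d-1},\dots,E_0$, the eigenvalue sequence of $\Phi^{\Downarrow}$ is $\theta_d,\theta_{d-1},\dots,\theta_0$, i.e.\ the reversal of that of $\Phi$; the listing of the idempotents of $A^*$ is unchanged, so the dual eigenvalue sequence of $\Phi^{\Downarrow}$ equals that of $\Phi$; and by definition the first split sequence of $\Phi^{\Downarrow}$ is exactly the second split sequence of $\Phi$. The same identifications hold verbatim for $\Phi'$.

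Finally I would apply the already-established first-split-sequence criterion to the pair $\Phi^{\Downarrow}, (\Phi')^{\Downarrow}$: these are isomorphic iff they share eigenvalue sequence, dual eigenvalue sequence, and first split sequence. Translating each of the three conditions back through the identifications of the previous step — reversals of two sequences agree iff the original sequences agree, the dual sequences are literally those of $\Phi,\Phi'$, and the first split sequences of the reversals are the second split sequences of $\Phi,\Phi'$ — turns this into the condition that $\Phi$ and $\Phi'$ share eigenvalue sequence, dual eigenvalue sequence, and second split sequence. Combined with the equivalence $\Phi\cong\Phi'\Leftrightarrow\Phi^{\Downarrow}\cong(\Phi')^{\Downarrow}$, this is precisely the asserted statement. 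I do not anticipate a genuine obstacle here; the only point demanding care is the bookkeeping in the second step, namely checking that $\Downarrow$ leaves the dual idempotent ordering fixed while reversing the primary one, so that only the eigenvalue sequence (and not the dual one) gets reversed.
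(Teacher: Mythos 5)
Your proof is correct. The three ingredients you use are all sound: (a) $\Downarrow$ acts only by reversing the listing $E_0,\dots,E_d$ while fixing $A$, $A^*$ and the dual idempotents, so it commutes with any automorphism of $End(V)$ and is an involution, giving $\Phi\cong\Phi'\Leftrightarrow\Phi^{\Downarrow}\cong(\Phi')^{\Downarrow}$; (b) consequently the eigenvalue sequence of $\Phi^{\Downarrow}$ is the reversal of that of $\Phi$, the dual eigenvalue sequence is unchanged, and the first split sequence of $\Phi^{\Downarrow}$ is, by the paper's definition, the second split sequence of $\Phi$; (c) the first-split-sequence criterion stated immediately before the lemma then transports to exactly the asserted equivalence, since two sequences agree iff their reversals do. The one delicate point you flag — that $\Downarrow$ reverses the primary idempotent ordering but not the dual one, so only the eigenvalue sequence (not the dual one) is reversed — is indeed the crux, and you handle it correctly; note also that the paper has already recorded that $\Phi^{\Downarrow}$ is again a Leonard system, so the first-split criterion legitimately applies to it.

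For comparison: the paper itself gives no proof of this lemma — it is recalled without proof from Terwilliger's work, as is the preceding first-split-sequence criterion — so there is no argument in the text to measure yours against. Your reduction via the $\Downarrow$ involution is the standard derivation in the literature and is the natural way to obtain the second-split statement from the first-split one; it supplies precisely the step the paper leaves implicit in defining the second split sequence as the first split sequence of $\Phi^{\Downarrow}$.
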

\begin{thm}\cite{T05}
\label{clas}
Let $d$ denote a nonnegative integer, let $\mathbb{K}$ denote a field, and let
\[\theta_0, \theta_1, \dots, \theta_d; ~~\theta_0^*, \theta_1^*, \dots, \theta_d^*;\]
\[\varphi_1, \varphi_2, \dots, \varphi_d;~~\phi_1, \phi_2, \dots, \phi_d\]
denote scalars in $\mathbb{K}$. Then there exists a Leonard system $\Phi$ over $\mathbb{K}$ with eigenvalues sequence $\theta_0, \theta_1, \dots, \theta_d$, dual eigenvalue sequence $\theta_0^*, \theta_1^*, \dots, \theta_d^*$, first split sequence $\varphi_1, \varphi_2, \dots, \varphi_d$, and second split sequence $\phi_1, \phi_2, \dots, \phi_d$ if and only if (i)-(v) hold below:\\
(i) $\varphi_i \ne 0,~ \phi_i \ne 0~~ (1 \le i \le d)$.\\
(ii) $\theta_i \ne \theta_j,~\theta^*_i \ne \theta^*_j,~~\text{if}~ i\ne j~~(0\le i, j \le d) $.\\
(iii) $\varphi_i = \phi_1\sum\limits_{h=0}^{i-1}{\frac{\theta_h-\theta_{d-h}}{\theta_0-\theta_d}}+(\theta_i^*-\theta_0^*)(\theta_{i-1}-\theta_d)~~ (1 \le i \le d).$\\
(iv) $\phi_i = \varphi_1\sum\limits_{h=0}^{i-1}{\frac{\theta_h-\theta_{d-h}}{\theta_0-\theta_d}}+(\theta_i^*-\theta_0^*)(\theta_{d-i+1}-\theta_0)~~ (1 \le i \le d).$\\
(v) The expressions
\[\frac{\theta_{i-2}-\theta_{i+1}}{\theta_{i-1}-\theta_{i}},~~\frac{\theta^*_{i-2}-\theta^*_{i+1}}{\theta^*_{i-1}-\theta^*_{i}}\]
are equal and independent of $i$ for $2\le i \le d-1$.\\
Moreover, if (i)-(v) hold above, then $\Phi$ is unique up to isomorphism of Leonard systems.
\end{thm}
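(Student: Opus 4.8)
The plan is to prove the two implications of the biconditional separately, with the \emph{split canonical form} recalled in the theorem immediately preceding the statement serving as the bridge between the scalar data $(\theta_i),(\theta_i^*),(\varphi_i),(\phi_i)$ and the operators $A,A^*$. For necessity, suppose a Leonard system $\Phi$ with the prescribed four sequences exists. Condition (ii) is immediate, since each of $A,A^*$ is multiplicity-free and hence has mutually distinct eigenvalues. Condition (i) for the $\varphi_i$ is exactly the final assertion of the split canonical form theorem; applying that same theorem to $\Phi^{\Downarrow}$, whose first split sequence is by definition the second split sequence $\phi_1,\dots,\phi_d$ of $\Phi$, yields $\phi_i\neq 0$ as well. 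Condition (v) is the eigenvalue recurrence recalled just after Theorem \ref{tthm1}: because $(A,A^*)$ is a Leonard pair it satisfies the tridiagonal relations (\ref{tri1}), (\ref{tri2}), equivalently the Askey--Wilson relations of Theorem \ref{lthm2}, which force $\frac{\theta_{i-2}-\theta_{i+1}}{\theta_{i-1}-\theta_i}$ and its dual to be constant and equal to the common value $\beta+1$.

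The remaining conditions (iii) and (iv) I would obtain by passing to the split basis $u_0,\dots,u_d$, in which $Au_i=\theta_i u_i+u_{i+1}$ and $A^*u_i=\theta_i^* u_i+\varphi_i u_{i-1}$ with the conventions $u_{-1}=u_{d+1}=0$. Imposing the tridiagonal relations on these bidiagonal actions, or equivalently comparing the split decomposition of $\Phi$ with that of $\Phi^{\Downarrow}$, produces the closed form (iii) expressing $\varphi_i$ through $\phi_1$ and the eigenvalues, and symmetrically (iv) expressing $\phi_i$ through $\varphi_1$.

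For sufficiency, assume (i)--(v) and build the system directly. On $V=\mathbb{K}^{d+1}$ let $A$ be lower bidiagonal with diagonal $\theta_0,\dots,\theta_d$ and all subdiagonal entries equal to $1$, and let $A^*$ be upper bidiagonal with diagonal $\theta_0^*,\dots,\theta_d^*$ and superdiagonal entries $\varphi_1,\dots,\varphi_d$. By (ii) both are multiplicity-free, so their primitive idempotents $E_i,E_i^*$ exist, and it remains to verify conditions (iv) and (v) of Definition \ref{dleosystem}, namely that $E_iA^*E_j$ and $E_i^*AE_j^*$ vanish for $|i-j|>1$ and are nonzero for $|i-j|=1$. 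The vanishing for $|i-j|>1$ is forced by the bidiagonal shape, since each operator changes the split grading by at most one. The nonvanishing for $|i-j|=1$ is where (i), (iii)--(v) are genuinely used: one must show that, in an eigenbasis of $A$, the matrix of $A^*$ is irreducible tridiagonal, the off-diagonal entries being governed by the $\varphi_i$ (nonzero by (i)) and by the eigenvalue pattern encoded in (v); one then computes the second split sequence of the resulting system and invokes (iii)--(iv) to confirm it equals the prescribed $\phi_i$. Uniqueness up to isomorphism follows at once from the lemma stating that two Leonard systems are isomorphic if and only if they share eigenvalue sequence, dual eigenvalue sequence, and first split sequence, since any two realizations of the given data agree on all three.

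I expect the main obstacle to be precisely the nonvanishing half of the banded conditions in the sufficiency direction: proving that $A^*$ is genuinely \emph{irreducible} tridiagonal in an eigenbasis of $A$, and dually. This is where the arithmetic of the recurrence (v) and the split-sequence formulas (iii)--(iv) must interact, and where a naive bidiagonal construction can silently fail to be a Leonard pair. Controlling the transition matrix between the split basis and the eigenbases of $A$ and $A^*$, and showing that its relevant entries do not vanish, is the technical heart of the argument.
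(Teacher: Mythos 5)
The paper never proves this statement: Theorem \ref{clas} sits in the survey chapter (``We summarize without proofs the relevant material\dots'') and is quoted verbatim from Terwilliger \cite{T05}, so there is no in-paper proof to compare against; your proposal has to be judged on its own merits, against what is actually needed to establish the result.

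On those merits, your outline has the right skeleton --- necessity via the split canonical form and the eigenvalue recurrence, sufficiency via an explicit bidiagonal construction, uniqueness via the lemma characterizing isomorphism by the eigenvalue, dual eigenvalue and first split sequences --- but it has genuine gaps exactly at the points that carry the mathematical content. First, in the necessity direction, conditions (iii) and (iv) are not obtained by ``imposing the tridiagonal relations on the bidiagonal actions'': deriving the closed formulas $\varphi_i = \phi_1\sum_{h=0}^{i-1}\frac{\theta_h-\theta_{d-h}}{\theta_0-\theta_d}+(\theta_i^*-\theta_0^*)(\theta_{i-1}-\theta_d)$ and its companion is the longest computation in Terwilliger's proof, requiring the machinery of the polynomials $\tau_i,\eta_i$ and both split decompositions of $\Phi$ and $\Phi^{\Downarrow}$; asserting that the comparison ``produces'' them is not a proof. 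Second, in the sufficiency direction you explicitly concede the technical heart --- showing that the constructed pair $(A,A^*)$ makes $A^*$ \emph{irreducible} tridiagonal in an eigenbasis of $A$ and dually --- so that half of the biconditional is left open. Third, even the half you call easy is not: the vanishing $E_iA^*E_j=0$ for $|i-j|>1$ in Definition \ref{dleosystem} is stated with respect to the primitive idempotents of $A$, i.e.\ the eigenprojections, not with respect to the split basis in which your matrices are bidiagonal. The split grading and the eigenspace grading of $A$ are related by a nontrivial (triangular) change of basis, so ``forced by the bidiagonal shape'' conflates two different decompositions of $V$; the banded structure in the eigenbasis must itself be proved, and this is precisely where conditions (i)--(v) enter. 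In short, the proposal is a plausible roadmap of Terwilliger's argument, but both implications are missing their decisive steps.
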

We display the `most general' solution for the parameters in Theorem \ref{clas}.\\

For $0 \le i \le d,$
\begin{eqnarray}
\label{p1}
\theta_i&=&\theta_0+\frac{h(1-q^i)(1-sq^{i+1})}{q^i},\\
\label{p2}
\theta_i^*&=&\theta_0^*+\frac{h^*(1-q^i)(1-s^*q^{i+1})}{q^i}.
\end{eqnarray}

For $1 \le i \le d,$
\begin{eqnarray}
\label{p3}
\varphi_i=hh^*q^{1-2i}(1-q^i)(1-q^{i-d-1})(1-r_1q^i)(1-r_2q^i),\\
\label{p4}
\phi_i=\frac{hh^*q^{1-2i}(1-q^i)(1-q^{i-d-1})(r_1-s^*q^i)(r_2-s^*q^i)}{s^*},
\end{eqnarray}
where $q, h, h^*, r_1, r_2, s, s^*$ are nonzero scalars in the algebraic closure of $\mathbb{K}$ such that $r_1r_2=ss^*q^{d+1}$. For this solution the common value of $(v)$ in Theorem \ref{clas} equals $q+q^{-1}+1$.
\begin{cor}
Let $\Phi$ denote a Leonard system over $\mathbb{K}$ with diameter $d \ge 3$, eigenvalue sequence $\theta_0, \theta_1, \dots, \theta_d$, dual eigenvalue sequence $\theta_0^*, \theta_1^*, \dots, \theta_d^*$, first split sequence $\varphi_1, \varphi_2, \dots, \varphi_d$, and second split sequence $\phi_1, \phi_2, \dots, \phi_d$. Consider a sequence $\mathcal{L} $ of nine parameters consisting of $\theta_0, \theta_1, \theta_2, \theta_3, \theta_0^*, \theta_1^*, \theta_2^*, \theta_3^*$ and one of the parameters $\varphi_1, \varphi_d, \phi_1, \phi_d$. Then the isomorphism class of $\Phi$ as a Leonard system over $\mathbb{K}$ is determined by $\mathcal{L}$. 
\end{cor}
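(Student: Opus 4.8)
The plan is to show that the nine parameters in $\mathcal{L}$ already determine the full eigenvalue sequence $\theta_0,\dots,\theta_d$, the full dual eigenvalue sequence $\theta_0^*,\dots,\theta_d^*$, and the first split sequence $\varphi_1,\dots,\varphi_d$, and then to invoke the Lemma asserting that a Leonard system is determined up to isomorphism by these three data.

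First I would recover the structural scalar $\beta$. Since $\Phi$ is a Leonard system of diameter $d \ge 3$, conditions (i)--(v) of Theorem \ref{clas} hold, and the eigenvalues obey the recurrences recorded after Theorem \ref{tthm1}. By condition (v) together with Theorem \ref{tthm1}(i), evaluated at $i=2$ (legitimate because $2 \le d-1$), one has $\beta + 1 = (\theta_0 - \theta_3)/(\theta_1 - \theta_2)$, with $\theta_1 \ne \theta_2$ by condition (ii); hence $\beta$ is determined by $\theta_0,\theta_1,\theta_2,\theta_3$. Then Theorem \ref{tthm1}(ii) at $i=1$ gives $\gamma = \theta_2 - \beta\theta_1 + \theta_0$ and $\gamma^* = \theta_2^* - \beta\theta_1^* + \theta_0^*$, and the three-term recurrences $\theta_{i+1} = \beta\theta_i - \theta_{i-1} + \gamma$ and $\theta_{i+1}^* = \beta\theta_i^* - \theta_{i-1}^* + \gamma^*$ propagate $\theta_0,\theta_1$ and $\theta_0^*,\theta_1^*$ into the entire sequences. (When $d=3$ no propagation is needed, since all eigenvalues already appear in $\mathcal{L}$.)

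Next I would recover the first split sequence from the single split parameter in $\mathcal{L}$, now that both eigenvalue sequences are known. Write $S_i = \sum_{h=0}^{i-1}(\theta_h - \theta_{d-h})/(\theta_0-\theta_d)$, a quantity involving only eigenvalues and hence known. Conditions (iii) and (iv) read $\varphi_i = \phi_1 S_i + (\theta_i^* - \theta_0^*)(\theta_{i-1}-\theta_d)$ and $\phi_i = \varphi_1 S_i + (\theta_i^* - \theta_0^*)(\theta_{d-i+1}-\theta_0)$. The point is that each expresses a whole split sequence affinely in a single unknown, $\phi_1$ or $\varphi_1$. If $\phi_1$ is given, condition (iii) determines all $\varphi_i$; if $\varphi_1$ is given, condition (iv) determines all $\phi_i$, in particular $\phi_1$, and then (iii) determines all $\varphi_i$. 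If instead $\varphi_d$ or $\phi_d$ is given, I would invert the relation at $i=d$: the telescoping identity $\sum_{h=0}^{d-1}(\theta_h - \theta_{d-h}) = \theta_0 - \theta_d$ shows $S_d = 1$, so (iii)/(iv) at $i=d$ become $\varphi_d = \phi_1 + (\theta_d^*-\theta_0^*)(\theta_{d-1}-\theta_d)$ and $\phi_d = \varphi_1 + (\theta_d^*-\theta_0^*)(\theta_1-\theta_0)$, which solve for $\phi_1$ or $\varphi_1$ with no division. In every case the first split sequence $\varphi_1,\dots,\varphi_d$ is determined.

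Finally, having reconstructed the eigenvalue sequence, the dual eigenvalue sequence and the first split sequence from $\mathcal{L}$ alone, I would conclude by the Lemma characterizing isomorphism of Leonard systems through these three data: any two Leonard systems sharing them are isomorphic, so the isomorphism class of $\Phi$ is fixed by $\mathcal{L}$. The only genuinely non-routine step is the third one, whose content is the observation that conditions (iii) and (iv) are affine in a single split parameter and that the relevant coefficient $S_d$ equals $1$ by telescoping; once this is in place the inversion is immediate and uses no non-vanishing hypotheses beyond $\theta_0 \ne \theta_d$ and $\theta_1 \ne \theta_2$ from condition (ii).
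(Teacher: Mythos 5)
Your proof is correct, and since the paper states this corollary without proof (it is recalled from \cite{T05} as a consequence of Theorem \ref{clas}), your argument supplies exactly the intended derivation: recover $\beta$ from the ratio $(\theta_0-\theta_3)/(\theta_1-\theta_2)$ at $i=2$ (legitimate precisely because $d\ge 3$, with $\theta_1\ne\theta_2$ from condition (ii)), propagate the three-term recurrences to rebuild both eigenvalue sequences, use the affine relations (iii)--(iv) of Theorem \ref{clas} together with the telescoping identity giving your $S_1=S_d=1$ to recover the first split sequence from whichever of $\varphi_1,\varphi_d,\phi_1,\phi_d$ is given, and conclude with the lemma stating that the eigenvalue sequence, dual eigenvalue sequence and first split sequence determine the isomorphism class. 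The only implicit assumption --- that the diameter $d$ is part of the data, so the reconstruction knows where to stop --- is harmless, since isomorphic Leonard systems necessarily have the same diameter, and every non-vanishing requirement you invoke ($\theta_1\ne\theta_2$, $\theta_0\ne\theta_d$) is guaranteed by condition (ii) of Theorem \ref{clas}.
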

\begin{cor}
Let $d$ denote a nonnegative integer, and let $A$ and $A^*$ denote linear transformations in $End(V)$ of the form
\[A  = \left[ {\begin{array}{*{20}{c}}
   {{\theta _0}} & {} & {} & {} & {} & 0  \\
   1 & {{\theta _1}} & {} & {} & {} & {}  \\
   {} & {1} & {{\theta _2}} & {} & {} & {}  \\
   {} & {} &  \cdot  &  \cdot  & {} & {}  \\
   {} & {} & {} &  \cdot  &  \cdot  & {}  \\
   0 & {} & {} & {} & 1 & {{\theta _d}}  \\
\end{array}}\right], ~~~ A^*=\left[\begin{array}{*{20}{c}}
\theta^*_0 & \varphi_1 & {} & {} & {} & 0 \\
{} & \theta_1^* & \varphi_2 & {} &{}& {}\\
{}&{}&{\theta_2^*}&\cdot&{}&{}\\
{}&{}&{}&{\cdot}&\cdot{}&{}\\
{}&{}&{}&{}&{\cdot}&{\varphi_d}\\
0&{}&{}&{}&{}&\theta_d^*\\
\end{array} \right].\]
Then the following are equivalent:\\
(i) $(A, A^*)$ is a Leonard pair on $V$.\\
(ii) There exists a sequence of scalars $\phi_1, \phi_2, \dots, \phi_d$ taken from $\mathbb{K}$ such that (i)-(v) hold in Theorem \ref{clas}.
\end{cor}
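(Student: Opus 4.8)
The plan is to reduce both implications to the classification Theorem~\ref{clas} by recognizing the displayed matrices as the split canonical form of a Leonard system, so that the four defining sequences can be read off directly. First I would record the data that the triangular shapes provide. Since $A$ is lower bidiagonal, its characteristic polynomial is $\prod_{i=0}^d (x-\theta_i)$, so its eigenvalues are exactly $\theta_0,\dots,\theta_d$; likewise the eigenvalues of the upper bidiagonal $A^*$ are $\theta_0^*,\dots,\theta_d^*$. Comparing with the split-canonical-form theorem preceding this corollary, the shape of the pair $(A,A^*)$---the subdiagonal of $A$ identically $1$ and the superdiagonal of $A^*$ equal to $\varphi_1,\dots,\varphi_d$---is precisely that of a split canonical form whose eigenvalue sequence is $\theta_0,\dots,\theta_d$, dual eigenvalue sequence is $\theta_0^*,\dots,\theta_d^*$, and first split sequence is $\varphi_1,\dots,\varphi_d$.

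For the implication (i)$\Rightarrow$(ii), I would start from a Leonard pair $(A,A^*)$ and extend it to a Leonard system $\Phi$ by ordering the primitive idempotents of $A$ (resp. $A^*$) according to $\theta_0,\dots,\theta_d$ (resp. $\theta_0^*,\dots,\theta_d^*$); the multiplicity-free property guaranteed for a Leonard pair makes these eigenvalues distinct, so the orderings are well defined. By the uniqueness clause of the split-canonical-form theorem, the first split sequence of $\Phi$ is then the superdiagonal $\varphi_1,\dots,\varphi_d$. Taking $\phi_1,\dots,\phi_d$ to be the second split sequence of $\Phi$ (i.e. the first split sequence of $\Phi^{\Downarrow}$), the forward direction of Theorem~\ref{clas}, applied to the system $\Phi$ that already exists with these four sequences, forces conditions (i)--(v) to hold for $\theta_i,\theta_i^*,\varphi_i,\phi_i$, which is exactly statement (ii).

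For the converse (ii)$\Rightarrow$(i), given $\phi_1,\dots,\phi_d$ with (i)--(v) satisfied, the existence half of Theorem~\ref{clas} produces a Leonard system $\Phi'$ on a $(d+1)$-dimensional $\mathbb{K}$-space with eigenvalue sequence $\theta_i$, dual eigenvalue sequence $\theta_i^*$, first split sequence $\varphi_i$, and second split sequence $\phi_i$. Passing to its split canonical form yields an automorphism $\psi$ of the endomorphism algebra sending the pair to bidiagonal matrices whose diagonals are $\theta_i,\theta_i^*$ and whose superdiagonal is $\varphi_i$; identifying the underlying basis with the standard one, these are literally the displayed $A,A^*$. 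Since an automorphism of $End(V)$ carries a Leonard system to a Leonard system, $(A,A^*)$ is a Leonard pair. The only real obstacle is the bookkeeping in the first paragraph: verifying that the displayed matrices genuinely realize the split canonical form attached to $\theta_i,\theta_i^*,\varphi_i$, which rests on the uniqueness of the split canonical form together with the multiplicity-free property. Once this identification is secured, both directions are direct transcriptions of Theorem~\ref{clas}.
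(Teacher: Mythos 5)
Your (ii)$\Rightarrow$(i) half is fine: Theorem \ref{clas} supplies a Leonard system with the prescribed parameter array, its split canonical form is literally the displayed pair of matrices (the superdiagonal being the first split sequence by definition), and an automorphism of $End(V)$ carries Leonard systems to Leonard systems. (The paper states this corollary without proof, as recalled background, so your argument has to stand on its own.) The problem is the forward direction.

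In (i)$\Rightarrow$(ii), the step ``extend $(A,A^*)$ to a Leonard system $\Phi$ by ordering the primitive idempotents of $A$ (resp.\ $A^*$) according to $\theta_0,\dots,\theta_d$ (resp.\ $\theta_0^*,\dots,\theta_d^*$)'' is exactly the crux, and your justification (multiplicity-freeness) does not deliver it. Multiplicity-freeness makes the orderings well defined; it does not show they satisfy conditions (iv)--(v) of Definition \ref{dleosystem}, i.e.\ that these orderings are \emph{standard}. A Leonard pair admits very few standard orderings (an ordering and its reversal), and a priori the diagonal orderings of your bidiagonal matrices need not be among them; proving that they are is the real content of this direction, not bookkeeping. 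Your proposed remedy---the uniqueness clause of the split-canonical-form theorem---is circular here: that uniqueness is attached to a Leonard system already known to have $\theta_0,\dots,\theta_d$ and $\theta_0^*,\dots,\theta_d^*$ as its eigenvalue and dual eigenvalue sequences, which is precisely what you are trying to produce. The gap can be closed as follows. Let $v_0,\dots,v_d$ be the basis in which the matrices are displayed. Since $A$ is lower bidiagonal and diagonalizable, $\mathrm{span}(v_j,\dots,v_d)$ is $A$-invariant with characteristic polynomial $\prod_{k\ge j}(x-\theta_k)$, hence equals $\sum_{k\ge j}E_kV$; since $A^*$ is upper bidiagonal it maps this space into $\sum_{k\ge j-1}E_kV$, so $E_iA^*E_j=0$ whenever $i\le j-2$. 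Now take any standard ordering $E_{\sigma(0)},\dots,E_{\sigma(d)}$ of the idempotents of $A$ (one exists because $(A,A^*)$ is a Leonard pair). Standardness gives $E_{\sigma(i)}A^*E_{\sigma(i+1)}\ne 0$ and $E_{\sigma(i+1)}A^*E_{\sigma(i)}\ne 0$, which combined with the vanishing above forces $|\sigma(i+1)-\sigma(i)|=1$ for all $i$; a permutation of $\{0,\dots,d\}$ with this property is the identity or the reversal, and the reversal of a standard ordering is standard, so $E_0,\dots,E_d$ is standard. The dual statement for $A^*$ follows the same way from $A\,\mathrm{span}(v_0,\dots,v_i)\subseteq\mathrm{span}(v_0,\dots,v_{i+1})$. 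Only with this lemma in hand do your appeals to the split canonical form and to Theorem \ref{clas} go through.
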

Suppose (i) and (ii) hold above. Then
\[(A; E_0, E_1, \dots, E_d; A^*; E_0^*, E_1^*,\dots, E_d^*)\]
is a Leonard system on $V$, where $E_i$ (resp. $E_i^*$) denotes the primitive idempotent of $A$ (resp. $A^*$) associated with $\theta_i$ (resp. $\theta_i^*$), for $0 \le i \le d$. The Leonard system has eigenvalue sequence $\theta_0, \theta_1, \dots, \theta_d$, dual eigenvalue sequence $\theta_0^*, \theta_1^*, \dots, \theta_d^*$, first split sequence $\varphi_1, \varphi_2, \dots, \varphi_d$, and sequence split sequence $\phi_1, \phi_2, \dots, \phi_d$.

Leonard pairs arise naturally in Lie algebra. Consider the remark below.
\begin{defn}
Let $\mathbb{K}$ denote an algebraically closed field with characteristic zero. The algebra $sl_2(\mathbb{K})$ is a Lie algebra which has a basis $e, f, h$ satisfying 
\[[h,e]=2e, ~ [h,f]=-2f, ~ [e,f]=h, \]
\end{defn}
where $[,]$ denotes the Lie bracket.
\begin{lem}\cite{CK}
There exists a family $V_d, ~ d = 0, 1, 2\dots$ of irreducible finite dimensional $sl_2(\mathbb{K})$-modules such that the module $V_d$ has a basis $v_0, v_1, \dots, v_d$ satisfying 
\begin{eqnarray}
hv_i &=&(d-2i)v_i ~\text{for}~ 0 \le i \le d,\\
fv_i &=& (i+1)v_{i+1} ~\text{for}~ 0 \le i \le d-1,~~ fv_d = 0,\\
ev_i &=&(d-i+1)v_{i-1}~ \text{for}~ 1 \le i \le d, ~~ ev_0 =0.
\end{eqnarray}
Every irreducible finite dimensional $sl_2(\mathbb{K})$-module is isomorphic to exactly one of the modules $V_d, d = 0, 1, 2\dots$.
\end{lem}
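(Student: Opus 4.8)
The plan is to prove the two assertions separately: first the existence of the modules $V_d$ carrying the prescribed action, and then the classification statement that every finite dimensional irreducible $sl_2(\mathbb{K})$-module is isomorphic to exactly one $V_d$.

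For existence, I would take $V_d$ to be the $(d+1)$-dimensional $\mathbb{K}$-vector space with basis $v_0, v_1, \dots, v_d$ and \emph{define} the action of $e, f, h$ by the three displayed formulas. The only thing to verify is that these operators satisfy the defining relations $[h,e]=2e$, $[h,f]=-2f$, $[e,f]=h$. This is a direct check on each basis vector: for instance $ef\,v_i = (i+1)(d-i)v_i$ and $fe\,v_i = i(d-i+1)v_i$, whose difference is $(d-2i)v_i = h\,v_i$, and the two remaining relations are handled the same way. Irreducibility then follows from the fact that, since $\mathrm{char}\,\mathbb{K}=0$, the eigenvalues $d, d-2, \dots, -d$ of $h$ are pairwise distinct; hence any nonzero submodule $W$ is $h$-stable and spanned by a subset of the $v_i$. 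Choosing any $v_i \in W$, one reaches $v_0$ by repeated application of $e$ (all relevant coefficients $d-i+1$ being nonzero) and then recovers every $v_j$ by repeated application of $f$, forcing $W = V_d$.

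For the classification, let $V$ be a finite dimensional irreducible $sl_2(\mathbb{K})$-module. Because $\mathbb{K}$ is algebraically closed, $h$ has an eigenvector, and since $[h,e]=2e$ shows that $e$ raises $h$-eigenvalues by $2$, finite-dimensionality produces a \emph{highest weight vector} $w \ne 0$ with $e\,w=0$ and $h\,w=\lambda\,w$ for some $\lambda \in \mathbb{K}$. Setting $w_i = f^i w / i!$, I would establish by induction, using $h f^i = f^i(h-2i)$ and $e f^i = f^i e + i f^{i-1}(h-i+1)$, the formulas $h\,w_i=(\lambda-2i)w_i$ and $e\,w_i=(\lambda-i+1)w_{i-1}$. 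Finite-dimensionality forces the sequence to terminate: let $d$ be maximal with $w_d\ne 0$, so $w_{d+1}=0$. The formula $e\,w_{d+1}=(\lambda-d)w_d$ then gives $(\lambda-d)w_d=0$, and since $w_d\ne 0$ we conclude $\lambda=d$, a nonnegative integer. The span of $w_0,\dots,w_d$ is a nonzero $sl_2$-submodule, hence equal to $V$ by irreducibility, and comparing its derived action with the defining formulas identifies $V$ with $V_d$. Finally, the modules $V_d$ for distinct $d$ have distinct dimensions $d+1$ and so cannot be isomorphic, which supplies the ``exactly one'' clause.

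The routine part is the verification of the bracket relations in the existence step and the inductive computation of the action on the $w_i$. The main obstacle—and the conceptual heart of the argument—is the classification step: securing the highest weight vector from finite-dimensionality together with algebraic closedness, and running the termination argument that pins the highest weight $\lambda$ to the nonnegative integer $d$. Everything else is bookkeeping within the chosen normalization $w_i = f^i w/i!$.
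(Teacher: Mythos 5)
Your proof is correct and complete: the bracket verification on basis vectors, the irreducibility argument via the distinct $h$-eigenvalues $d,d-2,\dots,-d$, and the highest-weight classification with the normalization $w_i = f^i w/i!$ all go through, with characteristic zero guaranteeing the eigenvalues $\lambda-2i$ stay distinct in $\mathbb{K}$ and algebraic closedness supplying the initial $h$-eigenvector. The paper states this lemma without proof (citing \cite{CK}), and your argument is precisely the standard highest-weight proof found in that reference, so you have simply supplied the details the paper omits.
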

\begin{rem}\cite{TD00}
Let $\mathbb{K}$ be an algebraically closed field with characteristic 0, let $A, A^*$ be semi-simple (diagonalizable) elements in the Lie algebra $sl_2(\mathbb{K})$, and let $V$ be an irreducible finite dimensional $sl_2(\mathbb{K})$-module. Assume $sl_2(\mathbb{K})$ is generated by the elements $A, A^*$, then the pair $A, A^*$ acts on $V$ as a Leonard pair.
\end{rem}
\begin{defn}
Let $\mathbb{K}$ denote an algebraically closed field. The algebra $U_q(sl_2)$ is an associative $\mathbb{K}$-algebra with unit 1 and is generated by the elements $e, f, k, k^{-1}$ that satisfy the following relations
\begin{eqnarray}
kk^{-1}=k^{-1}k =1,\\
ke=q^2ek, ~~kf=q^{-2}fk,\\
ef-fe=\frac{k-k^{-1}}{q-q^{-1}}.
\end{eqnarray} 
\end{defn}
\begin{lem}\cite[Lemma 6.2]{Ter01}
Let $d$ denote a nonnegative integer. There exists a family $V_{\epsilon, d},~ \epsilon \in \{1,-1\}$ of irreducible finite dimensional $U_q(sl_2)$-modules such that the module $V_{\epsilon, d}$ has a basis $u_0, u_1,\dots, u_d$ satisfying 
\begin{eqnarray}
ku_i=\epsilon q^{d-2i}u_i, ~0 \le i \le d, \\
fu_i=[i+1]_qu_{i+1}, ~0 \le i \le d-1, ~~fu_d =0,\\
eu_i=\epsilon[d-i+1]_qu_{i-1}, ~1 \le i \le d, ~~eu_0 =0.
\end{eqnarray}
Each irreducible finite dimensional $U_q(sl_2)$-module is isomorphic to exactly one of the modules $V_{\epsilon,d}$.
\end{lem}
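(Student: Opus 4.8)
The plan is to prove the statement in two directions: first establish existence and irreducibility of each $V_{\epsilon,d}$, and then show that every finite-dimensional irreducible $U_q(sl_2)$-module is isomorphic to exactly one such module. Throughout I assume, as elsewhere in the text, that $q$ is not a root of unity, so that $[n]_q\neq 0$ for all $n\geq 1$ and the scalars $\epsilon q^{d-2i}$ $(0\le i\le d)$ are pairwise distinct.

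For existence, I would take $V_{\epsilon,d}$ to be the $(d+1)$-dimensional space with basis $u_0,\dots,u_d$ and define the action of $e,f,k,k^{-1}$ by the displayed formulas (with the conventions $u_{-1}=u_{d+1}=0$). Since $k$ acts diagonally with nonzero eigenvalues, $k^{-1}$ is well defined and $kk^{-1}=k^{-1}k=1$, while $ke=q^2ek$ and $kf=q^{-2}fk$ are immediate on each basis vector. The only relation requiring work is $ef-fe=(k-k^{-1})/(q-q^{-1})$: a direct computation gives $efu_i=\epsilon[i+1]_q[d-i]_q\,u_i$ and $feu_i=\epsilon[d-i+1]_q[i]_q\,u_i$, so the relation reduces to the $q$-number identity
\[
[i+1]_q[d-i]_q-[i]_q[d-i+1]_q=[d-2i]_q,
\]
which follows at once from $[a]_q[b]_q=(q^a-q^{-a})(q^b-q^{-b})/(q-q^{-1})^2$. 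Irreducibility is then clear: because the eigenvalues of $k$ are pairwise distinct, any submodule is spanned by a subset of the $u_i$; and since $f$ (resp. $e$) sends $u_i$ to a \emph{nonzero} multiple of $u_{i+1}$ (resp. $u_{i-1}$), the only nonzero invariant subspace is $V_{\epsilon,d}$ itself.

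For the classification, let $V$ be a finite-dimensional irreducible module over the algebraically closed field $\mathbb{K}$. Since $\mathbb{K}$ is algebraically closed, $k$ has an eigenvector, and since $e$ shifts the $k$-eigenvalue by the factor $q^2$ while $\dim V<\infty$, repeated application of $e$ yields a highest weight vector $v$ with $ev=0$ and $kv=\lambda v$, $\lambda\neq 0$. Setting $w_j=f^jv$ one has $kw_j=\lambda q^{-2j}w_j$, and the standard commutation formula $ef^n=f^ne+[n]_q f^{n-1}(q^{-(n-1)}k-q^{n-1}k^{-1})/(q-q^{-1})$ gives, using $ev=0$,
\[
ew_j=[j]_q\,\frac{\lambda q^{-(j-1)}-\lambda^{-1}q^{j-1}}{q-q^{-1}}\,w_{j-1}.
\]
Finite-dimensionality forces a smallest $d$ with $w_{d+1}=0$ and $w_d\neq 0$; applying the formula at $j=d+1$ and using $[d+1]_q\neq 0$ yields $\lambda q^{-d}-\lambda^{-1}q^{d}=0$, hence $\lambda^2=q^{2d}$ and $\lambda=\epsilon q^d$ with $\epsilon\in\{1,-1\}$. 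By irreducibility $V$ is spanned by $w_0,\dots,w_d$; rescaling $u_i=w_i/[i]_q!$ reproduces the normalization $fu_i=[i+1]_qu_{i+1}$, and substituting $\lambda=\epsilon q^d$ into the formula above gives $eu_i=\epsilon[d-i+1]_qu_{i-1}$ together with $ku_i=\epsilon q^{d-2i}u_i$, i.e. exactly the action on $V_{\epsilon,d}$. Finally, the invariants $\dim V=d+1$ and the top $k$-eigenvalue $\epsilon q^d$ separate these modules, so $V$ is isomorphic to exactly one $V_{\epsilon,d}$.

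I expect the main obstacle to lie in the classification direction, specifically in pinning down the highest weight: the relation $\lambda=\epsilon q^d$ is what produces the sign $\epsilon$ distinguishing the two families of modules, and it is precisely here that finiteness of the dimension and the hypothesis that $q$ is not a root of unity (ensuring $[d+1]_q\neq 0$) are essential. The existence and irreducibility parts are then routine once the single $q$-number identity above has been verified.
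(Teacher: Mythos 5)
Your proof is correct. Note that the paper itself gives no proof of this lemma: it is quoted from \cite[Lemma 6.2]{Ter01} as part of the review chapter, and your argument --- verifying the defining relations via the $q$-number identity $[i+1]_q[d-i]_q-[i]_q[d-i+1]_q=[d-2i]_q$, then classifying by the standard highest-weight argument (highest weight vector, the commutation formula for $ef^n$, and the constraint $\lambda^2=q^{2d}$ forcing $\lambda=\epsilon q^d$) --- is essentially the standard proof found in the cited reference. The only point worth emphasizing is one you already flag: the hypothesis that $q$ is not a root of unity is essential both for irreducibility (pairwise distinct $k$-eigenvalues) and for deducing $\lambda q^{-d}=\lambda^{-1}q^{d}$ from $[d+1]_q\neq 0$; the statement as given fails at roots of unity, consistent with the standing assumptions on $q$ in this part of the paper.
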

\begin{thm}\cite[Example 6.3]{Ter01}
Let $d$ denote a nonnegative integer, and choose $\epsilon \in \{1, -1\}$. Let $e, f, k, k^{-1}$ denote generators of the algebra $U_q(sl_2)$. Let $\alpha, \beta$ denote scalars in $\mathbb{K}$. Define $A = \alpha f + \frac{k}{q-q^{-1}}, ~~A^* =\beta e+\frac{k^{-1}}{q-q^{-1}}$. Assume $\epsilon\alpha\beta$ is not among $q^{d-1}, q^{d-3},\dots, q^{1-d}$, then the pair $A, A^*$ acts on the irreducible finite dimensional $U_q(sl_2)$-module $V_{\epsilon, d}$ as a Leonard pair. 
\end{thm}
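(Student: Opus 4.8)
The goal is to verify the two defining conditions of Definition \ref{dleo} for $(A,A^*)$ acting on $V_{\epsilon,d}$. The plan is to compute the matrices of $A,A^*$ in the given basis $u_0,\dots,u_d$, observe that they are already bidiagonal, rescale the basis to bring the pair into the split canonical form appearing in the classification results, and then invoke the corollary to Theorem \ref{clas} that characterises when a pair presented in that (split) form is a Leonard pair. (The symmetry $\alpha\leftrightarrow\beta$ between conditions (i) and (ii) of Definition \ref{dleo} that will be visible in the formulas below simply reflects the algebra automorphism $e\mapsto f,\ f\mapsto e,\ k\mapsto k^{-1}$ of $U_q(sl_2)$, which interchanges $A$ and $A^*$.)

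First I would use the module relations to compute
\[
Au_i=\alpha[i+1]_q\,u_{i+1}+\frac{\epsilon q^{d-2i}}{q-q^{-1}}\,u_i,\qquad
A^*u_i=\epsilon\beta[d-i+1]_q\,u_{i-1}+\frac{\epsilon q^{2i-d}}{q-q^{-1}}\,u_i ,
\]
so that $A$ is lower bidiagonal and $A^*$ is upper bidiagonal. Rescaling by $v_i=\alpha^i[i]_q!\,u_i$ (equivalently, the recursion $c_{i+1}=\alpha[i+1]_q c_i$) normalises the subdiagonal entries of $A$ to $1$, leaving the pair in split canonical form with eigenvalue sequence, dual eigenvalue sequence, and first split sequence
\[
\theta_i=\frac{\epsilon q^{d-2i}}{q-q^{-1}},\qquad
\theta_i^*=\frac{\epsilon q^{2i-d}}{q-q^{-1}},\qquad
\varphi_i=\epsilon\alpha\beta\,[i]_q[d-i+1]_q .
\]

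It then remains to check conditions (i)--(v) of Theorem \ref{clas}. Since $q$ is not a root of unity the $\theta_i$ are mutually distinct and so are the $\theta_i^*$, giving (ii); a short computation shows the common ratio in (v) equals $q^2+q^{-2}+1$, independent of $i$ (consistently with the $q$-Onsager value $\beta=q^2+q^{-2}$). The second split sequence is forced by (iv): using the $q$-identity $\sum_{h=0}^{i-1}[d-2h]_q=[i]_q[d-i+1]_q$ one obtains the closed form
\[
\phi_i=[i]_q[d-i+1]_q\bigl(\epsilon\alpha\beta-q^{2i-d-1}\bigr),
\]
and the same identity shows directly that (iii) holds for every $i$.

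The main obstacle is condition (i), the non-vanishing of $\varphi_i$ and $\phi_i$ for $1\le i\le d$. Here the factor $[i]_q[d-i+1]_q$ is nonzero, so $\varphi_i\neq0$ needs $\alpha\beta\neq0$ (implicit in requiring $A,A^*$ to be genuinely tridiagonal), while $\phi_i\neq0$ is equivalent to $\epsilon\alpha\beta\neq q^{2i-d-1}$. As $i$ runs through $1,\dots,d$ the exponent $2i-d-1$ runs through $1-d,3-d,\dots,d-1$, so $\phi_i\neq0$ for all $i$ holds precisely when $\epsilon\alpha\beta$ avoids $q^{d-1},q^{d-3},\dots,q^{1-d}$ --- exactly the stated hypothesis. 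With conditions (i)--(v) established, the corollary to Theorem \ref{clas} then yields that $(A,A^*)$ is a Leonard pair on $V_{\epsilon,d}$.
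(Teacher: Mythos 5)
Your proof is correct, but there is nothing in the paper to compare it against: this theorem sits in the review chapter, which the thesis explicitly states is a summary ``without proofs,'' and the result is simply quoted from Terwilliger's Example 6.3. Judged on its own, your route --- compute the bidiagonal actions on $u_i$, rescale by $v_i=\alpha^i[i]_q!\,u_i$ to reach split canonical form, read off $\theta_i$, $\theta_i^*$, $\varphi_i$, let condition (iv) of Theorem \ref{clas} define $\phi_i$, verify (i)--(iii) and (v), and invoke the corollary to Theorem \ref{clas} --- is the standard verification and uses only machinery the paper does record. All your formulas check out: indeed $\sum_{h=0}^{i-1}[d-2h]_q=[i]_q[d-i+1]_q$ gives $\sum_{h=0}^{i-1}\frac{\theta_h-\theta_{d-h}}{\theta_0-\theta_d}=\frac{[i]_q[d-i+1]_q}{[d]_q}$, and one computes $(\theta_i^*-\theta_0^*)(\theta_{i-1}-\theta_d)=q^{1-d}[i]_q[d-i+1]_q$ and $(\theta_i^*-\theta_0^*)(\theta_{d-i+1}-\theta_0)=-q^{2i-d-1}[i]_q[d-i+1]_q$, which yield exactly your $\phi_i=[i]_q[d-i+1]_q\bigl(\epsilon\alpha\beta-q^{2i-d-1}\bigr)$ and confirm that (iii) holds with this choice.

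Two hypotheses you use are genuinely implicit, and since the thesis's restatement of the theorem omits them you should state them explicitly. First, $\alpha\beta\neq 0$: the stated hypothesis that $\epsilon\alpha\beta$ avoids $q^{d-1},q^{d-3},\dots,q^{1-d}$ does \emph{not} exclude $\alpha\beta=0$ (zero is not among those powers), yet for $d\geq 1$ and, say, $\alpha=0$ the operator $A$ is diagonal while $A^*$ is strictly upper bidiagonal, so the pair is not a Leonard pair; Terwilliger's original example assumes $\alpha,\beta$ nonzero, and your rescaling $v_i=\alpha^i[i]_q!\,u_i$ also requires it. Second, $q$ not a root of unity, which you need both for $[i]_q[d-i+1]_q\neq 0$ in condition (i) and for the distinctness of eigenvalues in condition (ii). With these two caveats made explicit, your argument is complete.
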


\subsection{Tridiagonal pairs}
A more general object called a tridiagonal pair is now considered. The concept of a tridiagonal pair is implicit in \cite[page 263]{BI}, \cite{Leo1} and more explicit in \cite[Theorem 2.1]{Terc1}. A systematic study began in \cite{TD00}. As research progressed, connections were found to representation theory \cite{AC, B1, IT, IT03, IT2, IT004, Ter03, TV}, partially ordered sets \cite{Ter01}, the bispectral problem \cite{GH1, GH2, GLZ, Zhe}. Parallel with these progress, tridiagonal pairs appeared in statistical mechanical models \cite{B1, Bas2, BK2, BK1, BK3, Bas3} and classical mechanics \cite{ZK}.\\

Let $V$ denote a vector space over a field $\mathbb{K}$ with finite
positive dimension. Let $End(V)$ denote the $\mathbb{K}-$algebra of all linear transformations from $V$ to $V$. Let $A$ denote a diagonalizable element of $End(V)$. Let $\{V_i\}_{i=0}^d$ denote an ordering of the eigenspaces of $A$ and let $\{\theta_i\}_{i=0}^d$ denote the corresponding ordering of the eigenvalues of $A$. For $0 \leq i\leq d$, define $E_i \in End(V)$ such that $(E_i-I)V_i=0$ and $E_iV_j=0$ for $j \not=i (0 \leq i \leq d)$. Here $I$ denotes the identity of $End(V)$. We call $E_i$ the \textit{primitive idempotent} of $A$ corresponding to $V_i$ (or $\theta_i$). 

\begin{lem}
The sequence $\{E_i\}_{i=0,\dots,d}$ satisfies the following properties,\\
 (i)~~ $I=\sum_{i=0}^d{E_i}$;\\
 (ii)~~$ E_iE_j=\delta_{i,j}E_i ~~(0 \leq i,j \leq d)$;\\ (iii)~~$ V_i=E_iV ~~(0
\leq i \leq d)$;\\
 (iv)~~$ A=\sum_{i=0}^d{\theta_iE_i}$;\\
 (v)~~ $E_0, E_1, \dots, E_d$ is a basic for subalgebra of $End(V)$ generated by $A$.
 \end{lem}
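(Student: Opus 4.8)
The plan is to exploit the fact that $A$ is diagonalizable to write $V$ as the direct sum of its eigenspaces, $V=\bigoplus_{i=0}^d V_i$, and then to read off all five assertions from the corresponding decomposition of the identity. First I would observe that the defining conditions $(E_i-I)V_i=0$ and $E_iV_j=0$ for $j\ne i$ say precisely that $E_i$ is the projection of $V$ onto $V_i$ along $\bigoplus_{j\ne i}V_j$; because the eigenspaces span $V$ and meet pairwise in $0$, this projection is a well-defined and uniquely determined element of $End(V)$.

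Parts (i)--(iv) then follow by evaluating both sides on an arbitrary $v=\sum_i v_i$ with $v_i\in V_i$. For (i), $\sum_i E_iv=\sum_i v_i=v$. For (ii), $E_iE_jv=E_iv_j$, which equals $v_i$ when $i=j$ and lies in $E_iV_j=0$ otherwise, so $E_iE_j=\delta_{i,j}E_i$. For (iii), the inclusion $E_iV\subseteq V_i$ is immediate from the construction, while $v\in V_i\Rightarrow E_iv=v$ gives the reverse inclusion. For (iv), $Av=\sum_i\theta_i v_i=\sum_i\theta_iE_iv$, whence $A=\sum_i\theta_iE_i$.

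The one part requiring genuine argument is (v). Here I would first note that $A^k=\sum_i\theta_i^kE_i$ by (ii) and (iv), so every polynomial in $A$ is a linear combination of the $E_i$; hence the $E_i$ span the subalgebra generated by $A$, provided they themselves lie in it. To see that they do, I would use that the $\theta_i$ are pairwise distinct (distinct eigenspaces carry distinct eigenvalues) together with the Lagrange interpolation formula $E_i=\prod_{j\ne i}(A-\theta_jI)/(\theta_i-\theta_j)$, checking via (ii)--(iv) that the right-hand side acts as the identity on $V_i$ and as zero on each $V_j$, $j\ne i$. Linear independence then follows from (ii): if $\sum_i c_iE_i=0$, multiplying by $E_j$ and using orthogonality yields $c_jE_j=0$, and since $V_j\ne 0$ forces $E_j\ne 0$ we conclude $c_j=0$.

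The main obstacle, such as it is, lies entirely in the bookkeeping for (v): confirming that the Lagrange polynomials reproduce the $E_i$ and that their number $d+1$ matches the dimension of the algebra generated by $A$, which equals the degree of the minimal polynomial of $A$, itself having the $d+1$ distinct roots $\theta_0,\dots,\theta_d$. Everything else is an immediate consequence of the eigenspace decomposition, so the proof is essentially a transcription of the standard spectral decomposition of a diagonalizable operator.
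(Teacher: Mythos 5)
Your proof is correct and takes essentially the same approach as the paper: both rest on the eigenspace decomposition $V=\bigoplus_{i=0}^d V_i$, verify (i)--(iv) by evaluating on an arbitrary $v=\sum_i v_i$, and for (v) identify the subalgebra $D$ generated by $A$ with the polynomials in $A$. If anything, your treatment of (v) is more complete than the paper's, which only asserts that $E_0,\dots,E_d$ lie in $D$ and are independent (``the proof is straightforward''), whereas you supply the Lagrange interpolation formula $E_i=\prod_{j\ne i}(A-\theta_j I)/(\theta_i-\theta_j)$ and the independence argument via the orthogonality $E_iE_j=\delta_{ij}E_i$ --- details the paper only records informally after the lemma.
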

\begin{proof}
For all $v \in V$, there exist $a_j\in \mathbb{K}, v_j \in V_j, j=0,\dots,d$ such that $v=\sum_{j=0}^d{a_jv_j} $.\\
(i) $\sum_{i=1}^d{E_i}v = \sum_{i=1}^d{E_i\sum_{j=0}^d{a_jv_j}}=\sum_{i=1}^d{E_i(a_iv_i)}=\sum_{i=0}^d{a_iv_i}=v$. \\
(ii) $E_iE_jv=E_i(E_j(\sum_{k=0}^d{a_kv_k}))=E_i(a_jv_j)=\delta_{ij} E_i(\sum_{k=0}^d{a_kv_k})=\delta_{ij}E_iv$.\\
(iv) $Av=A\sum_{j=0}^d{a_jv_j}=\sum_{j=0}^d{a_j\theta_jv_j}$;~~ $\sum_{i=0}^d{\theta_iE_i}v=\sum_{i=0}^d{\theta_iE_i}\sum_{j=0}^d{a_jv_j}=\sum_{j=0}^d{a_j\theta_jv_j}$.\\
(v) Let $D$ denote the subalgebra of $End(V)$ generated by $A$. We have $D = \{\sum\limits_{i=0}^{n}{a_iA^i}|n\in \mathbb{N},a_i \in K, 0\le i \le n \}$. Since $A$ is diagonalizable, and $A$ has $d+1$ eigenvalues $\theta_0, \theta_1,\dots,\theta_d$, then the order of the minimal polynomial of $A$ is $d+1$. Hence, $D = \{\sum\limits_{i=0}^{d}{a_iA^i}|a_i \in K, 0\le i \le d \}$. Moreover, $E_0, E_1,\dots, E_d \in D$, and they are independent. The proof is straightforward. 
\end{proof}
\begin{defn}\label{defitri}\cite[Definition 1.1]{TD00}
Let $V$ denote a vector space over a field $\mathbb{K}$ with finite
positive dimension. 
By a {\it tridiagonal pair} (or {\it $TD$ pair})
on $V$
we mean an ordered pair of linear transformations
$A:V \to V$ and 
$A^*:V \to V$ 
that satisfy the following four conditions.
\begin{itemize}
\item[(i)] Each of $A,A^*$ is diagonalizable.
\item[(ii)] There exists an ordering $\lbrace V_i\rbrace_{i=0}^d$ of the  
eigenspaces of $A$ such that 
\begin{equation}
A^* V_i \subseteq V_{i-1} + V_i+ V_{i+1} \qquad \qquad 0 \leq i \leq d,
\label{eq:t1}
\end{equation}
where $V_{-1} = 0$ and $V_{d+1}= 0$.
\item[(iii)] There exists an ordering $\lbrace V^*_i\rbrace_{i=0}^{\delta}$ of
the  
eigenspaces of $A^*$ such that 
\begin{equation}
A V^*_i \subseteq V^*_{i-1} + V^*_i+ V^*_{i+1} 
\qquad \qquad 0 \leq i \leq \delta,
\label{eq:t2}
\end{equation}
where $V^*_{-1} = 0$ and $V^*_{\delta+1}= 0$.
\item[(iv)] There does not exist a subspace $W$ of $V$ such  that $AW\subseteq W$,
$A^*W\subseteq W$, $W\not=0$, $W\not=V$.
\end{itemize}
 \label{def1}
\end{defn}
Let $(A, A^*)$ denote a TD pair on $V$, the integers $d$ and $\delta$ from (ii), (iii) are equal and called the diameter of the pair. We will prove it later. An ordering of the eigenspaces of $A$ (resp. $A^*$) is said to be \textit{standard} whenever it satisfies (\ref{eq:t1}) (resp. (\ref{eq:t2})). Let $\{V_i\}_{i=0}^d$ (resp. $\{V_j^*\}_{j=0}^{\delta}$) denote a standard ordering of the eigenspaces of $A$ (resp. $A^*$). For $0\leq i \leq d, 0 \le j \le \delta$, let $\theta_i$ (resp. $\theta^*_j$) denote the eigenvalue of $A$ (resp. $A^*$) associated with $V_i$ (resp. $V_j^*$). By \cite{TD00}, for $0 \le i \le d$ the spaces $V_i, V_i^*$ have the same dimension; we denote this common dimension by $\rho_i$. By the construction $\rho_i \ne 0, i = 1, \dots, d$, and the sequence $\rho_0, \rho_1,\dots, \rho_d$ is symmetric and unimodal; that is $\rho_i = \rho_{d-i}$ for $0 \le i \le d$ and $\rho_{i-1} \le \rho_i$ for $1 \le i \le d/2$. We refer to the sequence $(\rho_0, \rho_1, \dots, \rho_d)$ as the shape vector of $A, A^*$. In particular, the shape vector of $A, A^*$ is independent of the choice of standard orderings of the eigenspaces of $A, A^*$.\\

Now let $(A, A^*)$ denote a TD pair on V. An ordering of the primitive idempotents of $A$ (resp. $A^*$) is said to be standard whenever the corresponding ordering of the eigenspaces of $A$ (resp. $A^*$) is standard.
\begin{defn}\cite[Definition 2.1]{TD00}
Let $V$ denote a vector space over $\mathbb{K}$ with finite positive dimention. By a tridiagonal system (or TD system) on V we mean a sequence 
\[\Phi = (A;\{E_i\}_{i=0}^d; A^*;\{E_i^*\}_{i=0}^\delta) \]
that satisfies (i)--(iii) below.
\begin{itemize}
\item[(i)]
$(A,A^*)$ is a TD pair on $V$.
\item[(ii)]
$\{E_i\}_{i=0}^d$ is a standard ordering
of the primitive idempotents of $A$.

\item[(iii)]
$\{E^*_i\}_{i=0}^\delta$ is a standard ordering
of the primitive idempotents of $A^*$.
\end{itemize}
We say that $\Phi$ is {\em over} $\mathbb{K}$.
\end{defn}

Actually, $E_i$ can be written as follows,
\[{E_i} = \prod\limits_{\scriptstyle 0 \le j \le d \hfill \atop 
  \scriptstyle ~~ j \ne i \hfill} {\frac{{A - {\theta _j}I}}{{{\theta _i} - {\theta _j}}}}.\]
We have that for all $v \in V_i$,
\[E_iv=(\prod\limits_{\scriptstyle 0 \le j \le d \hfill \atop 
  \scriptstyle ~~ j \ne i \hfill} {\frac{{A - {\theta _j}I}}{{{\theta _i} - {\theta _j}}}})v=v\prod\limits_{\scriptstyle 0 \le j \le d \hfill \atop 
    \scriptstyle ~~ j \ne i \hfill} {\frac{{\theta_i - {\theta _j}}}{{{\theta _i} - {\theta _j}}}}=v.\]
 And for all $v \in V_j, j\ne i$, $E_iv=0$. Therefore, $E_i$ is the projection of $V$ on $V_i$.\\
 Our proof is complete with the observation that there exists one and only one linear transformation such that it is the projection of V onto $V_i$. \\
\begin{lem}\cite{INT}
\label{lem:triplep}
Let $(A; \lbrace E_i\rbrace_{i=0}^d; A^*; \lbrace E^*_i\rbrace_{i=0}^\delta)$
denote a TD system. Then the following hold for $0 \leq i,j,r\leq d, 0 \le h,k,r \le \delta  $.
\begin{enumerate}
\item[\rm (i)] $E^*_hA^rE^*_k=0$ if $|h-k|>r$\ .
\item[\rm (ii)] $E_iA^{*r}E_j=0$ if $|i-j|>r$\ .
\end{enumerate}
\label{lem1}
\end{lem}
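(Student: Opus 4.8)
The plan is to prove (i) by controlling how a power $A^r$ displaces the eigenspaces of $A^*$, and then to deduce (ii) by the symmetric argument with the roles of $A$ and $A^*$ interchanged. Throughout, recall that $E^*_k$ is the projection of $V$ onto the eigenspace $V^*_k = E^*_k V$ along the remaining eigenspaces, so that $E^*_h$ acts as the identity on $V^*_h$ and annihilates $V^*_\ell$ for every $\ell \ne h$.

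The central step is an induction on $r$ establishing the inclusion
\[
A^r V^*_k \subseteq \sum_{\ell:\, |\ell - k| \le r} V^*_\ell \qquad (0 \le k \le \delta).
\]
For $r = 0$ this reads $V^*_k \subseteq V^*_k$, which is trivial. For the inductive step, I would apply $A$ to the inclusion for $r$ and use condition (iii) of Definition \ref{defitri}, namely $A V^*_\ell \subseteq V^*_{\ell-1} + V^*_\ell + V^*_{\ell+1}$ (equation (\ref{eq:t2})), together with the boundary convention $V^*_{-1} = V^*_{\delta+1} = 0$. Each index $m$ produced this way satisfies $|m - \ell| \le 1$ for some $\ell$ with $|\ell - k| \le r$, whence $|m - k| \le |m - \ell| + |\ell - k| \le r + 1$ by the triangle inequality; this is exactly the claimed inclusion for $r + 1$.

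Granting this inclusion, statement (i) follows at once: for any $k$ we have $A^r E^*_k V = A^r V^*_k \subseteq \sum_{|\ell - k| \le r} V^*_\ell$, and applying $E^*_h$ with $|h - k| > r$ annihilates each summand $V^*_\ell$, since no such $\ell$ equals $h$. Hence $E^*_h A^r E^*_k = 0$ on all of $V$. For (ii), I would run the identical argument after interchanging $A \leftrightarrow A^*$, $E_i \leftrightarrow E^*_i$, $V_i \leftrightarrow V^*_i$, and $d \leftrightarrow \delta$, now invoking condition (ii) of Definition \ref{defitri} (equation (\ref{eq:t1})). I do not anticipate any genuine obstacle in this lemma; the only points requiring attention are the index bookkeeping through the triangle inequality and the boundary conventions $V^*_{-1} = V^*_{\delta+1} = 0$, which are precisely what keep the inductive step valid at the two ends of the range.
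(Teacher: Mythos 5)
Your proposal is correct and follows essentially the same route as the paper: the paper's proof is precisely your induction written as an unrolled chain of inclusions, namely $E_h^*A^rE_k^*V \subseteq E_h^*A^{r-1}(V^*_{k-1}+V^*_k+V^*_{k+1}) \subseteq \cdots \subseteq E_h^*(V^*_{k-r}+\cdots+V^*_{k+r}) = 0$ when $|h-k|>r$, using the tridiagonal action and the fact that $E^*_h$ kills every $V^*_\ell$ with $\ell \ne h$, with part (ii) handled by the same symmetric argument you describe.
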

\begin{proof}
We give only the proof of (i), for (ii) it is also proved similarly.\\
For all $a \in V$, we have 
\begin{eqnarray*}
E_h^*A^rE_k^*(a) & \in & E_h^*A^rV_k^* \\
&\subset& E_h^*A^{r-1}(V^*_{k-1}+V^*_{k}+V^*_{k+1})\\
&\subset& E_h^*(V_{k-r}^*+V_{k-r+1}^*+\dots+V_{k+r-1}^*+V_{k+r}^*)\\ & = & 0.~~ (\text{Since}~ |h-k| >r). 
\end{eqnarray*}
It follows directly that $E^*_hA^rE^*_k=0$ if $|h-k|>r$. 
\end{proof}
\begin{lem}\cite{TD00}
\label{bode1}
Let $(A, A^*)$ denote a TD pair on $V$, the integers $d$ and $\delta$ from (ii), (iii) of Definition (\ref{def1}) are equal.
\end{lem}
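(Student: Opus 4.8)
The plan is to compare the two eigenspace filtrations and read off a common diameter from an invariant decomposition of $V$, rather than to manipulate the relations directly. Since conditions (ii) and (iii) of Definition \ref{def1} are interchanged by swapping the roles of $A$ and $A^*$, it suffices to prove a single inequality, say $d \le \delta$; applying the same argument to the pair $(A^*,A)$ then yields $\delta \le d$, and hence equality. First I would record the elementary consequences of the tridiagonal inclusions: from condition (iii), $AV_i^* \subseteq V_{i-1}^*+V_i^*+V_{i+1}^*$, so the ascending flag $F_j^* = V_0^*+\cdots+V_j^*$ satisfies $AF_j^* \subseteq F_{j+1}^*$, while condition (ii) gives $A^*G_i \subseteq G_{i-1}$ for the descending flag $G_i = V_i+\cdots+V_d$. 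Lemma \ref{lem:triplep} is precisely the quantitative form of these statements and is what controls higher powers of $A$ and $A^*$ along the flags.

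The heart of the argument is a split decomposition. I would set $U_i = F_i^* \cap G_i = (V_0^*+\cdots+V_i^*)\cap(V_i+\cdots+V_d)$ and establish three facts: (a) $V = \bigoplus_i U_i$; (b) $(A-\theta_i I)U_i \subseteq U_{i+1}$; and (c) $(A^*-\theta_i^* I)U_i \subseteq U_{i-1}$. Properties (b) and (c) are short computations: writing an element of $U_i$ along $V_i \oplus G_{i+1}$ kills the $\theta_i$-part under $A-\theta_i I$ and lands the rest in $G_{i+1}$, while $AU_i \subseteq F_{i+1}^*$ by the flag inclusion, so $(A-\theta_i I)U_i \subseteq F_{i+1}^*\cap G_{i+1}=U_{i+1}$; the dual computation gives (c). Note $U_0 = V_0^* \ne 0$ and $U_i = 0$ for $i>d$ since then $G_i = 0$.

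The endgame is then a counting argument. Let $k$ be the largest index with $U_k \ne 0$. By (b) and (c), $U' = U_0 \oplus \cdots \oplus U_k$ is invariant under both $A$ and $A^*$ and is nonzero, so irreducibility (condition (iv)) forces $U' = V$; combined with (a) and $U_0 \ne 0$, this shows the nonzero layers form an unbroken chain $U_0,\dots,U_d$, i.e. there are exactly $d+1$ of them. On the other hand, relation (c) identifies $\theta_i^*$ as the scalar governing the $A^*$-action on the $i$-th layer, so a nonzero $U_i$ can occur only for an index $i$ at which an eigenvalue $\theta_i^*$ of $A^*$ exists; since $A^*$ has only the $\delta+1$ distinct eigenvalues $\theta_0^*,\dots,\theta_\delta^*$, this gives $d \le \delta$, and symmetry finishes the proof.

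The main obstacle is step (a) together with the nonvanishing of the layers: proving that the intersections $U_i$ genuinely decompose $V$ as a direct sum and that no $U_i$ collapses prematurely within the range $0 \le i \le d$. This is the one place where a substantive argument is needed rather than a formal chase of inclusions, and it is exactly where irreducibility is indispensable — without condition (iv) the two flags could be misaligned and the layer count would fail. Everything else (the flag inclusions, properties (b), (c), and the final eigenvalue count) is routine once the decomposition is in hand.
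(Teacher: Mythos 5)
Your overall strategy (diagonal split decomposition plus layer counting) can in principle be completed, but as written it has a genuine gap at exactly the point where the mathematical content lies, and your way of closing the count is not valid. You defer fact (a), the directness of $V=\bigoplus_i U_i$, calling it the main obstacle --- but (a) essentially \emph{is} the lemma. The natural proof of directness (the one this paper gives later, in the proof of Theorem \ref{dinhly2}) rests on the vanishing $\bigl(V_0^*+\cdots+V_i^*\bigr)\cap\bigl(V_{i+1}+\cdots+V_d\bigr)=0$, i.e.\ on the vanishing of the \emph{shifted} flag intersections; and that vanishing is precisely what the paper's proof of this lemma establishes. The paper fixes an offset $r>0$, forms the staircase $W=V_{0,r}+V_{1,r+1}+\cdots+V_{d-r,d}$ with $V_{i,j}=(\sum_{h\le i}E_h^*V)\cap(\sum_{k\ge j}E_kV)$, shows $AW\subseteq W$ and $A^*W\subseteq W$ by the same raising/lowering computations you use for (b) and (c), notes $W\subseteq\sum_{h\le d-r}E_h^*V\subsetneq V$, and concludes $W=0$ by irreducibility; taking $r=d-\delta$ then contradicts $V_{\delta,d}=E_dV\neq 0$. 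So the paper never needs spanning or directness of the diagonal decomposition, whereas your route needs directness and offers no proof of it that does not already contain the lemma --- deferring (a) is therefore either circular or leaves the crux unproven.

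Second, the endgame. Your claim that a nonzero $U_i$ can occur only at an index $i$ for which an eigenvalue $\theta_i^*$ exists does not follow from (c): for $i>\delta$ relation (c) cannot even be stated (there is no $\theta_i^*$), and the absence of an eigenvalue does not by itself force $U_i=0$. On the contrary, if $\delta<d$ then $F_i^*=V$ for all $i>\delta$, so $U_i=G_i$; in particular $U_d=V_d\neq 0$, a nonzero layer at an index with no eigenvalue of $A^*$. What actually rules out $\delta<d$ in your framework is directness: in that case $U_{d-1}=G_{d-1}\supseteq V_d$ and $U_d=V_d$, so $U_{d-1}\cap U_d\neq 0$, contradicting (a). So even granting (a), your stated reason for $d\le\delta$ must be replaced by a directness argument; and without a proof of (a), the argument does not get off the ground.
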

\begin{proof}
Assume without loss of generality that $\delta \le d$. Set $V_{i,j}= (\sum_{h=0}^i{E_h^*V})\cap (\sum_{k=j}^d{E_kV})$ for all integers $i,j$, with convention that $\sum_{h=0}^i{E_h^*V}=\left\{\begin{array}{l}
0 ~~\text{if }~ i <0 \\
V ~\text{if }~ i > \delta \\
\end{array}\right.$  
and $\sum_{k=j}^d{E_kV}=\left\{\begin{array}{l}
V ~\text{if }~ j <0 \\
0 ~~\text{if }~ j > d \\
\end{array}\right.$.\\

Since $E_1^*+\dots+E_{\delta}^* = I$, then $AE_h^*V=(E_1^*+\dots+E_{\delta}^*)AE_h^*V$. By (i) of Lemma \ref{lem1}, $AE_h^*V \subset E_{h-1}^*V+E_{h}^*V+E_{h+1}^*V$. 
Consequently, $(A-\theta_j I)\sum_{h=0}^i{E_h^*V}\subset \sum_{h=0}^{i+1}{E_h^*V}$.\\
Moreover, 
\begin{eqnarray*}
(A-\theta_j I)\sum_{k=j}^d{E_kV}&=&\sum_{k=j}^d{(AE_kV-\theta_jE_kV)=\sum_{k=j}^d{(\theta_kE_kV-\theta_jE_kV)}}\\ &=& \sum_{k=j+1}^d{(\theta_k-\theta_j)E_kV}=\sum_{k=j+1}^d{E_kV}.
\end{eqnarray*}
It follows that $(A-\theta_j I)V_{i,j}\subset \left( \sum_{h=0}^{i+1}{E^*_hV}\right)\cap \left( \sum_{k=j+1}^d{E_kV}\right)=V_{i+1,j+1} $.
From that, we obtain $AV_{i,j} \subset V_{i,j} + V_{i+1,j+1}$.\\

Similarly, we get $A^*V_{i,j}\subset V_{i,j}+V_{i-1,j-1}$.\\

Put $W=V_{0,r}+V_{1,r+1}+\dots+V_{d-r,d}$ for $0 < r \le d$. By the above,
\begin{eqnarray*}
AW &=& A(V_{0,r}+\dots+V_{d-r,d})\subset V_{0,r}+V_{1,r+1}+\dots+V_{d-r,d}+V_{d-r+1,d+1}=W, \\
A^*W &=& A^*(V_{0,r}+\dots+V_{d-r,d})\subset V_{-1,r-1}+V_{0,r}+\dots+V_{d-r-1,d-1}+V_{d-r,d}=W.
\end{eqnarray*}
Use the condition (iv) of Definition \ref{def1}, $W=0$ or $W=V$.\\

Clearly, $V_{0,r}, \dots, V_{d-r,d}$ are subsets of $ \sum_{h=0}^{d-r}E^*_hV$, then $W \subset \sum_{h=0}^{d-r}E^*_hV$. 
Therefore $W \ne V$ ($\sum_{h=0}^{d-r}E^*_hV$ is the proper subset of $V$).\\
Suppose that $\delta <d$, put $ 0<r = d-\delta \le d$, we have
$W=V_{0,r}+V_{1,r+1}+\dots+V_{\delta,d} = 0$.
However, $V_{\delta,d}=E_dV \le 0$. This is a contradiction. Thus, we conclude that $\delta = d$.
\end{proof}
\begin{thm}\cite[Theorem 10.1]{TD00}
\label{dinhly2}
Let $V$ denote a vector space over $\mathbb{K}$ with finite positive dimension, and let $A, A^*$ denote a TD pair on $\mathbb{K}$. There exist scalars $\beta, \gamma, \gamma^*, \delta, \delta^*$ in $\mathbb{K}$ such that the tridiagonal relations are satisfied
\begin{eqnarray}
\lbrack A,A^2A^*-\beta AA^*A + 
A^*A^2 -\gamma (AA^*+A^*A)-\delta A^*\rbrack &=& 0\ , \label{eq:dolan1}
\\
\lbrack A^*,A^{*2}A-\beta A^*AA^* + AA^{*2} -\gamma^* (A^*A+AA^*)-
\delta^* A\rbrack &=&0\ . \label{eq:dolan2} \qquad  \quad 
\end{eqnarray}
Moreover, these scalars are unique if the diameter of the pair is at least 3. 
\end{thm}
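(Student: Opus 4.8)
The plan is to establish the first relation (\ref{eq:dolan1}); the second relation (\ref{eq:dolan2}) then follows verbatim after interchanging the roles of $A$ and $A^*$ and replacing the eigenvalue sequence $\{\theta_i\}$ by the dual sequence $\{\theta_i^*\}$. Accordingly, set
\[ P = A^2A^* - \beta AA^*A + A^*A^2 - \gamma(AA^*+A^*A) - \delta A^*, \]
so that the task is to produce scalars $\beta,\gamma,\delta$ with $[A,P]=0$.

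First I would resolve $P$ through the primitive idempotents $E_0,\dots,E_d$ of $A$. Since $A=\sum_i\theta_iE_i$ and $E_iA^*E_j=0$ whenever $|i-j|>1$ by Lemma \ref{lem:triplep}(ii) (applied with $r=1$), repeated use of $E_iA=\theta_iE_i$ collapses every monomial in $P$ to a scalar multiple of $E_iA^*E_j$, giving
\[ E_iPE_j = p_{i,j}\,E_iA^*E_j, \qquad p_{i,j}=\theta_i^2-\beta\theta_i\theta_j+\theta_j^2-\gamma(\theta_i+\theta_j)-\delta . \]
Consequently $E_i[A,P]E_j=(\theta_i-\theta_j)\,p_{i,j}\,E_iA^*E_j$. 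The diagonal blocks $i=j$ vanish identically, and since $p_{i,j}$ is symmetric in its indices the vanishing of $[A,P]$ is equivalent to the single family of scalar equations $p_{i,i+1}=0$ for $0\le i\le d-1$.

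It remains to solve these $d$ equations in the three unknowns $\beta,\gamma,\delta$. Subtracting the equation indexed $i-1$ from the one indexed $i$ and cancelling the nonzero factor $\theta_{i+1}-\theta_{i-1}$ (the eigenvalues of $A$ being pairwise distinct) converts $p_{i,i+1}=0$ into the three-term recurrence $\theta_{i-1}-\beta\theta_i+\theta_{i+1}=\gamma$, independent of $i$, for $1\le i\le d-1$; once such $\beta,\gamma$ are known, $\delta$ is read off from any single equation $p_{i,i+1}=0$. Thus the existence of the required scalars is equivalent to the statement that $(\theta_{i-2}-\theta_{i+1})/(\theta_{i-1}-\theta_i)$ is independent of $i$ --- a genuine structural constraint on the spectrum, not a formal identity, and the point at which the full force of the tridiagonal-pair axioms must be used.

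I expect this spectral recurrence to be the main obstacle. The plan is to pass to the split decomposition into the subspaces $U_i=(\sum_{h=0}^iE_h^*V)\cap(\sum_{k=i}^dE_kV)$, i.e.\ the diagonal spaces $V_{i,i}$ already met in the proof of Lemma \ref{bode1}, for which $V=\bigoplus_{i=0}^d U_i$ together with the shift inclusions $(A-\theta_iI)U_i\subseteq U_{i+1}$ and $(A^*-\theta_i^*I)U_i\subseteq U_{i-1}$. Composing these raising and lowering maps and invoking the irreducibility axiom (iv) of Definition \ref{defitri} forces a single scalar $\beta$ to govern simultaneously the recurrences for $\{\theta_i\}$ and for $\{\theta_i^*\}$; this common $\beta$ is exactly what allows the same parameter to appear in both (\ref{eq:dolan1}) and (\ref{eq:dolan2}), and proving it is the technical heart. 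Finally, for the uniqueness claim when the diameter is at least $3$, the recurrence is available at $i=1$ and $i=2$, and subtracting these two instances forces $\beta=((\theta_0-\theta_1)+(\theta_2-\theta_3))/(\theta_1-\theta_2)$; then $\gamma$ and $\delta$ are determined in turn, whereas for diameter below $3$ there are too few equations to pin the scalars down.
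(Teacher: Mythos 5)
Your reduction of (\ref{eq:dolan1}) is correct as far as it goes: resolving against the primitive idempotents of $A$ does show that the relation holds for given $\beta,\gamma,\delta$ exactly when $p_{i,i+1}=0$ for $0\le i\le d-1$, and hence that existence of the scalars is equivalent to the independence from $i$ of the ratio $(\theta_{i-2}-\theta_{i+1})/(\theta_{i-1}-\theta_i)$. The gap is that this spectral fact \emph{is} the entire content of the theorem, and you never prove it: you only announce a plan (``compose the raising and lowering maps of the split decomposition and invoke irreducibility''). In the paper that recurrence is Corollary \ref{ch}, and it is deduced \emph{from} Theorem \ref{dinhly2} rather than used to prove it; there is no indication that it can be established directly from the split decomposition without essentially redoing the whole argument. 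The same remark applies to the common-$\beta$ issue you yourself flag: interchanging $A$ and $A^*$ ``verbatim'' a priori produces a second, unrelated parameter $\beta^*$, and the identification $\beta=\beta^*$ is again deferred to the unproven step. (A smaller point: your uniqueness argument, and the ``equivalence'' in your block computation, silently use $E_iA^*E_{i+1}\neq 0$ for a TD pair, which for tridiagonal \emph{pairs}, unlike Leonard pairs, also requires justification from irreducibility.)

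The paper's proof avoids this circularity by going in the opposite logical direction. Working in the subalgebra $D$ generated by $A$ (spanned by $E_0,\dots,E_d$), it first proves the span identity $\text{Span}\{XA^*Y-YA^*X \mid X,Y\in D\}=\{XA^*-A^*X \mid X\in D\}$, a purely structural consequence of the block-tridiagonal shape of $A^*$ in the eigenbasis of $A$; this yields unique scalars $a_i$ with $A^2A^*A-AA^*A^2=\sum_{i=0}^{d}a_i(A^iA^*-A^*A^i)$. Then, multiplying by the projections $F_j$ of the split decomposition and using $R^tF_0\neq 0$, it shows $a_i=0$ for $i\geq 4$ and $a_3\neq 0$, which produces $\beta,\gamma,\delta$ and relation (\ref{eq:dolan1}) with no prior knowledge of the spectrum. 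The dual eigenvalue recurrence, with the \emph{same} $\beta$, is then extracted from the relation just proved (by sandwiching it between $F_{i-2}$ and $F_{i+1}$), and (\ref{eq:dolan2}) follows via Lemma 9.3 of \cite{TD00}. So your proposal stalls precisely where the real work begins, and the sketch you give does not fill that hole.
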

\begin{proof}
Let $\Phi = (A;\{E_i\}_{i=0}^d; A^*;\{E_i^*\}_{i=0}^d) $ denote a TD system associated with $(A, A^*)$. Let $\theta_0,\theta_1,\dots, \theta_d$ (resp. $\theta_0^*,\theta_1^*,\dots, \theta_d^*$) denote the eigenvalue sequence (resp. the dual eigenvalue sequence) of $\Phi$.\\

Put $U_i=\sum\limits_{h=0}^i{E_h^*V}\cap\sum\limits_{k=i}^d{E_kV}$, for all $0\le i\le d$, and $W={U_0+U_1+\dots+U_d}$. We first prove that $V$ is the direct sum of $U_0,U_1,\dots,U_d$. Similarly, in the proof of Lemma \ref{bode1}, $AW \subset A, A^*W \subset W$, thus $W=0$, or $W = V$. Moreover, $0 \ne U_0=E_0^*V\subset W$. Clearly, $V = W$. On the other hand, for all $0 \le i \le d-1$, $\sum\limits_{l=0}^i{U_l} \subset \sum\limits_{h=0}^i{E_h^*V}, U_{i+1}\subset \sum\limits_{i+1}^d{E_kV}$. By the proof of Lemma \ref{bode1}, $\sum\limits_{l=0}^i{U_l}\cap U_{i+1} \subset \sum\limits_{h=0}^i{E_h^*V}\cap \sum\limits_{k=i+1}^d{E_kV} = \emptyset$. This gives that $V$ is the direct sum of $U_0, U_1, \dots, U_d$.\\

Let $F_i$ denote the projection map of $V$ onto $U_i$, for all $0 \le i \le d$. Put $R= A-\sum\limits_{i=0}^d{\theta_iF_i}, L= A^*-\sum\limits_{i=0}^d{\theta_i^*F_i}$.\\

Write $D = \{\sum\limits_{k=0}^n{a_kA^k}|n \in \mathbb{N}, a_k \in \mathbb K\}$. Since $A$ is diagonalizable, and $A$ has $d+1$ eigenvalues $\theta_0, \theta_1,\dots,\theta_d$, then the minimal polynomial of $A$ is $\prod\limits_{i=0}^d{(A-\theta_i{I})}$. Hence, we can rewrite $D= \{\sum\limits_{k=0}^d{a_kA^k}|a_k\in \mathbb K \}$.\\
For all $0\le i \le d$, by Lemma \ref{lem1}, $E_iA^*=E_iA^*(\sum\limits_{j=0}^d{E_j})=E_iA^*E_{i-1}+E_iA^*E_i+E_iA^*E_{i+1}, \text{with } E_{-1} = E_{d+1}=0 $. Similarly, $A^*E_i=E_{i-1}A^*E_i+E_iA^*E_i+E_{i+1}A^*E_i$. Put $L_l=\sum\limits_{i=0}^l{E_i}$, for all $0 \le l \le d,$  \[L_lA^*-A^*L_l=\sum\limits_{i=0}^l{E_i}A^*-A^*\sum\limits_{i=0}^l{E_i}=E_lA^*E_{l+1}-E_{l+1}A^*E_l.\]
In fact, $D$ is spanned by $E_0, E_1, \dots, E_d$ (the property (v) of $\{E_{i}, i=0,\dots, d\}$), then $L_0, L_1, \dots, L_d$ 
is also a basic of $D$.
\begin{eqnarray*}
\text{Span}\{XA^*Y-YA^*X|X,Y\in D \}&=&\text{Span}\{E_iA^*E_j-E_jA^*E_i|0 \le i,j \le d \}\\ &=& \text{Span}\{E_iA^*E_{i+1}-E_{i+1}A^*E_i|0 \le i \le d \}\\ &=& \text{Span}\{L_iA^*-A^*L_i|0 \le i \le d \}\\ &=& \{ XA^*-A^*X|X \in D\}.
\end{eqnarray*} 
It follows immediately that there exist uniquely scalars $a_i \in \mathbb K ~(i =0,1,\dots,d)$ such that 
\begin{equation}
\label{pt10}
A^2A^*A-AA^*A^2=\sum\limits_{i=0}^d{a_i(A^iA^*-A^*A^i)}
\end{equation}
Assume $d \ge 3$, put $t= \text{max}\{i \in \{0, 1, \dots, d \}|a_i \ne 0 \}$. Suppose that $t \ge 4$, multiply two sides of the equation (\ref{pt10}) on the left by $F_t$, and on the right by $F_0$. By Lemma 7.3 of \cite{TD00}, we have 
\[0=a_t(F_tA^tA^*F_0-F_tA^*A^tF_0)=a_t(\theta_0^*-\theta_t^*)R^tF_0\] 
By Corollary 6.7 of \cite{TD00}, $R^tF_0 \ne 0$, then $a_t(\theta_0^*-\theta_t^*)=0$, it is a contradiction. Hence $t \le 3$, the equation (\ref{pt10}) can be rewritten 
\begin{eqnarray}
\label{pt11}
A^2A^*A-AA^*A^2 = a_3(A^3A^*-A^*A^3)+a_2(A^2A^*-A^*A^2)+a_1(AA^*-A^*A).~~
\end{eqnarray}
Suppose $a_3 =0$, multiply the two sides of (\ref{pt11}) on the left by $F_3$ and on the right by $F_0$, we get $(\theta_1^*-\theta_2^*)R^3F_0=0$. This is again a contradiction. So $a_3 \ne 0$. We can put $\beta=\frac{1}{a_3}-1, \gamma=-\frac{a_2}{a_3}, \delta=-\frac{a_1}{a_3}$. The equation (\ref{pt11}) becomes
\[(A^3A^*-A^*A^3)-(\beta+1)(A^2A^*A-AA^*A^2)-\gamma(A^2A^*-A^*A^2)-\delta(AA^*-A^*A)=0\]
or $[A,A^2A^*-\beta AA^*A+A^*A^2-\gamma(AA^*+A^*A)-\delta A^*]=0$. Multiply each term in (\ref{eq:dolan1}) on the left by $F_{i-2}$ and on the right by $F_{i+1}$, for all $2 \le i \le d-1$, we find $(\theta_{i-2}^*-\theta_{i+1}^*-(\beta +1)(\theta_{i-1}^*-\theta_i^*))R^3F_{i-2}=0$. Consequently, $\theta_{i-2}^*-\theta_{i+1}^*-(\beta +1)(\theta_{i-1}^*-\theta_i^*)=0$. It follows $\theta_{i-2}^*-\beta \theta_{i-1}^*+\theta_i^*=\theta_{i-1}^*-\beta\theta_i^*+\theta_{i+1}$, for all $2 \le i \le d$. So the expression $\theta_{i-1}^*-\beta\theta_{i}^*+\theta_{i+1}^*$ does not depend on $i$. Let us put $\gamma* = \theta_{i-1}^*-\beta\theta_{i}^*+\theta_{i+1}^*$. Consider $(\theta_{i-1}-\theta_{i+1})(\theta_{i-1}-\beta\theta_i+\theta_{i=1}-\gamma^*)=0$, then ${\theta^*_{i-1}}^2-\beta\theta^*_{i-1}\theta_{i}^*+{\theta_i^*}^2-\gamma^*(\theta_{i-1}^*+\theta_i^*)={\theta^*_{i}}^2-\beta\theta^*_{i}\theta_{i+1}^*+{\theta_{i+1}^*}^2-\gamma^*(\theta_{i}^*+\theta_{i+1}^*)$. We can also put $\delta^*={\theta^*_{i}}^2-\beta\theta^*_{i}\theta_{i+1}^*+{\theta_{i+1}^*}^2-\gamma^*(\theta_{i}^*+\theta_{i+1}^*)$ because ${\theta^*_{i}}^2-\beta\theta^*_{i}\theta_{i+1}^*+{\theta_{i+1}^*}^2-\gamma^*(\theta_{i}^*+\theta_{i+1}^*)$ is independent of $i$. According to Lemma 9.3 of \cite{TD00}, we get (\ref{eq:dolan2}). Since $\beta, \gamma, \theta$ are determined uniquely, and $\gamma^*, \delta^*$ satisfy (\ref{eq:dolan2}) then $\gamma^*, \delta^*$ are also determined uniquely. We already finish proving for $d$ at least equals 3, then $\beta,\gamma,\delta $ are determined uniquely such that (\ref{eq:dolan1}), (\ref{eq:dolan2}). 

If $d =2$, let $\beta$ be any scalar in $\mathbb K$, put $\gamma= \theta_0-\beta \theta_1+\theta_2$, and $\delta = \theta_0^2-\beta\theta_0\theta_1+\theta_1^2-\gamma(\theta_{0}+\theta_1)$. If $d=1$, let $\beta, \gamma$ be any scalars in $\mathbb K$, put $\delta = \theta_0^2-\beta\theta_0\theta_1+\theta_1^2-\gamma(\theta_{0}+\theta_1)$. If $d=0$, let $\beta, \gamma, \delta$ be any scalars in $\mathbb K$. By Lemma 9.3 of \cite{TD00}, we get (\ref{eq:dolan1}). Similarly, there exist scalars $\gamma^*, \delta^*$ for (\ref{eq:dolan2}).
\end{proof}
\begin{cor}
\cite{Ter03}\label{ch}
Let $(A, A^*)$ be a TD pair on $V$, with eigenvalue sequence  $\theta_0, \theta_1, \dots, \theta_d$ and dual eigenvalue sequence $\theta_0^*, \theta_1^*,\dots ,\theta_d^*$. Then the expressions  
\begin{eqnarray}
\frac{\theta_{i-2}-\theta_{i+1}}{\theta_{i-1}-\theta_i}; ~~~~~\frac{\theta^*_{i-2}-\theta^*_{i+1}}{\theta^*_{i-1}-\theta^*_i}
\end{eqnarray}
are equal and independent of $i$ for $2\le i \le d-1$
\end{cor}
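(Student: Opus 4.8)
The plan is to extract the statement directly from the proof of Theorem~\ref{dinhly2}, where the required recurrence is produced as an intermediate step, and then to invoke the $A\leftrightarrow A^*$ symmetry of a TD pair to treat the two quotients symmetrically. Note first that the range $2\le i\le d-1$ is nonempty only when $d\ge 3$; in that case the uniqueness clause of Theorem~\ref{dinhly2} guarantees a \emph{single} scalar $\beta$ occurring in both tridiagonal relations (\ref{eq:dolan1}) and (\ref{eq:dolan2}), and this common $\beta$ is exactly what forces the two expressions to agree. (For $d\le 2$ the statement is vacuous.)

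First I would recall the key projection argument from the proof of Theorem~\ref{dinhly2}. With $F_j$ the projection onto the split component $U_j$ and $R=A-\sum_j\theta_j F_j$ the associated raising map, multiplying relation (\ref{eq:dolan1}) on the left by $F_{i-2}$ and on the right by $F_{i+1}$ collapses, for $2\le i\le d-1$, to
\[
\bigl(\theta^*_{i-2}-\theta^*_{i+1}-(\beta+1)(\theta^*_{i-1}-\theta^*_i)\bigr)\,R^3F_{i-2}=0 .
\]
Since $R^3F_{i-2}\ne 0$ (Corollary~6.7 of \cite{TD00}), the scalar prefactor must vanish. The eigenvalues $\theta^*_0,\dots,\theta^*_d$ are pairwise distinct, so $\theta^*_{i-1}-\theta^*_i\ne 0$ and we may divide to obtain
\[
\frac{\theta^*_{i-2}-\theta^*_{i+1}}{\theta^*_{i-1}-\theta^*_i}=\beta+1,\qquad 2\le i\le d-1 .
\]
This already settles the second expression and shows it equals the constant $\beta+1$.

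For the first expression I would use symmetry: by Definition~\ref{defitri} the pair $(A^*,A)$ is again a TD pair, with eigenvalue sequence $\theta^*$ and dual eigenvalue sequence $\theta$, and its relation of type (\ref{eq:dolan1}) is precisely relation (\ref{eq:dolan2}) for $(A,A^*)$, carrying the same scalar $\beta$. Repeating the argument verbatim for $(A^*,A)$, with its own split decomposition and raising map, yields
\[
\frac{\theta_{i-2}-\theta_{i+1}}{\theta_{i-1}-\theta_i}=\beta+1,\qquad 2\le i\le d-1 .
\]
Both quotients therefore equal $\beta+1$, so they are equal to each other and independent of $i$, as claimed. The only genuinely non-formal ingredient — and hence the main obstacle — is the nonvanishing $R^3F_{i-2}\ne 0$: this is where irreducibility (condition (iv) of the TD pair) and the structure theory of \cite{TD00} enter, letting one pass from the operator identity to the numerical recurrence. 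Everything else is bookkeeping, together with the observation that the common $\beta$ is guaranteed by the uniqueness in Theorem~\ref{dinhly2}.
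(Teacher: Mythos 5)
Your proof is correct and follows essentially the same route as the paper: the paper's proof of Corollary \ref{ch} simply reads off the two recurrences $\theta_{i-2}-\theta_{i+1}=(\beta+1)(\theta_{i-1}-\theta_{i})$ and $\theta^*_{i-2}-\theta^*_{i+1}=(\beta+1)(\theta^*_{i-1}-\theta^*_{i})$ from the intermediate equations in the proof of Theorem \ref{dinhly2}, which is exactly what you do. The only difference is presentational: you spell out the $F_{i-2}(\cdot)F_{i+1}$ projection step, the nonvanishing $R^3F_{i-2}\neq 0$, and the $(A,A^*)\leftrightarrow(A^*,A)$ symmetry giving the undeformed recurrence with the same $\beta$, all of which the paper leaves implicit.
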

\begin{proof}
This lemma is straightforward from the equations $\theta_{i-2}-\theta_{i+1}-(\beta+1)(\theta_{i-1}-\theta_i)=0$, and $\theta^*_{i-2}-\theta^*_{i+1}-(\beta+1)(\theta^*_{i-1}-\theta^*_i)=0$ in the proof of the above theorem. 
\end{proof}
\begin{cor}
\label{hequa1}
Let $(A, A^*)$ be a TD pair on $V$, with eigenvalue sequence  $\theta_0, \theta_1, \dots, \theta_d$ and dual eigenvalue sequence $\theta_0^*, \theta_1^*,\dots ,\theta_d^*$. For all $~0 \le i \le d$, there exist scalars $\alpha_1, \alpha_2, \alpha_3, \alpha_1^*, \alpha^*_2, \alpha^*_3$ in $\mathbb K$ and $q$ in the algebraic closure of $\mathbb K$, $q \ne 0, q \ne \pm 1$ such that
\begin{eqnarray}
\theta_i&=&\alpha_1 +\alpha_2q^{i}+\alpha_3q^{-i}, \\
\theta^*_i&=&\alpha^*_1 +\alpha^*_2q^{i}+\alpha^*_3q^{-i}. 
\end{eqnarray}
\end{cor}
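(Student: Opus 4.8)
The plan is to read the identity of Corollary~\ref{ch} as a constant-coefficient linear recurrence in the index $i$ and to solve it in closed form. Denote by $\beta+1$ the common value of the two ratios furnished by Corollary~\ref{ch}, so that $\theta_{i-2}-\theta_{i+1}=(\beta+1)(\theta_{i-1}-\theta_i)$ for $2\le i\le d-1$; this is the same $\beta$ as in Theorem~\ref{dinhly2}. Rearranging gives the third-order recurrence
\[
\theta_{i+1}=(\beta+1)\theta_i-(\beta+1)\theta_{i-1}+\theta_{i-2},\qquad 2\le i\le d-1,
\]
and, because Corollary~\ref{ch} asserts that the starred and unstarred ratios \emph{coincide}, the sequence $\theta_i^*$ obeys the identical recurrence with the same $\beta$. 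This is precisely why a single $q$ can serve both sequences, and it reduces the problem to solving one recurrence.

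Next I would introduce $q$ through the characteristic polynomial. Choosing $q$ in the algebraic closure of $\mathbb{K}$ with $q+q^{-1}=\beta$ (such $q$ exists and is nonzero, since the roots of $x^2-\beta x+1$ have product $1$), one verifies the factorization
\[
x^3-(\beta+1)x^2+(\beta+1)x-1=(x-1)(x^2-\beta x+1)=(x-1)(x-q)(x-q^{-1}),
\]
so the characteristic roots are $1$, $q$, $q^{-1}$. When these three are distinct the general solution is $\theta_i=\alpha_1+\alpha_2 q^i+\alpha_3 q^{-i}$, the constants being recovered from $\theta_0,\theta_1,\theta_2$ by inverting the (nonsingular) Vandermonde system built from $1,q,q^{-1}$; the same computation with $\theta_0^*,\theta_1^*,\theta_2^*$ yields $\theta_i^*=\alpha_1^*+\alpha_2^* q^i+\alpha_3^* q^{-i}$. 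The recurrence then propagates these closed forms to all remaining indices by induction. For $d\le2$ there are at most three eigenvalues and one fits them directly, so the recurrence is only needed when $d\ge3$, where the range $2\le i\le d-1$ is nonempty.

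The crux, and the step I expect to be the main obstacle, is the non-degeneracy $q\ne\pm1$, equivalently $\beta\ne\pm2$: this is exactly the condition that makes $1,q,q^{-1}$ distinct and that excludes the degenerate solutions (polynomial in $i$ when $\beta=2$, and solutions of $(-1)^i$ and $i(-1)^i$ type when $\beta=-2$), none of which admits the stated $q$-exponential form. This inequality cannot be extracted from Corollary~\ref{ch} alone, since a general tridiagonal-algebra representation may well have $\beta=\pm2$; it is genuinely a property of \emph{tridiagonal pairs} of the relevant ($q$-Racah) type. I would therefore invoke the classification of the eigenvalue sequences of a tridiagonal pair \cite{INT}, which guarantees that $q$ may be chosen with $q\ne0,\pm1$, and with this in hand the explicit solution above finishes the proof. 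One should finally check the field of definition of the constants: solving the Vandermonde system expresses $\alpha_1,\alpha_2,\alpha_3$ (and their duals) rationally in $q$ and $\theta_0,\theta_1,\theta_2$, so they lie in $\mathbb{K}$ as soon as $q\in\mathbb{K}$, which holds in the standing setting where $\mathbb{K}$ is taken algebraically closed.
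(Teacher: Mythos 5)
Your solving technique is the same as the paper's: both proofs read Corollary \ref{ch} as a constant-coefficient linear recurrence whose characteristic roots are $1$, $q$, $q^{-1}$ with $q+q^{-1}=\beta$, and solve it in closed form. The paper factors out the root $1$ by hand --- it passes to the difference sequence $u_i=\theta_{i-2}-\theta_{i-1}$, which satisfies $u_{i+2}-(q+q^{-1})u_{i+1}+u_i=0$, writes $u_i=aq^i+bq^{-i}$, and sums back up --- whereas you keep the third-order recurrence and invert the Vandermonde system in $1,q,q^{-1}$. These are the same computation in different clothing, so on this core your proposal matches the paper.

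The genuine problem is the step you yourself single out as the crux. You claim the classification of tridiagonal pairs in \cite{INT} guarantees that $q$ can be chosen with $q\ne 0,\pm1$. It does not: that classification admits parameter arrays with $\beta=2$ (Racah, Hahn, Krawtchouk types) and $\beta=-2$ (Bannai--Ito type), and these are realized by actual TD pairs --- for instance the Leonard pair coming from the $sl_2(\mathbb{K})$-module $V_d$ recalled earlier in this paper has eigenvalues $\theta_i=c(d-2i)$, hence $\beta=2$. For such a pair with $d\ge3$ the conclusion genuinely fails: a non-constant sequence linear in $i$ cannot equal $\alpha_1+\alpha_2q^i+\alpha_3q^{-i}$ with $q\ne0,\pm1$, since requiring the second differences $\alpha_2(q-1)^2q^i+\alpha_3(q^{-1}-1)^2q^{-i}$ to vanish at $i=0,1$ is a $2\times2$ system with nonzero determinant (because $q^2\ne 1$), forcing $\alpha_2=\alpha_3=0$ and a constant sequence. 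So the step ``invoke \cite{INT}'' would fail, and no citation can repair it, because the statement is false in that generality. In fairness, the paper's own proof is no better on this point --- it simply writes ``Since $q\ne\pm1$'' with no justification --- so you have correctly located the soft spot; but the honest resolution is to read the corollary as carrying the additional hypothesis $\beta\ne\pm2$ (the $q$-Racah-type case, which is the only one used in the sequel, where $\beta=q^2+q^{-2}$ with $q$ not a root of unity), not to assert that the classification supplies the inequality.
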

\begin{proof}
By Corollary \ref{ch} and Theorem \ref{dinhly2}, there exists scalar $\beta$ in $\mathbb K$, such that $\frac{\theta_{i-2}-\theta_{i+1}}{\theta_{i-1}-\theta_i}=\beta+1 $, for all $2 \le i \le d-1$. Let $q$ in the algebraic closure of $\mathbb K$, such that $q+q^{-1}=\beta$. We have the equation
\[\frac{\theta_{i-2}-\theta_{i+1}}{\theta_{i-1}-\theta_i}=q+q^{-1}+1.\]
It is equivalent to 
\[\frac{\theta_{i-2}-\theta_{i-1}+\theta_{i-1}-\theta_i+\theta_i-\theta_{i+1}}{\theta_{i-1}-\theta_i}=q+q^{-1}+1.\]
Put $u_i = \theta_{i-2}-\theta_{i-1}, $ it yields 
\[u_{i+2}-(q+q^{-1})u_{i+1}+u_{i}=0.\]
Since $q \ne \pm 1$, there exist scalars $a, b \in \mathbb K: u_i=aq^i+bq^{-i}$, or $\theta_{i-2}-\theta_{i-1}=a q^i + bq^{-i}$. Solving this equation, there exist scalars $\alpha_1, \alpha_2, \alpha_3 \in \mathbb K$ such that $\theta_i=\alpha_1+\alpha_2q^i+\alpha_3q^{-i}$. By the same arguments, there also exist scalars $\alpha^*_1, \alpha^*_2, \alpha^*_3 \in \mathbb K$ such that $\theta^*_i=\alpha^*_1+\alpha^*_2q^i+\alpha^*_3q^{-i}$.
\end{proof}

Note that the following equations are inferred directly from Corollary \ref{hequa1}. If $\beta = q^2+q^{-2}$, $q^4 \not=1$, there exist scalars $\alpha, \alpha^*$ in $\mathbb{K}$, and $b, b^*, c, c^*$ in the algebraic closure of ${\mathbb{K}}$ such that
\small
\begin{eqnarray}
\label{eq:const1}
&&\theta_i = \alpha + b q^{2i-d} + c q^{d-2i} 
 \qquad 0 \leq i \leq d \ ,
\\
\label{eq:const2}
&&\theta^*_j = \alpha^* + b^* q^{2j-\delta} + c^* q^{\delta-2j} 
 \qquad 0 \leq j \leq \delta \ ,
\label{eq:const3}
\end{eqnarray}
\normalsize
where $b, b^*,c, c^*$ are nonzero scalars. 

\begin{lem}
\label{lem:polypart}
For each positive integer $s$, there exist scalars $\beta_s, \gamma_s, \gamma^*_s, \delta_s, \delta^*_s$ in the algebraic cloture of $\mathbb{K}$ such that
\begin{eqnarray}
\label{pt1}
\theta_i^2 - \beta_s\theta_i\theta_j  + \theta_j^2 - \gamma_s (\theta_i+\theta_j) -\delta_s&=& 0 \quad \mbox{if} \quad |i-j|=s \ .~~(0\leq i,j\leq d),~~~~~~~~~~\\
\label{pt2}
{\theta_h^*}^2 - \beta_s{\theta_h^*}{\theta_k^*}  + {\theta_k^*}^2 - \gamma^*_s (\theta_h^*+\theta_k^*) -\delta^*_s &=& 0 \quad \mbox{if} \quad |h-k|=s \ .~~ (0\leq h,k\leq \delta).
\end{eqnarray}
\end{lem}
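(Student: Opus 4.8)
The plan is to feed the explicit parametrization of the eigenvalues from Corollary \ref{hequa1} into the quadratic form and simply exhibit the required scalars. Recall that there $\theta_i = \alpha_1 + \alpha_2 q^i + \alpha_3 q^{-i}$ and $\theta^*_i = \alpha^*_1 + \alpha^*_2 q^i + \alpha^*_3 q^{-i}$ with $q$ not a root of unity. Since the expression $\theta_i^2 - \beta_s\theta_i\theta_j + \theta_j^2 - \gamma_s(\theta_i+\theta_j) - \delta_s$ is symmetric in $\theta_i \leftrightarrow \theta_j$, it suffices to treat $j = i+s$; the case $j = i-s$ then follows by interchanging the roles of $i$ and $j$. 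So I would fix $s \ge 1$, put $j = i+s$, and set $\beta_s := q^s + q^{-s}$.

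The crux is to show that, after shifting, the relevant combination is a constant independent of $i$. Writing $p := \theta_i - \alpha_1 = \alpha_2 q^i + \alpha_3 q^{-i}$ and $p' := \theta_j - \alpha_1 = \alpha_2 q^s q^i + \alpha_3 q^{-s} q^{-i}$, I would expand $p^2 - \beta_s\,p\,p' + p'^2$ in powers of $w = q^i$. The coefficient of $w^2$ vanishes because $1 + q^{2s} - (q^s + q^{-s})q^s = 0$, the coefficient of $w^{-2}$ vanishes because $1 + q^{-2s} - (q^s + q^{-s})q^{-s} = 0$, and the constant term collapses, using $4 - (q^s+q^{-s})^2 = -(q^s - q^{-s})^2$, to $-\alpha_2\alpha_3(q^s - q^{-s})^2$. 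This yields the key identity
\[
(\theta_i - \alpha_1)^2 - \beta_s(\theta_i-\alpha_1)(\theta_j-\alpha_1) + (\theta_j-\alpha_1)^2 = -\alpha_2\alpha_3(q^s - q^{-s})^2,
\]
whose right-hand side depends only on $s$ and not on $i$.

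It then remains to expand this identity back in $\theta_i,\theta_j$ and read off the remaining scalars. Matching with the target relation (\ref{pt1}) gives $\gamma_s = (2-\beta_s)\alpha_1$ and $\delta_s = (\beta_s-2)\alpha_1^2 - \alpha_2\alpha_3(q^s - q^{-s})^2$; because these depend only on $s$ and on the fixed parameters $\alpha_1,\alpha_2,\alpha_3,q$, the single triple $(\beta_s,\gamma_s,\delta_s)$ serves simultaneously for every pair $(i,j)$ with $|i-j|=s$. For $s=1$ this recovers the relation (iii) following Theorem \ref{tthm1}, which I would record as a consistency check. Finally, repeating the identical computation with the dual data $\theta^*_i = \alpha^*_1 + \alpha^*_2 q^i + \alpha^*_3 q^{-i}$ produces the same $\beta_s$ (the two sequences share the same $q$) together with $\gamma^*_s = (2-\beta_s)\alpha^*_1$ and $\delta^*_s = (\beta_s-2){\alpha^*_1}^2 - \alpha^*_2\alpha^*_3(q^s-q^{-s})^2$, establishing (\ref{pt2}).

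The computation is elementary, so there is no deep obstacle; the one point demanding care is that the natural alternative derivation, namely solving the two linear equations for $q^i$ and $q^{-i}$ and imposing $q^i\cdot q^{-i}=1$, secretly requires $\alpha_2,\alpha_3 \ne 0$ and $q^{2s}\ne 1$. I would therefore establish the key identity by the direct coefficient comparison described above, which is valid without any nonvanishing hypothesis and so dispenses with case distinctions.
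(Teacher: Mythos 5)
Your proof is correct and follows essentially the same route as the paper: substitute the explicit $q$-parametrization of the eigenvalues into the quadratic form and compare coefficients in powers of $q^i$, which yields the same scalars $\beta_s=q^s+q^{-s}$ (i.e.\ $q^{2s}+q^{-2s}$ in the convention of (\ref{eq:const1})), $\gamma_s=(2-\beta_s)\alpha_1$ and $\delta_s=(\beta_s-2)\alpha_1^2-\alpha_2\alpha_3(q^s-q^{-s})^2$ that the paper obtains. The only difference is organizational: the paper treats $\beta_s,\gamma_s,\delta_s$ as unknowns and solves the coefficient system as an ``if and only if'' under the side conditions $b,c,q\ne 0$ and $q^{2s}+q^{-2s}\ne -2$, whereas you fix the scalars first and verify the identity directly, which is all the existence statement requires and dispenses with those non-degeneracy assumptions.
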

\begin{proof}
There is no loss of generality of assuming $j = i+ s$. Substituting (\ref{eq:const1}) into (\ref{pt1}) gives
\begin{eqnarray*}
0 = & & b^2q^{2s-2d}(q^{2s}+q^{-2s}-\beta_s)q^{4i}+c^2q^{2d-2s}(q^{2s}+q^{-2s}-\beta_s)q^{-4i}\\&+& b(q^{2s-d}+q^{-d})(2\alpha-\beta_s\alpha-\gamma_s)q^{2i}+c(q^d+q^{d-2s})(2\alpha-\beta_s\alpha-\gamma_s)q^{-2i}\\ &+& 2\alpha^2+4bc-\beta_s(\alpha^2+bc(q^{2s}+q^{-2s}))-2\alpha\gamma_s-\delta_s
\end{eqnarray*} 
If $b, c, q \ne 0$ and $q^{2s}+q^{-2s}\ne -2$, then (\ref{pt1}) is satisfied for all $0 \le i \le d$ if and only if
\[\left \{ \begin{array}{l}
q^{2s}+q^{-2s}-\beta_s=0\\
2\alpha-\beta_s\alpha-\gamma_s=0\\
2\alpha^2+4bc-\beta_s(\alpha^2+bc(q^{2s}+q^{-2s}))-2\alpha\gamma_s-\delta_s =0\\
\end{array} \right. \]
Therefore,
\[\left \{ \begin{array}{l}
\beta_s = q^{2s}+q^{-2s}\\
\gamma_s = \alpha(2-q^{2s}-q^{-2s})\\
\delta_s = \alpha^2(q^s-q^{-s})^2-bc(q^{2s}-q^{-2s})^2\\
\end{array}
\right.\ \]
Similarly, there exist scalars $\gamma^*_s, \delta^*_s$ s.t (\ref{pt2})
\end{proof}
\begin{rem}
Let $V$ denote a finite dimensional vector space over $\mathbb{K}$, and $(A, A^*)$ be a tridiagonal pair on $V$. Let $r, r^*, s, s^*$ denote scalars in $\mathbb{K}$ such that $r,r^*$ are nonzero, then the ordered pair $(rA+sI, r^*A^*+s^*I)$ is also a tridiagonal pair on $V$. Moreover, if $\beta, \gamma, \gamma^*, \delta, \delta^*$ is a sequence of parameters in the tridiagonal relations of $A, A^*$, then $\beta, r\gamma+s(2-\beta), r^*\gamma^*+s^*(2-\beta), r^2\delta-2rs\gamma+s^2(\beta-2), {r^*}^2\delta^*-2r^*s^*\gamma^*+{s^*}^2(\beta-2)$ is the sequence of parameters in the tridiagonal relations of $rA+sI, r^*A^*+s^*I$.
\end{rem}
\begin{thm}{\cite{Ter03}}\label{dl2}
Let $\beta, \gamma, \gamma^*, \delta, \delta^*$ be scalars in $\mathbb{K}$. Let $T$ denote a tridiagonal algebra over $\mathbb{K}$ with the parameters $\beta, \gamma, \gamma^*, \delta, \delta^*$, and generators $A, A^*$. Let $V$ denote an irreducible finite dimensional $T$-module. Assume that $q$ is not a root of unity such that $\beta = q+q^{-1}$, and $A, A^*$ are diagonalizable on $V$. Then $(A, A^*)$ acts on $V$ as a tridiagonal pair.
\end{thm}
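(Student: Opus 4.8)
The plan is to verify the four defining conditions of a tridiagonal pair (Definition~\ref{defitri}) one at a time. Conditions (i) and (iv) are essentially free: diagonalizability of $A,A^*$ is part of the hypothesis, and since $T$ is generated by $A,A^*$, any subspace of $V$ invariant under both operators is a $T$-submodule, so the irreducibility of $V$ immediately gives (iv). The real work is to construct the two standard orderings required in (ii) and (iii). To set up, let $\theta_0,\dots,\theta_n$ be the distinct eigenvalues of $A$, let $W_i$ be the eigenspace for $\theta_i$, and let $E_i\in\mathrm{End}(V)$ be the corresponding primitive idempotents, so $AE_i=\theta_iE_i$, $E_iE_j=\delta_{ij}E_i$, $\sum_iE_i=I$. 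The operator $E_jA^*E_i$ measures the component of $A^*$ sending $W_i$ to $W_j$, and condition (ii) is equivalent to finding an ordering of the $\theta_i$ with $E_jA^*E_i=0$ whenever $|i-j|>1$.

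First I would extract the essential constraint from the tridiagonal relation (\ref{tri1}). Setting $P=A^2A^*-\beta AA^*A+A^*A^2-\gamma(AA^*+A^*A)-\delta A^*$ and using $E_jA=\theta_jE_j$ and $AE_i=\theta_iE_i$, a direct computation collapses $E_jPE_i$ to a scalar multiple of $E_jA^*E_i$:
\[
E_jPE_i=p(\theta_j,\theta_i)\,E_jA^*E_i,\qquad p(x,y)=x^2-\beta xy+y^2-\gamma(x+y)-\delta .
\]
Sandwiching $[A,P]=0$ between $E_j$ and $E_i$ then gives $(\theta_j-\theta_i)\,p(\theta_j,\theta_i)\,E_jA^*E_i=0$, and since $\theta_j\neq\theta_i$ for $i\neq j$ I conclude that $E_jA^*E_i\neq0$ forces $p(\theta_i,\theta_j)=0$. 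As $p(\theta_i,\cdot)$ is a monic quadratic, each $\theta_i$ admits at most two such partners $\theta_j$; thus $A^*$ can push $W_i$ into at most two eigenspaces other than $W_i$.

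Next I would encode this in a graph $\Gamma$ on $\{0,\dots,n\}$, joining $i\neq j$ whenever $E_jA^*E_i\neq0$ or $E_iA^*E_j\neq0$. By the previous paragraph $\Gamma$ sits inside the symmetric relation $\{p(\theta_i,\theta_j)=0\}$, so every vertex has degree at most two. The graph is also connected: were $C$ a proper nonempty component, the space $U=\bigoplus_{i\in C}W_i$ would be stable under $A$ and, because $A^*W_i\subseteq W_i+\sum_{j\sim i}W_j\subseteq U$ for $i\in C$, also under $A^*$, contradicting (iv). A finite connected graph of maximum degree two is either a path or a cycle, and listing the eigenspaces in path order produces precisely the standard ordering of (ii).

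The crux — and the only point where the hypothesis on $q$ is used — will be to rule out the cyclic case. On a cycle each vertex has degree exactly two, so the neighbours $\theta_{i-1},\theta_{i+1}$ of $\theta_i$ are exactly the two roots of $p(\theta_i,y)=0$; reading off the coefficient of $y$ gives the recurrence $\theta_{i-1}+\theta_{i+1}=\beta\theta_i+\gamma$ holding cyclically. Its characteristic equation $t^2-\beta t+1=0$ has roots $q^{\pm1}$, whence $\theta_i=\alpha_1+\alpha_2q^i+\alpha_3q^{-i}$. Imposing the $L$-periodicity of a cycle forces $\alpha_2(q^L-1)=0$ and $\alpha_3(q^{-L}-1)=0$; since $q^2\neq1$ this leaves only $\alpha_2=\alpha_3=0$, which would make all $\theta_i$ equal (impossible), or $q^L=1$, contradicting that $q$ is not a root of unity. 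Hence $\Gamma$ is a path and (ii) holds. Condition (iii) follows by the identical argument with $A$ and $A^*$ interchanged, using relation (\ref{tri2}) and the polynomial $p^*(x,y)=x^2-\beta xy+y^2-\gamma^*(x+y)-\delta^*$, which completes the verification that $(A,A^*)$ is a tridiagonal pair.
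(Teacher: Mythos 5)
Your proof is correct, but there is nothing in the paper to compare it against: Theorem \ref{dl2} is recalled from \cite{Ter03} and stated without any argument, so your write-up actually fills a gap the thesis delegates to the literature. The individual steps all hold up. Conditions (i) and (iv) of Definition \ref{defitri} are indeed immediate from the hypotheses, since a subspace invariant under both $A$ and $A^*$ is a $T$-submodule. The sandwiching computation $E_jPE_i=p(\theta_j,\theta_i)\,E_jA^*E_i$ is right, and combining it with $[A,P]=0$ and $\theta_i\neq\theta_j$ correctly yields that $E_jA^*E_i\neq 0$ forces $p(\theta_i,\theta_j)=0$; since $p(\theta_i,\cdot)$ is a monic quadratic, each vertex of your graph $\Gamma$ has degree at most two, and the connectivity-from-irreducibility argument plus the path-or-cycle dichotomy for such graphs is sound. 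The cycle exclusion is also placed exactly where the root-of-unity hypothesis must enter: Vieta applied to the two roots of $p(\theta_i,\cdot)$ gives the cyclic recurrence $\theta_{i-1}+\theta_{i+1}=\beta\theta_i+\gamma$, whose general solution is $\alpha_1+\alpha_2q^i+\alpha_3q^{-i}$ (using $q\neq\pm 1$ so that the characteristic roots $q^{\pm1}$ are distinct and differ from $1$), and periodicity forces either $q^L=1$ or all eigenvalues equal, both impossible. It is worth noting that your two key devices are precisely the ones the paper uses in its neighbouring proofs: projecting the tridiagonal relation between spectral projectors to extract scalar constraints is the mechanism in the proof of Theorem \ref{dinhly2}, and solving the three-term recurrence by $\theta_i=\alpha_1+\alpha_2q^i+\alpha_3q^{-i}$ is exactly the argument of Corollary \ref{hequa1}. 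So your proof is consistent in spirit with the surrounding material, and I see no gap in it.
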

By Theorem \ref{dinhly2} and Theorem \ref{dl2}, there exists a relationship between the tridiagonal pair and the generators of the $q$-Onsager algebra as follows.
\begin{rem}
If $A, A^*$ are generators of the $q$-Onsager algebra, $V$ is an irreducible finite dimensional vector space on which $A, A^*$ act, and $A, A^*$ are diagonalizable on $V$ then $(A, A^*)$ is a tridiagonal pair on $V$. Inversely, if $(A, A^*)$ is a tridiagonal pair on $V$, and the parameters $\beta=q^2+q^{-2},~~\gamma=\gamma^* = 0$ then $A, A^*$ satisfy the defining relations of the $q$-Onsager algebra.
\end{rem}
The conditions making a tridiagonal pair become a Leonard pair are established by the next theorem
\begin{thm}\cite{Ter03}
Let V denote a finite dimensional vector space over $\mathbb{K}$. Let $A, A^*$ be linear transformations from $V$ to $V$. Then the following are equivalent.\\
(i) $(A, A^*)$ is a Leonard pair on $V$.\\
(ii) $(A, A^*)$ is a tridiagonal pair, and for each of $A, A^*$ all eigenspaces have dimension 1. 
\end{thm}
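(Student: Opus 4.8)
The plan is to prove the two implications separately, leaning on the structural facts already recorded for Leonard pairs and for tridiagonal pairs.

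For (i) $\Rightarrow$ (ii), assume $(A,A^*)$ is a Leonard pair. First I would invoke the Lemma stating that each of $A,A^*$ is multiplicity-free; since diagonalizability together with multiplicity-freeness forces every eigenspace to be one-dimensional, the eigenspace hypothesis of (ii) is automatic and condition (i) of the TD-pair definition (diagonalizability of $A,A^*$) holds. To verify the inclusions (ii) and (iii) of Definition \ref{defitri}, I would work in the basis $v_0,\dots,v_d$ of $A$-eigenvectors furnished by condition (i) of the Leonard-pair definition: there $A^*$ is tridiagonal, so $A^*v_i$ is a combination of $v_{i-1},v_i,v_{i+1}$, and since each eigenspace is the line $V_i=\mathbb{K}v_i$ this reads $A^*V_i\subseteq V_{i-1}+V_i+V_{i+1}$ with the ordering inherited from the basis being standard. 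The dual inclusion follows symmetrically from condition (ii) of the Leonard-pair definition. Finally, for the irreducibility condition (iv) I would use the Corollary asserting that $A,A^*$ generate $\mathrm{End}(V)$: a subspace stable under both $A$ and $A^*$ is then stable under all of $\mathrm{End}(V)$, hence equals $0$ or $V$.

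For (ii) $\Rightarrow$ (i), assume $(A,A^*)$ is a TD pair whose eigenspaces are all one-dimensional. Fix a standard ordering $\{V_i\}_{i=0}^d$ of the eigenspaces of $A$ and a spanning vector $v_i$ of each $V_i$, so that $A$ is diagonal in $v_0,\dots,v_d$. By the $r=1$ case of Lemma \ref{lem1} we have $E_iA^*E_j=0$ whenever $|i-j|>1$, so $A^*$ is tridiagonal in this basis. What remains, and what is the crux of the whole statement, is to show that this tridiagonal matrix is \emph{irreducible}, i.e.\ that $E_{i+1}A^*E_i\neq 0$ for $0\le i\le d-1$ and $E_{i-1}A^*E_i\neq 0$ for $1\le i\le d$.

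I would establish irreducibility by contradiction, extracting an invariant subspace from condition (iv). Suppose $E_{j+1}A^*E_j=0$ for some $0\le j\le d-1$ and put $W=V_0+\cdots+V_j$. Each $V_i$ is $A$-invariant, so $AW\subseteq W$; and using $A^*V_i\subseteq V_{i-1}+V_i+V_{i+1}$ for $i<j$ together with the vanishing of the top component $E_{j+1}A^*E_j$ at $i=j$, one checks $A^*W\subseteq W$. Since $0\neq V_0\subseteq W$ while $W\neq V$ (as $j<d$), this contradicts (iv); the mirror argument with $W=V_j+\cdots+V_d$ rules out $E_{j-1}A^*E_j=0$. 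Hence $A^*$ is irreducible tridiagonal, giving condition (i) of the Leonard-pair definition, and exchanging the roles of $A$ and $A^*$ (the TD-pair axioms being symmetric in the two generators) gives condition (ii), so $(A,A^*)$ is a Leonard pair. The main obstacle is precisely this irreducibility step: the tridiagonal \emph{shape} is formal, but the non-vanishing of the off-diagonal entries genuinely requires the global irreducibility axiom (iv), and the one-dimensionality of the eigenspaces is exactly what lets me speak of an irreducible tridiagonal matrix rather than merely a block-tridiagonal one.
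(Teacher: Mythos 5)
Your proof is correct, but there is no in-paper argument to compare it against: the thesis states this equivalence as a cited result of Terwilliger \cite{Ter03} and gives no proof, consistent with the survey character of that section. Your reconstruction is the standard one and relies only on facts the paper does record. For (i) $\Rightarrow$ (ii), invoking the multiplicity-free lemma first is essential and you do so in the right order: it is what makes each eigenspace of $A$ (resp.\ $A^*$) a coordinate line of the diagonalizing basis, after which the tridiagonal shape of the other operator translates directly into the inclusions (\ref{eq:t1}), (\ref{eq:t2}); and the corollary that $A,A^*$ generate $End(V)$ correctly disposes of axiom (iv), since a nonzero invariant subspace $W$ then contains $End(V)w=V$ for any $0\neq w\in W$. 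For (ii) $\Rightarrow$ (i), you correctly identify the crux as irreducibility of the off-diagonal entries, and your mechanism is the right one: if $E_{j+1}A^*E_j=0$ then $W=V_0+\cdots+V_j$ is $A$-invariant (eigenspaces) and $A^*$-invariant (tridiagonality for $i<j$ plus the vanishing top component at $i=j$), with $0\neq W\neq V$ because the $V_i$ form a direct sum and $j<d$, contradicting (iv); the mirror subspace $V_j+\cdots+V_d$ handles the superdiagonal, and the symmetry of the TD-pair axioms under $A\leftrightarrow A^*$ yields condition (ii) of Definition \ref{dleo}. This invariant-subspace contradiction is precisely how the result is proved in the cited literature, so your proposal is both complete and faithful to the expected argument.
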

\begin{lem}
Let $(A, A^*)$ denote a tridiagonal pair on $V$. If the shape vector of $A, A^*$ satisfies $\rho_0 = \rho_1 =1$, then $A, A^*$ is Leonard pair. 
\end{lem}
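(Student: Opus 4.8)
The plan is to reduce the statement to a dimension count by invoking the equivalence theorem proved just above: since $(A,A^*)$ is already assumed to be a tridiagonal pair, that theorem says it is a Leonard pair exactly when every eigenspace of $A$ and of $A^*$ is one-dimensional, i.e. when the shape vector satisfies $\rho_i=1$ for all $0\le i\le d$. So the entire task is to upgrade the hypothesis $\rho_0=\rho_1=1$ to $\rho_i=1$ for every $i$. I would first dispose of the small diameters using symmetry $\rho_i=\rho_{d-i}$: for $d\le 3$ every index lies in $\{0,1,d-1,d\}$, so $\rho_i\in\{\rho_0,\rho_1\}=\{1\}$ and the claim is immediate. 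Thus only $d\ge 4$, where the interior multiplicities $\rho_2,\dots,\rho_{d-2}$ are not pinned down by symmetry, requires genuine work.

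For the interior I would use the split decomposition constructed in the proof of Theorem \ref{dinhly2}: there $V=U_0\oplus\cdots\oplus U_d$ with $\dim U_i=\rho_i$, with $U_0=E_0^*V$ one-dimensional, and with $A=R+\sum_i\theta_iF_i$, $A^*=L+\sum_i\theta_i^*F_i$, where $F_i$ projects onto $U_i$, the raising map obeys $RU_i\subseteq U_{i+1}$ and the lowering map obeys $LU_i\subseteq U_{i-1}$. Fix a nonzero $v\in U_0$, so $Lv=0$; since $R^dv\neq 0$ (Corollary 6.7 of \cite{TD00}), the vectors $v,Rv,\dots,R^dv$ are all nonzero and, lying in distinct graded pieces, linearly independent. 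Put $S=\mathrm{span}\{R^iv:0\le i\le d\}$, of dimension $d+1$. Because $AR^iv=\theta_iR^iv+R^{i+1}v$ (with $R^{d+1}v=0$), the space $S$ is automatically $A$-invariant. If I can also show $S$ is $A^*$-invariant, then $S$ is a nonzero $(A,A^*)$-submodule, so condition (iv) of Definition \ref{def1} forces $S=V$; hence $\dim V=d+1=\sum_{i=0}^d\rho_i$, and as each $\rho_i\ge 1$ this equality holds only if every $\rho_i=1$, which is precisely what is needed.

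Everything therefore reduces to the $A^*$-invariance of $S$, that is, to $A^*R^iv=\theta_i^*R^iv+LR^iv\in S$, which means exactly that the lowering map preserves the principal line, $LR^iv\in\mathrm{span}\{R^{i-1}v\}$ for all $i$. This is automatic whenever the target piece $U_{i-1}$ is one-dimensional, so by $\rho_0=\rho_1=1$ and their symmetric partners $\rho_{d-1}=\rho_d=1$ it holds at both ends; the content is to propagate it through the interior, and this is the step I expect to be the main obstacle. I would attack it by induction on $i$, re-expressing $LR^iv$ via the tridiagonal relation (\ref{eq:dolan2}) together with the explicit $q$-exponential form of $\theta_i,\theta_i^*$ from Corollary \ref{hequa1}, and showing that the component of $LR^iv$ transverse to $R^{i-1}v$ is forced to vanish. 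Equivalently, one shows that the shape of a tridiagonal pair is log-concave, since then $\rho_i^2\ge\rho_{i-1}\rho_{i+1}$ with $\rho_0=\rho_1=1$ gives $\rho_2\le 1$ and inductively $\rho_i=1$ for all $i$. Establishing this rigidity of the lowering action (equivalently, the log-concavity of the shape) is where the genuine structural input on tridiagonal pairs enters; the surrounding reduction, by contrast, is routine.
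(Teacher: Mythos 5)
The paper itself states this lemma without proof (it sits bare between the theorem of \cite{Ter03} characterizing Leonard pairs among tridiagonal pairs and the corollary on shape vectors), so there is no in-text proof to compare yours against; your proposal must stand on its own. Its reduction is sound as far as it goes: by the preceding theorem it suffices to prove $\rho_i=1$ for all $i$; the symmetry $\rho_i=\rho_{d-i}$ settles $d\le 3$; and for general $d$ your subspace $S=\mathrm{span}\{R^iv : 0\le i\le d\}$ is indeed $(d+1)$-dimensional and $A$-invariant, so that $A^*$-invariance of $S$ together with condition (iv) of Definition \ref{def1} would force $S=V$, $\dim V=d+1$, and hence $\rho_i=1$ for every $i$.

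The genuine gap is that the step you defer --- $LR^iv\in\mathrm{span}\{R^{i-1}v\}$ for all $i$ --- is not a smaller sub-problem but is equivalent, inside your own setup, to the lemma itself: if all $\rho_i=1$ it holds trivially, and if it holds then your argument yields all $\rho_i=1$. So the proposal reduces the statement to a statement of identical strength and then stops; no actual argument is supplied for the interior indices $3\le i\le d-1$. Neither of the routes you sketch closes this. The ``induction via the tridiagonal relation (\ref{eq:dolan2}) and Corollary \ref{hequa1}'' is only a hope: nothing in the paper's toolkit (the split decomposition, $RU_i\subseteq U_{i+1}$, $LU_i\subseteq U_{i-1}$, symmetry and unimodality of the shape) controls the component of $LR^iv$ transverse to $R^{i-1}v$ once $\dim U_{i-1}\ge 2$, and making such an argument work is precisely the hard structural analysis done in the literature (e.g.\ the structure theory of \cite{NT}), not a routine computation. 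The fallback claim that the shape of a tridiagonal pair is log-concave is likewise unsupported: the paper records only symmetry and unimodality, you give neither a proof nor a precise citation, and it is not ``equivalent'' to the rigidity of $L$ --- log-concavity would imply the lemma given $\rho_0=\rho_1=1$, but it is a separate, stronger-looking assertion about all tridiagonal pairs. As it stands, the proposal is a correct and honest reformulation of the problem, not a proof.
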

\begin{cor}
Let $(A, A^*)$ denote a tridiagonal pair on $V$. If $(A, A^*)$ is not a Leonard pair, the shape vector of $(A, A^*)$ satisfies $\rho_i \ge 2 ~(1 \le i \le d-1)$.
\end{cor}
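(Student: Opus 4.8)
The plan is to argue by contraposition and to reduce the whole statement to a single lower bound, namely $\rho_1 \ge 2$, which I then propagate across the interior of the shape vector using symmetry and unimodality. First I would dispose of the degenerate cases: if $d \le 1$ the index range $1 \le i \le d-1$ is empty and the claim holds vacuously, so I may assume $d \ge 2$. I would also record at the outset the two structural facts I intend to use, both available above: each $\rho_i$ is the dimension of a nonzero eigenspace and hence a positive integer, and the shape vector is symmetric, $\rho_i = \rho_{d-i}$, and unimodal, $\rho_{i-1} \le \rho_i$ for $1 \le i \le d/2$.

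The core step is to establish $\rho_1 \ge 2$, and here the immediately preceding lemma is exactly the right tool in contrapositive form. Since $(A,A^*)$ is by hypothesis not a Leonard pair, that lemma forbids $\rho_0 = \rho_1 = 1$; hence at least one of $\rho_0 \ne 1$ or $\rho_1 \ne 1$ must hold. Unimodality gives $\rho_0 \le \rho_1$, and positivity gives $\rho_0 \ge 1$. In the first case, $\rho_0 \ne 1$ forces $\rho_0 \ge 2$, whence $\rho_1 \ge \rho_0 \ge 2$. In the second case, $\rho_1 \ge \rho_0 \ge 1$ combined with $\rho_1 \ne 1$ again forces $\rho_1 \ge 2$. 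Either way $\rho_1 \ge 2$, and notably this requires no separate appeal to the fact $\rho_0 = 1$.

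Finally I would transfer this bound to every interior index. For $1 \le i \le d-1$ one has $1 \le d-i \le d-1$ as well, and at least one of the two indices $i$ and $d-i$ lies in $\{1,2,\dots,\lfloor d/2\rfloor\}$; call it $j$. The increasing chain $\rho_1 \le \rho_2 \le \cdots \le \rho_{\lfloor d/2\rfloor}$ coming from unimodality yields $\rho_j \ge \rho_1$, and the symmetry relation $\rho_i = \rho_{d-i}$ transfers this to $\rho_i \ge \rho_1 \ge 2$. This gives $\rho_i \ge 2$ for all $1 \le i \le d-1$, which is the assertion.

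The argument is essentially formal once the preceding lemma is granted, so I expect no genuine obstacle. The only place demanding care is the last step, where the combination of symmetry and unimodality must be checked to cover both halves of the interior, and small diameters such as $d=2$ must be handled where the unimodal range collapses to the single inequality $\rho_0 \le \rho_1$. I would make the reduction \emph{``it suffices to bound $\rho_1$''} explicit, emphasizing that $\rho_1 = \rho_{d-1}$ is precisely the minimum among the interior entries of a symmetric unimodal sequence.
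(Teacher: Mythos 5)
Your proposal is correct, and it is essentially the argument the paper intends: the thesis states this corollary without proof, immediately after the lemma that $\rho_0=\rho_1=1$ forces a Leonard pair and after recording that the shape vector is positive, symmetric ($\rho_i=\rho_{d-i}$) and unimodal ($\rho_{i-1}\le\rho_i$ for $1\le i\le d/2$), which is precisely the combination you use. Your reduction to the single bound $\rho_1\ge 2$ via the contrapositive of that lemma, followed by propagation through symmetry and unimodality (with the vacuous cases $d\le 1$ handled separately), fills in the omitted details correctly.
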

\begin{thm}\cite{CP}
\label{cc1}
The quantum affine algebra $U_q(\widehat{sl_2})$ is isomorphic to the unital associative $\mathbb{K}$-algebra $\mathbb{U}$ with generators $y_i^{\pm}, k_i^{\pm}, i \in \{0, 1\}$ and the following relations:
\begin{eqnarray*}
k_ik_i^{-1}=k_i^{-1}k_i &=& 1,\\
k_0k_1 ~~\text{is central,}\\
\frac{qy_i^+k_i-q^{-1}k_iy_i^+}{q-q^{-1}}&=&1,\\
\frac{qk_iy_i^{-}-q^{-1}y_i^-k_i}{q-q^{-1}}&=&1,\\
\frac{qy_i^{-}y_i^+-q^{-1}y_i^+y_i^-}{q-q^{-1}}&=&1,\\
\frac{qy_i^{+}y_i^--q^{-1}y_i^-y_i^+}{q-q^{-1}}&=&k_0^{-1}k_1^{-1},~~i \ne j,
\end{eqnarray*}
\[(y_i^{\pm})^3y_j^{\pm}-[3]_q(y^{\pm}_i)^2y_j^{\pm}y_i^{\pm}+[3]_qy_i^{\pm}y_j^{\pm}(y_i^{\pm})^2-y_j^{\pm}(y_i^{\pm})^3 = 0, ~ i \ne j.\]
An isomorphism from $\mathbb{U}$ to $U_q(\widehat{sl_2})$ is given by:
\begin{eqnarray*}
k_i^{\pm}&\mapsto & K_i^{\pm},\\
y_i^{-} &\mapsto& K_i^{-1}+e_i^{-},\\
y_i^+ &\mapsto & K_i^{-1}-q(q-q^{-1})^2K_i^{-1}e_i^+.
\end{eqnarray*}
The inverse of this isomorphsim is given by:
\begin{eqnarray*}
K_i^{\pm}&\mapsto&k_i^{\pm},\\
e_i^{-}&\mapsto&y_i^--k_i^{-1},\\
e_i^+&\mapsto&\frac{1-k_iy_i^+}{q(q-q^{-1})^2}.
\end{eqnarray*}
\end{thm}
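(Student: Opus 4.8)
The plan is to treat this as a statement about two presentations of the same algebra and to verify directly that the two displayed assignments on generators extend to mutually inverse algebra homomorphisms. Write $\phi$ for the map sending $k_i^{\pm}\mapsto K_i^{\pm}$, $y_i^-\mapsto K_i^{-1}+e_i^-$, $y_i^+\mapsto K_i^{-1}-q(q-q^{-1})^2K_i^{-1}e_i^+$, and $\psi$ for the map sending $K_i^{\pm}\mapsto k_i^{\pm}$, $e_i^-\mapsto y_i^--k_i^{-1}$, $e_i^+\mapsto (1-k_iy_i^+)/(q(q-q^{-1})^2)$. Since both target algebras are presented by generators and relations, each assignment extends to an algebra homomorphism precisely when the images of the generators satisfy the defining relations of the source algebra; this is the universal property I would invoke at the outset.

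First I would check that $\phi$ is a homomorphism from $\mathbb{U}$ to $U_q(\widehat{sl_2})$. The relations $k_ik_i^{-1}=1$ are immediate, and ``$k_0k_1$ central'' maps to the statement that $K_0K_1$ is central in $U_q(\widehat{sl_2})$, which follows from $K_0K_1$ commuting with each $K_j$ and from $K_0K_1\,e_j^{\pm}\,(K_0K_1)^{-1}=q^{\pm2}q^{\mp2}e_j^{\pm}=e_j^{\pm}$. The four $q$-commutator relations are verified by direct substitution using the weight relations $K_ie_i^{\pm}K_i^{-1}=q^{\pm2}e_i^{\pm}$; for instance, for $\frac{qy_i^+k_i-q^{-1}k_iy_i^+}{q-q^{-1}}=1$ one computes $\phi(y_i^+k_i)=1-q^{-1}(q-q^{-1})^2e_i^+$ and $\phi(k_iy_i^+)=1-q(q-q^{-1})^2e_i^+$, whereupon the $e_i^+$ terms cancel in the $q$-commutator and the scalar part leaves $1$. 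The remaining two bilinear relations (with right-hand sides $1$ and $k_0^{-1}k_1^{-1}$) are handled the same way, now also invoking $[e_i^+,e_i^-]=(K_i-K_i^{-1})/(q-q^{-1})$ and $[e_0^{\pm},e_1^{\mp}]=0$.

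Next I would verify that $\psi$ is a homomorphism from $U_q(\widehat{sl_2})$ to $\mathbb{U}$, checking each Drinfeld--Jimbo relation of Definition \ref{defiUq} on the images. The monomial relations among the $K_i$ are immediate, and the weight relations $K_ie_j^{\pm}K_i^{-1}=q^{\pm2}e_j^{\pm}$ ($i=j$), $q^{\mp2}e_j^{\pm}$ ($i\ne j$), reduce to the $q$-commutator relations of $\mathbb{U}$ after substituting $e_i^-=y_i^--k_i^{-1}$ and $e_i^+=(1-k_iy_i^+)/(q(q-q^{-1})^2)$; the defining bracket $[e_i^+,e_i^-]=(K_i-K_i^{-1})/(q-q^{-1})$ and the cross relation $[e_0^{\pm},e_1^{\mp}]=0$ are reconstructed from the corresponding bilinear relations of $\mathbb{U}$.

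The heart of the argument, and the step I expect to be the main obstacle, is the $q$-Serre relations (\ref{123}). Under $\phi$ the generator $y_i^-$ maps to the sum $K_i^{-1}+e_i^-$ rather than to a single root vector, so substituting into the cubic Serre expression produces many cross terms mixing $K_i^{-1}$ and $e_i^-$; one must show that every contribution except the pure $e^-$-Serre combination cancels, using repeatedly the weight relations together with the $q$-number identities for $[3]_q$, and similarly for $y_i^+$. The analogous check must be carried out for $\psi$ on the Serre relations of $U_q(\widehat{sl_2})$. Once both maps are known to be homomorphisms, I would finish by checking $\psi\circ\phi=\mathrm{id}$ and $\phi\circ\psi=\mathrm{id}$ on generators; for example $\psi(\phi(y_i^+))=k_i^{-1}-q(q-q^{-1})^2k_i^{-1}\cdot(1-k_iy_i^+)/(q(q-q^{-1})^2)=y_i^+$, and the remaining generators are equally direct, so that $\phi$ and $\psi$ are mutually inverse and the presentations are isomorphic.
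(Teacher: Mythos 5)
The paper does not actually prove this theorem: it is quoted from \cite{CP} (via the equitable-type presentation of Ito--Terwilliger), and the thesis states explicitly that the material of Chapter 1 is ``summarized without proofs.'' So there is no in-paper argument to compare against; your proposal has to stand on its own. Your overall strategy --- invoke the universal property of presentations, verify that each assignment preserves the defining relations of its source, then check the composites are the identity on generators --- is the standard and essentially the only route, and the sample computations you display are correct. For instance, your verification of $\frac{q\,\phi(y_i^+k_i)-q^{-1}\phi(k_iy_i^+)}{q-q^{-1}}=1$ is exactly right, and the same bookkeeping with $K_ie_i^{\pm}K_i^{-1}=q^{\pm2}e_i^{\pm}$ and $[e_i^+,e_i^-]=(K_i-K_i^{-1})/(q-q^{-1})$ does dispose of the relation $\frac{q\,y_i^-y_i^+-q^{-1}y_i^+y_i^-}{q-q^{-1}}=1$.

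Two points need attention. First, the relation printed with right-hand side $k_0^{-1}k_1^{-1}$ is mis-stated in the thesis: its left side involves only the index $i$ while the condition reads $i\ne j$. It must be the mixed relation $\frac{q\,y_i^+y_j^--q^{-1}y_j^-y_i^+}{q-q^{-1}}=k_0^{-1}k_1^{-1}$ for $i\ne j$; that version is verified by exactly your method, using $e_j^-K_i^{-1}=q^{2}K_i^{-1}e_j^-$, $e_i^+K_j^{-1}=q^{-2}K_j^{-1}e_i^+$ and $[e_0^{\pm},e_1^{\mp}]=0$. If instead you try to verify it \emph{as written}, the check fails: $q\,\phi(y_i^+y_i^-)-q^{-1}\phi(y_i^-y_i^+)$ contains non-cancelling terms such as $(q^{3}-q^{-1})\,e_i^-K_i^{-1}$ and $(1-q^{4})(q-q^{-1})^2K_i^{-2}e_i^+$. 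Your phrase ``handled the same way,'' together with your appeal to $[e_0^{\pm},e_1^{\mp}]=0$, suggests you had the mixed relation in mind, but this must be made explicit, otherwise that step of your proof is literally false. Second, the $q$-Serre verification --- which you rightly call the heart of the argument --- is only described, not carried out. The mechanism you describe is the correct one: writing $\phi(y_i^-)=K_i^{-1}+e_i^-$ and expanding the Serre combination, the pure $K^{-1}$-terms vanish since the $K$'s commute, the pure $e^-$-terms reproduce relation (\ref{123}), and each mixed homogeneous block (e.g.\ the words with two $K_i^{-1}$'s, one $e_i^-$ and one $e_j^-$) cancels identically after normal-ordering with $K_i^{-1}e_i^-=q^{2}e_i^-K_i^{-1}$ and $e_j^-K_i^{-1}=q^{2}K_i^{-1}e_j^-$, the coefficients collapsing via $[3]_q$-identities such as $q^{-2}-(1+q^{-2})+1=0$. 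An analogous block-by-block computation is required for $\psi$ on the Serre relations of $U_q(\widehat{sl_2})$. These computations do go through, so your outline is sound, but as submitted the proposal defers precisely the part of the proof that carries the content.
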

\begin{thm}\cite{IT}
Let $V$ denote a vector space over $\mathbb{K}$ with finite positive dimension and let $(A, A^*)$ denote a tridiagonal pair on $V$. Let $\theta_0, \theta_1, \dots, \theta_d$ (resp. $\theta_0^*, \theta_1^*, \dots, \theta_d^*$) denote a standard ordering of the eigenvalues of $A$ (resp. $A^*$). We assume there exist nonzero scalars $a, a^*$ in $\mathbb{K}$ such that $\theta_i = aq^{2i-d}$ and $\theta_i^* = a^*q^{d-2i}$ for $0 \le i \le d$. Then with reference to Theorem \ref{cc1} there exists a unique $U_q(\widehat{sl_2})$-module structure on $V$ such that $ay_1^{-}$ acts as $A$ and $a^*y_0^{-}$ acts as $A^*$. Moreover there exists a unique $U_q(\widehat{sl_2})$-module structure on $V$ such that $ay_0^+$ acts as $A$ and $a^*y_1^+$ acts as $A^*$. Both $U_q(\widehat{sl_2})$-module structures are irreducible.
\end{thm}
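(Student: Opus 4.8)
The plan is to reduce the hypotheses to the $q$-Serre relations, then to realise the generators $k_i,y_i^{\pm}$ of the presentation $\mathbb{U}$ of Theorem \ref{cc1} on $V$ out of the canonical split decomposition of the tridiagonal pair, and finally to check the defining relations, uniqueness and irreducibility.

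\emph{Reduction to the $q$-Serre relations.} By Theorem \ref{dinhly2} there are scalars $\beta,\gamma,\gamma^*,\delta,\delta^*$ with which $A,A^*$ satisfy the tridiagonal relations (\ref{eq:dolan1})--(\ref{eq:dolan2}), and by Corollary \ref{ch} the common value of the eigenvalue ratio equals $\beta+1$. Substituting $\theta_i=aq^{2i-d}$ gives $\beta+1=q^2+1+q^{-2}$, hence $\beta=q^2+q^{-2}$; substituting $\theta_i=aq^{2i-d}$ and $\theta_i^*=a^*q^{d-2i}$ into the expressions that determine $\gamma,\delta,\gamma^*,\delta^*$ (as in Lemma \ref{lem:polypart} with $s=1$, where here the constant term and the second geometric coefficient vanish) makes them all equal to zero. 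Thus (\ref{eq:dolan1})--(\ref{eq:dolan2}) collapse to the $q$-Serre relations (\ref{1.3})--(\ref{1.4}), so that $y_1^-:=A/a$ and $y_0^-:=A^*/a^*$ satisfy the $q$-Serre relations occurring in $\mathbb{U}$.

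\emph{Construction of $k_i$ and the Cartan relations.} I would next invoke the split decomposition $V=\bigoplus_{i=0}^d U_i$ with $U_i=\big(\sum_{h=0}^i E_h^*V\big)\cap\big(\sum_{k=i}^d E_kV\big)$, exactly as in the proof of Theorem \ref{dinhly2}: writing $A=\Theta+R$ and $A^*=\Theta^*+L$ with $\Theta,\Theta^*$ the diagonal parts, one has $RU_i\subseteq U_{i+1}$ and $LU_i\subseteq U_{i-1}$. Define $k_1$ to act as $q^{d-2i}$ on $U_i$ and set $k_0:=k_1^{-1}$, so that $k_0k_1=1$ gives the required central scalar. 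Since $\Theta=ak_1^{-1}$ and $\Theta^*=a^*k_1$, the identities $k_1R=q^{-2}Rk_1$ and $k_1L=q^2Lk_1$ yield, after a one-line computation, $qk_iy_i^--q^{-1}y_i^-k_i=q-q^{-1}$ for $i=0,1$, which is the relation of $\mathbb{U}$ coupling $k_i$ and $y_i^-$; the relations $k_ik_i^{-1}=1$ and the centrality of $k_0k_1$ are immediate.

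\emph{The positive generators and the affine cross relations.} It remains to produce $y_0^+,y_1^+$ and to verify the surviving relations of $\mathbb{U}$: for each node the two relations $qy_i^+k_i-q^{-1}k_iy_i^+=q-q^{-1}$ and $qy_i^-y_i^+-q^{-1}y_i^+y_i^-=q-q^{-1}$ determine $y_i^+$ from $k_i,y_i^-$ node by node (just as two of the three cyclic relations among the generators determine the third in $U_q(sl_2)$); then the cross relations $qy_i^+y_j^--q^{-1}y_j^-y_i^+=k_0^{-1}k_1^{-1}=1$ for $i\neq j$ together with the $q$-Serre relations for $y_0^+,y_1^+$ must be checked. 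I expect this last block to be the main obstacle: while each $y_i^+$ is forced on the $U_q(sl_2)$-constituent attached to its node, showing that these assemble into global operators obeying the node-crossing relations is precisely where the full strength of the tridiagonal axioms enters — the two interlocking eigenspace flags (\ref{eq:t1})--(\ref{eq:t2}), the unimodal shape vector, and the irreducibility hypothesis (iv) of Definition \ref{def1}.

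\emph{Uniqueness, irreducibility, and the second structure.} Uniqueness follows because the split decomposition, and hence $k_1,k_0$, is intrinsic to $(A,A^*)$, while $y_1^-,y_0^-$ are prescribed and $y_1^+,y_0^+$ are then pinned down by the relations above. Irreducibility is immediate: any $U_q(\widehat{sl_2})$-submodule is stable under $A=ay_1^-$ and $A^*=a^*y_0^-$, hence is $0$ or $V$ by Definition \ref{def1}(iv). Finally, the second module structure, with $ay_0^+=A$ and $a^*y_1^+=A^*$, is obtained by repeating the construction relative to the reversed standard ordering of the eigenspaces (which exchanges the roles of raising and lowering), equivalently by twisting the first structure through the automorphism of $U_q(\widehat{sl_2})$ composing the diagram automorphism with the Chevalley involution; its irreducibility follows as before.
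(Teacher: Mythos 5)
The thesis itself contains no proof of this theorem: it sits in the review chapter and is quoted verbatim from \cite{IT}, so your proposal can only be judged on its own merits. Its first half is sound. The reduction to the $q$-Serre relations is correct (with $\theta_i=aq^{2i-d}$, $\theta_i^*=a^*q^{d-2i}$ one has $\alpha=\alpha^*=0$ and $c=c^*=0$ in (\ref{eq:const1})--(\ref{eq:const2}), so $\gamma_1=\gamma_1^*=\delta_1=\delta_1^*=0$ in Lemma \ref{lem:polypart}), and the construction of $k_1$ from the split decomposition $U_i=\bigl(\sum_{h=0}^i E_h^*V\bigr)\cap\bigl(\sum_{k=i}^d E_kV\bigr)$, together with the check that $qk_iy_i^--q^{-1}y_i^-k_i=(q-q^{-1})I$ for $y_1^-=A/a$, $y_0^-=A^*/a^*$, $k_0=k_1^{-1}$, is a correct computation. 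The irreducibility claim is also fine.

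The genuine gap is exactly the block you flag and then skip: the \emph{existence} of operators $y_0^+,y_1^+$ on $V$ satisfying the remaining relations of $\mathbb{U}$. This is not a routine node-by-node determination. The relation $qy_i^+k_i-q^{-1}k_iy_i^+=(q-q^{-1})I$ only forces $y_i^+=k_i^{-1}+N_i$ with $N_i$ a degree-one raising map for the $k_i$-grading; the relation $qy_i^-y_i^+-q^{-1}y_i^+y_i^-=(q-q^{-1})I$ is then a Sylvester-type linear equation for $N_i$ which need not have a solution for an arbitrary pair $(k_i,y_i^-)$. Producing a solution, and then verifying the cross relations $qy_i^+y_j^--q^{-1}y_j^-y_i^+=(q-q^{-1})k_0^{-1}k_1^{-1}$ and the $q$-Serre relations for $y_0^+,y_1^+$, is precisely the main content of the Ito--Terwilliger theorem; in \cite{IT} it rests on the interplay of all six decompositions of Lemma \ref{decom} and the operators $B,B^*$ of the adjacent theorem, not on the first split decomposition alone. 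Without this step nothing is proved. Two secondary problems: your uniqueness argument presupposes that in \emph{any} module structure with $ay_1^-=A$, $a^*y_0^-=A^*$ the element $k_1$ acts by $q^{d-2i}$ on $U_i$; this requires showing that the $k_1$-weight decomposition of such a module is a split decomposition of $(A,A^*)$ and then invoking uniqueness of the split decomposition, which you do not do. And the second module structure cannot be obtained by ``twisting through an automorphism'': the substitution $y_i^{\pm}\mapsto y_{1-i}^{\mp}$, $k_i\mapsto k_{1-i}$ preserves the relations of $\mathbb{U}$ only if it reverses products, i.e.\ it is an anti-automorphism, so it yields a module structure on the dual space rather than on $V$.
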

\begin{defn}
Let $V$ denote a vector space over $\mathbb{K}$ with finite positive dimension. Let $d$ denote a nonnegative integer. By a decomposition of $V$ of length $d$, we mean a sequence $U_0, U_1,\dots, U_d$ consisting of nonzero subspaces of $V$ such that
\[V = U_0 +U_1+\dots+U_d~~~ (\text{direct sum}).\]
\end{defn}
Now we are concerned about six decompositions of a finite dimensional vector space $V$. Recall that $(A, A^*)$ is the tridiagonal pair on $V$. Let $V_0, V_1, \dots, V_d$ (resp. $V^*_0, V_1^*, \dots, V_d^*$) denote a standard order of the eigenspaces of $A$ (resp. $A^*$). For $0 \le i \le d$, let $\theta_i$ (resp. $\theta_i^*$) denote the eigenvalue of $A$ (resp. $A^*$) associated with $V_i$ (resp. $V_i^*$)
\begin{lem}\cite{IT}
\label{decom}
The sequence of subspaces $U_0, U_1,\dots, U_d$ of $V$ is a decomposition of $V$ if one of the six following is satisfied\\
i) $U_i = V_i,~~~ i =0, 1, \dots, d,$\\
ii) $U_i = V_i^*, ~~~ i =0, 1, \dots, d$\\
iii) $U_i = (V_0^*+V_1^*+\dots+V_i^*)\cap (V_i+V_{i+1}+\dots+V_{d}), ~~~ i = 0, 1, \dots, d,$\\
iv) $U_i = (V_0^*+V_1^*+\dots+V_i^*)\cap (V_0+V_1+\dots+V_{d-i}),~~~ i = 0, 1, \dots, d,$\\
v) $U_i = (V^*_{d-i}+V^*_{d-i+1}+\dots+V^*_d)\cap(V_0+V_1+\dots+V_{d-i}), ~~~i = 0, 1, \dots, d,$\\
vi) $U_i = (V^*_{d-i}+V^*_{d-i+1}+\dots+V^*_d)\cap(V_i+V_{i+1}+\dots+V_{d}), ~~~i = 0, 1, \dots, d.$
\end{lem}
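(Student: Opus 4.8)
The plan is to dispose of the two trivial cases (i)--(ii) at once and then treat (iii) as the fundamental case, deriving (iv)--(vi) from it by exploiting the fact that a standard ordering of eigenspaces stays standard when reversed. For (i) and (ii), since $A$ (resp.\ $A^*$) is diagonalizable with eigenspaces $V_0,\dots,V_d$ (resp.\ $V_0^*,\dots,V_d^*$), the definition of diagonalizability already gives $V=V_0+\dots+V_d$ (direct sum), and likewise for the $V_i^*$; each summand is nonzero and, by Lemma \ref{bode1}, the two orderings have the common length $d$. Hence both are decompositions of $V$ of length $d$.

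For (iii), I would reproduce the argument already carried out inside the proof of Theorem \ref{dinhly2}. Write $U_i=(\sum_{h=0}^i V_h^*)\cap(\sum_{k=i}^d V_k)$ (recall $V_h^*=E_h^*V$, $V_k=E_kV$) and set $W=U_0+\dots+U_d$. Using the two inclusions established in the proof of Lemma \ref{bode1}, namely $AV_{i,j}\subseteq V_{i,j}+V_{i+1,j+1}$ and $A^*V_{i,j}\subseteq V_{i,j}+V_{i-1,j-1}$ with $V_{i,j}=(\sum_{h=0}^i V_h^*)\cap(\sum_{k=j}^d V_k)$, one reads off $AU_i\subseteq U_i+U_{i+1}$ and $A^*U_i\subseteq U_i+U_{i-1}$, so $W$ is invariant under both $A$ and $A^*$. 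Since $U_0=V_0^*\neq 0$ we have $W\neq 0$, and condition (iv) of Definition \ref{def1} forces $W=V$. Directness follows from $\sum_{l=0}^i U_l\subseteq \sum_{h=0}^i V_h^*$ and $U_{i+1}\subseteq \sum_{k=i+1}^d V_k$ together with the vanishing $(\sum_{h=0}^i V_h^*)\cap(\sum_{k=i+1}^d V_k)=0$, which is precisely the $r=1$ instance of the invariant-subspace argument of Lemma \ref{bode1}. Thus (iii) is a decomposition.

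For (iv)--(vi) I would avoid repeating this work and instead apply (iii) to tridiagonal systems obtained by reversing orderings. The key observation is that if $\{V_i\}_{i=0}^d$ is a standard ordering of the eigenspaces of $A$, then so is $\{V_{d-i}\}_{i=0}^d$, because the defining inclusion $A^*V_i\subseteq V_{i-1}+V_i+V_{i+1}$ is invariant under $i\mapsto d-i$; the same holds for $A^*$. Applying (iii) to the system with the ordering of $A$ reversed turns $\sum_{k=i}^d V_k$ into $\sum_{k=0}^{d-i} V_k$, giving (iv); reversing instead the ordering of $A^*$ turns $\sum_{h=0}^i V_h^*$ into $\sum_{h=d-i}^d V_h^*$, giving (vi); reversing both yields (v). Each reversed system is again a tridiagonal system of the same diameter, so (iii) applies verbatim and delivers the direct-sum property in every case.

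The main obstacle is the directness in (iii), that is, establishing $(\sum_{h=0}^i V_h^*)\cap(\sum_{k=i+1}^d V_k)=0$. This is not formal linear algebra: it genuinely rests on the irreducibility hypothesis (iv) of Definition \ref{def1}, via the construction of an $A,A^*$-invariant subspace lying inside a proper flag term and therefore forced to vanish. Everything else---the invariance of $W$, its non-triviality, and the reduction of (iv)--(vi) to (iii)---is bookkeeping once this vanishing and the raising/lowering inclusions from Lemma \ref{bode1} are in hand.
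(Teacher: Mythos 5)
Your treatment of (i)--(iii), and your reduction of (iv)--(vi) to (iii) by reversing standard orderings, are both sound. For (iii) your argument is essentially the one this paper itself runs inside the proof of Theorem \ref{dinhly2} (the paper never proves Lemma \ref{decom} as such --- it is quoted from \cite{IT} --- so that embedded argument, resting on the inclusions $AV_{i,j}\subseteq V_{i,j}+V_{i+1,j+1}$, $A^*V_{i,j}\subseteq V_{i,j}+V_{i-1,j-1}$ and the $r=1$ vanishing from the proof of Lemma \ref{bode1}, is the only in-paper point of comparison). The order-reversal step is also correct: reversing a standard ordering of the eigenspaces of $A$ (or of $A^*$) again gives a standard ordering, since the condition $A^*V_i\subseteq V_{i-1}+V_i+V_{i+1}$ is invariant under $i\mapsto d-i$, so (iii) applies verbatim to the reversed system and produces exactly the intersections in (iv), (v) and (vi); this part goes beyond anything written in the paper.

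There is, however, one genuine gap. The definition of decomposition stated just before the lemma requires the subspaces $U_0,\dots,U_d$ to be \emph{nonzero}, and you verify this only in cases (i)--(ii). For (iii) --- and hence for (iv)--(vi), which inherit everything from (iii) --- you prove $V=U_0\oplus\cdots\oplus U_d$ but never that each $U_i\neq 0$, and this is not formal bookkeeping: a direct sum can contain zero summands, so nothing you have established excludes $U_i=0$. Closing the gap needs one more application of the irreducibility axiom (iv) of Definition \ref{def1}. Let $i$ be minimal with $U_i=0$; then $i\geq 1$ because $U_0=V_0^*\neq 0$. Put $W=U_0+\cdots+U_{i-1}$. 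From $AU_j\subseteq U_j+U_{j+1}$ and $A^*U_j\subseteq U_{j-1}+U_j$ one gets $AW\subseteq W$ (note $AU_{i-1}\subseteq U_{i-1}+U_i=U_{i-1}$) and $A^*W\subseteq W$; since $W\supseteq U_0\neq 0$, irreducibility forces $W=V$, contradicting $W\subseteq V_0^*+\cdots+V_{i-1}^*\subsetneq V$ (proper because $V_d^*\neq 0$ and the eigenspace sum is direct). With this supplement, and carrying it through the reversed systems for (iv)--(vi), your proof is complete.
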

\begin{lem}\cite{IT}
Let $U_0, U_1, \dots, U_d$ denote one of the six decompositions of $V$ associated with the tridiagonal pair $(A, A^*)$ in Lemma \ref{decom}. For $0 \le i \le d,$ let $\rho_i$ denote the dimension of $U_i$. Then the sequence $\rho_0, \rho_1, \dots, \rho_d$ is independent of the decomposition. Moreover the sequence $\rho_0, \rho_1,\dots, \rho_d$ is unimodal and symmetric.
\end{lem}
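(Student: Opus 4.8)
The plan is to reduce all six decompositions to the single split decomposition (iii), namely $U_i=(\sum_{h=0}^{i}E_h^*V)\cap(\sum_{k=i}^{d}E_kV)$, for which the direct-sum property $V=\bigoplus_{i=0}^{d}U_i$, together with the raising and lowering maps $R=A-\sum_i\theta_iF_i$ and $L=A^*-\sum_i\theta_i^*F_i$ (with $F_i$ the projection onto $U_i$), have already been produced in the proof of Theorem \ref{dinhly2}; there it was shown that $R\,U_i\subseteq U_{i+1}$ and $L\,U_i\subseteq U_{i-1}$.

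First I would compute the dimensions for (iii) by a flag/telescoping argument. By definition $U_l\subseteq\sum_{h=0}^{l}E_h^*V$ and $U_l\subseteq\sum_{k=l}^{d}E_kV$, so $\sum_{l=0}^{i}U_l\subseteq\sum_{h=0}^{i}E_h^*V$ and $\sum_{l=i+1}^{d}U_l\subseteq\sum_{k=i+1}^{d}E_kV$. Since $V$ is the direct sum of the $U_l$, comparing dimensions gives $\dim V\le\dim(\sum_{h=0}^{i}E_h^*V)+\dim(\sum_{k=i+1}^{d}E_kV)=\sum_{h=0}^{i}\rho_h+\sum_{k=i+1}^{d}\rho_k=\dim V$, where I use that $\dim V_h^*=\dim V_h=\rho_h$ as recalled before the definition of the decomposition. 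Equality throughout forces $\sum_{l=0}^{i}U_l=\sum_{h=0}^{i}E_h^*V$ for every $i$, and subtracting consecutive flags yields $\dim U_i=\dim V_i^*=\rho_i$. This telescoping step is the technical heart of the independence claim.

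Next I would dispose of (iv), (v), (vi) by reversal symmetry. Reversing the standard ordering of the eigenspaces of $A$ (replacing $V_i$ by $V_{d-i}$) is again standard, since the condition $A^*V_i\subseteq V_{i-1}+V_i+V_{i+1}$ is invariant under $i\mapsto d-i$, and likewise for $A^*$; moreover the direct-sum argument of Theorem \ref{dinhly2} used only standardness, so it applies unchanged. Under these reversals $\sum_{k=i}^{d}E_kV$ turns into $\sum_{k=0}^{d-i}E_kV$ and $\sum_{h=0}^{i}E_h^*V$ into $\sum_{h=d-i}^{d}E_h^*V$, so (iv), (v), (vi) are literally decomposition (iii) for the pair carrying one or both reversed orderings. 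The telescoping count then gives $\dim U_i=\rho_i$ for (iv) and $\dim U_i=\rho_{d-i}$ for (v) and (vi), while (i) and (ii) give $\dim V_i=\dim V_i^*=\rho_i$ outright. Thus every decomposition returns $(\rho_0,\dots,\rho_d)$ up to the reversal $i\mapsto d-i$.

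Finally, for the ``moreover'' clause I would invoke the structure of $R$ and $L$. The key input, which I would import from \cite{TD00} exactly as the proof of Theorem \ref{dinhly2} imported $R^{t}F_0\neq0$, is that $R^{\,d-2i}$ restricts to a bijection $U_i\to U_{d-i}$ for $0\le i\le d/2$; this gives symmetry $\rho_i=\rho_{d-i}$, which in turn collapses the reversal ambiguity above and completes the independence statement. Unimodality then follows by factoring the bijection $R^{\,d-2i+2}\colon U_{i-1}\to U_{d-i+1}$ as $R^{\,d-2i+1}\circ R|_{U_{i-1}}$ for $1\le i\le d/2$: injectivity of the composite forces $R|_{U_{i-1}}\colon U_{i-1}\to U_i$ to be injective, whence $\rho_{i-1}\le\rho_i$. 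The main obstacle is precisely this bijectivity of the powers of the raising map, which encodes the genuinely nontrivial $sl_2$-like structure of a tridiagonal pair; everything else is flag bookkeeping and dimension counting, so I would cite it from \cite{TD00} rather than reprove it.
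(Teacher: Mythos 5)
The first thing to note is that the paper offers no proof of this lemma: it is stated as recalled material and attributed to \cite{IT}, so your argument can only be measured against the literature it draws on, not against an in-paper proof. With that said, your overall architecture is sound and essentially reconstructs the standard argument. The telescoping flag argument in your second paragraph is correct: the direct-sum property of the split decomposition is indeed established inside the paper's proof of Theorem \ref{dinhly2}, the equality $\dim V_h=\dim V_h^*=\rho_h$ is recalled just before the definition of a TD system, and equality of the two flag dimensions forces $\sum_{l=0}^{i}U_l=\sum_{h=0}^{i}E_h^*V$, whence $\dim U_i=\rho_i$ for decomposition (iii). Your derivation of symmetry from the imported bijectivity of $R^{d-2i}\colon U_i\to U_{d-i}$, and of unimodality from the factorization $R^{d-2i+2}|_{U_{i-1}}=R^{d-2i+1}\circ R|_{U_{i-1}}$, is also correct, and importing that fact from \cite{TD00} is consistent with how the paper itself uses Corollary 6.7 of \cite{TD00} in proving Theorem \ref{dinhly2}.

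The one place where the write-up does not close as stated is the treatment of decompositions (iv) and (vi). For (v), where both orderings are reversed, your telescoping goes through verbatim: the required identity $\sum_{h\le i}\dim V^*_{d-h}+\sum_{k> i}\dim V_{d-k}=\dim V$ follows from $\dim V_j=\dim V_j^*$ alone. But for (iv) and (vi), which mix one original with one reversed ordering, the analogous identity reads $\sum_{h\le i}\rho_h+\sum_{k>i}\rho_{d-k}=\dim V$, i.e. $\sum_{h\le i}\rho_h=\sum_{h\le i}\rho_{d-h}$, which is exactly the symmetry you only establish in your final paragraph; without it, the inclusion argument yields a one-sided inequality rather than the equality the telescoping needs. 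So the logical order must be: decomposition (iii) first, then symmetry (via the imported bijection, or equivalently by applying the recalled fact $\dim V_i=\dim V_i^*$ to the TD system carrying one reversed ordering, which is legitimate since, as you note, the reversal of a standard ordering is standard), and only then (iv) and (vi). Alternatively, observe that the one-sided inequalities produced by (iv) and by (vi) point in opposite directions, $\sum_{h\le i}\rho_{d-h}\le\sum_{h\le i}\rho_h$ and $\sum_{h\le i}\rho_h\le\sum_{h\le i}\rho_{d-h}$, so playing them against each other yields the symmetry $\rho_i=\rho_{d-i}$ with no import at all. This is a repair of ordering, not of substance: all the ingredients are present in your proposal.
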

\begin{lem}\cite{IT}
Let $U_0, U_1, \dots, U_d$ denote any one of the six decompositions of $V$ associated with a tridiagonal pair $(A, A^*)$ in Lemma \ref{decom}. Then for $0 \le i \le d$ the action of $A$ and $A^*$ on $U_i$ is described as follows.\\
i) If $U_i = V_i$ then $(A-\theta_iI)U_i = 0,~ A^*U_i \subseteq U_{i-1}+U_i+U_{i+1},$\\
ii) If $U_i = V_i^*$ then $AU_i \subseteq U_{i-1}+U_i+U_{i+1},~(A^*-\theta^*I)U_i =0,$\\
iii) If $U_i = (V_0^*+V_1^*+\dots+V_i^*)\cap (V_i+V_{i+1}+\dots+V_{d})$ then $(A-\theta_iI)U_i \subseteq U_{i+1},~(A^*-\theta_i^*I)U_i\subseteq U_{i-1},$\\
iv) If $U_i = (V_0^*+V_1^*+\dots+V_i^*)\cap (V_0+V_1+\dots+V_{d-i})$ then $(A-\theta_{d-i}I)U_i\subseteq U_{i+1}, ~(A^*-\theta_i^*I)U_i \subseteq U_{i-1},$\\
v) If $U_i = (V^*_{d-i}+V^*_{d-i+1}+\dots+V^*_d)\cap(V_0+V_1+\dots+V_{d-i})$ then $(A-\theta_{d-i}I)U_i \subseteq U_{i+1}, ~(A^*-\theta^*_iI)U_i \subseteq U_{i-1},$\\
vi) If $U_i = (V^*_{d-i}+V^*_{d-i+1}+\dots+V^*_d)\cap(V_i+V_{i+1}+\dots+V_{d})$ then $(A-\theta_iI)U_i\subseteq U_{i+1}, ~(A^*-\theta_{d-i}^*I)U_i \subseteq U_{i-1}.$
\end{lem}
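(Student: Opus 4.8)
The plan is to reduce all six cases to a handful of elementary facts about how $A$ and $A^*$ act on partial sums of eigenspaces, together with the elementary observation that if $(A-\lambda I)X\subseteq X'$ and $(A-\lambda I)Y\subseteq Y'$ for subspaces $X,Y$ of $V$, then $(A-\lambda I)(X\cap Y)\subseteq X'\cap Y'$ (and similarly for $A^*$). This is exactly the mechanism already exploited in the proof of Lemma~\ref{bode1}, so the present statement is in essence a systematic repackaging of that computation.

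First I would record the inputs. With $V_i=E_iV$ and $V_i^*=E_i^*V$, the eigenspace property gives $(A-\theta_iI)V_i=0$ and $(A^*-\theta_i^*I)V_i^*=0$, while the standard-ordering conditions (\ref{eq:t1}), (\ref{eq:t2}) give $A^*V_i\subseteq V_{i-1}+V_i+V_{i+1}$ and $AV_i^*\subseteq V_{i-1}^*+V_i^*+V_{i+1}^*$. Cases (i) and (ii) are then nothing but these four statements and require no further work.

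Next I would establish the one-step lemmas for the four families of partial sums
\[
M_i=\sum_{h=0}^{i}V_h^*,\qquad M'_h=\sum_{l=h}^{d}V_l^*,\qquad N_j=\sum_{k=j}^{d}V_k,\qquad N'_j=\sum_{k=0}^{j}V_k.
\]
Using only the inputs above, each of the following is a one-line check: $AM_i\subseteq M_{i+1}$ and $(A^*-\theta_i^*I)M_i=M_{i-1}$; $AM'_h\subseteq M'_{h-1}$ and $(A^*-\theta_h^*I)M'_h=M'_{h+1}$; dually $A^*N_j\subseteq N_{j-1}$ and $(A-\theta_jI)N_j=N_{j+1}$; finally $A^*N'_j\subseteq N'_{j+1}$ and $(A-\theta_jI)N'_j=N'_{j-1}$. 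In each pair the inclusion for the ``off-diagonal'' generator comes from (\ref{eq:t1}) or (\ref{eq:t2}), while the equality for the ``diagonal'' generator comes from diagonalizability, the subtracted eigenvalue being precisely the one that annihilates the extreme summand of the partial sum.

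Finally, each split decomposition is an intersection of one $M$-type space with one $N$-type space: (iii) $U_i=M_i\cap N_i$, (iv) $U_i=M_i\cap N'_{d-i}$, (v) $U_i=M'_{d-i}\cap N'_{d-i}$, (vi) $U_i=M'_{d-i}\cap N_i$. For each case I would apply the two one-step lemmas attached to the two factors and combine them by the intersection observation, taking for $A$ the eigenvalue that annihilates the extreme summand of the $N$-factor and for $A^*$ the one that annihilates the extreme summand of the $M$-factor; both factors then shift by a single index in the same direction, yielding $(A-\theta I)U_i\subseteq U_{i+1}$ and $(A^*-\theta^*I)U_i\subseteq U_{i-1}$. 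There is no substantive obstacle: the only care required is the bookkeeping of whether each index increases or decreases and which eigenvalue matches which family, precisely the tracking performed once already in the proof of Lemma~\ref{bode1}.
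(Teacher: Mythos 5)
Your strategy is sound, and it is worth noting at the outset that the paper itself gives no proof of this lemma at all (it is quoted from \cite{IT} and followed immediately by the next theorem); the only comparison available is with the mechanism the paper uses to prove Lemma \ref{bode1} and Theorem \ref{dinhly2}, and your proof is indeed a systematic repackaging of exactly that mechanism. Your four one-step lemmas are all correct: the ``off-diagonal'' inclusions follow from (\ref{eq:t1})--(\ref{eq:t2}), the ``diagonal'' equalities follow from diagonalizability plus distinctness of eigenvalues, and the intersection principle together with the fact that $M_i\subseteq M_{i+1}$, $M'_{h}\subseteq M'_{h-1}$, etc.\ (so that subtracting any scalar multiple of $I$ preserves the one-step inclusion for the factor whose eigenvalue is not being used) closes each case, with the boundary conventions $U_{-1}=U_{d+1}=0$ handled by the same computations.

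There is, however, one point you must address: your recipe does \emph{not} reproduce item (v) as printed, and you assert agreement in all six cases without checking it. In case (v) the $M$-type factor is $M'_{d-i}=V_{d-i}^*+\cdots+V_d^*$, whose extreme summand is $V_{d-i}^*$, so your recipe yields
\[
(A^*-\theta_{d-i}^*I)\,U_i\subseteq U_{i-1},
\]
whereas the statement claims $(A^*-\theta_{i}^*I)\,U_i\subseteq U_{i-1}$; these differ unless $i=d-i$. In fact the printed claim is false: take $d=2$ and $i=0$, so that $U_0=V_2^*\cap(V_0+V_1+V_2)=V_2^*$ and $U_{-1}=0$; then $(A^*-\theta_0^*I)U_0=(\theta_2^*-\theta_0^*)V_2^*=V_2^*\neq 0$ since $\theta_2^*\neq\theta_0^*$, while $(A^*-\theta_2^*I)U_0=0$. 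So item (v) in the paper contains a typo ($\theta_i^*$ should be $\theta_{d-i}^*$, consistent with the symmetry that (v) is (iii) with both orderings reversed, just as (vi) is (iv) reversed and indeed carries $\theta_{d-i}^*$). Your argument proves the corrected version; a careful write-up should say this explicitly rather than claim that the recipe delivers the listed eigenvalues verbatim in all six cases.
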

\begin{thm}\cite{IT}
Let $B: V \to V$ denote the unique linear transformation such that for $0 \le i \le d,$
\[(V_0^*+V_1^*+\dots+V_i^*)\cap(V_0+V_1+\dots+V_{d-i})\]
is an eigenspace of $B$ with eigenvalues $bq^{2i-d}$. \\
Let $B^*: V \to V$ denote the unique linear transformation such that for $0 \le i \le d,$
\[(V_{d-i}^*+V_{d-i+1}^*+\dots+V_d^*)\cap(V_i+V_{i+1}+\dots+V_d)\]
is an eigenspace of $B^*$ with eigenvalue $b^*q^{d-2i}$.\\
Then $(B, B^*)$ is a tridiagonal pair on $V$. The sequence $bq^{2i-d} (0 \le i \le d)$ is a standard ordering of the eigenvalues of $B$ and the sequence $b^*q^{d-2i} (0 \le i \le d)$ is a standard ordering of the eigenvalues of $B^*$.
\end{thm}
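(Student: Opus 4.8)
My plan is to verify directly the four defining conditions of Definition \ref{defitri} for $(B,B^*)$. Throughout I write $U_i=(V_0^*+\cdots+V_i^*)\cap(V_0+\cdots+V_{d-i})$ for the decomposition (iv) of Lemma \ref{decom}, on which $B$ acts as $bq^{2i-d}$, and $\tilde U_i=(V_{d-i}^*+\cdots+V_d^*)\cap(V_i+\cdots+V_d)$ for the decomposition (vi), on which $B^*$ acts as $b^*q^{d-2i}$; recall that in the present setting $\theta_i=aq^{2i-d}$ and $\theta_i^*=a^*q^{d-2i}$. Condition (i) and the eigenvalue assertions come essentially for free: by Lemma \ref{decom} the families $\{U_i\}$ and $\{\tilde U_i\}$ are direct sum decompositions of $V$, and since $q$ is not a root of unity (Corollary \ref{hequa1}) and $b,b^*\neq 0$, the scalars $bq^{2i-d}$ (resp.\ $b^*q^{d-2i}$) are pairwise distinct and nonzero. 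Hence the $U_i$ are exactly the eigenspaces of $B$, the $\tilde U_i$ exactly those of $B^*$, both operators are diagonalizable and invertible, and the two listed sequences are honest orderings of the eigenvalues.

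The core is conditions (ii) and (iii): that $B^*$ is block tridiagonal on the ordering $\{U_i\}_{i=0}^d$ and $B$ on $\{\tilde U_i\}_{i=0}^d$. I would first read off $q$-commutation relations between $(B,B^*)$ and $(A,A^*)$. By the lemma describing the action of $A,A^*$ on the six decompositions, on $\{U_i\}$ one has $(A-\theta_{d-i}I)U_i\subseteq U_{i+1}$ and $(A^*-\theta_i^*I)U_i\subseteq U_{i-1}$, so $A=abB^{-1}+R$ and $A^*=a^*bB^{-1}+L$, where $R,L$ are the raising and lowering parts and the diagonal parts are identified via $\theta_{d-i}=aq^{d-2i}$, $\theta_i^*=a^*q^{d-2i}$ and the fact that $B$ acts as $bq^{2i-d}$ on $U_i$. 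Since $BRB^{-1}=q^2R$ and $BLB^{-1}=q^{-2}L$, this yields $BA-q^2AB=(1-q^2)ab\,I$ and $BA^*-q^{-2}A^*B=(1-q^{-2})a^*b\,I$; the same computation on $\{\tilde U_i\}$ gives $B^*A-q^{-2}AB^*=(1-q^{-2})ab^*\,I$ and $B^*A^*-q^2A^*B^*=(1-q^2)a^*b^*\,I$. Grading $\mathrm{End}(V)$ by $\mathrm{Ad}(B)$-weight, where a weight-$2w$ operator shifts the blocks $U_i$ by $w$, these relations say that $A$ lives in weights $0,+2$ and $A^*$ in weights $0,-2$. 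Conditions (ii) and (iii) are then equivalent to the single statement that $B^*$ has $\mathrm{Ad}(B)$-weights only in $\{-2,0,+2\}$ and $B$ has $\mathrm{Ad}(B^*)$-weights only in $\{-2,0,+2\}$.

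This weight confinement is the main obstacle, and it is the only genuinely hard point. The coarse observation that $B^*$ preserves the flags $V_k+\cdots+V_d$ and $V_k^*+\cdots+V_d^*$ (being diagonal on $\{\tilde U_i\}$) is too weak, since it only delivers $B^*U_i\subseteq V$; one must rule out the higher weights $\pm4,\pm6,\dots$ in $B^*$ using the full system of relations above together with the $q$-Serre (tridiagonal) relations of $(A,A^*)$. The cleanest packaging is to invoke the Ito--Terwilliger theorem preceding this statement \cite{IT}: because $\theta_i=aq^{2i-d}$ and $\theta_i^*=a^*q^{d-2i}$, it equips $V$ with a $U_q(\widehat{sl_2})$-module structure in which $B,B^*$ are realized by group-like/Chevalley-type generators, and the three-block action is then read straight off the quantum-group relations. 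Finally, for the irreducibility condition (iv) I would show that $A$ and $A^*$ belong to the algebra generated by $B,B^*$ (most transparently through the $U_q(\widehat{sl_2})$ realization, the weight-$0$ parts of $A,A^*$ being $abB^{-1},a^*bB^{-1}$ and the weight-$\pm2$ parts being captured by the commutators with $B^*$), so that any subspace stable under $B,B^*$ is stable under $A,A^*$ and hence equals $0$ or $V$ by condition (iv) for the original pair. Assembling these four verifications shows $(B,B^*)$ is a tridiagonal pair and that $\{bq^{2i-d}\}$ and $\{b^*q^{d-2i}\}$ are standard orderings of its eigenvalues.
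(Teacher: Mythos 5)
First, a point of reference: the paper never proves this theorem --- it is quoted from \cite{IT} inside a chapter that the thesis explicitly presents ``without proofs'' --- so your proposal has to stand entirely on its own. The parts you actually carry out are correct: Lemma \ref{decom} guarantees that the two families of intersections are direct sum decompositions of $V$, so $B$ and $B^*$ are well defined, diagonalizable, and have the stated (pairwise distinct, since $q$ is not a root of unity and $b,b^*\neq 0$) eigenvalue orderings; and the four cross relations $BA-q^2AB=(1-q^2)ab\,I$, $BA^*-q^{-2}A^*B=(1-q^{-2})a^*b\,I$, $B^*A-q^{-2}AB^*=(1-q^{-2})ab^*\,I$, $B^*A^*-q^2A^*B^*=(1-q^2)a^*b^*\,I$ do follow, exactly as you compute, from the lemma describing the action of $A,A^*$ on decompositions (iv) and (vi).

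The genuine gap is the step you yourself flag as the only hard point, and your resolution of it does not work. None of the relations above couples $B$ with $B^*$, so they place no constraint on the $\mathrm{Ad}(B)$-weights of $B^*$ (or vice versa), and the theorem you invoke to close this hole does not supply what you need: the preceding result of \cite{IT} only equips $V$ with $U_q(\widehat{sl_2})$-module structures in which $A$ and $A^*$ act as $ay_1^-$ and $a^*y_0^-$ (resp. $ay_0^+$, $a^*y_1^+$); it says nothing about $B$ or $B^*$. The assertion that ``$B,B^*$ are realized by group-like/Chevalley-type generators'' is precisely the nontrivial identification one must prove --- namely that the eigenspaces of the relevant generator's action coincide with decomposition (iv) (resp.\ (vi)) --- and this identification is essentially the whole content of the theorem. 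Worse, the claim as phrased cannot be literally correct: if $B$ and $B^*$ were both realized by group-like generators $k_0,k_1$ of a \emph{single} module structure they would commute (the $k_i$ commute in $U_q(\widehat{sl_2})$), and then each eigenspace of $B$ would be invariant under both $B$ and $B^*$, violating condition (iv) of Definition \ref{defitri} whenever $d\geq 1$; so the realization must involve the two different module structures (or non-group-like elements), between which there are no ready-made ``quantum-group relations'' to read the three-block action off of. The same unproven identification undermines your irreducibility argument: a commutative algebra generated by two group-like images could never contain $A$ and $A^*$, so the claim $A,A^*\in\langle B,B^*\rangle$ ``through the $U_q(\widehat{sl_2})$ realization'' needs an actual argument (for instance, expressing the $\mathrm{Ad}(B)$-weight components of $B^*$ in terms of the raising and lowering parts of $A,A^*$), none of which is given.
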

Now we give two relations involving the tridiagonal pair $(A, A^*)$ which has a standard ordering of the eigevalues $\theta_i, i =0, 1, \dots, d$ of $A$ satisfying $\theta_i = aq^{2i-d}$ and a standard ordering of the eigenvalues $\theta_i^*, i = 0, 1, \dots, d$ of $A^*$ satisfying $\theta_i^*=a^*q^{d-2i}$ where $a, a^*$ are nonzero scalars in $\mathbb{K}$.
\begin{thm}\cite{IT}
Let $(A, A^*)$ denote a tridiagonal pair on $V$, let $\theta_0, \theta_1, \dots, \theta_d$ (resp. $\theta_0^*, \theta_1^*, \dots, \theta_d^*$) denote a standard ordering of the eigenvalues of $A$ (resp. $A^*$). Assume that $\theta_i = aq^{2i-d}, \theta_i^* = a^*q^{d-2i}, i =0, 1, \dots, d$ where $a, a^*$ are nonzero scalars in $\mathbb{K}$ then $A, A^*$ satisfy the $q$-Serre relations
\begin{eqnarray}
A^3A^*-[3]_qA^2A^*A+[3]_qA{A^*}A^2-A^*A^3&=&0,\\
{A^*}^3A-[3]_q{A^*}^2A{A^*}+[3]_qA^*A{A^*}^2-A{A^*}^3 &=&0.
\end{eqnarray}
\end{thm}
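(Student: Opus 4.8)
The plan is to invoke Theorem \ref{dinhly2} and then reduce the claim to a computation of the structure scalars from the eigenvalue data. Since $(A,A^*)$ is a TD pair, Theorem \ref{dinhly2} supplies scalars $\beta,\gamma,\gamma^*,\delta,\delta^*$ in $\mathbb{K}$ for which the tridiagonal relations (\ref{eq:dolan1}), (\ref{eq:dolan2}) hold. Expanding the commutator in (\ref{eq:dolan1}) gives
\[
A^3A^*-(\beta+1)A^2A^*A+(\beta+1)AA^*A^2-A^*A^3=\gamma[A^2,A^*]+\delta[A,A^*],
\]
and an analogous identity follows from (\ref{eq:dolan2}) with $A\leftrightarrow A^*$ and $\gamma^*,\delta^*$ in place of $\gamma,\delta$. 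Since $[3]_q=q^2+1+q^{-2}$, it therefore suffices to prove that the hypotheses force
\[
\beta=q^2+q^{-2},\qquad \gamma=\gamma^*=0,\qquad \delta=\delta^*=0,
\]
for then $\beta+1=[3]_q$, both right-hand sides vanish, and the two displayed identities are precisely the asserted $q$-Serre relations.

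First I would pin down $\beta$, assuming the diameter satisfies $d\ge 3$. The relation $\theta_{i-2}-\theta_{i+1}=(\beta+1)(\theta_{i-1}-\theta_i)$, established inside the proof of Theorem \ref{dinhly2} (see also Corollary \ref{ch}), holds for $2\le i\le d-1$. Substituting $\theta_i=aq^{2i-d}$ and cancelling the common nonzero factor $aq^{2i-d}$ (nonzero because $q$ is not a root of unity) yields $\beta+1=(q^{-4}-q^2)/(q^{-2}-1)=q^2+1+q^{-2}$, hence $\beta=q^2+q^{-2}$. Next I would use the recurrences $\gamma=\theta_{i-1}-\beta\theta_i+\theta_{i+1}$ and $\delta=\theta_i^2-\beta\theta_i\theta_{i-1}+\theta_{i-1}^2-\gamma(\theta_i+\theta_{i-1})$ (together with their starred counterparts) that appear in the same proof. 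Inserting $\theta_i=aq^{2i-d}$ and $\beta=q^2+q^{-2}$ gives $\gamma=aq^{2i-d}(q^2-\beta+q^{-2})=0$ and then $\delta=a^2q^{4i-2d}(1-\beta q^{-2}+q^{-4})=0$; the identical computation with $\theta_i^*=a^*q^{d-2i}$ gives $\gamma^*=\delta^*=0$. Equivalently, one may read off these four vanishings from Lemma \ref{lem:polypart} at $s=1$: matching $\theta_i=aq^{2i-d}$ with the normal form (\ref{eq:const1}) forces $\alpha=0$ and $c=0$, so that $\gamma_1=\alpha(2-q^2-q^{-2})=0$ and $\delta_1=\alpha^2(q-q^{-1})^2-bc(q^2-q^{-2})^2=0$.

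This disposes of the case $d\ge 3$, for which Theorem \ref{dinhly2} also guarantees that the scalars are unique, so the computed values are forced. For $d\le 2$ the scalars are no longer uniquely determined, and the remaining step is to verify that the choice $\beta=q^2+q^{-2}$, $\gamma=\gamma^*=\delta=\delta^*=0$ is still admissible: using the explicit prescriptions for $\gamma,\delta$ given for small $d$ in the proof of Theorem \ref{dinhly2}, one checks that these same values of $\gamma,\delta,\gamma^*,\delta^*$ result from $\beta=q^2+q^{-2}$ at the single (or vacuous) surviving index, so the $q$-Serre relations persist; for $d=0$ they hold trivially on the one-dimensional module. I expect the only genuine subtleties to be this small-diameter bookkeeping and the systematic verification that the powers of $q$ cancel. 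The heart of the matter, namely that a purely geometric eigenvalue sequence $aq^{2i-d}$ (and its dual $a^*q^{d-2i}$) annihilates $\gamma,\delta,\gamma^*,\delta^*$ and pins $\beta$ to $q^2+q^{-2}$, is immediate once $q$ is known not to be a root of unity.
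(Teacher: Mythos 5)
Your proposal is correct, but it takes a genuinely different route from the one the paper's own machinery suggests. The paper states this theorem without proof (citing [IT]), and the technique it actually deploys for statements of this kind is the projector argument of Chapter 3 (Lemma \ref{lem:polyroot}, Lemma \ref{lem:triplep} and the theorem containing (\ref{eq:higherdolan})): writing $C=A^3A^*-[3]_qA^2A^*A+[3]_qAA^*A^2-A^*A^3$ and $p(x,y)=(x-y)(x-q^2y)(x-q^{-2}y)$, one has $E_iCE_j=p(\theta_i,\theta_j)\,E_iA^*E_j$; the polynomial vanishes for $|i-j|\le 1$ because the eigenvalues are geometric, while $E_iA^*E_j=0$ for $|i-j|>1$ by the tridiagonal structure, so $C=0$ uniformly in $d$. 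Your route instead invokes Theorem \ref{dinhly2} and pins down the scalars $\beta,\gamma,\gamma^*,\delta,\delta^*$ from the eigenvalue recurrences; the computations you give ($\beta+1=[3]_q$, $\gamma=aq^{2i-d}(q^2-\beta+q^{-2})=0$, etc.) are all correct, and your small-diameter bookkeeping is the honest price of this route, since for $d\le 2$ the scalars are not unique and one must exhibit an admissible choice (which you do, correctly, via the paper's prescriptions in the proof of Theorem \ref{dinhly2}). Two minor points: first, the divisions in your $\beta$ computation need $q^2\ne 1$ (more generally $q^{2k}\ne 1$ for $1\le k\le d$), which follows not from any root-of-unity hypothesis in the statement but from the distinctness of the $d+1$ eigenvalues $aq^{2i-d}$; second, the unstarred recurrences for $\gamma,\delta$ are only derived in the paper's proof of Theorem \ref{dinhly2} in their starred form for (\ref{eq:dolan2}), so attaching them to (\ref{eq:dolan1}) rests on the $A\leftrightarrow A^*$ symmetry plus the uniqueness you cite — harmless here, since both the starred and unstarred sequences are geometric and every candidate value computes to zero. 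What each approach buys: yours reuses already-stated results and makes the role of the structure constants transparent, while the projector argument is shorter, needs no existence/uniqueness discussion or case split on $d$, and generalizes verbatim to the higher-order relations that are the paper's main subject.
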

In fact, the result of the above theorem is expressed more efficiently as follows
\begin{thm}\cite{Ter03}
\label{abcd}
Let $(A, A^*)$ denote a tridiagonal pair on $V$ of diameter $d$. Then the following are equivalent:\\
i) $A, A^*$ satisfy the $q$-Serre relations.\\
ii) There exist eigenvalue and dual eigenvalue sequences for $A, A^*$ which satisfy 
\[\theta_i = q^{2i}\theta, ~~ \theta_i^*=q^{2d-2i}\theta^*~~~(0 \le i \le d),\]
for some nonzero scalars $\theta, \theta^*$.
\end{thm}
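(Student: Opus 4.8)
The plan is to exploit the dictionary, established earlier in this section, between the $q$-Serre relations and the tridiagonal relations: by the discussion following the definition of the tridiagonal algebra, the $q$-Serre relations (\ref{1.3})--(\ref{1.4}) are precisely the tridiagonal relations (\ref{tri1})--(\ref{tri2}) with parameter sequence $\beta = q^2+q^{-2}$, $\gamma=\gamma^*=0$, $\delta=\delta^*=0$. Since $(A,A^*)$ is a tridiagonal pair, Theorem \ref{dinhly2} guarantees that it satisfies the tridiagonal relations for some sequence $\beta,\gamma,\gamma^*,\delta,\delta^*$, which is uniquely determined once $d\ge 3$ and is expressible through the eigenvalue and dual eigenvalue sequences. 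Hence both implications reduce to matching this parameter sequence against the target values, using the explicit eigenvalue formulas of Corollary \ref{hequa1}, equation (\ref{eq:const1}), and Lemma \ref{lem:polypart}.

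For (ii) $\Rightarrow$ (i) I would simply substitute the stated sequences $\theta_i = q^{2i}\theta$ and $\theta_j^* = q^{2d-2j}\theta^*$ into the formulas of this section. The ratio in Corollary \ref{ch} evaluates to $q^2+q^{-2}+1$, forcing $\beta = q^2+q^{-2}$. The recurrence combination $\theta_{i+1}-\beta\theta_i+\theta_{i-1}$ obtained in the proof of Theorem \ref{dinhly2} vanishes identically for a pure power, giving $\gamma=0$, and likewise $\gamma^*=0$ from the dual sequence; the quadratic combination $\theta_i^2-\beta\theta_i\theta_{i-1}+\theta_{i-1}^2$ vanishes as well, giving $\delta = 0$ and $\delta^*=0$. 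Thus the parameter sequence of $(A,A^*)$ is exactly the $q$-Serre one, which by the uniqueness in Theorem \ref{dinhly2} (valid for $d\ge 3$) identifies the tridiagonal relations of $(A,A^*)$ with (\ref{1.3})--(\ref{1.4}).

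For (i) $\Rightarrow$ (ii) I start from the fact that the $q$-Serre relations force $\beta=q^2+q^{-2}$ and $\gamma=\delta=\gamma^*=\delta^*=0$. By Corollary \ref{hequa1} and (\ref{eq:const1}) the eigenvalues take the form $\theta_i = \alpha + bq^{2i-d} + cq^{d-2i}$, and Lemma \ref{lem:polypart} with $s=1$ records that the tridiagonal $\gamma$ and $\delta$ equal $\alpha(2-q^2-q^{-2})$ and $\alpha^2(q-q^{-1})^2 - bc(q^2-q^{-2})^2$ respectively. Since $q^4\ne 1$, the vanishing of $\gamma$ forces $\alpha=0$, and then the vanishing of $\delta$ forces $bc=0$; hence one of $b,c$ is zero and $\theta_i$ is a single power of the form $q^{\pm 2i}$. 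Reversing the standard ordering of the eigenspaces of $A$ if necessary (the reverse of a standard ordering is again standard, as used for the shape vector), I may normalise to $\theta_i = \theta q^{2i}$. Running the identical argument for $A^*$, and if needed reversing the standard ordering of its eigenspaces to obtain the decreasing normalisation, yields $\theta_j^* = \theta^* q^{2d-2j}$, which is condition (ii).

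The main obstacle I anticipate is the step $\delta=0 \Rightarrow bc=0$ together with the bookkeeping of orderings: one must use $q^4\ne 1$ to conclude that exactly one exponential survives, and then invoke the freedom to reverse standard orderings of $A$ and of $A^*$ independently in order to arrange the specific increasing/decreasing pattern demanded in (ii)---legitimate precisely because (ii) only asserts the existence of such sequences. A secondary point requiring care is the diameter hypothesis: the uniqueness of the parameter sequence in Theorem \ref{dinhly2} is available only for $d\ge 3$, so the degenerate cases $d\le 2$, where there are too few eigenvalues to pin down the recurrence, should be checked directly, the relations reducing there to low-degree identities in $A,A^*$.
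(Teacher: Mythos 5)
A preliminary remark: the thesis states Theorem \ref{abcd} without proof --- it is quoted from \cite{Ter03} --- so your argument can only be judged against the toolkit the thesis develops, not against an internal proof. For diameter $d\ge 3$ your proof is correct: the identification of the $q$-Serre relations with the tridiagonal relations (\ref{tri1})--(\ref{tri2}) at the parameter sequence $(\beta,\gamma,\gamma^*,\delta,\delta^*)=(q^2+q^{-2},0,0,0,0)$, the appeal to Theorem \ref{dinhly2} for existence and (for $d\ge 3$) uniqueness of that sequence, and the translation into eigenvalue data via Corollary \ref{ch}, Corollary \ref{hequa1} and Lemma \ref{lem:polypart} are all legitimate, and the computations $\gamma=0\Rightarrow\alpha=0$ (since $q^2\neq 1$), then $\delta=0\Rightarrow bc=0$ (since $q^4\neq 1$), followed by independent reversal of the two standard orderings, are sound. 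You are also right to read (\ref{eq:const1}) as permitting $b$ or $c$ to vanish; the sentence in the thesis declaring $b,c$ nonzero is inaccurate, and the $q$-Serre situation is exactly where it fails.

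The genuine gap is the case $d\le 2$, which you flag but never carry out, and inside your framework it cannot be carried out: for $d\le 2$ the parameter sequence of Theorem \ref{dinhly2} is not unique and the ratio condition of Corollary \ref{ch} is vacuous, so ``matching parameter sequences'' extracts no information from (i), and in particular the implication (i)$\Rightarrow$(ii) has no engine at low diameter. The mechanism that works --- uniformly in $d$, so it would also replace your $d\ge 3$ argument --- is the projector computation the thesis itself uses in Chapter 3 (the $r=1$ instance of the two-variable polynomial argument). Writing $\Delta$ for the left-hand side of (\ref{1.3}) and $E_h$ for the primitive idempotents of $A$, one has
\[
E_h\Delta E_k=(\theta_h-\theta_k)\bigl(\theta_h^2-(q^2+q^{-2})\theta_h\theta_k+\theta_k^2\bigr)\,E_hA^*E_k,
\]
where $E_hA^*E_k=0$ for $|h-k|>1$ by Lemma \ref{lem:triplep}, while $E_hA^*E_k\neq 0$ for $|h-k|=1$ by irreducibility (otherwise $V_0+\cdots+V_h$ or $V_{h+1}+\cdots+V_d$ would be a proper invariant subspace). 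Hence (\ref{1.3}) holds if and only if $\theta_{h+1}\in\{q^2\theta_h,\,q^{-2}\theta_h\}$ for every $h$; distinctness of the eigenvalues and $q^4\neq 1$ then force all these ratios to point the same way, giving a geometric progression, and the same computation applied to $A^*$ yields (ii) after reversing orderings. Finally, note that your deferred ``direct check'' is not a formality: at $d=0$ the pair $A=0$, $A^*=\lambda I$ on a one-dimensional space is a tridiagonal pair satisfying (i) vacuously while (ii) demands $\theta\neq 0$, so the equivalence as stated requires $d\ge 1$ (or nonzero eigenvalues) --- a qualification any complete proof has to surface.
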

\begin{thm}\cite{IT}
Let $\mathbb{K}$ denote an algebraically closed field with characteristic 0. Let $V$ denote a vector space over $\mathbb{K}$ with finite positive dimension and let $(A, A^*)$ denote a tridiagonal pair on $V$ of diameter $d$. Let $(\rho_0, \rho_1, \dots, \rho_d)$ denote the corresponding shape vector. Assume $A, A^*$ satisfy the $q$-Serre relations then the entries in this shape vector are upper bounded by binomial coefficients as follows
\[\rho_i \le \left(\begin{array}{c}
d\\
i
\end{array}
\right)~~~~(0 \le i \le d).\]
In particular, $\rho_0 = \rho_d = 1$ for such tridiagonal pairs.
\end{thm}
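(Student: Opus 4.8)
The plan is to transport the combinatorial statement about the shape vector into the representation theory of $U_q(\widehat{sl_2})$, where $\rho_i$ becomes a weight multiplicity, and then to bound those multiplicities by binomial coefficients through a short generating-function estimate. The representation-theoretic input is already assembled in the preceding theorems of \cite{IT}.

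First I would normalize the eigenvalues. By Theorem~\ref{abcd}, the hypothesis that $A,A^*$ satisfy the $q$-Serre relations provides eigenvalue and dual eigenvalue sequences of the form $\theta_i=q^{2i}\theta$ and $\theta_i^*=q^{2d-2i}\theta^*$. Replacing $(A,A^*)$ by $(rA+sI,\,r^*A^*+s^*I)$ for suitable nonzero $r,r^*$ and scalars $s,s^*$ — which by the Remark preceding Theorem~\ref{dl2} preserves the tridiagonal-pair property and leaves the shape vector unchanged — I can bring the eigenvalues into the symmetric form $\theta_i=aq^{2i-d}$, $\theta_i^*=a^*q^{d-2i}$ with $a,a^*\ne 0$. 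This is exactly the hypothesis of the Ito--Terwilliger theorem of \cite{IT} quoted above, so $V$ acquires the structure of an irreducible finite dimensional $U_q(\widehat{sl_2})$-module on which $ay_1^-$ acts as $A$ and $a^*y_0^-$ acts as $A^*$, in the presentation of Theorem~\ref{cc1}.

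Next I would invoke the classification of finite dimensional irreducible $U_q(\widehat{sl_2})$-modules via Drinfeld polynomials (Chari--Pressley): the restriction of $V$ to the horizontal $U_q(sl_2)$-subalgebra $\langle e_1^{\pm},K_1\rangle$ has the same character as a tensor product of evaluation modules $V^{(n_1)}\otimes\cdots\otimes V^{(n_t)}$, where $V^{(n_j)}$ is the irreducible $U_q(sl_2)$-module of dimension $n_j+1$. Since the $d+1$ eigenvalues $\theta_i=aq^{2i-d}$ are mutually distinct, the range of weights forces $d=n_1+\cdots+n_t$. The decisive step is then to identify the shape vector with the weight multiplicities: one of the six split subspaces, say $U_i=(V_0^*+\cdots+V_i^*)\cap(V_i+\cdots+V_d)$, is the weight space of weight $d-2i$ for this $U_q(sl_2)$-action, and since $\dim U_i=\rho_i$ by the lemmas above, $\rho_i$ equals the coefficient of $z^i$ in
\[
\prod_{j=1}^{t}\bigl(1+z+z^2+\cdots+z^{n_j}\bigr).
\]
Establishing this identification — matching the eigenspace/split grading of the tridiagonal pair to the weight grading of the tensor product — is the main obstacle, and is where the work of \cite{IT} is genuinely used.

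Finally the bound is a short coefficientwise estimate. For each factor one has $1+z+\cdots+z^{n_j}\preceq(1+z)^{n_j}$ in the coefficientwise order on polynomials with nonnegative coefficients, because the coefficient $1$ of $z^i$ on the left is at most $\binom{n_j}{i}$ for $0\le i\le n_j$. This order is preserved under multiplication of polynomials with nonnegative coefficients, since a convolution of dominated nonnegative sequences is again dominated, so
\[
\prod_{j=1}^{t}\bigl(1+z+\cdots+z^{n_j}\bigr)\ \preceq\ \prod_{j=1}^{t}(1+z)^{n_j}=(1+z)^{d}.
\]
Comparing the coefficient of $z^i$ on both sides yields $\rho_i\le\binom{d}{i}$ for $0\le i\le d$. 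Reading off $i=0$ and $i=d$, the constant and leading coefficients of the product are both $1$, which gives $\rho_0=\rho_d=1$.
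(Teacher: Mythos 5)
There is nothing in the thesis to compare your argument against: this theorem sits in Chapter~1, which the author explicitly presents ``without proofs'', and it is quoted verbatim from \cite{IT}. So the only question is whether your reconstruction stands on its own, and in outline it does --- the normalization, the invocation of the $U_q(\widehat{sl_2})$-module structure, the Chari--Pressley tensor-product decomposition \cite{CP}, and the final coefficientwise estimate are all sound. Two remarks before the substantive point. First, the affine renormalization is unnecessary: Theorem \ref{abcd} gives $\theta_i=q^{2i}\theta=(\theta q^{d})\,q^{2i-d}$ and $\theta_i^*=q^{2d-2i}\theta^*=(\theta^*q^{d})\,q^{d-2i}$, so the hypotheses of the module theorem of \cite{IT} hold verbatim with $a=\theta q^{d}$, $a^*=\theta^*q^{d}$. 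Second, and more seriously, the step you defer to \cite{IT} --- matching the shape grading to the weight grading --- is not an optional refinement but the crux, and your step~3 already uses it implicitly: you cannot conclude $d=n_1+\cdots+n_t$ from ``the range of weights'' without knowing how the $d+1$ eigenvalues of $A$ sit against the weights. Deferring exactly that point to \cite{IT} is uncomfortably close to citing the theorem you are trying to prove, since the shape bound is itself a result of that series of papers.

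The gap is, however, closable with material the thesis already quotes, and more directly than through the split decomposition of Lemma \ref{decom}. Under the isomorphism of Theorem \ref{cc1} one has $y_1^-=K_1^{-1}+e_1^-$, so on the module produced by \cite{IT} the operator $A$ acts as $a(K_1^{-1}+e_1^-)$. The relation $K_1e_1^-K_1^{-1}=q^{-2}e_1^-$ shows that $e_1^-$ maps each $K_1$-weight space $V_\mu$ into $V_{\mu-2}$; hence, listing weights in decreasing order, $A$ is block triangular with diagonal blocks $a\,q^{-\mu}\,\mathrm{id}_{V_\mu}$ (up to one global sign if the module is not of type $1$, which changes nothing below). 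Its characteristic polynomial is therefore $\prod_{\mu}(x-aq^{-\mu})^{\dim V_\mu}$, and since $A$ is diagonalizable (condition (i) of Definition \ref{defitri}) and the values $q^{-\mu}$ are pairwise distinct ($q$ not a root of unity being implicit throughout this section), the eigenspace of $\theta_i=aq^{2i-d}$ has dimension exactly $\dim V_{d-2i}$. This yields both $\rho_i=\dim V_{d-2i}$ and, by counting distinct weights, $d=n_1+\cdots+n_t$. With that identification in place your conclusion is exactly right: the weight multiplicities of $V^{(n_1)}\otimes\cdots\otimes V^{(n_t)}$ are the coefficients of $\prod_{j}\bigl(1+z+\cdots+z^{n_j}\bigr)$, the coefficientwise bound by $(1+z)^{d}$ gives $\rho_i\le\binom{d}{i}$, and the extreme coefficients give $\rho_0=\rho_d=1$.
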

\begin{lem}\cite{AC}
Let $(A, A^*)$ denote a tridiagonal pair on $V$, and let $D$ and $D^*$ denote the subalgebras of $End(V)$ generated by $A$ and $A^*$, respectively. Fix standard orderings of the eigenspaces $V_0, V_1, \dots, V_d$ and of the eigenspaces $V_0^*, V_1^*, \dots, V_d^*$ of $A^*$. Then the following are equivalent:\\
i) $(A, A^*)$ is a Leonard pair.\\
ii) $V=Dv^*$ for some nonzero $v^* \in V_0^*.$\\
iii) $V=D^*v$ for some nonzero $v \in V_d.$
\end{lem}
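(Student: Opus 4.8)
The heart of the matter is a dimension count for the reverse implications together with an explicit \emph{climbing} argument for the forward ones. The plan is to prove $(\mathrm{ii})\Rightarrow(\mathrm{i})$ and $(\mathrm{iii})\Rightarrow(\mathrm{i})$ by counting dimensions, and then $(\mathrm{i})\Rightarrow(\mathrm{ii})$ and $(\mathrm{i})\Rightarrow(\mathrm{iii})$ using the split decompositions of Lemma \ref{decom}.

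First I would record two elementary facts. Since $A$ is diagonalizable with the $d+1$ distinct eigenvalues $\theta_0,\dots,\theta_d$, its minimal polynomial has degree $d+1$, so the subalgebra $D$ has dimension $d+1$; likewise $\dim D^*=\delta+1=d+1$, recalling $\delta=d$ by Lemma \ref{bode1}. Also $V=\bigoplus_{i=0}^d V_i$ with each $V_i\neq 0$, so $\dim V\ge d+1$, and $\dim V_i=\dim V_i^*=\rho_i$. For $(\mathrm{ii})\Rightarrow(\mathrm{i})$: the surjection $D\to Dv^*$, $X\mapsto Xv^*$, yields $\dim V=\dim Dv^*\le \dim D=d+1$, hence $\dim V=d+1$, which forces every $\rho_i=1$. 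Thus all eigenspaces of both $A$ and $A^*$ are one-dimensional, and by the characterization of Leonard pairs among tridiagonal pairs recalled above, $(A,A^*)$ is a Leonard pair. The implication $(\mathrm{iii})\Rightarrow(\mathrm{i})$ is identical, with $D^*$ and $v$ in place of $D$ and $v^*$.

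For $(\mathrm{i})\Rightarrow(\mathrm{ii})$ I would use the split decomposition $U_i=(V_0^*+\cdots+V_i^*)\cap(V_i+\cdots+V_d)$ of Lemma \ref{decom}(iii), for which $U_0=V_0^*$, $(A-\theta_i I)U_i\subseteq U_{i+1}$, and $(A^*-\theta_i^* I)U_i\subseteq U_{i-1}$. In the Leonard case every $U_i$ is one-dimensional. Starting from a nonzero $v^*\in V_0^*=U_0$, the vectors $w_i:=(A-\theta_{i-1}I)\cdots(A-\theta_0 I)v^*$ lie in $U_i$ and, crucially, are nonzero: if some step vanished, then $U_0+\cdots+U_i$ would be a proper nonzero subspace invariant under both $A$ and $A^*$, by the two displayed inclusions, contradicting condition (iv) of Definition \ref{def1}. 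Since the $U_i$ are independent, the $d+1$ vectors $w_i$ form a basis of $V$; as each $w_i\in Dv^*$, we conclude $V=Dv^*$. For $(\mathrm{i})\Rightarrow(\mathrm{iii})$ I would run the same argument on the \emph{reversed} decomposition $\widetilde U_i:=U_{d-i}$, for which $\widetilde U_0=V_d$ and $A^*$ now plays the role of the raising operator, climbing from $v\in V_d$ by successive application of $A^*$ to obtain $V=D^*v$.

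The genuinely non-formal step, and the one I expect to be the main obstacle, is the nonvanishing of the successive raising vectors $w_i$. Everything else is bookkeeping with dimensions and with the action of $A,A^*$ on the split decomposition; but the fact that the climb never stalls is exactly where irreducibility, condition (iv) of the tridiagonal pair, must be invoked, via the invariant-subspace argument sketched above. Once that is in hand, the one-dimensionality of the $U_i$ in the Leonard case turns each nonzero raising map into an isomorphism, and the two forward implications follow at once.
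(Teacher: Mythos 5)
The thesis states this lemma without proof---it is quoted directly from \cite{AC}---so there is no in-paper argument to compare yours against; your proof has to stand on its own, and it does. The dimension count for (ii)$\Rightarrow$(i) and (iii)$\Rightarrow$(i) is sound: $D$ and $D^*$ each have dimension $d+1$ (minimal polynomials of degree $d+1$, using $\delta=d$ from Lemma \ref{bode1}), the evaluation map $X\mapsto Xv^*$ gives $\dim V\le d+1$, while $V=\bigoplus_{i}V_i$ with every $V_i\neq 0$ gives $\dim V\ge d+1$; equality forces all $\rho_i=1$, and since $\dim V_i^*=\rho_i$ as well, the theorem recalled in the paper just before this lemma (Leonard pair $\Leftrightarrow$ tridiagonal pair with all eigenspaces one-dimensional) finishes the job. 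For (i)$\Rightarrow$(ii), the split decomposition (iii) of Lemma \ref{decom} has exactly the properties you invoke: $U_0=V_0^*$, $(A-\theta_iI)U_i\subseteq U_{i+1}$, $(A^*-\theta_i^*I)U_i\subseteq U_{i-1}$, and $\dim U_i=\rho_i=1$ because these dimensions coincide with the shape vector. Your non-stalling argument is the crucial step and it is correct: if $(A-\theta_iI)w_i=0$ with $w_i$ spanning $U_i$, then $W=U_0+\cdots+U_i$ satisfies $AW\subseteq W$ and $A^*W\subseteq W$, contains $v^*\neq 0$, and---since stalling can only occur at some $i\le d-1$---misses the nonzero summand $U_d$ of the direct sum, so it is proper, contradicting condition (iv) of Definition \ref{def1}. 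The $d+1$ nonzero vectors $w_i$, one per summand, then form a basis contained in $Dv^*$. The reversed decomposition $\widetilde U_i=U_{d-i}$, with $\widetilde U_0=U_d=V_d$ and $A^*$ acting as the raising operator, handles (i)$\Rightarrow$(iii) by the same mechanism. One small presentational point for a final write-up: make explicit that the one-dimensionality of $U_i$ is what upgrades the single equation $(A-\theta_iI)w_i=0$ to the subspace statement $(A-\theta_iI)U_i=0$, which is what the invariance of $W$ under $A$ actually requires; you use this implicitly, and it is the only place where the Leonard hypothesis (rather than mere tridiagonality) enters the forward directions.
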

The preceding result suggests the following generalization of a Leonard pair, from which the case of Leonard pair is excluded to focus on what is new.
\begin{defn}
Let $(A, A^*)$ denote a tridiagonal pair on $V$. Let $D$ and $D^*$ denote the subalgebras of $End(V)$ generated by $A$ and $A^*$, respectively. We say that $(A, A^*)$ is mild whenever $(A, A^*)$ is not a Leonard pair, but $\rho_0 = \rho_d =1$ and $V = Dv^*+D^*v$ for some nonzero $v^* \in V_0^*$ and $v \in V_d$. 
\end{defn}
\begin{thm}\cite{AC}
Suppose that $(A, A^*)$ is mild, and pick nonzero vectors $v^* \in V_0^*$ and $v \in V_d$. Define
\[v_i^*=(A-\theta_{i-1}I)\dots(A-\theta_1I)(A-\theta_0I)v^*~~~(0 \le i \le d-1),\]
\[v_i= (A^*-\theta^*_{i+1}I)\dots(A^*-\theta^*_{d-1}I)(A^*-\theta_d^*I)v~~~(1 \le i \le d).\]
Then $\{v_0^*, v_1, v_1^*, v_2, v_2^*, \dots, v_{d-1}, v_{d-1}^*, v_d\}$ is basis for $V$.
\label{abc}
\end{thm}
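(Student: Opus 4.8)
The plan is to work with the third of the six decompositions of $V$ furnished by Lemma~\ref{decom}, namely
\[U_i=(V_0^*+\dots+V_i^*)\cap(V_i+\dots+V_d),\qquad 0\le i\le d,\]
for which $V=U_0\oplus\dots\oplus U_d$. First I would record the endpoints $U_0=V_0^*$ and $U_d=V_d$, so that $v^*\in U_0$ and $v\in U_d$, and recall from the lemma describing the action of $A,A^*$ on this decomposition that $(A-\theta_iI)U_i\subseteq U_{i+1}$ (a raising map) while $(A^*-\theta_i^*I)U_i\subseteq U_{i-1}$ (a lowering map). Applying these one step at a time gives $v_i^*\in U_i$ for $0\le i\le d$ and $v_i\in U_i$ for $0\le i\le d$, where I extend the definitions so that the auxiliary vectors $v_d^*\in U_d$ and $v_0\in U_0$ are also available; the essential point is that each defining vector lands in a single graded summand.

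Next I would show that the two cyclic spaces are homogeneous for this grading. Since $A$ is diagonalizable with $d+1$ eigenvalues its minimal polynomial has degree $d+1$, so $Dv^*=\mathrm{span}\{v^*,Av^*,\dots,A^dv^*\}$; because $v_i^*$ is a monic degree-$i$ polynomial in $A$ applied to $v^*$, the passage between the two spanning sets is unitriangular, whence $Dv^*=\mathrm{span}\{v_0^*,\dots,v_d^*\}$. As each $v_i^*$ lies in $U_i$ and the $U_i$ are independent, $Dv^*$ is graded with $i$-th piece $\mathrm{span}\{v_i^*\}$. The identical argument applied to $A^*$ yields $D^*v=\mathrm{span}\{v_0,\dots,v_d\}$, graded with $i$-th piece $\mathrm{span}\{v_i\}$.

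Now I would exploit the mildness hypothesis $V=Dv^*+D^*v$. A sum of two graded subspaces is graded, with $i$-th piece the sum of the pieces, so intersecting with $U_i$ gives
\[U_i=\mathrm{span}\{v_i^*\}+\mathrm{span}\{v_i\},\qquad 0\le i\le d,\]
whence $\rho_i=\dim U_i\le 2$ for every $i$. Since $(A,A^*)$ is not a Leonard pair, the corollary asserting that a non-Leonard tridiagonal pair has $\rho_i\ge 2$ for $1\le i\le d-1$ applies, and mildness gives $\rho_0=\rho_d=1$; combining these forces $\rho_i=2$ for $1\le i\le d-1$ and $\dim V=\sum_i\rho_i=2d$. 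For $1\le i\le d-1$ the $2$-dimensional space $U_i$ is spanned by the two vectors $v_i^*,v_i$, which therefore form a basis of $U_i$ (in particular both are nonzero and independent); for $i=0$ the line $U_0$ is spanned by $v_0^*=v^*\ne 0$, and for $i=d$ the line $U_d$ is spanned by $v_d=v\ne 0$. Taking the union of these bases across $V=\bigoplus_iU_i$ produces exactly $\{v_0^*,v_1,v_1^*,\dots,v_{d-1},v_{d-1}^*,v_d\}$, which is thus a basis of $V$.

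The step to get right is the second one: recognizing that the cyclic modules $Dv^*$ and $D^*v$ are \emph{homogeneous} for the split decomposition. This is what converts the single global identity $V=Dv^*+D^*v$ into the $d+1$ local identities $U_i=\mathrm{span}\{v_i^*,v_i\}$. Once the grading is in place, the shape-vector bounds do the remaining work, and the dimension count $\rho_i=2$ simultaneously delivers the nonvanishing and the linear independence of the middle vectors, so no separate nonvanishing computation is required.
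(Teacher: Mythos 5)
Your proof is correct. Note first that the thesis itself gives no proof of Theorem \ref{abc}: it is quoted from \cite{AC} in the review chapter, so there is no in-paper argument to compare against. Judged on its own, your argument is complete, and every ingredient you invoke is available in the surrounding text: the split decomposition iii) of Lemma \ref{decom} with its endpoint identifications $U_0=V_0^*$, $U_d=V_d$; the raising/lowering action $(A-\theta_iI)U_i\subseteq U_{i+1}$, $(A^*-\theta_i^*I)U_i\subseteq U_{i-1}$ from the lemma following it; the lemma asserting that $\dim U_i=\rho_i$ independently of which of the six decompositions is used; and the corollary that a tridiagonal pair which is not a Leonard pair has $\rho_i\ge 2$ for $1\le i\le d-1$. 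The pivot of your argument --- that $Dv^*$ and $D^*v$ are homogeneous for $\bigoplus_i U_i$ because their natural spanning vectors $v_i^*$, $v_i$ (monic polynomials of the correct degrees applied to $v^*$, $v$) each lie in a single summand, so the one global identity $V=Dv^*+D^*v$ localizes to $U_i=\mathrm{span}\{v_i^*\}+\mathrm{span}\{v_i\}$ --- is sound, and it is exactly what makes the dimension count $\rho_i\le 2$ meet the lower bound $\rho_i\ge 2$ and deliver independence and nonvanishing in one stroke. One sentence is worth adding for completeness: the definition of mild only guarantees $V=Dv^*+D^*v$ for \emph{some} nonzero $v^*\in V_0^*$ and $v\in V_d$, whereas the theorem lets you pick arbitrary nonzero vectors; this is harmless precisely because mildness includes $\rho_0=\rho_d=1$, so $V_0^*$ and $V_d$ are one-dimensional, any choices are scalar multiples of the witnessing ones, and the cyclic subspaces $Dv^*$, $D^*v$ are unchanged.
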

\begin{defn}
Let $(A, A^*)$ denote a tridiagonal pair on $V$. Then $(A, A^*)$ is said to be of $q$-Serre type whenever the following hold:
\begin{eqnarray*}
A^3A^*-[3]_qA^2A^*A+[3]_qAA^*A^2-A^*A^3&=&0,\\
{A^*}^3A-[3]_q{A^*}^2AA^*+[3]_qA^*A{A^*}^2-A{A^*}^3&=&0.
\end{eqnarray*}
\end{defn}
The action of a mild tridiagonal pair of $q$-Serre type on the basis of Theorem \ref{abc} was studied more deeply by H. Alnajjar and B. Curtin.
\begin{thm}\cite{AC}
\label{abc2}
Let $(A, A^*)$ denote a mild tridiagonal pair on $V$ of $q$-Serre type with diameter $d \ge 3$. Fix standard orderings of the eigenspaces of $A$ and $A^*$ for which the corresponding eigenvalue and dual eigenvalue sequences satisfy ii) of Theorem \ref{abcd} for some nonzero scalars $\theta, \theta^*$. Define $v_i^*$ and $v_i~ (0 \le i \le d)$ as in Theorem \ref{abc}. Then there exist nonzero scalars $\lambda, \mu, \mu^* \in \mathbb{K}$ such that
\begin{eqnarray*}
Av_i^*&=& q^{2i}\theta v_i^*+v_{i+1}^*~ (0 \le i \le d-2),\\
Av^*_{d-1}&=& q^{2d-2}\theta v^*_{d-1}+\gamma_d\mu^*v_d,\\
Av_i&=&q^{2i}\theta v_i+\lambda_iv_{i+1}+\gamma_{d-i}\mu v^*_{i+1}~ (1 \le i \le d-2),\\
Av_{d-1}&=&q^{2d-2}\theta v_{d-1}+(\lambda_{d-1}+\gamma_{d-1}[2]_q\mu\mu^*)v_d,\\
Av_d&=&q^{2d}\theta v_d,\\
A^*v_i&=& q^{2d-2i}\theta^* v_i+v_{i-1}~ (2 \le i \le d),\\
A^*v_{1}&=& q^{2d-2}\theta^* v_{1}+\gamma_d\mu v_0^*,\\
A^*v_i^*&=&q^{2d-2i}\theta^* v_i^*+\lambda_{d-i}v_{i-1}^*+\gamma_{i}\mu^* v_{i-1}~ (2 \le i \le d-1),\\
A^*v_{1}^*&=&q^{2d-2}\theta^* v_{1}^*+(\lambda_{d-1}+\gamma_{d-1}[2]_q\mu\mu^*)v_0^*,\\
A^*v_0^*&=&q^{2d}\theta^* v_0^*,
\end{eqnarray*}
where
\begin{eqnarray*}
\lambda_i &=&[i]_q[d-i]_q\lambda ~ (1 \le i \le d-1),\\
\gamma_i &=& \frac{[i]_q![d-1]_q!}{[2]_q[d-i+1]_q}\lambda^{i-2}~ (2 \le i \le d).
\end{eqnarray*}
\end{thm}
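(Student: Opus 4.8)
The strategy is to pass to the split decomposition attached to $(A,A^*)$, where $A$ becomes ``diagonal plus raising'' and $A^*$ ``diagonal plus lowering'', and then to force all off-diagonal coefficients by the $q$-Serre relations. \emph{Setting up the split decomposition.} For $0\le i\le d$ put $U_i=(V_0^*+\cdots+V_i^*)\cap(V_i+\cdots+V_d)$, so that $V=U_0\oplus\cdots\oplus U_d$; by Lemma \ref{decom} and the description of how $A,A^*$ act on this decomposition one has $(A-\theta_i I)U_i\subseteq U_{i+1}$ and $(A^*-\theta_i^* I)U_i\subseteq U_{i-1}$, with $U_{-1}=U_{d+1}=0$. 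Let $F_i$ be the projection onto $U_i$ and set $\Theta=\sum_i\theta_i F_i$, $R=A-\Theta$, $\Theta^*=\sum_i\theta_i^* F_i$, $L=A^*-\Theta^*$, so $RU_i\subseteq U_{i+1}$ and $LU_i\subseteq U_{i-1}$. Since $v^*\in V_0^*=U_0$ and $v\in V_d=U_d$, the defining products give $v_i^*\in U_i$ for $0\le i\le d-1$ and $v_i\in U_i$ for $1\le i\le d$; by Theorem \ref{abc} these $2d$ vectors form a basis, whence $U_0=\langle v_0^*\rangle$, $U_d=\langle v_d\rangle$ and $U_i=\langle v_i^*,v_i\rangle$ for $1\le i\le d-1$ (the shape vector is $(1,2,\dots,2,1)$). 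This already fixes the \emph{shape} of every displayed identity. Moreover $Rv_i^*=(A-\theta_i I)v_i^*=v_{i+1}^*$ and $Lv_i=(A^*-\theta_i^* I)v_i=v_{i-1}$ straight from the definitions, and $v_d\in V_d$, $v_0^*\in V_0^*$ are eigenvectors; with $\theta_i=q^{2i}\theta$, $\theta_i^*=q^{2d-2i}\theta^*$ from Theorem \ref{abcd} this yields at once the formulas $Av_i^*=q^{2i}\theta v_i^*+v_{i+1}^*$ ($0\le i\le d-2$), $A^*v_i=q^{2d-2i}\theta^* v_i+v_{i-1}$ ($2\le i\le d$), $Av_d=q^{2d}\theta v_d$ and $A^*v_0^*=q^{2d}\theta^* v_0^*$.

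\emph{Naming the unknowns.} The two ``collapses'' $Rv_{d-1}^*\in U_d$ and $Lv_1\in U_0$ are scalar multiples, $v_d^*:=Rv_{d-1}^*=\xi v_d$ and $v_0:=Lv_1=\eta v_0^*$, and the two cross-actions read $Rv_i=\lambda_i v_{i+1}+c_i v_{i+1}^*$ ($1\le i\le d-2$) and $Lv_i^*=\lambda_i' v_{i-1}^*+c_i' v_{i-1}$ ($2\le i\le d-1$). I define $\lambda$ by $\lambda_1=[1]_q[d-1]_q\lambda$, and $\mu,\mu^*$ by $\eta=\gamma_d\mu$, $\xi=\gamma_d\mu^*$. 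The theorem is then equivalent to the assertions $\lambda_i=[i]_q[d-i]_q\lambda$, $c_i=\gamma_{d-i}\mu$, their duals, and the precise values of the boundary scalars.

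\emph{The recursion from $q$-Serre.} Since the pair is of $q$-Serre type, (\ref{1.3}) holds as the operator identity $[A,[A,[A,A^*]_q]]_{q^{-1}}=0$ (a direct expansion identifies the nested bracket, in the conventions $[X,Y]_q=qXY-q^{-1}YX$, $[X,Y]=XY-YX$, with the left side of (\ref{1.3})); likewise (\ref{1.4}) gives $[A^*,[A^*,[A^*,A]_q]]_{q^{-1}}=0$. I would evaluate these identities on the basis vectors and extract $U_j$-components. Using $A=\Theta+R$, $A^*=\Theta^*+L$ together with $\Theta R=q^2 R\Theta$, $\Theta^* R=q^{-2}R\Theta^*$ (and the mirror relations for $L$), each evaluation collapses to a short linear relation among consecutive $\lambda_i$'s and $c_i$'s. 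The resulting first-order recursions integrate to $\lambda_i=[i]_q[d-i]_q\lambda$ and to the $q$-factorial expression $\gamma_i=\frac{[i]_q![d-1]_q!}{[2]_q[d-i+1]_q}\lambda^{i-2}$. The $A^*$-actions on the $v_i^*$ follow by the symmetry that exchanges $(A,\{v_i^*\})$ with $(A^*,\{v_i\})$, $i\leftrightarrow d-i$ and $\mu\leftrightarrow\mu^*$; here $\lambda_i$ is invariant and $\gamma_i\leftrightarrow\gamma_{d-i}$, matching the displayed coefficients $\lambda_{d-i}$ and $\gamma_i$.

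\emph{The main obstacle.} The delicate point is the two ends, where the shape drops from $2$ to $1$. There $U_d=\langle v_d\rangle$ and $U_0=\langle v_0^*\rangle$ are one-dimensional, so the raising of $v_{d-1}$ (resp.\ lowering of $v_1^*$) can only produce a multiple of $v_d$ (resp.\ $v_0^*$): the ``$v_{i+1}^*$-component'' predicted by the generic formula must be reabsorbed through $v_d^*=\gamma_d\mu^* v_d$. This collision is exactly what creates the combined coefficient $\lambda_{d-1}+\gamma_{d-1}[2]_q\mu\mu^*$ in $Av_{d-1}$ and $A^*v_1^*$, and the factors $\gamma_d\mu^*$, $\gamma_d\mu$ in $Av_{d-1}^*$ and $A^*v_1$. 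I would pin these down by evaluating the $q$-Serre identity directly on $v_{d-1}$ and on $v_d$, where the eigenvector relation $Av_d=\theta_d v_d$ truncates most terms; this simultaneously forces $\lambda,\mu,\mu^*$ to be nonzero, after which the nonvanishing of each $\gamma_i$ ($2\le i\le d$) is immediate from its closed form and the hypothesis that $q$ is not a root of unity.
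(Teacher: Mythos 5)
First, a point of reference: the thesis never proves this statement --- it is recalled in the background chapter with a citation to \cite{AC} --- so your proposal has to stand entirely on its own. Its first half does stand. Indeed $U_0=V_0^*$, $U_d=V_d$, the memberships $v_i^*\in U_i$ and $v_i\in U_i$ follow inductively from $(A-\theta_iI)U_i\subseteq U_{i+1}$ and $(A^*-\theta_i^*I)U_i\subseteq U_{i-1}$ (Lemma \ref{decom} and the action lemma following it), the identification $U_i=\mathrm{span}\{v_i^*,v_i\}$ for $1\le i\le d-1$ is justified because all six decompositions share the shape vector $(1,2,\dots,2,1)$, and the four ``easy'' formulas together with the commutation relations $\Theta R=q^2R\Theta$, $\Theta^*R=q^{-2}R\Theta^*$ are correct. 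This does fix the shape of every displayed identity.

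The gap is that everything which makes the theorem nontrivial --- $\lambda_i=[i]_q[d-i]_q\lambda$, $c_i=\gamma_{d-i}\mu$, $\lambda_i'=\lambda_{d-i}$, $c_i'=\gamma_i\mu^*$, the boundary coefficient $\lambda_{d-1}+\gamma_{d-1}[2]_q\mu\mu^*$, and above all the nonvanishing of $\lambda,\mu,\mu^*$ --- is claimed to follow from evaluating the $q$-Serre identities (\ref{1.3}), (\ref{1.4}) on basis vectors, and this cannot be true. Every relation extracted that way is an algebraic consequence of the $q$-Serre relations plus the shape of the action, and such consequences survive passage to direct sums. Take two Leonard pairs of $q$-Serre type of diameter $d-1$: one on a space $V'$ with split basis $\{v_i^*\}_{i=0}^{d-1}$, eigenvalues $q^{2i}\theta$, dual eigenvalues $q^{2d-2i}\theta^*$, acting by $Av_i^*=q^{2i}\theta v_i^*+v_{i+1}^*$ and $A^*v_i^*=q^{2d-2i}\theta^*v_i^*+\lambda_i'v_{i-1}^*$; and one on $V''$ with split basis $\{v_i\}_{i=1}^{d}$ carrying the mirror action with coefficients $\lambda_i$. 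Their direct sum satisfies (\ref{1.3}), (\ref{1.4}) (algebra relations hold on a direct sum iff they hold on each summand), is diagonalizable with exactly the eigenvalues and shape vector of the theorem, the $2d$ vectors form a basis, and even the mildness equation $V=Dv^*+D^*v$ holds --- yet $\xi=\eta=0$ (so $\mu=\mu^*=0$) and the overall constants governing $\{\lambda_i\}$ and $\{\lambda_i'\}$ can be chosen independently, so $\lambda_i'\neq\lambda_{d-i}$. Hence no amount of extracting $U_j$-components of $q$-Serre evaluations can force the conclusions: the one hypothesis that separates a mild tridiagonal pair from this direct sum is condition (iv) of Definition \ref{defitri} (irreducibility), which your argument never invokes. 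That is where the substance of \cite{AC} lies --- irreducibility is what excludes the invariant subspaces $\mathrm{span}\{v_0^*,\dots,v_{d-1}^*\}$ and $\mathrm{span}\{v_1,\dots,v_d\}$, forces the cross-coefficients to be nonzero, and locks the two chains to a common constant $\lambda$. A secondary, fixable flaw: you normalize $\mu,\mu^*$ by $\eta=\gamma_d\mu$ and $\xi=\gamma_d\mu^*$, but $\gamma_d$ is proportional to $\lambda^{d-2}$, so these definitions are vacuous until $\lambda\neq0$ is known, which in your scheme is established only afterwards.
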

Recall that $V$ is irreducible as an $(A, A^*)$-module whenever there is no subalgebra $W$ of $V$ such that both $AW \subseteq W$ and $A^*W \subseteq W$, other than 0 and $V$.
\begin{thm}\cite{AC}

Let $\theta, \theta^*, q, \lambda, \mu^*$, and $\mu$ be nonzero scalars in $\mathbb{K}$. Pick any integer $d \ge 3$, and let $V$ be a vector space of dimension $2d$. Let $A: V \to V$ and $A^*: V \to V$ denote linear transformations which act on some basis $v_0^*, v_1, v_1^*, \dots, v_{d-1}, v^*_{d-1}, v_d$ as in Theorem \ref{abc2}. Further suppose that $V$ is irreducible as an $(A, A^*)$-module. Then $(A, A^*)$ is a mild tridiagonal pair on $V$ of $q$-Serre type.
\end{thm}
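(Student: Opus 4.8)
The plan is to verify the four defining conditions of a tridiagonal pair (Definition \ref{defitri}) for the given $A,A^*$, then read off that the pair is of $q$-Serre type and mild. I would organize everything around two structural features of the action in Theorem \ref{abc2}. First, grade the basis by \emph{level}, setting $L_0=\mathrm{span}(v_0^*)$, $L_d=\mathrm{span}(v_d)$ and $L_i=\mathrm{span}(v_i,v_i^*)$ for $1\le i\le d-1$; inspection of the formulas shows that $A$ raises the level by at most one (acting as $q^{2i}\theta$ on the diagonal of $L_i$) while $A^*$ lowers it by at most one (diagonal $q^{2d-2i}\theta^*$ on $L_i$). Second, note the involution $\tau$ sending $v_i\mapsto v_{d-i}^*$, $v_i^*\mapsto v_{d-i}$ and interchanging $(A,\theta)\leftrightarrow(A^*,\theta^*)$: the listed relations are stable under $\tau$, so every statement proved for $A$ transfers verbatim to $A^*$, which halves the work.

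Next I would establish condition (i), diagonalizability. Since $A$ is triangular for the level grading with diagonal entry $q^{2i}\theta$ on $L_i$, its eigenvalues are $\theta_i:=q^{2i}\theta$ $(0\le i\le d)$, and these are distinct because $q$ is not a root of unity. Solving the eigenvector recursion level by level — the factor $(q^{2j}-q^{2i})\theta$ is invertible for $j\ne i$, so the components above the seed level are uniquely determined with no obstruction — shows the $\theta_i$-eigenspace $V_i$ has dimension $\dim L_i$. This gives the shape vector $(1,2,\dots,2,1)$ and $\sum_i\dim V_i=2d=\dim V$, so $A$ is diagonalizable; by $\tau$ so is $A^*$, with eigenvalues $\theta_j^*:=q^{2d-2j}\theta^*$.

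For conditions (ii) and (iii) I would record the split decomposition $U_i:=L_i$: the relations give $(A-\theta_iI)U_i\subseteq U_{i+1}$ and $(A^*-\theta_i^*I)U_i\subseteq U_{i-1}$ (boundary cases included), which is exactly the raising/lowering pattern of case (iii) of Lemma \ref{decom}. From the first inclusion $\sum_{j\ge i}U_j$ is $A$-invariant, and comparing with the eigenvector recursion yields $\sum_{j\ge i}U_j=\sum_{k\ge i}V_k$, together with the dual identity $\sum_{j\le i}U_j=\sum_{h\le i}V_h^*$. Because $A^*$ lowers the level, these filtrations give one of the two inclusions in $A^*V_i\subseteq V_{i-1}+V_i+V_{i+1}$ for free (the index-$\ge i-1$ part). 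The remaining inclusion is the crux: it is equivalent to $E_kA^*E_i=0$ for $k-i\ge 2$ and genuinely uses the precise values of $\lambda_i$ and $\gamma_i$. The cleanest way to close it is to verify the $q$-Serre relations for $A,A^*$ by a direct computation on the basis — using $\tau$ only one of the two relations need be checked — and then to invoke Theorem \ref{dl2} (applied with its ``$q$'' taken to be $q^2$, legitimate since $q$ is not a root of unity) together with the assumed irreducibility (condition (iv)) to conclude that $(A,A^*)$ is a tridiagonal pair. I expect this relation-check to be the main obstacle, since it is precisely where the specific coefficients of Theorem \ref{abc2} must conspire. Being a solution of the $q$-Serre relations, the pair is of $q$-Serre type; equivalently, once the tridiagonal-pair property is in hand, the eigenvalue form $\theta_i=q^{2i}\theta$, $\theta_j^*=q^{2d-2j}\theta^*$ delivers the $q$-Serre relations through Theorem \ref{abcd}.

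Finally I would check mildness. The shape vector $(1,2,\dots,2,1)$ has $\rho_0=\rho_d=1$ but $\rho_i=2$ for $1\le i\le d-1$ (a nonempty range as $d\ge 3$), so $(A,A^*)$ is not a Leonard pair. Since $A^*v_0^*=q^{2d}\theta^*v_0^*$ and $Av_d=q^{2d}\theta v_d$, the vectors $v_0^*\in V_0^*$ and $v_d\in V_d$ serve as the required distinguished endpoints. Repeatedly applying $A$ to $v_0^*$ produces $v_1^*,\dots,v_{d-1}^*,v_d$ in turn, so $Dv_0^*\supseteq\mathrm{span}(v_0^*,v_1^*,\dots,v_{d-1}^*,v_d)$; dually, applying $A^*$ to $v_d$ produces $v_{d-1},\dots,v_1,v_0^*$, so $D^*v_d\supseteq\mathrm{span}(v_d,v_{d-1},\dots,v_1,v_0^*)$. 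Summing these gives $Dv_0^*+D^*v_d=V$, the last requirement. Hence $(A,A^*)$ is a mild tridiagonal pair on $V$ of $q$-Serre type, as claimed.
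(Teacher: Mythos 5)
The thesis itself offers no proof of this theorem — Chapter 1 is expressly a survey and the result is quoted from \cite{AC} — so your proposal has to stand on its own merits rather than be compared to an in-paper argument. Most of what you write does stand up: the level grading $L_i$, the triangularity argument giving diagonalizability of $A$ with eigenvalues $q^{2i}\theta$ and eigenspace dimensions $1,2,\dots,2,1$, the filtration identities $\sum_{j\ge i}U_j=\sum_{k\ge i}V_k$ and $\sum_{j\le i}U_j=\sum_{h\le i}V_h^*$, the reduction to Theorem \ref{dl2} with its ``$q$'' taken to be $q^2$, and the mildness check (where the key point, correctly used, is that $\gamma_d\mu^*\ne 0$ and $\gamma_d\mu\ne 0$ let $Dv_0^*$ and $D^*v_d$ reach the opposite ends of the basis) are all sound. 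Two side remarks: your involution $\tau$ is a symmetry of the relations only if it also swaps $\mu\leftrightarrow\mu^*$; and the hypothesis that $q$ is not a root of unity, which you invoke repeatedly, is not in the statement as transcribed in the thesis — it is a standing assumption of \cite{AC} and is genuinely necessary, since otherwise the eigenvalues $q^{2i}\theta$ collide and $A$ cannot be diagonalizable.

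The genuine gap sits exactly where you yourself flag ``the main obstacle'': the verification of the $q$-Serre relations (equivalently, of the missing half of tridiagonality, $E_kA^*E_i=0$ for $k-i\ge 2$) is announced but never carried out. Everything you actually prove — the raising/lowering structure of the split decomposition and the resulting one-sided inclusions $A^*V_i\subseteq V_{i-1}+V_i+\cdots+V_d$ and $AV_h^*\subseteq V_0^*+\cdots+V_{h+1}^*$ — holds verbatim if $\lambda_i$ and $\gamma_i$ are replaced by arbitrary nonzero scalars, and for generic such scalars $(A,A^*)$ is \emph{not} a tridiagonal pair. So the specific formulas $\lambda_i=[i]_q[d-i]_q\lambda$ and $\gamma_i=\frac{[i]_q!\,[d-1]_q!}{[2]_q[d-i+1]_q}\lambda^{i-2}$ must enter the argument somewhere, and the only place they can enter is precisely the computation you defer. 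That computation is the mathematical content of the theorem: what you have established is that $(A,A^*)$ is diagonalizable with the right spectra and would be a mild tridiagonal pair of $q$-Serre type \emph{if} the cubic relations hold; the ``if'' is the theorem. To close the gap you must either check one $q$-Serre relation on each basis vector (the $\tau$-symmetry then gives the other), or verify $E_kA^*E_i=0$ for $k\ge i+2$ directly using the explicit eigenvectors produced by your level-by-level recursion; either way the coefficient identities have to be exhibited, not presumed.
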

Let $(A, A^*)$ denote a tridiagonal pair on $V$. Let $\theta_0, \theta_1,\dots, \theta_d$ (resp. $\theta_0^*, \theta_1^*, \dots, \theta_d^*$) denote a standard ordering of the eigenvalues of $A$ (resp. $A^*$) By \cite{TD00} there exists a unique sequence $U_0, U_1, \dots, U_d$ consisting of subspaces of $V$ suth that
\[V = U_0+U_1+\dots+U_d~~~(\text{direct sum}),\] 
\[(A-\theta_iI)U_i \subseteq U_{i+1}~~(0 \le i <d), ~(A-\theta_dI)U_d = 0,\]
\[(A^*-\theta_i^*I)U_i \subseteq U_{i-1}~~ (0 < i \le d), ~ (A^*-\theta^*_0I)U_0 =0.\]
We call the sequence $U_0, U_1, \dots, U_d$ the split decomposition for $(A, A^*)$ with respect to the orderings $\theta_0, \theta_1, \dots, \theta_d$ and $\theta_0^*, \theta_1^*, \dots, \theta_d^*$.\\
For $0 \le i \le d$, let $F_i: V \to V$ denote the linear transformation which satisfies both
\[(F_i-I)U_i = 0,\]
\[F_iU_j = 0~ \text{if} ~ j \ne i, ~~(0 \le j \le d).\]
In other words, $F_i$ is the projection map from $V$ onto $U_i$. We observe
\[F_iF_j = \delta_{ij}F_i ~~ (0 \le i, j \le d),\]
\[F_0 + F_1+\dots+F_d = I,\]
\[F_iV = U_i~~ (0 \le i \le d).\]
Define 
\[R = A-\sum\limits_{h =0}^d{\theta_hF_h},\]
\[L = A^* - \sum\limits_{h =0}^d{\theta_h^*F_h}.\]
We obtain that 
\[RU_i \subseteq U_{i+1}~~ (0 \le i <d), ~ RU_d =0,\]
\[LU_i \subseteq U_{i-1}~~ (0< i \le d), ~ LU_0 = 0,\]
and 
\[R^{d+1} = 0, ~~ L^{d+1} = 0.\]
We call $R$ (resp. $L$) the raising map (resp. lowering map) for $(A, A^*)$ with respect to $U_0, U_1,\dots, U_d$.\\
Now we consider that $(A, A^*)$ is a tridiagonal pair and satisfies the $q$-Serre relations. Hence there exist nonzero scalars $a, a^*$ such that $\theta_i = aq^{2i-d}, ~ \theta_i^* =a^*q^{d-2i} ~~(0 \le i \le d)$. By \cite{TD00} the maps $R, L$ satisfy the $q$-Serre relations
\begin{eqnarray*}
R^3L - [3]_qR^2LR+[3]_qRLR^2-LR^3&=&0,\\
L^3R-[3]_qL^2RL+[3]_qLRL^2-RL^3&=&0.
\end{eqnarray*}
\begin{thm}\cite{IT03}
Let $(A, A^*)$ denote a tridiagonal pair on $V$ satisfying the $q$-Serre relations. Let $U_0, U_1,\dots, U_d$ denote the split decomposition for $(A, A^*)$ and let $R, L$ denote the corresponding raising and lowering maps. Let $v$ denote a nonzero vector in $U_0$. Then $V$ is spanned by the vectors of the forms $L^{i_1}R^{i_2}L^{i_3}R^{i_4}\dots R^{i_n}v$, where $i_1, i_2, \dots, i_n$ ranges over all sequences such that $n$ is a nonnegative even integer, and $i_1, i_2, \dots, i_n$ are integers satisfying $0 \le i_1 < i_2 < \dots < i_n \le d$.
\end{thm}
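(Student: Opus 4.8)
The plan is to convert the spanning claim into an invariance claim using irreducibility of the tridiagonal pair, and then to prove invariance by a straightening (normal-ordering) argument powered by the $q$-Serre relations for $R$ and $L$.

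First I would set $W\subseteq V$ to be the span of all the proposed words $L^{i_1}R^{i_2}\cdots R^{i_n}v$, where $n$ ranges over the nonnegative even integers and $0\le i_1<\cdots<i_n\le d$; in particular the empty word ($n=0$) shows $v\in W$. Because $(A,A^*)$ is a tridiagonal pair, condition (iv) of Definition \ref{defitri} says $V$ has no proper nonzero $(A,A^*)$-invariant subspace, so it is enough to prove that $W$ is invariant under $A$ and $A^*$. From $R=A-\sum_h\theta_hF_h$ and $L=A^*-\sum_h\theta_h^*F_h$ I would write $A=R+\sum_h\theta_hF_h$ and $A^*=L+\sum_h\theta_h^*F_h$. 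The crucial observation is that each proposed word is homogeneous for the split decomposition: applying it to $v\in U_0$ and using $RU_i\subseteq U_{i+1}$ and $LU_i\subseteq U_{i-1}$, the vector $L^{i_1}R^{i_2}\cdots R^{i_n}v$ lands in the single summand $U_j$ with $j=i_n-i_{n-1}+\cdots+i_2-i_1$, or is zero. Hence $\sum_h\theta_hF_h$ and $\sum_h\theta_h^*F_h$ act on it as the scalars $\theta_j$ and $\theta_j^*$ and therefore preserve $W$. Consequently $RW\subseteq W$ forces $AW\subseteq W$, and $LW\subseteq W$ forces $A^*W\subseteq W$, so the whole theorem reduces to showing $RW\subseteq W$ and $LW\subseteq W$.

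Since $R\cdot w$ and $L\cdot w$ are again monomials in $R,L$ applied to $v$, both inclusions follow from a single straightening lemma: every monomial in $R,L$ applied to $v$ is a linear combination of proposed words. To locate the work, consider $L\cdot(L^{i_1}R^{i_2}\cdots R^{i_n}v)=L^{i_1+1}R^{i_2}\cdots R^{i_n}v$: this is again a proposed word when $i_1+1<i_2$, is zero when $n=0$ (because $LU_0=0$), and is non-admissible only in the collision $i_1+1=i_2$, which creates the forbidden segment $L^{i_2}R^{i_2}$. Likewise $R\cdot(L^{i_1}R^{i_2}\cdots)$ is admissible when $i_1=0$ and $i_2+1<i_3$, but must be rewritten either to remove a collision $i_2+1=i_3$, or, when $i_1\ge 1$, to move the new leading $R$ rightward past the block $L^{i_1}$. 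In every bad case the tool is the same: expand the offending product with the cubic $q$-Serre relations $L^3R-[3]_qL^2RL+[3]_qLRL^2-RL^3=0$ and $R^3L-[3]_qR^2LR+[3]_qRLR^2-LR^3=0$, together with their iterated consequences for higher powers of $R$ and $L$.

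The main obstacle is precisely this straightening lemma, and it is where the substance of \cite{IT03} lies. I would prove it by induction with respect to a carefully chosen monomial ordering — for instance the total degree $\sum_k i_k$, refined by the grade $j$ and by the number of blocks — arranged so that each application of a $q$-Serre relation replaces a non-admissible product by a combination of strictly smaller monomials, whence the rewriting terminates. The boundary and nilpotency conditions $LU_0=0$, $RU_d=0$ and $R^{d+1}=L^{d+1}=0$ play a double role here: they make terms with an out-of-range exponent (one exceeding $d$, or a factor pushed below $U_0$ or above $U_d$) vanish, and it is exactly this vanishing that both forces the strict inequalities $i_1<\cdots<i_n$ and caps $i_n\le d$. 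The delicate point is to set up the ordering and the bookkeeping so that the $q$-Serre expansions are provably decreasing in all configurations; once that termination is established, the reduction of the first paragraph delivers $W=V$ and hence the theorem.
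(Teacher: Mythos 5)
The thesis gives no proof of this theorem at all --- it is quoted from \cite{IT03} --- so your proposal can only be measured against the cited argument, whose reliance on Lusztig's higher-order $q$-Serre relations the thesis itself records in Chapter 3 (``higher order $q$-Serre relations (\ref{hqSerre}), (\ref{hqSerre1}) play an important role \cite{IT03} in the construction of a basis''). Your first paragraph is correct and matches the standard frame of that argument: each proposed word applied to $v$ is homogeneous for the split decomposition, so the projections $F_h$ preserve the span $W$; writing $A=R+\sum_h\theta_hF_h$ and $A^*=L+\sum_h\theta_h^*F_h$ reduces everything to $RW\subseteq W$ and $LW\subseteq W$; and condition (iv) of Definition \ref{defitri} then yields $W=V$.

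The genuine gap is that the straightening lemma --- which is the entire content of the theorem --- is asserted rather than proved, and the induction you sketch cannot work as described. The $q$-Serre relations for $R,L$ are homogeneous in the bidegree $(\deg_R,\deg_L)$: every monomial in $R^3L-[3]_qR^2LR+[3]_qRLR^2-LR^3$ has bidegree $(3,1)$, and similarly for the companion relation, so every rewriting step preserves both the total degree $\sum_k i_k$ and the grade $j=\deg_R-\deg_L$. Hence the first two components of your proposed ordering are constants of the rewriting, and the third (the number of blocks) typically \emph{increases}: rewriting $R^3L$ produces the three-block words $R^2LR$ and $RLR^2$. So no terminating induction has actually been exhibited. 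Moreover, the cubic relations only govern products of bidegree $(3,1)$ or $(1,3)$; the collisions you isolate --- $L^aR^a$ with $a\ge 2$, or pushing a single $R$ through $L^{i_1}$ with $i_1\ge 2$ --- require the higher-order relations (\ref{hqSerre}), (\ref{hqSerre1}) specialized to $e_i\mapsto L$, $e_j\mapsto R$. These do hold here (since $R,L$ satisfy the cubic relations they generate a quotient of the positive part of $U_q(\widehat{sl_2})$, where Lusztig proved these identities), but they are themselves a nontrivial theorem, not an ``iterated consequence'' one may wave at; and organizing them, together with $LU_0=0$ and $RU_d=0$, into an induction that terminates exactly on words with $0\le i_1<\cdots<i_n\le d$ is precisely the substance of \cite{IT03}. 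What you have is a correct reduction plus an outline, not a proof.
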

\begin{defn}
Let $(A, A^*)$ denote a tridiagonal pair on $V$. Let $(\rho_0, \rho_1, \dots, \rho_d)$ denote the shape vector of $(A, A^*)$. The pair $(A, A^*)$ is said to be sharp whenever $\rho_0 = 1$.
\end{defn}
\begin{thm}\cite{NT}
A tridiagonal pair over an algebraically closed field is sharp.
\end{thm}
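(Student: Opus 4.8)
The plan is to read off sharpness from the representation theory of the algebra generated by the pair. Let $\mathcal{T} \subseteq \mathrm{End}(V)$ be the subalgebra generated by $A$ and $A^*$. Condition (iv) of Definition \ref{def1} says exactly that $V$ has no proper nonzero subspace invariant under both $A$ and $A^*$, i.e. $V$ is a simple $\mathcal{T}$-module. The idempotent $E_0^*$ is a polynomial in $A^*$, hence lies in $\mathcal{T}$, and $E_0^* V = V_0^*$ is nonzero of dimension $\rho_0$. By the standard idempotent reduction for simple modules, $V_0^* = E_0^* V$ is a simple module for the corner algebra $\mathcal{A} := E_0^* \mathcal{T} E_0^*$. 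Everything then comes down to showing $\dim V_0^* = 1$.

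The endgame is soft, and it is where algebraic closure enters. Suppose we have shown that $\mathcal{A}$ is commutative. A commuting family of operators on a nonzero finite-dimensional space over an algebraically closed field has a common eigenvector $w \in V_0^*$; then $\mathbb{K}w$ is a nonzero $\mathcal{A}$-submodule of the simple module $V_0^*$, so $V_0^* = \mathbb{K}w$ and $\rho_0 = 1$. Equivalently one may invoke Burnside's theorem: $\mathcal{A}$ embeds faithfully in $\mathrm{End}(V_0^*)$ and acts irreducibly, so $\mathcal{A} = \mathrm{End}(V_0^*) = M_{\rho_0}(\mathbb{K})$, which is commutative precisely when $\rho_0 = 1$. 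Over an algebraically closed field, then, sharpness is equivalent to commutativity of the corner algebra $\mathcal{A} = E_0^* \mathcal{T} E_0^*$.

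The substance is therefore the commutativity of $\mathcal{A}$, which I would try to prove as an operator identity independent of the field. First I would cut down a spanning set: writing an arbitrary element of $\mathcal{T}$ as a sum of alternating words in $A, A^*$, inserting $I = \sum_j E_j^*$ between consecutive letters, and using $A^* E_j^* = \theta_j^* E_j^* = E_j^* A^*$, one eliminates every $A^*$ and expresses $\mathcal{A}$ as the span of scalar multiples of monomials $E_0^* A^{r_1} E_{j_1}^* A^{r_2} \cdots E_{j_{n-1}}^* A^{r_n} E_0^*$; Lemma \ref{lem:triplep} then annihilates every such monomial whose index profile is inadmissible. To straighten the survivors and show the two multiplication orders agree, I would feed in the tridiagonal relations of Theorem \ref{dinhly2} together with the $q$-exponential description of the eigenvalues from Corollary \ref{hequa1}, passing to the split decomposition $U_0, \dots, U_d$ and its raising/lowering maps $R, L$, so that a product in $\mathcal{A}$ becomes a controlled passage $U_0 \to U_s \to U_0$. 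The antiautomorphism of $\mathcal{T}$ fixing $A$ and $A^*$ (hence each $E_j^*$) supplies the symmetry needed to compare a monomial with its reverse.

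The main obstacle is precisely this commutativity. Lemma \ref{lem:triplep} only constrains the \emph{shape} of admissible monomials, and converting that combinatorial control into the exact identity $XY = YX$ on $V_0^*$ requires tracking the $q$-Racah eigenvalue data through the raising and lowering maps in full detail. I expect essentially all the difficulty to sit here, with the idempotent reduction and the algebraically-closed-field step being routine by comparison.
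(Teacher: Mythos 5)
There is an important point of comparison to settle first: the paper contains no proof of this statement at all --- it is quoted with a citation to \cite{NT} --- so your proposal can only be judged against the argument in that reference. Its architecture is in fact the same as yours. Your reduction is correct and correctly executed: $E_0^*$ is a polynomial in $A^*$ and hence lies in $\mathcal{T}$; condition (iv) of Definition \ref{def1} says $V$ is a simple $\mathcal{T}$-module; the corner-algebra argument showing that $E_0^*V$ is a simple $E_0^*\mathcal{T}E_0^*$-module is standard; and the algebraically-closed endgame (common eigenvector for a commuting family, or Burnside giving $E_0^*\mathcal{T}E_0^*\cong M_{\rho_0}(\mathbb{K})$) is fine. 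This is exactly how \cite{NT} proceeds: its central result is that $E_0^*\mathcal{T}E_0^*$ is commutative for an \emph{arbitrary} tridiagonal pair over an \emph{arbitrary} field, and sharpness over an algebraically closed field is deduced from this just as you describe.

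The genuine gap is that this commutativity --- which, as you say yourself, carries all of the substance --- is never proved, and the one concrete tool you offer for it is circular. You invoke ``the antiautomorphism of $\mathcal{T}$ fixing $A$ and $A^*$.'' No such map is available at this stage. Because $\mathcal{T}\subseteq \mathrm{End}(V)$ acts faithfully and irreducibly, an antiautomorphism of $\mathcal{T}$ fixing $A$ and $A^*$ exists if and only if $V$ is isomorphic as a $\mathcal{T}$-module to the dual space $V^*$ on which $A,A^*$ act by transposes, i.e.\ if and only if there is a nondegenerate bilinear form with $\langle Au,v\rangle=\langle u,Av\rangle$ and $\langle A^*u,v\rangle=\langle u,A^*v\rangle$. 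The paper records precisely this form --- citing the same reference \cite{NT} --- as a theorem whose \emph{hypothesis} is that the pair is sharp, which is what you are trying to prove. Reversal is indeed an antiautomorphism of the abstract tridiagonal algebra, since the relations of Theorem \ref{dinhly2} are reversal-invariant, but there is no reason it descends to the concrete quotient $\mathcal{T}$ before sharpness is known. Your remaining ingredients (Lemma \ref{lem1}, Corollary \ref{hequa1}, the split decomposition and its raising and lowering maps) constrain only the \emph{shape} of the surviving monomials $E_0^*A^{r_1}E_{j_1}^*A^{r_2}\cdots E_{j_{n-1}}^*A^{r_n}E_0^*$, not the order in which two such elements multiply, so they do not close the gap either. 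What is missing is the main theorem of \cite{NT} itself, a substantial research-level argument; indeed, removing the algebraic-closure hypothesis (every tridiagonal pair is sharp) required the further developments of \cite{NT:muqrac} and \cite{INT}.
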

\begin{thm}\cite{NT}
Let $(A, A^*)$ denote a sharp tridiagonal pair on $V$. Then there exists a nonzero bilinear form $\left\langle , \right\rangle $ on $V$ such that $\left\langle Au , v \right\rangle = \left\langle u , Av \right\rangle$ and $\left\langle A^*u , v \right\rangle = \left\langle u , A ^*v \right\rangle$ for all $u, v \in V$. This form is unique up to multiplication by a nonzero scalar in $\mathbb{K}$. This form is nondegenerate and symmetric.
\end{thm}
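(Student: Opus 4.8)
The plan is to identify the sought bilinear form with an isomorphism between $V$ and its dual module, and to extract all four claimed properties (existence, uniqueness up to scalar, nondegeneracy, symmetry) from irreducibility together with the sharpness hypothesis. Throughout I work over an algebraically closed $\mathbb{K}$, as in the preceding sharpness theorem. Let $V^{\vee}=\mathrm{Hom}(V,\mathbb{K})$ carry the transpose action $x\mapsto x^{t}$, where $(x^{t}f)(v)=f(xv)$. The first step is to check that $(A^{t},A^{*t})$ is again a tridiagonal pair on $V^{\vee}$: transposes of diagonalizable maps are diagonalizable; the dual eigenspaces $(V^{\vee})_{i}\cong V_{i}^{\vee}$ carry the same eigenvalue $\theta_{i}$, and the inclusion $A^{*}V_{i}\subseteq V_{i-1}+V_{i}+V_{i+1}$ dualizes to $A^{*t}(V^{\vee})_{i}\subseteq (V^{\vee})_{i-1}+(V^{\vee})_{i}+(V^{\vee})_{i+1}$, and symmetrically for $A^{t}$; finally $V^{\vee}$ is irreducible because the annihilator of any proper submodule would be a proper nonzero $(A,A^{*})$-invariant subspace of $V$. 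Hence $(A^{t},A^{*t})$ is an irreducible, sharp tridiagonal pair on $V^{\vee}$ with the same eigenvalue and dual eigenvalue sequences as $(A,A^{*})$.

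The engine of the proof is the observation that a nonzero bilinear form with $\langle Au,v\rangle=\langle u,Av\rangle$ and $\langle A^{*}u,v\rangle=\langle u,A^{*}v\rangle$ is precisely the same datum as a nonzero linear map $\sigma\colon V\to V^{\vee}$, $\sigma(u)=\langle u,\cdot\rangle$, that intertwines $A$ with $A^{t}$ and $A^{*}$ with $A^{*t}$. Indeed $\langle Au,v\rangle=(\sigma Au)(v)$ while $\langle u,Av\rangle=(\sigma u)(Av)=(A^{t}\sigma u)(v)$, so the adjointness identities say exactly $\sigma A=A^{t}\sigma$ and $\sigma A^{*}=A^{*t}\sigma$. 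Thus the whole theorem reduces to: there is an intertwiner $V\to V^{\vee}$, it is unique up to scalar, it is bijective, and it is symmetric.

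For uniqueness and nondegeneracy I would argue formally. By Schur's lemma any endomorphism of the irreducible module $V$ commuting with $A,A^{*}$ is a scalar, so the space of intertwiners $V\to V^{\vee}$ is at most one-dimensional; once one nonzero $\sigma$ is produced this already gives uniqueness up to scalar. Any nonzero $\sigma$ has $(A,A^{*})$-invariant kernel, which must be $0$ by irreducibility, so $\sigma$ is injective and, since $\dim V=\dim V^{\vee}$, bijective; equivalently the form is nondegenerate. Existence of $\sigma$ is where sharpness does its real work: $V$ and $V^{\vee}$ already share eigenvalue and dual eigenvalue sequences, so it remains only to match the remaining entry of the parameter array, namely the split sequence. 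The split sequence is governed by trace data of the pair and so is invariant under transposition; therefore $(A,A^{*})$ and $(A^{t},A^{*t})$ have identical parameter arrays. Invoking the classification of sharp tridiagonal pairs \cite{INT}, which determines such a pair up to isomorphism by its parameter array, produces the required isomorphism $\sigma$.

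For symmetry, set $\langle u,v\rangle'=\langle v,u\rangle$; this form satisfies the same two adjointness identities, so by uniqueness $\langle\cdot,\cdot\rangle'=\epsilon\langle\cdot,\cdot\rangle$ for some scalar $\epsilon$ with $\epsilon^{2}=1$. To force $\epsilon=1$ I would use sharpness once more. Self-adjointness of $A$ gives $\theta_{i}\langle u,v\rangle=\langle Au,v\rangle=\langle u,Av\rangle=\theta_{j}\langle u,v\rangle$ for $u\in V_{i}$, $v\in V_{j}$, so $V_{i}\perp V_{j}$ whenever $i\neq j$; the Gram matrix is then block diagonal along $V=\bigoplus_{i}V_{i}$, and nondegeneracy forces each block nondegenerate. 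In particular the restriction to the one-dimensional space $V_{0}$ (one-dimensional exactly because $\rho_{0}=1$) is nonzero, so $\langle v_{0},v_{0}\rangle\neq 0$ for $0\neq v_{0}\in V_{0}$; since $\langle v_{0},v_{0}\rangle'=\langle v_{0},v_{0}\rangle$, the relation $\langle v_{0},v_{0}\rangle=\epsilon\langle v_{0},v_{0}\rangle$ yields $\epsilon=1$. The hard part is the existence step: it is precisely the classification of sharp tridiagonal pairs by their parameter array, together with the transpose-invariance of that array, that I expect to carry the whole weight of the argument, sharpness entering a second time only to pin the symmetry sign through the distinguished one-dimensional eigenspace $V_{0}$.
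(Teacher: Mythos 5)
The paper never proves this theorem: it is recalled in the background chapter and attributed to \cite{NT}, so there is no internal proof to compare against. Your proposal is, in substance, the argument used in that literature: translate the desired form into an intertwiner $\sigma\colon V\to V^{\vee}$ via $\sigma(u)=\langle u,\cdot\rangle$, check that the transposed pair $(A^{t},A^{*t})$ — with the orderings $\{E_i^{t}\}$, $\{E_i^{*t}\}$ fixed, so that one compares tridiagonal \emph{systems}, not just pairs — is again sharp with the same parameter array, and invoke the isomorphism-by-parameter-array theorem for sharp tridiagonal systems (stated in the paper and attributed to \cite{NT}, with the classification Theorem \ref{class} from \cite{INT}) to get existence; irreducibility then yields nondegeneracy and uniqueness up to scalar, and the restriction of the form to the one-dimensional eigenspace $V_0$ pins the symmetry sign $\varepsilon=1$. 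The individual steps are sound: the correspondence between self-adjointness and the intertwining relations, the annihilator argument for irreducibility of $V^{\vee}$, and the transpose-invariance of the split sequence (your "trace data" remark is terse, but transposing the defining relation $E_0^{*}\tau_i(A)E_0^{*}=\zeta_i E_0^{*}/\prod_{h=1}^{i}(\theta_0^{*}-\theta_h^{*})$ gives it at once, since $E_0^{*t}$ is the idempotent of $A^{*t}$ for $\theta_0^{*}$).

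The one genuine shortfall is your blanket restriction to an algebraically closed $\mathbb{K}$. The theorem is stated for a sharp pair over an arbitrary field, and that generality is the point: over an algebraically closed field every tridiagonal pair is sharp (the theorem immediately preceding this one), so your version proves a strictly weaker statement than the one asserted. The only place you actually use closure is Schur's lemma, and sharpness lets you avoid it: if $T\colon V\to V$ commutes with $A$ and $A^{*}$, then $T$ preserves the one-dimensional eigenspace $E_0^{*}V$, hence acts on it by some scalar $\lambda\in\mathbb{K}$; the kernel of $T-\lambda I$ is then a nonzero $(A,A^{*})$-invariant subspace, so $T=\lambda I$ by irreducibility. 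With Schur's lemma replaced by this argument, every remaining step of your proof — the classification theorem and the transpose comparison are both stated over general $\mathbb{K}$ in the paper — goes through verbatim over an arbitrary field, and you recover the theorem in the stated generality.
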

Let $\lambda$ denote an indeterminate and let $\mathbb{K}[\lambda]$ denote the $\mathbb{K}$-algebra consisting of the polynomials in $\lambda$ that have all coefficients in $\mathbb{K}$. Let $\{\theta_i\}_{i=0}^d$ and $\{\theta_i^*\}_{i=0}^d$ denote scalars in $\mathbb{K}$. Then for $0 \le i \le d$ we define the following polynomials in $\mathbb{K}[\lambda]$:
\begin{eqnarray*}
\tau_i&=&(\lambda -\theta_0)(\lambda-\theta_1)\dots(\lambda-\theta_{i-1}),\\
\tau^*_i&=&(\lambda-\theta_0^*)(\lambda-\theta^*_1)\dots(\lambda-\theta^*_{i-1}),\\
\eta_i&=&(\lambda-\theta_d)(\lambda-\theta_{d-1})\dots(\lambda-\theta_{d-i+1}),\\
\eta^*_i&=&(\lambda-\theta^*_d)(\lambda-\theta^*_{d-1})\dots(\lambda-\theta^*_{d-i+1}).
\end{eqnarray*}
Note that each of $\tau_i, \tau_i^*, \eta_i, \eta_i^*$ is monic with degree $i$.
\begin{defn}
Let $\Phi = (A;\{E_i\}_{i=0}^d;A^*;\{E_i^*\}_{i=0}^d)$ denote a tridiagonal system on $V$. We say $\Phi$ is sharp whenever the tridiagonal pair $A, A^*$ is sharp.
\end{defn} 
\begin{defn}
Let $\Phi = (A;\{E_i\}_{i=0}^d;A^*;\{E_i^*\}_{i=0}^d)$ denote a tridiagonal system over $\mathbb{K}$, with the standard ordering of the eigenvalues $\{\theta_i\}_{i+0}^d$ (resp. $\{\theta_i^*\}_{i+0}^d$) of $A$ (resp. $A^*$). By \cite{NT}, for $0 \le i \le d$ there exists a unique $\zeta \in \mathbb{K}$ such that
\[E_0^*\tau_i(A)E_0^*=\frac{\zeta_iE_0^*}{(\theta_0^*-\theta_1^*)(\theta^ *_0-\theta_2^*)\dots(\theta_0^*-\theta_i^*)}\ .\] 
Note that $\zeta_0 = 1$. We call $\{\zeta_i\}_{i=0}^d$ the split sequence of the tridiagonal system. 
\end{defn}
\begin{defn}
Let $\Phi=(A;\{E_i\}_{i=0}^d;A^*;\{E_i^*\}_{i=0}^d$ denote a sharp tridiagonal system. By the parameter array of $\Phi$ we mean the sequence $(\{\theta_i\}_{i=0}^d;\{\theta_i^*\}_{i=0}^d;\{\zeta_i\}_{i=0}^d)$ where $\{\theta_i\}_{i=0}^d$ (resp. $\{\theta_i^*\}_{i=0}^d$) is the standard ordering of the eigenvalues of $A$ (resp. $A^*$) and $\{\zeta_i\}_{i=0}^d$ is the split sequence of $\Phi$.
\end{defn}
\begin{defn}
Let $\Phi=(A;\{E_i\}_{i=0}^d;A^*;\{E_i^*\}_{i=0}^d$ denote a tridiagonal system on $V$ and let $\Phi'=(A';\{{E'}_i\}_{i=0}^d;$ ${A^*}';\{{E_i^*}'\}_{i=0}^d)$ denote a tridiagonal system on $V'$. We say $\Phi$ and $\Phi'$ are isomorphic whenever there exists an isomorphism of $\mathbb{K}$-vector spaces $\gamma: V \to V'$ such that $\gamma A=A'\gamma, \gamma A^* = {A^*}'\gamma$ and $\gamma E_i = {E_i}'\gamma, \gamma E_i^* = {E_i^*}'\gamma$ for $0 \le i \le d$.
\end{defn}
The following result shows the significance of the parameter array.

\begin{thm}\cite{NT}
Two sharp tridiagonal systems over $\mathbb{K}$ are isomorphic if and only if they have the same parameter array.
\end{thm}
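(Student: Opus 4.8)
The forward implication is the routine one. Suppose $\Phi$ and $\Phi'$ are isomorphic via a $\mathbb{K}$-linear bijection $\gamma: V \to V'$ with $\gamma A = A'\gamma$, $\gamma A^* = {A^*}'\gamma$, $\gamma E_i = {E_i}'\gamma$ and $\gamma E_i^* = {E_i^*}'\gamma$. Conjugating the relations $AE_i = \theta_i E_i$ and $A^*E_i^* = \theta_i^* E_i^*$ by $\gamma$ shows at once that $\Phi$ and $\Phi'$ share the same eigenvalue and dual eigenvalue sequences; in particular the polynomials $\tau_i$ and the spectral projection $E_0^*$ are literally the same for both systems. Conjugating the defining relation of the split sequence, $E_0^*\tau_i(A)E_0^* = \zeta_i E_0^*/\prod_{j=1}^i(\theta_0^*-\theta_j^*)$, by $\gamma$ then forces $\zeta_i = \zeta_i'$ for $0 \le i \le d$, so the two parameter arrays coincide.

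For the converse, assume $\Phi$ and $\Phi'$ have the same parameter array. Since the eigenvalue and dual eigenvalue sequences agree, Corollary \ref{ch} and Corollary \ref{hequa1} fix the common value $\beta$, and Theorem \ref{dinhly2} shows that the tridiagonal relation parameters $\beta, \gamma, \gamma^*, \delta, \delta^*$ are determined by these sequences (uniquely when $d \ge 3$; the cases $d \le 2$ are checked directly). Thus $A, A^*$ and $A', {A^*}'$ both make their spaces into modules over one and the same tridiagonal algebra $T$. I would then exploit sharpness: $\dim E_0^*V = 1 = \dim {E_0^*}'V'$, so I may fix nonzero $v \in E_0^*V$ and $v' \in {E_0^*}'V'$. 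By the irreducibility condition (iv) of Definition \ref{defitri}, the submodule $Tv$ is a nonzero $T$-invariant subspace, hence all of $V$, so $v$ is cyclic and $V \cong T/\mathrm{Ann}_T(v)$ as a $T$-module; likewise $V' \cong T/\mathrm{Ann}_T(v')$.

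The whole problem is thus reduced to proving $\mathrm{Ann}_T(v) = \mathrm{Ann}_T(v')$, and here the parameter array enters as an explicit list of relations satisfied by the cyclic vector. Since $v$ spans the $\theta_0^*$-eigenspace of $A^*$ one has $(A^*-\theta_0^* I)v = 0$, and since $E_0^*v = v$ the split-sequence identity yields the family $E_0^*\tau_i(A)v = \bigl(\zeta_i/\prod_{j=1}^i(\theta_0^*-\theta_j^*)\bigr)v$ for $0 \le i \le d$, where $E_0^*$ and $\tau_i(A)$ are the fixed polynomials in $A^*$ and $A$ attached to the common eigenvalues. The identical scalars produce the identical relations for $v'$. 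The crux is to show that these relations, together with the defining tridiagonal relations of $T$, already generate the full left ideal $\mathrm{Ann}_T(v)$. Granting this, $\mathrm{Ann}_T(v)$ and $\mathrm{Ann}_T(v')$ are generated by the same elements, hence equal, and the induced $T$-module isomorphism $V \cong T/\mathrm{Ann}_T(v) = T/\mathrm{Ann}_T(v') \cong V'$ sending $v \mapsto v'$ intertwines $A$ and $A^*$. Because each $E_i$ (resp. $E_i^*$) is the same polynomial in $A$ (resp. $A^*$) for both systems, this map automatically intertwines all the primitive idempotents and is the required isomorphism of tridiagonal systems.

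The main obstacle is precisely this last claim, that the split-sequence relations exhaust the annihilator. I would attack it through the split decomposition $U_0, U_1, \dots, U_d$ and its raising and lowering maps $R, L$: from the image of $v$ one produces a spanning set of the quotient $T/\langle\text{relations}\rangle$ indexed by admissible words in $R$ and $L$, and the triple-product vanishing of Lemma \ref{lem:triplep} is used to bound the number of linearly independent such words by $\dim V = \sum_i \rho_i$. Once the quotient is shown to have dimension at most $\dim V$ and to act irreducibly, it must coincide with $V$, forcing the relations to generate $\mathrm{Ann}_T(v)$. This dimension-counting step, which rests on the sharp structure theory of Nomura and Terwilliger, is where essentially all of the difficulty resides.
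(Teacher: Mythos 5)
First, a point of reference: the paper does not prove this theorem at all; it is quoted from \cite{NT} as background, so there is no in-paper proof to match your argument against. Your forward implication is fine and essentially forced: conjugation by $\gamma$ preserves the eigenvalue and dual eigenvalue sequences, and applying $\gamma$ to the defining identity $E_0^*\tau_i(A)E_0^* = \zeta_i E_0^*/\bigl((\theta_0^*-\theta_1^*)\cdots(\theta_0^*-\theta_i^*)\bigr)$ pins down $\zeta_i=\zeta_i'$ by the uniqueness of the split sequence.

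The converse, however, contains a genuine gap, and it sits exactly where the theorem's entire content lies. Your reduction to $\mathrm{Ann}_T(v)=\mathrm{Ann}_T(v')$ is sound as far as it goes (sharpness gives one-dimensional spaces $E_0^*V$, ${E_0^*}'V'$, and condition (iv) of Definition \ref{defitri} makes $v$, $v'$ cyclic), but the claim that the left ideal generated by $(A^*-\theta_0^*I)$, the split-sequence relations, and the tridiagonal relations of Theorem \ref{dinhly2} equals the full annihilator is never established -- you explicitly defer it -- and it is not a routine verification. Two concrete problems: (i) as stated, the proposed generating set is likely too small, since the relations $\prod_{j=0}^{d}(A-\theta_jI)=0$ and $\prod_{j=0}^{d}(A^*-\theta_j^*I)=0$ hold on $V$ but are not visibly members of that left ideal, so the quotient $T/\langle\text{relations}\rangle$ could a priori be strictly larger than $V$ (even infinite dimensional), and your intended isomorphism $T/\mathrm{Ann}_T(v)\cong T/\mathrm{Ann}_T(v')$ would not follow; (ii) the dimension count you sketch -- bounding the independent words in the raising and lowering maps $R,L$ by $\sum_i\rho_i$ via Lemma \ref{lem:triplep} -- is precisely the sharp structure theory of Nomura--Terwilliger (the spanning results and the $\mu$-conjecture machinery of \cite{NT}, \cite{INT}), i.e., the published proof itself, so invoking it as a black box makes the argument circular as a self-contained proof. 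To close the gap you would need to actually prove that the parameter array determines all matrix coefficients of $A,A^*$ on a spanning set built from $v$ (for instance on the vectors $\tau_{i}(A)v$ refined through the split decomposition $U_0,\dots,U_d$), which is the nontrivial theorem rather than a lemma on the way to it.
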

\begin{defn}
Let $d$ denote a nonnegative integer and let $(\{\theta_i\}_{i=0}^d;\{\theta_i^*\}_{i=0}^d)$ denote a sequence of scalars taken from $\mathbb{K}$. We call this sequence $q$-Racah whenever the following i), ii) hold.\\
i) $\theta_i \ne \theta_j, \theta_i^* \ne \theta_j^*$ if $ i \ne j ~ (0 \le i, j\le d)$.\\
ii) There exist $q, a, b, c, a^*, b^*, c^*$ that satisfy 
\[\theta_i = a +bq^{2i-d}+cq^{d-2i}~~ (0 \le i \le d),\]
\[\theta_i^*= a^*+b^*q^{2i-d}+c^*q^{d-2i}~~ (0 \le i \le d),\]
\[q, a, b, c, a^*, b^*, c^* \in \overline{\mathbb{K}},\]
\[q \ne 0, ~ q^2 \ne 1,~ q^2 \ne -1,~ bb^*cc^* \ne 0. \]
where $\overline{\mathbb{K}}$ is the algebraic closure of $\mathbb{K}$. 
\end{defn}
\begin{thm}\cite{IT2}
Assume the field $\mathbb{K}$ is algebraically closed and let $d$ denote a nonnegative integer. Let $(\{\theta_i\}_{i=0}^d;\{\theta_i^*\}_{i=0}^d)$ denote a $q$-Racah sequence of scalars of $\mathbb{K}$ and let $\{\zeta_i\}_{i=0}^d$ denote any sequence of scalars in $\mathbb{K}$. Then the following are equivalent:\\
i) There exists a tridiagonal system $\Phi$ over $\mathbb{K}$ that has parameter array $(\{\theta_i\}_{i=0}^d;\{\theta_i^*\}_{i=0}^d;\{\zeta_i\}_{i=0}^d)$\\
ii) $\zeta_0 = 1, \zeta_d \ne 0$, and 
\[0 \ne \sum\limits_{i =0}^d{\eta_{d-i}(\theta_0)\eta^*_{d-i}(\theta_0^*)\zeta_i}.\]
Suppose i), ii) hold. Then $\Phi$ is unique up to isomorphism of tridiagonal systems.
\end{thm}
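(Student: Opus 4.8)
The uniqueness assertion is the easier half, so I would dispatch it first. Since $\mathbb{K}$ is algebraically closed, every tridiagonal pair over $\mathbb{K}$ is sharp, so any $\Phi$ realizing the prescribed data is a \emph{sharp} tridiagonal system. By the theorem of \cite{NT} that two sharp tridiagonal systems over $\mathbb{K}$ are isomorphic if and only if they share the same parameter array, any two systems with parameter array $(\{\theta_i\}_{i=0}^d;\{\theta_i^*\}_{i=0}^d;\{\zeta_i\}_{i=0}^d)$ are isomorphic. Thus it remains only to prove the equivalence of (i) and (ii), the genuine content being existence.

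For the implication (i) $\Rightarrow$ (ii) I would argue as follows. The normalization $\zeta_0 = 1$ is built into the definition of the split sequence. For the two remaining conditions I would exploit the nondegenerate symmetric bilinear form $\langle\,,\,\rangle$ attached to a sharp tridiagonal pair by \cite{NT}, for which $A, A^*$ are self-adjoint. Sharpness gives $\dim E_0^* V = 1$; fixing $0 \neq v \in E_0^* V$, the scalars $\zeta_i$ record the action of $\tau_i(A)$ on the line $E_0^* V$ through $E_0^* \tau_i(A) E_0^* = \zeta_i (\theta_0^* - \theta_1^*)^{-1} \cdots (\theta_0^* - \theta_i^*)^{-1} E_0^*$. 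Expanding $\eta_{d-i}$ in the $\tau$-basis and pairing $v$ against the image of $E_d^* V$, the distinguished sum $\sum_{i=0}^d \eta_{d-i}(\theta_0)\eta^*_{d-i}(\theta_0^*)\zeta_i$ equals, up to an explicit factor nonzero by the $q$-Racah spacing of the eigenvalues, the pairing of $E_0^* V$ with $E_d^* V$; nondegeneracy of the form forces this to be nonzero. The condition $\zeta_d \neq 0$ encodes that $E_0^* V$ and $E_d^* V$ are linked nontrivially through $\tau_d(A)$, which is guaranteed by the pair having diameter exactly $d$.

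The implication (ii) $\Rightarrow$ (i) is the crux and the main obstacle. Here I would build the module explicitly from the split decomposition. The $q$-Racah hypothesis puts the eigenvalues in the geometric form of Corollary \ref{hequa1} and equations (\ref{eq:const1})--(\ref{eq:const3}), and Theorem \ref{dinhly2} fixes the scalars $\beta, \gamma, \gamma^*, \delta, \delta^*$ of the tridiagonal relations. On a space carrying a split decomposition $V = U_0 + \cdots + U_d$, I would prescribe a raising map $R$ and a lowering map $L$ with $RU_i \subseteq U_{i+1}$ and $LU_i \subseteq U_{i-1}$, whose structure constants are the unique solution of the recurrences dictated by the parameter array --- these generalize the closed forms (\ref{p1})--(\ref{p4}) of the Leonard classification Theorem \ref{clas} to arbitrary shape. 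Setting $A = R + \sum_h \theta_h F_h$ and $A^* = L + \sum_h \theta_h^* F_h$ with $F_h$ the projections onto $U_h$, I would verify that $A, A^*$ are diagonalizable with the prescribed (dual) eigenvalues and satisfy the standard-ordering inclusions of Definition \ref{def1}, so that $(A, A^*)$ is a candidate tridiagonal pair.

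The decisive and hardest step is then irreducibility, condition (iv) of Definition \ref{def1}, and it is precisely here that the full strength of (ii) enters: the candidate bilinear form assembled from the prescribed structure constants is nondegenerate if and only if $\zeta_d \neq 0$ and $\sum_{i=0}^d \eta_{d-i}(\theta_0)\eta^*_{d-i}(\theta_0^*)\zeta_i \neq 0$, and nondegeneracy rules out any proper $(A, A^*)$-invariant subspace. A final computation of $E_0^* \tau_i(A) E_0^*$ on the constructed module confirms that its split sequence is exactly $\{\zeta_i\}_{i=0}^d$, so $\Phi$ realizes the prescribed parameter array. I expect the real difficulty to lie in (a) writing down and solving the recurrences for the entries of $R$ and $L$ from the $q$-Racah data in a form valid beyond the shape $(1,\ldots,1)$ of the Leonard case, and (b) showing that the resulting form is nondegenerate exactly under (ii); by contrast the verification of the tridiagonal relations is heavy but mechanical once the quadratic eigenvalue identities of Lemma \ref{lem:polypart} are in hand.
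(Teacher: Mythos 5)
First, a point of orientation: the thesis itself contains no proof of this statement. It appears in the survey chapter (``we summarize without proofs\dots'') and is quoted verbatim from \cite{IT2}, so your proposal has to be measured against the argument in that source. Your uniqueness paragraph is correct and is exactly how the literature handles it: algebraic closure forces sharpness by \cite{NT}, and sharp tridiagonal systems with the same parameter array are isomorphic. The genuine gap is in the existence direction (ii) $\Rightarrow$ (i), which you yourself identify as the crux but do not actually supply. Your plan is to prescribe raising and lowering maps $R,L$ on a split decomposition $V=U_0+\cdots+U_d$ and ``solve the recurrences dictated by the parameter array''. But the parameter array $(\{\theta_i\};\{\theta_i^*\};\{\zeta_i\})$ does not contain the shape $(\rho_0,\dots,\rho_d)$: the dimensions of the $U_i$ are not part of the data, and for a general sequence $\{\zeta_i\}$ satisfying (ii) the realizing pair is \emph{not} a Leonard pair, so there is no analogue of the closed formulas (\ref{p1})--(\ref{p4}) of Theorem \ref{clas} to solve for. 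Before any recurrence can even be written down you must produce the dimensions $\rho_i$ and matrix blocks (not scalars) for $R|_{U_i}$ and $L|_{U_i}$, and nothing in (ii) tells you these. This is the entire content of the theorem, and it is precisely where \cite{IT2} switches to different technology: the tridiagonal pair is realized inside finite-dimensional modules for the quantum affine algebra $U_q(\widehat{sl_2})$ (built from evaluation modules whose parameters are read off from the $q$-Racah data), so that the shape and the maps come out of the representation theory instead of being prescribed by hand.

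There is also a recurring logical error: nondegeneracy of a bilinear form is made to do work it cannot do. In (ii) $\Rightarrow$ (i) you claim ``nondegeneracy rules out any proper $(A,A^*)$-invariant subspace''; this is false ($A=A^*=I$ with the standard symmetric form is self-adjoint and every subspace is invariant). In \cite{NT} the form is constructed \emph{from} a pair that is already irreducible in the sense of condition (iv) of Definition \ref{def1}; the implication cannot be run backwards, so your irreducibility step collapses. Symmetrically, in (i) $\Rightarrow$ (ii) you argue that nondegeneracy ``forces'' the pairing of $E_0^*V$ with $E_d^*V$ to be nonzero, and that $\zeta_d\neq 0$ is ``guaranteed by the pair having diameter exactly $d$''. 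A nondegenerate form never forces two specified lines to pair nontrivially, and $\zeta_d\neq 0$ (equivalently $E_0^*E_dE_0^*\neq 0$, since $\tau_d(A)$ is a nonzero multiple of $E_d$) is a substantive theorem of the Nomura--Terwilliger series, as is the nonvanishing of $\sum_{i}\eta_{d-i}(\theta_0)\eta^*_{d-i}(\theta_0^*)\zeta_i$ (the ``$\mu$''-type result of \cite{NT:muqrac}, \cite{INT}). The forward direction could be repaired by citing those results rather than deriving them, but the existence direction cannot; as written, the proposal does not prove the theorem.
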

\begin{thm}\cite{INT}
\label{class}
Let $d$ denote a nonnegative integer and let 
\begin{equation}
\label{pa}
(\{\theta_i\}_{i=0}^d;\{\theta_i^*\}_{i=0}^d;\{\zeta_i\}_{i=0}^d)\end{equation}
denote a sequence of scalars taken from $\mathbb{K}$. Then there exists a sharp tridiagonal system $\Phi$ over $\mathbb{K}$ with parameter (\ref{pa}) if and only if i)-iii) hold below.\\
i) $\theta_i \ne \theta_j, \theta_i^* \ne \theta_j^*$ if $i \ne j ~ (0 \le i, j \le d).$\\
ii) The expressions
\[\frac{\theta_{i-2}-\theta_{i+1}}{\theta_{i-1}-\theta_i},~~~ \frac{\theta^*_{i-2}-\theta^*_{i+1}}{\theta^*_{i-1}-\theta^*_i}\]
are equal and independent of $i$ for $2 \le i \le d-1$.\\
iii) $\zeta_0 = 1, \zeta_d \ne 0,$ and 
\[0 \ne \sum\limits_{i=0}^d{\eta_{d-i}(\theta_0)\eta_{d-i}^*(\theta_0^*)\zeta_i}.\]
Suppose i)-iii) hold. Then $\Phi$ is unique up to isomorphism of tridiagonal systems.
\end{thm}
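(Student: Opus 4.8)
The plan is to prove the two implications separately and then invoke the uniqueness result recalled just above. Throughout write $\Phi=(A;\{E_i\}_{i=0}^d;A^*;\{E_i^*\}_{i=0}^d)$ for a sharp tridiagonal system, and recall that by the definition of a standard ordering the $\theta_i$ (resp. $\theta_i^*$) are exactly the distinct eigenvalues of $A$ (resp. $A^*$), each listed once.

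\textbf{Necessity.} Suppose $\Phi$ realizes the array (\ref{pa}). Condition (i) is immediate, since distinct primitive idempotents of $A$ (resp. $A^*$) correspond to distinct eigenvalues. Condition (ii) is exactly Corollary \ref{ch} applied to the underlying tridiagonal pair, whose tridiagonal relations are supplied by Theorem \ref{dinhly2}. For (iii), the normalization $\zeta_0=1$ is built into the definition of the split sequence. To obtain $\zeta_d\neq0$ and the nonvanishing of $\sum_{i=0}^d\eta_{d-i}(\theta_0)\eta_{d-i}^*(\theta_0^*)\zeta_i$, I would use the nondegenerate symmetric bilinear form carried by any sharp tridiagonal system (the \cite{NT} form theorem) together with the rank-one structure $\dim E_0^*V=1$ coming from sharpness: one expresses the displayed sum as the scalar by which a suitable idempotent product (such as $E_0^*E_dE_0^*$) acts on the one-dimensional space $E_0^*V$, evaluates it against the split-sequence recursion using Lemma \ref{lem:triplep}, and then nondegeneracy of the form forces this scalar to be nonzero. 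The same mechanism yields $\zeta_d\neq0$. I do not expect this direction to be the difficult part.

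\textbf{Uniqueness.} This is precisely the \cite{NT} theorem recalled above: two sharp tridiagonal systems over $\mathbb{K}$ are isomorphic if and only if they share a parameter array. Hence at most one $\Phi$ (up to isomorphism) can realize (\ref{pa}).

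\textbf{Sufficiency (existence).} This is the crux. Given scalars satisfying (i)--(iii), I would construct a sharp tridiagonal system with parameter array (\ref{pa}) as follows. First reduce to the algebraically closed case: over $\overline{\mathbb{K}}$ every tridiagonal pair is automatically sharp (\cite{NT}), and the array (\ref{pa}) is $\mathbb{K}$-rational, so once a system is built over $\overline{\mathbb{K}}$ the uniqueness just proved should let one descend to a $\mathbb{K}$-form. Next, condition (ii) pins down a common value equal to $\beta+1$ for a well-defined $\beta\in\mathbb{K}$, and Corollary \ref{hequa1} then puts the two eigenvalue sequences into the closed forms $\theta_i=\alpha_1+\alpha_2q^i+\alpha_3q^{-i}$ and $\theta_i^*=\alpha_1^*+\alpha_2^*q^i+\alpha_3^*q^{-i}$, with the appropriate degenerations when $\beta=\pm2$. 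In the generic ($q$-Racah) regime the required system is produced directly by the preceding \cite{IT2} classification theorem, whose hypotheses are exactly (i) together with the $\zeta_0=1$, $\zeta_d\neq0$, and sum conditions of (iii). The remaining degenerate eigenvalue types are then handled by the analogous explicit realizations through the split decomposition: one writes the raising and lowering maps $R,L$ and idempotents $F_i$ on a model space whose graded dimensions are forced by the shape vector, chooses the off-diagonal data to match $\{\zeta_i\}_{i=0}^d$, and verifies that $A=R+\sum_h\theta_hF_h$ and $A^*=L+\sum_h\theta_h^*F_h$ satisfy (i)--(iv) of Definition \ref{defitri}.

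The hard part is this last construction: realizing a prescribed split sequence while simultaneously guaranteeing irreducibility (axiom (iv) of Definition \ref{defitri}) and the correct shape, uniformly across all eigenvalue types rather than only $q$-Racah. I expect the cleanest route to be representation-theoretic — building the module as a suitable iterated tensor product of evaluation modules for $U_q(\widehat{sl_2})$ (cf. Theorem \ref{cc1} and the homomorphisms recalled earlier), so that the Drinfeld-type data encode exactly $\{\zeta_i\}_{i=0}^d$, with the nonvanishing sum in (iii) being precisely what forces the resulting pair to be indecomposable of diameter $d$. The field descent from $\overline{\mathbb{K}}$ to $\mathbb{K}$, although conceptually routine given uniqueness, is the second point that will require care.
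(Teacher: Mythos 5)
The thesis itself contains no proof of this theorem: it is recalled verbatim from \cite{INT} (the Ito--Nomura--Terwilliger classification of sharp tridiagonal pairs), so there is no internal argument to compare yours against, and your proposal has to stand on its own. Your treatment of necessity and uniqueness is essentially sound: (i) is immediate, (ii) is Corollary \ref{ch}, uniqueness is exactly the \cite{NT} isomorphism theorem recalled just above, and the non-vanishing claims in (iii) reduce, via the definition of the split sequence and the vanishing Lemma \ref{lem:triplep}, to showing $E_0^*E_dE_0^*\neq 0$ and $E_0^*E_0E_0^*\neq 0$. Be aware, though, that the second of these is itself a nontrivial theorem and not a one-line consequence of the \cite{NT} bilinear form: that form is symmetric but in general not anisotropic over an arbitrary field $\mathbb{K}$, so $E_0E_0^*V\neq 0$ does not by itself force $E_0^*E_0E_0^*\neq 0$.

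The genuine gap is the existence direction, and it is twofold. First, the only existence result available in the chapter, the \cite{IT2} theorem, covers $q$-Racah eigenvalue arrays over an algebraically closed field, whereas conditions (i)--(iii) also allow arrays that are not of $q$-Racah type; for these your plan---pick raising/lowering maps and idempotents on a model space ``so that the off-diagonal data match $\{\zeta_i\}_{i=0}^d$'' and then verify axioms (i)--(iv) of Definition \ref{defitri}---is a restatement of what must be proven, not a construction. Realizing an arbitrary prescribed split sequence while simultaneously guaranteeing irreducibility is precisely the content of the $\mu$-conjecture circle of results (\cite{NT:muqrac}, \cite{INT}), i.e. the whole difficulty of the theorem, and your alternative suggestion via iterated tensor products of $U_q(\widehat{sl_2})$ evaluation modules is offered as an expectation and never carried out. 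Second, the reduction ``construct over $\overline{\mathbb{K}}$, then descend using uniqueness'' does not work as stated: the uniqueness theorem compares two sharp systems over the same field and says nothing about the existence of a $\mathbb{K}$-rational model of a system defined over $\overline{\mathbb{K}}$. A genuine descent argument is required---Galois descent needs control of the automorphism group and fails naively when $\mathbb{K}$ is not perfect, since then $\overline{\mathbb{K}}/\mathbb{K}$ is not Galois---and the cited proof avoids this issue altogether by working with an algebra attached to the parameter array that is defined over $\mathbb{K}$ itself and proving a suitable finite-dimensional quotient is nonzero. As it stands, your proposal establishes necessity and uniqueness but not existence.
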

There is no doubt that if $\mathbb{K}$ is an algebraically closed field, then there exists a tridiagonal system over $\mathbb{K}$ with parameter array (\ref{pa}) if and only if this array satisfies the conditions i), ii) and iii) of Theorem \ref{class}.

\subsection{Connection with the theory of orthogonal polynomials}
In the literature, the first connection between the theory of orthogonal polynomials and Leonard pairs arises in the analysis of finite dimensional representation of Zhedanov's algebra \cite{Zhe} (also called the Askey-Wilson algebra). It is shown that the Askey-Wilson polynomials are the overlap coefficients between the `dual' basis in which two generators are diagonalized respectively.\vspace{1mm}

 Furthermore, the theory of Leonard pairs gives a nice algebraic framework for the orthogonal polynomials of the Askey-scheme. Historically, there is a theorem due to Leonard \cite{Leo1}, \cite[page 260]{BI} that gives a characterization of the $q$-Racah polynomials and some related polynomials in the Askey scheme \cite{ARS, AW, AW1, KS, Koo}. In this Section, the connection between the theory of Leonard pairs and the Askey-scheme of orthogonal polynomials is recalled \cite{Ter03, T05, Ter06}.\\

First, we recall hypergeometric orthogonal polynomials which appeared in \cite{KS}. \\
Define $q$-analogue of the Pochhammer-symbol $(a)_k$
\[(a)_0=1~\text{and}~ (a)_k=a(a+1)(a+2)\dots(a+k-1), ~ k = 1, 2, 3, \dots.\]
This $q$-extension is given by
\[(a;q)_0=1~ \text{and}~ (a;q)_k=(1-a)(1-aq)(1-aq^2)\dots(1-aq^{k-1}),~ k = 1, 2, 3\dots.\]
It is clear that 
\[\mathop {\lim }\limits_{q \to 1} \frac{{{{\left( {{q^\alpha };q} \right)}_k}}}{{{{\left( {1 - q} \right)}^k}}} = {\left( \alpha  \right)_k}.\]
The symbols $(a;q)_k$ are called $q$-shifted factorials. They can also be defined for negative values of $k$ as
\[(a;q)_k=\frac{1}{(1-aq^{-1})(1-aq^{-2})\dots(1-aq^k)},~ a \ne q, q^2, q^3, \dots, q^{-k}, k = -1, -2, -3, \dots.\]
We can also define 
\[(a;q)_{\infty}= \prod\limits_{k=0}^{\infty}{(1-aq^k)}.\]
This implies that
\[(a;q)_n=\frac{(a;q)_{\infty}}{(aq^n;q)_{\infty}},\]
and, for any complex number $\lambda$,
\[(a;q)_{\lambda}=\frac{(a;q)_\infty}{(aq^{\lambda};q)_{\infty}}\]
The hypergeometric series $_rF_s$ is defined by
\[_rF_s\left(\left.\begin{array}{c}
a_1, \dots, a_r\\
b_1,\dots, b_s 
\end{array}\right|z\right)=\sum\limits_{k=0}^{\infty}{\frac{(a_1, \dots, a_r)_k}{(b_1, \dots, b_s)_k}\frac{z^k}{k!}},\]
where 
\[(a_1, \dots, a_r)_k=(a_1)_k\dots(a_r)_k.\]
The hypergeometric series $_rF_s$ is called balanced if $r = s+1, z=1$ and $a_1+a_2+\dots+a_{s+1}+1=b_1+b_2+\dots+b_s.$\\
The basic hypergeometric series (or $q$-hypergeometric series) $_r\phi_s$ is defined by
\[_r\phi_s\left(\left.\begin{array}{c}
a_1, \dots, a_r\\
b_1,\dots, b_s 
\end{array}\right|q;z\right)=\sum\limits_{k=0}^{\infty}{\frac{(a_1, \dots, a_r;q)_k}{(b_1, \dots, b_s;q)_k}(-1)^{(1+s-r)k}q^{(1+s-r)\left(\begin{array}{c}
k\\2
\end{array}\right)}\frac{z^k}{(q;q)_k}},\]
where 
\[(a_1, \dots, a_r;q)_k=(a_1;q)_k\dots(a_r;q)_k.\]
The special case $r=s+1$ reads
\[_{s+1}\phi_s\left(\left.\begin{array}{c}
a_1, \dots, a_{s+1}\\
b_1,\dots, b_s 
\end{array}\right|q;z\right)=\sum\limits_{k=0}^{\infty}{\frac{(a_1, \dots, a_r;q)_k}{(b_1, \dots, b_s;q)_k}\frac{z^k}{(q;q)_k}}.\]
A basic hypergeometric series is called balanced if $z=q$ and $a_1a_2\dots a_{s+1}q=b_1b_2\dots b_s.$\\

There is a natural correspondence between Leonard pairs and a family of orthogonal polynomials. The following material is taken from \cite{Ter03, T05}.\\

Let $\lambda$ denote an indeterminant, and let $\mathbb{K}[\lambda]$ denote a $\mathbb{K}$-algebra consisting of all polynomials in $\lambda$ that have coefficients in $\mathbb{K}$. Let $\Phi=(A; E_0, E_1, \dots, E_d; A^*; E_0^*, E_1^*,\dots, E_d^*)$ denote a Leonard system over $\mathbb{K}$. Then there exists a unique sequence of monic polynomials $p_0, p_1, \dots, p_{d+1};~ p_0^*, p_1^*, \dots, p_{d+1}^*$ in $\mathbb{K}[\lambda]$ such that\\
\[deg(p_i) =i,~~~ deg(p_i^*) = i~~~(0 \le i \le d+1),\]
\[p_i(A)E_0^*=E_i^*A^iE_0^*,~~ p_i^*(A^*)E_0= E_i{A^*}^iE_0~~~(0 \le i \le d),\]
\[p_{d+1}(A)=0,~~ p_{d+1}^*(A^*)=0.\]
These polynomials satisfy
\begin{eqnarray}
\label{1}
p_0 =1,~~ p_0^* =1,\\
\label{2}
\lambda p_i = p_{i+1}+a_ip_i+x_ip_{i-1} ~~~(0 \le i \le d),\\
\label{3}
\lambda p_i^* = p_{i+1}^* +a_i^*p_i^* + x^*_ip_{i-1}^* ~~~(0 \le i \le d),
\end{eqnarray}
where $x_0, x_0^*, p_{-1}, p_{-1}^*$ are all 0, and where 
\[a_i = tr(E_i^*A), ~~a_i^* = tr(E_iA^*)~~~(0 \le i \le d),\]
\[x_i = tr(E_i^*AE_{i-1}^*A),~~ x_i^*=tr(E_iA^*E_{i-1}A^*).\]
In fact 
\begin{equation}
x_i \ne 0, ~~ x_i^* \ne 0~~~(1 \le i \le d).
\end{equation}
We call $p_0, p_1, \dots, p_{d+1}$ the monic polynomial sequence of $\Phi$, and $p_0^*, p_1^*, \dots, p_{d+1}^*$ the dual monic polynomial sequence of $\Phi$. \\
Let $\theta_0, \theta_1, \dots, \theta_d$ (resp. $\theta_0^*, \theta_1^*, \dots, \theta_d^*$) denote the eigenvalue sequence (resp. dual eigenvalue sequence) of $\Phi$, so that 
\begin{eqnarray}
\label{4}
\theta_i \ne \theta_j, ~~ \theta_i^* \ne \theta_j^* ~~ \text{if}~ i \ne j~~ (0 \le i, j \le d),\\
\label{5}
p_{d+1}(\theta_i) = 0, ~~ p_{d+1}^*(\theta_i^*) = 0 ~~ (0 \le i \le d).
\end{eqnarray}
Then 
\begin{equation}
\label{6}
p_i(\theta_0) \ne 0, ~~p_i^*(\theta_0^*) \ne 0 ~~~(0 \le i \le d),
\end{equation}
and
\begin{equation}
\label{7}
\frac{p_i(\theta_j)}{p_i(\theta_0)}=\frac{p_j^*(\theta_i^*)}{p_j^*(\theta_0^*)}~~~(0 \le i, j \le d). 
\end{equation}
Conversely, given polynomials 
\begin{eqnarray}
\label{8}
p_0, p_1, \dots, p_{d+1},\\
\label{9}
p_0^*, p_1^*, \dots, p_{d+1}^*
\end{eqnarray}
in $\mathbb{K}[\lambda]$ satisfying (\ref{1})-(\ref{3}), and given scalars 
\begin{eqnarray}
\label{10}
\theta_0, \theta_1, \dots, \theta_d,\\
\label{11}
\theta_0^*, \theta_1^*, \dots, \theta_d^*
\end{eqnarray}  
in $\mathbb{K}$ satisfying (\ref{4})-(\ref{7}), there exists a Leonard system $\Phi$ over $\mathbb{K}$ with monic polynomial sequence (\ref{8}), dual monic polynomial sequence (\ref{9}), eigenvalue sequence (\ref{10}), and dual eigenvalue sequence (\ref{11}). The system $\Phi$ is unique up to isomorphism of Leonard systems.\\

Hence, there is a bijection between the Leonard systems and systems (\ref{8})-(\ref{11}) satisfying (\ref{1})-(\ref{7}). For $\mathbb{K} = \mathbb{R}$, the systems (\ref{8})-(\ref{11}) satisfying (\ref{1})-(\ref{7}) were classified by Leonard, Bannai, and Ito. They found the polynomials involved are $q$-Racah polynomials or related polynomials from the Askey scheme.\\

We define the polynomials
\[u_i= \frac{p_i}{p_i(\theta_0)}, ~~ u_i^*= \frac{p_i^*}{p^*_i(\theta_0^*)}~~~ (0 \le i \le d).\]
By (\ref{7}), 
\[u_i(\theta_j)=u_j^*(\theta_i^*)~~~ (0 \le i, j\le d),\]
then $u_i$ and $u_i^*$ are dual sequences of normalized polynomials \cite{Leo1}. \\

Furthermore, there exists a unique sequence of scalars $c_1, c_2,\dots, c_d;~ b_0, b_1, \dots, b_{d-1}$ in $\mathbb{K}$ such that:
\begin{eqnarray*}
x &=& b_{i-1}c_i ~~~(1 \le i \le d),\\
\theta_0 &=& c_i + a_i +b_i ~~~(0 \le i \le d),
\end{eqnarray*}
where $c_0 =0, b_d =0$. Then
\[\lambda u_i=c_iu_{i-1}+a_iu_i+b_iu_{i+1}~~~ (0 \le i \le d-1),\]
and $\lambda u_d-c_du_{d-1}-a_du_d$ vanishes on each of $\theta_0, \theta_1, \dots, \theta_d$, where $\lambda$ is an indeterminant.
By the main theorem of \cite{Leo1}, we have
\[b_i = \varphi_{i+1}\frac{\tau^*_i(\theta_i^*)}{\tau^*_{i+1}(\theta_{i+1}^*)},~~ b_i^* = \varphi_{i+1}\frac{\tau_i(\theta_i)}{\tau_{i+1}(\theta_{i+1})}~~~ (0 \le i \le d-1),\]
\[c_i= \phi_i\frac{\eta^*_{d-i}(\theta_i^*)}{\eta_{d-i+1}^*(\theta^*_{i-1})}, ~~c_i^*= \phi_i\frac{\eta_{d-i}(\theta_i)}{\eta_{d-i+1}(\theta_{i-1})}~~~ (1 \le i \le d),\]
where 
\begin{eqnarray*}
\tau_i(\lambda)&=& (\lambda - \theta_0)(\lambda-\theta_1)\dots(\lambda - \theta_{i-1}),\\
\tau_i^*(\lambda)&=& (\lambda - \theta^*_0)(\lambda-\theta^*_1)\dots(\lambda - \theta^*_{i-1}),\\
\eta_i(\lambda)&=&(\lambda-\theta_d)(\lambda-\theta_{d-1})\dots(\lambda-\theta_{d-i+1}),\\
\eta_i^*(\lambda)&=&(\lambda-\theta^*_d)(\lambda-\theta^*_{d-1})\dots(\lambda-\theta^*_{d-i+1}).\\
\end{eqnarray*}
Hence, the polynomials $u_i, u_i^* ~ (0 \le i \le d)$ are given by
\[u_i = \sum\limits_{h=0}^i{\frac{\tau_h^*(\theta_i^*)}{\varphi_1\varphi_2\dots\varphi_h}\tau_h},~~ u_i^* = \sum\limits_{h=0}^i{\frac{\tau_h(\theta_i)}{\varphi_1\varphi_2\dots\varphi_h}\tau^*_h}.\]
We find that for $0 \le i, j \le d$, the common value of $u_i(\theta_j), u_j^*(\theta^*_i)$
\begin{equation}
\label{12}
\sum\limits_{n=0}^{d}{\frac{(q^{-i};q)_n(s^*q^{i+1};q)_n(q^{-j};q)_n(sq^{j+1};q)_nq^n}{(r_1q;q)_n(r_2q;q)_n(q^{-d};q)_n(q;q)_n}}
\end{equation}

In fact, (\ref{12}) is the basic hypergeometric series 
\[_4\theta_3\left(\left.\begin{array}{c}
q^{-i}, s^*q^{i+1}, q^{-j}, sq^{j+1}\\
r_1q, r_2q, q^{-d}
\end{array}\right|q, q \right),\]
where $r_1, r_2, s, s^*$ are parameters in (\ref{p1})-(\ref{p4}) and it is balanced because $r_1r_2=ss^*q^{d+1}$. 
Then $u_i, u_i^*$ are $q$-Racah polynomials.\\

Now we consider orthogonality of $p_i, p_i^*$ and $u_i, u_i^*$. \\

Put $m_i = tr(E_iE_0^*), ~~ m_i^* = tr(E_i^*E_0), 0 \le i \le d$.  Then each of $m_i, m^*_i$ is nonzero $(0 \le i \le d)$, and the orthogonality for the $p_i$ is 
\[\sum\limits_{r=0}^d{p_i(\theta_r)p_j(\theta_r)m_r}=\delta_{ij}x_1x_2\dots x_i ~~~(0\le i, j\le d),\]
\[\sum\limits_{i =0}^{d}{\frac{p_i(\theta_r)p_i(\theta_s)}{x_1x_2\dots x_i}}=\delta_{rs}m_r^{-1}~~~(0 \le r, s \le d).\]
Observe that $m_0 = m^*_0$, let $\upsilon$ denote the multiplicative inverse of this common value, and set
\[k_i= m_i^*\upsilon ~~~(0 \le i, j \le d).\]
The orthogonality for the $u_i$ is 
\[\sum\limits_{r=0}^d{u_i(\theta_r)u_j(\theta_r)m_r}=\delta_{ij}k_i^{-1}~~~(0 \le i, j \le d),\]
\[\sum\limits_{i=0}^d{u_i(\theta_r)u_i(\theta_s)k_i}=\delta_{rs}m_r^{-1}~~~(0 \le r,s \le d).\]
Remark that \[k_i = \frac{b_0b_1\dots b_{i-1}}{c_1c_2\dots c_i},\]
and \[\upsilon = k_0+k_1+\dots+k_d.\]

As an example, we now consider an infinite family of Leonard pairs and find the relation with hypergeometric series. For any nonnegative integer $d$, the pair
\[A = \left( {\begin{array}{*{20}{c}}
   0 & d & {} & {} & {} & 0  \\
   1 & 0 & {d - 1} & {} & {} & {}  \\
   {} & 2 &  \cdot  &  \cdot  & {} & {}  \\
   {} & {} &  \cdot  &  \cdot  &  \cdot  & {}  \\
   {} & {} & {} &  \cdot  &  \cdot  & 1  \\
   0 & {} & {} & {} & d & 0  \\
\end{array}} \right),~~ A^* = \text{diag}(d, d-2,d-4, \dots, -d)\]
is a Leonard pair on the vector space $\mathbb{K}^{d+1}$, provided the characteristic of $\mathbb{K}$ is zero or an odd prime greater than $d$. One shows $P^2=2^dI$ and $AP=PA^*$, where $P$ denotes the matrix with $ij$ entry
\[P_{ij}=\left(\begin{array}{c}
d\\
j
\end{array}\right)\sum\limits_{n=0}^d{\frac{(-i)_n(-j)_n2^n}{(-d)_nn!}}~~ (0 \le i, j \le d).\]
Further $P_{ij}$ can be written in form of hypergeometric series
\[P_{ij}=_2F_1\left(\left.\begin{array}{c}
-i, -j\\
-d
\end{array}\right|2\right).\]
There exist Leonard pairs similar to the one above in which the series of type $_2F_1$ is replaced by a series of one of the following types: $_3F_2,~ _4F_3,~ _2\phi_1,~ _3\phi_2,~ _4\phi_3.$\\

Let $\mathbb{K}$ denote a field with characteristic 0. Let $d$ denote a nonnegative integer. Put $\Omega =\{d-2i|i = 0, 1, \dots, d\}$. Let $V$ denote a vector space over $\mathbb{K}$ consisting of all functions from $\Omega$ to $\mathbb{K}$. Since the cardinality of $\Omega$ is $d+1$ the dimension of $V$ is $d+1$.\\

Define two transformations $A, A^*$ from $V$ to $V$ as follows.\\

For all $f \in V, \theta \in \Omega$,
\begin{eqnarray}
(Af)(\theta) &=& \theta f(\theta),\\
(A^*f)(\theta) &=&\frac{d+\theta}{2}f(\theta-2)+\frac{d-\theta}{2}f(\theta+2).
\end{eqnarray}
It is easy seen that $A, A^*$ are linear.\\

For $j = 0, \dots, d$ let $K_j$ denote the element in $V$ satisfying
\[K_j(\theta_i)=\frac{d!}{j!(d-j)!}~{_2F_1\left( {\left. {\begin{array}{*{20}{c}}
   { - i, - j}  \\
   { - d}  \\
\end{array}} \right|2} \right)},\]
where $\theta_i = d-2i, 0 \le i \le d$.\\
Observe that $K_j(\theta)$ is a polynomial of degree $j$ in $\theta$. The polynomials $K_0, K_1, \dots, K_d$ are Krawtchouk polynomials and form a basis for $V$. With respect to this basis the matrices representing $A$ and $A^*$ are
\[A = \left( {\begin{array}{*{20}{c}}
   0 & d & {} & {} & {} & 0  \\
   1 & 0 & {d - 1} & {} & {} & {}  \\
   {} & 2 &  \cdot  &  \cdot  & {} & {}  \\
   {} & {} &  \cdot  &  \cdot  &  \cdot  & {}  \\
   {} & {} & {} &  \cdot  &  \cdot  & 1  \\
   0 & {} & {} & {} & d & 0  \\
\end{array}} \right),~~ A^* = \text{diag}(d, d-2,d-4, \dots, -d).\]
Since the pair $(A, A^*)$ is a Leonard pair on $V$, there exists a basis for $V$ with respect to which the matrix representing $A$ is diagonal and the matrix representing $A^*$ is irreducible tridiagonal. Now we display this basis. For $0 \le j \le d$ let $K^*_j$ denote the element in $V$ which satisfies
\[K_j^*(\theta_i)=\delta_{ij} ~~(0 \le i \le d),\]
where $\delta_{ij}$ denotes the Kronecker delta. The sequence $K_0^*, K_1^*,\dots, K_d^*$ forms a basis for $V$. With respect to this basis the matrices representing $A$ and $A^*$ are
\[A^* = \left( {\begin{array}{*{20}{c}}
   0 & d & {} & {} & {} & 0  \\
   1 & 0 & {d - 1} & {} & {} & {}  \\
   {} & 2 &  \cdot  &  \cdot  & {} & {}  \\
   {} & {} &  \cdot  &  \cdot  &  \cdot  & {}  \\
   {} & {} & {} &  \cdot  &  \cdot  & 1  \\
   0 & {} & {} & {} & d & 0  \\
\end{array}} \right),~~ A^* = \text{diag}(d, d-2,d-4, \dots, -d).\]
We have then shown how the above Krawtchouk polynomials correspond to the Leonard pairs.

The polynomials in the following table are related to Leonard pairs in a similar fashion.\\
\begin{center}
\begin{tabular}{|c|c|}
\hline  Type& Polynomial  \\ 
\hline $_4F_3$ & Racah \\ 
$_3F_2$ & Hahn, dual Hahn\\
$_2F_1$ & Krawtchouk \\
$_4\phi_3$ & $q$-Racah \\
$_3\phi_2$ & $q$-Hahn, dual $q$-Hahn\\
$_2\phi_1$ & $q$-Krawtchouk (classical, affine, quantum, dual)\\
\hline 
\end{tabular} 
\end{center}
The above polynomials are defined in Koekoek and Swarttouw \cite{KS}, and the connection to Leonard pairs is given in \cite{T05, BI}. Indeed, these polynomials exhaust all Leonard pairs for which $q\ne -1$. For $\mathbb{K} = \mathbb{R}$, the classification of Leonard pairs amounts to a ``linear algebraic version" of Leonard's theorem \cite{Leo1, BI}.

\chapter{Mathematical Physics: background}
\label{ChapPhys}

In the first part, we recall the two known presentations of the Onsager algebra.  The second presentation (the original one) is given by an infinite set of elements $\{A_n\}, n = 0, \pm1, \pm2, \dots$, $\{G_m\}, m = 1, 2, \dots$  that satisfy the relations (\ref{relaOn1})-(\ref{relaOn3}) \cite{O}. The first presentation introduced by Dolan-Grady is given by $A_0,A_1$ satisfying the relations (\ref{2111})-(\ref{2112}) \cite{DG}. It is also explained that the elements of an Abelian subalgebra of the Onsager algebra provide examples of mutually commuting quantities (see (\ref{cmm})) that generate integrable systems.  Also, the connection with $\widehat{sl_2}$ (more precisely, with the loop algebra of $sl_2$) is described.  For irreducible finite dimensional representation of the Onsager algebra, it is known that the generators of the second representation satisfy additional relations (\ref{dv})-(\ref{dv1}). These relations are usually called the Davies' relations \cite{B.D, B.D1}.

In the second and third parts, by analogy with the undeformed case discussed in the first part, it is explained that the $q-$Onsager algebra admits two presentations. The second presentation is given by the generators $\mathcal{W}_{-k},\mathcal{W}_{k+1}, \mathcal{ G}_{k+1}, \mathcal{\tilde{G}}_{k+1}$ that satisfy the infinite dimensional algebra ${\cal A}_q$ (see Definition \ref{alA}) \cite{BasS}. The first presentation is given
by the standard generators $\mathcal{W}_0, \mathcal{W}_1$ that satisfy the $q-$Dolan-Grady relations \cite{T05, B1, Bas1}. It is explained how mutually commuting quantities (\ref{tcmm}) that generate an Abelian subalgebra of the $q-$Onsager algebra are derived by using Sklyanin's formalism. Also, the connection between the two presentations and $U_q(\widehat{sl_2})$ (more precisely, the quantum loop algebra of $sl_2$) is described.
For most of the examples considered in the literature \cite{B1, BK2, BK3}, the vector space on which the elements act is finite dimensional. As a consequence, quantum analogs of Davies relations naturally appear, see (\ref{qdv1})-(\ref{qdv4}). \vspace{1mm}

In the last part, we briefly recall how the open XXZ spin chain with generic boundary conditions and generic values of $q$ can be formulated using the $q-$Onsager approach.

\section{Historical background}
The exact solution of the planar Ising model in zero magnetic field which was obtained by Onsager \cite{O} has provided a considerable source of developments in the theory of exactly solvable systems of statistical mechanics, or quantum field theory in two dimensions. Onsager's successful approach was originally based on the so-called Onsager algebra and its representation theory. In progress of solving the two dimensional Ising model, he established the transfer matrix of the model in terms of an infinite set of elements $\{A_n\}, n = 0, \pm1, \pm2, \dots$, $\{G_m\}, m = 1, 2, \dots$  that generate the so-called Onsager algebra.

\begin{defn}\cite{O}
The Onsager algebra is a Lie algebra which has generators $A_n, G_m$, $n = 0, \pm1, \pm2, \dots, m = 1, 2, \dots$ such that they satisfy the following relations
\begin{eqnarray}
\label{relaOn1}
[A_n, A_l] &=& 4G_{n-l},\\
~[G_m, A_n] &=& 2 A_{n+m}-2A_{n-m},\\
\label{relaOn3}
~[G_m, G_l] &=& 0.
\end{eqnarray}
\end{defn} 
Use algebraic methods, Onsager derived the largest eigenvalue and the corresponding eigenvector of the transfer matrix of the Ising model.
Despite the important role of the Onsager algebra, it received less attention in the following years than the star triangular relations which originated in \cite{O}, \cite{Wan} and led to the Yang-Baxter equations, the theory of quantum groups, as well as the quantum inverse scattering method. \\

In the 1980s the Onsager algebra appeared \cite{B.D}, \cite{B.D1}, \cite{J.P} to be closely related with the quantum integrable structure discovered by Dolan and Grady in \cite{DG}. In fact, Dolan and Grady considered a self-dual quantum Hamiltonian of the form 
\begin{equation}
\label{hal}
H = \kappa A + \kappa^*\tilde{A},
\end{equation}
where $\kappa, \kappa^*$ are coupling constants and $\tilde{A}$ is the operator dual to $A$ such that both operators satisfy the  condition
\begin{equation}
\label{215}
[A, [A, [A,\tilde{A}]]]= 16[A,\tilde{A}].
\end{equation}
As a consequence of this relation, there exists an infinite set of commuting conserved self-dual charges
\begin{equation}
\label{cmm}
\mathcal{I}_{2n}=\kappa(W_{2n}-\tilde{W}_{2n-2})+\kappa^*(\tilde{W}_{2n}-W_{2n-2}),~ n = 1, 2, \dots,
\end{equation}
where 
\begin{equation}
W_{2n+2}\equiv-\frac{1}{8}[A,[\tilde{A},W_{2n}]]-\tilde{W}_{2n},~ n = 1, 2, \dots,
\end{equation}
$W_0 \equiv A, \mathcal{I}_0 \equiv H$, and the sequence $\{W_{2n}\}, ~n = 1, 2, \dots$ can be extended to $n <0$ by defining $W_{-2n}\equiv -\tilde{W}_{2n-2}$. \\
It was showed that $[\tilde{W}_{2l},W_{2n-2l-2}]= [\tilde{W}_{2l-2},W_{2n-2l}], ~\text{for all}~ n\ge 0,~ l \ge 0$,
then it followed 
\begin{equation}
\label{fcmm}
[H, \mathcal{I}_{2n}] = 0~~ \text{and}~~ [\mathcal{I}_{2n}, \mathcal{I}_{2m}]=0
\end{equation}
Clearly, Dolan and Grady \cite{DG} showed the the Dolan-Grady relations are sufficient to guarantee that there is an infinite sequence of commuting operators of the Hamiltonian $H$. Based on this result the integrability does not depend on the dimension of the system or the nature of the space-time manifold, i.e, lattice, continuum or loop space. Integrable systems are characterized by the existence of a sufficient number of constants of motion, i.e. equal to the number of degrees of freedom. The rigorous connection between simple self-duality and a set of commuting conserved charges was established. For a finite lattice, the set is finite. For an infinite system, the set is infinite. In the XY and Ising modes, the charges coincide with known results. Moreover, von Gehlen and Rittenberg \cite{GR} considered some $Z_n$ symmetric quantum spin chain Hamiltonians (\ref{hal}), and presented strong numerical evidence that they exhibit Ising-like behavior in their spectra. Namely, the Hamiltonians have an infinite set of commuting conserved charges based on the Dolan-Grady relation
\begin{equation}
\label{219}
[A,[A,[A,\tilde{A}]]] = n^2[A,\tilde{A}].
\end{equation}
Obviously, if we put $B = 4n^{-1}A,~ \tilde{B}= 4n^{-1}\tilde{A}$ the equation (\ref{219}) becomes the equation (\ref{215}) for $B, \tilde{B}$. It means that
\begin{equation}
[B,[B,[B,\tilde{B}]]] = 16[B,\tilde{B}].
\end{equation}

In the early 1990s, Davies \cite{B.D}, \cite{B.D1} obtained the relation between the Onsager algebra and the Dolan Grady relations. Actually, Davies did not require the self-duality of operators $A_0, A_1$ in the Hamiltonian 
\begin{equation}
\label{Hamil}
H = \kappa A_0 + \kappa^* A_1,
\end{equation}
where $\kappa, \kappa^*$ are coupling constants, but he gave a pair of conditions
\begin{eqnarray}
\label{2111}
[A_0, [A_0, [A_0, A_1]]] &=& 16[A_0,A_1],\\
\label{2112}
~[A_1,[A_1,[A_1,A_0]]]&=&16[A_1,A_0].
\end{eqnarray} 
 We first recall the way to identify an Onsager algebra from the Dolan-Grady relations \cite{B.D}. Let $A_0, A_1$ denote generators satisfying the Dolan-Grady relations, define the sequences $A_n, G_m, n = 0, \pm1, \pm2, \dots, m = 1, 2, \dots$ by the recursion relations
\begin{eqnarray}
G_1&=&\frac{1}{4}[A_1,A_0],\\
A_{n+1}-A_{n-1}&=&\frac{1}{2}[G_1, A_n],\\
G_n&=&\frac{1}{4}[A_n,A_0].
\end{eqnarray}
The generators $A_n, G_m$ satisfy the defining relations of the Onsager algebra \cite{B.D}. \\

Inversely, if $A_n, G_m,  n = 0, \pm1, \pm2, \dots, m = 1, 2, \dots$ are generators of an Onsager algebra, then $A_0, A_1$ satisfy the Dolan-Grady relations. In fact, every adjacent pairs $A_k, A_{k+1}$ of the sequence $\{A_n\}$ satisfy the Dolan-Grady relations. As a consequence, the Onsager algebra admits two presentations. One given by (\ref{relaOn1}) - (\ref{relaOn3}) and one given by (\ref{2111}) - (\ref{2112}).
The proof of isomorphism between the two presentations is detailed in \cite{B.D}. More recently, see also \cite{El}\\

Around the same time, a relation between the Onsager algebra and the loop algebra of $sl_2$ was exhibited by Davies \cite{B.D1}.
It is argued that if the Onsager algebra acts on a finite dimensional vector space,  then there exists some value of $h$ such that the sequences $\{A_n\}, \{G_m\}$ satisfy the linear recurrence relations of length $(2h+1)$, namely
\begin{eqnarray}
\label{dv}
\sum\limits_{k=-h}^h{\alpha_kA_{k-l}} &=& 0,\\
\label{dv1}
\sum\limits_{k=-h}^h{\alpha_kG_{k-l}} &=& 0
\end{eqnarray}
where $l$ is arbitrary. Then, Davies showed that the finite-dimensional Onsager algebra is the direct sum of $h$ copies of the algebra $sl_2$,
\begin{eqnarray}
\label{ct}
A_n &=& 2\sum\limits_{j = 1}^h{(z_j^nE_j^++z_j^{-n}E_j^-)},\\
\label{ct2}
G_m &=& \sum\limits_{j=1}^h{(z_j^m-z_j^{-m})H_j}
\end{eqnarray}
where $ [E_j^+,E_k^-] = \delta_{jk}H_k,~~[H_j, E_k^{\pm}]=\pm2\delta_{jk}E_k^{\pm}$ are the generators of the $sl_2$ algebra and $z_j$ are called the evaluation parameters of the representation.

Furthermore, after using the expression of $A_n$ in terms of $E_j^{\pm}$ (\ref{ct}), the eigenvalues of the Hamiltonian (\ref{Hamil}) in the sector that is the direct product of $n$ factors of dimension $d_j, ~ 1\le j \le n$ fit the general form
\begin{equation}
\lambda(\kappa, \kappa^*)=\kappa\alpha+\kappa^*\beta +\sum\limits_{j=1}^n{4m_j\sqrt{\kappa^2+{\kappa^*}^2+2\kappa\kappa^*\cos\theta_j}},~~~m_j = -s_j, -s_j+1,\dots, s_j,
\end{equation}
where $z_j = e^{-i\theta_j}$; $\alpha, \beta $ is a pair of eigenvalues of $A_0$, $A_1$; and $d_j=(2s_j+1)$ is the dimension of an irreducible representation of $sl_2$ associated with the pair $z_j, z_j^{-1}$.

\section{Sklyanin's formalism and the $q-$Onsager algebra}
Among the known examples of quadratic algebraic structures, one finds the Yang-Baxter algebra. For further analysis, let us first recall some known results. This algebra consists of a couple $R(u), L(u)$ where the $R$-matrix solves the Yang-Baxter equation
\begin{equation}
R_{\mathcal{V}_0}(u)R_{\mathcal{V}_0\mathcal{V}_0'}(uv)R_{\mathcal{V}_0'}(v)=R_{\mathcal{V}_0'}(v)R_{\mathcal{V}_0\mathcal{V}_0'}(uv)R_{\mathcal{V}_0}(u)
\end{equation}
and the so-called $L$-operator satisfies the quadratic relation
\begin{equation}
\label{224}
R_{\mathcal{V}_0\mathcal{V}_0'}(u/v)(L_{\mathcal{V}_0}(u)\otimes L_{\mathcal{V}_0'}(v))=(L_{\mathcal{V}_0'}(v)\otimes L_{\mathcal{V}_0}(u))R_{\mathcal{V}_0\mathcal{V}_0'}(u/v)
\end{equation}
where $\mathcal{V}_0, \mathcal{V}_0'$ denote finite dimensional auxiliary space representations. Here, the entries of the $L$-operators act on a quantum space denoted $\mathcal{V}$. If one considers a two-dimensional (spin-$\frac{1}{2}$) representation for $\mathcal{V}_0$ and $\mathcal{V}_0'$, a solution $R(u)$ of the Yang-Baxter equation and the $L$-operator can be written in the form %
\begin{eqnarray}
\label{rmatrix}
R(u)&=&\sum\limits_{i, j \in \{0, z, \pm\}}{\omega_{ij}(u)~\sigma_i\otimes\sigma_j},\\
\label{lmatrix}
L(u)&=&\sum\limits_{i, j \in \{0, z, \pm\}}{\omega_{ij}(u)~\sigma_i\otimes S_j},
\end{eqnarray} 
where $\sigma_z, \sigma_{\pm}$ are Pauli matrices, $\sigma_{0} =\mathbb{I}$ and $\omega_{ij}(u)$ are some combinations of functions
\begin{eqnarray}
\omega_{00}(u)&=&\frac{1}{2}(q+1)(u-q^{-1}u^{-1}),\\
\omega_{zz}(u)&=&\frac{1}{2}(q-1)(u+q^{-1}u^{-1}),\\
\omega_{+-}(u)&=&\omega_{-+}(u)=q-q^{-1}.
\end{eqnarray}
Note that the defining relations of the algebra generated by elements $\{S_j\}$ are determined by the equation (\ref{224}). The elements $\{S_j\}$ act on the quantum space $\mathcal{V}$. The corresponding algebra is known as the Sklyanin algebra \cite{Skly88}. It admits a  trigonometric degeneration such that the elements $\{S_j\}$ are identified with the generators $\{S_{\pm}, s_3\}$ of the quantum enveloping algebra $U_q(sl_2)$
\begin{eqnarray}
S_0&=&\frac{q^{s_3}+q^{-s_3}}{q^{1/2}+q^{-1/2}},\\
S_z &=&\frac{q^{s_3}-q^{-s_3}}{q^{1/2}-q^{-1/2}},
\end{eqnarray}
where $[s_3, S_{\pm}]=\pm S_{\pm}$ and $[S_+,S_-]=\frac{q^{2s_3}-q^{-2s_3}}{q-q^{-1}}$, together with the Casimir operator
\begin{equation}
w=qq^{2s_3}+q^{-1}q^{-2s_3}+(q-q^{-1})^2S_-S_+.
\end{equation}

Following \cite{B1}, \cite{Bas1}, let us consider the reflection equation which was first introduced by Cherednik \cite{Cher84} (see also \cite{Skly88}):
\begin{equation}
\label{refequ}
R(u/v)(K(u)\otimes \mathbb{I })R(uv)(\mathbb{I}\otimes K(v))= (\mathbb{I}\otimes K(v))R(uv)(K(u)\otimes\mathbb{I})R(u/v).
\end{equation}
This equation arises, for instance, in the context of the quantum integrable systems with boundaries \cite{Skly88}.
Similarly to (\ref{lmatrix}), in the spin-$\frac{1}{2}$ we introduce a $K$-matrix of the form:
\begin{equation}
\label{form}
K(u)=\sum\limits_{j \in \{0, z, \pm \}}{\sigma_j \otimes \Omega_j(u)}.
\end{equation}
By \cite{B1}, any solutions of the reflection equation (\ref{refequ}) of degree $-2 \le d \le 2$ in the spectral parameter $u$ - with non-commuting entries - can be written in the form (\ref{form}) where
\begin{eqnarray}
\label{2215}
\Omega_0(u) &=&\frac{(A+A^*)(qu-q^{-1}u^{-1})}{2},\\
\Omega_z(u)&=& \frac{(A-A^*)(qu+q^{-1}u^{-1})}{2},\\
\Omega_+(u)&=&-\frac{qu^2+q^{-1}u^{-2}}{c_0c_1(q^2-q^{-2})}-c_1[A^*,A]_q+c_2,\\
\label{2218}
\Omega_-(u)&=&-\frac{qu^2+q^{-1}u^{-2}}{c_1(q^2-q^{-2})}-c_0c_1[A,A^*]_q+c_0c_2
\end{eqnarray}
where the parameters $c_0, c_1 \ne 0$, $c_2$ are arbitrary, and $A, A^*$ have to satisfy the Askey-Wilson relations (\ref{AW1})-(\ref{AW2}) with particular values of the parameters
\begin{eqnarray}
\rho &=& \rho^* = \frac{1}{c_0c_1^2},\\
\omega &=& -\frac{c_2}{c_1}(q-q^{-1}),\\
\gamma &=& \gamma^* =\eta = \eta^* = 0.
\end{eqnarray} 
This explicit relation between the Askey-Wilson algebra (\ref{AW1})- (\ref{AW2}) and the reflection equation algebra through the analysis of $K-$operators suggested to investigate further this new connection. \vspace{1mm}

The generalization of the above connection goes as follows. From the results of \cite{Skly88}, for any parameter $v$ it is for instance known that
\begin{equation}
\label{geso}
K^{(L )}(u)=L_n(uv)\dots L_1(uv)K^{(0 )}(u)L_1(uv^{-1})\dots L_n(uv^{-1}),
\end{equation}
the so-called Sklyanin's operator, gives a family of solutions to (\ref{refequ}). Here $L_j(u)$ is the Lax operator given by (\ref{lmatrix}), and we choose (for simplicity) the trivial solution of (\ref{refequ}) to be $K^{(0)}(u)=(\sigma_+/c_0+\sigma_-)/(q-q^{-1})$. For these choices, the Sklyanin operator acts on the quantum space $\otimes_{j=1}^L\mathcal{V}_j \otimes \mathcal{V}_0$.\vspace{1mm}

Then, the Sklyanin's operator $K^{(L)}$ can be written as follows
\begin{equation}
\label{general}
K^{(L)}(u)=\sum\limits_{j \in \{0, z, \pm \}}{\sigma_j\otimes \Omega_j^{(L)}}(u),
\end{equation}
where the operators $\Omega_j^{(L)}$ are combinations of Laurent polynomials of degree $-2L \le d \le 2L$ in the spectral parameter $u$ and operators acting solely on $\otimes_{j=1}^LU_q(sl_2)$. Baseilhac and Koizumi \cite{BK2} obtained the following result 
\begin{thm}\cite{BK2}
\label{qOnsageral}
For generic values of $L$, the operators $\Omega_j^{(L)}(u)$ are given by:
\begin{eqnarray}
\Omega_0^{(L)}(u)+\Omega_3^{(L)}(u)&=&uq\sum\limits_{k=0}^{L-1}{P_{-k}^{(L)}(u)\mathcal{W}_{-k}^{(L)}}-u^{-1}q^{-1}\sum\limits_{k=0}^{L-1}{P_{-k}^{(L)}(u)\mathcal{W}_{k+1}^{(L)}},\\
\Omega_0^{(L)}(u)-\Omega_3^{(L)}(u)&=&uq\sum\limits_{k=0}^{L-1}{P_{-k}^{(L)}(u)\mathcal{W}_{k+1}^{(L)}}-u^{-1}q^{-1}\sum\limits_{k=0}^{L-1}{P_{-k}^{(L)}(u)\mathcal{W}_{-k}^{L}},\\
\Omega_+^{(L)}(u)&=&\frac{qu^2+q^{-1}u^{-2}}{c_0(q-q^{-1})}P_0^{(L)}(u)+\frac{1}{q+q^{-1}}\sum\limits_{k=0}^{L-1}{P_{-k}^{(L)}}(u)\mathcal{G}_{k+1}^{(L)}+\omega_0^{(L)},~~~~~~~~~~~~~~~\\
\Omega_-^{(L)}(u)&=&\frac{qu^2+q^{-1}u^{-2}}{q-q^{-1}}P_0^{(L)}(u)+\frac{c_0}{q+q^{-1}}\sum\limits_{k=0}^{L-1}{P_{-k}^{(L)}(u)\tilde{\mathcal{G}}_{k+1}^{(L)}}+c_0\omega_0^{(L)},
\end{eqnarray}
where $P_{-k}^{(L)}(u)$ are Laurent polynomials defined by
\begin{eqnarray}
\label{2229}
P^{(L)}_{-k}(u)&=&-\frac{1}{q+q^{-1}}\sum\limits_{n=k}^{L-1}{(\frac{qu^2+q^{-1}u^{-2}}{q+q^{-1}})^{n-k}C_{-n}^{(L)}},\\
\label{2230}
C_{-n}^{(L)}&=&(q+q^{-1})^{n+1}(-1)^{L-n}(\frac{(v^2+v^{-2})w_0^{(j)}}{q+q^{-1}})^{L-n+1}\frac{L!}{(n+1)!(L-n-1)!},~~~~~~~~~~~~\\
~~~~~~~~\omega_0^{(L+1)}&=&-\frac{v^2+v^{-2}}{q+q^{-1}}w_0^{(j)}\omega_0^{(L)},\\
~~~~~~~~\omega_0^{(1)}&=&-\frac{v^2+v^{-2}}{c_0(q-q^{-1})}w_0^{(j)},\\
w_0^{(j)}&=&q^{2j+1}+q^{-2j-1},
\end{eqnarray}
provided the generators $\mathcal W_{-k}^{(L)}, \mathcal W_{k+1}^{(L)}, \mathcal G_{k+1}^{(L)}, \mathcal{\tilde{G}}_{k+1}^{(L)}$ act on $L$-tensor product evaluation representation of $U_q({sl_2})$:
\begin{eqnarray}
\mathcal{W}_0^{(L)}&=& \frac{1}{c_0}vq^{1/4}S_+ q^{s_3/2}\otimes \mathbb{I} + v^{-1}q^{-1/4}S_- q^{s_3/2}\otimes
\mathbb{I} + q^{s_3}\otimes \mathcal{W}_0^{(L-1)},~~~~~~~~~\\
\mathcal{W}_1^{(L)}&=& \frac{1}{c_0}v^{-1}q^{-1/4}S_+ q^{-s_3/2}\otimes \mathbb{I} + vq^{1/4}S_- q^{-s_3/2}\otimes
\mathbb{I} + q^{-s_3}\otimes \mathcal{ W}_1^{(L-1)}\ ,~~~~~~~~~\label{repN+1N}\\
\mathcal{G}_{1}^{(L)}&=& (q-q^{-1})S_-^2\otimes
\mathbb{I}
-\frac{(q^{1/2}+q^{-1/2})}{c_0(q^{1/2}-q^{-1/2})} (v^{2}q^{s_3}+v^{-2}q^{-s_3}) \otimes \mathbb{I}
+\mathbb{I} \otimes \mathcal{ G}_{1}^{(L-1)}~~~~\nonumber\\
&&+(q-q^{-1})\left(
vq^{-1/4}S_-q^{s_3/2}\otimes \mathcal{W}_0^{(L-1)}
+v^{-1}q^{1/4}S_-q^{-s_3/2}\otimes \mathcal{W}_{1}^{(L-1)}
\right)~~~\nonumber\\
&&+\frac{(v^2+v^{-2})w_0^{(j)}}{c_0(q^{1/2}-q^{-1/2})}\otimes \mathbb{I} ,
\end{eqnarray}
\begin{eqnarray}
{\tilde{\mathcal{G}}}_1^{(L)}&=&
\frac{(q-q^{-1})}{c_0^2}S_+^2\otimes \mathbb{I}
-\frac{(q^{1/2}+q^{-1/2})}{c_0(q^{1/2}-q^{-1/2})}(v^{2}q^{-s_3}+v^{-2}q^{s_3})\otimes \mathbb{I}
+ \mathbb{I} \otimes {\tilde {\mathcal G}}_{1}^{(L-1)}\nonumber\\
&&+\frac{(q-q^{-1})}{c_0}\left(
v^{-1}q^{1/4}S_+q^{s_3/2}\otimes \mathcal{W}_0^{(L-1)}
+vq^{-1/4}S_+q^{-s_3/2}\otimes \mathcal{W}_{1}^{(L-1)}
\right)\nonumber\\
&&+\frac{(v^{2}+v^{-2})w_0^{(j)}}{c_0(q^{1/2}-q^{-1/2})}\otimes \mathbb{I} ,
\end{eqnarray}
\begin{eqnarray}
\mathcal{W}_{-k-1}^{(L)}&=&\frac{(w_0^{(j)}-(q^{1/2}+q^{-1/2})q^{s_3})}{(q^{1/2}+q^{-1/2})}\otimes
\mathcal{W}_{k+1}^{(L-1)}
-\frac{(v^2+v^{-2})}{(q^{1/2}+q^{-1/2})}\mathbb{I}\otimes \mathcal{W}_{-k}^{(L-1)}\nonumber\\
&&+\frac{(q^{1/2}-q^{-1/2})}{(q^{1/2}+q^{-1/2})^2}
\left(vq^{1/4}S_+q^{s_3/2}\otimes
\mathcal{G}_{k+1}^{(L-1)}+c_0v^{-1}q^{-1/4}S_-q^{s_3/2}\otimes {\tilde{ \mathcal{G}}}_{k+1}^{(L-1)}\right)\nonumber\\
&&+\frac{(v^2+v^{-2})w_0^{(j)}}{(q^{1/2}+q^{-1/2})^2}\mathcal{W}_{-k}^{(L)}+q^{s_3}\otimes \mathcal{W}_{-k-1}^{(L-1)}\ ,
\end{eqnarray}
\begin{eqnarray}
\mathcal{W}_{k+2}^{(L)}&=&\frac{(w_0^{(j)}-(q^{1/2}+q^{-1/2})q^{-s_3})}{(q^{1/2}+q^{-1/2})}\otimes
\mathcal{W}_{-k}^{(L-1)}
-\frac{(v^2+v^{-2})}{(q^{1/2}+q^{-1/2})}\mathbb{I}\otimes \mathcal{W}_{k+1}^{(L-1)}\nonumber\\
&&+\frac{(q^{1/2}-q^{-1/2})}{(q^{1/2}+q^{-1/2})^2}
\left(v^{-1}q^{-1/4}S_+q^{-s_3/2}\otimes
{\mathcal G}_{k+1}^{(L-1)}+c_0vq^{1/4}S_-q^{-s_3/2}\otimes {\tilde {\mathcal G}}_{k+1}^{(L-1)}\right)\nonumber\\
&&+\frac{(v^2+v^{-2})w_0^{(j)}}{(q^{1/2}+q^{-1/2})^2}\mathcal{W}_{k+1}^{(L)}+q^{-s_3}\otimes {\mathcal W}_{k+2}^{(L-1)},
\end{eqnarray}
\begin{eqnarray}
{\mathcal G}_{k+2}^{(L)}&=& 
\frac{c_0(q^{1/2}-q^{-1/2})^2}{(q^{1/2}+q^{-1/2})}
S_-^2\otimes {\tilde {\mathcal G}}_{k+1}^{(L-1)}
-\frac{1}{(q^{1/2}+q^{-1/2})}(v^{2}q^{s_3}+v^{-2}q^{-s_3})\otimes {\mathcal G}_{k+1}^{(L-1)}\nonumber \\
&&+\mathbb{I} \otimes {\mathcal G}_{k+2}^{(L-1)}+\frac{(v^2+v^{-2})w_0^{(j)}}{(q^{1/2}+q^{-1/2})^2}{\mathcal G}_{k+1}^{(L)}\nonumber\\
&&+ (q-q^{-1})\left(
vq^{-1/4}S_-q^{s_3/2}\otimes \big({\mathcal W}_{-k-1}^{(L-1)}-{\mathcal W}_{k+1}^{(L-1)}\big)\right.\nonumber\\
&&~~~~~~~~~~~~~~~~~+\left. v^{-1}q^{1/4}S_-q^{-s_3/2}\otimes \big({\mathcal W}_{k+2}^{(L-1)}-{\mathcal W}_{-k}^{(L-1)}\big)
\right),
\end{eqnarray}
\begin{eqnarray}
{\tilde {\mathcal G}}_{k+2}^{(L)}&=& 
\frac{(q^{1/2}-q^{-1/2})^2}{c_0(q^{1/2}+q^{-1/2})}
S_+^2\otimes {{\mathcal G}}_{k+1}^{(L-1)}
-\frac{1}{(q^{1/2}+q^{-1/2})}(v^{2}q^{-s_3}+v^{-2}q^{s_3})\otimes {\tilde {\mathcal G}}_{k+1}^{(L-1)}\nonumber\\ 
&&+\mathbb{I} \otimes {\tilde {\mathcal G}}_{k+2}^{(L-1)}+\frac{(v^2+v^{-2})w_0^{(j)}}{(q^{1/2}+q^{-1/2})^2}{\tilde{\mathcal G}}_{k+1}^{(L)}\ \nonumber\nonumber\\
&&+ \frac{(q-q^{-1})}{c_0}\left(
v^{-1}q^{1/4}S_+q^{s_3/2}\otimes \big({\mathcal W}_{-k-1}^{(L-1)}-{\mathcal W}_{k+1}^{(L-1)}\big)\right.\nonumber\\
&&~~~~~~~~~~~~~~~~+\left. vq^{-1/4}S_+q^{-s_3/2}\otimes \big({\mathcal W}_{k+2}^{(L-1)}-{\mathcal W}_{-k}^{(L-1)}\big)
\right),
\end{eqnarray}
for $k\in\{0,1,...,L-2\}$,

and satisfy the analog of Davies relations
\begin{eqnarray}
\label{qdv1}
c_0(q-q^{-1})\omega_0^{(L)}\mathcal W_{0}^{(L)}-\sum\limits_{k=1}^L{C_{-k+1}^{(L)}\mathcal W_{-k}^{(L)}}&=&0,\\
c_0(q-q^{-1})\omega_0^{(L)}\mathcal W_{1}^{(L)}-\sum\limits_{k=1}^L{C_{-k+1}^{(L)}\mathcal W_{k+1}^{(L)}}&=&0,\\
c_0(q-q^{-1})\omega_0^{(L)}\mathcal G_{1}^{(L)}-\sum\limits_{k=1}^L{C_{-k+1}^{(L)}\mathcal G_{k+1}^{(L)}}&=&0,\\
\label{qdv4}
c_0(q-q^{-1})\omega_0^{(L)}\mathcal{\tilde{G}}_{1}^{(L)}-\sum\limits_{k=1}^L{C_{-k+1}^{(L)}\mathcal{\tilde{G}}_{k+1}^{(L)}}&=&0
\end{eqnarray}
with $C^{(L)}_{-k+1}$ given by (\ref{2230}).\\

\end{thm}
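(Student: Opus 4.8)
The plan is to argue by induction on the number of sites $L$, exploiting the nested structure of the Sklyanin operator. Peeling off the outermost Lax operators in (\ref{geso}) yields
\begin{equation}
K^{(L)}(u) = L_L(uv)\,K^{(L-1)}(u)\,L_L(uv^{-1}),
\end{equation}
where $L_L$ is the Lax operator (\ref{lmatrix}) associated with the site that is being added, acting through its generators $S_\pm,\,q^{s_3}$ on the new (first) tensor factor of $\otimes_{j=1}^{L}U_q(sl_2)$. The base case $L=1$ is $K^{(1)}(u)=L_1(uv)K^{(0)}(u)L_1(uv^{-1})$; its auxiliary-space entries reproduce the single-site solution (\ref{2215})--(\ref{2218}) established in \cite{B1}, under the identification $A=\mathcal{W}_0^{(1)}$, $A^*=\mathcal{W}_1^{(1)}$, which fixes the initial operators $\mathcal{W}_0^{(1)},\mathcal{W}_1^{(1)},\mathcal{G}_1^{(1)},\tilde{\mathcal{G}}_1^{(1)}$ and the initial data $\omega_0^{(1)}$, $P_0^{(1)}$.

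For the inductive step I would substitute the explicit weights $\omega_{00},\omega_{zz},\omega_{+-}$ of the Lax operator and the inductive form $K^{(L-1)}(u)=\sum_j\sigma_j\otimes\Omega_j^{(L-1)}(u)$ from (\ref{general}) into the recursion, then carry out the product of the three matrices in the two-dimensional auxiliary space $\mathcal{V}_0$. Using the Pauli multiplication table to reduce the auxiliary part and the $U_q(sl_2)$ relations at the new site to normal-order the operator part, the result regroups as $K^{(L)}(u)=\sum_j\sigma_j\otimes\Omega_j^{(L)}(u)$. The decisive bookkeeping is to separate, in every entry, the scalar spectral dependence on $u$ (which must recombine into a single family of Laurent polynomials $P_{-k}^{(L)}(u)$) from the operator coefficients. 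Matching the operator coefficient of each $P_{-k}^{(L)}(u)$ site by site should reproduce precisely the stated recursions for $\mathcal{W}_{-k}^{(L)},\mathcal{W}_{k+1}^{(L)},\mathcal{G}_{k+1}^{(L)},\tilde{\mathcal{G}}_{k+1}^{(L)}$, while matching the scalar factors should yield the recursions (\ref{2229})--(\ref{2230}) for $P_{-k}^{(L)}$, $C_{-n}^{(L)}$, $\omega_0^{(L)}$.

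To obtain the analog of the Davies relations (\ref{qdv1})--(\ref{qdv4}), I would invoke the global fact that $K^{(L)}(u)$ is, by its very construction from $2L$ Lax factors, a Laurent polynomial of degree at most $2L$ in $u$, so the expansion of each $\Omega_j^{(L)}(u)$ in powers of $U=(qu^2+q^{-1}u^{-2})/(q+q^{-1})$ truncates. Concretely, the coefficients $C_{-n}^{(L)}$ of (\ref{2230}) vanish for $n\ge L$ because of the factor $1/(L-n-1)!$, and feeding this truncation into the definition (\ref{2229}) of $P_{-k}^{(L)}(u)$ forces a single linear dependence among $\mathcal{W}_0^{(L)},\mathcal{W}_{-1}^{(L)},\dots,\mathcal{W}_{-L}^{(L)}$ whose coefficients are exactly $C_{-k+1}^{(L)}$, namely (\ref{qdv1}). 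Repeating the argument for the companion current and for the two $\mathcal{G}$-towers produces the remaining three relations. These are the $q$-deformed counterparts of Davies' length-$(2h{+}1)$ recurrences (\ref{dv})--(\ref{dv1}).

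The main obstacle I anticipate is the uniformity of the spectral-parameter bookkeeping in the inductive step: one must check that a single family $P_{-k}^{(L)}(u)$ simultaneously multiplies all four towers of generators across the combinations $\Omega_0^{(L)}\pm\Omega_z^{(L)}$ and $\Omega_\pm^{(L)}$, and that the degree-two shifts $uv,uv^{-1}$ entering through $\omega_{00},\omega_{zz}$ conspire to advance the index by a single power of $U$. Keeping the half-integer $q$-powers $q^{\pm s_3/2}$ of the new Lax operator aligned with the coproduct-like tensor structure of the old generators is essential, since it is precisely these factors that route the mixed terms, those coupling $S_\pm$ at the new site to $\mathcal{W}^{(L-1)}$ or $\mathcal{G}^{(L-1)}$, onto the correct new generator. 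A correctly formulated induction hypothesis that carries all four currents and their scalar coefficients $P_{-k}^{(L)}$, $C_{-n}^{(L)}$, $\omega_0^{(L)}$ simultaneously is what makes the recursion close.
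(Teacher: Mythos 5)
Your proposal cannot be checked against an in-paper argument, because the thesis states Theorem \ref{qOnsageral} \emph{without} proof: it is background material recalled from \cite{BK2}, and the text moves directly from the statement to its consequences. What you have written is, in substance, a reconstruction of the derivation given in that reference. The peeled recursion $K^{(L)}(u)=L_L(uv)\,K^{(L-1)}(u)\,L_L(uv^{-1})$ is the correct reading of (\ref{geso}); the base case $L=1$ does match the degree-two solution (\ref{2215})--(\ref{2218}) of \cite{B1} under the identification $A=\mathcal{W}_0^{(1)}$, $A^*=\mathcal{W}_1^{(1)}$; and the inductive step, multiplying out the $2\times 2$ auxiliary-space matrices and matching the operator coefficient of each Laurent polynomial $P_{-k}^{(L)}(u)$, is exactly how the recursions for $\mathcal{W}_{-k}^{(L)},\mathcal{W}_{k+1}^{(L)},\mathcal{G}_{k+1}^{(L)},\tilde{\mathcal{G}}_{k+1}^{(L)}$ and the data $C_{-n}^{(L)}$, $\omega_0^{(L)}$ are produced there. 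So the strategy is the intended one.

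The one step that is too loose to count as a proof is your derivation of the analog of the Davies relations (\ref{qdv1})--(\ref{qdv4}). The observation that $C_{-n}^{(L)}=0$ for $n\geq L$ (from the $1/(L-n-1)!$ factor) is correct, but this vanishing is already built into the finite sums (\ref{2229}) and by itself yields no operator identity: the expansion of the entries $\Omega_j^{(L)}(u)$ in the theorem involves only the modes $\mathcal{W}_0^{(L)},\mathcal{W}_{-1}^{(L)},\dots,\mathcal{W}_{-L+1}^{(L)}$, whereas (\ref{qdv1}) also involves $\mathcal{W}_{-L}^{(L)}$, a mode that does not appear in that finite expansion and is defined only by continuing the recursion one step further (equivalently, as the image of the corresponding abstract generator of ${\cal A}_q$ acting on the $L$-site space). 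To obtain (\ref{qdv1}) one must compare the finite Laurent-polynomial form of the entries with the infinite mode expansion of the currents under the isomorphisms of Theorems \ref{isorefcur} and \ref{isocurA}, or else carry the four relations through the same induction on $L$ alongside everything else; it is precisely this comparison that produces the term $c_0(q-q^{-1})\omega_0^{(L)}\mathcal{W}_0^{(L)}$, which your truncation argument does not account for at all. As written, your final paragraph asserts the conclusion rather than deriving it; the rest of the plan is sound.
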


As a consequence of the reflection equation algebra and the fact that the vector space on which the elements act is finite dimensional, the generators $\mathcal W_{-k}^{(L)}, \mathcal W_{k+1}^{(L)}, \mathcal G_{k+1}^{(L)}, \mathcal {\tilde{G}}_{k+1}^{(L)}$ generate a quotient of the infinite dimensional algebra ${\cal A}_q$ (\ref{aq28})-(\ref{aq38}) by the relations (qDavies) (\ref{qdv1})-(\ref{qdv4}).\vspace{1mm}

Importantly, in \cite{BK1} it was shown that the first two elements $\mathcal{W}^{(L)}_0, \mathcal{W}^{(L)}_1$ of the family of generators $\mathcal W_{-k}^{(L)}, \mathcal W_{k+1}^{(L)}, \mathcal G_{k+1}^{(L)}, \mathcal{\tilde{G}}_{k+1}^{(L)}$ of the quotient of ${\cal A}_q$ satisfy the $q$-deformed Dolan-Grady relations
\begin{eqnarray}
[\mathcal{W}_0^{(L)},[\mathcal{W}_0^{(L)},[\mathcal{W}_0^{(L)},\mathcal{W}_1^{(L)}]_q]_{q^{-1}}]&=&\rho~[\mathcal{W}_0^{(L)},\mathcal{W}_1^{(L)}],\label{qDGXXZ}\\
~[\mathcal{W}_1^{(L)},[\mathcal{W}_1^{(L)},[\mathcal{W}_1^{(L)},\mathcal{W}_0^{(L)}]_q]_{q^{-1}}]&=&\rho~[\mathcal{W}_1^{(L)},\mathcal{W}_0^{(L)}],
\end{eqnarray}
with
\begin{eqnarray}
\rho= (q+q^{-1})^2k_+k_-.\nonumber
\end{eqnarray}

For all these reasons, in the literature on the subject the algebra 
${\cal A}_q$ is sometimes called the $q-$deformed analog of the Onsager algebra, and the relations (\ref{qdv1})-(\ref{qdv4}) are called the $q-$deformed analog of Davies relations.

\section{Commuting quantities and the $q-$Dolan-Grady hierarchy}
For the Onsager algebra (\ref{relaOn1})-(\ref{relaOn3}), the explicit construction of mutually commuting quantities that generate an Abelian subalgebra of the Onsager algebra (\ref{cmm}) has been considered in details in \cite{O,DG}. For the $q-$Onsager algebra, an analogous construction is, in general, a rather complicated problem. However, using the connection between the reflection equation algebra and the infinite dimensional algebra ${\cal A}_q$ or alternatively the $q-$Onsager algebra, this problem can be handled. Indeed, it is known that starting from solutions of the reflection equation algebra, a transfer matrix that generates mutually commuting quantities can be constructed.
Namely, following the analysis of \cite{Skly88}, for any values of the spectral parameters $u, v$
\begin{equation}
\label{trace}
[t^{(L)}(u), t^{(L)}(v)] = 0 ~~~\text{where}~~ t^{(L)}(u) = tr_0\{K_+(u)K^{(L)}(u)\}
\end{equation}
where $tr_0$ denotes the trace over the two-dimensional auxiliary space. The idea of \cite{BK2} was then to consider the expansion of (\ref{trace}) in order to extract mutually commuting quantities expressed in terms of the generators of the infinite dimensional algebra ${\cal A}_q$. To this end, let us plug the $c-$number solution of the dual reflection equation\footnote{This equation is obtained from (\ref{refequ}) by changing $u \to u^{-1}, v \to v^{-1}$ and $K(u)$ in its transpose} \cite{IK}, \cite{GZ} given by
\begin{equation}
\label{trivial}
K_+(u) = \left(\begin{array}{cc}
uq\kappa+u^{-1}q^{-1}\kappa^* & \kappa_-(q+q^{-1})(q^2u^2-q^{-2}u^{-2})/c_0\\
\kappa_+(q+q^{-1})(q^2u^2-q^{-2}u^{-2}) & uq\kappa^* +u^{-1}q^{-1}\kappa
\end{array}\right)
\end{equation}
where $\kappa^* = \kappa^{-1}$ for any $\kappa \in \mathbb{C}$, and $\kappa_{\pm} \in \mathbb{C}$.
Substitute (\ref{trivial}), and (\ref{general}) with the generators $\mathcal W_{-k}^{(L)}, \mathcal W_{k+1}^{(L)}, \mathcal G_{k+1}^{(L)}, \mathcal{\tilde{G}}_{k+1}^{(L)}$ of Theorem \ref{qOnsageral} into (\ref{trace}). By straightforward calculations, it follows \cite{BK2} 
\begin{equation}
t^{(L)}(u)=\sum_{k=0}^{L-1}{(q^2u^2-q^{-2}u^{-2})P^{(L)}_{-k}(u)\mathcal{I}^{(L)}_{2k+1}+\mathcal{F}(u)\mathbb{I}}
\end{equation}
with (\ref{2229}) and
\begin{equation}
\mathcal{F}(u)=\frac{(q+q^{-1})(q^2u^2-q^{-2}u^{-2})}{c_0}\left(\frac{(qu^2+q^{-1}u^{-2})P_0^{(L)}(u)}{q-q^{-1}}+c_0\omega_0^{(L)}\right)(\kappa_++\kappa_-).
\end{equation}
Here, we have introduced the generators $\mathcal{I}^{(L)}_{2k+1}$ which can be written in terms of the generators of the quotient of ${\cal A}_q$. Explicitly, they read:
\begin{equation}
\label{tcmm}
\mathcal{I}_{2k+1}^{(L)}=\kappa \mathcal W^{(L)}_{-k}+\kappa^* \mathcal W^{(L)}_{k+1}+\kappa_+\mathcal G^{(L)}_{k+1}+\kappa_-\mathcal{\tilde{G}}^{(L)}_{k+1},
\end{equation}
for $k\in \{0, 1, \dots, L-1\}$. As a consequence of the property (\ref{trace}), it leads to
\begin{equation}
\label{qcmm}
[\mathcal{I}^{(L)}_{2k+1},\mathcal{I}^{(L)}_{2l+1}] =0 ~~~ \text{for all}~~ k, l \in \{0, 1, \dots, L-1\}.
\end{equation}
%
The mutually commuting quantities (\ref{tcmm}) are the $q-$deformed analog of the mutually commuting quantities (\ref{cmm}) of the Onsager algebra.\vspace{1mm}

Clearly, an interesting problem is to write these quantities solely in terms of the fundamental generators of the $q-$Onsager algebra, in view of the connection between the quotient of ${\cal A}_q$  and the $q-$Onsager algebra. Such problem was first considered in details in \cite{BK2}, and generalized in \cite{BS}, where it is shown that the mutually conserved quantities are polynomials of the fundamental generators  $\mathcal{W}_0^{(L)}, \mathcal{W}_1^{(L)}$. For instance, 
\begin{eqnarray}
\mathcal G_1^{(L)}&=& [\mathcal{W}_1^{(L)}, \mathcal{W}_0^{(L)}]_q, ~~ \mathcal{\tilde{G}}^{(L)}_1 = [\mathcal{W}_0^{(L)}, \mathcal{W}_1^{(L)}],~~~\\
\mathcal{W}^{(L)}_{-1}&=&\frac{1}{\rho}\left((q^2+q^{-2})\mathcal{W}_0^{(L)}\mathcal{W}_1^{(L)}\mathcal{W}_0^{(L)}-{\mathcal{W}_0^{(L)}}^2\mathcal{W}_1^{(L)}-{\mathcal{W}_1^{(L)}}{\mathcal{W}_0^{(L)}}^2\right)+\mathcal{W}_1^{(L)},~~~~~~\\
\mathcal{W}^{(L)}_{2}&=&\frac{1}{\rho}\left((q^2+q^{-2})\mathcal{W}_1^{(L)}\mathcal{W}_0^{(L)}\mathcal{W}_1^{(L)}-{\mathcal{W}_1^{(L)}}^2\mathcal{W}_0^{(L)}-{\mathcal{W}_0^{(L)}}{\mathcal{W}_1^{(L)}}^2\right)+\mathcal{W}_0^{(L)}.~~~~~~
\end{eqnarray}

\section{The open $XXZ$ spin chain and the $q$-Onsager algebra}
Since Sklyanin's work \cite{Skly88} on spin chains with integrable boundary conditions, finding exact results such as the energy spectrum of a model, corresponding eigenstates and correlation functions for elementary excitations has remained an interesting problem in connection with condensed matter or high-energy physics. Among the known integrable open spin chains that have been considered in details, one finds the open $XXZ$ spin chain. 

For generic boundary conditions, the Hamiltonian of the $XXZ$ spin chain reads
\begin{eqnarray}
\label{equhamilt1}
H^{(L)}_{XXZ}&=&\sum_{k=1}^{L-1}\Big(\sigma_1^{k+1}\sigma_1^{k}+\sigma_2^{k+1}\sigma_2^{k} + \Delta\sigma_z^{k+1}\sigma_z^{k}\Big)\\&& +\ \frac{(q-q^{-1})}{2}\frac{(\epsilon_+ - \epsilon_-)}{(\epsilon_+ + \epsilon_-)}\sigma^1_z + \frac{2}{(\epsilon_+ + \epsilon_-)}\big(k_+\sigma^1_+ + k_-\sigma^1_-\big)     \nonumber\\
\ && +\ \frac{(q-q^{-1})}{2}\frac{(\bar{\epsilon}_+ - \bar{\epsilon}_-)}{(\bar{\epsilon}_+ +\bar{\epsilon}_-)}\sigma^L_z + \frac{2}{(\bar{\epsilon}_+ + \bar{\epsilon}_-)}\big(\bar{k}_+\sigma^L_+ + \bar{k}_-\sigma^L_-\big)     \nonumber
\end{eqnarray}
where $L$ is the number of sites, $\Delta = \frac{q+q^{-1}}{2}$ denotes the anisotropy parameter, and $\sigma_{\pm}$, $\sigma_1$, $\sigma_2$, $\sigma_z$ are the usual Pauli matrices. 

Besides the anisotropy parameter, $\epsilon_\pm,k_\pm$ (resp. $ \bar{\epsilon}_\pm,\bar{k}_\pm$) denote the right (resp. left) boundary parameters associated with the right (resp. left) boundary. Considering a gauge transformation, note that one parameter might be removed. For symmetry reasons, we however keep the boundary parametrization as defined above.

For generic values of $q$ and generic non-diagonal boundary parameters, the Hamiltonian $H$ can be formulated within the so-called Sklyanin's formalism  (boundary quantum inverse scattering)  \cite{Skly88}. In this standard approach, a generating function for all mutually commuting quantities besides $H$ is introduced. It is built from a $R-$matrix solution of the Yang-Baxter equation and two $K-$matrices solutions of the reflection equation \cite{Skly88}.  However, the algebraic setting based on $U_q(\widehat{sl_2})$ is obscured in this formulation, and the standard algebraic Bethe ansatz fails to apply for generic boundary parameters. Instead, an alternative formulation has been proposed \cite{BK1}, which is based on an analog of Onsager's approach for the two-dimensional Ising model: the transfer matrix - denoted $t^{(L)}(\zeta)$ below - is written in terms of mutually commuting quantities ${{\mathcal I}}_{2k+1}$ that generate a $q-$deformed analog of the Onsager-Dolan-Grady's hierarchy\footnote{The Onsager's (also called Dolan-Grady \cite{DG}) hierarchy is an Abelian algebra with elements of the form ${{\mathcal{I}_{2n+1}}}= \bar{\epsilon}_+ (A_n+A_{-n}) + \bar{\epsilon}_-(A_{n+1}+A_{-n+1}) + \kappa G_{n+1}$ with $n\in{\mathbb Z}_+$, generated from the Onsager algebra with defining relations $\big[A_{n},A_{m}\big]= 4G_{n-m},\ \big[G_m,A_n\big] =2A_{n+m}-2A_{n-m}$ and $\ \big[G_n,G_m\big] =0$ for any $n,m\in{\mathbb Z}$ \cite{O}.}. Namely,
\begin{equation}
t^{(L)}(\zeta)= \sum_{k=0}^{L-1}{\mathcal F}_{2k+1}(\zeta)\ {{\mathcal I}}_{2k+1} + {\mathcal F}_0(\zeta)\mathbb I\quad \mbox{with} \qquad  \big[{\mathcal I}_{2k+1},{\mathcal I}_{2l+1}\big]=0 \qquad \label{tfin} 
\end{equation}
for all $k,l\in 0,...,L-1$ where $\zeta$ is the so-called spectral parameter,
\begin{equation}
{{\mathcal I}}_{2k+1}=\bar{\epsilon}_+\mathcal{W}_{-k} + \bar{\epsilon}_- \mathcal{W}_{k+1} + \frac{1}{q^2-q^{-2}}\Big(\frac{\bar{k}_-}{k_-} \mathcal{G}_{k+1} 
+ \frac{\bar{k}_+}{k_+}\tilde {\mathcal{ G}}_{k+1}\Big)\ \label{Ifin}
\end{equation}
and ${\cal F}_{2k+1}(\zeta)$ are given in \cite{BK1}
\begin{eqnarray*}
\mathcal{F}_{2k+1}(\zeta)&=&-\frac{1}{q+q^{-1}}\frac{(q^2\zeta^4+q^{-2}\zeta^{-4}-q^2-q^{-2})}{(q^2+q^{-2}-\zeta^2-\zeta^{-2})^L}\sum\limits_{n =k}^{L-1}{(\frac{q\zeta^2+q^{-1}\zeta^{-2}}{q+q^{-1}})^{n-k}C_{-n}^{(L)}},\\
\mathcal{F}_0(\zeta)&=&(\frac{(q^2\zeta^4+q^{-2}\zeta^{-4}-q^2-q^{-2})(q+q^{-1})}{(q^2+q^{-2}-\zeta^2-\zeta^{-2})})(\frac{k_+k_-}{q-q^{-1}})(\frac{\bar{k}_+}{k_+}+\frac{\bar{k}_-}{k_-})\\
&&\times\left(\frac{\omega_0^{(L)}(q-q^{-1})}{k_+k_-}-\sum\limits_{n=0}^{L-1}{(\frac{q\zeta^2+q^{-1}\zeta^{-2}}{q+q^{-1}})^{n+1}C_{-n}^{(L)}}\right)\\
&&+\frac{((q\zeta^2+q^{-1}\zeta^{-2})\bar{\epsilon}_++(q+q^{-1})\bar{\epsilon}_-)\epsilon_+^{(L)}}{(q^2+q^{-2}-\zeta^2-\zeta^{-2})^L}\\
&&+\frac{((q\zeta^2+q^{-1}\zeta^{-2})\bar{\epsilon}_-+(q+q^{-1})\bar{\epsilon}_+)\epsilon_-^{(L)}}{(q^2+q^{-2}-\zeta^2-\zeta^{-2})^L}
\end{eqnarray*}
The explicit expressions for the coefficients $C_{-n}^{(L)}$ and $\omega_0^{(L)}, \epsilon_{\pm}^{(L)}$ essentially depend on the choice of quantum space representations at each site (two dimensional for (\ref{equhamilt1})) of the spin chain
\begin{equation}
C_{-n}^{(L)}=(-1)^{L-n}(q+q^{-1})^{n+1}\sum\limits_{k_1<\dots<k_{L-n-1}=1}^L{\alpha_{k_1}\dots\alpha_{k_{L-n-1}}}
\end{equation}
where 
\begin{eqnarray}
\alpha_1 &=&\frac{2(q^2+q^{-2})}{(q+q^{-1})}+\frac{\epsilon_+\epsilon_-(q-q^{-1})^2}{k_+k_-(q+q^{-1})},\\
\alpha_k&=&\frac{2(q^2+q^{-2})}{q+q^{-1}} ~~~ \text{for }~ k = 2, \dots, L,
\end{eqnarray}
and for arbitrary values for $L$
\begin{eqnarray}
\epsilon^{(L)}_{\pm}&=&(q^2+q^{-2})\epsilon^{(L-1)}_{\mp}-2\epsilon_{\pm}^{(L-1)},~~  \epsilon_{\pm}^{(0)}=\epsilon_{\pm},\\
\omega_0^{(L)}&=&(-1)^L\frac{k_+k_-}{q-q^{-1}}\left(\frac{2(q^2+q^{-2})}{q+q^{-1}}\right)^{L-1}\left(\frac{2(q^2+q^{-2})}{q+q^{-1}}+\frac{\epsilon_+\epsilon_-(q-q^{-1})^2}{k_+k_-(q+q^{-1})}\right)
\end{eqnarray}
Note that the parameters $\epsilon_\pm$ of the right boundary - which do not appear explicitly in the formula (\ref{Ifin})- are actually hidden in the definition of the elements  $\mathcal{W}_{-k},\mathcal{W}_{k+1}, \mathcal{ G}_{k+1},\mathcal{\tilde{G}}_{k+1}$  \cite{BK1,BK3}. \\

Since the generators $\mathcal{W}_{-k},\mathcal{W}_{k+1}, \mathcal{ G}_{k+1}, \mathcal{\tilde{G}}_{k+1}$ of the $q$-deformed analogue of the Onsager algebra possess the block diagonal and block tridiagonal structure in the eigenbasis of ${\cal W}_0$ or the eigenbasis of ${\cal W}_1$ \cite{BK3}, the general spectral problem of all nonlocal commuting operators $\mathcal{I}_{2k+1}^{(L)}$ is solved
 \begin{equation}
 \mathcal{I}^{(L)}_{2k+1}\Psi^{(L)}_{\Lambda_1^{(L)}}=\Lambda_{2k+1}^{(L)}\Psi^{(L)}_{\Lambda_1^{(L)}}~~~ \text{for}~ k = 0, 1, \dots, L-1.
 \end{equation}
It follows that the eigenvalues $\Lambda_{XXZ}(\zeta)$ of the transfer matrix (\ref{tfin}) are given by: 
\begin{equation}
\Lambda_{XXZ}(\zeta)= \sum\limits_{k=0}^{L-1}{\mathcal{F}_{2k+1}(\zeta)\Lambda^{(L)}_{2k+1}+\mathcal{F}_0(\zeta)}.
\end{equation}
The spectrum $E$ of the Hamiltonian (\ref{equhamilt1}) immediately follows:
\begin{eqnarray}
E &=& \frac{(q-q^{-1})(q+q^{-1})^{-1}}{2(\bar{\epsilon}_++\bar{\epsilon}_-)(\epsilon_++\epsilon_-)}\left(\sum\limits_{k=0}^{L-1}{\frac{d\mathcal{F}_{2k+1}(\zeta)}{d\zeta}|_{\zeta=1}}~\Lambda^{(L)}_{2k+1}+\frac{d\mathcal{F}_0(\zeta)}{d\zeta}|_{\zeta=1}\right)\\
&&-\left(L\Delta+\frac{(q-q^{-1})^2}{2(q+q^{-1})}\right)\nonumber.
\end{eqnarray}
Indeed, it is known that the Hamiltonian of the XXZ open spin chain with generic integrable boundary conditions (\ref{equhamilt1}) is obtained as follows:
\begin{equation}
\frac{d}{d\zeta} ln(t^{(L)}(\zeta))|_{\zeta=1}=  \frac{2}{(q-q^{-1})}H + \Big(\frac{(q-q^{-1})}{(q+q^{-1})} + \frac{2L}{(q-q^{-1})}\Delta \Big)\mathbb I \label{expH}\ .
\end{equation}
More generally, higher mutually commuting local conserved quantities, say $H_n$ with $H_1\equiv H$, can be derived similarly by taking higher derivatives of the transfer matrix (\ref{tfin}).\vspace{1mm}

To resume, in the $q-$Onsager approach of the open XXZ spin chain with generic integrable boundary conditions \cite{B1, BK3} the Hamiltonian is written in terms of the elements (\ref{Ifin}). These elements generate an Abelian subalgebra of the $q-$deformed analog of the Onsager algebra. For the open XXZ spin chain and 
$q$ generic, the vector space on which the elements act is  irreducible finite dimensional. So, stricly speaking the elements $\mathcal W_{-k}, \mathcal W_{k+1}, \mathcal G_{k+1}, \mathcal{\tilde{G}}_{k+1}$ generate a quotient of the infinite dimensional algebra ${\cal A}_q$ (\ref{aq28})-(\ref{aq38}) by the relations (\ref{qdv1})-(\ref{qdv4}).

In \cite{Bas3}, recall that two (dual) bases of the vector space were constructed, on which the operators  ${\cal W}_0,{\cal W}_1$ act as a tridiagonal pair. Remarkably, in these bases the more general operators (called descendants)  $\mathcal{W}_{-k},\mathcal{W}_{k+1}, \mathcal{ G}_{k+1}, \mathcal{\tilde{G}}_{k+1}$ of the $q$-deformed analogue of the Onsager algebra act as block diagonal or block tridiagonal matrices. In \cite{BK3}, the spectrum and eigenstates of the Hamiltonian (\ref{equhamilt1}) were derived using this remarkable property. It was the first solution proposed in the literature for the open XXZ spin chain with generic integrable boundary conditions.

\chapter{Main Results}
In this Chapter, we review the three main results of the thesis. The first two results have their own interest in the context of tridiagonal algebras, $q-$Onsager algebra, coideal subalgebras of $U_q(\widehat{sl_2})$ and their higher rank generalizations: analogs of Lusztig's higher order relations are conjectured, and supporting evidences for these are described. In particular, the theory of tridiagonal pairs recalled in Chapter 1 plays a central role in the analysis for the $q-$Onsager algebra associated with $\widehat{sl_2}$. The third main result of this thesis overlaps between mathematics and physics. In mathematics, quantum universal enveloping algebras have been studied at roots of unity in the literature \cite{Luszt}. The introduction of the divided powers of the Chevalley generators plays a central role in the construction. Here, for the $q-$Onsager algebra, analogs of these elements are introduced and studied in details at roots of unity at least for a certain class of irreducible finite dimensional representations that finds applications in physics. These elements are divided polynomials of the two fundamental generators of the $q-$Onsager algebra. For a special case, they satisfy a pair of relations that share some similarity with the higher order $q-$Dolan-Grady relations previously conjectured. All together, the $q-$Onsager generators and the divided polynomials generate a new quantum algebra. Based on this construction, the commutation relations between the Hamiltonian of the open $XXZ$ chain at roots of unity
and the four generators are studied. 

\section{Higher order relations for the $q$-Onsager algebra} 
\subsection{Motivation}
Consider the quantum universal enveloping algebras for arbitrary Kac-Moody algebras $\widehat{g}$ introduced by Drinfeld \cite{Dr}, and Jimbo \cite{Jim}.  Let  $\{a_{ij}\}$ be the extended Cartan matrix of $\widehat{g}$. Fix coprime integers $d_i$ such that $d_ia_{ij}$ is symmetric. Define $q_i=q^{d_i}$. The quantum universal enveloping algebra $U_q(\widehat{g})$   is generated by the elements $\{h_j,e_j,f_j\}$, $j=0,1,...,rank(g)$, which satisfy the defining relations:
\begin{eqnarray}
 &&[h_i,h_j]=0\ , \quad [h_i,e_j]=a_{ij}e_j\ , \quad
[h_i,f_j]=-a_{ij}f_j\ ,\\
&&[e_i,f_j]=\delta_{ij}\frac{q_i^{h_i}-q_i^{-h_i}}{q_i-q_i^{-1}},  \quad i,j=0,1,...,rank(g) \ , \ \nonumber \\
&&e_i\,e_j=e_j\,e_i\ , \qquad f_i\,f_j=f_j\,f_i\  \,\qquad \mbox{for} \qquad |i-j|>1 \ ,
\nonumber\end{eqnarray}
together with the so-called {\it quantum Serre relations}
($i\neq j$)
\begin{eqnarray}
\label{relqse}
\sum_{k=0}^{1-a_{ij}}(-1)^k
\left[ \begin{array}{c}
1-a_{ij} \\
k 
\end{array}\right]_{q_i}
e_i^{1-a_{ij}-k}\,e_j\,e_i^k&=&0\ ,\\
\sum_{k=0}^{1-a_{ij}}(-1)^k
\left[ \begin{array}{c}
1-a_{ij} \\
k
\end{array}\right]_{q_i}
f_i^{1-a_{ij}-k}\,f_j\,\,f_i^k&=&0\ ,  \label{Uqserre}
\end{eqnarray}
where $\delta_{ij}$ denotes Kronecker delta.\\

In the mathematical literature \cite{Luszt}, generalizations of the relations (\ref{relqse})-(\ref{Uqserre}) - the so-called {\it higher order quantum ($q-$)Serre relations} - have been proposed. For  $\widehat{g}=\widehat{sl_2}$, they read\footnote{For  $\widehat{g}=\widehat{sl_2}$, recall that $a_{ii}=2$, $a_{ij}=-2$  with $i,j=0,1$.}  \cite{Luszt}:
\begin{eqnarray}
\sum_{k=0}^{2r+1} (-1)^{k} \,  \left[ \begin{array}{c}
2r+1 \\
k
\end{array}\right]_{q} e_i^{2 r+1-k} e_j^{r} e_i^{k}&=&0\ , \label{hqSerre}\\
\sum_{k=0}^{2r+1} (-1)^{k}  \,  \left[ \begin{array}{c}
2r+1 \\
k
\end{array}\right]_{q} f_i^{2r+1-k} f_j^{r} f_i^{k}&=&0 \ \ \quad \mbox{for} \quad i\neq j,\ \  i,j=0,1 \ . \label{hqSerre1}
\end{eqnarray}

Consider the relations (\ref{relqse})-(\ref{Uqserre}) for $g = sl_2$, we observe that the $q$-Onsager algebra is closely related with $U_q(\widehat{sl_2})$. In particular, there exists an homomorphism from the $q$-Onsager algebra to a coideal subalgebras of $U_q(\widehat{sl_2})$ (see Chapter 1).

 Recall that the $q-$Onsager algebra is a special case of the tridiagonal algebra: it corresponds to the {\it reduced} parameter sequence $\gamma=0,\gamma^*=0$, $\beta=q^2+q^{-2}$ and $\rho=\rho_0$, $\rho^*=\rho_1$  which exhibits all interesting properties that can be extended to more general parameter sequences. The defining relations of the $q-$Onsager algebra read
\begin{eqnarray}
[A,[A,[A,A^*]_q]_{q^{-1}}]=\rho_0[A,A^*]\
,\qquad
[A^*,[A^*,[A^*,A]_q]_{q^{-1}}]=\rho_1[A^*,A]\
\label{qDG} , \label{qOA}\end{eqnarray}
which can be seen as $\rho_i-$deformed analogues of the $q-$Serre relations (\ref{relqse})-(\ref{Uqserre}) associated with $\widehat{g}\equiv \widehat{sl_2}$. For $q=1$, $\rho_0=\rho_1=16$, note that they coincide with the Dolan-Grady relations \cite{DG}.\\

In the study of tridiagonal algebras and the representation theory associated with the special case $\rho_0=\rho_1=0$,  higher order $q-$Serre relations (\ref{hqSerre}), (\ref{hqSerre1}) play an important role \cite{IT03} in the construction of a basis of the corresponding vector space. As  suggested in \cite[Problem~3.4]{IT03}, for $\rho_0\neq 0$, $\rho_1\neq 0$ finding analogues of the higher order $q-$Serre relations for the $q-$Onsager algebra is an interesting problem. Another interest for the construction of higher order relations  associated with the $q-$Onsager algebra comes from the theory of quantum integrable systems with boundaries. Indeed, by analogy with the case of periodic boundary conditions \cite{DFM}, such relations or similar relations should play a role in the identification of the symmetry of the Hamiltonian of the $XXZ$ open spin chain at $q$ a root of unity and for special boundary parameters. Motivated by these open problems, below we focus on the explicit construction of higher order tridiagonal relations and the special case associated with the $q$-Onsager algebra.
\subsection{Conjecture about the higher order relations of the $q$-Onsager algebra}
By analogy with Lusztig's higher order relations (\ref{hqSerre}), (\ref{hqSerre1}) for quantum universal enveloping algebras, it is natural to expect higher order relations for the $q-$Onsager algebra.
\begin{conj}\label{conj1} Let $A$, $A^*$ be the fundamental generators of the $q-$Onsager algebra (\ref{qDG}), then $A, A^*$ satisfy the higher order $q$-Dolan-Grady relations as follows:
\begin{eqnarray}
\sum_{p=0}^{r}\sum_{j=0}^{2r+1-2p}  (-1)^{j+p}  \rho_0^{p}\, {c}_{j}^{[r,p]}\,  A^{2 r+1-2p-j} {A^*}^{r} A^{j}&=&0\  \label{qDGfinr} \ ,\\
\sum_{p=0}^{r}\sum_{j=0}^{2r+1-2p} (-1)^{j+p} \rho_1^{p} \, {c}_{j}^{[r,p]}\,  {A^*}^{2 r+1-2p-j} {A}^{r} {A^*}^{j}&=&0\  
\ \label{qDGfinr2}
\end{eqnarray}
where $c_{2(r-p)+1-j}^{[r,p]} =c_{j}^{[r,p]}$ and
\begin{eqnarray}
&&c_{j}^{[r,p]} =  \sum_{k=0}^j   \frac{(r-p)!}{(\{\frac{j-k}{2}\})!(r-p-\{\frac{j-k}{2}\})!}
    \sum_{{{\cal P}_k}}  [s_1]^2_{q^2}...[s_p]^2_{q^2}  \frac{[2s_{p+1}]_{q^2}...[2s_{p+k}]_{q^2}}{[s_{p+1}]_{q^2}...[s_{p+k}]_{q^2}},~~~~  \label{cfinr}
\end{eqnarray}
\begin{eqnarray} 
\mbox{with}\quad \ \left\{\begin{array}{cc}
\!\!\! \!\!\! \!\!\! \!\!\!  \!\!\! \!\!\! \!\!\! \!\!\!   \!\!\! \!\!\! \!\!\! \!\!\! j= \overline{0,r-p}\ , \quad s_i\in\{1,2,...,r\}\ ,\\
{{\cal P}_k}: \begin{array}{cc} \ \ s_1<\dots<s_p\ ;\quad \ s_{p+1}<\dots<s_{p+k}\ ,\\
 \{s_{1},\dots,s_{p}\} \cap \{s_{p+1},\dots,s_{p+k}\}=\emptyset \end{array}
\end{array}\right.\ .\nonumber
\end{eqnarray} 
\end{conj}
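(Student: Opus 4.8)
The plan is to establish the relation first on representations and then lift it to the algebra. Concretely, I would prove (\ref{qDGfinr}) on every finite-dimensional irreducible module on which $(A,A^*)$ acts as a tridiagonal pair of the relevant ($q$-Racah) type, as produced by Theorem \ref{dl2}, and afterwards argue that vanishing on a sufficiently rich family of such modules forces the relation in the $q$-Onsager algebra. For the representation step, write $\Omega_r$ for the left-hand side of (\ref{qDGfinr}) and let $\{E_i\}_{i=0}^d$, $\{\theta_i\}_{i=0}^d$ be the primitive idempotents and eigenvalues of the diagonalizable operator $A$. Since $\sum_i E_i=\mathbb{I}$, it suffices to show $E_i\,\Omega_r\,E_j=0$ for all $i,j$. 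Using $AE_k=\theta_kE_k$ to pull the outer powers of $A$ through the idempotents, each monomial $A^{2r+1-2p-m}(A^*)^rA^{m}$ contributes $\theta_i^{\,2r+1-2p-m}\theta_j^{\,m}\,E_i(A^*)^rE_j$, so that
\[
E_i\,\Omega_r\,E_j \;=\; S_{ij}\,E_i(A^*)^rE_j, \qquad S_{ij}=\sum_{p=0}^{r}\sum_{m=0}^{2r+1-2p}(-1)^{m+p}\,\rho_0^{\,p}\,c_m^{[r,p]}\,\theta_i^{\,2r+1-2p-m}\,\theta_j^{\,m}.
\]
By Lemma \ref{lem:triplep}(ii) one has $E_i(A^*)^rE_j=0$ whenever $|i-j|>r$, so those blocks are automatically annihilated and the entire operator identity collapses to the scalar requirement $S_{ij}=0$ for $|i-j|\le r$.

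The core of the argument is therefore the verification of this two-variable identity. Substituting the $q$-Racah eigenvalues $\theta_i=\alpha+bq^{2i-d}+cq^{d-2i}$ from Corollary \ref{hequa1} and the value of $\rho_0$ obtained from Lemma \ref{lem:polypart}, the quantity $S_{ij}$ becomes a Laurent polynomial in $q^{2i}$ and $q^{2j}$. The symmetry $c^{[r,p]}_{2(r-p)+1-m}=c^{[r,p]}_{m}$ makes each $p$-block antisymmetric under $\theta_i\leftrightarrow\theta_j$, so $S_{ij}$ is divisible by $(\theta_i-\theta_j)$ and in particular vanishes for $i=j$; the work lies in the cases $1\le|i-j|\le r$. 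I would run this as an induction on $r$ (the recursive scheme alluded to in the statement): the top block $p=0$ is the pure higher $q$-Serre combination, which fails to vanish only through terms supported on $|i-j|\le r$, and the subleading blocks weighted by $\rho_0^{p}$ must cancel exactly that failure. Matching these cancellations order by order in $\rho_0$ is what pins down the coefficients, so a genuine part of the task is to show that the explicit formula (\ref{cfinr}) --- with its sum over the disjoint configurations $\mathcal{P}_k$ and the factors $[s]^2_{q^2}$ and $[2s]_{q^2}/[s]_{q^2}$ --- is the unique solution of the recursion, the former tracking the $\rho_0$-reductions and the latter the single $A^*$ crossings.

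The transfer from tridiagonal pairs to the algebra is, I expect, the real obstacle. One clean route would be to invoke faithfulness: if the family of $q$-Racah tridiagonal pairs separates the elements of the $q$-Onsager algebra, then vanishing on all of them yields $\Omega_r=0$. However, such a joint-faithfulness statement is not available in the required generality, which is precisely why (\ref{qDGfinr})--(\ref{qDGfinr2}) remain conjectural beyond small $r$. The alternative is direct algebraic reduction: rewrite $\Omega_r$ in the monomial basis of Theorem \ref{basis} by repeatedly applying the defining relations (\ref{qDG}), and check that every structure constant cancels. For fixed small $r$ (for instance $r=2,3$) this is a finite, if lengthy, computation, and for generic $r$ it can be organized recursively but rapidly becomes unwieldy --- hence the recourse to computer algebra to confirm the identity up to $r\le 10$. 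The dual relation (\ref{qDGfinr2}) then follows from the symmetry of the $q$-Onsager algebra under $A\leftrightarrow A^*$, $\rho_0\leftrightarrow\rho_1$, so no separate argument is needed. The main difficulty to flag is thus not the representation-theoretic reduction, which is transparent, but the passage back to the free algebra uniformly in $r$.
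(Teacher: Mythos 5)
Your proposal is correct and follows essentially the same route as the paper: the idempotent sandwich $E_i\,\Omega_r\,E_j$ together with Lemma \ref{lem:triplep} reduces the claim on representations to a two-variable vanishing statement on the $q$-Racah spectrum, the return to the algebra is handled exactly as in the paper (direct ordered reduction for $r=2,3$, recursion plus computer verification up to $r\le 10$, with the general transfer left open — which is why this remains a conjecture), and the second relation follows from the automorphism $A\leftrightarrow A^*$, $\rho_0\leftrightarrow\rho_1$. The only organizational difference is that the paper encodes the scalar identity in the factored generating function $p_r(x,y)$ of Lemma \ref{lem:polyroot}, so the vanishing for $|i-j|\le r$ is immediate and the inductive work goes into expanding the product to recover (\ref{cfinr}), whereas you verify the vanishing directly from the given coefficients — the same content in reverse order.
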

Below we give several supporting evidences for this conjecture. First, it is shown that a generalized version of the conjecture holds for every TD pair $(A,A^*)$ of $q-$Racah type. As a special case, the higher order relations for the $q-$Onsager algebra are obtained and have the form of (\ref{cfinr}).
Then, the conjecture is explicitely proven for $r=2,3$. Finally, the conjecture is studied recursively. Using a computer program, up to $r=10$ it is checked that the conjecture holds.

\subsection{Higher order relations and tridiagonal pairs}
Let $A,A^*$ act on an irreducible finite dimensional vector space $V$. In this Section, it is shown that $A,A^*$ satisfy higher order relations which are generalizations (\ref{cfinr}). For $r=1$, these relations are the defining relations of the $q$-Onsager algebra.
This is proven using the properties of tridiagonal pairs described in Chapter 1.\\

Let $(A, A^*)$ denote a tridiagonal pair on $V$, the sequence $\theta_0, \theta_1, \dots, \theta_d$ (resp. $\theta_0^*, \theta_1^*, \dots, \theta_d^*$) denote a standard ordering of the eigenvalues of $A$ (resp. $A^*$). For each positive integer $s$, let $\beta_s, \gamma_s, \gamma_s^*, \delta_s, \delta_s^*$ in $\overline{\mathbb{K}}$ satisfying (\ref{pt1}), (\ref{pt2}).
\begin{defn}
Let $x, y$ denote commuting indeterminates. For each positive integer $r$, we define the polynomials $p_r(x,y), p_r^*(x,y)$ as follows:
\begin{eqnarray}
p_r(x,y) = (x-y)\prod_{s=1}^{r} (x^2-\beta_s xy +y^2 - \gamma_s (x+y) -\delta_s)\ ,\label{defpoly}\\
p^*_r(x,y) = (x-y)\prod_{s=1}^{r} (x^2-\beta_s xy +y^2 - \gamma^*_s (x+y) -\delta^*_s)\ .
\end{eqnarray}
We observe $p_r(x,y)$ and $p^*_r(x,y)$ have a total degree $2r+1$ in $x,y$.
\end{defn}
\begin{lem}
\label{lem:polyroot}
For each positive integer $r$, $p_r(\theta_i, \theta_j) = 0,$ and $p_r^*(\theta_i^*,\theta^*_j)=0 $ if $|i-j|\leq r, (0\leq i,j \leq d)$.
\end{lem}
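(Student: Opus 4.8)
The plan is to exploit the factored form of $p_r$ and $p_r^*$ together with the quadratic vanishing conditions already established in Lemma~\ref{lem:polypart}. The key observation is that each of the $r+1$ factors in the product defining $p_r(x,y)$ is tailored to annihilate a single value of the difference $|i-j|$: the linear prefactor $(x-y)$ kills the case $|i-j|=0$, while for $1\le s\le r$ the $s$-th quadratic factor $x^2-\beta_s xy + y^2 - \gamma_s(x+y) - \delta_s$ kills the case $|i-j|=s$. Since the product runs over exactly $s=1,\dots,r$, every difference $m$ with $0\le m\le r$ is covered by some factor.

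First I would fix a pair of indices $i,j$ with $0\le i,j\le d$ and set $m=|i-j|$, assuming $m\le r$. I distinguish two cases. If $m=0$, then $i=j$, so substituting $x=\theta_i$ and $y=\theta_j=\theta_i$ into $p_r$ makes the leading factor $(\theta_i-\theta_j)$ vanish, and the whole product is zero. If $1\le m\le r$, then $m$ lies in the range $\{1,\dots,r\}$, so the factor indexed by $s=m$ genuinely occurs in the product $\prod_{s=1}^{r}$; by Lemma~\ref{lem:polypart} applied with $s=m$, this factor evaluates to $\theta_i^2 - \beta_m\theta_i\theta_j + \theta_j^2 - \gamma_m(\theta_i+\theta_j) - \delta_m = 0$, precisely because $|i-j|=m$. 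In either case the evaluated product $p_r(\theta_i,\theta_j)$ contains a vanishing factor, hence $p_r(\theta_i,\theta_j)=0$.

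The argument for $p_r^*$ is identical, using the dual eigenvalues $\theta_h^*$ and the starred scalars $\gamma_s^*,\delta_s^*$ together with the second relation of Lemma~\ref{lem:polypart}. No genuine obstacle arises here: the entire content of the statement is the bookkeeping that matches the index $s$ of each factor with the single value $|i-j|=s$ it is designed to annihilate, so once Lemma~\ref{lem:polypart} is in hand the claim reduces to an elementary inspection of the factorization. The only point worth flagging is that the case $m=0$ is disposed of by the linear prefactor $(x-y)$ rather than by any quadratic factor; this is exactly why the product is accompanied by $(x-y)$, and it is also what makes the total degree of $p_r$ and $p_r^*$ equal to $2r+1$ rather than $2r$.
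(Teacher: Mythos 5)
Your proof is correct and follows essentially the same route as the paper's own: both arguments dispose of the case $i=j$ via the linear prefactor $(x-y)$ and, for $0<|i-j|=s\le r$, invoke Lemma~\ref{lem:polypart} to make the $s$-th quadratic factor vanish, with the identical argument for the dual sequence. No gaps; your remark about why the prefactor $(x-y)$ is needed is a nice (if inessential) addition.
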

\begin{proof}
Let two integers $i, j$ such that $|i-j|\le r, ~(0 \le i, j \le d)$.\\

If $i = j$, then $\theta_i = \theta_j, \theta_i^* = \theta_j^*$. Hence, $p_r(\theta_i,\theta_j)=p_r^*(\theta_i^*, \theta_j^*)=0$.\\

If $i \ne j$, since $0<|i-j|\le r$, there exists an integer $1\le s \le r$ such that $|i-j|=s$.\\

By Lemma (\ref{lem:polypart}),
\[\theta_i^2-\beta_s\theta_i\theta_j+\theta_j^2-\gamma_s(\theta_i+\theta_j)-\delta_s = 0,\]
\[{\theta^*_i}^2-\beta^*_s\theta^*_i\theta^*_j+{\theta^*_j}^2-\gamma^*_s(\theta^*_i+\theta^*_j)-\delta^*_s = 0.\]

Hence, $p_r(\theta_i,\theta_j)=p_r^*(\theta_i^*, \theta_j^*)=0$.
\end{proof}

\begin{thm}
For each positive integer $r$, 
\begin{eqnarray}
\sum_{i,j=0}^{i+j\leq 2r+1} a_{ij} A^i {A^*}^rA^j =0 \ , \qquad \sum_{i,j=0}^{i+j\leq 2r+1} a^*_{ij} {A^*}^i {A}^r{A^*}^j =0 \ , \label{eq:higherdolan} 
\end{eqnarray}
where the scalars $a_{ij} ,a^*_{ij}$ are defined by:
\begin{eqnarray}
p_r(x,y)=  \sum_{i,j=0}^{i+j\leq 2r+1} a_{ij} x^i y^j \quad  \mbox{and} \quad  p^*_r(x,y)=\sum_{i,j=0}^{i+j\leq 2r+1} a^*_{ij} x^i y^j \ . \label{polyr}
\end{eqnarray}
\end{thm}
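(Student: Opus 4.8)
The plan is to exploit the spectral decompositions of $A$ and $A^*$ together with the two complementary vanishing results already at hand: Lemma \ref{lem:polyroot}, which says the polynomial $p_r$ vanishes on pairs of eigenvalues whose indices are \emph{close}, and Lemma \ref{lem:triplep}, which says the idempotent sandwich $E_m{A^*}^rE_n$ vanishes when the indices are \emph{far apart}. The exponent $r$ of the middle factor $A^*$ in (\ref{eq:higherdolan}) is precisely what makes these two index ranges complementary, so that every term of the expanded expression is killed by one or the other.

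First I would write $A=\sum_{m=0}^d\theta_m E_m$, so that $A^i=\sum_{m=0}^d\theta_m^i E_m$ for every $i\ge 0$, and use $\sum_{m=0}^d E_m=I$ together with $E_mE_{m'}=\delta_{mm'}E_m$. Substituting into the left-hand side of the first relation gives
\begin{eqnarray*}
\sum_{i,j=0}^{i+j\leq 2r+1} a_{ij} A^i {A^*}^r A^j
&=& \sum_{i,j=0}^{i+j\leq 2r+1} a_{ij} \Big(\sum_{m=0}^d \theta_m^i E_m\Big) {A^*}^r \Big(\sum_{n=0}^d \theta_n^j E_n\Big)\\
&=& \sum_{m,n=0}^d \Big(\sum_{i,j=0}^{i+j\leq 2r+1} a_{ij} \theta_m^i \theta_n^j\Big) E_m {A^*}^r E_n\\
&=& \sum_{m,n=0}^d p_r(\theta_m,\theta_n)\, E_m {A^*}^r E_n \ ,
\end{eqnarray*}
where the last equality is the defining expansion $p_r(x,y)=\sum a_{ij}x^iy^j$ from (\ref{polyr}).

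Next I would dispose of each summand by a dichotomy on $|m-n|$. If $|m-n|\le r$, then $p_r(\theta_m,\theta_n)=0$ by Lemma \ref{lem:polyroot}. If $|m-n|>r$, then $E_m{A^*}^rE_n=0$ by part (ii) of Lemma \ref{lem:triplep}. In either case the product $p_r(\theta_m,\theta_n)\,E_m{A^*}^rE_n$ vanishes, so the whole double sum is zero, which is the first relation. The second relation follows verbatim with the roles of $A$ and $A^*$ exchanged: expand $A^*=\sum_n\theta_n^*E_n^*$, use part (i) of Lemma \ref{lem:triplep} for $E_h^*A^rE_k^*$ when $|h-k|>r$, and invoke the vanishing of $p_r^*(\theta_h^*,\theta_k^*)$ for $|h-k|\le r$ from Lemma \ref{lem:polyroot}.

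The computation is mechanical and there is no serious obstacle once the two lemmas are in place; the conceptual content lies entirely in the matching of index ranges. The only point I would take care to state cleanly is this complementarity: the factor ${A^*}^r$ can couple the eigenspaces $E_mV$ and $E_nV$ only when $|m-n|\le r$, while $p_r$ was manufactured (via Lemma \ref{lem:polypart} and the eigenvalue parametrization (\ref{eq:const1})--(\ref{eq:const3})) precisely to annihilate all such nearby eigenvalue pairs. Establishing that the chosen exponent $r$ is exactly the threshold separating the two regimes is what makes the argument close.
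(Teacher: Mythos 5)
Your proof is correct and is essentially the paper's own argument: the paper sandwiches the left-hand side $\Delta_r$ between the primitive idempotents of $A$ to get $E_i\Delta_r E_j = p_r(\theta_i,\theta_j)\,E_i{A^*}^rE_j$, then kills every term by the same dichotomy you use (Lemma \ref{lem:polyroot} for $|i-j|\le r$, Lemma \ref{lem:triplep} for $|i-j|>r$). Your expansion of $A^i=\sum_m\theta_m^iE_m$ before grouping terms is just a slightly more explicit presentation of the identical computation, and the treatment of the second relation by exchanging the roles of $A$ and $A^*$ matches the paper as well.
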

\begin{proof}
Let $\Delta_r$ denote the expression of the left-hand side of the first equation of (\ref{eq:higherdolan}). We show $\Delta_r=0$. 

Let $\{E_i\}_{i=0}^d$ denote the standard ordering of the primitive idempotents of $A$ corresponding to $\{\theta_i\}_{i=0}^d$.

For $0\leq i,j \leq d$, one finds $E_i \Delta_r E_j = p_r(\theta_i,\theta_j) \ E_i {A^*}^r E_j $ with (\ref{polyr}). According to Lemma \ref{lem:polyroot} and Lemma  \ref{lem:triplep}, it follows $\Delta_r=0$. Similar arguments are used to show the second equation of (\ref{eq:higherdolan}).
\end{proof}

As a straightforward application, below we focus on the $r-th$ higher order tridiagonal relations associated with the $q-$Onsager algebra, a special case of the tridiagonal algebra.
\begin{defn}
Consider a TD pair $(A, A^*)$ of $q-$Racah type with eigenvalues such that $\alpha = \alpha^*=0$ (see (\ref{eq:const1})-(\ref{eq:const2})). Assume $r=1$. The corresponding tridiagonal relations (\ref{eq:dolan1}), (\ref{eq:dolan2}) are called the $q-$Dolan-Grady relations.
\end{defn}
\begin{example} For a TD pair $(A,A^*)$  of $q-$Racah type with eigenvalues such that $\alpha=\alpha^*=0$,  
 the parameter sequence is given by $\beta_1= q^{2} + q^{-2}\ , \ \gamma_1=\gamma^*_1=0\ , \   \delta_1= - bc(q^{2} - q^{-2})^2 \ ,\ \delta^*_1= - b^*c^*(q^{2} - q^{-2})^2$. Define $\delta_1=\rho_0$, $\delta_1^*=\rho_1$. The $q-$Dolan-Grady relations
are given by:
\begin{eqnarray}
\quad \sum_{j=0}^{3} (-1)^j  \left[ \begin{array}{c} 3 \\  j \end{array}\right]_q   A^{3-j} {A^*} A^{j} - \rho_0 (AA^*-A^*A) &=& 0\ ,\label{qDG1p} \\
\sum_{j=0}^{3} (-1)^j  \left[ \begin{array}{c} 3 \\ j \end{array}\right]_q   {A^*}^{3-j} {A} {A^*}^{j} - \rho_1 (A^*A-AA^*) &=& 0\ . \label{qDG2p}
\end{eqnarray}
\end{example}
\begin{rem} The relations (\ref{qDG1p}), (\ref{qDG2p}) are the defining relations of the $q-$Onsager algebra.
\end{rem}

\begin{defn}
Consider a TD pair $(A, A^*)$ of $q-$Racah type with eigenvalues such that $\alpha = \alpha^*=0$. For any positive integer $r$, the corresponding higher order tridiagonal relations (\ref{eq:higherdolan}) are called the higher order $q-$Dolan-Grady relations.
\end{defn}
\begin{thm}
For a TD pair $(A, A^*)$ of $q$-Racah type with eigenvalues such that $\alpha=\alpha^*=0$, the higher order $q$-Dolan-Grady relations are given by (\ref{qDGfinr})-(\ref{qDGfinr2}) with the identification $\rho_0 = \delta_1$ and $\rho_1 = \delta_1^*$, 
\end{thm}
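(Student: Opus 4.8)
The plan is to reduce the statement to an explicit coefficient computation. By the preceding Theorem the relations $\sum_{i,j} a_{ij} A^i {A^*}^r A^j = 0$ already hold for any TD pair of $q$-Racah type, with $a_{ij}$ the coefficients of $p_r(x,y)$ as in (\ref{polyr}). So it suffices to evaluate $p_r$ on the specialized parameter sequence and read off its coefficients. First I would substitute the $q$-Racah data with $\alpha=\alpha^*=0$ into (\ref{defpoly}): by Lemma \ref{lem:polypart} this forces $\gamma_s=0$ and $\delta_s=-bc(q^{2s}-q^{-2s})^2=\rho_0[s]_{q^2}^2$ (using $\delta_1=\rho_0$), so each quadratic factor factorizes as
\[
x^2-\beta_s xy+y^2-\delta_s = (x-q^{2s}y)(x-q^{-2s}y)-\rho_0[s]_{q^2}^2 .
\]
Hence $p_r(x,y)=(x-y)\prod_{s=1}^r\big[(x-q^{2s}y)(x-q^{-2s}y)-\rho_0[s]_{q^2}^2\big]$, and the problem becomes the extraction of the coefficient of $x^{2r+1-2p-j}y^j$ from this product.

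Next I would expand over the independent binary choice in each factor: selecting the constant $-\rho_0[s]_{q^2}^2$ from the factors indexed by a subset $S=\{s_1<\dots<s_p\}$ and the homogeneous quadratic from the rest. This organizes $p_r$ as a sum over $p=|S|$ carrying the prefactor $(-1)^p\rho_0^p[s_1]_{q^2}^2\cdots[s_p]_{q^2}^2$, which already accounts for the $(-1)^p\rho_0^p$ and the $[s_1]_{q^2}^2\cdots[s_p]_{q^2}^2$ in (\ref{cfinr}). The surviving homogeneous piece is $(x-y)\prod_{s\notin S}(x-q^{2s}y)(x-q^{-2s}y)$; after setting $z=x/y$ it equals $\prod_{a\in\mathcal E}(z-q^{2a})$ with the symmetric exponent set $\mathcal E=\{0\}\cup\{\pm s: s\in\{1,\dots,r\}\setminus S\}$ of cardinality $2(r-p)+1$. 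Its $x^{2r+1-2p-j}y^j$-coefficient is therefore $(-1)^j e_j$, where $e_j$ is the $j$-th elementary symmetric function of $\{q^{2a}:a\in\mathcal E\}$, supplying the remaining sign $(-1)^j$.

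The heart of the argument is the combinatorial evaluation of $e_j$. Grouping $\mathcal E$ into the fixed element $0$ and the symmetric pairs $\{s,-s\}$, each pair contributes to a monomial of $e_j$ by selecting neither element (weight $1$), exactly one (weight $q^{2s}+q^{-2s}=[2s]_{q^2}/[s]_{q^2}$), or both (weight $q^{2s}q^{-2s}=1$), while $0$ contributes $1$. Letting $k$ count the single-element pairs, indexed $s_{p+1}<\dots<s_{p+k}$ and disjoint from $S$, the pairs contributing both elements are counted by a binomial factor, and summing over the parity freedom of the element $0$ collapses this count to the factor of (\ref{cfinr}). This gives $c_j^{[r,p]}=\sum_{|S|=p}[s_1]_{q^2}^2\cdots[s_p]_{q^2}^2\,e_j$ in the form (\ref{cfinr}); combined with the prefactors it reproduces (\ref{qDGfinr}), and (\ref{qDGfinr2}) follows verbatim upon exchanging $A\leftrightarrow A^*$ and using the starred parameters $\gamma^*_s=0$, $\delta^*_s=\rho_1[s]_{q^2}^2$.

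I expect the main obstacle to be this last combinatorial identity: keeping precise track of the binomial counting the both-element pairs (whose upper argument is governed by $r-p-k$, not merely $r-p$) and of the parity of $j-k$ through the integer-part $\{\tfrac{j-k}{2}\}$, together with the symmetry $c^{[r,p]}_{2(r-p)+1-j}=c^{[r,p]}_j$, which mirrors the invariance $e_j=e_{2(r-p)+1-j}$ of the elementary symmetric functions of the self-reciprocal set $\{q^{2a}:a\in\mathcal E\}$. Verifying that these bookkeeping factors assemble into exactly the stated closed form is where the computation is most delicate and error-prone.
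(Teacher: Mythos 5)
Your proposal is correct and is essentially the paper's own proof: the paper likewise specializes the $q$-Racah data with $\alpha=\alpha^*=0$ to obtain the generating polynomial (\ref{polyqons}), expands it in $x,y$ as in (\ref{polyr}), and identifies the coefficients $a_{2r+1-2p-j\,j}=(-1)^{j+p}\rho_0^p c_j^{[r,p]}$ — except that where you organize the expansion explicitly (binary choice in each factor, elementary symmetric functions of the self-reciprocal exponent set, pair-by-pair weights $1$, $[2s]_{q^2}/[s]_{q^2}$, $1$), the paper compresses the entire computation into the phrase ``by induction, one finds.''

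Your closing worry, however, is not mere bookkeeping anxiety: it is precisely where the printed formula fails, and your count is the correct one. The pairs contributing both elements must be chosen among the $r-p-k$ pairs not already used as single-element pairs, so the multiplicity is $\frac{(r-p-k)!}{(\{\frac{j-k}{2}\})!\,(r-p-k-\{\frac{j-k}{2}\})!}$, whereas (\ref{cfinr}) prints $\frac{(r-p)!}{(\{\frac{j-k}{2}\})!\,(r-p-\{\frac{j-k}{2}\})!}$. The two first disagree at $r=3$, $p=0$, $j=3$, $k=1$. A direct check settles it: writing $e_i$ for the elementary symmetric polynomials in $\beta_1,\beta_2,\beta_3$ with $\beta_s=q^{2s}+q^{-2s}$, the coefficient of $x^4y^3$ in $(x-y)\prod_{s=1}^{3}(x^2-\beta_s xy+y^2)$ equals $-(e_3+e_2+2e_1+3)$, so $c_3^{[3,0]}=e_3+e_2+2e_1+3=\left[\begin{array}{c} 7 \\ 3 \end{array}\right]_q$, in agreement with Example \ref{exr3}, with the $q$-binomial theorem at $\rho_0=0$, and with your count (the $k=1$ term carries $\binom{2}{1}e_1=2e_1$); the formula (\ref{cfinr}) as printed would instead give $\binom{3}{1}e_1=3e_1$, hence $e_3+e_2+3e_1+3\neq\left[\begin{array}{c} 7 \\ 3 \end{array}\right]_q$. (The typo is invisible in all terms with $k=0$ or $\{\frac{j-k}{2}\}=0$, which is why Example \ref{exr2} and the remaining entries of Example \ref{exr3} do not detect it.) So your derivation is sound and proves the theorem with the corrected coefficients — $(r-p)!$ and $(r-p-\{\frac{j-k}{2}\})!$ replaced by $(r-p-k)!$ and $(r-p-k-\{\frac{j-k}{2}\})!$ in (\ref{cfinr}) — and, as a bonus, your observation that the exponent set is closed under inversion with total product $1$ gives the symmetry $c^{[r,p]}_j=c^{[r,p]}_{2(r-p)+1-j}$ for free.
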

\begin{proof}  For $\alpha=\alpha^*=0$ in (\ref{eq:const1}), (\ref{eq:const2}), from Lemma {\ref{lem:polypart}} one finds $\beta_s= q^{2s} + q^{-2s}$\ , \ $\gamma_s=\gamma^*_s=0$\ , \   $\delta_s= - bc(q^{2s} - q^{-2s})^2$ \ ,\ $\delta^*_s= - b^*c^*(q^{2s} - q^{-2s})^2$ . Then, the first polynomial generating function (\ref{defpoly}) reads: 
\begin{eqnarray}
p_r(x,y) = (x-y)\prod_{s=1}^{r} \left(x^2- \frac{[2s]_{q^2}}{[s]_{q^2}}xy +y^2  -   [s]^2_{q^2} \rho_0\right)\ ,\label{polyqons}
\end{eqnarray}
where the notation $\beta_s=[2s]_{q^2}/[s]_{q^2}$ and $\delta_s/\rho_0=[s]^2_{q^2}$ has been introduced. Expanding the polynomial in the variables $x,y$ as (\ref{polyr}), one shows that the coefficients $a_{ij}$ in (\ref{polyr}) take the form:
\begin{eqnarray}
 a_{2r+1-2p - j \ j}= (-1)^{j+p} \rho^{p}_0 c_j^{[r,p]}
\end{eqnarray}
where $c_j^{[r,p]}$  solely depend on $q$, and are vanishing otherwise. By induction, one finds that they are given by (\ref{cfinr}).
Replacing $\rho_0\rightarrow \rho_1$, the second relation (\ref{qDGfinr2})  follows.
\end{proof}

For $r=2,3$, the higher order $q-$Dolan-Grady relations (\ref{qDGfinr}), (\ref{qDGfinr2}) can be constructed in a straightforward manner: 

\begin{example}\label{exr2} The first example of higher order $q-$Dolan-Grady relations is given by (\ref{qDGfinr}), (\ref{qDGfinr2})  for $r=2$ with:
\begin{eqnarray}
c^{[2,0]}_0&=&1,~ c^{[2,0]}_1= 1+ [2]_{q^2} + \frac{[4]_{q^2}}{[2]_{q^2}} \equiv \left[ \begin{array}{c} 5\\ 1 \end{array}\right]_q,~ c^{[2,0]}_2= 2+ [2]_{q^2}+ [4]_{q^2} + \frac{[4]_{q^2}}{[2]_{q^2}} \equiv  \left[ \begin{array}{c} 5\\ 2 \end{array}\right]_q,\nonumber \\
c^{[2,1]}_0&=& 1 + [2]^2_{q^2} \equiv q^4+q^{-4} +3 \ , \quad  c^{[2,1]}_1=  1 + [2]^2_{q^2} +  \frac{[4]_{q^2}}{[2]_{q^2}} +  [2]^3_{q^2} \equiv [5]_q[3]_q\ ,\nonumber\\
c^{[2,2]}_0&=& [2]^2_{q^2} \equiv (q^2+q^{-2})^2 \ . \nonumber 
\end{eqnarray}
\end{example}

\begin{example}\label{exr3} The second example of higher order $q-$Dolan-Grady relations is given by (\ref{qDGfinr}), (\ref{qDGfinr2})  for $r=3$ with:
\begin{eqnarray}
c_{j}^{[3,0]} &=&  \left[ \begin{array}{c} 7 \\ j 
\end{array}\right]_q , \quad j=0,...,7\ ,\nonumber\\     
c^{[3,1]}_0&=& 1 + [2]^2_{q^2} + [3]^2_{q^2} \ ,\nonumber\\
c^{[3,1]}_1&=& 1 + [2]^2_{q^2} + [3]^2_{q^2} + (1+ [3]^2_{q^2})\frac{[4]_{q^2}}{[2]_{q^2}} +  ([2]^2_{q^2} + [3]^2_{q^2})[2]_{q^2}  + (1+ [2]^2_{q^2})\frac{[6]_{q^2}}{[3]_{q^2}} \ ,\nonumber\\
c^{[3,1]}_2&=& 2(1 + [2]^2_{q^2} + [3]^2_{q^2}) + (1+ [3]^2_{q^2})\frac{[4]_{q^2}}{[2]_{q^2}} +  ([2]^2_{q^2} + [3]^2_{q^2})[2]_{q^2}  + (1+ [2]^2_{q^2})\frac{[6]_{q^2}}{[3]_{q^2}} \ \nonumber\\
&&+\ \  \frac{[4]_{q^2}[6]_{q^2}}{[2]_{q^2}[3]_{q^2}} + \frac{[2]^3_{q^2}[6]_{q^2}}{[3]_{q^2}}  + [3]^2_{q^2}[4]_{q^2} \ ,\nonumber\\
c^{[3,2]}_0&=& [2]^2_{q^2} + [3]^2_{q^2} +  [2]^2_{q^2}[3]^2_{q^2}\ ,\nonumber\\
c^{[3,2]}_1&=& [2]^2_{q^2} + [3]^2_{q^2} +  [2]^2_{q^2}[3]^2_{q^2} + [2]^2_{q^2}\frac{[6]_{q^2}}{[3]_{q^2}} + [3]^2_{q^2}\frac{[4]_{q^2}}{[2]_{q^2}} + [2]^3_{q^2}[3]^2_{q^2} ,\nonumber\\
c^{[3,3]}_0&=& [2]^2_{q^2}[3]^2_{q^2} \ .\nonumber
\end{eqnarray}
\end{example}
To end up this Section, let us consider the family of relations satisfied by a TD pair of $q-$Racah type such that\footnote{For instance, choose $b,b^*=0$ and/or $c,c^*=0$ in (\ref{eq:const1}), (\ref{eq:const2}).} $\rho_0=\rho_1=0$.\\

We remind a powerful theorem called the $q$-Binomial Theorem in studying the binomial coefficients.
\begin{thm}
For all $n \ge 1$, $\prod\limits_{j=1}^n{(1+xq^j)}= \sum\limits_{k=0}^n{q^{k(k+1)/2}{\left[{\begin{array}{c}
n\\k\end{array}} \right]_q^*}x^k},$\\
where ${\left[ {\begin{array}{l}
n\\k
\end{array}} \right]}^*_q= \frac{(1-q^n)(1-q^{n-1})\dots(1-q^{n-k+1})}{(1-q^k)(1-q^{k-1})\dots(1-q)}$.
\end{thm}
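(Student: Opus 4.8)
The plan is to prove the identity by induction on $n$, with a $q$-Pascal recurrence serving as the engine of the inductive step. For the base case $n=1$ one checks directly that $\prod_{j=1}^{1}(1+xq^j)=1+qx$ agrees with $\sum_{k=0}^{1}q^{k(k+1)/2}\left[\begin{array}{c} 1 \\ k \end{array}\right]^*_q x^k=1+qx$, since $\left[\begin{array}{c} 1 \\ 0 \end{array}\right]^*_q=\left[\begin{array}{c} 1 \\ 1 \end{array}\right]^*_q=1$ and $1\cdot 2/2=1$.

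For the inductive step I would first isolate the key lemma, namely the $q$-Pascal rule
\[
\left[\begin{array}{c} n+1 \\ k \end{array}\right]^*_q=\left[\begin{array}{c} n \\ k \end{array}\right]^*_q+q^{\,n+1-k}\left[\begin{array}{c} n \\ k-1 \end{array}\right]^*_q,\qquad 1\le k\le n,
\]
which follows from the explicit definition of $\left[\begin{array}{c} n \\ k \end{array}\right]^*_q$ by placing the two ratios over the common denominator $(1-q)\cdots(1-q^k)$ and using the splitting $(1-q^{n+1})=(1-q^{n+1-k})+q^{\,n+1-k}(1-q^k)$ in the leading numerator factor. Then I would write $\prod_{j=1}^{n+1}(1+xq^j)=(1+xq^{n+1})\prod_{j=1}^{n}(1+xq^j)$, substitute the inductive hypothesis, expand, and reindex the term carrying the factor $xq^{n+1}$ by $k\mapsto k-1$. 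Collecting the coefficient of $x^k$ for $1\le k\le n$ yields
\[
q^{k(k+1)/2}\left[\begin{array}{c} n \\ k \end{array}\right]^*_q+q^{(k-1)k/2+n+1}\left[\begin{array}{c} n \\ k-1 \end{array}\right]^*_q=q^{k(k+1)/2}\Big(\left[\begin{array}{c} n \\ k \end{array}\right]^*_q+q^{\,n+1-k}\left[\begin{array}{c} n \\ k-1 \end{array}\right]^*_q\Big),
\]
where I have used $(k-1)k/2=k(k+1)/2-k$ to pull out $q^{k(k+1)/2}$. By the $q$-Pascal rule the bracket equals $\left[\begin{array}{c} n+1 \\ k \end{array}\right]^*_q$, which is exactly the required coefficient; the boundary terms $k=0$ and $k=n+1$ match trivially.

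The only delicate point, and the step I expect to demand the most care, is the bookkeeping of the powers of $q$: one must verify that the exponent $(k-1)k/2+n+1$ of the shifted term differs from $k(k+1)/2$ by precisely $n+1-k$, so that it is the $q$-Pascal identity in the form above (weight $q^{\,n+1-k}$) rather than the companion form (weight $q^k$) that applies. An alternative that sidesteps the explicit $q$-Pascal identity is to set $f_n(x)=\prod_{j=1}^{n}(1+xq^j)$ and observe the functional equation $(1+xq)f_n(qx)=(1+xq^{n+1})f_n(x)$; writing $f_n(x)=\sum_k c_k x^k$ and comparing coefficients of $x^k$ gives the first-order recurrence $c_k(q^k-1)=c_{k-1}(q^{n+1}-q^k)$, which, solved with $c_0=1$, reproduces $c_k=q^{k(k+1)/2}\left[\begin{array}{c} n \\ k \end{array}\right]^*_q$ directly. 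I would present the induction as the main argument and record the functional-equation computation as an independent cross-check.
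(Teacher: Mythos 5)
Your proof is correct and follows exactly the approach the paper itself takes, since the paper's entire proof is the single phrase ``By induction''; your write-up simply supplies the details (base case, the $q$-Pascal rule $\left[\begin{array}{c} n+1 \\ k \end{array}\right]^*_q=\left[\begin{array}{c} n \\ k \end{array}\right]^*_q+q^{\,n+1-k}\left[\begin{array}{c} n \\ k-1 \end{array}\right]^*_q$, and the exponent bookkeeping) that the paper omits. The functional-equation argument you add as a cross-check is also sound, but the main inductive argument is precisely the paper's intended route.
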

\begin{proof}
By induction
\end{proof}
\begin{thm}
For $\rho_0 = \rho_1 =0$, the higher order $q-$Dolan-Grady relations satisfied by the corresponding TD pair simplify to the well-known Lusztig's higher order  $q-$Serre relations \cite{Luszt}:
\begin{eqnarray}
\sum_{j=0}^{2r+1}  (-1)^{j}  \left[ \begin{array}{c} 2r+1\\ j \end{array}\right]_q  \,  A^{2 r+1-j} {A^*}^{r} A^{j}&=&0\  \label{qserrefinr} \ ,\\
\sum_{j=0}^{2r+1} (-1)^{j}  \left[ \begin{array}{c} 2r+1\\ j \end{array}\right]_q  \,  {A^*}^{2 r+1-j} {A}^{r} {A^*}^{j}&=&0\  .
\end{eqnarray}

\end{thm}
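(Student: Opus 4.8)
The plan is to argue entirely at the level of the two--variable generating polynomial $p_r(x,y)$ of (\ref{defpoly}), since the theorem of the preceding subsection already reduces each higher order relation to the vanishing of $\sum_{i,j} a_{ij}\,A^i {A^*}^r A^j$, with the scalars $a_{ij}$ read off from the expansion $p_r(x,y)=\sum_{i,j} a_{ij}x^iy^j$ in (\ref{polyr}). Imposing $\rho_0=\rho_1=0$ annihilates every term with $p\geq 1$ in (\ref{qDGfinr}), (\ref{qDGfinr2}), so it suffices to identify the surviving coefficients $c_j^{[r,0]}$ of (\ref{cfinr}) with the Gaussian binomial $\left[\begin{array}{c}2r+1\\ j\end{array}\right]_q$. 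Rather than manipulate the explicit sum (\ref{cfinr}) at $p=0$, I would obtain this identification by factoring $p_r$ and applying the $q$--Binomial Theorem stated above.

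First I would specialize the generating polynomial. For a TD pair of $q$--Racah type with $\alpha=\alpha^*=0$ one has $\beta_s=q^{2s}+q^{-2s}$ and $\gamma_s=\delta_s=0$ once $\rho_0=0$, so $p_r(x,y)=(x-y)\prod_{s=1}^r\bigl(x^2-(q^{2s}+q^{-2s})xy+y^2\bigr)$, using the elementary identity $[2s]_{q^2}/[s]_{q^2}=q^{2s}+q^{-2s}$. The key observation is that each quadratic splits over $\overline{\mathbb{K}}$ as $x^2-(q^{2s}+q^{-2s})xy+y^2=(x-q^{2s}y)(x-q^{-2s}y)$, its discriminant being $(q^{2s}-q^{-2s})^2$. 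Absorbing the leading factor $(x-y)$ as the $m=0$ term then yields the fully split form $p_r(x,y)=\prod_{m=-r}^{r}(x-q^{2m}y)$.

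Next I would expand this product. Writing $t=y/x$, reindexing $m\mapsto \ell=m+r\in\{0,\dots,2r\}$, and setting $w=q^{-2r}t$ turns $\prod_{m=-r}^r(1-q^{2m}t)$ into $\prod_{\ell=0}^{2r}(1-w q^{2\ell})$, to which the $q$--Binomial Theorem (applied in base $q^2$, with a shift from $\prod_{j=1}^n$ to $\prod_{\ell=0}^{n-1}$) assigns to $w^j$ the coefficient $(-1)^j q^{j(j-1)}\left[\begin{array}{c}2r+1\\ j\end{array}\right]^*_{q^2}$. I expect the main obstacle to be purely the bookkeeping of $q$--powers: one must restore $w^j=q^{-2rj}t^j$ and convert the Gaussian coefficient $\left[\begin{array}{c}n\\ j\end{array}\right]^*_{q^2}$ (the normalization of the stated theorem) into the symmetric coefficient $\left[\begin{array}{c}n\\ j\end{array}\right]_q$ of the claim through $\left[\begin{array}{c}n\\ j\end{array}\right]^*_{q^2}=q^{j(n-j)}\left[\begin{array}{c}n\\ j\end{array}\right]_q$. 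A short check shows that the three accumulated exponents $j(j-1)$, $-2rj$ and $j(2r+1-j)$ sum to zero, so the coefficient of $x^{2r+1-j}y^j$ in $p_r$ is exactly $(-1)^j\left[\begin{array}{c}2r+1\\ j\end{array}\right]_q$. Thus $a_{ij}$ is supported on $i+j=2r+1$ with precisely these values, whence $\sum_{i,j} a_{ij}A^i{A^*}^r A^j=0$ becomes the first higher order $q$--Serre relation (\ref{qserrefinr}); the second follows by the same computation applied to $p_r^*$, equivalently by the $A\leftrightarrow A^*$ symmetry exchanging $b,c$ with $b^*,c^*$.
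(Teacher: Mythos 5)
Your proposal is correct and follows essentially the same route as the paper: with $\rho_0=0$ you split each quadratic factor of $p_r(x,y)$ as $(x-q^{2s}y)(x-q^{-2s}y)$, merge the factor $(x-y)$ to obtain $\prod_{m=-r}^{r}(x-q^{2m}y)$, apply the $q$-Binomial Theorem in base $q^2$ after reindexing, and check that the accumulated $q$-exponents cancel when converting $\left[\begin{array}{c}2r+1\\ j\end{array}\right]^*_{q^2}$ into $\left[\begin{array}{c}2r+1\\ j\end{array}\right]_q$. The only differences are cosmetic (your shift $\ell=m+r$ versus the paper's $t=s+r+1$, and your explicit remark that the second relation follows by the $A\leftrightarrow A^*$ symmetry, which the paper leaves implicit).
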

\begin{proof}
Since $\rho_0 = 0$, (\ref{polyqons}) can be written by:
\begin{eqnarray*}
p_r(x,y)&=&x^{r+1}(1-\frac{y}{x})\prod\limits_{s=1}^r{(1-(q^{2s}+q^{-2s})\frac{y}{x}+(\frac{y}{x})^2)}\\
&=&x^{2r+1}(1-\frac{y}{x})\prod\limits_{s=1}^r{(1-q^{2s}\frac{y}{x})(1-q^{-2s}\frac{y}{x})}\\
&=&x^{2r+1}\prod\limits_{s=-r}^r{(1-q^{2s}\frac{y}{x})}.
\end{eqnarray*}
Put $t=s+r+1$, $p_r(x,y)=x^{2r+1}\prod\limits_{t=1}^{2r+1}{(1+q^{2t}(-q^{-2(r+1)}\frac{y}{x}))}.$\\
Applying the q-Binomial Theorem yields 
\[ p_r(x,y)=x^{2r+1}\sum\limits_{j=0}^{2r+1}{(-1)^jq^{j(j+1)}{\left[ \begin{array}{c}
2r+1\\j
\end{array} \right] }^*_{q^2}(q^{-2(r+1)}\frac{y}{x})^j}. \]
Using the relation between $\left[ \begin{array}{c}n\\j\end{array}\right]_q$ and $\left[ \begin{array}{c}n\\j\end{array}\right]_{q^2}^*$ for all $n \ge j$, $\left[ \begin{array}{c}n\\j\end{array}\right]_q = q^{j(j-n)}\left[ \begin{array}{c}n\\j\end{array}\right]_{q^2}^*$.\\
We have \[ p_r(x,y)=\sum\limits_{j=0}^{2r+1}{(-1)^j{\left[ \begin{array}{c}
2r+1\\j
\end{array} \right] }_qx^{2r+1-j}y^j}  .\]
\end{proof}

\subsection{Recursion for generating the coefficients of the higher order q-Dolan-Grady relations}

In the previous Section, it was shown that every TD pair of $q-$Racah type such that $\alpha=\alpha^*=0$ satisfies the $r-th$ higher order $q-$Dolan-Grady relations (\ref{qDGfinr}), (\ref{qDGfinr2}) with (\ref{cfinr}). For the special case $r=1$, these relations coincide with the defining relations of the $q-$Onsager algebra (\ref{qO1}), (\ref{qO2}). This strongly suggests that the conjecture holds in general. Below, the conjecture for $r = 2, 3$ will be proved. Then, using an inductive argument we will study the general structure and derive recursion formulae - independently of the results of the previous part - for the coefficients $c^{[r,p]}_j$. 

Let $A$, $A^*$ be the fundamental generators of the $q-$Onsager algebra (\ref{qO1}), (\ref{qO2}).
By analogy with Lusztig's higher order $q-$Serre relations, we are interested in more complicated linear combinations of monomials  of the type $A^n{A^*}^rA^m$, $n+m=2r+1,2r-1,...,1$,  that are vanishing. The defining relations (\ref{qO1}), (\ref{qO2}) correspond to the case $r=1$ of (\ref{qDGfinr}), (\ref{qDGfinr2}). Below, successively we derive the relations (\ref{qDGfinr}), (\ref{qDGfinr2}) for $r=2,3$ and study the generic case by induction.
\subsubsection{Proof of the relations for $r=2$} Consider the simplest example beyond (\ref{qO1}): we are looking for a linear relation between monomials of the type $A^n{A^*}^2A^m$, $n+m=5,3,1$. According to the defining relations (\ref{qO1}), note that the monomial $A^3A^*$ can be written as:
\begin{eqnarray}
A^3A^* = \alpha A^2A^*A - \alpha AA^*A^2 + A^*A^3 + \rho_0 (AA^* - A^*A)\ \quad \mbox{with}\quad \alpha=[3]_q \ .\label{mon}
\end{eqnarray}
Multiplying from the left by $A$ or $A^2$, the corresponding expressions can be ordered as follows: each time a monomial of the form $A^n A^*A^m$ with $n\geq 3$ arises, it is reduced using (\ref{mon}). It follows:
\begin{eqnarray}
A^4A^* &=& (\alpha^2-\alpha) A^2A^*A^2 + (1- \alpha^2) AA^*A^3 + \alpha A^*A^4 \nonumber
\\& & +~ \rho_0 \left(A^2A^* - \alpha A^*A^2 + (\alpha-1)AA^*A\right)\ ,\nonumber\\
A^5A^* &=& (\alpha^3-2\alpha^2+1) A^2A^*A^3 + \alpha(-\alpha^2+\alpha+1) AA^*A^4 + \alpha(\alpha-1) A^*A^5 \nonumber\\
&&+  \ \rho_0 \left((2\alpha-1)A^2A^*A + \alpha(\alpha-3)AA^*A^2 - (\alpha^2-\alpha-1) A^*A^3\right) \nonumber \\
&&+ \ \rho_0^2(AA^*-A^*A)\ .\nonumber
\label{mon2}
\end{eqnarray}
For our purpose, four different types of monomials may be now considered: $A^5{A^*}^2$, $A^4{A^*}^2A$, $A^3{A^*}^2A^2$ and $A^3{A^*}^2$. Following the ordering prescription, each of these monomials can be reduced as a combination of monomials of the type ($n,m,p,s,t\geq 0$):
\begin{eqnarray}
&&\quad A^n{A^*}^2A^m \ \qquad \ \mbox{with} \quad  \ n\leq 2\ , \ n+m=5,3,1\ ,\label{mongen1}\\
 &&\quad A^p A^* A^s A^* A^t \quad \  \mbox{with} \quad  \ p\leq 2\ , \ 1\le s\le 2\ ,\ p+s+t=5,3,1 \ .\nonumber
\end{eqnarray}
For instance, the monomial  $A^5{A^*}^2$ is reduced to:
\begin{eqnarray}
A^5{A^*}^2 &=& (\alpha^3-2\alpha^2+1) \left(\alpha A^2A^*A^2A^*A - \alpha A^2A^*AA^*A^2 + A^2{A^*}^2 A^3\right) \nonumber\\
&&-\ (\alpha^3-\alpha^2-\alpha) \left((\alpha^2-\alpha)AA^*A^2A^*A^2 +(1-\alpha^2) AA^*AA^*A^3  + \alpha A{A^*}^2A^4 \right)\nonumber\\
&&+ \ (\alpha^2-\alpha)\left( (\alpha^3-2\alpha^2+1)  {A^*}A^2A^*A^3  - (\alpha^3-\alpha^2-1)  {A^*}AA^*A^4 + \alpha(\alpha-1) {A^*}^2A^5  \right)\nonumber \\
&& + \ \rho_0 (\alpha^3-2\alpha^2+2\alpha)\left(A^2A^*AA^* - AA^*A^2A^* + A^*A^2A^*A\right)\nonumber\\
&& + \ \rho_0 (-\alpha^3+2\alpha^2-1)\left(A^2{A^*}^2A + \alpha AA^*AA^*A\right)\nonumber\\
&& + \ \rho_0 (\alpha^4-\alpha^3-\alpha^2)\left(A{A^*}^2A^2\right)\nonumber\\
&& + \ \rho_0 (\alpha^4-3\alpha^3+2\alpha^2-\alpha)\left(A^*{A}A^*A^2\right)\nonumber\\
&& + \ \rho_0 (-\alpha^4+2\alpha^3-\alpha^2+1)\left({A^*}^2A^3\right) \nonumber\\
&& + \ \rho_0^2 \big[A,{A^*}^2\big] \ . \nonumber
\end{eqnarray}
The two other monomials $A^4{A^*}^2A$, $A^3{A^*}^2A^2$ are also ordered using (\ref{mon}). One obtains:
\begin{eqnarray}
A^4{A^*}^2A &=& (\alpha^2-\alpha) \left( A^2A^*A^2A^*A + \alpha A^*A^2A^*A^3\right) + \ \alpha^2{A^*}^2A^5   \nonumber\\
&&+\ (\alpha^2-1) \left(\alpha AA^*AA^*A^3- \alpha AA^*A^2A^*A^2 - \alpha A^*AA^*A^4 - A{A^*}^2A^4\right)\nonumber\\
&& + \ \rho_0 \left(  A^2{A^*}^2 A  - (1-\alpha^2) A{A^*}^2A^2 - \alpha^2 {A^*}^2 A^3\right)
\nonumber \\
&&+ \ \rho_0 \left(\alpha^2-\alpha \right) \left(A^*AA^*A^2-AA^*AA^*A \right)\ ,\nonumber \\
A^3{A^*}^2A^2 &=& \alpha \left(A^2A^*AA^*A^2 - AA^*A^2A^*A^2 + A^*A^2A^*A^3 - A^*AA^*A^4  \right) +{A^*}^2A^5\nonumber\\
&& + \rho_0(A {A^*}^2A^2 - {A^*}^2A^3) \nonumber \ .
\end{eqnarray}
The ordered expression for the fourth monomial $A^3{A^*}^2$ directly follows from (\ref{mon}). 
Having the explicit ordered expressions of $A^5{A^*}^2$, $A^4{A^*}^2A$, $A^3{A^*}^2A^2$ and $A^3{A^*}^2$ in terms of monomials of the type (\ref{mongen1}), let us consider the combination:
\begin{eqnarray}
f_2(A,A^*)= c_0^{[2,0]} A^5{A^*}^2 - c^{[2,0]}_1 A^4{A^*}^2A +   c^{[2,0]}_2 A^3{A^*}^2A^2 -  \rho_0 c^{[2,1]}_0 A^3{A^*}^2 \ 
\end{eqnarray}
with unknown coefficients $c^{[2,0]}_j$, $j=1,2$, $c^{[2,1]}_0$, and normalization $c_0^{[2,0]}=1$.
After simplifications, the combination takes the ordered form:
\begin{eqnarray}
f_2(A,A^*)=   c^{[2,0]}_3 A^2{A^*}^2A^3 - c^{[2,0]}_4 A{A^*}^2A^4 + c^{[2,0]}_5 {A^*}^2A^5  \ + \ g_2(A,A^*)  \ 
\end{eqnarray}
where 
\begin{eqnarray}
 && c^{[2,0]}_3 = \alpha^3-2\alpha^2+1 \ , \quad c^{[2,0]}_4 = \alpha^2(\alpha^2-\alpha-1)+ c^{[2,0]}_1 (1-\alpha^2) \ , \\ && c^{[2,0]}_5 = (\alpha^2-\alpha)^2 - \alpha^2c^{[2,0]}_1 + c^{[2,0]}_2 \ .\nonumber
\end{eqnarray}
Inspired by the structure of Lusztig's higher order $q-$Serre relations, consider the conditions under which the combination $g_2(A,A^*)$ never contains monomials of the form $A^p A^* A^s A^* A^t $ ($p\leq 2,\ 1\le s\le 2$). At the lowest order in $\rho_0$, given a particular monomial  the condition under which its coefficient in $g_2(A,A^*)$ is vanishing is given by:
\begin{eqnarray*}
&& A^2A^*A^2A^*A: \ \alpha^3-2\alpha^2+1 - c^{[2,0]}_1(\alpha-1)=0 \ ,\nonumber\\
&& A^2A^*AA^*A^2: \ -\alpha^3+2\alpha^2-1 + c^{[2,0]}_2 =0\ , \nonumber \\
&& AA^*A^2A^*A^2:  \ (\alpha-1)(-\alpha^3+\alpha^2+\alpha)  - c^{[2,0]}_1(1-\alpha^2) - c^{[2,0]}_2 =0 \ ,\nonumber\\
&& AA^*AA^*A^3:  \ (1-\alpha^2)(-\alpha^3+\alpha^2+\alpha)+ c^{[2,0]}_1 \alpha(1-\alpha^2) =0\ , \nonumber\\
&& A^*A^2A^*A^3:  \  (\alpha-1)(\alpha^3-2\alpha^2+1)  - c^{[2,0]}_1(\alpha^2-\alpha) + c^{[2,0]}_2 =0 \ ,\nonumber\\
&& A^*AA^*A^4:  \ (\alpha-1)(-\alpha^3+\alpha^2+\alpha)- c^{[2,0]}_1 (1-\alpha^2) - c^{[2,0]}_2 =0\ . \nonumber
\end{eqnarray*}
Recall that $\alpha=[3]_q$. The solution  $\{c^{[2,0]}_j$, $j=1,2$\} to this system of equations exists, and it is unique. In terms of $q-$binomials, it reads:
\begin{eqnarray}
c^{[2,0]}_1= \left[ \begin{array}{c} 5\\ 1 \end{array}\right]_q  \ ,\qquad c^{[2,0]}_2= \left[ \begin{array}{c} 5\\ 2 \end{array}\right]_q \ .
\end{eqnarray}
At the next order $\rho_0$, the conditions such that monomials of the type $A^p A^* A^s A^* A^t $ with $p+s+t\leq 3$ and $1\le s \le 2$ are vanishing yield to:
\begin{eqnarray}
c^{[2,1]}_0=q^4+q^{-4}+3\ .\nonumber
\end{eqnarray}
All the other coefficients of the monomials $A^n{A^*}^2A^m \quad \mbox{for} \quad  n+m=5,3,1$ are explicitly determined in terms of $c^{[2,0]}_j$ ($j=0,1,2$), $c^{[2,1]}_0$, $\rho_0$ and $\rho_0^2$. Based on these results, we conclude that the $q-$Dolan-Grady relation (\ref{qO1}) implies the existence of a unique linear relation between monomials of the type $A^n{A^*}^2A^m \quad \mbox{with} \quad  n+m=5,3,1$. This relation can be seen as a $\rho_0-$deformed analogue of the simplest higher order $q-$Serre relation.  Explicitly, one finds:
\begin{eqnarray}
\sum\limits_{j = 0}^5 {{{\left( { - 1} \right)}^j}\left[ {\begin{array}{*{20}{c}}
   5  \\
   j  \\
\end{array}} \right]} {A^{5 - j}}{A^*}^2{A^j} &=& 
\rho_0 ( (q^4 + q^{- 4} + 3)(A^3 {A^*}^2-  {A^{*2}}{A^3})\nonumber \\ &&~~~~ -[ 5 ]_q\left[ 3 \right]_q(A^2{A^{*2}}A - A{A^{*2}}A^2) ) \label{qDG11} \nonumber  \\ 
& &- \ {\rho_0 ^2}{\left( {{q^2} + {q^{ - 2}}} \right)^2}\left(  A{A^{*2} - A^{*2}}A  \right)\ .
\end{eqnarray}
Using the automorphism $A\leftrightarrow A^*$ and $\rho_0\leftrightarrow \rho_1$  which exchanges (\ref{qO1}) and (\ref{qO2}), the second relation generalizing (\ref{qO2}) is obtained.  The coefficients coincide with the ones given in Example \ref{exr2}.
For the special undeformed case $\rho_0=\rho_1=0$, note that both relations reduce to the simplest examples of higher order $q-$Serre relations.
 
\subsubsection{Proof of the relations for $r=3$} Following a similar analysis, the next  example of higher order $q-$Dolan-Grady relations can be also derived. To this end, one is looking for a linear relation between monomials of the type $A^n{A^*}^3A^m$, $n+m=7,5,3,1$. Assume the $q-$Dolan-Grady relation (\ref{qO1}) and its simplest consequence (\ref{qDG11}). Write the four monomials:
\begin{eqnarray}
&&A^7 {A^*}^3 = (A^7{A^*}^2)A^*\ , \quad  A^6{A^*}^3 A = (A^6{A^*}^2)A^*A \ ,\nonumber\\  &&A^5{A^*}^3A^2 = (A^5 {A^*}^2)A^*A^2\ ,\quad A^5{A^*}^3 = (A^5 {A^*}^2)A^*     \ .\nonumber
\end{eqnarray}
Using (\ref{qDG11}) and then (\ref{mon}), they can be expressed solely in terms of   monomials of the type:
\begin{eqnarray}
&&\qquad A^n{A^*}^3A^m \quad \qquad \mbox{with} \quad n\leq 4\ , \ n+m=7,5,3,1\ , \label{mongen3}\\
&&\qquad A^p {A^*}^2 A^s A^* A^t \quad  \mbox{with} \quad  p\leq 4\ ,\ 1\le s\le 2\ ,\ p+s+t=7,5,3,1 \ .\nonumber
\end{eqnarray}
Then, introduce the combination 
\begin{eqnarray}
f_3(A,A^*)= c_0^{[3,0]} A^7{A^*}^3 - c^{[3,0]}_1 A^6{A^*}^3A +  c^{[3,0]}_2 A^5{A^*}^3A^2 -  \rho_0 c^{[3,1]}_0 A^5{A^*}^3\   \label{comb3}
\end{eqnarray}
with unknown coefficients $c^{[3,0]}_j$, $(j=1,2)$, $c^{[3,1]}_0$  and normalization $c_0^{[3,0]}=1$. By straightforward calculations using the ordered expressions of $A^7{A^*}^3$, $A^6{A^*}^3A$, $A^5{A^*}^3A^2$ and $A^5{A^*}^3$, $f_3(A,A^*)$ is reduced to a combination of monomials of the type (\ref{mongen3}). Note that the coefficients of the monomials $A^n{A^*}^3A^m$ for $n+m=5,3,1$ are of order $\rho_0,\rho_0^2, \rho_0^3$, respectively. Identifying the conditions under which the coefficient of any monomial of the form
\begin{eqnarray}
A^p {A^*}^2 A^s A^* A^t \quad \mbox{with} \quad p\leq 4\ ,\ 1\le s\le 2\ , \quad  \ p+s+t=7,5,3,1\ , \nonumber 
\end{eqnarray}
is vanishing, one obtains a system of equations for the coefficients, which solution is unique. Simplifying (\ref{comb3}) according to the explicit solutions $c^{[3,0]}_j$, $j=1,2$ and $c^{[3,1]}_0$, one ends up with the next example of higher order $q-$Dolan-Grady relations. Using the automorphism $A\leftrightarrow A^*$ and $\rho_0\leftrightarrow \rho_1$,
the second relation follows. One finds:  
\begin{eqnarray}
&&\sum_{p=0}^{3}\, \rho_0^{p} \,\sum_{j=0}^{7-2p} (-1)^{j+p}  \,c_{j}^{[3,p]}\,  A^{7-2p-j} {A^*}^3 A^{j}=0 \ , \ \label{qDG13}\\
&&\sum_{p=0}^{3}\, \rho_1^{p}\, \sum_{j=0}^{7-2p} (-1)^{j+p} \, c_{j}^{[3,p]} \, {A^*}^{7-2p-j} { A}^3 {A^*}^{j}=0\ \nonumber
\end{eqnarray}
where $c_{j}^{[3,p]}=c_{7-2p-j}^{[3,p]}$\ , \ $c_{j}^{[3,0]} =  \left[ \begin{array}{c} 7 \\ j 
\end{array}\right]_q$ and     
\begin{eqnarray}
&& c^{[3,1]}_0= \left( {{q^8} + 3{q^4} + 6 + 3{q^{ - 4}} + {q^{ - 8}}} \right),  c^{[3,1]}_1={\left[ 7 \right]_q}\left( {{q^6} + {q^4} + {q^2} + 4 + {q^{ - 2}} + {q^{ - 4}} + {q^{ - 6}}} \right),\nonumber\\
&&  c^{[3,1]}_2=  \left[ 7 \right]_q (q^2-1+q^{-2})(q^2+q^{-2})(q^4+2q^2+4 + 2q^{-2}+q^{-4})\ ,\nonumber\\
&& c^{[3,2]}_0= ( q^{6} + 2q^2 + 3q^{-2}+ q^{-6}) ( q^{6} + 3q^2 + 2q^{-2}+ q^{-6}) \ ,\nonumber\\
&&  c^{[3,2]}_1=\left[ 7 \right]_q \left( {q^{8}} + {q^{6}} + 4{q^{4}} + {q^2} + 7 + {q^{-2}} + 4{q^{-4}} + q^{-6} + {q^{ - 8}} \right)\ ,\nonumber\\
&& c^{[3,3]}_0=  [2]^2_{q^2}[3]^2_{q^2}\ .\nonumber
\end{eqnarray}
It is straightforward to compare the coefficients above with the ones obtained from the expansion of the polynomial generating function $p_3(x,y)$. Although the coefficients above look different, they coincide exactly with the ones reported in Example \ref{exr3}. 
\subsubsection{Relations for $r$ generic}
To Look for a linear relation between monomials of the type $A^n{A^*}^rA^m$, $n+m=2r+1,2r-1,...,1$, 
for $r\geq 1$, relations of the form 
\begin{eqnarray}
&&\sum_{p=0}^{r}\, \rho_0^{p} \,\sum_{j=0}^{2(r-p)+1} (-1)^{j+p}  \,c_{j}^{[r,p]}\,  A^{2(r-p)+1-j} {A^*}^r A^{j}=0 \ , \ \label{qDGr}\\
&&\sum_{p=0}^{r}\, \rho_1^{p}\, \sum_{j=0}^{2(r-p)+1} (-1)^{j+p} \, c_{j}^{[r,p]} \, {A^*}^{2(r-p)+1-j} { A}^r {A^*}^{j}=0\ \label{qDGr2}
\end{eqnarray}
are expected, where the elements $A,A^*$ satisfy the $q-$Dolan-Grady relations (\ref{qO1}) and (\ref{qO2}). Our aim is now to study these relations in details and obtain recursive formulae for the coefficients $c_{j}^{[r,p]}$. \vspace{1mm} 

In order to study the higher order $q-$Dolan-Grady relations (\ref{qDGr}) for generic values of $r$, we proceed by induction. First, assume  the basic relation (\ref{qO1}) holds and implies all relations (\ref{qDGr}) up to $r$ which explicit coefficients $c_{j}^{[r,p]}$ in terms of $q$ are assumed to be known. It is the case for $r=2,3$ as shown above. Our aim is to construct the higher order relation associated with $r+1$ and express the coefficients $c_{j}^{[r+1,p]}$, ($j=0,1,...2r+3-2p$, $p=0,...,r+1$) in terms of $c_{j'}^{[r,p']}$ ($j'=0,1,...2r+1-2p'$, $p'=0,...,r$) . Following the steps described for $r=2,3$, from the relation (\ref{qDGr}) we first deduce:

\begin{eqnarray}
A^{2r+1}{A^*}^r &=&-\sum\limits_{j=1}^{2r+1}{(-1)^jc^{[r,0]}_jA^{2r+1-j}{A^*}^rA^j}\nonumber\\ &&-\sum\limits_{p=1}^r{\rho_0^p\sum\limits_{j=0}^{2(r-p)+1}{(-1)^{j+p}c^{[r,p]}_jA^{2(r-p)+1-j}{A^*}^rA^j}} \label{eqq1},\\
A^{2r+2}{A^*}^r &=&-\sum\limits_{j=2}^{2r+2}{(-1)^jM^{(r,0)}_jA^{2r+2-j}{A^*}^rA^j}\nonumber\\&& -\sum\limits_{p=1}^r{\rho_0^p\sum\limits_{j=0}^{2(r-p)+2}{(-1)^{j+p}M^{(r,p)}_jA^{2(r-p)+2-j}{A^*}^rA^j}}\label{eq2},\\ 
A^{2r+3}{A^*}^r &=&-\sum\limits_{j=3}^{2r+3}{(-1)^jN^{(r,0)}_jA^{2r+3-j}{A^*}^rA^j}\nonumber\\&&-\sum\limits_{p=1}^r{\rho_0^p\sum\limits_{j=0}^{2(r-p)+3}{(-1)^{j+p}N^{(r,p)}_jA^{2(r-p)+3-j}{A^*}^rA^j}} \label{eq3}
\end{eqnarray}
where the coefficients $M^{(r,p)}_j$, $N^{(r,p)}_j$ are determined recursively in terms of $c_{j}^{[r,p]}$ (see Appendix A). Now, write the four monomials:
\begin{eqnarray}
&& A^{2r+3} {A^*}^{r+1} = (A^{2r+3}{A^*}^r)A^*\ , \quad A^{2r+2} {A^*}^{r+1}A = (A^{2r+2}{A^*}^r)A^*A \ ,\nonumber \\ \quad &&  A^{2r+1} {A^*}^{r+1}A^2 = (A^{2r+1}{A^*}^{r})A^*A^2  \ ,\quad A^{2r+1}{A^*}^{r+1} = (A^{2r+1}{A^*}^r)A^* \ . \nonumber
\end{eqnarray}
Using (\ref{eqq1})-(\ref{eq3}), they can be expressed solely in terms of:
\begin{eqnarray}
&& A^n{A^*}^{r+1}A^m \ \mbox{with} \quad  n\leq 2r \ ,\ n+m=2r+3, 2r+1,...,1\ ,~~~~~~~\label{mongenr}\\
&& A^p {A^*}^r A^s A^* A^t  \ \mbox{with} \quad  p\leq 2r \ ,\ 1\le s\le 2\ , \ p+s+t=2r+3, 2r+1,...,1.~~~~~~\label{mongenUNr}~~~~~
\end{eqnarray}
It is however clear from (\ref{eq1})-(\ref{eq3}) that each monomial  $A^{2r+3} {A^*}^{r+1}$, $A^{2r+2} {A^*}^{r+1}A$, $A^{2r+1} {A^*}^{r+1}A^2$ and $A^{2r+1}{A^*}^{r+1}$   can be further reduced using (\ref{qO1}). For instance, 
\begin{eqnarray}
 (A^{2r+3}{A^*}^r)  A^*&=&-\sum\limits_{j=3}^{2r+3}(-1)^jN^{(r,0)}_jA^{2r+3-j}{A^*}^r\!\!\!\!\!\!\!\underbrace{A^j A^*}_{\mbox{reducible}}\nonumber\\&&- \sum\limits_{p=1}^r \rho_0^p\sum\limits_{j=0}^{2(r-p)+3}(-1)^{j+p}N^{(r,p)}_jA^{2(r-p)+3-j}{A^*}^r \!\!\!\!\!\!\!\!\!\!\!\!\!\!\!\!\!\underbrace{A^j A^*}_{\mbox{reducible if $j\geq 3$}} \!\!\!\!\!\!\!\!\!\!\!\!\!\!\!\!\ .\label{reducible}\nonumber
\end{eqnarray}
According to (\ref{qO1}), observe that the monomials $A^jA^*$ (for $j$ even or odd) can be written as:
\begin{eqnarray}
A^{2n+2}A^*&=&\sum\limits_{k=0}^n{\sum\limits_{i=0}^2{\rho_0^{n-k}\eta^{(2n+2)}_{k,i}A^{2-i}A^*A^{2k+i}}} \ ,\label{eqmu1} \\
A^{2n+3}A^*&=&\sum\limits_{k=1}^{n+1}{\sum\limits_{i=0}^2{\rho_0^{n+1-k}\eta^{(2n+3)}_{k,i}A^{2-i}A^*A^{2k-1+i}}}+\rho_0^{n+1}(AA^*-A^*A) \ ,\label{eqmu2}
\end{eqnarray}
where the coefficients $\eta^{(2n+2)}_{k,i},\eta^{(2n+3)}_{k,i}$ are determined recursively in terms of $q$ (see Appendix A). It follows:
\begin{eqnarray}
&& \!\!\!\! \! \! \! \! \! \!\!\!\!\! A^{2r+3}{A^*}^{r+1}=\sum\limits_{i=1}^{r+1}{N^{(r,0)}_{2i+1}A^{2(r-i)+2}{A^*}^r(\sum\limits_{k=1}^i{\sum\limits_{j=0}^2{\rho_0^{i-k}\eta^{(2i+1)}_{k,j}}A^{2-j}A^*A^{2k-1+j}}+\rho_0^i(AA^*-A^*A))} \label{m1}\\ & &- \sum\limits_{i=1}^r{N^{(r,0)}_{2i+2}A^{2(r-i)+1}{A^*}^r(\sum\limits_{k=0}^i{\sum\limits_{j=0}^2{\rho_0^{i-k}\eta^{(2i+2)}_{k,j}A^{2-j}A^*A^{2k+j}}})} \nonumber\\ & &- \sum\limits_{p=1}^r{{(-\rho_0)}^p(N^{(r,p)}_0A^{2(r-p)+3}{A^*}^{r+1}-N^{(r,p)}_1A^{2(r-p)+2}{A^*}^rAA^*+N^{(r,p)}_2A^{2(r-p)+1}{A^*}^rA^2A^*)} \nonumber\\ & &+ \sum\limits_{p=1}^r{{(-\rho_0)}^p\sum\limits_{i=1}^{r-p+1}{N^{(r,p)}_{2i+1}A^{2(r-p-i)+2}{A^*}^r(\sum\limits_{k=1}^i{\sum\limits_{j=0}^2{\rho^{i-k}_0\eta^{(2i+1)}_{k,j}A^{2-j}A^*A^{2k-1+j}}}+\rho_0^i(AA^*-A^*A))}} \nonumber\\& &- \sum\limits_{p=1}^{r-1}{{(-\rho_0)}^p\sum\limits_{i=1}^{r-p}{N^{(r,p)}_{2i+2}A^{2(r-p-i)+1}{A^*}^r(\sum\limits_{k=0}^i{\sum\limits_{j=0}^2{\rho_0^{i-k}\eta^{(2i+2)}_{k,j}A^{2-j}A^*A^{2k+j}}})}}
\nonumber\\& & -(-\rho)^{r+1}(N_0^{(r,r+1)}A{A^*}^{r+1}-N_1^{(r,r+1)}{A^*}^rAA^*)\ .\nonumber
\end{eqnarray}
The three other monomials $A^{2r+2} {A^*}^{r+1}A$, $A^{2r+1} {A^*}^{r+1}A^2$ and $A^{2r+1}{A^*}^{r+1}$  are also further reduced. For simplicity, corresponding expressions are reported in Appendix B. Now, introduce the combination 
\begin{eqnarray}
f_{r+1}(A,A^*)&=& c^{[r+1,0]}_0 A^{2r+3}{A^*}^{r+1} - c^{[r+1,0]}_1 A^{2r+2}{A^*}^{r+1}A \nonumber \\
& & +~  c^{[r+1,0]}_2 A^{2r+1}{A^*}^{r+1}A^2 -   \rho_0c^{[r+1,1]}_0 A^{2r+1}{A^*}^{r+1}\    \label{combr+1}
\end{eqnarray}
with unknown coefficients $c^{[r+1,0]}_j$, $(j=1,2)$, $c^{[r+1,1]}_0$  and normalization $c_0^{[r+1,0]}=1$. Combining all reduced expressions for $A^{2r+3}{A^*}^{r+1}$, $A^{2r+2}{A^*}^{r+1}A$, $A^{2r+1}{A^*}^{r+1}A^2$  and $A^{2r+1}{A^*}^{r+1}$ reported in Appendix B, one observes that  $f_{r+1}(A,A^*)$ generates monomials either of the type (\ref{mongenr}) or  (\ref{mongenUNr}). First, consider monomials of the type (\ref{mongenUNr}) which occur at the lowest order in $\rho_0$, namely $A^{2r}{A^*}^rA^2A^*A$ and $A^{2r}{A^*}^rAA^*A^2$. The conditions under which their coefficients are vanishing read:
\begin{eqnarray}
A^{2r}{A^*}^rA^2A^*A :&&  \qquad  N^{(r,0)}_3\eta^{(3)}_{1,0}+c^{[r+1,0]}_1M_2^{(r,0)}=0\ , \nonumber\\
A^{2r}{A^*}^rAA^*A^2 :&& \qquad N_3^{(r,0)}\eta_{1,1}^{(3)}+c^{[r+1,0]}_2c^{[r,0]}_1=0\ .\nonumber
\end{eqnarray}
Using the explicit expressions for $N^{(r,0)}_3$, $\eta^{(3)}_{1,0}$ and $\eta^{(3)}_{1,1}$ given in Appendices A,B, it is easy to solve these these equations. It yields to:
\begin{eqnarray}
c^{[r+1,0]}_1= \left[ \begin{array}{c} 2r+3\\ 1 \end{array}\right]_q  \ ,\qquad c^{[r+1,0]}_2= \left[ \begin{array}{c} 2r+3\\ 2 \end{array}\right]_q \ .\label{coeffr+10}
\end{eqnarray}
The conditions under which the coefficients of other unwanted monomials of the type (\ref{mongenUNr}) are vanishing have now to be considered. In particular, similarly to the case $r=2,3$ the coefficients $c^{[r+1,0]}_1$, $c^{[r+1,0]}_2$ arise in  the following set of conditions: 
\begin{eqnarray*}
&&A^{2r-1}{A^*}^rAA^*A^3: \qquad \quad N_4^{(r,0)}\eta^{(4)}_{1,1}+c^{[r+1,0]}_1M^{(r,0)}_3\eta^{(3)}_{1,1}=0, ~~~~~~\\
&&A^{2r-1}{A^*}^rA^2A^*A^2: \ \ \quad \quad N^{(r,0)}_4\eta^{(4)}_{1,0}+c^{[r+1,0]}_1M^{(r,0)}_3\eta^{(3)}_{1,0}+c^{[r+1,0]}_2c^{[r,0]}_2=0,~~~~~~\\
&&A^{2(r-i)}{A^*}^rAA^*A^{2i+2}: \ \ \  \ N^{(r,0)}_{2i+3}\eta^{(2i+3)}_{i+1,1}+c^{[r+1,0]}_1M^{(r,0)}_{2i+2}\eta^{(2i+2)}_{i,1}+c^{[r+1,0]}_2c^{[r,0]}_{2i+1}\eta^{(2i+1)}_{i,1}=0,~ i=\overline{1,r},~~~~~~ \\
&&A^{2(r-i)}{A^*}^rA^2A^*A^{2i+1}: \ \ \ N^{(r,0)}_{2i+3}\eta^{(2i+3)}_{i+1,0}+c^{[r+1,0]}_1M^{(r,0)}_{2i+2}\eta^{(2i+2)}_{i,0}+c^{[r+1,0]}_2c^{[r,0]}_{2i+1}\eta^{(2i+1)}_{i,0}=0,~ i=\overline{1,r},~~~ \\
&&A^{2(r-i)+1}{A^*}^rAA^*A^{2i+1}: \ N^{(r,0)}_{2i+2}\eta^{(2i+2)}_{i,1}+c^{[r+1,0]}_1M^{(r,0)}_{2i+1}\eta^{(2i+1)}_{i,1}+c^{[r+1,0]}_2c^{[r,0]}_{2i}\eta^{(2i)}_{i-1,1}=0,~ i=\overline{2,r},~~~~~~\\
&&A^{2(r-i)+1}{A^*}^rA^2A^*A^{2i}: \quad  N^{(r,0)}_{2i+2}\eta^{(2i+2)}_{i,0}+c^{[r+1,0]}_1M^{(r,0)}_{2i+1}\eta^{(2i+1)}_{i,0}+c^{[r+1,0]}_2c^{[r,0]}_{2i}\eta^{(2i)}_{i-1,0}=0\,~ i=\overline{2,r}.~~~~~~
\end{eqnarray*}
Using the recursion relations in Appendices A,B, we have checked that all above equations are satisfied, as expected.\vspace{1mm}

More generally, one determines all other coefficients $c^{[r+1,0]}_j$ for $j\geq 3$. One finds:
\begin{eqnarray}
c^{[r+1,0]}_3&=& N^{(r,0)}_3\eta^{(3)}_{1,2}={\left[ {\begin{array}{*{20}{c}}
   {2r + 3}  \\
   3  \\
\end{array}} \right]_q},\nonumber\\
c^{[r+1,0]}_4&=& N^{(r,0)}_4\eta^{(4)}_{1,2}+c^{[r+1,0]}_1M^{(r,0)}_3\eta^{(3)}_{1,2}={\left[ {\begin{array}{*{20}{c}}
    {2r + 3}  \\
    4  \\
 \end{array}} \right]_q},\nonumber\\
c^{[r+1,0]}_{2k+1}&=& N^{(r,0)}_{2k+1}\eta^{(2k+1)}_{k,2}+c^{[r+1,0]}_1M^{r,0}_{2k}\eta^{(2k)}_{k-1,2}+c^{[r+1,0]}_2c^{[r,0]}_{2k-1}\eta^{(2k-1)}_{k-1,2},\hspace{.3cm} k=\overline{2,r+1},\nonumber \\
c^{[r+1,0]}_{2k+2}&=& N^{(r,0)}_{2k+2}\eta^{(2k+2)}_{k,2}+c^{[r+1,0]}_1M^{(r,0)}_{2k+1}\eta^{(2k+1)}_{k,2}+c^{[r+1,0]}_2c^{[r,0]}_{2k}\eta^{(2k)}_{k-1,2},\hspace{.3cm} k=\overline{2,r}.\nonumber
\end{eqnarray}
For any $j\geq 0$, one finds that the coefficient  $c^{[r+1,0]}_{j}$ can be simply expressed as a $q-$binomial:
\begin{eqnarray}
c^{[r+1,0]}_j= \left[ \begin{array}{c} 2r+3\\ j \end{array}\right]_q \ .\label{coefbinr}
\end{eqnarray}

All coefficients $c^{[r+1,0]}_j$ being obtained, at the lowest order in $\rho_0$ one has  to check that the coefficients of any unwanted term of the type (\ref{mongenUNr}) with $p+s+t= 2r+1,2r-1,...,1$  are systematically vanishing. Using the recursion relations given in Appendices A,B, this has been checked in details.  Then, following the analysis for $r=3$ it remains to determine the coefficient $c^{[r+1,1]}_0$ which contributes at the order $\rho_0$. The condition such that the coefficient of  the monomial $A^{2r} {A^*}^r A A^*$ is vanishing yields to:
\begin{eqnarray}
c^{[r+1,1]}_0={c^{[r,0]}_1}^2-2c^{[r,0]}_2+\frac{c^{[r,0]}_3}{c^{[r,0]}_1}-\frac{c^{[r,1]}_1}{c^{[r,0]}_1}+2c^{[r,1]}_0\ .\label{cr+11}
\end{eqnarray}
Using the explicit expression for $c^{[r+1,0]}_j, j=0,1,2$ and $c^{[r+1,1]}_0$, we have checked in details that $f_{r+1}(A,A^*)$ reduces to a combination of monomials of the type (\ref{mongenr}) only.  The reduced expression $f_{r+1}(A,A^*)$  determines uniquely all the remaining coefficients $c^{[r+1,p]}_j$ for $p\geq 1$. For $r$ generic, in addition to (\ref{coefbinr}) and (\ref{cr+11}) one finally obtains:
\begin{eqnarray}
c^{[r+1,r+1]}_0&=& c^{[r+1,1]}_0c^{[r,r]}_0+N^{(r,r+1)}_0,\label{coeff} \\
c^{[r+1,p]}_0&=& N_0^{(r,p)}+c^{[r+1,1]}_0c^{[r,p-1]}_0,\hspace{.3cm} p=\overline{2,r},\nonumber
\end{eqnarray}
\vspace{0mm}
\begin{eqnarray}
c_1^{[r+1,1]}&=& N_3^{(r,0)}+c^{[r+1,0]}_1M^{(r,1)}_0,\nonumber \\
c_1^{[r+1,2]}&=& -N_5^{(r,0)}+N_3^{(r,1)}+c^{[r+1,1]}_0c^{[r,0]}_3+c^{[r+1,0]}_1M^{(r,2)}_0,\nonumber \\
c^{[r+1,r+1]}_1&=&\sum_{p=0}^r{(-1)^{r+p}N^{(r,p)}_{2(r-p)+3}}+ c^{[r+1,1]}_0\sum_{p=0}^{r-1}{(-1)^{r+p+1}c^{[r,p]}_{2(r-p)+1}},\nonumber \\
c^{[r+1,p]}_1&=&\sum_{j=0}^{p-1}{(-1)^{j+p+1}N^{(r,j)}_{2(p-j)+1}}+c^{[r+1,0]}_1M^{(r,p)}_0 \nonumber \\& & + c_0^{[r+1,1]}\sum_{j=0}^{p-2}{(-1)^{j+p}c^{[r,j]}_{2(p-j)-1}},\hspace{.3cm} p=\overline{3,r},\nonumber 
\end{eqnarray}
\vspace{0mm}
\begin{eqnarray}
c^{[r+1,1]}_2&=&-N^{(r,0)}_4\eta^{(4)}_{0,2}+c^{[r+1,0]}_1M^{(r,0)}_3+c_2^{[r+1,0]}c_0^{[r,1]},\nonumber \\
c^{[r+1,2]}_2&=&N_6^{(r,0)}\eta^{(6)}_{0,2}-N_4^{(r,1)}\eta^{(4)}_{0,2}+c_1^{[r+1,0]}(M_5^{(r,0)}-M_3^{(r,1)})-c^{[r+1,0]}_2c^{[r,2]}_0+c_0^{[r+1,1]}c^{[r,0]}_4\eta^{(4)}_{0,2},\nonumber \\
c^{[r+1,p]}_2&=&\sum_{j=0}^{p-1}{(-1)^{j+p}N^{(r,j)}_{2(p-j)+2}\eta_{0,2}^{(2(p-j)+2)}}+c^{[r+1,0]}_1\sum_{j=0}^{p-1}{(-1)^{j+p+1}M^{(r,j)}_{2(p-j)+1}}\nonumber \\& &+c^{[r+1,0]}_2c^{[r,p]}_0+c_0^{[r+1,1]}\sum_{j=0}^{p-2}{(-1)^{j+p+1}c^{[r,j]}_{2(p-j)}\eta^{(2(p-j))}_{0,2}},\hspace{.3cm} p=\overline{3,r},\nonumber 
\end{eqnarray}
\vspace{0mm}
\begin{eqnarray}
c^{[r+1,1]}_3&=&-(N_5^{(r,0)}\eta^{(5)}_{1,2}-N_3^{(r,1)}\eta^{(3)}_{1,2})-c_1^{[r+1,0]}M_4^{(r,0)}\eta^{(4)}_{0,2}+c_0^{[r+1,1]}c_3^{[r,0]}\eta^{(3)}_{1,2}+c_2^{[r+1,0]}c_3^{[r,0]},\nonumber \\
c_3^{[r+1,p]}&=&\sum_{j=0}^p{(-1)^{j+p}N^{(r,j)}_{2(p-j)+3}\eta^{(2(p-j)+3)}_{1,2}} +c_1^{[r+1,0]}\sum_{j=0}^{p-1}{(-1)^{j+p}M^{(r,j)}_{2(p-j)+2}\eta^{(2(p-j)+2)}_{0,2}} 
\nonumber \\& &+ c^{[r+1,0]}_2\sum_{j=0}^{p-1}{(-1)^{j+p+1}c^{[r,j]}_{2(p-j)+1}}+c_0^{[r+1,1]}\sum_{j=0}^{p-1}{(-1)^{j+p+1}c^{[r,j]}_{2(p-j)+1}\eta^{(2(p-j)+1)}_{1,2}},\hspace{.3cm} j=\overline{2,r},\nonumber
\end{eqnarray}
\vspace{-0mm}
\begin{eqnarray}
c^{[r+1,1]}_4&=&-(N_6^{(r,0)}\eta^{(6)}_{1,2}-N_4^{(r,1)}\eta^{(4)}_{1,2})-c_1^{[r+1,0]}(M_5^{(r,0)}\eta_{1,2}^{(5)}-M_3^{(r,1)}\eta_{1,2}^{(3)})\nonumber\\ &&-c_2^{[r+1,0]}c_4^{[r,0]}\eta^{(4)}_{0,2}+c_0^{[r+1,1]}c_4^{[r,0]}\eta^{(4)}_{1,2},\nonumber \\
c^{[r+1,p]}_4&=&\sum_{j=0}^{p}{(-1)^{j+p}N^{(r,j)}_{2(p-j)+4}\eta^{(2(p-j)+4)}_{1,2}}+c^{[r+1,0]}_1\sum_{j=0}^p{(-1)^{j+p}M^{(r,j)}_{2(p-j)+3}\eta^{(2(p-j)+3)}_{1,2}}\nonumber \\& &+c^{[r+1,0]}_2\sum_{j=0}^{p-1}{(-1)^{j+p}c^{[r,j]}_{2(p-j)+2}\eta^{(2(p-j)+2)}_{0,2}}\nonumber\\ &&+c_0^{[r+1,1]}\sum_{j=0}^{p-1}{(-1)^{j+p+1}c^{[r,j]}_{2(p-j)+2}\eta^{(2(p-j)+2)}_{1,2}},\hspace{.3cm} p=\overline{2,r-1},\nonumber
\end{eqnarray}
\vspace{0mm}
\begin{eqnarray}
c^{[r+1,j-k]}_{2k+3}&=&\sum_{p=0}^{j-k}{(-1)^{p+j+k}N^{(r,p)}_{2(j-p)+4}\eta^{(2(j-p)+4)}_{k+1,2}}+c_1^{[r+1,0]}\sum_{p=0}^{j-k}{(-1)^{p+j+k}M^{(r,p)}_{2(j-p)+2}\eta^{(2(j-p)+2)}_{k,2}}\nonumber \\& &+c_2^{[r+1,0]}\sum_{p=0}^{j-k}{(-1)^{p+j+k}c^{[r,p]}_{2(j-p)+1}\eta^{(2(j-p)+1)}_{k,2}}\nonumber \\& &+c_0^{[r+1,1]}\sum_{p=0}^{j-k-1}{(-1)^{p+j+k+1}c^{[r,p]}_{(2(j-p)+1)}\eta^{(2(j-p)+1)}_{k+1,2}},\hspace{.3cm} j=\overline{3,r}, \qquad k=\overline{1,j-2},\nonumber
\end{eqnarray}
\begin{eqnarray}
c^{[r+1,1]}_{2j+1}&=&-(N^{(r,0)}_{2j+3}\eta^{(2j+3)}_{j,2}-N^{(r,1)}_{2j+1}\eta^{(2j+1)}_{j,2})-c_1^{[r+1,0]}(\eta^{(2j+2)}_{j-1,2}M^{(r,0)}_{2j+2}-M^{(r,1)}_{2j}\eta^{(2j)}_{j-1,2})\nonumber \\& &-c^{[r+1,0]}_2(c^{[r,0]}_{2j+1}\eta^{(2j+1)}_{j-1,2}-c^{[r,1]}_{2j-1}\eta^{(2j-1)}_{j-1,2})+c_0^{[r+1,1]}c^{[r,0]}_{2j+1}\eta^{(2j+1)}_{j,2},\hspace{.3cm} j=\overline{2,r},\nonumber
\end{eqnarray}
\begin{eqnarray}
c^{[r+1,j-k]}_{2k+2}&=&\sum_{p=0}^{j-k}{(-1)^{p+j+k}N^{(r,p)}_{2(j-p)+2}\eta^{(2(j-p)+2)}_{k,2}}+c_1^{[r+1,0]}\sum_{p=0}^{j-k}{(-1)^{p+j+k}M^{(r,p)}_{2(j-p)+1}\eta^{(2(j-p)+1)}_{k,2}}\nonumber \\& & +c_2^{[r+1,0]}\sum_{p=0}^{j-k}{(-1)^{p+j+k}c^{[r,p]}_{2(j-p)}\eta^{(2(j-p))}_{k-1,2}}\nonumber \\& &+c_0^{[r+1,1]}\sum_{p=0}^{j-k-1}{(-1)^{p+j+k+1}c_{(2(j-p))}^{[r,p]}\eta^{(2(j-p))}_{k,2}}, \hspace{.3cm} j=\overline{4,r}, \qquad k=\overline{2,j-2},\nonumber
\end{eqnarray}
\begin{eqnarray}
c^{[r+1,1]}_{2j}&=&c^{[r+1,1]}_0c^{[r,0]}_{2j}\eta^{(2j)}_{j-1,2}-N^{(r,0)}_{2j+2}\eta^{(2j+2)}_{j-1,2}+N^{(r,1)}_{2j}\eta^{(2j)}_{j-1,2}\nonumber \\ & &-c^{[r+1,0]}_1(M^{(r,0)}_{2j+1}\eta^{(2j+1)}_{j-1,2}-M^{(r,1)}_{2j-1}\eta^{(2j-1)}_{j-1,2})\nonumber\\&&-c^{[r+1,0]}_2(c^{[r,0]}_{2j}\eta^{(2j)}_{j-2,2}-c^{[r,1]}_{2j-2}\eta^{(2j-2)}_{j-2,2}),\hspace{.3cm} j=\overline{3,r}.\nonumber 
\end{eqnarray}
According to above results and using the automorphism $A\leftrightarrow A^*$ and $\rho_0\leftrightarrow \rho_1$, we conclude that if $A,A^*$ satisfy the defining relations (\ref{qO1}), (\ref{qO2}), then the higher order $q-$Dolan-Grady relations (\ref{qDGr}),  (\ref{qDGr2}) are such that the coefficients $c^{[r,p]}_j$ are determined recursively by (\ref{coefbinr}), (\ref{cr+11}) and (\ref{coeff}). For $p\geq1$, they can be computed for practical purpose, for $r=2,3$ the coefficients $c_j^{[r,p]}$ are proportional to $[2r+1]_q$ iff $j\neq 0$ or $2r+1$; for a large number of  values $r\geq 4$, this property holds too. As a consequence, the relations (\ref{qDGfinr}) drastically simplify for $q^{2r+1}=\pm1$. This case is however not considered here.  In particular, one observes that $c_{j}^{[r,p]}=c_{2(r-p)+1-j}^{[r,p]}$\ . For $r=4,5,...\leq 10$, using a computer program we have checked in details that $r-th$ higher order relations of the form (\ref{qDGfinr}) hold, and that the coefficients satisfy above recursive formula.

\subsection{Algorithm}
A Maple software program has been constructed to calculate
the expressions for the coefficients $c^{[r+1,p]}_j$  by induction on $r$. For $r=2,3,...,10$, the expressions have been compared with the exact expressions for the coefficients given in (\ref{cfinr}). Both expressions agree, thus giving a strong support to the conjecture.  The  Maple program is reported in Appendix D. Here, we sketch the algorithm.
\begin{itemize}
\item Input: $r, ~ \eta^{(3)}_{1,j},j = \overline{0,2};~~ \eta^{(4)}_{k,j}, k=\overline{0,1}, j=\overline{0,2}$\\
$c^{[1,p]}_j, p=\overline{0,1}, j=\overline{0,3-2p};~~ c^{[2,p]}_j, p=\overline{0,2}, j=\overline{0,5-2p}$

\item Output: $c^{[r+1,p]}_j, ~ p =0, \dots, r+1;~~ j = 0,\dots,2r+3-2p $
\item Algorithm:
\begin{description}
\item[Step 1.] Compute the coefficients $\eta^{(m)}_{k,j},m=\overline{5,2r+3},j=\overline{0,2},\\ k= \left\{\begin{array}{c}
\overline{0,[\frac{m-1}{2}]} ~~\text{If}~ m~ \text{is even}\\
\overline{1,[\frac{m-1}{2}]} ~~\text{If }~ m~ \text{is odd}
\end{array} \right.$ 
\item[Step 2.] Compute $M^{(r,0)}_j, j=\overline{2,2r+2};~~ M^{(r,p)}_j, p=\overline{1,r}, j=\overline{0,2(r-p)+2}$ 
\item[Step 3.] Compute $N^{(r,0)}_j, j=\overline{3,2r+3};~~N^{(r,p)}_j, p=\overline{1,r}, j=\overline{0,2(r-p)+3};$
\item[Step 4.] Compute $c^{[r+1,0]}_0, c^{[r+1,0]}_1, c^{[r+1,0]}_2, c^{[r+1,1]}_0$
\item[Step 5.] Compute $f_{r+1}(A,A^*)$ in the equation (\ref{combr+1}).
\end{description}
\end{itemize}

\section{Higher order relations for the generalized $q-$Onsager algebra}

\subsection{Introduction}
Introduced in \cite{BB1}, the generalized $q-$Onsager algebra ${\cal O}_q({\widehat{g}})$ associated with the affine Lie algebra $\widehat{g}$ is a higher rank generalization of the so-called $q-$Onsager algebra \cite{Ter03,B1}. The usual $q-$Onsager algebra corresponds to the choice $\widehat{g}=\widehat{sl_2}$. The defining relations are determined by the entries of the Cartan matrix
of the algebra considered.
 For $\widehat{g}=a_n^{(1)}$, it can be understood as a $q-$deformation of the $sl_{n+1}$-Onsager algebra introduced by Uglov and Ivanov \cite{Uglov}. By analogy with the $\widehat{sl_2}$ case \cite{B1,IT}, an algebra homomorphism from ${\cal O}_q({\widehat{g}})$ to a certain coideal subalgebra of the Drinfeld-Jimbo \cite{Dr,Jim} quantum universal enveloping algebra ${\cal U}_q(\widehat{g})$  is known \cite{BB1}. From a general point of view, generalized $q-$Onsager algebras appear in the theory of quantum affine symmetric pairs \cite{Kolb}. Note that realizations in terms of finite dimensional quantum algebras may be also considered: for instance, coideal subalgebras of ${\cal U}_q(g)$ studied by Letzter \cite{Lez} or the non-standard ${\cal U}'_q(so_n)$ introduced by Klimyk, Gavrilik and Iorgov \cite{GI,Klim}.

Besides the definition of the generalized $q-$Onsager algebra in terms of generators and relations \cite[Definition 2.1]{BB1}, most of its properties remain to be studied.
\begin{defn}
Let $\{a_{ij}\}$ be the extended Cartan matrix of the affine Lie algebra $\hat{g}$. Fix coprime integers $d_i$ such that $d_ia_{ij}$ is symmetric. The generalized q-Onsager algebra $O_q(\hat{g})$ is an associative algebra with unit 1, generators $A_i$ and scalars $\rho^k_{ij}, \gamma^{kl}_{ij} \in \mathbb{C}$ with $i, j \in \{0, 1, \dots, n\}, k \in \{0,1,\dots, [-\frac{a_{ij}}{2}]-1\}$ and $l \in \{0,1,\dots, -a_{ij}-1-2k\}$ ($k$ and $l$ are positive integers). The defining relations are:
\begin{equation}
{\sum\limits_{r = 0}^{1 - {a_{ij}}} {{{\left( { - 1} \right)}^r}\left[ {\begin{array}{*{20}{c}}
   {1 - {a_{ij}}}  \\
   r  \\
\end{array}} \right]} _{{q_i}}}A_i^{1 - {a_{ij}} - r}{A_j}A_i^r = \sum\limits_{k = 0}^{\left[ { - \frac{{{a_{ij}}}}{2}} \right] - 1} {\rho _{ij}^k} \sum\limits_{l = 0}^{ - 2k - {a_{ij}} - 1} {{{\left( { - 1} \right)}^l}\gamma _{ij}^{kl}A_i^{ - 2k - {a_{ij}} - 1 - l}{A_j}A_i^l,} 
\end{equation}
\end{defn}

Generalized $q-$Onsager algebras are extensions of the $q-$Onsager algebra to {\it higher rank} affine Lie algebras  \cite{BB1}. Inspired by the analysis of \cite{BV1}, analogues of Lusztig's higher order relations for $\mathcal O_q(\widehat{g})$ can be conjectured. First, recall some basic definitions.
\begin{defn}
Let the simply-laced affine Lie algebra $\hat{g}$, the generalized $q-$Onsager algebra $\mathcal{O}_q(\widehat{g})$ is an associative algebra with unit $1$, generators ${A}_i$ and scalars $\rho_i$. The defining relations are:
\begin{eqnarray}
\label{generqOns}
\quad \sum_{k=0}^{2} (-1)^k \left[ \begin{array}{c} 2 \\  k \end{array}\right]_q   A_i^{2-k} {A_j} A_{i}^{k} - \rho_{i} A_j &=& 0\ \quad \mbox{if} \quad i,j \quad \mbox{are linked}\ ,\label{rel1g}\\
\big[A_i,A_j\big]&=&0 \quad \mbox{otherwize}\ .\nonumber
\end{eqnarray}
\end{defn}
\begin{rem} For $\rho_i=0$ the relations (\ref{rel1g}) reduce to the $q-$Serre relations of $U_{q}(\widehat{g})$. 
\end{rem}
\begin{rem}
For $q=1$, the relations (\ref{rel1g}) coincide with the defining relations of the so-called $sl_{n+1}-$Onsager's algebra for $n> 1$ introduced by Uglov and Ivanov \cite{Uglov}. 
\end{rem}

\subsection{Conjecture about the higher order relations of the generalized $q$-Onsager algebra}
By analogy with the $\widehat{sl_2}$ case discussed in details in the previous parts, we expect the following form for the higher order relations

\begin{conj}
Let $\{A_i\}$ be the fundamental generators of the generalized $q-$Onsager algebra (\ref{generqOns}), then $\{A_i\}$ satisfy the higher order relations as follows:
\begin{eqnarray}
&&\sum_{p=0}^{\{\frac{r+1}{2}\}}\sum_{k=0}^{r-2p+1} (-1)^{k+p} \, \rho_i^{p} \, \,c_{k}^{[r,p]}\,  A_i^{r-2p+1-k} A_j^r A_i^k=0 \  \ \quad \mbox{if} \quad i,j \quad \mbox{are linked}\ \label{qDGADE}
\end{eqnarray}
where the coefficients are given by:
\small
\begin{eqnarray}
\label{coefADE}
c_{k}^{[r,p]} =  \sum_{l=0}^{ \{k/\alpha\}}\frac{(  \{\frac{r+1}{2}\} - k +\alpha l -p)!}{    (\{\frac{\alpha l}{2}\})!( \{\frac{r+1}{2}\} - k + \alpha l - p -\{\frac{\alpha l}{2}\})!}
    \sum_{{{\cal P}_l}}  [s_1]^2_{q}...[s_p]^2_{q}  \frac{[2s_{p+1}]_{q}...[2s_{p+k-\alpha l}]_{q}}{[s_{p+1}]_{q}...[s_{p+k-\alpha l}]_{q}}  \label{coefffin}
\end{eqnarray}
\normalsize
\begin{eqnarray}
\mbox{with}\quad \ \left\{\begin{array}{cc}
\!\!\! \!\!\! \!\!\! \!\!\!  \!\!\! \!\!\! \!\!\! \!\!\!   \!\!\! \!\!\! \!\!\! \!\!\! \quad \quad \quad \quad \quad \quad \quad k= \overline{0,\{\frac{r+1}{2}\}}\ , \quad s_i\in\{   r-2\{\frac{r-1}{2}\},\dots,r-2,r\}\ ,\\
{\cal P}_l: \begin{array}{cc} \ \ s_1<\dots<s_p\ ;\quad \ s_{p+1}<\dots<s_{p+k - \alpha l}\ ,\\
 \{s_{1},\dots,s_{p}\} \cap \{s_{p+1},\dots,s_{p+k-\alpha l}\}=\emptyset \end{array} ,\\
\alpha = \left\{ \begin{array}{cc} 1 \quad \text{if $r$ is even,} \\ \!\!\!\!2 \quad \text{if $r$ is odd}  \end{array}\right .\
\end{array}\right.\ .\nonumber
\end{eqnarray}
\end{conj}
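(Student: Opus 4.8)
The plan is to prove the relation by induction on $r$, adapting the recursive strategy developed above for the $\widehat{sl_2}$ case to the quadratic defining relation (\ref{generqOns}). First I would reduce the problem to a two-generator question: since $A_i$ and $A_j$ commute whenever $i,j$ are not linked, the relation (\ref{qDGADE}) is a statement purely about the subalgebra generated by one linked pair $A_i,A_j$, subject to the single ordering rule
\begin{equation}
A_i^2 A_j = [2]_q\, A_i A_j A_i - A_j A_i^2 + \rho_i A_j\ . \label{redrule}
\end{equation}
Unlike the $\widehat{sl_2}$ case, the relation (\ref{generqOns}) is quadratic rather than cubic in the outer generator, so it is not of tridiagonal-algebra type and the tridiagonal-pair machinery of the previous section does not apply; the induction is therefore the natural route. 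Equation (\ref{redrule}) is the ADE analogue of (\ref{mon}), but it lowers the outer degree one unit at a time, so the bookkeeping is governed by the parity of the number of $A_i$'s pushed through a block of $A_j$'s, which is precisely what produces the dichotomy $\alpha=1$ ($r$ even) versus $\alpha=2$ ($r$ odd) in (\ref{coefADE}). As a base, one checks directly that for $r=1$ the conjectured relation reproduces (\ref{generqOns}) and that the closed form (\ref{coefADE}) gives $c_0^{[1,0]}=1$, $c_1^{[1,0]}=[2]_q$, $c_2^{[1,0]}=1$, $c_0^{[1,1]}=1$.

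Next I would build the reduction machinery. Using (\ref{redrule}) repeatedly, I would derive ordered expansions of the monomials $A_i^{m} A_j$ as $\rho_i$-graded combinations of the normal-ordered monomials $A_i^{a} A_j A_i^{b}$ with $a\le 1$, in direct analogy with (\ref{eqmu1})--(\ref{eqmu2}); the coefficients of these expansions, the ADE counterparts of the $\eta$-coefficients, satisfy a recursion in $q$ that I would record explicitly and whose parity structure in $m$ feeds into $\alpha$. With these in hand, and assuming (\ref{qDGADE}) for all indices up to $r$, I would form the candidate combination $f_{r+1}(A_i,A_j)$ whose leading monomials are $A_i^{r+2}A_j^{r+1}$, $A_i^{r+1}A_j^{r+1}A_i$, $A_i^{r}A_j^{r+1}A_i^2$ and $A_i^{r}A_j^{r+1}$, obtained by multiplying the $r$-th relation (applied to $A_i^{\mathrm{big}}A_j^{r}$) by an extra $A_j$ on the right and then reducing with (\ref{redrule}). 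The crucial requirement is that every unwanted monomial, namely those of type $A_i^{p}A_j^{r}A_i^{s}A_j A_i^{t}$ in which a single $A_j$ is detached from the block $A_j^{r}$, has vanishing coefficient.

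Imposing these cancellation conditions order by order in $\rho_i$ determines all coefficients recursively. At the lowest order one finds $c_k^{[r+1,0]}=\left[\begin{array}{c} r+2\\ k\end{array}\right]_q$ from the $q$-Pascal identity, exactly as the $\widehat{sl_2}$ computation produced $q$-binomials, and the remaining $c_k^{[r+1,p]}$ with $p\ge 1$ are then fixed successively by the higher-$\rho_i$ conditions, yielding recursion formulae that express $c_k^{[r+1,p]}$ in terms of the $c^{[r,\cdot]}$ and the $\eta$-coefficients. A useful consistency check along the way is that setting $\rho_i=0$ collapses (\ref{generqOns}) to the $q$-Serre relations of ${\cal U}_q(\widehat{g})$ and (\ref{qDGADE}) to $\sum_k(-1)^k\left[\begin{array}{c} r+1\\ k\end{array}\right]_q A_i^{r+1-k}A_j^r A_i^k=0$, which are exactly Lusztig's higher order $q$-Serre relations and are already known to hold.

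The main obstacle is the final step: proving that the coefficients produced by this recursion coincide with the explicit closed form (\ref{coefADE}) for every $r$ and $p$, together with the symmetry $c_k^{[r,p]}=c_{r-2p+1-k}^{[r,p]}$. This amounts to a family of nontrivial $q$-binomial summation identities that must be verified simultaneously in the two parity classes selected by $\alpha$, and it is here that a direct closed-form argument is hardest. I would therefore first establish the identities outright for small $r$ (through $r=5$), where the sums are short enough to resolve by hand, and then confirm the recursion against (\ref{coefADE}) for $r\le 10$ with a computer-algebra implementation of the $\eta$- and $c$-recursions, mirroring the Maple verification used in the $\widehat{sl_2}$ case. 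Establishing the matching identity in full generality, and thereby upgrading the conjecture to a theorem for all $r$, would be the remaining analytic task.
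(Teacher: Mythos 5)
Your proposal follows essentially the same route as the paper: verify the relation directly for small $r$ (the paper does $r\le 5$), then run an induction in which the leading monomial $A_i^{r+2}A_j^{r+1}$ is reduced via the ordering rule (\ref{rel1g}), the unwanted monomials of type $A_i^{p}A_j^{r}A_i^{s}A_jA_i^{t}$ are forced to cancel, and the resulting coefficient recursions (split by the parity of $r$, which is the source of $\alpha=1$ versus $\alpha=2$) are checked against the closed form (\ref{coefADE}) by computer up to $r=10$ — and, like the paper, you leave the general matching identity open, which is exactly why the statement remains a conjecture rather than a theorem. The only notable cosmetic difference is that your seed combination carries four leading monomials modeled on the $\widehat{sl_2}$ case, whereas the quadratic defining relation of the ADE algebra lets the paper use the leaner seed $f^{ADE}_r(A_i,A_j)=A_i^{r+2}A_j^{r+1}-c_1^{[r+1,0]}A_i^{r+1}A_j^{r+1}A_i$ with a single unknown coefficient; your extra unknowns would simply be determined by the same cancellation conditions, so this changes nothing essential.
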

Although it is highly expected that the concept of tridiagonal pair for the $q-$Onsager algebra could be extended to the higher rank generalizations of the $q-$Onsager algebras, in the mathematical literature such object has not been introduced yet. For this reason, the conjecture for the  higher order relations (\ref{qDGADE}) can not be checked for every irreducible finite dimensional vector space on which the generators $A_i$ act. Still, below we provide several supporting evidences for the conjecture. First, the conjecture is proven for $r\leq 5$. Secondly, recursive relations for the coefficients are derived. Using a Maple software, it is found that the coefficients computed from the recursion relations coincide exactly with the ones conjectured. Other checks of the conjecture are considered, thus giving independent supporting evidences.

\subsection{Proof of the higher order relations for $r\le 5$}
For $r=1$, the relations (\ref{qDGADE}) are the defining relations of the generalized $q-$Onsager algebra $\mathcal{O}_{q}(\widehat{g})$.  Assume $A_i$ are the fundamental generators of $\mathcal{O}_{q}(\widehat{g})$. 
To derive the first example of higher order relations, we are looking for a linear relations between monomials of the type $A_i^n A_j^2 A_i^m$ with   $n+m=3,1$. Suppose it is of the form (\ref{qDGADE}) for $r=2$ with yet unknown coefficients $c_{k}^{[r,p]}$. We show $c_{k}^{[r,p]}$ are uniquely determined. First, according to the defining relations (\ref{rel1g}) the monomial $A_i^2A_j$ can be ordered as:

\begin{eqnarray}
A_i^2A_j = [2]_q A_iA_jA_i -  A_jA_i^2 + \rho_i A_j\  .\label{mon1}
\end{eqnarray}
Multiplying from the left and/or right by $A_i,A_j$ ($i\neq j$), the new monomials can be ordered as follows: each time a monomial of the form $A_i^n A_j^2A_i^m$ with $n\geq 2$ arises, it is reduced using (\ref{mon1}). For instance, one has:
\begin{eqnarray}
A_i^3 A_j = ([2]^2_q-1) A_iA_jA_i^2  - [2]_q A_jA^3_i + \rho_i ([2]_q A_jA_i + A_iA_j) \ .
\end{eqnarray}
Now, observe that the first two monomials in  (\ref{qDGADE}) for $r=2$ can be written as $A_i^3 A_j^2\equiv (A_i^3 A_j) A_j$ and $A_i^2 A_j^2A_i \equiv(A_i^2 A_j) A_j A_i$. Following the ordering prescription, each of these monomials can be reduced as a combination of monomials of the type:
\begin{eqnarray}
&&\quad A_i^n{A_j^2}A_i^m \ \qquad \ \mbox{with} \quad  \ n\leq 1\ , \ n+m=3,1\ ,\label{mongen}\\
 &&\quad A_i^p A_j A_i A_j A_i^t \quad \  \mbox{with} \quad  \ p\leq 1\ ,\ p+t=2,0 \ .\label{badterm}
\end{eqnarray}
Plugging the reduced expressions of $A_i^3 A_j^2$ and $A_i^2 A_j^2A_i$  in (\ref{qDGADE}) for $r=2$, one finds that all monomials of the form (\ref{badterm}) cancel provided a simple  system of equations for the coefficients $c_{k}^{[r,p]}$ is satisfied. The solution of this system is unique, given by:
\begin{eqnarray}
&& c^{[2,0]}_k= \left[ \begin{array}{c} 3 \\  k \end{array}\right]_q \ \quad \mbox{for}\quad k=0,1,2,3 \ , \quad \mbox{and}\quad c^{[2,1]}_0=c^{[2,1]}_1=q^2+q^{-2}+2 \nonumber\ .
\end{eqnarray}

For $r=3,4,5$, we proceed similarly: the monomials entering in the relations (\ref{qDGADE}) are ordered according to the prescription described above. Given $r$, the  reduced expression of the corresponding relation (\ref{qDGADE}) holds provided the coefficients $c_{k}^{[r,p]}$  satisfy a system of equation which solution is unique. In each case, one finds:
\begin{eqnarray}
 c^{[r,0]}_k= \left[ \begin{array}{c} r+1 \\  k \end{array}\right]_q \ \quad \mbox{for}\quad k=0,...,r+1 \ ,\  r=3,4,5\ ,\label{qbin}
\end{eqnarray}
whereas for $p\geq 1$, the other coefficients are such that $c_{k}^{[r,p]}=c_{r-2p+1-k}^{[r,p]}$, given by:
\begin{eqnarray}
\mbox{\bf Case $r=3$:}
&& c^{[3,1]}_0= {q^4} + 2q^2+4+2q^{-2} +q^{-4} \ , \qquad c^{[3,1]}_1= [4]_q(q^2 + q^{-2}+3) \ ,\nonumber\\
&& c^{[3,2]}_0= ({q^2} + q^{-2} +1)^2 \ ; \ \nonumber \\
\mbox{\bf Case $r=4$:}&& c_0^{[4,1]}=(q^4+3+q^{-4})[2]_q^2\ ,\ c_1^{[4,1]}=[5]_q[3]_q[2]_q^2,\nonumber \\
&&c_0^{[4,2]}=(q^2+q^{-2})^2[2]_q^4 \ ;\nonumber \\
\mbox{\bf Case $r=5$:}\nonumber \\
c_0^{[5,1]}&=&q^8+2q^6+4q^4+6q^2+9+6q^{-2}+4q^{-4}+2q^{-6}+q^{-8},\nonumber\\
c_1^{[5,1]}&=&[6]_q[3]_q^{-1}(q^8+4q^6+8q^4+14q^2+16+14q^{-2}+8q^{-4}+4q^{-6}+q^{-8}),\nonumber \\
c_2^{[5,1]}&=&[6]_q[2]_q^{-1}[5]_q(q^4+3q^2+6+3q^{-2}+q^{-4}),\nonumber \\
c_0^{[5,2]}&=& q^{12}+4q^{10}+11q^8+20q^6+31q^4+40q^2+45\nonumber\\&&+40q^{-2}+31q^{-4}+20q^{-6}+11q^{-8}+4q^{-10}+q^{-12},\nonumber\\
c_1^{[5,2]}&=&[6]_q[3]_q^{-1}(q^{10}+6q^8+17q^6+32q^4+47q^2+53\nonumber\\&&+47q^{-2}+32q^{-4}+17q^{-6}+6q^{-8}+q^{-10}),\nonumber \\
c_0^{[5,3]}&=&[3]_q^2[5]_q^2\ . \nonumber
\end{eqnarray}
\subsection{Recursion relations of the coefficients of the higher order relations in generic case $r$} 
Above examples suggest that higher order relations of the form (\ref{qDGADE}) exist for generic values of $r$. To derive the coefficients recursively, one first assumes that given $r$, the relation (\ref{qDGADE}) exists and that all coefficients   $c_{k}^{[r,p]}$ are already known in terms of $q$. The relation (\ref{qDGADE}) for $r\rightarrow r+1$ is then considered.  In this case, the combination
\begin{eqnarray}
\label{ADE}
f^{ADE}_r(A_i,A_j)= A_i^{r+2}A_j^{r+1}  - c_{1}^{[r+1,0]} A_i^{r+1}A_j^{r+1}A_i
\end{eqnarray}
is introduced. Following the steps described in details in \cite{BV1}, the monomials $A_i^{r+2}A_j^{r+1}$ and $A_i^{r+1}A_j^{r+1}A_i$  are reduced using (\ref{rel1g}) and  (\ref{qDGADE}). The  ordered expression of the first monomial follows:
\begin{eqnarray}
A_i^{r+2}A_j^{r+1}&=&\sum_{k=2}^{r+2}{(-1)^{k+1} M_k^{(r,0)}A_i^{r+2-k}A_j^r\!\!\!\!\!\!\!~ \underbrace{A_i^k A_j}_{~\mbox{reducible}}} \nonumber \\ &&+\sum_{p=1}^{\{\frac{r+1}{2}\}}\sum_{k=0}^{r+2-2p}{(-1)^{p+k+1}\rho_i^pM_k^{(r,p)}A_i^{r+2-2p-k}A_j^r\!\!\!\!\!\!\!\!\!\!\!\!\!\!\!\!\!~\underbrace{A_i^k A_j}_{\mbox{reducible if $k\geq 2$}} \!\!\!\!\!\!\!\!\!\!\!\!\!\!\!\!}\  ,\nonumber
\end{eqnarray}
where the recursive relations for the coefficients $M_k^{(r,p)}$ are reported in Appendix C. Obviously, an ordered expression for the second monomial immediately follows from (\ref{qDGADE}). Then, the whole combination can be further reduced $f^{ADE}_r(A_i,A_j)$. As an intermediate step, one uses  (\ref{rel1g}) to obtain:
\begin{eqnarray}
A_i^{2n+1}A_j&=&\sum_{p=0}^n{\rho_i^p(\eta_0^{(2n+1,p)}A_iA_jA_i^{2n-2p}+\eta_1^{(2n+1,p)}A_jA_i^{(2n-2p+1)})}\ , \nonumber \\
A_i^{2n+2}A_j&=&\sum_{p=0}^n{\rho_i^p(\eta_0^{(2n+2,p)}A_iA_jA_i^{2n+1-2p}+\eta_1^{(2n+2,p)}A_jA_i^{2n+2-2p})}+\rho_i^{n+1}A_j\ , \nonumber 
\end{eqnarray}
where the coefficients  $\eta_j^{(r,p)}$ are given in Appendix C. The ordered expression of $f^{ADE}_r(A_i,A_j)$ is then studied. A detailed analysis shows that all coefficients of monomials of the type  $A_i^{p} A_j^r A_iA_j A_i^t$ (with $p+t=r-1,...,0$ if $r$ is odd, and $p+t=r-1,...,1$ if $r$ is even) vanish provided the coefficients  $c_{k}^{[r,p]}$ satisfy a system of equations. According to the parity of $r$, one finds:\vspace{2mm}

{\bf \underline{Case $r$ odd:}} For $r=2t+1$ and $p=0$:
\begin{eqnarray}
c^{[2t+1,0]}_2&=& M^{(2t,0)}_2\eta^{(2,0)}_1,\nonumber \\
c^{[2t+1,0]}_{2h}&=& M^{(2t,0)}_{2h}\eta^{(2h,0)}_1+c^{[2t+1,0]}_1c^{[2t,0]}_{2h-1}\eta^{(2h-1,0)}_1,\qquad h=\overline{2,t+1},\nonumber \\
c^{[2t+1,0]}_{2h+1}&=& M^{(2t,0)}_{2h+1}\eta^{(2h+1,0)}_1+c^{[2t+1,0]}_1c^{[2t,0]}_{2h}\eta^{(2h,0)}_1, \qquad h=\overline{1,t}.\nonumber
\end{eqnarray}
Using the recursion relations given in Appendix C, it is possible to show that these coefficients can be simply written in terms of $q-$binomials: 
\begin{eqnarray}
 c_{k}^{[r,0]} =  \left[ \begin{array}{c} r+1 \\ k \end{array}\right]_q\ .    \label{formbinom}
\end{eqnarray}
Other coefficients $c^{[2t+1,p]}_0$ for $p \geq 1$ are determined by the following recursion relations: 
\begin{eqnarray}
c^{[2t+1,t+1]}_0&=&\sum_{p=0}^t{(-1)^{p+t+1}M^{(2t,p)}_{2(t+1-p)}},\nonumber \\
c^{[2t+1,1]}_0&=& -M_2^{(2t,0)}+M_0^{(2t,1)},\nonumber \\
c^{[2t+1,h]}_0&=& \sum_{p=0}^{h}{(-1)^{p+h}M^{(2t,p)}_{2(h-p)}}, \qquad h=\overline{2,t},\nonumber 
\end{eqnarray}
\begin{eqnarray}
c^{[2t+1,1]}_1&=&-(M^{(2t,0)}_3\eta^{(3,1)}_1-c^{[2t+1,0]}_1(-c_2^{[2t,0]}+c_0^{[2t,1]})),\nonumber \\
c^{[2t+1,h]}_1&=&\sum_{p=0}^{h-1}{(-1)^{p+h}M^{(2t,p)}_{2(h-p)+1}\eta^{(2(h-p)+1,h-p)}_1} \nonumber \\& &+c^{[2t+1,0]}_1\sum_{p=0}^{h}{(-1)^{p+h}c^{[2t,p]}_{2(h-p)}}, \qquad h=\overline{2,t},\nonumber
\end{eqnarray}
\begin{eqnarray}
c^{[2t+1,1]}_2&=& -M_4^{(2t,0)}\eta^{(4,1)}_1+M^{(2t,1)}_2\eta^{(2,0)}_1-c^{[2t+1,0]}_1c_3^{[2t,0]}\eta^{(3,1)}_1,\nonumber \\
c^{[2t+1,l]}_{2h-2l+1}&=&\sum_{p=0}^l{(-1)^{p+l}M^{(2t,p)}_{2(h-p)+1}}\eta_1^{(2(h-p)+1,l-p)} \nonumber\\& &+c^{[2t+1,0]}_1\sum_{p=0}^l{(-1)^{p+l}c^{[2t,p]}_{2(h-p)}\eta_1^{(2(h-p),l-p)}}, \qquad h=\overline{2,t}, \quad l=\overline{1,h-1},\nonumber \\
c^{[2t+1,l]}_{2h-2l}&=&\sum_{p=0}^l{(-1)^{p+l}M_{2(h-p)}^{(2t,p)}\eta_1^{(2(h-p),l-p)}}\nonumber\\& &+c^{[2t+1,0]}_1\sum_{p=0}^{min(l,h-2)}{(-1)^{p+l}c^{[2t,p]}_{2(h-p)-1}\eta_1^{(2(h-p)-1,l-p)}}, \qquad h=\overline{3,t+1}, \quad l=\overline{1,h-1}.\nonumber
\end{eqnarray}

{\bf \underline{Case $r$ even:}}  For $r=2t+2$, the coefficients $c^{[2t+2,0]}_j$ are given by:
\begin{eqnarray}
c^{[2t+2,0]}_2&=& M_2^{(2t+1,0)}\eta_1^{(2,0)},\nonumber \\
c^{[2t+2,0]}_{2h+1}&=& M_{2h+1}^{(2t+1,0)}\eta^{(2h+1,0)}_1+c^{[2t+2,0]}_1c_{2h}^{[2t+1,0]}\eta_1^{(2h,0)}, \qquad h=\overline{1,t+1},\nonumber \\
c^{[2t+2,0]}_{2h}&=& M_{2h}^{(2t+1,0)}\eta^{(2h,0)}_1+c^{[2t+2,0]}_1c_{2h-1}^{[2t+1,0]}\eta_1^{(2h-1,0)}, \qquad h=\overline{2,t+1}.\nonumber
\end{eqnarray}

According to the relations in Appendix C,  one shows that $c^{[r,0]}_k$ simplify to $q-$binomials (\ref{formbinom}). For $p \geq 1 $, the recursive formulae for all other coefficients are given by:
\begin{eqnarray}
c_0^{[2t+2,1]}&=&-M^{(2t+1,0)}_2+ M_0^{(2t+1,1)},\nonumber \\
c_0^{[2t+2,h]}&=&\sum_{p=0}^{h}{(-1)^{p+h}M_{2(h-p)}^{(2t+1,p)}}, \qquad h=\overline{2,t+1},\nonumber \\
c_1^{[2t+2,1]}&=&-M_3^{(2t+1,0)}\eta_1^{(3,1)}+c_1^{[2t+2,0]}(-c^{[2t+1,0]}_2+c_0^{[2t+1,1]}),\nonumber 
\end{eqnarray}
\begin{eqnarray}
c_1^{[2t+2,h]}&=&\sum_{p=0}^{h-1}{(-1)^{p+h}M_{2(h-p)+1}^{(2t+1,p)}\eta_1^{(2(h-p)+1,h-p)}} \nonumber\\ & &+c_1^{[2t+2,0]}\sum_{p=0}^{h}{(-1)^{p+h}c_{2(h-p)}^{[2t+1,p]}}, \qquad h=\overline{2,t+1},\nonumber \\
c_2^{[2t+2,1]}&=&-M_4^{(2t+1,0)}\eta_1^{(4,1)}+M_2^{(2t+1,1)}\eta^{(2,0)}_1-c_1^{[2t+2,0]}c_3^{[2t+1,0]}\eta_1^{(3,1)},\nonumber
\end{eqnarray}
\begin{eqnarray}
c_{2h-2l}^{[2t+2,l]}&=&\sum_{p=0}^l{(-1)^{p+l}M_{2(h-p)}^{(2t+1,p)}\eta_1^{(2(h-p),l-p)}}\nonumber \\& & +c_1^{[2t+2,0]}\sum_{p=0}^{min(l,h-2)}{(-1)^{p+l}c_{2(h-p)-1}^{[2t+1,p]}\eta_1^{(2(h-p)-1,l-p)}}, \qquad h=\overline{3,t+1}, l=\overline{1,h-1},\nonumber \\
c_{2h-2l+1}^{[2t+2,l]}&=&\sum_{p=0}^l{(-1)^{p+l}M_{2(h-p)+1}^{(2t+1,p)}\eta_1^{(2(h-p)+1,l-p)}}) \nonumber \\&&+c_1^{[2t+2,0]}\sum_{p=0}^l{(-1)^{p+l}c_{2(h-p)}^{[2t+1,p]}\eta_1^{(2(h-p),l-p)}}, \qquad h=\overline{2,t+1}, l=\overline{1,h-1}.\nonumber
\end{eqnarray}
All coefficients $c^{[r,p]}_k$ entering in the higher order relations (\ref{qDGADE}) can be computed recursively for any positive integer $r$. Note that setting $\rho_i=0$, the relations (\ref{qDGADE}) reproduce the higher order $q-$Serre relations (\ref{hqSerre})  of     ${\mathcal U}_q(\widehat{g})$ \cite{Luszt}. Using a computer program, up to $r=10$ we have checked that the results for the coefficients derived from the recursion relations coincide exactly with the ones conjectured in (\ref{qDGADE}).\vspace{1mm} 

The algorithm for the computation of the coefficients is the following (Appendix E)
\begin{itemize}
\item Input: $r+1,~\eta_{0,0}^{(2)}, ~\eta_{0,1}^{(2)}$, $c^{[1,0]}_0,~ c^{[1,0]}_1, ~ c_{2}^{[1,0]}, ~c_0^{[1,1]}$
\item Output: $c^{[r+1,p]}_k,~ p = 0, \dots, \{\frac{r+2}{2}\}, ~~k =0, \dots, r+2-2p$
\item Algorithm:
\begin{description}
\item[Step 1.] Compute the coefficients $\eta^{(h)}_{p,i}, ~ h =3,\dots, r+2,~ p=0,\dots, \{\frac{h-1}{2}\},~ i =0, 1$ 
\item[Step 2.] Compute $M^{(r,p)}_k$\\
If $p=0$, then $k = 2,\dots, r+2$.

If $p = 1,\dots, \{\frac{r+1}{2} \}$ , then $k = 1,\dots, r+2-2p$ 
\item[Step 3.] Compute $c^{[r+1,0]}_1$
\item[Step 4.] Compute $f_{r+1}(A,A^*)$ in the equation (\ref{ADE}).
\end{description}
\end{itemize}

\subsection{A two-variable polynomial generating function}
For the $q-$Onsager algebra, it was shown that the coefficients entering in the $r-th$ higher order relations can be derived from a two-variable generating function. Here, for any simply-laced affine Lie algebras we propose a two-variable generating function for the coefficients $c^{[r,p]}_k$.
\begin{defn}\label{defpolyADE} Let ${r\in \mathbb Z}^+$. Let $x,y$ be commuting indeterminates and $\rho$ a scalar. To any simply-laced affine Lie algebra  $\widehat{g}$, we associate the polynomial generating function  $p^{ADE}_r(x,y)$ such that:
\begin{eqnarray}
p_{2t+1}^{ADE}(x,y)&=& \prod_{l=1}^{t+1} \left(x^2- \frac{[4l-2]_q}{[2l-1]_q} xy+y^2 - \rho[2l-1]_q^2  \right) \ ,\label{polyprod1}\\
p_{2t+2}^{ADE}(x,y)&=& (x-y)\prod_{l=1}^{t+1} \left(x^2-  \frac{[4l]_q}{[2l]_q} xy+y^2 - \rho[2l]_q^2  \right) \ .\label{polyprod2}
\end{eqnarray}
\end{defn}
\begin{lem} 
The polynomial $p_r^{ADE}(x,y)$ can be expanded as:
\begin{eqnarray}
p_r^{ADE}(x,y) = \sum_{p=0}^{\{\frac{r+1}{2}\}}\ \sum_{k=0}^{r-2p+1} (-1)^{k+p}   \rho^{p}\,c_{k}^{[r,p]}\,  x^{r-2p+1-k} y^k\  \label{polyADE}
\end{eqnarray}
where the coefficients $c_k^{[r,p]}$ are given by (\ref{coefADE}).
\end{lem}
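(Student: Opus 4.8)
The plan is to establish the expansion (\ref{polyADE}) directly by expanding the products (\ref{polyprod1}), (\ref{polyprod2}), and then matching the resulting coefficients of $x^{r-2p+1-k}y^k$ with the closed form (\ref{coefADE}). First I would treat the two parities separately, since the generating functions differ structurally: for odd $r=2t+1$ the product has $t+1$ quadratic factors and no leading $(x-y)$, whereas for even $r=2t+2$ there is an extra overall factor $(x-y)$. In each case the key observation is that a factor of the form $x^2 - \frac{[2m]_q}{[m]_q}xy + y^2 - \rho[m]_q^2$ contributes, upon expansion, either the ``$\rho$-term'' $-\rho[m]_q^2$ or one of the three monomials $x^2$, $-\frac{[2m]_q}{[m]_q}xy$, $y^2$. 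Choosing the $\rho$-term from exactly $p$ of the factors (with index set $\{s_1<\dots<s_p\}$) and the cross term $-\frac{[2s]_q}{[s]_q}xy$ from exactly $k-\alpha l$ of the remaining factors (index set $\{s_{p+1}<\dots<s_{p+k-\alpha l}\}$) is what produces the two disjoint ordered index sets and the ratio $[2s_{p+1}]_q\cdots/[s_{p+1}]_q\cdots$ appearing in (\ref{coefADE}).

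The main combinatorial bookkeeping is to track the total power of $y$. Each $\rho$-factor contributes no $x$ or $y$; each cross term contributes one $y$; each factor from which we select $x^2$ or $y^2$ contributes an even shift, so these account for the ``$\alpha l$'' correction and the multinomial-type prefactor $\frac{(\{\frac{r+1}{2}\}-k+\alpha l-p)!}{(\{\frac{\alpha l}{2}\})!\,(\dots)!}$, which counts the ways of distributing the $y^2$ choices among the factors not used for $\rho$ or cross terms. I would verify that the parity parameter $\alpha$ (equal to $1$ for $r$ even and $2$ for $r$ odd) correctly records whether the leading factor $(x-y)$ is present, since that factor shifts the admissible degrees by one and is precisely what distinguishes the two summation ranges in (\ref{coefADE}). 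The symmetry $c_k^{[r,p]}=c_{r-2p+1-k}^{[r,p]}$ should follow immediately from the invariance of each quadratic factor under $x\leftrightarrow y$ (and the antisymmetry of $(x-y)$ in the even case being compensated by the total degree parity).

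The step I expect to be the genuine obstacle is showing that after collecting all contributions with a fixed power $y^k$, the sum over the internal parameter $l$ (the number of factors contributing $y^2$, or rather the even part of the $y$-degree) reorganizes exactly into the single sum $\sum_{l=0}^{\{k/\alpha\}}$ with the stated factorial weight. This requires a careful induction on the number of quadratic factors, where at each step one multiplies the partial product by one new factor and checks that the recursion on the coefficients reproduces the closed form; equivalently, one matches this against the already-established recursion relations for $c_k^{[r,p]}$ derived in the previous subsection. The cleanest route is probably to prove the expansion by induction on $t$ for each parity, using $p^{ADE}_{r+2}(x,y)=p^{ADE}_r(x,y)\cdot\bigl(x^2-\frac{[4l']_q}{[2l']_q}xy+y^2-\rho[2l']_q^2\bigr)$ for the appropriate $l'$, and then confirming that the induced recursion on coefficients coincides with the recursion relations established in the preceding subsection; since both the generating-function coefficients and the conjectured $c_k^{[r,p]}$ satisfy the same recursion and initial data, they must agree, which completes the proof.
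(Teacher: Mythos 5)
Your main argument is correct and is, in substance, the paper's own proof: the paper disposes of this lemma with the two words ``By induction,'' and both of your routes (direct multinomial expansion of the product, or induction on the number of quadratic factors) are the natural ways to fill that in. Your combinatorial bookkeeping checks out: choosing the $-\rho[s]_q^2$ term from $p$ factors produces the set $\{s_1<\dots<s_p\}$ and the factor $[s_1]_q^2\cdots[s_p]_q^2$; choosing the cross term from $k-\alpha l$ factors produces the second ordered set and the ratio of $q$-numbers; the remaining factors split between $x^2$ and $y^2$, and the binomial prefactor $\frac{(\{\frac{r+1}{2}\}-k+\alpha l-p)!}{(\{\frac{\alpha l}{2}\})!\,(\{\frac{r+1}{2}\}-k+\alpha l-p-\{\frac{\alpha l}{2}\})!}$ is exactly the count of these splittings. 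In the even case, $l$ records the combined $y$-degree coming from the $y^2$ choices and the $(x-y)$ prefactor, its parity fixing whether the prefactor contributes $x$ or $-y$, which is also why the overall sign still collapses to $(-1)^{k+p}$.

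One caution about your closing sentence: the ``recursion relations established in the preceding subsection'' are the recursions for the coefficients of the higher order \emph{algebra} relations, derived from the defining relations (\ref{rel1g}) via the quantities $M^{(r,p)}_k$ and $\eta^{(r,p)}_j$. The statement that those algebra coefficients coincide with the polynomial coefficients (\ref{coefADE}) is precisely the conjecture of this section, which the paper only verifies computationally up to $r=10$; it is not available as a proved ingredient. So your induction must be closed internally: multiply $p^{ADE}_r(x,y)$ by the new quadratic factor, read off the induced recursion on the \emph{generating-function} coefficients, and verify that the closed form (\ref{coefADE}) satisfies that recursion with the correct initial data. Your primary direct-expansion argument already does this in one step and is self-contained, so the lemma is proved; just do not route the final identification through the algebra-derived recursions, or the argument becomes circular.
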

\begin{proof}
By induction.
\end{proof}

We claim that the two-variable polynomial (\ref{polyADE}) is the generating function for the coefficients $c^{[r,p]}_k$  entering in the higher order relations (\ref{qDGADE}) in view of the following observations:

\begin{itemize}
\item For $r\leq 5$, it is an exercise to check that the coefficients $c^{[r,p]}_k$ from (\ref{coefffin}) coincide exactly with the ones derived in the previous Section (see cases $r=2,3,4,5$);

\item  Using the $q-$binomial theorem, for $r$ generic  it is easy to check that the coefficients $c^{[r,0]}_k$ obtained from (\ref{coefffin}) are the $q-$binomials (\ref{formbinom}). Namely,

For $\rho = 0$,
\begin{eqnarray*}
p^{ADE}_{2t+1}(x,y)&=&\prod_{l=1}^{t+1}{(x^2-\frac{[4l-2]_q}{[2l-1]_q}xy+y^2)}\\
&=& x^{2(t+t)}\prod_{l=-(t+1)}^t{(1-q^{2l}(q\frac{y}{x}))}.
\end{eqnarray*}
Put $k=l+t+2$, $p^{ADE}_{2t+1}(x,y)=x^{2(t+1)}\prod\limits_{k=1}^{2(t+1)}{(1-q^{2k}(q^{-2t-3}\frac{y}{x}))}$. Apply the q-binomial theorem
\[ p^{ADE}_{2t+1}(x,y)=\sum\limits_{k=1}^{2(t+1)}{(-1)^k \left [ \begin{array}{c}
2t+2\\
k
\end{array}
\right]_qx^{2t+2-k}y^k. } \]

\item  For $r\geq 6$ and $p\geq 1$, the comparison is more involved. However, using a computer program we have checked that the coefficients derived from the recursive formulae coincide exactly with the ones given by (\ref{coefffin}) for a large number of values $r\geq 6$;

\item  Let $\{c_i,\overline{c}_i,w_i\}\in {\mathbb C}$. Let ${\widehat{g}}=a_n^{(1)} (n>1), d_n^{(1)}, e_6^{(1)},e_7^{(1)},e_8^{(1)}$. There exists an algebra homomorphism: ${\cal O}_q(\widehat{g})\ \rightarrow {\cal U}_q(\widehat{g})$ \cite{BB1} given by 
\begin{eqnarray}
A_i \mapsto {\cal A}_i = c_i\,e_iq_i^{\frac{h_i}{2}} +\overline{c}_i\,f_iq_i^{\frac{h_i}{2}} + w_i q_i^{h_i}\ \label{realg}\
\end{eqnarray}
iff the parameters $w_i$ are subject to the constraints: 
$w_i\,\Big(w_j^2+\frac{c_j\,\overline{c}_j}{q+q^{-1}-2}\Big)=0 \ ,\ w_j\,\Big(w_i^2+\frac{c_i\,\overline{c}_i}{q+q^{-1}-2}\Big)=0$ where \ $i,j$ \ are simply linked and $\rho_i\rightarrow c_i\,\overline{c}_i $\ . Let $V$ be the so-called evaluation representation  of ${\cal U}_q(\widehat{g})$ on which  ${\cal A}_i$ act (see e.g. \cite[Proposition 1]{Jim2} for $\widehat{g}=a_n^{(1)}$). For generic parameters $c_i,\overline{c}_i,q$, $V$ is irreducible and ${\cal A}_i$ is diagonalizable on $V$. Let $\theta_k^{(i)}$, $k=0,1,...$ denote the (possibly degenerate)  corresponding eigenvalues of ${\cal A}_i$. For instance, for the fundamental representation\footnote{For $\widehat{g}=a_n^{(1)},\ n>1$, see \cite[Proposition 1]{Jim2}. For $\widehat{g}=d_n^{(1)}$, see for instance \cite{DelG}.} of ${\cal U}_q(a_n^{(1)})$, the eigenvalues take the simple form:
 \begin{eqnarray}
\theta_k^{(i)} =  C^{(i)}(vq^k +  v^{-1}q^{-k})\ ,\label{eigenval}
\end{eqnarray}
where $v,C^{(i)}$ are scalar and $C^{(i)}$ depend on $c_i,\overline{c}_i,q$.
Let $E_k^{(i)}$ be the projector on the eigenspace associated with the eigenvalue $\theta_k^{(i)}$. Denote $\Delta_1^{(i)}$ as the l.h.s of
the first equation in (\ref{rel1g}). The relation (\ref{rel1g}) implies that it must exist integers  $k,l$ such that:
\begin{eqnarray}
E_k^{(i)} \Delta^{(i)}_1 E_l^{(i)} =0 \ \ \Rightarrow \ \  p^{ADE}_1(\theta_k^{(i)},\theta_l^{(i)})  E_k^{(i)} {\cal A}_j E_l^{(i)}  = 0 \quad \mbox{with} \quad \rho\equiv\rho_i \nonumber
\end{eqnarray}
For generic parameters $c_i,\overline{c}_i,q$,  $E_k^{(i)} {\cal A}_j E_l^{(i)}\neq 0$. It implies $p^{ADE}_1(\theta_k^{(i)},\theta_l^{(i)})=0$ which, using (\ref{eigenval}), is consistent with the structure (\ref{polyprod1}) for $t=0$ provided $l=k\pm 1$. The same observation about the structure of the two-variable polynomial can be generalized as follows. Denote $\Delta_r^{(i)}$ as the l.h.s of (\ref{qDGADE}). If the relation (\ref{qDGADE}) with (\ref{coefffin}) holds, then it must exist integers $k,l$ such that:
\begin{eqnarray}
E_k^{(i)} \Delta^{(i)}_r E_l^{(i)} =0 \ \ \Rightarrow \ \   p^{ADE}_r(\theta_k^{(i)},\theta_l^{(i)})  E_k^{(i)} {\cal A}^r_j E_l^{(i)}  = 0 \quad \mbox{with} \quad \rho\equiv\rho_i  \ . \nonumber
\end{eqnarray}

For generic $c_i,\overline{c}_i,q$, $E_k^{(i)} {\cal A}_j^r E_l^{(i)}\neq 0$. It implies $p^{ADE}_r(\theta_k^{(i)},\theta_l^{(i)})=0$ which leads to  the following constraints of the integers $k,l$:
\begin{eqnarray}
&&k=l\pm 1\ ,\ l\pm 3\ , \ l\pm 5\ ,\cdots \ ,\  l\pm r \qquad \mbox{for} \qquad r \ \ \mbox{odd} \ ,\nonumber\\
&&k=l\ ,\  l\pm 2\ , \ l\pm 4\ ,\cdots \ , \ l\pm r \qquad \quad \ \ \mbox{for} \qquad r \ \  \mbox{even} \ .\nonumber
\end{eqnarray}
Again, this is in perfect agreement with the factorized form (\ref{polyprod1}), (\ref{polyprod2}). Thus,  for $\widehat{g}=a_n^{(1)}$ the structure of the two-variable polynomial (\ref{defpolyADE}) is consistent with the spectral properties of ${\cal A}_i$.  
\end{itemize}

\section{The $XXZ$ open spin chain at roots of unity}
In the literature, the $XXZ$ spin chain with periodic boundary conditions at roots of unity $q=e^{i\pi/N}$, $N\in \mathbb{N}\backslash \{0\}$, is known to enjoy a $sl_2$ loop algebra symmetry in certain sectors of the spectrum \cite{DFM}. A rather natural question is whether such phenomena occurs for the open $XXZ$ spin chain and for which class of boundary conditions. \vspace{1mm}

In this Section, we describe the third main result of the thesis.
We consider the open XXZ spin chain within the framework of the $q-$Onsager algebra and its representation theory. Starting from the basic operators that generate the $q-$Onsager algebra, two new operators are introduced at roots of unity. These operators can be understood as analogues of the divided powers of the Chevalley generators (that occur in Lusztig's analysis of $U_q(\widehat{sl_2})$  at roots of unity) acting on a finite dimensional vector space. Some properties of the operators are studied.
For a special class of parameters, it is shown that the new operators satisfy a pair of relations that can be understood as a higher order generalization of the two basic (Dolan-Grady \cite{DG}) defining relations of the classical Onsager algebra. The `mixed relations' between the basic operators and the divided polynomials are also constructed. They can be seen as a different higher order generalization of the Dolan-Grady relations. All relations together provide, to our knowledge, the first example in the literature of an analog of Lusztig quantum group for the $q-$Onsager algebra. As an application in physics, we study some of the symmetries of the Hamiltonian with respect to the generators of the new algebra.

\subsection{A background: the $XXZ$ periodic spin chain at roots of unity}
Recall that the Hamiltonian of the $XXZ$ spin chain with periodic boundary conditions and $L$ sites reads:
\begin{equation}
\label{Hamilper}
H_{0}=\frac{1}{2}\sum_{j=1}^{L}{(\sigma^j_1\sigma^{j+1}_1+\sigma^j_2\sigma_2^{j+1}+\Delta\sigma^j_z\sigma_z^{j+1})},
\end{equation}
Here $\Delta = \frac{q+q^{-1}}{2}$ denotes the anisotropy parameter.  By construction, the Hamiltonian acts on a $2^L$ finite dimensional vector space:
\begin{equation}
{\mathcal V}^{(L)}=\underbrace{ {\mathbb C}^2 \otimes {\mathbb C}^2 \otimes \cdots \cdots\cdots  \otimes  {\mathbb C}^2}_{L \quad times}.\label{spaceV}
 \end{equation}

For $q$ a root of unity and $L$ finite, it is known \cite{Baxter} that additional degeneracies occur in the spectrum of the Hamiltonian. Such degeneracies are associated with the existence of an additional $sl_2$-loop algebra symmetry of the Hamiltonian, as shown in \cite{DFM}.
Let us now recall the main steps and results of \cite{DFM}.\vspace{1mm}

In the works of Jimbo \cite{Jim2},  five basic operators  satisfying the $U_q(\widehat{sl_2})$ defining relations naturally occur in the study of the $XXZ$ spin chain  for $q$ generic:
\begin{eqnarray}
S^z&=&\frac{1}{2}\sum_{j=1}^L{\sigma_z^j},\nonumber\\
S^{\pm}&=&\sum_{j=1}^{L}q^{\sigma_z/2}\otimes\dots\otimes q^{\sigma_z/2}\otimes\sigma^{\pm}_j\otimes q^{-\sigma_z/2}\otimes\dots\otimes q^{-\sigma_z/2},\nonumber\\
T^{\pm}&=&\sum_{j=1}^L{q^{-\sigma_z/2}\otimes\dots\otimes q^{-\sigma_z/2}\otimes \sigma^{\pm}_j\otimes q^{\sigma_z/2}\otimes \dots \otimes q^{\sigma_z/2}}.\nonumber
\end{eqnarray}
By straightforward calculations, $N-$th powers of the basic operators $S_\pm,T_\pm$ are shown to be proportional to $[N]_q!$. Explicitly, one derives:
\begin{eqnarray}
(S^{\pm})^N&=&[N]_q!\sum_{1\leq j_1<\dots<j_N \leq L}{q^{\frac{N}{2}\sigma_z}\otimes \dots \otimes q^{\frac{N}{2}\sigma_z}\otimes \sigma^{\pm}_{j_1}\otimes q^{\frac{(N-2)}{2}\sigma_z}\otimes \dots \otimes q^{\frac{(N-2)}{2}\sigma_z}} \\  \nonumber && \qquad  \qquad\qquad\qquad\otimes\  \sigma^{\pm}_{j_2}\otimes q^{\frac{(N-4)}{2}\sigma_z}\otimes \dots \otimes q^{\frac{(N-4)}{2}\sigma_z} \otimes \sigma^{\pm}_{j_N}\otimes q^{\frac{-N}{2}\sigma_z}\otimes \dots \otimes q^{\frac{-N}{2}\sigma_z}\ ,\label{defS}
\end{eqnarray}
\begin{eqnarray}
(T^{\pm})^N&=&[N]_q!\sum_{1\leq j_1<\dots<j_N \leq L}{q^{-\frac{N}{2}\sigma_z}\otimes \dots \otimes q^{-\frac{N}{2}\sigma_z}\otimes \sigma^{\pm}_{j_1}\otimes q^{-\frac{(N-2)}{2}\sigma_z}\otimes \dots \otimes q^{-\frac{(N-2)}{2}\sigma_z}} \\ \nonumber && \qquad  \qquad\qquad\qquad \otimes \ \sigma^{\pm}_{j_2}\otimes q^{-\frac{(N-4)}{2}\sigma_z}\otimes \dots \otimes q^{\frac{-(N-4)}{2}\sigma_z} \otimes \sigma^{\pm}_{j_N}\otimes q^{\frac{N}{2}\sigma_z}\otimes \dots \otimes q^{\frac{N}{2}\sigma_z}.\label{defT}
\end{eqnarray}
For $q^{2N}\to 1$, it implies $(S^{\pm})^N=(T^{\pm})^N=0$. 

Following Lusztig's works, the authors \cite{DFM} introduce the non-trivial divided powers:
\begin{equation}
S^{\pm(N)}= \lim_{q^{2N}\to 1}(S^{\pm})^N/[N]_q!, \quad T^{\pm(N)}=\lim_{q^{2N}\to 1}(T^{\pm})^N/[N]_q!\ \nonumber
\end{equation}
and study their commutation relations by using, for instance, the so-called Lusztig's higher order $q-$Serre relations \cite{Luszt} (\ref{hqSerre})-(\ref{hqSerre1}).

Let $\theta_1^{(m)}=(S^{+})^m/[m]_q!$, $\theta_2^{(m)}=(T^{-})^m/[m]_q!$. According to Lusztig's work \cite{Luszt},  the following relations can be derived from the higher order $q-$Serre relations:
\begin{eqnarray}
\theta_1^{(3N)} \theta_2^{(N)} &=& \sum_{s'=N}^{3N}  {\gamma}_{s'}\,  \theta_1^{(3N-s')} \theta_2^{(N)} \theta_1^{(s')} \ \label{gamma_rel}
\end{eqnarray}

where 
\begin{eqnarray}
{\gamma}_{s'}= (-1)^{s'+1}q^{s'(N-1)}\sum_{l=0}^{N-1} (-1)^l q^{l(1-s')} \left[ \begin{array}{c} s' \\ l  \end{array}\right]_q.\label{gamL}
\end{eqnarray}
Define 
\begin{eqnarray}
s'=Ns+p \qquad \mbox{with} \qquad p=0,1,...,N-1.\nonumber
\end{eqnarray}
Observe 
\begin{eqnarray}
\theta^{(Ns)}=\frac{[N]_q!^s}{[Ns]_q!}\theta^{(N)s}.\nonumber
\end{eqnarray}
Then, one shows:
\begin{eqnarray}
\lim_{q^{2N}\rightarrow 1} \gamma_{Ns+p}= \delta_{p,0} (-1)^{s+1} \qquad \mbox{and}\qquad
\lim_{q^{2N}\rightarrow 1} \frac{[N]_q!^s}{[Ns]_q!} =\frac{q^{N^2}}{s!}.\nonumber
\end{eqnarray}
It follows that $S^{+(N)},T^{-(N)}$ satisfy the Serre relations of the $sl_2$-loop algebra namely:
\begin{equation}
\sum_{s=0}^{3} \frac{(-1)^{s+1}}{s!(3-s)!} (S^{+(N)})^{(3-s)} (T^{-(N)})(S^{+(N)})^s =0
\end{equation}
Provided the change $S^+\rightarrow S^-$, $T^-\rightarrow T^+$, the same relation holds for $S^-,T^+$.

Either based on straightforward calculations or using some formula\footnote{\begin{equation}
[(S^{+})^m,(S^{-})^n]=\sum_{j=1}^{\rm{min}(m,n)}{m\atopwithdelims[] j} 
{n\atopwithdelims[] j}[j]!(S^{-})^{n-j}(S^{+})^{m-j}\prod_{k=0}^{j-1} \frac{q^{2S^z+m-n-k}-q^{-(2S^z+m-n-k)}}{q-q^{-1}}.
\end{equation}} in \cite{DeCK}, the authors \cite{DFM} derive successively the following relations for $q^{2N}=1$:
\begin{equation}
[S^{+(N)},T^{+(N)}]=[S^{-(N)},T^{-(N)}]=0,
\label{one}
\end{equation}
\begin{equation}
[S^{\pm(N)},S^z]=\pm NS^{\pm(N)},~~~[T^{\pm(N)},S^z]=\pm N T^{\pm(N)}
\label{three}
\end{equation}
and in the sector $S^z\equiv 0 ({\rm mod}~N)$
\begin{equation}
[S^{+(N)},S^{-(N)}]=[T^{+(N)},T^{-(N)}]=-(-q)^N{2\over N}S^z.
\label{two}
\end{equation}

According to the above analysis, define
\begin{eqnarray*}
E_0=S^{+(N)},~ F_0=S^{-(N)},~ E_1=T^{-(N)},~ F_1=T^{+(N)},~ T_0=-T_1=-(-q)^NS^z/N.
\end{eqnarray*}
Then,  the operators $\{E_i,F_i, T_i\}$ satisfy the defining relations of the loop algebra of $sl_2$.  When $S^{z}\equiv 0 ~({\rm mod}~N)$, the Hamiltonian (\ref{Hamilper}) commutes with $S^{\pm(N)}, T^{\pm (N)}$ at $q^{2N}=1$ \cite{DFM}. One has:
\begin{eqnarray}
\label{comm1}
[E_0,H]=[E_1,H]&=&0,\\
\label{comm2}
[F_0,H]=[F_1,H]&=&0.
\end{eqnarray}
Also, by (\ref{two}) and (\ref{comm1})-(\ref{comm2}):
\begin{equation}
[T_{0},H]=[T_1,H]=0.
\end{equation}

Hence, the Hamiltonian of the $XXZ$ periodic spin chain at $q$ a root of unity enjoys a $sl_2$-loop algebra  invariance in the sector $S^{z}\equiv 0$.

\subsection{The case of the open $XXZ$ spin chain} 
The purpose of this Section is to study the open $XXZ$ spin chain with non-diagonal integrable boundary conditions for an anisotropy parameter $(q+q^{-1})/2$ evaluated at roots of unity $q=e^{i\pi/N}$. Inspired by the analysis done for the $XXZ$ spin chain with periodic boundary conditions using the quantum algebra $U_q(\widehat{sl_2})$ \cite{DFM}, it is thus natural to start from the framework of the $q-$Onsager algebra, a coideal subalgebra of $U_q(\widehat{sl_2})$ (see Chapter 1). For generic boundary conditions, recall that the Hamiltonian of the $XXZ$ open spin chain reads
\begin{eqnarray}
\label{equhamilt}
H^{(L)}_{XXZ}&=&\sum_{k=1}^{L-1}\Big(\sigma_1^{k+1}\sigma_1^{k}+\sigma_2^{k+1}\sigma_2^{k} + \Delta\sigma_z^{k+1}\sigma_z^{k}\Big)\\&& +\ \frac{(q-q^{-1})}{2}\frac{(\epsilon_+ - \epsilon_-)}{(\epsilon_+ + \epsilon_-)}\sigma^1_z + \frac{2}{(\epsilon_+ + \epsilon_-)}\big(k_+\sigma^1_+ + k_-\sigma^1_-\big)     \nonumber\\
\ && +\ \frac{(q-q^{-1})}{2}\frac{(\bar{\epsilon}_+ - \bar{\epsilon}_-)}{(\bar{\epsilon}_+ +\bar{\epsilon}_-)}\sigma^L_z + \frac{2}{(\bar{\epsilon}_+ + \bar{\epsilon}_-)}\big(\bar{k}_+\sigma^L_+ + \bar{k}_-\sigma^L_-\big)     \nonumber
\end{eqnarray}
where $L$ is the number of sites, $\Delta = \frac{q+q^{-1}}{2}$ denotes the anisotropy parameter, and $\sigma_{\pm},\sigma_1, \sigma_2, \sigma_z$ are the usual Pauli matrices.\vspace{1mm} 

\subsubsection{The basic operators and the divided polynomials}
Following  \cite{BK1}, the two operators $\mathcal{W}_0, \mathcal{W}_1$ of the $q$-Onsager algebra that naturally occur in the analysis of the $XXZ$ open spin chain with non-diagonal boundary conditions are  known explicitly (see Chapter 2). Denote:
\begin{equation}
w_0=k_+ \sigma_+ + k_-\sigma_-.
\end{equation}
They are given by:
\begin{eqnarray}
\mathcal W_0&=&\sum_{j=1}^L{q^{\sigma_z}\otimes \dots \otimes q^{\sigma_z}\otimes w_{0_j}\otimes \mathbb{I}\otimes \dots \otimes \mathbb{I}}+\epsilon_+ q^{\sigma_z} \otimes \dots \otimes q^{\sigma_z},\label{TDbase}\\
\mathcal W_1&=&\sum_{j=1}^L{q^{-\sigma_z}\otimes \dots \otimes q^{-\sigma_z}\otimes w_{0_j}\otimes \mathbb{I}\otimes \dots \otimes \mathbb{I}}+\epsilon_- q^{-\sigma_z} \otimes \dots \otimes q^{-\sigma_z}.\label{TDbase1}
\end{eqnarray}
with the parameters $\rho = \rho^* = (q+q^{-1})^2k_+k_-$.

The finite dimensional module ${\cal V}^{(L)}$ on which they act is of dimension $2^L$. Indeed, both operators are diagonalizable on the finite $2^L-$ dimensional vector space ${\cal V}^{(L)}$ \cite{Bas3}. For convenience, denote $\epsilon_0=\epsilon_+$,  $\epsilon_1=\epsilon_-$.
\begin{lem}\cite{Bas3}\label{lemspecfund} For generic values of $q$, the operator $\mathcal W_i$, $i=0,1$, has $L+1$ distinct eigenvalues. They read:
\begin{equation}
\theta^{(i)}_n= a_2^{(i)}(q) q^{L-2n} + a_3^{(i)}(q)q^{-L+2n} \qquad \mbox{for} \quad  n=0,1,..., L \ .\label{specgen}
\end{equation}
 
If $k_{\pm}\ne 0$ and use the convenient parametrization
\begin{equation}
\epsilon_i=\cosh(\alpha_i) \quad \mbox{and} \quad k_+=(k_-)^\dagger=-(q-q^{-1})e^{i\eta}/2 \quad \mbox{with} \quad \alpha_i,\eta\in{\mathbb C} \label{paramet}
\end{equation}
then $~a_2^{(i)}(q) = \frac{e^{\alpha_i}}{2}, ~ a_3^{(i)}(q) = \frac{e^{-\alpha_i}}{2}$.

If $k_-=0$, then $a_2^{(i)}(q) = \epsilon_i, ~ a_3^{(i)}(q) = 0$.
\end{lem}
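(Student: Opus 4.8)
The plan is to diagonalize $\mathcal{W}_0$ (and by symmetry $\mathcal{W}_1$) recursively in the number of sites $L$, exploiting the coproduct-like structure of the explicit expressions (\ref{TDbase}), (\ref{TDbase1}). Writing $\mathcal{W}_0^{(L)}$ for the operator on ${\cal V}^{(L)}$ and $w_0=k_+\sigma_++k_-\sigma_-$, one reads off directly from (\ref{TDbase}) the recursion
\[
\mathcal{W}_0^{(L)} = q^{\sigma_z}\otimes \mathcal{W}_0^{(L-1)} + w_0 \otimes \mathbb{I}^{\otimes(L-1)}, \qquad \mathcal{W}_0^{(0)}=\epsilon_+,
\]
and likewise $\mathcal{W}_1^{(L)} = q^{-\sigma_z}\otimes \mathcal{W}_1^{(L-1)} + w_0 \otimes \mathbb{I}^{\otimes(L-1)}$ with $\mathcal{W}_1^{(0)}=\epsilon_-$. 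The key observation is that this recursion preserves a product structure: if $v\in{\cal V}^{(L-1)}$ is an eigenvector of $\mathcal{W}_0^{(L-1)}$ with eigenvalue $\lambda$, then the two-dimensional subspace ${\mathbb C}^2\otimes\{v\}$ is invariant under $\mathcal{W}_0^{(L)}$, on which the latter acts as the $2\times2$ matrix
\[
M(\lambda) = \lambda\, q^{\sigma_z} + w_0 = \begin{pmatrix} \lambda q & k_+ \\ k_- & \lambda q^{-1}\end{pmatrix}.
\]

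First I would determine how the eigenvalues transform under one step. Since $\mathrm{tr}\,M(\lambda)=\lambda(q+q^{-1})$ and $\det M(\lambda)=\lambda^2-k_+k_-$, the two eigenvalues $\mu$ of $M(\lambda)$ solve $\mu^2-\lambda(q+q^{-1})\mu+\lambda^2-k_+k_-=0$. Parametrizing an eigenvalue of $\mathcal{W}_0^{(L-1)}$ as $\lambda=a_2 q^{s}+a_3 q^{-s}$ subject to $a_2 a_3 = -k_+k_-/(q-q^{-1})^2$ — a relation equivalent to $\rho=(q+q^{-1})^2 k_+k_-$ together with the $q$-Racah form $\delta_1=-a_2a_3(q^2-q^{-2})^2$ of Lemma \ref{lem:polypart} — a short trace/determinant check shows that the two roots are exactly $a_2 q^{s+1}+a_3 q^{-s-1}$ and $a_2 q^{s-1}+a_3 q^{-s+1}$. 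Thus one step of the recursion sends the exponent $s\mapsto s\pm 1$, and within each block the two roots differ by $\mu_1-\mu_2=(q-q^{-1})(a_2 q^{s}-a_3 q^{-s})$, which is nonzero for generic $q$.

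Then I would run the induction on $L$. Starting from $\mathcal{W}_0^{(0)}=\epsilon_+=a_2+a_3$, assume the eigenvalues of $\mathcal{W}_0^{(L-1)}$ are $\theta_n^{(0)}=a_2 q^{L-1-2n}+a_3 q^{-(L-1)+2n}$ for $n=0,\dots,L-1$ and that $\mathcal{W}_0^{(L-1)}$ is diagonalizable. Decomposing ${\cal V}^{(L)}=\bigoplus_v {\mathbb C}^2\otimes\{v\}$ over an eigenbasis $\{v\}$ and diagonalizing each $M(\lambda)$ yields diagonalizability of $\mathcal{W}_0^{(L)}$ and, by the exponent-shift rule, the eigenvalue set $\bigcup_{n=0}^{L-1}\{\,a_2 q^{L-2n}+a_3 q^{-L+2n},\ a_2 q^{L-2n-2}+a_3 q^{-L+2n+2}\,\}$. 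The crucial bookkeeping point — which I expect to be the main obstacle — is that these overlapping consecutive pairs telescope to exactly the $L+1$ values $\theta_m^{(0)}=a_2 q^{L-2m}+a_3 q^{-L+2m}$, $m=0,\dots,L$, rather than to $2^L$ distinct numbers; the collapse from $2^L$ eigenvectors to $L+1$ distinct eigenvalues is precisely the effect of the overlaps between adjacent blocks. Their distinctness for generic $q$ then follows because each coincidence $\theta_m^{(0)}=\theta_{m'}^{(0)}$ is a nontrivial polynomial condition on $q$.

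Finally, I would read off the two parameter regimes. Matching $a_2+a_3=\epsilon_i=\cosh\alpha_i$ with the product constraint $a_2 a_3=1/4$ (which comes from $k_+k_-=-(q-q^{-1})^2/4$ in the parametrization (\ref{paramet})) gives $a_2^{(i)}=e^{\alpha_i}/2$, $a_3^{(i)}=e^{-\alpha_i}/2$. For $k_-=0$ the matrix $M(\lambda)$ becomes upper triangular with eigenvalues $\lambda q^{\pm1}$, so the recursion degenerates to $s\mapsto s\pm1$ with $a_3=0$, yielding $a_2^{(i)}=\epsilon_i$, $a_3^{(i)}=0$; here $\mathcal{W}_i$ need no longer be diagonalizable, but its spectrum is still $\{\epsilon_i q^{L-2n}\}$. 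The operator $\mathcal{W}_1$ is handled identically, the only change being $q^{\sigma_z}\to q^{-\sigma_z}$, which leaves $\mathrm{tr}\,M$ and $\det M$ unchanged, so the same formula (\ref{specgen}) holds with $\epsilon_+\to\epsilon_-$.
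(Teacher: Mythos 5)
Your proof is correct, but it follows a genuinely different route from the paper's. The thesis itself gives no proof of this lemma: it is quoted from \cite{Bas3}, where the spectrum is obtained by explicitly constructing the eigenbasis of $\mathcal W_0$ (resp. $\mathcal W_1$); inside the thesis, the shape (\ref{specgen}) is the $q$-Racah form forced on any tridiagonal pair with $\beta=q^2+q^{-2}$, $\gamma=\gamma^*=0$ (Corollary \ref{hequa1}, eqs. (\ref{eq:const1})--(\ref{eq:const2})), and the text following the lemma only pins down the constants through the trace identity $\mathrm{Trace}(\mathcal{W}_i)=\epsilon_i(q+q^{-1})^L$, which yields $\dim V_n=\binom{L}{n}$, $a_2^{(i)}+a_3^{(i)}=\epsilon_i$ and $a_2^{(i)}a_3^{(i)}=-k_+k_-/(q-q^{-1})^2$. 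You instead run an elementary induction on $L$ from the recursion $\mathcal W_0^{(L)}=q^{\sigma_z}\otimes\mathcal W_0^{(L-1)}+w_0\otimes\mathbb I^{\otimes(L-1)}$ (which is indeed what (\ref{TDbase}) encodes), and your central computation checks out: under the constraint $a_2a_3=-k_+k_-/(q-q^{-1})^2$, the trace $\lambda(q+q^{-1})$ and determinant $\lambda^2-k_+k_-$ of $M(\lambda)$ are matched exactly by the pair $a_2q^{s+1}+a_3q^{-s-1}$, $a_2q^{s-1}+a_3q^{-s+1}$, so each step shifts the exponent by $\pm1$, the $2^L$ block-eigenvalues telescope onto the $L+1$ values (\ref{specgen}), and the multiplicities obey Pascal's rule, recovering $\dim V_n=\binom{L}{n}$ as a bonus. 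Two refinements you should make explicit: the induction needs every $2\times2$ block to be diagonalizable, i.e. $a_2q^{s}\neq a_3q^{-s}$ for all intermediate exponents (the discriminant is $(q-q^{-1})^2(a_2q^s-a_3q^{-s})^2$), which can fail at $s=0$ for the non-generic choice $\epsilon_i=\pm1$ with $L$ odd, so genericity must be invoked jointly in $q$ and the boundary parameters, not in $q$ alone; and your distinctness argument is cleanest under (\ref{paramet}), where $a_2^{(i)}=e^{\alpha_i}/2$ and $a_3^{(i)}=e^{-\alpha_i}/2$ are $q$-independent, so each coincidence really is a nontrivial condition on $q$. What your route buys is a short, self-contained proof that also delivers diagonalizability and the multiplicities; what the paper's route buys is the explicit eigenvectors, which the thesis genuinely needs later --- the proof of Lemma \ref{lemm1} rests on the eigenvectors of \cite{Bas3} remaining well defined at $q=e^{i\pi/N}$, and the actions (\ref{eq:tdrecall1}), (\ref{eq:tdrecall2}) are read off in that basis --- something your inductive argument does not directly supply.
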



Let $V_0, V_1,\ldots, V_L$ denote the  
eigenspaces of $\mathcal W_0$ with eigenvalues $\theta^{(0)}_0, \theta^{(0)}_1,\dots, \theta^{(0)}_L$ given in Lemma \ref{lemspecfund}. On $\{V_n\}$, the basic operator $\mathcal W_1$ acts as:
\begin{equation}
\mathcal W_1 V_n \subseteq V_{n-1} + V_n+ V_{n+1} \qquad \qquad (0 \leq n \leq L),
\label{eq:tdrecall1}
\end{equation}
where $V_{-1} = 0$, $V_{L+1}= 0$. On the other hand,  let $V^*_0, V^*_1,\ldots, V^*_L$ denote the  eigenspaces of $\mathcal W_1$ with eigenvalues $\theta^{(1)}_0,\theta^{(1)}_1,\dots,\theta^{(1)}_L$. The basic operator $\mathcal W_0$ acts as:
\begin{equation}
\mathcal W_0 V^*_s \subseteq V^*_{s-1} + V^*_s+ V^*_{s+1} \qquad \qquad (0 \leq s \leq L),
\label{eq:tdrecall2}
\end{equation}
where $V^*_{-1} = 0$, $V^*_{L+1}= 0$. Note that the eigenbasis in which $\mathcal W_0$ (resp. $\mathcal W_1$) is diagonalized has been constructed and described in details in \cite{Bas3}, as well as the entries of the block tridiagonal matrix $\mathcal W_1$ (resp. $\mathcal W_0$) in the same basis. \\

Note that $\text{Trace}(\mathcal{W}_i)=\sum\limits_{n=0}^L{\text{dim}\left(V_n\right)\left(a_2^{(i)}(q)q^{L-2n}+a_3^{(i)}(q)q^{-L+2n}\right)}$, and $\text{Trace}(\mathcal{W}_i)=\epsilon_i(q+q^{-1})^L$, then $\text{dim}(V_n) = \left(\begin{array}{c}
L\\
n
\end{array}\right)$, and $a_2^{(i)}(q)+a_3^{(i)}(q)=\epsilon_i$ if $q^2 \ne -1$. In addition, $a_2^{(i)}(q)a_3^{(i)}(q)=-\frac{k_+k_-}{(q-q^{-1})^2}$. Therefore, if $q^2 \ne -1$, then
\begin{eqnarray*}
a_2^{(i)}(q), a_3^{(i)}(q) \in \left\{\frac{1}{2}\left(\epsilon_i+\sqrt{\epsilon_i^2+\frac{4k_+k_-}{(q-q^{-1})^2}}\right),\frac{1}{2}\left(\epsilon_i-\sqrt{\epsilon_i^2+\frac{4k_+k_-}{(q-q^{-1})^2}}\right)  \right\}.
\end{eqnarray*}
Up to now, the parameter $q$ is generic. Let us now consider the case $q= e^{i\pi/N}$.  Starting from the known results for $q$ generic, for $q=e^{i\pi/N}$ one observes that additional degeneracies in the spectra of   $\mathcal W_0, \mathcal W_1$ occur.
\begin{lem}\label{lemm1} For $q=e^{i\pi/N}, N>2$, the operator $\mathcal W_i$, $i=0,1$, has $N$ distinct eigenvalues. They read:
\begin{equation}
\theta^{(i)}_t= a_2^{(i)}(q) q^{L-2t} + a_3^{(i)}(q)q^{-L+2t} \qquad \mbox{for}  \quad t=0,1,...,N-1.\label{specfund}
\end{equation}

If $k_{\pm}\ne 0$ and use the parametrization (\ref{paramet}), then $~a_2^{(i)}(q) = \frac{e^{\alpha_i}}{2}, ~ a_3^{(i)}(q) = \frac{e^{-\alpha_i}}{2}$.

If $k_-=0$, then $a_2^{(i)}(q) = \epsilon_i, ~ a_3^{(i)}(q) = 0$.

Furthermore, let $V_t^{(N)}$ (resp. $V_t^{*(N)}$) denote the eigenspace associated with $\theta^{(0)}_t$ (resp. $\theta^{(1)}_t$), one has the decomposition:
\begin{equation}
V_t^{(N)}= \bigoplus_{k=0}^{\{\frac{L}{N}\}} V_{t+kN} \qquad \mbox{and} \qquad V_t^{*(N)}= \bigoplus_{k=0}^{\{\frac{L}{N}\}} V^{*}_{t+kN}.\label{esp}
\end{equation}
\end{lem}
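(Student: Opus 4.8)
The plan is to start from the generic-$q$ result of Lemma \ref{lemspecfund}, where $\mathcal W_i$ has $L+1$ distinct eigenvalues $\theta^{(i)}_n = a_2^{(i)}(q)q^{L-2n}+a_3^{(i)}(q)q^{-L+2n}$ for $n=0,\dots,L$, and then specialize to $q=e^{i\pi/N}$ to detect collisions among these eigenvalues. The central observation is that $q^{L-2n}$ depends on $n$ only through $q^{-2n}$, and since $q^{2N}=e^{2i\pi}=1$, the map $n\mapsto q^{-2n}$ is periodic with period $N$. First I would show that $\theta^{(i)}_n=\theta^{(i)}_{n'}$ precisely when $q^{L-2n}=q^{L-2n'}$ together with $q^{-L+2n}=q^{-L+2n'}$ (both follow from $q^{2(n-n')}=1$), i.e.\ iff $n\equiv n' \pmod N$; a short argument is needed to rule out the ``crossed'' coincidence $a_2 q^{L-2n}+a_3q^{-L+2n}=a_2q^{L-2n'}+a_3q^{-L+2n'}$ with $q^{-2n}\neq q^{-2n'}$, which would force $a_2^{(i)}(q)a_3^{(i)}(q)=-k_+k_-/(q-q^{-1})^2$-type constraints that fail for generic boundary parameters. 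Assuming $N>2$ guarantees $q^2\neq 1$ and $q^4\neq 1$, so that the two branches $q^{L-2n}$ and $q^{-L+2n}$ do not themselves degenerate, which is what keeps the count honest.

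Having established that $\theta^{(i)}_n$ depends only on the residue $t=n \bmod N$, I would conclude that the distinct eigenvalues are indexed by $t\in\{0,1,\dots,N-1\}$, giving exactly $N$ distinct values $\theta^{(i)}_t=a_2^{(i)}(q)q^{L-2t}+a_3^{(i)}(q)q^{-L+2t}$, which is the first assertion. The formulas for $a_2^{(i)}(q),a_3^{(i)}(q)$ under the parametrization (\ref{paramet}) and in the case $k_-=0$ are inherited verbatim from Lemma \ref{lemspecfund}, since the functions $a_2^{(i)},a_3^{(i)}$ are unchanged by the specialization of $q$; only the range of the index collapses.

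For the eigenspace decomposition (\ref{esp}), the plan is to use that $\mathcal W_i$ remains diagonalizable at $q=e^{i\pi/N}$ (this should be argued from the explicit construction in \cite{Bas3}, or from the fact that the minimal polynomial has distinct roots, one per residue class). The eigenspace $V_t^{(N)}$ of the specialized operator associated with $\theta^{(0)}_t$ is then the direct sum of all the old generic eigenspaces $V_n$ whose index $n$ satisfies $n\equiv t\pmod N$, because those are exactly the spaces on which $\mathcal W_0$ acts by the now-common eigenvalue $\theta^{(0)}_t$. Writing $n=t+kN$ and noting that $0\le n\le L$ forces $0\le k\le \{L/N\}$ (with $\{\cdot\}$ the integer part, as fixed in the Notations), one obtains $V_t^{(N)}=\bigoplus_{k=0}^{\{L/N\}}V_{t+kN}$, and identically $V_t^{*(N)}=\bigoplus_{k=0}^{\{L/N\}}V^*_{t+kN}$ for $\mathcal W_1$.

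The main obstacle I anticipate is not the periodicity bookkeeping but ruling out \emph{accidental} coincidences: one must be sure that the only way two generic eigenvalues merge at $q=e^{i\pi/N}$ is through $n\equiv n'\pmod N$, and not through a fortuitous relation between $a_2^{(i)}(q)$ and $a_3^{(i)}(q)$ at the special value of $q$. This requires genericity of the boundary parameters $\epsilon_i,k_\pm$ (equivalently of $\alpha_i,\eta$), and the cleanest way to secure it is to use the product $a_2^{(i)}(q)a_3^{(i)}(q)=-k_+k_-/(q-q^{-1})^2$ recorded just before the lemma to show the crossed equation has no solution for generic data. A secondary subtlety is the edge case $N\le 2$ (excluded by hypothesis $N>2$) and the degenerate boundary case $k_-=0$, where $a_3^{(i)}(q)=0$ and the eigenvalues reduce to $\epsilon_i q^{L-2t}$; there the periodicity argument still applies to the single surviving branch, so the count $N$ is preserved and must be checked separately but causes no real difficulty.
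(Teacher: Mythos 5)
Your proposal is correct and follows essentially the same route as the paper: specialize the generic spectrum of Lemma \ref{lemspecfund} to $q=e^{i\pi/N}$, use the periodicity $q^{2N}=1$ to see that $\theta^{(i)}_n$ depends only on $t=n \bmod N$, and obtain the decomposition (\ref{esp}) from the fact that the explicit generic eigenvectors of \cite{Bas3} remain well-defined and linearly independent at the root of unity. The only substantive difference is that you spell out the genericity argument ruling out accidental ``crossed'' coincidences between the branches $q^{L-2n}$ and $q^{-L+2n}$ (and treat $k_-=0$ separately), a point the paper's proof leaves implicit.
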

\begin{proof} For generic parameters $\alpha_i,\eta$, the eigenvectors with explicit formulae given in \cite{Bas3} remain well-defined and linearly independent for $q=e^{i\pi/N}$. Given $n$, there exist uniquely integers $t,k$ such that $n=t+kN$ with $0\leq n \leq L$, and $0\le t \le N-1$. From (\ref{specgen}), one immediately finds (\ref{specfund}).
We now show (\ref{esp}). Let $t$ be fixed, the eigenspace denoted $V^{(N)}_t$ (resp. $V^{*(N)}_t$ ) is spanned by the eigenvectors of $V_t,V_{t+N},...$ (resp. $V^*_t,V^*_{t+N},...$) evaluated at $q=e^{i\pi/N}$.  
\end{proof}

The dimension of the eigenspaces $V_t^{(N)}$, $V^{*(N)}_t$ is computed based on the above decomposition. It follows:
\begin{equation}
dim(V_t^{(N)})=dim({V_t^*}^{(N)})=\sum\limits_{k=0}^{\{\frac{L}{N}\}} \left(\begin{array}{c} L\\ t+kN \end{array}\right).
\end{equation}

From above results,  using (\ref{eq:tdrecall1}), (\ref{eq:tdrecall2}) it follows:
\begin{cor}\label{cor1} Let $V_t^{(N)},V_t^{*(N)}$ be defined as in Lemma \ref{lemm1}. 
\begin{eqnarray}
&&\mathcal W_0 V_t^{(N)} \subseteq V_{t}^{(N)},\label{tdroot1}\\
&&\mathcal W_1 V_t^{(N)} \subseteq V_{t+1}^{(N)} + V_t^{(N)} + V_{t-1}^{(N)}  \qquad \qquad (0 \leq t \leq N-1),\nonumber
\end{eqnarray}
where $V_{-1}^{(N)} = V_{N-1}^{(N)}$ and $V_{N}^{(N)} = V_{0}^{(N)}$
and
\begin{eqnarray}
&&\mathcal W_1 V_t^{*(N)} \subseteq V_{t}^{*(N)},\label{tdroot2}\\
&&\mathcal W_0 V_t^{*(N)} \subseteq V_{t+1}^{*(N)} + V_t^{*(N)} + V_{t-1}^{*(N)}  \qquad \qquad (0 \leq t \leq N-1),\nonumber
\end{eqnarray}
where $V_{-1}^{*(N)} = V_{N-1}^{*(N)}$ and $V_{N}^{*(N)} = V_{0}^{*(N)}$.
\end{cor}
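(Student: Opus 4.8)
The plan is to derive Corollary \ref{cor1} directly from the generic tridiagonal action recorded in (\ref{eq:tdrecall1}), (\ref{eq:tdrecall2}) together with the root-of-unity decomposition (\ref{esp}) established in Lemma \ref{lemm1}; no fresh computation with the explicit eigenvectors is needed.

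First I would dispose of the diagonal inclusions (\ref{tdroot1}) and (\ref{tdroot2}). At $q=e^{i\pi/N}$ one has $q^{2N}=1$, so from (\ref{specfund}) the eigenvalues satisfy $\theta^{(0)}_{t+kN}=\theta^{(0)}_t$ for every $k$ with $0\le t+kN\le L$, because $q^{L-2(t+kN)}=q^{L-2t}(q^{2N})^{-k}=q^{L-2t}$. Hence all the summands appearing in $V_t^{(N)}=\bigoplus_k V_{t+kN}$ are eigenspaces of $\mathcal W_0$ for the single eigenvalue $\theta^{(0)}_t$, and $V_t^{(N)}$ is itself the full $\theta^{(0)}_t$-eigenspace of $\mathcal W_0$. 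Therefore $\mathcal W_0 V_t^{(N)}=\theta^{(0)}_t V_t^{(N)}\subseteq V_t^{(N)}$, and the analogous argument with $\mathcal W_1$ and the $V^*_{t+kN}$ gives (\ref{tdroot2}).

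For the block-tridiagonal inclusion, I would take an arbitrary $v\in V_t^{(N)}$ and write $v=\sum_k v_k$ with $v_k\in V_{t+kN}$ according to (\ref{esp}). Applying (\ref{eq:tdrecall1}) termwise yields $\mathcal W_1 v_k\in V_{t+kN-1}+V_{t+kN}+V_{t+kN+1}$, and it remains to reassemble these summands into the cyclic blocks $V_{t-1}^{(N)},V_t^{(N)},V_{t+1}^{(N)}$. The middle terms give $\sum_k V_{t+kN}=V_t^{(N)}$ at once, while for the off-diagonal terms the essential reindexing is $t+kN\pm1=(t\pm1)+kN$, so that for $1\le t\le N-2$ every lower (resp. upper) summand already lies inside $V_{t-1}^{(N)}$ (resp. $V_{t+1}^{(N)}$) by the definition (\ref{esp}).

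The only genuine subtlety, and the step I expect to be the main obstacle, is the behaviour at the two boundary values $t=0$ and $t=N-1$, where the naive index leaves $\{0,\dots,N-1\}$ and the cyclic identifications $V_{-1}^{(N)}=V_{N-1}^{(N)}$, $V_{N}^{(N)}=V_0^{(N)}$ must be justified. Here I would argue carefully: for $t=0$ the lower summand is $V_{kN-1}$, which vanishes for $k=0$ (since $V_{-1}=0$ in (\ref{eq:tdrecall1})) and for $k\ge1$ equals $V_{(N-1)+(k-1)N}\subseteq V_{N-1}^{(N)}$, matching $V_{-1}^{(N)}=V_{N-1}^{(N)}$; symmetrically, for $t=N-1$ the upper summand $V_{(N-1)+kN+1}=V_{(k+1)N}$ either exceeds $L$ (hence vanishes) or sits in $V_0^{(N)}$, matching $V_N^{(N)}=V_0^{(N)}$. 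Thus the linear tridiagonal structure of the generic case closes up into a genuinely \emph{cyclic} tridiagonal structure at the root of unity, with the vanishing boundary terms $V_{-1}=V_{L+1}=0$ of (\ref{eq:tdrecall1}) being precisely absorbed by the wrap-around. Collecting the three families gives $\mathcal W_1 V_t^{(N)}\subseteq V_{t+1}^{(N)}+V_t^{(N)}+V_{t-1}^{(N)}$, and the starred relation in (\ref{tdroot2}) follows verbatim after exchanging the roles of $\mathcal W_0,\mathcal W_1$ and of $V_n,V^*_n$ via (\ref{eq:tdrecall2}).
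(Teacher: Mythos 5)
Your proof is correct and takes essentially the same approach as the paper, which states the corollary as an immediate consequence of Lemma \ref{lemm1} and the generic tridiagonal actions (\ref{eq:tdrecall1}), (\ref{eq:tdrecall2}) — precisely the termwise decomposition and reindexing $t+kN\pm 1=(t\pm 1)+kN$ that you carry out. Your careful handling of the wrap-around at $t=0$ and $t=N-1$, where the vanishing boundary spaces $V_{-1}=V_{L+1}=0$ are absorbed into the cyclic identifications $V_{-1}^{(N)}=V_{N-1}^{(N)}$ and $V_{N}^{(N)}=V_{0}^{(N)}$, simply makes explicit the detail the paper leaves implicit.
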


For $q$ a root of unity, besides the two fundamental operators of the $q-$Onsager algebra, two additional operators that are called the {\it divided polynomials} can be introduced\footnote{These operators can be seen as analogs of the divided powers of the Chevalley elements that generate the so-called {\it full} or {\it Lusztig} quantum group $U_q(\widehat{sl_2})$ with $q=e^{i\pi/N}$, acting on ${\cal V}^{(L)}$. } for the analysis of the spin chain. In particular, the spectrum of these  operators is described in details below. These new operators are introduced as follows. Recall that
 the fundamental generators $S^\pm,T^\pm$ of the quantum loop algebra of $sl_2$ become nilpotent, for $q$ a root of unity. Namely, $(S^\pm)^N=(T^\pm)^N=0$ for $q^{2N}=1$ \cite{DFM}. For non-vanishing values of the non-diagonal boundary parameters $k_\pm$ entering in (\ref{TDbase})-(\ref{TDbase1}), such property doesn't hold for the  fundamental generators $\mathcal W_0,\mathcal W_1$ of the $q-$Onsager algebra. Instead, given $N$, certain polynomials of $\mathcal W_0,\mathcal W_1$ are vanishing for $q^{2N}=1$. These polynomials can be constructed as follows. 

First, for small values of $N$, by straightforward calculations  using the explicit expressions of $\mathcal W_0$ and  $\mathcal W_1$ (see Appendix F)
it is not difficult to construct polynomials such that:
\begin{equation}
P^{(L,i)}_N(\mathcal W_i)=0, \quad i=0,1.\label{relP}
\end{equation}
For instance\footnote{For $N=1$, $q^2=1$ there is no polynomial. In this special case, the operators $\mathcal W_0,\mathcal W_1$ however reduce to elements that generate the undeformed 
Onsager algebra. }:   

\begin{itemize}

\item For $N=2$, $q^4=1$,

\begin{equation}
P^{(L,i)}_2(x) = x^2 + (-1)^{L+1}\left(\epsilon_i^2 - \frac{(1+(-1)^{L+1})}{2}k_+k_-\right) \mathbb{I}.\nonumber
\end{equation}

\vspace{2mm}

\item For $N=3$, $q^6=1$,

\begin{equation}
P^{(L,i)}_3(x) = x^3 -  k_+k_- x  + (-1)^{L+1}\epsilon_i(\epsilon_i^2-k_+k_-)\mathbb{I} \nonumber.
\end{equation}

\vspace{2mm}

\item For $N=4$, $q^8=1$,

\begin{equation}
P^{(L,i)}_4(x) = x^4 - 2k_+k_-x^2 + (-1)^{L+1}\left(\epsilon_i^4-2\epsilon_i^2k_+k_-+ \frac{(1+(-1)^{L+1})}{2}k_+^2k_-^2\right)\mathbb{I} \nonumber.
\end{equation}

\vspace{2mm}

\item For $N=5$, $q^{10}=1$, recall $\rho=k_+k_-(q+q^{-1})^2$,

\begin{eqnarray*}
P^{(L,i)}_5(x) &=& x^5  - \rho (q^4+q^{-4}+3) x^3    +   \rho^2 (q^2+q^{-2})^2 x\\ && +(-1)^{L+1}\frac{(\sqrt{5}+3)}{4}\epsilon_i(\epsilon_i^2-k_+k_-)(3\epsilon_i^2-\sqrt{5}\epsilon_i^2-2k_+k_-)\mathbb{I}.
\nonumber
\end{eqnarray*}

\vspace{2mm}

\item For $N=6$, $q^{12}=1$,

\begin{eqnarray*}
P^{(L,i)}_6(x) &=& x^6 -6k_+k_-x^4 +9k_+^2k_-^2x^2\\&& +(-1)^{L+1}\left(\epsilon_i^6 -6k_+k_-\epsilon_i^4 +9k_+^2k_-^2\epsilon_i^2 -2(1+(-1)^{L+1})k_+^3k_-^3\right)\mathbb{I}.\nonumber 
\end{eqnarray*}
\end{itemize}
Recall that $\mathcal W_i$, $i=0,1$ are both diagonalizable on ${\cal V}^{(L)}$. As above mentioned, for $q=e^{i\pi/N}$, there are exactly $N$ distinct eigenvalues denoted $\theta^{(i)}_n$, $n=0,1,...,N-1$. Then, observe that the above polynomials $P^{(L,i)}_N(x)$ for $N=2,3,...,6$ are nothing but the minimal polynomials associated with $\mathcal W_i$, $i=0,1$. More generally, it follows:

\begin{prop}
Let $e_k(\theta_0,\cdots,\theta_{N-1})$, $k=0,1,...,N-1$ denote the so-called elementary symmetric polynomials in the variables $\theta_n$, given by:
\begin{equation}
e_k(\theta_0,\cdots,\theta_{N-1})=\!\!\!\!\!\sum_{0\leq j_1< j_2< \cdots < j_k\leq N-1}\!\!\!\!\!\!\!\!\!\!\!\!\! \theta_{j_1}\theta_{j_2}\cdots  \theta_{j_{k}} \quad \mbox{with} \quad e_0(\theta_0,\cdots,\theta_{N-1})=1.
\end{equation}

For $N\geq 2$ generic:
\begin{equation}
P^{(L,i)}_N(x) = \sum_{k=0}^N e_k(\theta^{(i)}_0,\cdots,\theta^{(i)}_{N-1})x^{N-k}\ . \label{polymin}
\end{equation}
\end{prop}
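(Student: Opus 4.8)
The plan is to identify $P^{(L,i)}_N$ with the minimal polynomial of the operator $\mathcal{W}_i$ acting on $\mathcal{V}^{(L)}$, and then to compute this minimal polynomial explicitly from the spectral data of Lemma \ref{lemm1}. The starting point is that, for $q=e^{i\pi/N}$ with $N>2$, Lemma \ref{lemm1} asserts that $\mathcal{W}_i$ has exactly $N$ distinct eigenvalues $\theta^{(i)}_0,\dots,\theta^{(i)}_{N-1}$. If one can show in addition that $\mathcal{W}_i$ remains diagonalizable at this value of $q$, then a diagonalizable operator whose spectrum consists of exactly the $N$ simple values $\theta^{(i)}_t$ has minimal polynomial $\prod_{t=0}^{N-1}\bigl(x-\theta^{(i)}_t\bigr)$. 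Matching this monic polynomial of degree $N$ with the annihilating polynomials $P^{(L,i)}_N$ (which, for $N=2,\dots,6$, were checked by hand to be the minimal polynomials) then reduces the Proposition to a purely algebraic expansion.

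First I would establish diagonalizability at $q=e^{i\pi/N}$, which is the genuine content of the statement. For generic $q$ the operator $\mathcal{W}_i$ is diagonalizable with $L+1$ distinct eigenvalues (Lemma \ref{lemspecfund} and \cite{Bas3}); the potential danger at a root of unity is that the collapse of the spectrum from $L+1$ to $N$ values could produce nontrivial Jordan blocks. To rule this out I would invoke the explicit eigenvectors of \cite{Bas3}: as noted in the proof of Lemma \ref{lemm1}, these remain well-defined and linearly independent at $q=e^{i\pi/N}$, and by the decomposition (\ref{esp}) they organize into the eigenspaces $V^{(N)}_t=\bigoplus_{k}V_{t+kN}$. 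A dimension count then gives $\sum_{t=0}^{N-1}\dim V^{(N)}_t=\sum_{n=0}^{L}\dim V_n = 2^L=\dim\mathcal{V}^{(L)}$, so the $N$ eigenspaces exhaust $\mathcal{V}^{(L)}$ and $\mathcal{W}_i$ is semisimple. Consequently its minimal polynomial has exactly the $N$ simple roots $\theta^{(i)}_0,\dots,\theta^{(i)}_{N-1}$, and equals $\prod_{t=0}^{N-1}(x-\theta^{(i)}_t)$.

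It then remains to expand this product. By Vieta's formulas (proved by a one-line induction on $N$),
\begin{equation}
\prod_{t=0}^{N-1}\bigl(x-\theta^{(i)}_t\bigr)=\sum_{k=0}^{N}(-1)^k\,e_k\bigl(\theta^{(i)}_0,\dots,\theta^{(i)}_{N-1}\bigr)\,x^{N-k}\ ,
\end{equation}
where $e_k$ is the $k$-th elementary symmetric polynomial as defined in the statement, the alternating sign $(-1)^k$ being that of Vieta's formulas. This is exactly the claimed form (\ref{polymin}). As a consistency check I would verify that the constant terms $(-1)^N e_N=(-1)^N\prod_t\theta^{(i)}_t$ reproduce, for $N=2,\dots,6$, the explicit constant terms of $P^{(L,i)}_N$ listed above; using $\theta^{(i)}_t=a^{(i)}_2 q^{L-2t}+a^{(i)}_3 q^{-L+2t}$ together with $a^{(i)}_2+a^{(i)}_3=\epsilon_i$ and $a^{(i)}_2a^{(i)}_3=-k_+k_-/(q-q^{-1})^2$ (Lemma \ref{lemspecfund}), this is a short computation.

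The main obstacle is the diagonalizability step: everything downstream is formal once one knows that $\mathcal{W}_i$ stays semisimple in the limit $q\to e^{i\pi/N}$ and that its spectrum has the exact size $N$. Both facts rest on the explicit control of the eigenvectors of $\mathcal{W}_i$ obtained in \cite{Bas3} and re-examined in Lemma \ref{lemm1}; the delicate point to argue carefully is that these eigenvectors stay linearly independent (no degeneration) precisely at the root of unity, which guarantees that $P^{(L,i)}_N$ is not merely an annihilating polynomial but the minimal one of degree exactly $N$.
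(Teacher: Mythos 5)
Your proposal is correct and follows essentially the same route as the paper: identify $P^{(L,i)}_N$ with the minimal polynomial $\prod_{t=0}^{N-1}\bigl(x-\theta^{(i)}_t\bigr)$ of $\mathcal{W}_i$ at $q=e^{i\pi/N}$ and expand it into elementary symmetric polynomials. You are in fact more careful than the paper on two points: where the paper justifies $P^{(L,i)}_N(\mathcal{W}_i)=0$ by citing the Cayley--Hamilton theorem (which only yields the degree-$2^L$ characteristic polynomial as an annihilator), your semisimplicity argument via the eigenvectors of \cite{Bas3} and the dimension count supplies the justification actually needed; and your Vieta expansion correctly carries the sign $(-1)^k$, which the printed formula (\ref{polymin}) omits but which the paper's own examples for $N=2,\dots,6$ (e.g.\ the constant term of $P^{(L,i)}_3$) require.
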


\begin{proof}
Recall  that the distinct eigenvalues are given by $\theta^{(i)}_0,...,\theta^{(i)}_{N-1}$ in (\ref{specfund}). Introduce the minimal polynomials of $\{\mathcal W_i\}_{i=0,1}$:
\begin{equation}
P^{(L,i)}_N(x)=\prod_{n=0}^{N-1}(x-\theta^{(i)}_n).
\end{equation}
According to the Cayley-Hamilton theorem, one has $P^{(L,i)}_N(\mathcal W_i)=0$. Expanding the factorized expression above in $x$, one obtains (\ref{polymin}).
\end{proof}

For $q$ a root of unity, recall that the {\it small} or {\it restricted} quantum loop algebra of $sl_2$ can be defined: it is the quotient of the  quantum loop algebra of $sl_2$ by the relations $(S^\pm)^N=(T^\pm)^N=0$ and $S^z=0$ mod $N$ for $q^{2N}=1$. Then, the {\it full} or {\it Lusztig} quantum group can be introduced. It is generated by the fundamental Chevalley elements and additionally by the divided powers  $(S^\pm)^N/[N]_q!$, $(T^\pm)^N/[N]_q!$. The corresponding defining relations can be found in \cite{DFM}. By analogy,  let us introduce the divided polynomials:
%
\begin{equation}
{\stackrel{\footnotesize[N]}{\mathcal W}}_i = \mbox{lim}_{q\rightarrow e^\frac{I\pi}{N}} \ \frac{P_N^{(L,i)}(\mathcal W_i)}{[N]_q!} \qquad \mbox{with} \qquad i=0,1.\label{WNnew}
\end{equation}
Note that using the formulae of Appendix F, it is possible to derive the generic expression of ${\stackrel{\footnotesize[N]}{\mathcal W}}_i$, $i=0,1$ for arbitrary $N$.\vspace{2mm} 

According to the spectral properties of the fundamental operators $\mathcal W_0,\mathcal W_1$, the eigenvalues of the divided polynomials (\ref{WNnew}) can be easily obtained.
\begin{lem} Let $N >2$ denote an integer. Let $\tilde{\theta}^{(i)}_n$, $n=0,1,...,L$ denote the eigenvalues of the  divided polynomials  ${\stackrel{\footnotesize[N]}{\mathcal W}}_i$, $i=0,1$. For all $n=0,1,...,L$ there exist uniquely $k,t$ non-negative integers such that $n=t+kN$, $0\leq t \leq N-1$. For $n=0,1,...,L$, the eigenvalues read:
\begin{equation}
\tilde{\theta}^{(i)}_n = C_0^{(n,i)}\left({\partial_q(a_2^{(i)})}(q^{L-2n+1}-q^{-L+2n+1})+ (L-2n)(a^{(i)}_2(q) q^{L-2n} - a^{(i)}_3(q) q^{-L+2n})\right)   \label{spectnew}
\end{equation}
where $q=e^{I\pi/N}$ and
\begin{equation}
C_0^{(n,i)}=-\frac{(q-q^{-1})}{2N[N-1]_q!}  \prod_{j=0,j\neq t}^{N-1} (\theta^{(i)}_t-\theta^{(i)}_j).\nonumber
\end{equation}
The eigenspace of ${\stackrel{\footnotesize[N]}{\mathcal W}}_0$ (resp. ${\stackrel{\footnotesize[N]}{\mathcal W}}_1$) corresponding to $\tilde{\theta}^{(0)}_n$ (resp. $\tilde{\theta}^{(1)}_n$) is $V_n$ (resp. $V_n^*$) at $q=e^{I\pi/N}$.
\end{lem}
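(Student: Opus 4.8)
The plan is to compute the eigenvalues of $\NWex{N}_i$ directly from the definition (\ref{WNnew}) by analysing the limit $q\to e^{i\pi/N}$ of $P_N^{(L,i)}(\mathcal{W}_i)/[N]_q!$ acting on each eigenspace $V_n$ (resp.\ $V_n^*$). First I would fix $i$ and recall from Lemma \ref{lemspecfund} that, for generic $q$, $\mathcal{W}_i$ is diagonalizable on ${\cal V}^{(L)}$ with the $L+1$ eigenvalues $\theta^{(i)}_n = a_2^{(i)}(q)q^{L-2n}+a_3^{(i)}(q)q^{-L+2n}$, eigenspace $V_n$. Since everything is diagonal in this basis, it suffices to evaluate the scalar $P_N^{(L,i)}(\theta^{(i)}_n)/[N]_q!$ in the limit. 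The key point is that at $q=e^{i\pi/N}$ the $L+1$ generic eigenvalues collapse onto the $N$ distinct values of Lemma \ref{lemm1}, and by the Proposition $P_N^{(L,i)}(x)=\prod_{m=0}^{N-1}(x-\theta^{(i)}_m)$ is exactly the minimal polynomial, whose every root is one of these $N$ limiting values. Hence each factor $(\theta^{(i)}_n - \theta^{(i)}_m)$ tends to a value that vanishes precisely when $n\equiv m \ (\mathrm{mod}\ N)$, while $[N]_q!$ also vanishes. This produces an indeterminate $0/0$, and the whole computation is the careful resolution of this limit.

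Next I would organise the limit as follows. Writing $n=t+kN$ with $0\le t\le N-1$ as in Lemma \ref{lemm1}, exactly one factor of the product, namely $(x-\theta^{(i)}_t)$ evaluated along $x=\theta^{(i)}_n$, vanishes in the limit (because $\theta^{(i)}_n\to\theta^{(i)}_t$ as $q\to e^{i\pi/N}$), while the remaining $N-1$ factors $\prod_{m\neq t}(\theta^{(i)}_t-\theta^{(i)}_m)$ tend to finite nonzero constants and assemble into the prefactor $C_0^{(n,i)}$. The denominator $[N]_q!$ I would write as $[N]_q\,[N-1]_q!$, with $[N-1]_q!\to [N-1]_q!$ finite and $[N]_q=(q^N-q^{-N})/(q-q^{-1})\to 0$. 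Thus the limit reduces to a single ratio of two quantities each vanishing like $(q-e^{i\pi/N})$. I would therefore apply a first-order Taylor expansion (l'Hôpital in the variable $q$) to both the vanishing factor $\theta^{(i)}_n(q)-\theta^{(i)}_t(q)$ and to $[N]_q$ near $q=e^{i\pi/N}$. Differentiating $\theta^{(i)}_n(q)=a_2^{(i)}(q)q^{L-2n}+a_3^{(i)}(q)q^{-L+2n}$ with respect to $q$ yields precisely the two structures appearing in (\ref{spectnew}): the term $\partial_q(a_2^{(i)})(q^{L-2n+1}-q^{-L+2n+1})$ coming from differentiating the amplitudes (using $a_2^{(i)}a_3^{(i)}$ fixed, so that $\partial_q a_3^{(i)}$ is tied to $\partial_q a_2^{(i)}$), and the term $(L-2n)(a_2^{(i)}q^{L-2n}-a_3^{(i)}q^{-L+2n})$ coming from differentiating the explicit powers of $q$. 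Collecting the normalisation $(q-q^{-1})/(2N[N-1]_q!)$ from $\partial_q[N]_q$ evaluated at the root of unity produces the stated constant $C_0^{(n,i)}$.

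I would then verify the assertion about eigenspaces: since $\NWex{N}_i$ is, for generic $q$, a polynomial in $\mathcal{W}_i$ divided by a scalar, it is simultaneously diagonalized by the eigenbasis of $\mathcal{W}_i$, and the limit is taken componentwise on each $V_n$; hence $V_n$ (resp.\ $V_n^*$) remains the eigenspace of $\NWex{N}_0$ (resp.\ $\NWex{N}_1$) for the eigenvalue $\tilde\theta^{(i)}_n$, with the eigenvectors remaining well-defined at $q=e^{i\pi/N}$ by the linear-independence statement already used in the proof of Lemma \ref{lemm1}. Note that because $\theta^{(i)}_n$ depends on $n$ only through $t=n\ \mathrm{mod}\ N$ at the root of unity but the derivative $\partial_q\theta^{(i)}_n$ genuinely depends on the full index $n$, distinct $V_n$ with the same $t$ carry distinct eigenvalues of $\NWex{N}_i$; this is the mechanism by which the divided polynomial resolves the degeneracy of $\mathcal{W}_i$, exactly parallel to how the divided powers $S^{\pm(N)}$ act in the periodic case.

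The main obstacle I anticipate is bookkeeping in the $0/0$ limit rather than any conceptual difficulty: one must be sure that exactly one factor of $P_N^{(L,i)}$ degenerates (not more), that the surviving product correctly reproduces $C_0^{(n,i)}=-\frac{(q-q^{-1})}{2N[N-1]_q!}\prod_{j\neq t}(\theta^{(i)}_t-\theta^{(i)}_j)$, and that the derivative of the amplitude $a_2^{(i)}(q)$ is handled consistently with the constraint $a_2^{(i)}(q)a_3^{(i)}(q)=-k_+k_-/(q-q^{-1})^2$ (so that $a_3^{(i)}$ is not independent). A secondary subtlety is confirming that the Taylor expansion of $[N]_q$ at $q=e^{i\pi/N}$ contributes the factor $2N/(q-q^{-1})$ with the correct sign, which fixes the overall sign of $C_0^{(n,i)}$. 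Once these expansions are pinned down and substituted, the expression (\ref{spectnew}) follows by direct algebra, and the parametrization cases $k_\pm\neq 0$ and $k_-=0$ are inherited verbatim from Lemma \ref{lemm1}.
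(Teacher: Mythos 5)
Your overall strategy coincides with the paper's own proof: work in the eigenbasis of $\mathcal{W}_i$, observe that for $n=t+kN$ exactly one factor of $P_N^{(L,i)}\bigl(\theta_n^{(i)}(q)\bigr)$ vanishes as $q\to e^{i\pi/N}$, pull the $N-1$ nonvanishing factors together with $(q-q^{-1})/[N-1]_q!$ out as $C_0^{(n,i)}$, and resolve the remaining $0/0$ ratio $\bigl(\theta_n^{(i)}(q)-\theta_t^{(i)}\bigr)/(q^N-q^{-N})$ by l'H\^opital; the eigenspace assertion is treated the same way in both arguments. However, there is one concrete error in your justification of the derivative step. You claim the term $\partial_q(a_2^{(i)})(q^{L-2n+1}-q^{-L+2n+1})$ arises ``using $a_2^{(i)}a_3^{(i)}$ fixed, so that $\partial_q a_3^{(i)}$ is tied to $\partial_q a_2^{(i)}$'', and you later invoke the constraint $a_2^{(i)}(q)a_3^{(i)}(q)=-k_+k_-/(q-q^{-1})^2$. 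This cannot work: first, that product is not constant in $q$ (its right-hand side varies with $q$ for fixed $k_\pm$); second, even if it were constant, differentiating it yields $\partial_q a_3^{(i)}=-(a_3^{(i)}/a_2^{(i)})\,\partial_q a_2^{(i)}$, so the amplitude-derivative terms would combine into $\partial_q(a_2^{(i)})\bigl(q^{L-2n+1}-(a_3^{(i)}/a_2^{(i)})q^{-L+2n+1}\bigr)$, which differs from the expression in (\ref{spectnew}) unless $a_2^{(i)}=a_3^{(i)}$. The identity actually required is $\partial_q a_3^{(i)}=-\partial_q a_2^{(i)}$, and it follows from the trace relation $a_2^{(i)}(q)+a_3^{(i)}(q)=\epsilon_i$ (recorded in the paper just before Lemma \ref{lemm1}, valid for $q^2\neq -1$), not from the product relation. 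With that substitution your computation reproduces (\ref{spectnew}) and the constant $C_0^{(n,i)}$ exactly; note also that in the two parametrizations listed in the lemma ($k_\pm\neq 0$ with (\ref{paramet}), or $k_-=0$) one even has $\partial_q a_2^{(i)}=0$, so the issue only concerns the general form of the formula.

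A second, smaller point of precision: you describe the vanishing factor as $\theta_n^{(i)}(q)-\theta_t^{(i)}(q)$, as if both entries run with $q$. In the definition (\ref{WNnew}) the polynomial $P_N^{(L,i)}$ has its roots frozen at the root-of-unity values $\theta_t^{(i)}$, so only $\theta_n^{(i)}(q)$ is to be differentiated. If one instead differentiated the difference of two running eigenvalues, the $\partial_q(a_2^{(i)})$ contributions would cancel (since $q^{2kN}=1$ at the root of unity) and the $(L-2n)$ prefactor would be replaced by $-2kN$, giving a genuinely different and incorrect answer. Your subsequent computation does differentiate only $\theta_n^{(i)}(q)$, so this reads as a notational slip rather than a substantive mistake, but it should be stated unambiguously, since the correctness of (\ref{spectnew}) hinges on it.
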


\begin{proof}
By Lemma \ref{lemspecfund}, recall that $\mathcal{W}_i, i = 0,1$ has $L+1$ distinct eigenvalues $\theta_0^{(i)}(q)$,  $\theta_1^{(i)}(q)$,$\dots$, $\theta_L^{(i)}(q)$, and $L+1$ eigenspaces $V_0^{(i)}(q),V_1^{(i)}(q),\dots, V_L^{(i)}(q)$, respectively. Here the notations $\theta_n^{(i)}(q), V_n^{(i)}(q), n=0,\dots,L$  are used in order to emphasize that the eigenvalues and eigenspaces of $\mathcal{W}_i$ depend on $q$. Since $\mathcal{W}_0, \mathcal{W}_1$ is a tridiagonal pair, there exist scalars $a_2^{(i)}\ne 0, a_3^{(i)} \ne 0$ such that 
\begin{equation}
\theta_n^{(i)}(q)=a_2^{(i)}(q)q^{L-2n}+a_3^{(i)}(q)q^{-L+2n}, \qquad n=0,\dots,L.\nonumber
\end{equation}
Clearly, if $q=e^{\frac{I\pi}{N}}$, then $\mathcal{W}_i$ has $N$ distinct eigenvalues 

\begin{equation}
\theta_0^{(i)}=\mathop {\lim }\limits_{q\to e^{\frac{I\pi}{N}} }\theta_0^{(i)}(q),~~\theta_1^{(i)}=\mathop {\lim }\limits_{q\to e^{\frac{I\pi}{N}}}\theta_1^{(i)}(q),\dots,~~ \theta_{N-1}^{(i)}=\mathop {\lim }\limits_{q\to e^{\frac{I\pi}{N}} }\theta_{N-1}^{(i)}(q) .\nonumber
\end{equation}
Now, define (\ref{WNnew}).
%
%
Then, $\stackrel{[N]}{\mathcal{W}_i}$ has $L+1$ eigenvalues denoted
$\widetilde{{\theta }}_n^{(i)}$
%
%
and  $V_n = \mathop{\lim}\limits_{q\to e^{\frac{I\pi}{N}}} V_n^{(i)}(q)$ denote the corresponding eigenspaces. Explicitely, $\widetilde{\theta}_n^{(i)}$ can be written as follows:
\[\widetilde{{\theta}}_{n}^{(i)}=\mathop{\lim}\limits_{q\to e^{\frac{I\pi}{N}}}\frac{(\theta_n^{(i)}(q)-\theta_0^{(i)})(\theta_n^{(i)}(q)-\theta_1^{(i)})\dots(\theta_n^{(i)}(q)-\theta_{N-1}^{(i)})}{[N]_q!},\qquad n =0,\dots, L. \]
For all $n = 0,\dots, L$, there exist uniquely $k, t$ non-negative integers, such that $n = kN +t,~~ 0\le t\le N-1$. It yields to:
\begin{equation*}
\label{eq1}
\widetilde{\theta}_n^{(i)}=\frac{(q-q^{-1})\prod\limits_{j=0,j\ne t}^{N-1}{(\theta_t^{(i)}-\theta_j^{(i)})}}{[N-1]_q!}\mathop{\lim}\limits_{q\to e^{\frac{I\pi}{N}}}\frac{\theta_n^{(i)}(q)-\theta_t^{(i)}}{q^N-q^{-N}}.
\end{equation*}
Using the L'Hopital's rule, we have
\begin{eqnarray*}
&&\mathop{\lim}\limits_{q\to e^{\frac{I\pi}{N}}}\frac{\theta_n^{(i)}(q)-\theta_t^{(i)}}{q^N-q^{-N}}=\mathop{\lim}\limits_{q\to e^{\frac{I\pi}{N}}}\frac{(\theta_n^{(i)}(q))^\prime}{-2Nq^{-1}}\\
&&~~~~= \left.\frac{\partial_q{(a_2^{(i)})} \left(q^{L-2n+1}-q^{-L+2n+1}\right)+(L-2n)\left(a_2^{(i)}(q)q^{L-2n}-a_3^{(i)}(q)q^{-L+2n}\right)}{-2N}\right|_{q=e^{I\pi/N}}.
\end{eqnarray*}

It simplifies to
\begin{equation*}
\tilde{\theta}^{(i)}_n = C_0^{(n,i)}\left(\partial_q{(a_2^{(i)})}(q^{L-2n+1}-q^{-L+2n+1})+ (L-2n)(a^{(i)}_2(q) q^{L-2n} - a^{(i)}_3(q) q^{-L+2n})\right)   \end{equation*}
where $q=e^{I\pi/N}$ and
\begin{equation*}
C_0^{(n,i)}=-\frac{(q-q^{-1})}{2N[N-1]_q!}  \prod_{j=0,j\neq t}^{N-1} (\theta^{(i)}_t-\theta^{(i)}_j).\nonumber
\end{equation*}
\end{proof}
\begin{rem}
If $N=2$, then $~\tilde{\theta}^{(i)}_n=(-1)^L(L-2n)a~$ where $~a=\epsilon_i\sqrt{k_+k_--\epsilon_i^2}$.
\end{rem}
\begin{proof}
For $q=e^{I\pi/2}$, $\mathcal{W}_i$ has the two eigenvalues $\theta_0^{(i)}, \theta_1^{(i)}$ such that
\begin{eqnarray*}
\theta_0^{(i)} =-\theta_1^{(i)}\in \left\{\sqrt{(-1)^L\left(\epsilon_i^2-\frac{1+(-1)^{L+1}}{2}k_+k_-\right)}, -\sqrt{(-1)^L\left(\epsilon_i^2-\frac{1+(-1)^{L+1}}{2}k_+k_-\right)}\right\}.
\end{eqnarray*}
We compute
\begin{eqnarray*}
\tilde{\theta}_n^{(i)}&=&\mathop{\lim}\limits_{q\to e^{\frac{I\pi}{2}}}\frac{\left(a_2^{(i)}(q)q^{L-2n}+a_3^{(i)}(q) q^{-L+2n}\right)^2-{\theta_0^{(i)}}^2}{q+q^{-1}}\\
&=&\left.(L-2n)\left(a_2^{(i)}(q)q^{L-2n}+a_3^{(i)}(q) q^{-L+2n}\right)\left(a_2^{(i)}(q)q^{L-2n-1}-a_3^{(i)}(q) q^{-L+2n-1}\right)\right|_{q=e^{I\pi/2}}\\
&=&(-1)^L(L-2n)\epsilon_i\sqrt{k_+k_--\epsilon_i^2}
\end{eqnarray*}
\end{proof}

\begin{rem} For $k_{\pm}\ne 0$ and the parametrization (\ref{paramet}), one has the identification $a_2^{(i)}(q) = \frac{e^{\alpha_i}}{2},\ a_3^{(i)}(q)=\frac{e^{-\alpha_i}}{2}$.
For $k_-=0$, one has $a_2^{(i)}(q) = \epsilon_i, ~ a_3^{(i)}(q) =0$
\end{rem}

Using the explicit expression (\ref{WNnew}) of the divided polynomials $\cWN_0,\cWN_1$ in terms of the basic operators $\mathcal{W}_0, \mathcal{W}_1$, the action of the basic operators in each eigenbasis  implies:

\begin{prop}\label{actWN} On the eigenspaces $V_n$ (resp. $V^*_s$), the divided polynomials ${\stackrel{\footnotesize[N]}{\mathcal W}}_0,{\stackrel{\footnotesize[N]}{\mathcal W}}_1$ act as
\begin{eqnarray}
&&{\stackrel{\footnotesize[N]}{\mathcal W}}_0 V_n \subseteq V_n\ ,\label{Nstruct}\\ 
&&{\stackrel{\footnotesize[N]}{\mathcal W}}_1 V_n \subseteq V_{n-N} +  \cdots + V_n+  \cdots + V_{n+N} \qquad \qquad (0 \leq n \leq L),\nonumber\\
&&{\stackrel{\footnotesize[N]}{\mathcal W}}_1 V^*_s \subseteq V^*_{s}\ ,\nonumber\\ 
&&{\stackrel{\footnotesize[N]}{\mathcal W}}_0 V^*_s \subseteq V^*_{s-N} + \cdots + V^*_s+  \cdots + V^*_{s+N} \qquad \qquad (0 \leq s \leq L).\nonumber
\end{eqnarray}
where $V_{-1}=\dots=V_{-N}=0, V_{L+1}=\dots=V_{L+N}=0,$ and $V^*_{-1}=\dots=V^*_{-N}=0, V^*_{L+1}=\dots=V^*_{L+N}=0$.
\end{prop}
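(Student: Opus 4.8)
The plan is to reduce the four inclusions to two elementary facts. Two of them are ``diagonal'' and state nothing more than that $V_n$ (resp. $V^*_s$) is an eigenspace of $\cWN_0$ (resp. $\cWN_1$); these are immediate from the preceding Lemma. The remaining two are ``banded,'' and I would obtain them by iterating the tridiagonal action (\ref{eq:tdrecall1}), (\ref{eq:tdrecall2}) of the basic operators at generic $q$ and then passing carefully to the limit $q\to e^{I\pi/N}$.

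First I would dispose of $\cWN_0 V_n\subseteq V_n$ and $\cWN_1 V^*_s\subseteq V^*_s$. By (\ref{WNnew}), $\cWN_0$ is a limit of polynomials in $\mathcal{W}_0$, and the preceding Lemma identifies $V_n$ at $q=e^{I\pi/N}$ as the eigenspace of $\cWN_0$ with eigenvalue $\tilde\theta^{(0)}_n$; hence $\cWN_0$ acts on $V_n$ as the scalar $\tilde\theta^{(0)}_n$, which gives the inclusion, and symmetrically for $\cWN_1$ on $V^*_s$.

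For the banded inclusions I would work at generic $q$, where (\ref{eq:tdrecall1}) holds. Recall from the construction of the minimal polynomial (\ref{polymin}) that $P^{(L,1)}_N(x)=\prod_{j=0}^{N-1}(x-\theta^{(1)}_j)$, so $P^{(L,1)}_N(\mathcal{W}_1)=\prod_{j=0}^{N-1}(\mathcal{W}_1-\theta^{(1)}_j I)$ is a polynomial of degree $N$ in $\mathcal{W}_1$ with constant coefficients. Since each factor shifts the $\mathcal{W}_0$-eigenspace label by at most one unit by (\ref{eq:tdrecall1}), the product of $N$ factors shifts it by at most $N$, whence
\[
\frac{P^{(L,1)}_N(\mathcal{W}_1)}{[N]_q!}\,V_n(q)\ \subseteq\ \bigoplus_{m=n-N}^{n+N}V_m(q),
\]
with the convention $V_m(q)=0$ for $m\notin\{0,\dots,L\}$. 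Letting $q\to e^{I\pi/N}$ and using that $V_m(q)\to V_m$, this yields $\cWN_1 V_n\subseteq V_{n-N}+\cdots+V_{n+N}$, in agreement with the boundary conventions $V_{-1}=\cdots=V_{-N}=0$ and $V_{L+1}=\cdots=V_{L+N}=0$. Exchanging the roles of $\mathcal{W}_0,\mathcal{W}_1$ and of $V_n,V^*_s$ and invoking (\ref{eq:tdrecall2}) produces $\cWN_0 V^*_s\subseteq V^*_{s-N}+\cdots+V^*_{s+N}$ in the same way.

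The main obstacle I anticipate is the passage to the limit, since an inclusion of subspaces is not manifestly preserved under limits. I would make this step rigorous by recasting the banded structure as the vanishing of the projector sandwiches $E_m(q)\,P^{(L,1)}_N(\mathcal{W}_1)\,E_n(q)=0$ for $|m-n|>N$, where $E_m(q)$ projects onto $V_m(q)$; these are polynomial relations among matrix entries and hence stable under the limit, provided $E_m(q)\to E_m$. The convergence of the projectors is exactly what the explicit eigenvectors of \cite{Bas3} used in Lemma \ref{lemm1} guarantee: although $V_m$ and $V_{m+N}$ merge into a single $\mathcal{W}_0$-eigenspace at $q=e^{I\pi/N}$, they remain separated as distinct eigenspaces of the divided polynomial $\cWN_0$ by the eigenvalues $\tilde\theta^{(0)}_m$, so the limiting eigenvectors stay linearly independent and the fine decomposition $\{V_m\}$ survives. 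With that in hand the banded inclusions pass to the root of unity, completing the argument.
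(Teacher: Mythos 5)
Your proof is correct and takes essentially the same route as the paper: the paper's (very terse) justification is precisely that the divided polynomials are, by (\ref{WNnew}), limits of degree-$N$ polynomials in $\mathcal{W}_0,\mathcal{W}_1$, so the tridiagonal action (\ref{eq:tdrecall1})--(\ref{eq:tdrecall2}) of the basic operators forces a bandwidth of at most $N$, which is exactly your argument. The only addition on your side is making the limit $q\to e^{I\pi/N}$ rigorous via convergence of the projectors onto the fine decomposition $\{V_n\}$ (guaranteed by the explicit eigenvectors of \cite{Bas3} remaining linearly independent at the root of unity), a point the paper leaves implicit.
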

Explicitly, in the basis which diagonalizes $\mathcal W_0$ (resp. $\mathcal W_1$) , the matrix representing  ${\stackrel{\footnotesize[N]}{\mathcal W}}_1$ (resp. $\mathcal W_0$) is  $(2N+1)$-block diagonal.

\subsubsection{The algebra generated by $\cW_0,\cW_1,\cWN_0,\cWN_1$ for $k_-=0$ or $k_+=0$.}
Let us focus on the special class of parameters $k_+=0$ or $k_-=0$. In this case, the spectrum of the tridiagonal pair ${\cal W}_0,{\cal W}_1$ simplifies: $a^{(i)}_2(q)=\epsilon_i$, $a^{(i)}_3(q)=0$. This has the following consequence on the spectrum of the divided polynomials
$\cWN_0,\cWN_1$:
\begin{lem} Let $\tilde{\theta}^{(i)}_n$, $n=0,1,...,L$ denote the distinct eigenvalues of the  divided polynomials  ${\cWN}_i$, $i=0,1$. 
 Assume $a_2^{(i)}(q)=\epsilon_i$ and $a_3^{(i)}(q)=0$.  Then 
\begin{eqnarray}
\tilde{\theta}^{(i)}_n = \epsilon_i^N\frac{(q-q^{-1})^N}{2N}(-1)^{\frac{N+1}{2}+L}(L-2n).
\quad n=0,1,...,L. \label{spectnewred}
\end{eqnarray}
\end{lem}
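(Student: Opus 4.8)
The final lemma asks to compute the eigenvalues $\tilde{\theta}^{(i)}_n$ of the divided polynomial $\cWN_i$ in the special case $a_2^{(i)}(q)=\epsilon_i$, $a_3^{(i)}(q)=0$ (which corresponds to $k_-=0$ or $k_+=0$), and to show they equal $\epsilon_i^N\frac{(q-q^{-1})^N}{2N}(-1)^{\frac{N+1}{2}+L}(L-2n)$ at $q=e^{I\pi/N}$. The natural approach is to specialize the general eigenvalue formula already established in the previous lemma (equation (\ref{spectnew})), so this is essentially a substitution-and-simplification exercise rather than a new construction.

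**Plan of the proof.** The plan is to start from the general expression
\[
\tilde{\theta}^{(i)}_n = C_0^{(n,i)}\Big(\partial_q(a_2^{(i)})(q^{L-2n+1}-q^{-L+2n+1})+ (L-2n)(a^{(i)}_2(q) q^{L-2n} - a^{(i)}_3(q) q^{-L+2n})\Big)
\]
from (\ref{spectnew}), and substitute $a_2^{(i)}(q)=\epsilon_i$ (a constant in $q$) and $a_3^{(i)}(q)=0$. Since $\epsilon_i$ is independent of $q$, the term $\partial_q(a_2^{(i)})$ vanishes, and the $a_3^{(i)}$ term drops out as well, leaving only
\[
\tilde{\theta}^{(i)}_n = C_0^{(n,i)}(L-2n)\epsilon_i\, q^{L-2n}\big|_{q=e^{I\pi/N}}.
\]
Next I would evaluate the normalization constant $C_0^{(n,i)}=-\frac{(q-q^{-1})}{2N[N-1]_q!}\prod_{j=0,j\neq t}^{N-1}(\theta^{(i)}_t-\theta^{(i)}_j)$ under the same specialization. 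With $a_3^{(i)}=0$, the eigenvalues reduce to $\theta^{(i)}_n=\epsilon_i q^{L-2n}$, so the product over $j\neq t$ becomes $\epsilon_i^{N-1}\prod_{j\neq t}(q^{L-2t}-q^{L-2j})=\epsilon_i^{N-1}q^{(N-1)(L-2t)}\prod_{j\neq t}(1-q^{2t-2j})$.

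**Key computation: the product of roots of unity.** The main work is to evaluate $\prod_{j=0,j\neq t}^{N-1}(1-q^{2(t-j)})$ at $q=e^{I\pi/N}$, i.e. $q^2=e^{2I\pi/N}$ a primitive $N$-th root of unity. As $j$ ranges over $\{0,\dots,N-1\}\setminus\{t\}$, the differences $t-j \bmod N$ range over all nonzero residues, so this product equals $\prod_{m=1}^{N-1}(1-\zeta^{m})$ with $\zeta=q^2$ a primitive $N$-th root of unity. It is a standard fact that $\prod_{m=1}^{N-1}(1-\zeta^m)=N$. Combining this with $[N-1]_q!$ and the factors of $q^{L-2n}$, $q^{(N-1)(L-2t)}$, and the explicit powers of $(q-q^{-1})$ and $\epsilon_i$, the whole expression collapses. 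The step I expect to be the main obstacle is bookkeeping the various phase factors: tracking the powers of $q=e^{I\pi/N}$ (which produce the sign $(-1)^{\frac{N+1}{2}}$ and the $(-1)^L$), and reconciling $n=t+kN$ with the stated dependence on $n$ rather than $t$ — here one uses that $q^{L-2n}=q^{L-2t}q^{-2kN}=q^{L-2t}$ since $q^{2N}=1$, so the answer depends on $n$ only through $L-2n$ via the overall $(L-2n)$ factor already present, while the product-of-roots factor depends only on $t$ but evaluates to the $t$-independent constant $N$. Assembling $C_0^{(n,i)}$ times $(L-2n)\epsilon_i q^{L-2n}$, canceling the $N$ against the $\tfrac{1}{2N}$ and the factor of $N$ from the root-of-unity product, and carefully collecting the powers of $(q-q^{-1})$ (one factor from $C_0^{(n,i)}$ and $N-1$ factors hidden in $[N-1]_q!$ after simplification), I would arrive at the claimed closed form. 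A convenient consistency check is the case $N=2$ computed in the preceding remark, where $\tilde{\theta}^{(i)}_n=(-1)^L(L-2n)\epsilon_i\sqrt{k_+k_--\epsilon_i^2}$ specializes correctly to $\epsilon_i^2\frac{(q-q^{-1})^2}{4}(-1)^{\frac{3}{2}+L}(L-2n)$ once $k_\pm=0$.
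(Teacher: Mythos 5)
Your proposal is correct and is exactly the derivation the paper intends: the paper states this lemma with no proof at all, presenting it as an immediate specialization of the general formula (\ref{spectnew}) of the preceding lemma, and your steps — dropping the $\partial_q(a_2^{(i)})$ and $a_3^{(i)}$ terms, evaluating $\prod_{j\neq t}(\theta^{(i)}_t-\theta^{(i)}_j)=\epsilon_i^{N-1}q^{(N-1)(L-2t)}\prod_{m=1}^{N-1}(1-q^{2m})=\epsilon_i^{N-1}q^{(N-1)(L-2t)}\,N$ at $q=e^{i\pi/N}$, using $q^{L-2n}=q^{L-2t}$ and $q^{N(L-2t)}=(-1)^{L}$, and reducing $[N-1]_q!$ via the same root-of-unity product — do assemble to the stated closed form. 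The only caveat is one inherited from the paper's statement rather than from your argument: for $N$ even the symbol $(-1)^{(N+1)/2}$ is branch-ambiguous, and the unambiguous outcome of your computation matches it only with the choice $(-1)^{1/2}=-i$, while for $N$ odd the principal branch works.
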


For this special choice of parameters, note that the spectrum of the divided polynomials has an arithmetic progression with respect to the integer $n=0,1,2,...,L$. In this special case, it is not difficult to show that the divided polynomials satisfy a pair of polynomial relations. Indeed, define
 \begin{defn}
 Let $x,y$ denote commuting indeterminates. For each positive integer $N$, we define the two two-variable polynomials $p^{(i)}_{N}(x,y)$, $i=0,1$, as follows:
 \begin{eqnarray}
 \qquad p^{(i)}_{N}(x,y) = (x-y)\prod_{s=1}^{N} (x^2- 2xy +y^2 -\rho_is^2)\label{defpoly}
 \end{eqnarray}
 with
 \begin{eqnarray}
 \rho_i=(-1)^{N+1}\epsilon_i^{2N}\frac{(q-q^{-1})^{2N}}{N^2}, \quad q=e^{i\pi/N}.\nonumber
 \end{eqnarray}
 Observe $p^{(i)}_{N}(x,y)$ have a total degree $2N+1$ in $x,y$.
 \end{defn}

 \begin{lem}\label{polynul} Assume $a_2^{(i)}=\epsilon_i$ and $a_3^{(i)}=0$. For any $i=0,1$, one has:
 \begin{eqnarray}
 p^{(i)}_{N}(\tilde{\theta}_r^{(i)},\tilde{\theta}_s^{(i)})=0 \qquad \mbox{for} \qquad |r-s|\leq N\ .\nonumber
 \end{eqnarray}
 \end{lem}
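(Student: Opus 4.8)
The plan is to reduce the statement entirely to elementary arithmetic on the scalar eigenvalues $\tilde\theta^{(i)}_n$, in direct parallel with Lemma \ref{lem:polyroot}. The crucial structural observation is that each quadratic factor of the defining product depends on $x,y$ only through their difference: since $x^2-2xy+y^2=(x-y)^2$, one has
\[
p^{(i)}_N(x,y) = (x-y)\prod_{s=1}^N\big((x-y)^2-\rho_i s^2\big)\ .
\]
Writing $u=x-y$, the polynomial is therefore $u\prod_{s=1}^N(u^2-\rho_i s^2)$, whose roots are exactly $u=0$ together with $u=\pm\sqrt{\rho_i}\,s$ for $s=1,\dots,N$. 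So the whole question becomes: for which index gaps $|r-s|$ does the difference $\tilde\theta^{(i)}_r-\tilde\theta^{(i)}_s$ land on one of these roots?

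First I would invoke the explicit spectrum (\ref{spectnewred}), which exhibits the eigenvalues as an arithmetic progression: setting $\kappa_i=\epsilon_i^N\frac{(q-q^{-1})^N}{2N}(-1)^{\frac{N+1}{2}+L}$, one has $\tilde\theta^{(i)}_n=\kappa_i(L-2n)$. Hence for any $r,s$,
\[
\tilde\theta^{(i)}_r-\tilde\theta^{(i)}_s = 2\kappa_i(s-r)\ , \qquad \big(\tilde\theta^{(i)}_r-\tilde\theta^{(i)}_s\big)^2 = 4\kappa_i^2 (r-s)^2\ .
\]
The single identity to establish is then $4\kappa_i^2=\rho_i$. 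This follows by direct evaluation: squaring $\kappa_i$ produces the sign $(-1)^{N+1+2L}=(-1)^{N+1}$, so $4\kappa_i^2=(-1)^{N+1}\epsilon_i^{2N}(q-q^{-1})^{2N}/N^2$, which is precisely $\rho_i$ as defined in the statement.

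Granting this identity, the conclusion is immediate by a two-case split on $m=|r-s|\le N$. If $m=0$ the linear factor $u=\tilde\theta^{(i)}_r-\tilde\theta^{(i)}_s$ vanishes. If $1\le m\le N$, then $\big(\tilde\theta^{(i)}_r-\tilde\theta^{(i)}_s\big)^2=4\kappa_i^2 m^2=\rho_i m^2$, so the factor with index $s=m$ (which lies in the range $1,\dots,N$) vanishes; in either case $p^{(i)}_N(\tilde\theta^{(i)}_r,\tilde\theta^{(i)}_s)=0$. There is no genuine obstacle here, the lemma being purely scalar: the entire content is the verification $4\kappa_i^2=\rho_i$, and the only point requiring care is the parity bookkeeping of the sign factor $(-1)^{\frac{N+1}{2}+L}$ upon squaring, together with enforcing $|r-s|\le N$ so that the matching index $m$ does indeed appear among $s=1,\dots,N$. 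I would finally remark that, exactly as Lemma \ref{lem:polyroot} feeds into the higher order relations through the sandwich $E_i\Delta E_j=p_r(\theta_i,\theta_j)E_i{A^*}^rE_j$, this scalar lemma is the intended input for a subsequent operator identity for the divided polynomials, obtained by sandwiching $p^{(i)}_N(\cWN_0,\cWN_1)$-type expressions between the spectral projectors whose block structure is described in Proposition \ref{actWN}.
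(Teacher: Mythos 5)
Your proof is correct and follows essentially the same route as the paper: a case split on $r=s$ versus $|r-s|=k\le N$, using the arithmetic progression of the spectrum (\ref{spectnewred}) so that the $k$-th quadratic factor of $p^{(i)}_N$ vanishes. The only difference is that you make explicit the identity $4\kappa_i^2=\rho_i$ and the factorization $x^2-2xy+y^2=(x-y)^2$, which the paper's terser argument leaves implicit.
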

\begin{proof} For $r=s$, the relation obvioulsy holds. Then, suppose $|r-s|=k$ with $k\leq N$. Due to the arithmetic progression of $\tilde{\theta}_s$ for $a_2=\epsilon_i$ and $a_3=0$, one has $s^2 -2r(s\pm k) + (s\pm k)^2 =k^2$. The relation then follows.
\end{proof}

\vspace{5mm}

Let us now introduce a new algebra. The defining relations of the new algebra can be understood as a generalization of the Dolan-Grady relations (\ref{qDG}) which are recovered for $N=1$. For this reason, we decide to call this new algebra the {\it $N-th$ higher order generalization of the Onsager algebra}.

\begin{prop}\label{pro3} Let $k_-=0$ or $k_+=0$. Then, the divided polynomials
${\stackrel{\footnotesize[N]}{\mathcal W}}_i$, $i=0,1$ satisfy the defining relations of the $N-th$ higher order generalization of the Onsager algebra:
\begin{eqnarray}
\sum_{p=0}^{N}\  \sum_{k=0}^{2N+1 -2p}  (-1)^{k+p}  \rho_0^{p}\, {c}_{k}^{[N,p]}\, \big(\cWN_0\big)^{2N+1-2p-k} \cWN_1 \big(\cWN_0\big)^{k}&=&0,\label{N-DG} \\ 
\sum_{p=0}^{N}\  \sum_{k=0}^{2N+1 -2p}  (-1)^{k+p}  \rho_1^{p}\, {c}_{k}^{[N,p]}\, \big(\cWN_1\big)^{2N+1-2p-k} \cWN_0 \big(\cWN_1\big)^{k}&=&0,\nonumber
\end{eqnarray}
where $c_{2(N-p)+1-k}^{[N,p]} =c_{k}^{[N,p]}$ and
\begin{eqnarray}
c^{[N,p]}_k=\left(\begin{array}{cc}
 2N+1-2p\\
 k
 \end{array}\right)\sum\limits_{1\le s_1<\dots <s_p\le N}{s_1^2s_2^2\dots s_p^2}
\end{eqnarray}
with $k=\overline{0,N-p}$. Also, one has:
\begin{eqnarray}
\rho_i=\epsilon_i^{2N}(-1)^{N+1}\frac{(q-q^{-1})^{2N}}{N^2},\quad q=e^{i\pi/N}.\nonumber
\end{eqnarray}
\end{prop}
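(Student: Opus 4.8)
The plan is to prove Proposition \ref{pro3} by following exactly the same ``polynomial generating function'' machinery that was established earlier in the excerpt for tridiagonal pairs, but applied now to the divided polynomials $\cWN_0,\cWN_1$ rather than to a generic $q$-Onsager pair. The key structural fact, already supplied by Lemma \ref{polynul}, is that the two-variable polynomials $p^{(i)}_N(x,y)$ of total degree $2N+1$ vanish on all pairs of eigenvalues $(\tilde\theta^{(i)}_r,\tilde\theta^{(i)}_s)$ with $|r-s|\leq N$. By Proposition \ref{actWN}, the operator $\cWN_1$ raises and lowers the $\cW_0$-eigenspace index $n$ by at most $N$ (and symmetrically for $\cWN_0$ on the $\cW_1$-eigenspaces). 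These are precisely the two ingredients that drove the proof of the Theorem following Lemma \ref{lem:polyroot}, so the whole argument transposes.

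**The main steps.**

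First I would record, using the explicit eigenvalue formula \eqref{spectnewred}, that for $k_-=0$ (or $k_+=0$) the spectrum of $\cWN_i$ is the arithmetic progression $\tilde\theta^{(i)}_n = c\,(L-2n)$ with $c=\epsilon_i^N(q-q^{-1})^N(-1)^{(N+1)/2+L}/(2N)$. Substituting this into the factorized form \eqref{defpoly} and using $\rho_i=(-1)^{N+1}\epsilon_i^{2N}(q-q^{-1})^{2N}/N^2=(2c)^2$, one checks that each quadratic factor $x^2-2xy+y^2-\rho_i s^2$ evaluated at $(\tilde\theta^{(i)}_r,\tilde\theta^{(i)}_s)$ equals $(2c)^2\big((r-s)^2-s^2\big)$ up to the already-verified cancellation, so that $p^{(i)}_N(\tilde\theta^{(i)}_r,\tilde\theta^{(i)}_s)=0$ whenever $|r-s|\leq N$; this is Lemma \ref{polynul}, which I would simply invoke. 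Second, I would expand $p^{(i)}_N(x,y)=\sum_{k,l}a^{(i)}_{kl}x^k y^l$ and set $\Delta^{(i)}_N$ equal to the operator obtained by the substitution $x\mapsto\cWN_0$, $y\mapsto\cWN_0$ with $\cWN_1$ inserted in the middle, i.e. the left-hand side of \eqref{N-DG} for $i=0$ (and symmetrically for $i=1$). Third, letting $\{E_n\}$ denote the primitive idempotents of $\cW_0$ (whose eigenspaces are the $V_n$), I would compute the sandwiched expression $E_r\,\Delta^{(0)}_N\,E_s = p^{(0)}_N(\tilde\theta^{(0)}_r,\tilde\theta^{(0)}_s)\,E_r\,\cWN_1\,E_s$: the point is that each monomial $(\cWN_0)^a\,\cWN_1\,(\cWN_0)^b$ sandwiched between $E_r$ and $E_s$ collapses, because $\cWN_0$ acts as the scalar $\tilde\theta^{(0)}$ on each $V_n$ and $\cWN_1$ carries $V_s$ into $\bigoplus_{|n-s|\leq N}V_n$ by Proposition \ref{actWN}. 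Hence the coefficient of $E_r\cWN_1 E_s$ is exactly $p^{(0)}_N(\tilde\theta^{(0)}_r,\tilde\theta^{(0)}_s)$.

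**Finishing and the obstacle.**

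To conclude I would split into two cases. If $|r-s|\leq N$, then $p^{(0)}_N(\tilde\theta^{(0)}_r,\tilde\theta^{(0)}_s)=0$ by Lemma \ref{polynul}. If $|r-s|>N$, then $E_r\cWN_1 E_s=0$ because $\cWN_1$ shifts the index by at most $N$ (Proposition \ref{actWN}). In either case $E_r\Delta^{(0)}_N E_s=0$ for all $r,s$, and since $\sum_r E_r=I$ (completeness of the idempotents of the diagonalizable operator $\cW_0$) we get $\Delta^{(0)}_N=\sum_{r,s}E_r\Delta^{(0)}_N E_s=0$, which is the first relation of \eqref{N-DG}. The second relation follows verbatim with the roles of $\cW_0,\cW_1$ (hence $V_n,V^*_s$ and $\rho_0,\rho_1$) interchanged, using the symmetric statements in Proposition \ref{actWN}. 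The remaining task is purely combinatorial: to show that expanding the product $\prod_{s=1}^N(x^2-2xy+y^2-\rho_i s^2)$ times $(x-y)$ reproduces the claimed coefficients $c^{[N,p]}_k=\binom{2N+1-2p}{k}\sum_{1\le s_1<\cdots<s_p\le N}s_1^2\cdots s_p^2$, which I would establish by induction on $N$ exactly as the $q\to 1$ specialization of the expansion already carried out for \eqref{polyqons}–\eqref{cfinr}. I expect the main obstacle to be precisely this last bookkeeping step — matching the symmetric-function coefficients and verifying the symmetry $c^{[N,p]}_k=c^{[N,p]}_{2(N-p)+1-k}$ — together with carefully justifying that in this degenerate arithmetic-progression spectrum the minimal polynomial degree is still $N$ so that $\cWN_i$ is genuinely well-defined and diagonalizable; everything else is a direct transcription of the tridiagonal-pair argument of Section 3.1.
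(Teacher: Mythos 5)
Your proposal is correct and takes essentially the same approach as the paper's proof: sandwich the left-hand side between the projectors onto the eigenspaces $V_n$ of the divided polynomial, invoke Lemma \ref{polynul} when $|r-s|\le N$ and Proposition \ref{actWN} when $|r-s|>N$, and conclude by completeness of the projectors. The only cosmetic differences are that you attribute the idempotents to $\cW_0$ rather than to $\cWN_0$ (at the root of unity the $V_n$ refine, but are not, the eigenspaces of $\cW_0$), and that you spell out the binomial/elementary-symmetric-function expansion of $(x-y)\prod_{s=1}^{N}(x^2-2xy+y^2-\rho_i s^2)$ which the paper dismisses as ``straightforward''; neither affects the argument.
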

\begin{proof} Let $\Delta^{[N]}$  denote the expression of the left-hand side of the first equation (\ref{N-DG}). We
show $\Delta^{[N]}\equiv 0$. Let $E_n$ denote the projector on the eigenspace $V_n$ associated with the eigenvalue $\tilde{\theta}^{(0)}_n$. For $0\leq n \leq m \leq N$, observe $E_n\Delta^{[N]} E_m = p^{(0)}_{N}(\tilde{\theta}^{(0)}_n, \tilde{\theta}^{(0)}_m) E_n \cWN_1 E_m$ where the two-variable polynomial (\ref{defpoly}) has been expanded in a straigthforward manner as power series in the variable $x,y$. According to Proposition \ref{actWN}, one has $E_n \cWN_1 E_m=0$ for $|n-m|>N$. Using Lemma \ref{polynul}, it follows  $\Delta^{[N]}=0$. Similar arguments are used to show the second relation in (\ref{N-DG}) 
\end{proof}

Note that the defining relations above do {\it not} coincide with the higher order $q-$Dolan-Grady relations introduced in the previous Section, except for $N=r=1$.\vspace{1mm}

It remains to consider the mixed relations between the fundamental operators and the divided polynomials. The derivation of the relations below follows the same steps as the ones for the derivation of the higher $q-$Dolan-Grady relations. For this reason, we skip the details.
\begin{prop}
Let $k_-=0$ or $k_+=0$. Then, the operators $\mathcal{W}_i$ and the divided polynomials $\cWN_i$, $i =0, 1$ satisfy the mixed relations:
\begin{eqnarray}
\sum_{k=0}^{2N-1}(-1)^{k}\left[\begin{array}{cc}
2N-1\\
k
\end{array}\right]_{q=e^{i\pi/N}}\mathcal{W}_i^{2N-1-k}\stackrel{[N]}{\mathcal{W}}_{i+1}\mathcal{W}_i^k &=&0 ,\label{N-mixedDG}\\
\left(\stackrel{[N]}{\mathcal{W}}_i\right)^3\mathcal{W}_{i+1}-3\left(\stackrel{[N]}{\mathcal{W}}_i\right)^2\mathcal{W}_{i+1}\stackrel{[N]}{\mathcal{W}}_i+3\stackrel{[N]}{\mathcal{W}}_i\mathcal{W}_{i+1}\left(\stackrel{[N]}{\mathcal{W}}_i\right)^2-\mathcal{W}_i\left(\stackrel{[N]}{\mathcal{W}}_i\right)^3&=& \rho_i \big[\stackrel{[N]}{\mathcal{W}}_i,\mathcal{W}_{i+1}\big],\nonumber
\end{eqnarray}
where $\rho_i$ is defined in Proposition \ref{pro3}
\end{prop}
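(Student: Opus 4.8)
The plan is to establish both relations by the same spectral ``sandwiching'' technique used in the proof of Proposition \ref{pro3}, exploiting the fact that each divided polynomial $\stackrel{[N]}{\mathcal{W}}_i$ is a polynomial in the fundamental operator $\mathcal{W}_i$ and therefore shares its eigenspaces and eigenprojectors. Throughout, the essential simplification is that for $k_-=0$ (or $k_+=0$) one has $a^{(i)}_2(q)=\epsilon_i$, $a^{(i)}_3(q)=0$, so that the spectrum of $\mathcal{W}_i$ is simply $\theta^{(i)}_t=\epsilon_i q^{L-2t}$ (Lemma \ref{lemm1}) and the spectrum of $\stackrel{[N]}{\mathcal{W}}_i$ is the arithmetic progression given in (\ref{spectnewred}).

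For the first relation (\ref{N-mixedDG}), I would work in the eigenbasis of $\mathcal{W}_i$ with projectors $E^{(i)}_t$ onto the eigenspaces of the $N$ distinct eigenvalues $\theta^{(i)}_t$, $t=0,\dots,N-1$. Writing $\Delta$ for the left-hand side, one finds $E^{(i)}_s \Delta E^{(i)}_t = \Phi(\theta^{(i)}_s,\theta^{(i)}_t)\, E^{(i)}_s \stackrel{[N]}{\mathcal{W}}_{i+1} E^{(i)}_t$, where $\Phi(x,y)=\sum_{k=0}^{2N-1}(-1)^k\left[\begin{array}{c}2N-1\\k\end{array}\right]_q x^{2N-1-k}y^k$. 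By the $q$-binomial theorem, applied exactly as in the reduction of the higher order $q$-Dolan-Grady relations to Lusztig's $q$-Serre relations (the case $\rho_0=0$), $\Phi$ factorizes as $\Phi(x,y)=\prod_{s'=-(N-1)}^{N-1}(x-q^{2s'}y)$. Substituting $x=\theta^{(i)}_s$, $y=\theta^{(i)}_t$ and using $q^{2N}=1$, the factor indexed by $s'$ vanishes precisely when $s'\equiv t-s \pmod N$; since $t-s$ itself lies in $\{-(N-1),\dots,N-1\}$, such an index always occurs in the product. Hence $\Phi(\theta^{(i)}_s,\theta^{(i)}_t)=0$ for \emph{every} pair $(s,t)$, so $E^{(i)}_s\Delta E^{(i)}_t=0$ and $\Delta=\sum_{s,t}E^{(i)}_s\Delta E^{(i)}_t=0$. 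Here the block structure of $\stackrel{[N]}{\mathcal{W}}_{i+1}$ is irrelevant, because the scalar already vanishes on all pairs.

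For the second relation, which is the classical Dolan-Grady relation $[\stackrel{[N]}{\mathcal{W}}_i,[\stackrel{[N]}{\mathcal{W}}_i,[\stackrel{[N]}{\mathcal{W}}_i,\mathcal{W}_{i+1}]]]=\rho_i[\stackrel{[N]}{\mathcal{W}}_i,\mathcal{W}_{i+1}]$, I would instead sandwich between the projectors $F_n$ onto the eigenspaces of $\stackrel{[N]}{\mathcal{W}}_i$ (namely $V_n$ for $i=0$, $V^*_n$ for $i=1$), whose eigenvalues $\tilde\theta^{(i)}_n=C(L-2n)$ form an arithmetic progression by (\ref{spectnewred}). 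Because $\stackrel{[N]}{\mathcal{W}}_i$ and $\mathcal{W}_i$ share eigenspaces and $(\mathcal{W}_i,\mathcal{W}_{i+1})$ is a tridiagonal pair, $\mathcal{W}_{i+1}$ is block-tridiagonal in this basis: by (\ref{eq:tdrecall1})--(\ref{eq:tdrecall2}) one has $F_n\mathcal{W}_{i+1}F_m=0$ for $|n-m|>1$. Writing $\Delta'$ for (left-hand side minus right-hand side), one computes $F_n\Delta' F_m=\big((\tilde\theta^{(i)}_n-\tilde\theta^{(i)}_m)^3-\rho_i(\tilde\theta^{(i)}_n-\tilde\theta^{(i)}_m)\big)F_n\mathcal{W}_{i+1}F_m$, so only $|n-m|\le 1$ must be checked. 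For $n=m$ the difference vanishes; for $|n-m|=1$ one has $(\tilde\theta^{(i)}_n-\tilde\theta^{(i)}_m)^2=4C^2$, and the identity $4C^2=\rho_i$, which follows directly from the explicit value of $C$ in (\ref{spectnewred}) and the definition of $\rho_i$ in Proposition \ref{pro3}, makes the bracket vanish. Therefore $\Delta'=0$.

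The bulk of the argument is routine spectral bookkeeping; the two genuinely delicate points are the combinatorial check that, for the first relation, the root index $s'=t-s$ always falls within the range of the product $\prod_{s'=-(N-1)}^{N-1}$, and, for the second relation, the constant-matching $4C^2=\rho_i$ together with the verification that $\mathcal{W}_{i+1}$ is block-tridiagonal with respect to the eigenbasis of $\stackrel{[N]}{\mathcal{W}}_i$ (and not merely that of $\mathcal{W}_i$). Both rely decisively on the simplified spectrum afforded by $k_-=0$ or $k_+=0$; for generic boundary parameters the factorization of $\Phi$ and the arithmetic progression of $\tilde\theta^{(i)}_n$ both fail, which is exactly why this hypothesis is essential.
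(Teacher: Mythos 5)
Your proof is correct and follows essentially the route the paper intends: the paper omits the details, appealing to ``the same steps'' as the derivation of the higher-order $q$-Dolan-Grady relations, i.e.\ sandwiching between eigenprojectors and showing that the associated two-variable polynomial vanishes on all relevant eigenvalue pairs — exactly your argument, including the $q$-binomial factorization of $\Phi(x,y)=\prod_{s'=-(N-1)}^{N-1}(x-q^{2s'}y)$ (which kills every pair because the simplified spectrum $\theta^{(i)}_t=\epsilon_i q^{L-2t}$ has only $N$ distinct values), the block-tridiagonality of $\mathcal{W}_{i+1}$ on the fine eigenspaces $V_n$ of $\stackrel{[N]}{\mathcal{W}}_i$, and the constant check $4C^2=\rho_i$ against (\ref{spectnewred}) and Proposition \ref{pro3}. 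Note only that you (correctly) read the last term of the second relation as $\mathcal{W}_{i+1}\bigl(\stackrel{[N]}{\mathcal{W}}_i\bigr)^3$, repairing an evident typo in the displayed statement.
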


To resume, for $k_-=0$ or $k_+=0$ the algebra generated by the basic operators $\mathcal{W}_i$  of the $q-$Onsager algebra and the divided polynomials $\cWN_i$ has defining relations (\ref{qDGXXZ}), (\ref{N-DG}) and (\ref{N-mixedDG}).

\subsubsection{The algebra generated by $\cW_0,\cW_1,\NWex{2}_0,\NWex{2}_1$ for generic parameters.}
For the special case $q=i$, the algebra generated by the basic operators together with the divided polynomials can be easily identified. The generators  $\W_i$ satisfy obviously the defining relations of the $q-$Onsager algebra at $q=e^{i\pi/2}$ or, because $\rho=0$, the $q$-Serre relations at $q^2=-1$ together with
the ``nilpotency'' relations
\begin{eqnarray}
\label{nilp-rel}
\W_i^2 = (-1)^{L}\bigl(\epsilon_i^2 - \frac{(1-(-1)^{L})}{2}k_+k_-\bigr) \mathbb{I}.
\end{eqnarray}
For the mixed relations between $\W_i$ and $\NWex{2}_{i+1}$, $i=0,1$ we obtain the defining relations of the Onsager algebra for $i=0,1$:
\begin{align}
\W_i^3\NWex{2}_{i+1} -3\W_i^2\NWex{2}_{i+1} \W_i +3 \W_i \NWex{2}_{i+1} \W_i^2 - \NWex{2}_{i+1} \W_i^3 &= \rho_i^{(1)}\Bigl[\W_i,\NWex{2}_{i+1}\Bigr],\label{eq:WN2-mix-rel-1}\\
\Bigl(\NWex{2}_{i+1}\Bigr)^3\W_i -3\Bigl(\NWex{2}_{i+1}\Bigr)^2\W_i \NWex{2}_{i+1} 
+ 3 \NWex{2}_{i+1}\W_i \Bigl(\NWex{2}_{i+1}\Bigr)^2 - \W_i\Bigl(\NWex{2}_{i+1}\Bigr)^3 &= \rho_{i+1}^{(2)}\Bigl[\NWex{2}_{i+1},\W_i\Bigr],\label{eq:WN2-mix-rel-2}
\end{align}
where we set $\epsilon_0=\epsilon_+$, $\epsilon_1=\epsilon_-$ and 
\begin{equation}
\rho_i^{(1)} = 4(-1)^L \bigl(\epsilon_i^2- \frac{1-(-1)^L}{2}k_+k_-\bigr),\qquad
\rho_i^{(2)} = 4 \epsilon_{i}^2(k_+k_- - \epsilon_{i}^2).
\end{equation}
Finally, for the divided polynomials $\NWex{2}_0$ and $\NWex{2}_1$ we have 
\begin{multline}\label{eq:WN2-rel}
\Biggl[\,\NWex{2}_i\,,
\Bigl(\NWex{2}_{i}\Bigr)^4 \NWex{2}_{i+1}
-4\Bigl(\NWex{2}_{i}\Bigr)^3 \NWex{2}_{i+1}\NWex{2}_{i}
+6\Bigl(\NWex{2}_{i}\Bigr)^2 \NWex{2}_{i+1}\Bigl(\NWex{2}_{i}\Bigr)^2
-4 \NWex{2}_{i}\NWex{2}_{i+1}\Bigl(\NWex{2}_{i}\Bigr)^3 
+\NWex{2}_{i+1}\Bigl(\NWex{2}_{i}\Bigr)^4 \\
-5\rho_i^{(2)}\Bigl\{\Bigl(\NWex{2}_{i}\Bigr)^2 \NWex{2}_{i+1} -2\NWex{2}_{i} \NWex{2}_{i+1}\NWex{2}_{i} + \NWex{2}_{i+1} \Bigl(\NWex{2}_{i}\Bigr)^2 \Bigr\}
+4\bigl(\rho_i^{(2)}\bigr)^2  \NWex{2}_{i+1}
\,\Biggr]=0.
\end{multline}
\subsection{Observations about the symmetries of the Hamiltonian}
In this subsection, we study the commutation relations between the basic generators of the $q-$Onsager algebra or the divided polynomials in the basic generators and the Hamiltonian of the open XXZ spin chain for a certain choice of boundary conditions and $q$ a root of unity. Basically, let us consider the Hamiltonian:
\begin{eqnarray}
H^{(L)}_{XXZ}&=&\sum_{k=1}^{L-1}\Big(\sigma_1^{k+1}\sigma_1^{k}+\sigma_2^{k+1}\sigma_2^{k} + \Delta\sigma_z^{k+1}\sigma_z^{k}\Big) 
+\  \frac{(q-q^{-1})}{2} h_3\sigma^1_z + h_+\sigma^1_+ + h_-\sigma^1_-     \label{Hbord-2}\nonumber\\
 &&\qquad\qquad  \qquad\qquad \qquad\qquad \qquad\qquad \ +\  \frac{(q-q^{-1})}{2}\bar{h}_3\sigma^L_z + \bar{h}_+\sigma^L_+ +  \bar{h}_-\sigma^L_-  .   \nonumber
\end{eqnarray}
$\bullet$ The special case $k_-=0$.\\
In this case, the operators satsify the relations considered in the previous Section. We find:
\begin{eqnarray}
[H_{XXZ}^{(L)},\stackrel{[N]}{\mathcal{W}}_i ] = 0\nonumber
\end{eqnarray}
if and only if
\begin{eqnarray}
\left\{\begin{array}{l}
\bar h_+=\bar h_-=h_-=0,\\
\bar h_3=(-1)^i,\\
h_+ : \text{free}\\
 h_3 = \frac{(-1)^i(-k_++\epsilon_ih_+)}{k_+}.
\end{array}
\right.
\end{eqnarray}
Note that the conditions $[H_{XXZ}^{(L)},{\mathcal{W}}_i ] = 0$ are satisfied provided the above relations are satisfied too.
From the results above, we conclude that the Hamiltonian cannot commute simultaneously with both $\stackrel{[N]}{\mathcal{W}}_0$ and $\stackrel{[N]}{\mathcal{W}}_1$.  \vspace{1mm}

However, recall that for the XXZ spin chain with periodic boundary conditions, in some sectors of the spin chain
the Hamiltonian commutes with the generators of the $sl_2$ loop algebra \cite{DFM}  (see the first section of this part). By analogy, we are currently investigating the existence of sectors of the spin chain in which a non-Abelian symmetry associated with a subalgebra of the algebra generated $\stackrel{[N]}{\mathcal{W}}_0, \stackrel{[N]}{\mathcal{W}}_1$ may occur.\vspace{1mm}

Note that the results reported here are part of an ongoing project.

\chapter{Perspectives}

This thesis opens at least three new and promising directions for further research. To our knowledge, none of these have been discussed in the literature. 

\section{A family of new integrable hierarchies}
Let $A,A^*$ be the generators of the $q-$Onsager algebra. As shown by Dolan and Grady \cite{DG}, it is possible to construct a hierarchy of mutually commuting quantities that are polynomials in $A,A^*$, of the form:
\begin{eqnarray}
I_1&=&\kappa A + \kappa^* A^*,\nonumber\\
I_3&=&\kappa\left(\frac{1}{\rho}((q^2+q^{-2})AA^*A-A^2A^*-A^*A^2)+A^*\right)\nonumber\\
& & +\kappa^*\left(\frac{1}{\rho}((q^2+q^{-2})A^*AA^*-{A^*}^2A-A{A^*}^2)+A\right) \label{quant}
\end{eqnarray}
In Chapter 2, we explained that examples of such hierarchy can be generated from the transfer matrix formalism, where the transfer matrix is built from solutions to the reflection equation algebra (the so-called Sklyanin's operator). In this thesis, it is conjectured (with several supporting evidences) that the $q-$Onsager algebra automatically implies the existence of higher order relations satisfied by $A,A^*$, the so-called $r-$th higher order $q-$Dolan-Grady relations. \vspace{1mm}

Having this in mind and taking a more general point of view, let us introduce a new associative algebra with generators $X,Y$ that satisfy {\it only} the $r-$th higher order Dolan-Grady relation (\ref{qDGfinr})-(\ref{qDGfinr2}) (i.e. we do {\it not} assume the $k-$th higher order $q-$Dolan-Grady relations for $1 \leq k \leq r-1$). We call this new algebra $O_q^{(r)}$. The following problems may be considered: \vspace{1mm}

{\bf Problem 1}: For $r=1$, the $q-$Onsager algebra and the reflection equation algebra are closely related (see Chapters 1,2). Is there an $r-th$ analogue of the reflection equation algebra that would correspond to this new algebra? What is the interpretation of this new algebra in the context of scattering theory?\vspace{1mm}

{\bf Problem 2}: For $r=1$, the quantities (\ref{quant}) generate the $q-$Dolan-Grady integrable hierarchy. For $r\neq 1$, what is the structure of the new integrable hierarchy ? Find an algorithm to construct explicitly its first few elements. \vspace{1mm}

{\bf Problem 3}: For $r=1$, there is an homomorphism from the $q-$Onsager algebra to a coideal subalgebra of $U_q(\widehat{sl_2})$ (see Chapter 1). What happens for $r\neq 1$?   \vspace{1mm}

{\bf Problem 4}: For $r=1$, there are numerous examples of quantum integrable systems that are generated by the $q-$Onsager algebra for $q=1$ or $q\neq 1$: Ising, superintegrable chiral Potts model, XY, open XXZ spin chain,... For $r\neq 1$, construct explicit examples of Hamiltonian integrable systems with applications to condensed matter physics.\vspace{1mm}

\section{Cyclic tridiagonal pairs}
It is clear that one of the main ingredients in this thesis is the concept of tridiagonal pairs, provided $q$ is not a root of unity. For details, we refer the reader to Chapter 2. In particular, the definition of tridiagonal pairs is given in \cite{TD00}. Among the known examples of tridiagonal pairs, for generic values of $q$ one finds the basic operators that appear in the open XXZ spin chain, see (\ref{TDbase})-(\ref{TDbase1}) and found application to the solution of this model \cite{BK3}. \vspace{1mm}

For $q$ a root of unity, as shown in Chapter 3, the basic operators of the open XXZ spin chain, see (\ref{TDbase})-(\ref{TDbase1}) still satisfy the $q-$Onsager algebra. However, the decomposition of the vector space on which they act is different for $q$ a root of unity compared to the case $q$ generic. Indeed, for $q$ a root of unity the spectrum of the operators admits additional degeneracies. In this case, to each eigenvalue one associates an eigenspace as defined by (\ref{esp}). Recall that for $q$ generic, the ordered eigenspaces are such that $V_{-1}=V_{d+1}=0;~ V^*_{-1}=V^*_{d+1}=0$ (see Definition \ref{defitri}). However, for $q$ a root of unity, one has: $V_{-1}^{(N)}=V_{N-1}^{(N)},~ V_{0}^{(N)}=V_{N}^{(N)}; ~{V^*}_{-1}^{(N)}={V^*}_{N-1}^{(N)},~ {V^*}_{0}^{(N)}={V^*}_{N}^{(N)}$ (see Corollary \ref{cor1})

In other words, for $q$ a root of unity the action of the two basic operators is `cyclic'.\vspace{1mm}

Based on this example, a new concept may be introduced that we call `cyclic tridiagonal pairs'. For this object, we propose the following definition:
\begin{defn}\label{deficitri}
Let $V$ denote a vector space over a field $\mathbb{K}$ with finite
positive dimension. 
By a {\it cyclic tridiagonal pair} on $V$ we mean an ordered pair of linear transformations $A:V \to V$ and 
$A^*:V \to V$ that satisfy the following four conditions.
\begin{itemize}
\item[(i)] Each of $A,A^*$ is diagonalizable.
\item[(ii)] There exists an ordering $\lbrace V_i\rbrace_{i=0}^d$ of the  
eigenspaces of $A$ such that 
\begin{equation}
A^* V_i \subseteq V_{i-1} + V_i+ V_{i+1} \qquad \qquad 0 \leq i \leq d,
\label{eq:ct1}
\end{equation}
where $V_{-1} = V_{d-1}$ and $V_0=V_{d}$.
\item[(iii)] There exists an ordering $\lbrace V^*_i\rbrace_{i=0}^{\delta}$ of
the  
eigenspaces of $A^*$ such that 
\begin{equation}
A V^*_i \subseteq V^*_{i-1} + V^*_i+ V^*_{i+1} 
\qquad \qquad 0 \leq i \leq \delta,
\label{eq:ct2}
\end{equation}
where $V^*_{-1} = V^*_{\delta-1}$ and $V^*_0=V^*_{\delta}$.
\item[(iv)] There does not exist a subspace $W$ of $V$ such  that $AW\subseteq W$,
$A^*W\subseteq W$, $W\not=0$, $W\not=V$.
\end{itemize}
\end{defn}
According to this definition, the following problems may be considered:\vspace{1mm}

{\bf Problem 1}: Classify irreducible finite dimensional representations (up to isomorphisms) of cyclic tridiagonal pairs.
\vspace{1mm}

{\bf Problem 2}: Construct explicit examples of cyclic tridiagonal pairs
using the connection between the solutions of the reflection equation algebra and the $q-$Onsager algebra at roots of unity. \vspace{1mm} 

{\bf Problem 3}: Solve quantum integrable systems using the representation theory of the $q-$Onsager algebra at roots of unity and the properties of cyclic tridiagonal pairs.  \vspace{1mm}

\section{An analog of Lusztig quantum group for the $q-$Onsager algebra at roots of unity}
Let $p$ be an positive integer. Consider the quantum group $U_q(sl_2)$ at root of unity $q=e^{{i\pi}/{p}}$, denote $\overline{U}_q(sl_2)$. It is sometimes referred as the `small quantum group'. The generators are given by $E,F$ and $K^{\pm 1}$ satisfying the standard defining relations of $U_q(sl_2)$.
\begin{equation}\label{Uq-relations}
 KEK^{-1}=q^2E,\quad
  KFK^{-1}=q^{-2}F,\quad
  [E,F]=\frac{K-K^{-1}}{q-q^{-1}},
\end{equation}
with  additional relations,
\begin{equation}
  E^{p}=F^{p}=0,\quad K^{2p}= 1.
\end{equation}
It can be endowed with a Hopf algebra structure. The comultiplication is given by
\begin{equation}
  \Delta(E)= 1\otimes E+E\otimes K,\quad
  \Delta(F)=K^{-1}\otimes F+F\otimes 1,\quad
  \Delta(K)=K\otimes K.\label{Uq-comult-relations}
\end{equation}
This associative algebra is finite-dimensional, namely $\mathrm{dim}\overline{U}_{q}(sl_2) = 2p^3$.\vspace{1mm}

In the literature, the so-called full or Lusztig quantum group has been introduced.
For $q = e^{i\pi/p}$ and for any integer $p\geq 2$, it  is generated by $E$, $F$,
$K^{\pm 1}$ together with $h$ such that $K = e^{\alpha h}$ ($\alpha$ is a scalar) and the so-called Lusztig's divided powers $f\sim F^p/[p]!$ and $e\sim E^p/[p]!$.
The defining relations are given by (\ref{Uq-relations}),
\begin{equation}
  [h,e]=e,\qquad[h,f]=-f,\qquad[e,f]=2h\nonumber
\end{equation}
and the `mixed' relations
\begin{eqnarray}
  [h,K]=0,\qquad[E,e]=0,\qquad[K,e]=0,\qquad[F,f]=0,\qquad[K,f]=0,\label{zero-rel}\nonumber\\
  \big[ F,e\big] = \frac{1}{[p-1]!} K^p
  \frac{q K-q^{-1}K^{-1}}{q-q^{-1}}E^{p-1},\qquad
  [E,f]=\frac{(-1)^{p+1}}{[p-1]!} F^{p-1}\frac{q K-q^{-1} K^{-1}}{q-q^{-1}},\label{Ef-rel}\nonumber\\
  \big[h,E\big]=\frac{1}{2}E A,\quad[h,F]= - \frac{1}{2}A F,\label{hE-hF-rel}\nonumber
\end{eqnarray}
where
\begin{equation}\label{A-element}
  A=\,\sum_{s=1}^{p-1}\frac{(u_s(q^{-s-1})-u_s(q^{s-1}))K
        +q^{s-1}u_s(q^{s-1})-q^{-s-1}u_s(q^{-s-1})}{(q^{s-1}
         -q^{-s-1})u_s(q^{-s-1})u_s(q^{s-1})}\,
        u_s(K)e_s.\nonumber
\end{equation}
Here, the polynomials $u_s(K)=\prod_{n=1,\;n\neq s}^{p-1}(K-q^{s-1-2n})$, and
$e_s$ are central primitive idempotents (see \cite{FGST}) \vspace{1mm}

The Lusztig quantum group can be endowed with a Hopf algebra structure.  For instance, the comultiplication is given by:
\begin{eqnarray}
  \Delta(E)=1\otimes E + E\otimes K,\quad
  \Delta(F)=K^{-1}\otimes F + F\otimes 1,\quad
  \Delta(K)=K\otimes K,\nonumber\\
  \Delta(e)=e\otimes 1 +K^p\otimes e
  +\frac{1}{[p-1]!} \sum_{r=1}^{p-1}\frac{q^{r(p-r)}}{[r]}K^p E^{p-r}\otimes E^r
  K^{-r},\nonumber\\
 \Delta(f)= f\otimes 1 + K^p\otimes f+\frac{(-1)^p}{[p-1]!}
  \sum_{s=1}^{p-1}\frac{q^{-s(p-s)}}{[s]}K^{p+s}F^s\otimes F^{p-s}.\nonumber
\end{eqnarray}

In Chapter 3, we have seen that the operators ${\cal W}_0,{\cal W}_1,{\stackrel{\footnotesize[N]}{\mathcal W}}_0,{\stackrel{\footnotesize[N]}{\mathcal W}}_1$ satisfy certain relations at least for certain choices of the parameters $k_\pm$. According to these results, it is thus natural to consider the following problems:\vspace{1mm}

{\bf Problem 1}: Define an analog of the small quantum group for the $q-$Onsager algebra. Classify irreducible finite dimensional representations. 
\vspace{1mm}

{\bf Problem 2}: Define an analog of the Lusztig quantum group for the $q-$Onsager algebra. Classify irreducible finite dimensional representations.
\vspace{1mm}

{\bf Problem 3}: What is the analog of the Hopf algebra structure? 
\vspace{1mm}

Note that some results are already obtained in relation to Problem 1 and 2, although not published yet.
\chapter{Appendices}
\section{APPENDIX A: Coefficients $ \eta^{(m)}_{k,j}$, $M^{(r,p)}_j$, $N^{(r,p)}_j$}

\vspace{3mm}

The coefficients that appear in eqs. (\ref{eqmu1}), (\ref{eqmu2}) are such that:
\begin{eqnarray}
\eta^{(3)}_{1,0}&=& [3]_q,\qquad \eta^{(3)}_{1,1}=-[3]_q,\qquad \eta^{(3)}_{1,2}=1,\nonumber \\
\eta^{(4)}_{0,0}&=&1,\qquad \eta^{(4)}_{0,1}=q^2+q^{-2},\qquad \eta^{(4)}_{0,2}=-[3]_q,\nonumber \\
\eta^{(4)}_{1,0}&=&(q^2+q^{-2})[3]_q,\qquad \eta^{(4)}_{1,1}=-(q^2+q^{-2})[2]_q^2,\qquad \eta^{(4)}_{1,2}=[3]_q.\nonumber 
\end{eqnarray}
The recursion relations for $\eta^{(m)}_{k,j}$ are such that:
\begin{eqnarray}
\eta^{(2n+2)}_{0,0} &=& 1,\nonumber \\
\eta^{(2n+2)}_{k,0}&=&[3]_q\eta^{(2n+1)}_{k,0}+\eta^{(2n+1)}_{k,1},\qquad 1 \leq k \leq n, \nonumber \\
\eta^{(2n+2)}_{0,1}&=&\eta^{(2n+1)}_{1,0}-1, \nonumber \\
\eta^{(2n+2)}_{k,1}&=&-[3]_q\eta^{(2n+1)}_{k,0}+\eta^{(2n+1)}_{k+1,0}+\eta^{(2n+1)}_{k,2},\qquad 1 \leq k \leq n-1, \nonumber \\
\eta^{(2n+2)}_{n,1}&=&-[3]_q\eta^{(2n+1)}_{n,0}+\eta^{(2n+1)}_{n,2},\nonumber \\
\eta^{(2n+2)}_{0,2}&=&-\eta^{(2n+1)}_{1,0}, \nonumber \\
\eta^{(2n+2)}_{k,2}&=&\eta^{(2n+1)}_{k,0}-\eta^{(2n+1)}_{k+1,0},\qquad 1 \leq k \leq n-1, \nonumber \\
\eta^{(2n+2)}_{n,2}&=&\eta^{(2n+1)}_{n,0},\nonumber
\end{eqnarray}
and
\begin{eqnarray}
\eta^{(2n+3)}_{k,0}&=&[3]_q\eta^{(2n+2)}_{k-1,0}+\eta^{(2n+2)}_{k-1,1},\qquad 1 \leq k \leq n+1, \nonumber \\
\eta^{(2n+3)}_{k,1}&=&-[3]_q\eta^{(2n+2)}_{k-1,0}+\eta^{(2n+2)}_{k,0}+\eta^{(2n+2)}_{k-1,2},\qquad 1 \leq k \leq n ,\nonumber \\
\eta^{(2n+3)}_{n+1,1}&=&-[3]_q\eta^{(2n+2)}_{n,0}+\eta^{(2n+2)}_{n,2}, \nonumber \\
\eta^{(2n+3)}_{k,2}&=& \eta^{(2n+2)}_{k-1,0}-\eta^{(2n+2)}_{k,0},\qquad 1 \leq k \leq n, \nonumber \\
\eta^{(2n+3)}_{n+1,2}&=& \eta^{(2n+2)}_{n,0}. \nonumber
\end{eqnarray}

\vspace{3mm}

The coefficients that appear in eqs. (\ref{eq2}), (\ref{eq3}) are such that:

\begin{eqnarray}
M^{(r,0)}_j&=& c^{[r,0]}_j-c^{[r,0]}_1c^{[r,0]}_{j-1},\qquad j=\overline{2,2r+1}, \nonumber \\
M^{(r,0)}_{2r+2}&=&-c^{[r,0]}_1c^{[r,0]}_{2r+1}, \nonumber \\
M^{(r,p)}_0&=&c^{[r,p]}_0, p=\overline{1,r}, \nonumber \\
M^{(r,p)}_j&=&c^{[r,p]}_j-c^{[r,0]}_1c^{[r,p]}_{j-1},\qquad p=\overline{1,r},\quad j=\overline{1,2(r-p)+1}, \nonumber \\
M^{(r,p)}_{2(r-p)+2}&=&-c^{[r,0]}_1c^{[r,p]}_{2(r-p)+1},\qquad p=\overline{1,r},\nonumber 
\end{eqnarray} 
and
\begin{eqnarray}
N^{(r,0)}_j&=&c^{[r,0]}_j-c^{[r,0]}_1c^{[r,0]}_{j-1}+({c^{[r,0]}_1}^2-c^{[r,0]}_2)c^{[r,0]}_{j-2},\qquad j=\overline{3,2r+1},\nonumber \\
N^{(r,0)}_{2r+2}&=&-c^{[r,0]}_1c^{[r,0]}_{2r+1}+({c^{[r,0]}_1}^2-c^{[r,0]}_2)c^{[r,0]}_{2r}, \nonumber \\
N^{(r,0)}_{2r+3}&=&({c^{[r,0]}_1}^2-c^{[r,0]}_2)c^{[r,0]}_{2r+1}, \nonumber \\
N^{(r,1)}_0&=& 0, \nonumber \\
N^{(r,1)}_j&=&({c^{[r,0]}_1}^2-c^{[r,0]}_2)c^{[r,1]}_{j-2}-c^{[r,0]}_1c^{[r,1]}_{j-1}+c^{[r,1]}_j-c^{[r,1]}_0c^{[r,0]}_j,\qquad j=\overline{2,2r-1}, \nonumber \\
N^{(r,1)}_1&=&-c^{[r,0]}_1c^{[r,1]}_0+c^{[r,1]}_1-c^{[r,1]}_0c^{[r,0]}_1,\nonumber\end{eqnarray}
\begin{eqnarray}
N^{(r,1)}_{2r}&=&({c^{[r,0]}_1}^2-c^{[r,0]}_2)c^{[r,1]}_{2r-2}-c^{[r,0]}_1c^{[r,1]}_{2r-1}-c^{[r,1]}_0c^{[r,0]}_{2r}, \nonumber \\
N^{(r,1)}_{2r+1}&=&({c^{[r,0]}_1}^2-c^{[r,0]}_2)c^{[r,1]}_{2r-1}-c^{[r,1]}_0c^{[r,0]}_{2r+1}, \nonumber \\
N^{(r,r+1)}_j&=&-c^{[r,1]}_0c^{[r,r]}_j,\qquad j = \overline{0,1}. \nonumber 
\end{eqnarray}
For $2 \leq p \leq r,$ 
\begin{eqnarray}
N^{(r,p)}_j&=&({c^{[r,0]}_1}^2-c^{[r,0]}_2)c^{[r,p]}_{j-2}-c^{[r,0]}_1c^{[r,p]}_{j-1}+c^{[r,p]}_j-c^{[r,1]}_0c^{[r,p-1]}_j, \quad j=\overline{2, 2(r-p)+1}, \nonumber \\
N^{(r,p)}_0&=&c^{[r,p]}_0-c^{[r,1]}_0c^{[r,p-1]}_0,\nonumber \\
N^{(r,p)}_1&=&-c^{[r,0]}_1c^{[r,p]}_0+c^{[r,p]}_1-c^{[r,1]}_0c^{[r,p-1]}_1, \nonumber \\
N^{(r,p)}_{2(r-p)+2}&=&({c^{[r,0]}_1}^2-c^{[r,0]}_2)c^{[r,p]}_{2(r-p)}-c^{[r,0]}_1c^{[r,p]}_{2(r-p)+1}-c^{[r,1]}_0c^{[r,p-1]}_{2(r-p)+2}, \nonumber \\
N^{(r,p)}_{2(r-p)+3}&=&({c^{[r,0]}_1}^2-c^{[r,0]}_2)c^{[r,p]}_{2(r-p)+1}-c^{[r,1]}_0c^{[r,p-1]}_{2(r-p)+3}. \nonumber 
\end{eqnarray}

\vspace{1cm}

\newpage

\section{APPENDIX B: $A^{2r+2}{A^*}^{r+1}A$, $A^{2r+1}{A^*}^{r+1}A^2$, $A^{2r+1}{A^*}^{r+1}$}

\vspace{3mm}

In addition to (\ref{m1}), the other monomials can be written as:
\begin{eqnarray*}
&& A^{2r+2}{A^*}^{r+1}A =-\sum\limits_{i=1}^r{M^{(r,0)}_{2i+2}A^{2(r-i)}{A^*}^r(\sum\limits_{k=0}^i{\sum\limits_{j=0}^2{\rho_0^{i-k}\eta^{(2i+2)}_{k,j}A^{2-j}A^*A^{2k+j+1}}})} \nonumber\\& &~ + \sum\limits_{i=1}^r{M^{(r,0)}_{2i+1}A^{2(r-i)+1}{A^*}^r(\sum\limits_{k=1}^i{\sum\limits_{j=0}^2{\rho^{i-k}_0\eta^{(2i+1)}_{k,j}A^{2-j}A^*A^{2k+j}}}+\rho_0^i(AA^*A-A^*A^2))}\nonumber \\ & &~ - \sum\limits_{p=1}^r{{(-\rho_0)}^p(M^{(r,p)}_0A^{2(r-p)+2}{A^*}^{r+1}A-M^{(r,p)}_1A^{2(r-p)+1}{A^*}^rAA^*A+M^{(r,p)}_2A^{2(r-p)}{A^*}^rA^2A^*A)} \\& &~ - M^{(r,0)}_2A^{2r}{A^*}^rA^2A^*A-\sum\limits_{p=1}^{r-1}{{(-\rho_0)}^p\sum\limits_{i=1}^{r-p}{M^{(r,p)}_{2i+2}A^{2(r-p-i)}{A^*}^r(\sum\limits_{k=0}^i{\sum\limits_{j=0}^2{\rho_0^{i-k}\eta^{(2i+2)}_{k,j}A^{2-j}A^*A^{2k+j+1}}})}} \nonumber\\& &~ + \sum\limits_{p=1}^{r-1}{{(-\rho_0)}^p\sum\limits_{i=1}^{r-p}{M^{(r,p)}_{2i+1}A^{2(r-p-i)+1}{A^*}^r(\sum\limits_{k=1}^i{\sum\limits_{j=0}^2{\rho_0^{i-k}\eta^{(2i+1)}_{k,j}A^{2-j}A^*A^{2k+j}}}+\rho_0^i(AA^*A-A^*A^2))}}\ ,
\end{eqnarray*}
\vspace{-0.3cm}
\begin{eqnarray}
&& A^{2r+1}{A^*}^{r+1}A^2 = c^{[r,0]}_1A^{2r}{A^*}^rAA^*A^2-c^{[r,0]}_2A^{2r-1}{A^*}^rA^2A^*A^2 \nonumber\\& &~ - \sum\limits_{p=1}^{r-1}{{(-\rho_0)}^p(c^{[r,p]}_0A^{2(r-p)+1}{A^*}^{r+1}A^2-c^{[r,p]}_1A^{2(r-p)}{A^*}^rAA^*A^2+c^{[r,p]}_2A^{2(r-p)-1}{A^*}^rA^2A^*A^2)} \nonumber\\& &~ -{(-\rho_0)}^r(c^{[r,r]}_0A{A^*}^{r+1}A^2-c^{[r,r]}_1{A^*}^rAA^*A^2) \nonumber\\& &~ +\sum\limits_{i=1}^r{c^{[r,0]}_{2i+1}A^{2(r-i)}{A^*}^r(\sum\limits_{k=1}^i{\sum\limits_{j=0}^2{\rho_0^{i-k}\eta^{(2i+1)}_{k,j}A^{2-j}A^*A^{2k+1+j}}}+\rho_0^i(AA^*A^2-A^*A^3))} \nonumber\\ & &~ - \sum\limits_{i=1}^{r-1}{c^{[r,0]}_{2i+2}A^{2(r-i)-1}{A^*}^r(\sum\limits_{k=0}^i{\sum\limits_{j=0}^2{\rho_0^{i-k}\eta^{(2i+2)}_{k,j}A^{2-j}A^*A^{2k+j+2}}})}\nonumber \\& &~  +\sum\limits_{p=1}^{r-1}{{(-\rho_0)}^p\sum\limits_{i=1}^{r-p}{c^{[r,p]}_{2i+1}A^{2(r-p-i)}{A^*}^r(\sum\limits_{k=1}^i{\sum\limits_{j=0}^2{\rho_0^{i-k}\eta^{(2i+1)}_{k,j}A^{2-j}A^*A^{2k+1+j}}}+\rho_0^i(AA^*A^2-A^*A^3))}} \nonumber\\& &~ -\sum\limits_{p=1}^{r-2}{{(-\rho_0)}^p\sum\limits_{i=1}^{r-p-1}{c^{[r,p]}_{2i+2}A^{2(r-p-i)-1}{A^*}^r(\sum\limits_{k=0}^i{\sum\limits_{j=0}^2{\rho_0^{i-k}\eta^{(2i+2)}_{k,j}A^{2-j}A^*A^{2k+j+2}}})}}\ ,\nonumber
\end{eqnarray}
\vspace{-0.3cm}
\begin{eqnarray*}
&& A^{2r+1}{A^*}^{r+1}=c^{[r,0]}_1A^{2r}{A^*}^rAA^*-c^{[r,0]}_2A^{2r-1}{A^*}^rA^2A^* \\& &~ +\sum\limits_{k=1}^r{c^{[r,0]}_{2k+1}A^{2(r-k)}{A^*}^r(\sum\limits_{i=1}^k{\sum\limits_{j=0}^2{\mu^{(2k+1)}_{i,j}A^{2-j}A^*A^{2i-1+j}}}+\rho^k(AA^*-A^*A))} \\& &~ -\sum\limits_{k=1}^{r-1}{c^{[r,0]}_{2k+2}A^{2(r-k)-1}{A^*}^r(\sum\limits_{i=0}^k{\sum\limits_{j=0}^2{\mu^{(2k+2)}_{i,j}A^{2-j}A^*A^{2i+j}}})} \\& &~ -\rho^r(c^{[r,r]}_0A{A^*}^{r+1}-c^{[r,r]}_1{A^*}^rAA^*) \\& &~ -\sum\limits_{p=1}^{r-1}{\rho^p(c^{[r,p]}_0A^{2(r-p)+1}{A^*}^{r+1}-c^{[r,p]}_1A^{2(r-p)}{A^*}^rAA^*+c^{[r,p]}_2A^{2(r-p)-1}{A^*}^rA^2A^*)} \\& &~ + \sum\limits_{p=1}^{r-1}{\rho^p\sum\limits_{k=1}^{r-p}{c^{[r,p]}_{2k+1}A^{2(r-p-k)}{A^*}^r(\sum\limits_{i=1}^k{\sum\limits_{j=0}^2{\mu^{(2k+1)}_{i,j}A^{2-j}A^*A^{2i-1+j}}}+\rho^k(AA^*-A^*A))}} \\& &~ - \sum\limits_{p=1}^{r-2}{\rho^p\sum\limits_{k=1}^{r-p-1}{c^{[r,p]}_{2k+2}A^{2(r-p-k)-1}{A^*}^r(\sum\limits_{i=0}^k{\sum\limits_{j=0}^2{\mu^{(2k+2)}_{i,j}A^{2-j}A^*A^{2i+j}}})}}.
\end{eqnarray*}

\newpage

\section{APPENDIX C: Coefficients $ \eta^{(m)}_{k,j}$, $M^{(r,p)}_j$}
\vspace{3mm}
The initial values of $\eta^{(m)}_{k,j}$ are given by:\\
\[\eta^{(2)}_{0,0}=[2]_q, \qquad \eta^{(2)}_{0,1}=-1.\]
The recursion relations for $\eta^{(m)}_{k,j}$  and $M^{(r,p)}_j$ read:\\
\begin{eqnarray}
\eta^{(2n+1)}_{p,0}&=&[2]_q\eta^{(2n)}_{p,0}+\eta^{(2n)}_{p,1}, \qquad p=\overline{0,n-1},\nonumber \\
\eta^{(2n+1)}_{n,0}&=&1,\nonumber\\
\eta^{(2n+1)}_{p,1} &=&-\eta^{(2n)}_{p,0}+\eta^{(2n)}_{p-1,0},\qquad p=\overline{1,n-1},\nonumber \\
\eta^{(2n+1)}_{0,1}&=&-\eta^{(2n)}_{0,0},\nonumber\\
\eta^{(2n+1)}_{n,1}&=&\eta^{(2n)}_{n-1,0},\nonumber \\
\eta^{(2n+2)}_{p,0}&=&[2]_q\eta^{(2n+1)}_{p,0}+\eta^{(2n+1)}_{p,1}, \qquad p=\overline{0,n},\nonumber 
\end{eqnarray}
\begin{eqnarray}
\eta^{(2n+2)}_{p,1}&=&\eta^{(2n+1)}_{p-1,0}-\eta^{(2n+1)}_{p,0}, \qquad p=\overline{1,n},\nonumber \\
\eta^{(2n+2)}_{0,1}&=&-\eta^{(2n+1)}_{0,0}.\nonumber
\end{eqnarray}
\begin{eqnarray}
M^{(2t,p)}_0&=&c^{[2t,p]}_0,\nonumber \\
M^{(2t,p)}_{2t+2-2p}&=&-c^{[2t,0]}_1c^{[2t,p]}_{2t+1-2p},\nonumber \\
M^{(2t,p)}_k&=&c^{[2t,p]}_k-c^{[2t,0]}_1c^{[2t,p]}_{k-1},\qquad k=\overline{1,2t+1-2p},\nonumber \\
M^{(2t+1,t+1)}_0&=&c^{[2t+1,t+1]}_0,\nonumber\\
M^{(2t+1,t+1)}_1&=&-c^{[2t+1,t+1]}_0c^{[2t+1,0]}_1,\nonumber \\
M^{(2t+1,p)}_k&=&-c^{[2t+1,0]}_1c^{[2t+1,p]}_{k-1}+c^{[2t+1,p]}_k,\qquad k=\overline{1,2t+2-2p},\nonumber \\
M^{(2t+1,p)}_0&=&c^{[2t+1,p]}_0,\nonumber\\
M^{(2t+1,p)}_{2t+3-2p}&=&-c^{[2t+1,0]}_1c^{[2t+1,p]}_{2t+2-2p}.\nonumber
\end{eqnarray}
\newpage

\section{APPENDIX D: Algorithms for the $q$-Onsager algebra}
\begin{verbatim}
# Compute the coefficients eta[m,k,j] of A^(2n+2)A* and A^(2n+3)A*
Funct := proc(n)
  local i, k; 
  global eta, alpha, q;
  # Input the initial values
  alpha := q^2+1+1/q^2;
  eta[3, 1, 0] := alpha;
  eta[3, 1, 1] := -alpha; 
  eta[3, 1, 2] := 1; 
  eta[4, 1, 0] := alpha^2-alpha; 
  eta[4, 1, 1] := 1-alpha^2; 
  eta[4, 1, 2] := alpha; 
  eta[4, 0, 1] := alpha-1;
  eta[4, 0, 0] := 1;
  eta[4, 0, 2] := -alpha; 
  eta[5, 2, 0] := alpha^3-2*alpha^2+1;
  eta[5, 2, 1] := -alpha^3+alpha^2+alpha; 
  eta[5, 2, 2] := alpha*(alpha-1); eta[5, 1, 0] := 2*alpha-1; 
  eta[5, 1, 1] := alpha^2-3*alpha;
  eta[5, 1, 2] := -alpha^2+alpha+1;
  for i from 2 to n do
     eta[2*i+2, 0, 0] := 1;
     for k from 1 to i do 
        eta[2*i+2, k, 0] := simplify(alpha*eta[2*i+1, k, 0]+eta[2*i+1, k, 1]);
     end do;
     eta[2*i+2, 0, 1] := simplify(eta[2*i+1, 1, 0]-1);
     for k from 1 to i-1 do
        eta[2*i+2, k, 1] := simplify(-alpha*eta[2*i+1, k, 0]
                                     +eta[2*i+1, k+1, 0]+eta[2*i+1, k, 2]);
     end do;
     eta[2*i+2, i, 1] := simplify(-alpha*eta[2*i+1, i, 0]+eta[2*i+1, i, 2]);
     eta[2*i+2, 0, 2] := simplify(-eta[2*i+1, 1, 0]);
     for k from 1 to i-1 do
        eta[2*i+2, k, 2] := simplify(eta[2*i+1, k, 0]-eta[2*i+1, k+1, 0]);
     end do;
     eta[2*i+2, i, 2] := simplify(eta[2*i+1, i, 0]);
     for k from 1 to i+1 do
        eta[2*i+3, k, 0] := simplify(alpha*eta[2*i+2, k-1, 0]+eta[2*i+2, k-1, 1]);
     end do;
     for k from 1 to i do
        eta[2*i+3, k, 1] := simplify(-alpha*eta[2*i+2, k-1, 0]+eta[2*i+2, k, 0]
                                     +eta[2*i+2, k-1, 2]);
     end do;
     eta[2*i+3, i+1, 1] := simplify(-alpha*eta[2*i+2, i, 0]+eta[2*i+2, i, 2]);
     for k from 1 to i do
        eta[2*i+3, k, 2] := simplify(eta[2*i+2, k-1, 0]-eta[2*i+2, k, 0]);
     end do;
     eta[2*i+3, i+1, 2] := simplify(eta[2*i+2, i, 0])
  end do;
end proc;
\end{verbatim}
\begin{verbatim}
Result1 := proc (n)
  local i, p, d, l, F, M, N, t, tam, ta1, ta2, ta3; 
  global A, c, alpha, rho, q; 
  # Input the initial values
  alpha := q^2+1+1/q^2; 
  c[1, 0, 0] := 1; 
  c[1, 0, 3] := 1; 
  c[1, 0, 1] := alpha; 
  c[1, 0, 2] := alpha; 
  c[1, 1, 0] := -1; 
  c[1, 1, 1] := -1; 
  c[2, 0, 0] := 1; 
  c[2, 0, 5] := 1; 
  c[2, 0, 1] := expand(simplify(alpha^2-alpha-1)); 
  c[2, 0, 4] := expand(simplify(alpha^2-alpha-1)); 
  c[2, 0, 2] := expand(simplify(alpha^3-2*alpha^2+1));
  c[2, 0, 3] := expand(simplify(alpha^3-2*alpha^2+1));
  c[2, 1, 0] := expand(simplify(-alpha^2+2*alpha-2));
  c[2, 1, 3] := expand(simplify(-alpha^2+2*alpha-2));
  c[2, 1, 1] := expand(simplify(-alpha*(alpha^2-alpha-1))); 
  c[2, 1, 2] := expand(simplify(-alpha*(alpha^2-alpha-1))); 
  c[2, 2, 0] := expand(simplify((alpha-1)^2)); 
  c[2, 2, 1] := expand(simplify((alpha-1)^2));
  # Use the convention that the first factor A is denoted A[1], 
  # the second factor A^* is denoted A[2], and the third factor A is denoted A[3]
  # Input the q-Dolan-Grady relations
  if n = 0 then 
     F[1] := A[1]^3*A[2]-alpha*A[1]^2*A[2]*A[3]+alpha*A[1]*A[2]*A[3]^2-A[2]*A[3]^3
            +rho*(-A[1]*A[2]+A[2]*A[3]) = 0 
  end if;
  # Input the higher order q-Dolan-Grady relations for r = 2 
  if n = 1 then 
     F[2] := A[1]^5*A[2]^2-(alpha^2-alpha-1)*A[1]^4*A[2]^2*A[3]
            +(alpha^3-2*alpha^2+1)*A[1]^3*A[2]^2*A[3]^2
            -(alpha^3-2*alpha^2+1)*A[1]^2*A[2]^2*A[3]^3
            +(alpha^2-alpha-1)*A[1]*A[2]^2*A[3]^4-A[2]^2*A[3]^5
            +rho*(-(alpha^2-2*alpha+2)*A[1]^3*A[2]^2
                  +alpha*(alpha^2-alpha-1)*A[1]^2*A[2]^2*A[3]
                  -alpha*(alpha^2-alpha-1)*A[1]*A[2]^2*A[3]^2
                  +(alpha^2-2*alpha+2)*A[2]^2*A[3]^3)
            +rho^2*((alpha-1)^2*A[1]*A[2]^2-(alpha-1)^2*A[2]^2*A[3]) = 0
  end if; 
  if 1 < n then
     # Compute M[n, p, j], p = 0,..., n; j= 2,..., 2n+2 if p=0, 
             #                     j = 0,..., 2(n-p)+2 if p = 1,...,n
     for i from 2 to n do 
        M[i, 0, 2*i+2] := simplify(-c[i, 0, 1]*c[i, 0, 2*i+1]); 
        for l from 2 to 2*i+1 do 
           M[i, 0, l] := simplify(c[i, 0, l]-c[i, 0, 1]*c[i, 0, l-1]);
        end do; 
        for d to i do 
           M[i, d, 0] := c[i, d, 0]; 
           for l to 2*i-2*d+1 do 
              M[i, d, l] := simplify(c[i, d, l]-c[i, 0, 1]*c[i, d, l-1]);
           end do; 
           M[i, d, 2*i-2*d+2] := simplify(-c[i, 0, 1]*c[i, d, 2*i-2*d+1]); 
        end do; 
        Funct(i); 
        # Expand A^{2n+2}{A^*}^{r+1}A
        t[2] := simplify(-(sum(M[i, 0, 2*h+2]*A[1]^(2*i-2*h)*A[2]^i
                           *(sum(sum(eta[2*h+2, k, j]*A[3]^(2-j)*A[4]*A[5]^(2*k+j+1),
                                     j = 0 .. 1)+eta[2*h+2, k, 2]*A[2]*A[3]^(2*k+3), 
                                     k = 0 .. h)), h = 1 .. i))
                      +sum(M[i, 0, 2*h+1]*A[1]^(2*i-2*h+1)*A[2]^i*
                          (sum(eta[2*h+1, k, 2]*A[2]*A[3]^(2*k+2)
                              +sum(eta[2*h+1, k, j]*A[3]^(2-j)*A[4]*A[5]^(2*k+j), 
                                 j = 0 .. 1),k = 1 .. h) 
                                 +rho^h*(A[3]*A[4]*A[5]-A[2]*A[3]^2)), h = 1 .. i)
                     -M[i, 0, 2]*A[1]^(2*i)*A[2]^i*A[3]^2*A[4]*A[5]
                     -(sum(rho^p*(M[i, p, 0]*A[1]^(2*i-2*p+2)*A[2]^(i+1)*A[3]
                          +sum((-1)^j*M[i, p, j]*A[1]^(2*i-2*p+2-j)*A[2]^i*A[3]^j
                          *A[4]*A[5],j = 1 .. 2)), p = 1 .. i))
                     -(sum(rho^p*(sum(M[i, p, 2*h+2]*A[1]^(2*i-2*p-2*h)*A[2]^i
                                 *(sum(sum(eta[2*h+2, k, j]*A[3]^(2-j)*A[4]
                                 *A[5]^(2*k+j+1),j= 0 .. 1)
                                 +eta[2*h+2, k, 2]*A[2]*A[3]^(2*k+3), k = 0 .. h)), 
                                 h = 1 .. i-p)), p = 1 .. i-1))
                     +sum(rho^p*(sum(M[i, p, 2*h+1]*A[1]^(2*i-2*p-2*h+1)*A[2]^i
                                  *(sum(eta[2*h+1, k, 2]*A[2]*A[3]^(2*k+2)
                                      +sum(eta[2*h+1, k, j]*A[3]^(2-j)*A[4]
                                      *A[5]^(2*k+j),j = 0 .. 1), k = 1 .. h)
                                      +rho^h*(A[3]*A[4]*A[5]-A[2]*A[3]^2)), 
                                       h = 1 .. i-p)), p = 1 .. i-1));
        # Compute N[n, p, j], p = 0,...,n;
        #           j = 3,...,2n+2 if p =0
        #           j = 0,...,2(n-p)+3 if p = 1,...,n
        N[i, 0, 2*i+2] := simplify((c[i, 0, 1]^2-c[i, 0, 2])*c[i, 0, 2*i]
                                   -c[i, 0, 1]*c[i, 0, 2*i+1]); 
        N[i, 0, 2*i+3] := simplify((c[i, 0, 1]^2-c[i, 0, 2])*c[i, 0, 2*i+1]); 
        for l from 3 to 2*i+1 do 
           N[i, 0, l] := simplify((c[i, 0, 1]^2-c[i, 0, 2])*c[i, 0, l-2]
                               -c[i, 0, 1]*c[i, 0, l-1]+c[i, 0, l]); 
        end do; 
        N[i, 1, 1] := simplify(-2*c[i, 0, 1]*c[i, 1, 0]+c[i, 1, 1]); 
        N[i, 1, 2*i] := simplify((c[i, 0, 1]^2-c[i, 0, 2])*c[i, 1, 2*i-2]
                            -c[i, 0, 1]*c[i, 1, 2*i-1]-c[i, 1, 0]*c[i, 0, 2*i]); 
        N[i, 1, 2*i+1] := simplify((c[i, 0, 1]^2-c[i, 0, 2])*c[i, 1, 2*i-1]
                                -c[i, 1, 0]*c[i, 0, 2*i+1]); 
        N[i, 1, 0] := 0; 
        for l from 2 to 2*i-1 do 
           N[i, 1, l] := simplify((c[i, 0, 1]^2-c[i, 0, 2])*c[i, 1, l-2]
                               -c[i, 0, 1]*c[i, 1, l-1]+c[i, 1, l]
                               -c[i, 1, 0]*c[i, 0, l]); 
        end do; 
        for d from 2 to i do 
           for l from 2 to 2*i-2*d+1 do 
              N[i, d, l] := simplify((c[i, 0, 1]^2-c[i, 0, 2])*c[i, d, l-2]
                                 -c[i, 0, 1]*c[i, d, l-1]+c[i, d, l]
                                 -c[i, 1, 0]*c[i, d-1, l]);
           end do; 
           N[i, d, 0] := simplify(c[i, d, 0]-c[i, 1, 0]*c[i, d-1, 0]); 
           N[i, d, 1] := simplify(-c[i, 0, 1]*c[i, d, 0]+c[i, d, 1]
                                  -c[i, 1, 0]*c[i, d-1, 1]); 
           N[i, d, 2*i-2*d+2] := simplify((c[i, 0, 1]^2-c[i, 0, 2])
                                    *c[i, d, 2*i-2*d]
                                    -c[i, 0, 1]*c[i, d, 2*i-2*d+1]
                                    -c[i, 1, 0]*c[i, d-1, 2*i-2*d+2]); 
           N[i, d, 2*i-2*d+3] := simplify((c[i, 0, 1]^2-c[i, 0, 2])
                                       *c[i, d, 2*i-2*d+1]
                                       -c[i, 1, 0]*c[i, d-1, 2*i-2*d+3]); 
        end do; 
        for l from 0 to 1 do 
           N[i, i+1, l] := simplify(-c[i, 1, 0]*c[i, i, l]);
        end do;
        # Expand A^{2n+3}{A^*}^{n+1} 
        t[1] := simplify(sum(N[i, 0, 2*h+1]*A[1]^(2*i+2-2*h)*A[2]^i
                          *(sum(eta[2*h+1, k, 2]*A[2]*A[3]^(2*k+1)
                               +sum(eta[2*h+1, k, j]*A[3]^(2-j)
                                    *A[4]*A[5]^(2*k-1+j),
                                    j = 0 .. 1), k = 1 .. h)
                          +rho^h*(A[3]*A[4]-A[2]*A[3])), h = 1 .. i+1)
                     -(sum(N[i, 0, 2*h+2]*A[1]^(2*i-2*h+1)*A[2]^i
                          *(sum(eta[2*h+2, k, 2]*A[2]*A[3]^(2*k+2)
                               +sum(eta[2*h+2, k, j]*A[3]^(2-j)*A[4]*A[5]^(2*k+j), 
                               j = 0 .. 1), k = 0 .. h)), h = 1 .. i))
                     -(sum(rho^p*(N[i, p, 0]*A[1]^(2*i-2*p+3)*A[2]^(i+1)
                          -N[i, p, 1]*A[1]^(2*i-2*p+2)*A[2]^i*A[3]*A[4]
                          +N[i, p, 2]*A[1]^(2*i-2*p+1)*A[2]^i
                                  *A[3]^2*A[4]), p = 1 .. i))
                     +sum(rho^p*(sum(N[i, p, 2*h+1]*A[1]^(2*i-2*p-2*h+2)*A[2]^i
                                   *(sum(eta[2*h+1, k, 2]*A[2]*A[3]^(2*k+1)
                                        +sum(eta[2*h+1, k, j]*A[3]^(2-j)
                                             *A[4]*A[5]^(2*k-1+j), 
                                             j = 0 .. 1), k = 1 .. h)
                              +rho^h*(A[3]*A[4]-A[2]*A[3])), h = 1 .. i-p+1)), 
                                                             p = 1 .. i)
                     -(sum(rho^p*(sum(N[i, p, 2*h+2]*A[1]^(2*i-2*p-2*h+1)*A[2]^i
                                    *(sum(eta[2*h+2, k, 2]*A[2]*A[3]^(2*k+2)
                                         +sum(eta[2*h+2, k, j]*A[3]^(2-j)
                                         *A[4]*A[5]^(2*k+j),j = 0 .. 1), k = 0 .. h)),
                                          h = 1 .. i-p)), p = 1 .. i-1))
                     -rho^(i+1)*(N[i, i+1, 0]*A[1]*A[2]^(i+1)
                                 -N[i, i+1, 1]*A[2]^i*A[3]*A[4]));
        # Expand A^{2n+1}{A^*}^{n+1}A^2                 
        t[3] := simplify(c[i, 0, 1]*A[1]^(2*i)*A[2]^i*A[3]*A[4]*A[5]^2
                      -c[i, 0, 2]*A[1]^(2*i-1)*A[2]^i*A[3]^2*A[4]*A[5]^2
                      -(sum(rho^p*(c[i, p, 0]*A[1]^(2*i-2*p+1)*A[2]^(i+1)*A[3]^2
                           -c[i, p, 1]*A[1]^(2*i-2*p)*A[2]^i*A[3]*A[4]*A[5]^2
                           +c[i, p, 2]*A[1]^(2*i-2*p-1)*A[2]^i*A[3]^2*A[4]*A[5]^2), 
                         p = 1 .. i-1))
                      -rho^i*(c[i, i, 0]*A[1]*A[2]^(i+1)*A[3]^2
                             -c[i, i, 1]*A[2]^i*A[3]*A[4]*A[5]^2)
                      +sum(c[i, 0, 2*h+1]*A[1]^(2*i-2*h)*A[2]^i
                          *(sum(eta[2*h+1, k, 2]*A[2]*A[3]^(2*k+3)
                                +sum(eta[2*h+1, k, j]*A[3]^(2-j)*A[4]*A[5]^(2*k+j+1), 
                            j = 0 .. 1), k = 1 .. h)
                            +rho^h*(A[3]*A[4]*A[5]^2-A[2]*A[3]^3)), h = 1 .. i)
                      -(sum(c[i, 0, 2*h+2]*A[1]^(2*i-2*h-1)*A[2]^i
                           *(sum(eta[2*h+2, k, 2]*A[2]*A[3]^(2*k+4)
                                 +sum(eta[2*h+2, k, j]*A[3]^(2-j)*A[4]*A[5]^(2*k+j+2),
                        j = 0 .. 1), k = 0 .. h)), h = 1 .. i-1))
                      +sum(rho^p*(sum(c[i, p, 2*h+1]*A[1]^(2*i-2*p-2*h)*A[2]^i
                                     *(sum(eta[2*h+1, k, 2]*A[2]*A[3]^(2*k+3)
                                          +sum(eta[2*h+1, k, j]*A[3]^(2-j)*A[4]
                                          *A[5]^(2*k+j+1), 
                                       j = 0 .. 1), k = 1 .. h)
                        +rho^h*(A[3]*A[4]*A[5]^2-A[2]*A[3]^3)), 

                          h = 1 .. i-p)),p = 1 .. i-1));
        # Expand A^{2n+1}{A^*}^{n+1}
        t[4] := c[i, 0, 1]*A[1]^(2*i)*A[2]^i*A[3]*A[4]
                -c[i, 0, 2]*A[1]^(2*i-1)*A[2]^i*A[3]^2*A[4]
            +sum(c[i, 0, 2*k+1]*A[1]^(2*i-2*k)*A[2]^i*(sum(eta[2*k+1, h, 2]
                    *A[2]*A[3]^(2*h+1)
                +sum(eta[2*k+1, h, j]*A[3]^(2-j)*A[4]*A[5]^(2*h-1+j), 
                   j = 0 .. 1), h = 1 .. k)
                +rho^k*(A[3]*A[4]-A[2]*A[3])), k = 1 .. i)
            -(sum(c[i, 0, 2*k+2]*A[1]^(2*i-2*k-1)*A[2]^i
                  *(sum(eta[2*k+2, h, 2]*A[2]*A[3]^(2*h+2)
                        +sum(eta[2*k+2, h, j]*A[3]^(2-j)
                           *A[4]*A[5]^(2*h+j), j = 0 .. 1), 
                    h = 0 .. k)), k = 1 .. i-1))
            -rho^i*(c[i, i, 0]*A[1]*A[2]^(i+1)-c[i, i, 1]*A[2]^i*A[3]*A[4])
            -(sum(rho^p*(c[i, p, 0]*A[1]^(2*i-2*p+1)*A[2]^(i+1)
                 -c[i, p, 1]*A[1]^(2*i-2*p)*A[2]^i*A[3]*A[4]
                 +c[i, p, 2]*A[1]^(2*i-2*p-1)*A[2]^i*A[3]^2*A[4]), 
                       p = 1 .. i-1))
            +sum(rho^p*(sum(c[i, p, 2*k+1]*A[1]^(2*i-2*p-2*k)*A[2]^i
                  *(sum(eta[2*k+1, h, 2]*A[2]*A[3]^(2*h+1)
                  +sum(eta[2*k+1, h, j]*A[3]^(2-j)*A[4]*A[5]^(2*h-1+j), 
                       j = 0 .. 1),h = 1 .. k)
                      +rho^k*(A[3]*A[4]-A[2]*A[3])),
                       k = 1 .. i-p)), p = 1 .. i-1);
        if 2 < i then 
           t[4] := t[4]-(sum(rho^p*(sum(c[i, p, 2*k+2]
                            *A[1]^(2*i-2*p-2*k-1)*A[2]^i
                        *(sum(eta[2*k+2, h, 2]*A[2]*A[3]^(2*h+2)
                            +sum(eta[2*k+2, h, j]*A[3]^(2-j)*A[4]*A[5]^(2*h+j),
                            j = 0 .. 1),h = 0 .. k)),
                            k = 1 .. i-p-1)), p = 1 .. i-2));
           t[3] := t[3]-(sum(rho^p*(sum(c[i, p, 2*h+2]*A[1]^(2*i-2*p-2*h-1)
                          *A[2]^i*(sum(eta[2*h+2, k, 2]*A[2]*A[3]^(2*k+4)
                            +sum(eta[2*h+2, k, j]*A[3]^(2-j)*A[4]*A[5]^(2*k+j+2), 
                            j = 0 .. 1),k = 0 .. h)), 
                            h = 1 .. i-p-1)), p = 1 .. i-2));
        end if; 
        t[3] := simplify(t[3]); 
        t[4] := simplify(t[4]); 
        c[i+1, 0, 0] := 1; 
        c[i+1, 0, 1] := expand(simplify((q^(2*i+3)-q^(-2*i-3))/(q-1/q))); 
        c[i+1, 0, 2] := expand(simplify((q^(2*i+3)-q^(-2*i-3))*
                        (q^(2*i+2)-q^(-2*i-2))/((q-1/q)*(q^2-1/q^2))));
        c[i+1, 1, 0] := expand(simplify(-c[i, 0, 1]^2+2*c[i, 0, 2]
                            -c[i, 0, 3]/c[i, 0, 1]-c[i, 1, 1]/c[i, 0, 1]
                            +2*c[i, 1, 0]));
        # Compute the higher order q-Dolan-Grady relations F[i+1]
        F[i+1] := expand(simplify(c[i+1, 0, 0]*t[1]-c[i+1, 0, 1]*t[2]
                               +c[i+1, 0, 2]*t[3]+rho*c[i+1, 1, 0]*t[4]));
        # Rewrite F[i+1] in the distributed form of A[1], A[2], A[3] and rho
        F[i+1] := collect(F[i+1], [A[1], A[2], A[3], rho], distributed);
        # Extract the coefficient in A[1]and A[3] in the sum of 
        # all elements containing A[2]^{i+1} of F[i+1]
        tam := coeff(F[i+1], A[2], i+1);
        for l from 3 to 2*i+3 do 
           # Extract the coefficient in A[1] in the sum of 
           # all elements containing A[3]^l of tam
           ta1[l] := coeff(tam, A[3], l);
           # Extract the coefficient of A[1]^{2*i+3-l} in ta1 
           ta1[l] := coeff(ta1[l], A[1], 2*i+3-l);
           # Define the coefficient c[i+1, 0, l]
           c[i+1, 0, l] := ta1[l]*(-1)^(l+1); 
        end do; 
        for l to 2*i+1 do 
           # Extract the coefficient in A[3] in the sum 
           # of all elements containing A[1]^{2*i+1-l}
           ta2[l] := coeff(tam, A[3], l);
           ta2[l] := coeff(ta2[l], A[1], 2*i+1-l);
           # Define the coefficient c[i+1, 1, l]
           c[i+1, 1, l] := ta2[l]*(-1)^(l+1)/rho;
        end do; 
        for d from 2 to i+1 do 
           for l from 0 to 2*i-2*d+3 do 
              ta3[l] := coeff(tam, A[3], l);
              ta3[l] := coeff(ta3[l], A[1], 2*i-2*d+3-l);
              c[i+1, d, l] := ta3[l]*(-1)^(l+1)*rho^(-d);
           end do
        end do 
     end do 
  end if; 
  F[n+1] 
end proc;
\end{verbatim}
\begin{verbatim}
# Compute the coefficients c[n+1, p, j]
Result2 := proc (n) 
  local i, p, d, l, M, N;
  global A, c, alpha, q; 
  # Input the innitial values
  alpha := q^2+1+1/q^2;
  c[1, 0, 0] := 1; 
  c[1, 0, 3] := 1; 
  c[1, 0, 1] := alpha; 
  c[1, 0, 2] := alpha; 
  c[1, 1, 0] := 1; 
  c[1, 1, 1] := 1; 
  c[2, 0, 0] := 1; 
  c[2, 0, 5] := 1; 
  c[2, 0, 1] := expand(simplify(alpha^2-alpha-1));
  c[2, 0, 4] := expand(simplify(alpha^2-alpha-1));
  c[2, 0, 2] := expand(simplify(alpha^3-2*alpha^2+1)); 
  c[2, 0, 3] := expand(simplify(alpha^3-2*alpha^2+1)); 
  c[2, 1, 0] := expand(simplify(alpha^2-2*alpha+2)); 
  c[2, 1, 3] := expand(simplify(alpha^2-2*alpha+2)); 
  c[2, 1, 1] := expand(simplify(alpha*(alpha^2-alpha-1))); 
  c[2, 1, 2] := expand(simplify(alpha*(alpha^2-alpha-1))); 
  c[2, 2, 0] := expand(simplify((alpha-1)^2));
  c[2, 2, 1] := expand(simplify((alpha-1)^2)); 
  if 1 < n then 
     for i from 2 to n do
        # Compute M[n, p, j], p = 0,..., n; j= 2,..., 2n+2 if p=0, 
        #                     j = 0,..., 2(n-p)+2 if p = 1,...,n 
        M[i, 0, 2*i+2] := simplify(-c[i, 0, 1]*c[i, 0, 2*i+1]); 
        for l from 2 to 2*i+1 do
           M[i, 0, l] := simplify(c[i, 0, l]-c[i, 0, 1]*c[i, 0, l-1]); 
        end do; 
        for d from 1 to i do 
           M[i, d, 0] := c[i, d, 0];
              for l from 1 to 2*i-2*d+1 do
                 M[i, d, l] := simplify(c[i, d, l]-c[i, 0, 1]*c[i, d, l-1]);
              end do;
           M[i, d, 2*i-2*d+2] := simplify(-c[i, 0, 1]*c[i, d, 2*i-2*d+1]);
        end do;
        Funct(i);
        # Compute N[n, p, j], p = 0,...,n;
        #           j = 3,...,2n+2 if p =0
        #           j = 0,...,2(n-p)+3 if p = 1,...,n
        N[i, 0, 2*i+2] := simplify((c[i, 0, 1]^2-c[i, 0, 2])*c[i, 0, 2*i]
                                   -c[i, 0, 1]*c[i, 0, 2*i+1]); 
        N[i, 0, 2*i+3] := simplify((c[i, 0, 1]^2-c[i, 0, 2])*c[i, 0, 2*i+1]); 
        for l from 3 to 2*i+1 do
           N[i, 0, l] := simplify((c[i, 0, 1]^2-c[i, 0, 2])*c[i, 0, l-2]
                                  -c[i, 0, 1]*c[i, 0, l-1]+c[i, 0, l]);
        end do; 
        N[i, 1, 1] := simplify(-2*c[i, 0, 1]*c[i, 1, 0]+c[i, 1, 1]); 
        N[i, 1, 2*i] := simplify((c[i, 0, 1]^2-c[i, 0, 2])*c[i, 1, 2*i-2]
                                  -c[i, 0, 1]*c[i, 1, 2*i-1]
                                  -c[i, 1, 0]*c[i, 0, 2*i]); 
        N[i, 1, 2*i+1] := simplify((c[i, 0, 1]^2-c[i, 0, 2])*c[i, 1, 2*i-1]
                                   -c[i, 1, 0]*c[i, 0, 2*i+1]); 
        N[i, 1, 0] := 0; 
        for l from 2 to 2*i-1 do
           N[i, 1, l] := simplify((c[i, 0, 1]^2-c[i, 0, 2])*c[i, 1, l-2]
                                  -c[i, 0, 1]*c[i, 1, l-1]+c[i, 1, l]
                                  -c[i, 1, 0]*c[i, 0, l]);
        end do; 
        for d from 2 to i do 
           for l from 2 to 2*i-2*d+1 do
              N[i, d, l] := simplify((c[i, 0, 1]^2-c[i, 0, 2])*c[i, d, l-2]
                                     -c[i, 0, 1]*c[i, d, l-1]+c[i, d, l]
                                     -c[i, 1, 0]*c[i, d-1, l]);
           end do; 
           N[i, d, 0] := simplify(c[i, d, 0]-c[i, 1, 0]*c[i, d-1, 0]); 
           N[i, d, 1] := simplify(-c[i, 0, 1]*c[i, d, 0]+c[i, d, 1]
                                  -c[i, 1, 0]*c[i, d-1, 1]); 
           N[i, d, 2*i-2*d+2] := simplify((c[i, 0, 1]^2-c[i, 0, 2])
                                          *c[i, d, 2*i-2*d]
                                          -c[i, 0, 1]*c[i, d, 2*i-2*d+1]
                                          -c[i, 1, 0]*c[i, d-1, 2*i-2*d+2]); 
           N[i, d, 2*i-2*d+3] := simplify((c[i, 0, 1]^2-c[i, 0, 2])
                                          *c[i, d, 2*i-2*d+1]
                                          -c[i, 1, 0]*c[i, d-1, 2*i-2*d+3]) 
        end do; 
        for l from 0 to 1 do 
           N[i, i+1, l] := simplify(-c[i, 1, 0]*c[i, i, l]);
        end do; 
        # Compute the coefficients c[n+1, p, j], p =0,...,n+1; j = 0,...,2n+3-2p.
        c[i+1, 0, 0] := 1; 
        c[i+1, 0, 1] := expand(simplify((q^(2*i+3)-q^(-2*i-3))/(q-1/q)));
        c[i+1, 0, 2] := expand(simplify((q^(2*i+3)-q^(-2*i-3))
                        *(q^(2*i+2)-q^(-2*i-2))/((q^2-1/q^2)*(q-1/q)))); 
        c[i+1, 0, 3] := expand(simplify(N[i, 0, 3]*eta[3, 1, 2])); 
        c[i+1, 0, 4] := expand(simplify(N[i, 0, 4]*eta[4, 1, 2]
                        +c[i+1, 0, 1]*M[i, 0, 3]*eta[3, 1, 2])); 
        for l from 2 to i+1 do    
           c[i+1, 0, 2*l+1] := expand(simplify(N[i, 0, 2*l+1]*eta[2*l+1, l, 2]
                             +c[i+1, 0, 1]*M[i, 0, 2*l]*eta[2*l, l-1, 2]
                             +c[i+1, 0, 2]*c[i, 0, 2*l-1]*eta[2*l-1, l-1, 2]));
        end do; 
        for l from 2 to i do
           c[i+1, 0, 2*l+2] := expand(simplify(N[i, 0, 2*l+2]*eta[2*l+2, l, 2]
                             +c[i+1, 0, 1]*M[i, 0, 2*l+1]*eta[2*l+1, l, 2]
                             +c[i+1, 0, 2]*c[i, 0, 2*l]*eta[2*l, l-1, 2]));
        end do; 
        c[i+1, 1, 0] := expand(simplify(c[i, 0, 1]^2-2*c[i, 0, 2]
                        +c[i, 0, 3]/c[i, 0, 1]
                        -c[i, 1, 1]/c[i, 0, 1]+2*c[i, 1, 0])); 
        c[i+1, i+1, 0] := expand(simplify(N[i, i+1, 0]+c[i+1, 1, 0]
                                          *c[i, i, 0])); 
        for d from 2 to i do
           c[i+1, d, 0] := expand(simplify(N[i, d, 0]+c[i+1, 1, 0]
                                          *c[i, d-1, 0]));
        end do; 
        c[i+1, 1, 1] := expand(simplify(N[i, 0, 3]+c[i+1, 0, 1]*M[i, 1, 0])); 
        c[i+1, 2, 1] := expand(simplify(-N[i, 0, 5]+N[i, 1, 3]
                        +c[i+1, 1, 0]*c[i, 0, 3]+c[i+1, 0, 1]*M[i, 2, 0])); 
        c[i+1, i+1, 1] := expand(simplify((-1)^i*(N[i, 0, 2*i+3]
                        +sum((-1)^p*N[i, p, 2*i-2*p+3], p = 1 .. i))
                        +(-1)^(i+1)*c[i+1, 1, 0]*(c[i, 0, 2*i+1]
                        +sum((-1)^p*c[i, p, 2*i-2*p+1], p = 1 .. i-1))));
        if 2 < i then 
           for d from 3 to i do 
              c[i+1, d, 1] := expand(simplify((-1)^(d+1)*(N[i, 0, 2*d+1]
                        +sum((-1)^j*N[i, j, 2*d-2*j+1], j = 1 .. d-1))
                        +c[i+1, 0, 1]*M[i, d, 0]+(-1)^d
                        *c[i+1, 1, 0]*(c[i, 0, 2*d-1]
                        +sum((-1)^j*c[i, j, 2*d-2*j-1], j = 1 .. d-2)))); 
              c[i+1, d, 2] := expand(simplify((-1)^d*(eta[2*d+2, 0, 2]
                               *N[i, 0, 2*d+2]
                        +sum((-1)^j*N[i, j, 2*d-2*j+2]*eta[2*d-2*j+2, 0, 2], 
                            j = 1 .. d-1))
                        +(-1)^(d+1)*c[i+1, 0, 1]*(M[i, 0, 2*d+1]
                        +sum((-1)^j*M[i, j, 2*d-2*j+1], j = 1 .. d-1))
                              +c[i+1, 0, 2]*c[i, d, 0]
                        +(-1)^(d+1)*c[i+1, 1, 0]*(c[i, 0, 2*d]*eta[2*d, 0, 2]
                        +sum((-1)^j*c[i, j, 2*d-2*j]*eta[2*d-2*j, 0, 2], 
                              j = 1 .. d-2)))); 
           end do; 
           for d from 2 to i-1 do
              c[i+1, d, 4] := expand(simplify((-1)^d*(N[i, 0, 2*d+4]
                               *eta[2*d+4, 1, 2]
                +sum((-1)^j*N[i, j, 2*d-2*j+4]*eta[2*d-2*j+4, 1, 2], j = 1 .. d))
                +(-1)^d*c[i+1, 0, 2]*(c[i, 0, 2*d+2]*eta[2*d+2, 0, 2]
                +sum((-1)^j*c[i, j, 2*d-2*j+2]*eta[2*d-2*j+2, 0, 2], j = 1 .. d-1))
                +(-1)^d*c[i+1, 0, 1]*(M[i, 0, 2*d+3]*eta[2*d+3, 1, 2]
                +sum((-1)^j*M[i, j, 2*d-2*j+3]*eta[2*d-2*j+3, 1, 2], j = 1 .. d))
                +(-1)^(d+1)*c[i+1, 1, 0]*(c[i, 0, 2*d+2]*eta[2*d+2, 1, 2]
                +sum((-1)^j*c[i, j, 2*d-2*j+2]*eta[2*d-2*j+2, 1, 2], j = 1 .. d-1))));
           end do; 
           for d from 3 to i do 
              for l to d-2 do 
                 c[i+1, d-l, 2*l+3] := expand(simplify((-1)^(d+l)
                                              *(N[i, 0, 2*d+3]*eta[2*d+3, l+1, 2]
                         +sum((-1)^j*N[i, j, 2*d-2*j+3]*eta[2*d-2*j+3, l+1, 2], 
                                j = 1 .. d-l))
                         +(-1)^(d+l)*c[i+1, 0, 2]*(c[i, 0, 2*d+1]*eta[2*d+1, l, 2]
                         +sum((-1)^j*c[i, j, 2*d-2*j+1]*eta[2*d-2*j+1, l, 2], 
                                j = 1 .. d-l))
                         +(-1)^(d+l)*c[i+1, 0, 1]*(eta[2*d+2, l, 2]*M[i, 0, 2*d+2]
                         +sum((-1)^j*M[i, j, 2*d-2*j+2]*eta[2*d-2*j+2, l, 2], 
                                j = 1 .. d-l))
                         +(-1)^(d+l+1)*c[i+1, 1, 0]*(c[i, 0, 2*d+1]*eta[2*d+1, l+1, 2]
                         +sum((-1)^j*c[i, j, 2*d-2*j+1]*eta[2*d-2*j+1, l+1, 2], 
                                 j = 1 .. d-l-1))));
              end do 
           end do; 
           for l from 3 to i do 
              c[i+1, 1, 2*l] := expand(simplify(c[i+1, 1, 0]
                                  *c[i, 0, 2*l]*eta[2*l, l-1, 2]
                        -N[i, 0, 2*l+2]*eta[2*l+2, l-1, 2]
                        +N[i, 1, 2*l]*eta[2*l, l-1, 2]
                        -c[i+1, 0, 1]*(M[i, 0, 2*l+1]*eta[2*l+1, l-1, 2]
                        -M[i, 1, 2*l-1]*eta[2*l-1, l-1, 2])
                        -c[i+1, 0, 2]*(c[i, 0, 2*l]*eta[2*l, l-2, 2]
                        -c[i, 1, 2*l-2]*eta[2*l-2, l-2, 2])));
           end do 
        end if; 
        for d from 2 to i do 
           c[i+1, d, 3] := expand(simplify((-1)^d*c[i+1, 0, 1]
                              *(eta[2*d+2, 0, 2]*M[i, 0, 2*d+2]
                           +sum((-1)^j*M[i, j, 2*d-2*j+2]*eta[2*d-2*j+2, 0, 2], 
                                j = 1 .. d-1))
                           +(-1)^(d+1)*c[i+1, 0, 2]*(c[i, 0, 2*d+1]
                           +sum((-1)^j*c[i, j, 2*d-2*j+1], j = 1 .. d-1))
                           +(-1)^d*(N[i, 0, 2*d+3]*eta[2*d+3, 1, 2]
                           +sum((-1)^j*N[i, j, 2*d-2*j+3]*eta[2*d-2*j+3, 1, 2], 
                                j = 1 .. d))
                           +(-1)^(d+1)*c[i+1, 1, 0]*(c[i, 0, 2*d+1]*eta[2*d+1, 1, 2]
                           +sum((-1)^j*c[i, j, 2*d-2*j+1]*eta[2*d-2*j+1, 1, 2], 
                                j = 1 .. d-1))));
        end do; 
        for l from 2 to i do 
           c[i+1, 1, 2*l+1] := expand(simplify(-N[i, 0, 2*l+3]*eta[2*l+3, l, 2]
                           +N[i, 1, 2*l+1]*eta[2*l+1, l, 2]
                           -c[i+1, 0, 2]*(c[i, 0, 2*l+1]*eta[2*l+1, l-1, 2]
                           -c[i, 1, 2*l-1]*eta[2*l-1, l-1, 2])
                           -c[i+1, 0, 1]*(eta[2*l+2, l-1, 2]*M[i, 0, 2*l+2]
                           -M[i, 1, 2*l]*eta[2*l, l-1, 2])
                           +c[i+1, 1, 0]*c[i, 0, 2*l+1]*eta[2*l+1, l, 2]));
        end do; 
        if 3 < i then 
           for d from 4 to i do 
              for l from 2 to d-2 do
                 c[i+1, d-l, 2*l+2] := expand(simplify((-1)^(d+l)*(N[i, 0, 2*d+2]
                                         *eta[2*d+2, l, 2]
                      +sum((-1)^j*N[i, j, 2*d-2*j+2]*eta[2*d-2*j+2, l, 2], 
                           j = 1 .. d-l))
                      +(-1)^(d+l)*c[i+1, 0, 1]*(M[i, 0, 2*d+1]*eta[2*d+1, l, 2]
                      +sum((-1)^j*M[i, j, 2*d-2*j+1]*eta[2*d-2*j+1, l, 2], 
                           j = 1 .. d-l))
                      +(-1)^(d+l)*c[i+1, 0, 2]*(c[i, 0, 2*d]*eta[2*d, l-1, 2]
                      +sum((-1)^j*c[i, j, 2*d-2*j]*eta[2*d-2*j, l-1, 2], 
                           j = 1 .. d-l))
                      +(-1)^(d+l+1)*c[i+1, 1, 0]*(c[i, 0, 2*d]*eta[2*d, l, 2]
                      +sum((-1)^j*c[i, j, 2*d-2*j]*eta[2*d-2*j, l, 2], 
                           j = 1 .. d-l-1)))); 
              end do 
           end do 
        end if; 
        c[i+1, 1, 4] := expand(simplify(-N[i, 0, 6]*eta[6, 1, 2]
                      +N[i, 1, 4]*eta[4, 1, 2]
                      -eta[4, 0, 2]*c[i, 0, 4]*c[i+1, 0, 2]-c[i+1, 0, 1]
                         *(M[i, 0, 5]*eta[5, 1, 2]
                      -M[i, 1, 3]*eta[3, 1, 2])+c[i+1, 1, 0]*c[i, 0, 4]
                         *eta[4, 1, 2])); 
        c[i+1, 1, 3] := expand(simplify(-c[i+1, 0, 1]*eta[4, 0, 2]*M[i, 0, 4]
                      +c[i+1, 1, 0]*c[i, 0, 3]*eta[3, 1, 2]
                      +c[i+1, 0, 2]*c[i, 0, 3]
                      -N[i, 0, 5]*eta[5, 1, 2]+N[i, 1, 3]*eta[3, 1, 2])); 
        c[i+1, 1, 2] := expand(simplify(-eta[4, 0, 2]*N[i, 0, 4]
                      +c[i+1, 0, 1]*M[i, 0, 3]
                      +c[i+1, 0, 2]*c[i, 1, 0])); 
        c[i+1, 2, 2] := -expand(simplify(-eta[6, 0, 2]*N[i, 0, 6]
                      +N[i, 1, 4]*eta[4, 0, 2]
                      -c[i+1, 0, 1]*(-M[i, 0, 5]+M[i, 1, 3])
                      -c[i+1, 0, 2]*c[i, 2, 0]
                      +c[i+1, 1, 0]*c[i, 0, 4]*eta[4, 0, 2])); 
     end do 
  end if;
  # Display the coefficients c[n+1, p, j] = coef[n+1, p, j] 
  for d from 0 to n+1 do 
     for l from 0 to 2*n+3-2*d do
        print(coef[n+1, d, l], "=", c[n+1, d, l]);
     end do 
  end do 
end proc;
\end{verbatim}
The coefficients c[n+1, p, j] which are obtained from the two programming Result1 and Result2  completely coincide together.
\newpage

\begin{center}
\section{APPENDIX E: Algorithms for the generalized $q$-Onsager algebra}
\end{center}
\begin{verbatim}
# Compute the coefficients eta[m,k,j] of A^{2n+1}A^* and A^{2n+2}A^*
Funct := proc (n) 
  local p, i, j; 
  global eta; 
  # Input the initial values
  eta[2, 0, 0] := q+1/q; 
  eta[2, 0, 1] := -1; 
  for i to n do 
     for p from 0 to i-1 do 
        eta[2*i+1, p, 0] := expand(simplify(eta[2*i, p, 0]*(q+1/q)
                                 +eta[2*i, p, 1])); 
     end do; 
     eta[2*i+1, i, 0] := 1; 
     for p to i-1 do 
        eta[2*i+1, p, 1] := expand(simplify(-eta[2*i, p, 0]
                                 +eta[2*i, p-1, 0])); 
     end do; 
     eta[2*i+1, 0, 1] := expand(simplify(-eta[2*i, 0, 0])); 
     eta[2*i+1, i, 1] := expand(simplify(eta[2*i, i-1, 0])); 
     for p from 0 to i do 
        eta[2*i+2, p, 0] := expand(simplify(eta[2*i+1, p, 0]*(q+1/q)
                                              +eta[2*i+1, p, 1]));
     end do; 
     for p to i do 
       eta[2*i+2, p, 1] := expand(simplify(eta[2*i+1, p-1, 0]
                                             -eta[2*i+1, p, 0]));
     end do; 
     eta[2*i+2, 0, 1] := expand(simplify(-eta[2*i+1, 0, 0])) 
  end do 
end proc;
\end{verbatim}
\begin{verbatim}
# Compute A^{r+2}{A^*}^{r+1}-c[r+1,0,1]A^{r+1}{A^*}^{r+1}A
# and output the higher order relations of the generalized q-Onsager algebra
Result1 := proc (r) 
  local k, i, t, p, F, f, tam, ta2, ta1, B;
  global M, c, A, eta; 
  # Input the initial values
  c[1, 0, 0] := 1; 
  c[1, 0, 1] := q+1/q; 
  c[1, 0, 2] := 1; 
  c[1, 1, 0] := 1; 
  c[2, 0, 0] := 1; 
  c[2, 0, 1] := q^2+1+1/q^2; 
  c[2, 0, 2] := q^2+1+1/q^2; 
  c[2, 0, 3] := 1; 
  c[2, 1, 0] := q^2+2+1/q^2; 
  c[2, 1, 1] := q^2+2+1/q^2; 
  c[3, 0, 0] := 1; 
  c[3, 0, 1] := expand((q^4-1/q^4)/(q-1/q)); 
  c[3, 0, 2] := expand((q^2+1/q^2)*(q^2+1+1/q^2)); 
  c[3, 0, 3] := expand((q^4-1/q^4)/(q-1/q)); 
  c[3, 0, 4] := 1; 
  c[3, 1, 0] := q^4+2*q^2+4+2/q^2+1/q^4; 
  c[3, 1, 1] := expand((q+1/q)*(q^2+1/q^2+3)*(q^2+1/q^2)); 
  c[3, 1, 2] := q^4+2*q^2+4+2/q^2+1/q^4; 
  c[3, 2, 0] := expand(q^2+1+1/q^2); 
  if r = 0 then 
     F[1] := A[1]^2*A[2]-(q+1/q)*A[1]*A[2]*A[3]+A[2]*A[3]^2-rho*A[2] = 0 
  end if; 
  if r = 1 then 
     F[2] := A[1]^3*A[2]^2-(q^2+1+1/q^2)*A[1]^2*A[2]^2*A[3]
            +(q^2+1+1/q^2)*A[1]*A[2]^2*A[3]^2
            -A[2]^2*A[3]^3-rho*(q^2+2+1/q^2)*A[1]*A[2]^2
            +rho*(q^2+2+1/q^2)*A[2]^2*A[3] = 0
  end if; 
  if 1 < r then 

     for i from 2 to r do 
        t := trunc((1/2)*i);
        # Compute the coefficients M[r,p,j] 
        for k from 2 to i+1 do 
           M[i, 0, k] := -c[i, 0, 1]*c[i, 0, k-1]+c[i, 0, k]; 
        end do; 
        M[i, 0, i+2] := -c[i, 0, 1]*c[i, 0, i+1];
        if i = 2*t then 
           for p to t do 
              for k to i-2*p+1 do 
                 M[i, p, k] := -c[i, 0, 1]*c[i, p, k-1]+c[i, p, k]; 
              end do; 
              M[i, p, 0] := c[i, p, 0]; 
              M[i, p, i+2-2*p] := -c[i, 0, 1]*c[i, p, i-2*p+1]; 
           end do 
        else 
           for p to t do 
              M[i, p, 0] := c[i, p, 0]; 
              M[i, p, i+2-2*p] := -c[i, 0, 1]*c[i, p, i-2*p+1]; 
              for k to i-2*p+1 do 
                 M[i, p, k] := -c[i, 0, 1]*c[i, p, k-1]+c[i, p, k]; 
              end do 
           end do; 
           M[i, t+1, 0] := c[i, t+1, 0]; 
           M[i, t+1, 1] := -c[i, t+1, 0]*c[i, 0, 1]; 
        end if; 
        Funct(i); 
        if i = 2*t then
           # Expand A^{2r+2}{A^*}^{r+1}A 
           f[1] := simplify(-(sum(M[2*t, 0, 2*h]*A[1]^(2*t+2-2*h)*A[2]^(2*t)
                      *(rho^h*A[2]+sum(rho^l*(eta[2*h, l, 0]
                        *A[3]*A[4]*A[5]^(2*h-1-2*l)
                         +eta[2*h, l, 1]*A[2]*A[3]^(2*h-2*l)), 
                         l = 0 .. h-1)), h = 1 .. t+1))
                  +sum(M[2*t, 0, 2*h+1]*A[1]^(2*t+1-2*h)*A[2]^(2*t)
                      *(sum(rho^l*(eta[2*h+1, l, 0]*A[3]*A[4]*A[5]^(2*h-2*l)
                         +eta[2*h+1, l, 1]*A[2]*A[3]^(2*h-2*l+1)), 
                         l = 0 .. h)), h = 1 .. t)
                  -(sum(rho^d*((-1)^d*M[2*t, d, 0]*A[1]^(2*t+2-2*d)
                     *A[2]^(2*t+1)+(-1)^(d+1)
                  *M[2*t, d, 1]*A[1]^(2*t+1-2*d)*A[2]^(2*t)*A[3]*A[4]),
                     d = 1 .. t))
                  -(sum(rho^d*(sum((-1)^d*M[2*t, d, 2*h]
                        *A[1]^(2*t+2-2*d-2*h)*A[2]^(2*t)
                            *(rho^h*A[2]+sum(rho^l*(eta[2*h, l, 0]
                                      *A[3]*A[4]*A[5]^(2*h-1-2*l)
                            +eta[2*h, l, 1]*A[2]*A[3]^(2*h-2*l)), 
                      l = 0 .. h-1)), h = 1 .. t+1-d)), d = 1 .. t))
                  -(sum(rho^d*(sum((-1)^(d+1)*M[2*t, d, 2*h+1]
                           *A[1]^(2*t+1-2*d-2*h)
                      *A[2]^(2*t)*(sum(rho^l*(eta[2*h+1, l, 0]
                          *A[3]*A[4]*A[5]^(2*h-2*l)
                          +eta[2*h+1, l, 1]*A[2]*A[3]^(2*h-2*l+1)), 
                   l = 0 .. h)), h = 1 .. t-d)), d = 1 .. t)));
           # Expand A^{2r+1}{A^*}^{r+1}A^2 
           f[2] := simplify(c[2*t, 0, 1]*A[1]^(2*t)*A[2]^(2*t)*A[3]*A[4]*A[5]
                -(sum(c[2*t, 0, 2*h]*A[1]^(2*t+1-2*h)*A[2]^(2*t)
                     *(rho^h*A[2]*A[3]
                    +sum(rho^l*(eta[2*h, l, 0]*A[3]*A[4]*A[5]^(2*h-2*l)
                       +eta[2*h, l, 1]*A[2]*A[3]^(2*h-2*l+1)), 
                         l = 0 .. h-1)), h = 1 .. t))
                +sum(c[2*t, 0, 2*h+1]*A[1]^(2*t-2*h)*A[2]^(2*t)
                    *(sum(rho^l*(eta[2*h+1, l, 0]*A[3]*A[4]*A[5]^(2*h-2*l+1)
                         +eta[2*h+1, l, 1]*A[2]*A[3]^(2*h-2*l+2)), 
                            l = 0 .. h)), h = 1 .. t)
                -(sum(rho^d*((-1)^d*c[2*t, d, 0]*A[1]^(2*t+1-2*d)
                   *A[2]^(2*t+1)*A[3]+(-1)^(d+1)
                     *c[2*t, d, 1]*A[1]^(2*t-2*d)*A[2]^(2*t)
                     *A[3]*A[4]*A[5]), d = 1 .. t))); 
           if 2 <= t then 
              f[2] := simplify(f[2]-(sum(rho^d*(sum((-1)^d*c[2*t, d, 2*h]
                             *A[1]^(2*t+1-2*d-2*h)*A[2]^(2*t)*(rho^h*A[2]*A[3]
                                   +sum(rho^l*(eta[2*h, l, 0]*A[3]
                                       *A[4]*A[5]^(2*h-2*l)
                                     +eta[2*h, l, 1]*A[2]*A[3]^(2*h-2*l+1)), 
                                     l = 0 .. h-1)), 
                              h = 1 .. t-d)), d = 1 .. t-1))
                         -(sum(rho^d*(sum((-1)^(d+1)*c[2*t, d, 2*h+1]
                               *A[1]^(2*t-2*d-2*h)
                            *A[2]^(2*t)*(sum(rho^l*(eta[2*h+1, l, 0]
                                   *A[3]*A[4]*A[5]^(2*h-2*l+1)
                              +eta[2*h+1, l, 1]*A[2]*A[3]^(2*h-2*l+2)), 
                         l = 0 .. h)),h = 1 .. t-d)), d = 1 .. t-1))); 
           end if 
        else 
           f[1] := simplify(-(sum(M[2*t+1, 0, 2*h]*A[1]^(2*t+3-2*h)
                      *A[2]^(2*t+1)*(rho^h*A[2]
                         +sum(rho^l*(eta[2*h, l, 0]*A[3]*A[4]*A[5]^(2*h-1-2*l)
                            +eta[2*h, l, 1]*A[2]*A[3]^(2*h-2*l)), 
                            l = 0 .. h-1)),h = 1 .. t+1))
                      +sum(M[2*t+1, 0, 2*h+1]*A[1]^(2*t+2-2*h)*A[2]^(2*t+1)
                        *(sum(rho^l*(eta[2*h+1, l, 0]*A[3]*A[4]*A[5]^(2*h-2*l)
                         +eta[2*h+1, l, 1]*A[2]*A[3]^(2*h-2*l+1)), 
                         l = 0 .. h)), h = 1 .. t+1)
                      -(sum(rho^d*((-1)^d*M[2*t+1, d, 0]*A[1]^(2*t+3-2*d)
                           *A[2]^(2*t+2)
                           +(-1)^(d+1)*M[2*t+1, d, 1]*A[1]^(2*t+2-2*d)
                           *A[2]^(2*t+1)*A[3]*A[4]), d = 1 .. t+1))
                      -(sum(rho^d*(sum((-1)^d*M[2*t+1, d, 2*h]
                             *A[1]^(2*t+3-2*d-2*h)
                         *A[2]^(2*t+1)*(rho^h*A[2]
                         +sum(rho^l*(eta[2*h, l, 0]*A[3]*A[4]
                            *A[5]^(2*h-1-2*l)+eta[2*h, l, 1]*A[2]*A[3]^(2*h-2*l)), 
                         l = 0 .. h-1)), h = 1 .. t+1-d)), d = 1 .. t))
                         -(sum(rho^d*(sum((-1)^(d+1)*M[2*t+1, d, 2*h+1]
                              *A[1]^(2*t+2-2*d-2*h)
                           *A[2]^(2*t+1)*(sum(rho^l*(eta[2*h+1, l, 0]
                           *A[3]*A[4]*A[5]^(2*h-2*l)
                           +eta[2*h+1, l, 1]*A[2]*A[3]^(2*h-2*l+1)), 
                          l = 0 .. h)), h = 1 .. t+1-d)), d = 1 .. t)));
           f[2] := simplify(c[2*t+1, 0, 1]*A[1]^(2*t+1)*A[2]^(2*t+1)*A[3]*A[4]*A[5]
                 -(sum(c[2*t+1, 0, 2*h]*A[1]^(2*t+2-2*h)
                      *A[2]^(2*t+1)*(rho^h*A[2]*A[3]
                    +sum(rho^l*(eta[2*h, l, 0]*A[3]*A[4]*A[5]^(2*h-2*l)
                    +eta[2*h, l, 1]*A[2]*A[3]^(2*h-2*l+1)),
                     l = 0 .. h-1)), h = 1 .. t+1))
                 +sum(c[2*t+1, 0, 2*h+1]*A[1]^(2*t+1-2*h)*A[2]^(2*t+1)
                    *(sum(rho^l*(eta[2*h+1, l, 0]*A[3]*A[4]
                    *A[5]^(2*h-2*l+1)+eta[2*h+1, l, 1]
                      *A[2]*A[3]^(2*h-2*l+2)), l = 0 .. h)), h = 1 .. t)
                 -rho^(t+1)*(-1)^(t+1)*c[2*t+1, t+1, 0]*A[2]^(2*t+2)*A[3]
                 -(sum(rho^d*((-1)^(d+1)*c[2*t+1, d, 1]*A[1]^(2*t+1-2*d)
                    *A[2]^(2*t+1)*A[3]*A[4]*A[5]+(-1)^d*c[2*t+1, d, 0]
                    *A[1]^(2*t+2-2*d)*A[2]^(2*t+2)*A[3]), d = 1 .. t))
                 -(sum(rho^d*(sum((-1)^d*c[2*t+1, d, 2*h]
                       *A[1]^(2*t+2-2*d-2*h)*A[2]^(2*t+1)
                     *(rho^h*A[2]*A[3]+sum(rho^l*(eta[2*h, l, 0]
                     *A[3]*A[4]*A[5]^(2*h-2*l)
                      +eta[2*h, l, 1]*A[2]*A[3]^(2*h-2*l+1)),
                       l = 0 .. h-1)), h = 1 .. t+1-d)), d = 1 .. t))
                 -(sum(rho^d*(sum((-1)^(d+1)*c[2*t+1, d, 2*h+1]
                 *A[1]^(2*t+1-2*d-2*h)*A[2]^(2*t+1)
                     *(sum(rho^l*(eta[2*h+1, l, 0]*A[3]*A[4]*A[5]^(2*h-2*l+1)
                       +eta[2*h+1, l, 1]*A[2]*A[3]^(2*h-2*l+2)), 
                     l = 0 .. h)), h = 1 .. t-d)), d = 1 .. t)));
        end if; 
        c[i+1, 0, 0] := 1; 
        c[i+1, 0, 1] := expand(simplify((q^(i+2)-q^(-i-2))/(q-1/q))); 
        # Compute A^{i+2}{A^*}^{i+1}-c[i+1,0,1]*A^{i+1}{A^*}^{i+1}A
        B[i+1] := expand(simplify(f[1]-c[i+1, 0, 1]*f[2])); 
        # Rewrite B[i+1] in the distributed form of A[1], A[2], A[3] and rho
        B[i+1] := collect(B[i+1], [A[1], A[2], A[3], rho], distributed);
        # Extract the coefficient in A[1] and A[3] in the sum 
        # of all elements containing A[2]^{i+1} of B[i+1]
        tam := coeff(B[i+1], A[2], i+1); 
        for k from 2 to i+2 do
           # Extract the coefficient in A[1] in the sum 
           # of all elements containing A[3]^k of tam 
           ta1[k] := coeff(tam, A[3], k);
           # Extract the coefficient of A[1]^{i+2-k} in ta1[k] 
           ta1[k] := coeff(ta1[k], A[1], i+2-k);
           # Extract the coefficient c[i+1, 0, k] of
            (-1)^{k+l}A[1]^{i+2-k}A[2]^{i+1}A[3]^k
           c[i+1, 0, k] := ta1[k]*(-1)^(k+1); 
        end do; 
        for p to trunc((1/2)*i+1) do 
           for k from 0 to i+2-2*p do 
              # Extract the coefficient in A[1] in the sum 
              # of all elements containing A[3]^k of tam
              ta2[k] := coeff(tam, A[3], k);
              # Extract the coefficient of A[1]^{i+2-2*p-k} in ta2k
              ta2[k] := coeff(ta2[k], A[1], i+2-2*p-k);
              # Extract the coefficient c[i+1,p,k] 
              # of (-1)^{k+p+1}*rho^{p}A[1]^{i+2-2*p-k}A[2]^{i+1}A[3]^k
              c[i+1, p, k] := ta2[k]*rho^(-p)*(-1)^(k+p+1);
           end do 
        end do;
        # Compute the higher order relations of 
        # the generalized q-Onsager algebra for r = i+1 
        F[i+1] := A[1]^(i+2)*A[2]^(i+1)
                  -c[i+1, 0, 1]*A[1]^(i+1)*A[2]^(i+1)*A[3]-B[i+1] =0
     end do 
  end if;
  # Output the higher order relations of the
  # generalized q-Onsager algebra for r+1  
  F[r+1]; 
end proc; 
\end{verbatim}
\begin{verbatim}
# Print the coefficients c[r+1, p, k] of the higher 
# order relations of the generalized q-Onsager algebra
pri := proc (r)
  local p, k, coef; 
  global c; 
  for p from 0 to trunc((1/2)*r+1) do 
     for k from 0 to r+2-2*p do 
        print(coef[r+1, p, k], "=", c[r+1, p, k])
     end do 
  end do 
end proc;
\end{verbatim}
\begin{verbatim}
# Compute the coefficients c[n+1,p,j] by recursion relations
Result2 := proc (n) 
  local p, i, j, k, t, coef; 
  global eta, c, M; 
  # Input the initial values
  c[1, 0, 0] := 1; 
  c[1, 0, 1] := q+1/q; 
  c[1, 0, 2] := 1; 
  c[1, 1, 0] := 1; 
  c[2, 0, 0] := 1; 
  c[2, 0, 1] := q^2+1+1/q^2; 
  c[2, 0, 2] := q^2+1+1/q^2; 
  c[2, 0, 3] := 1; 
  c[2, 1, 0] := q^2+2+1/q^2; 
  c[2, 1, 1] := q^2+2+1/q^2; 
  c[3, 0, 0] := 1; 
  c[3, 0, 1] := expand((q^4-1/q^4)/(q-1/q)); 
  c[3, 0, 2] := expand((q^2+1/q^2)*(q^2+1+1/q^2)); 
  c[3, 0, 3] := expand((q^4-1/q^4)/(q-1/q)); 
  c[3, 0, 4] := 1; 
  c[3, 1, 0] := q^4+2*q^2+4+2/q^2+1/q^4; 
  c[3, 1, 1] := expand((q+1/q)*(q^2+1/q^2+3)*(q^2+1/q^2)); 
  c[3, 1, 2] := q^4+2*q^2+4+2/q^2+1/q^4; 
  c[3, 2, 0] := expand(q^2+1+1/q^2); 
  Function(n);
  if 1 < n then 
     for i from 2 to n do 
        t := trunc((1/2)*i);
        # Compute the coefficients M[r, p, j] of the expansion 
        #  of A^{r+2}{A^*}^{r+1}
        for k from 2 to i+1 do 
           M[i, 0, k] := -c[i, 0, 1]*c[i, 0, k-1]+c[i, 0, k]; 
        end do; 
        M[i, 0, i+2] := -c[i, 0, 1]*c[i, 0, i+1];
        if i = 2*t then 
           for p to t do 
              for k to i-2*p+1 do 
                 M[i, p, k] := -c[i, 0, 1]*c[i, p, k-1]+c[i, p, k]; 
              end do; 
              M[i, p, 0] := c[i, p, 0]; 
              M[i, p, i+2-2*p] := -c[i, 0, 1]*c[i, p, i-2*p+1]; 
           end do 
        else 
           for p to t do 
              M[i, p, 0] := c[i, p, 0]; 
              M[i, p, i+2-2*p] := -c[i, 0, 1]*c[i, p, i-2*p+1]; 
              for k to i-2*p+1 do 
                 M[i, p, k] := -c[i, 0, 1]*c[i, p, k-1]+c[i, p, k]; 

              end do 
           end do; 
           M[i, t+1, 0] := c[i, t+1, 0]; 
           M[i, t+1, 1] := -c[i, t+1, 0]*c[i, 0, 1]
        end if; 
        # Compute the coefficients c[i+1, p, j]
        c[i+1, 0, 0] := 1; 
        c[i+1, 0, 1] := expand(simplify((q^(i+2)-q^(-i-2))/(q-1/q))); 
        if i = 2*t then 
           c[i+1, 0, 2] := expand(simplify(M[i, 0, 2]*eta[2, 0, 1])); 
           for k from 2 to t+1 do 
              c[i+1, 0, 2*k] := expand(simplify(M[i, 0, 2*k]*eta[2*k, 0, 1]
                                 +c[i+1, 0, 1]*c[i, 0, 2*k-1]*eta[2*k-1, 0, 1])); 
           end do; 
           for k to t do 
              c[i+1, 0, 2*k+1] := expand(simplify(M[i, 0, 2*k+1]*eta[2*k+1, 0, 1]
                                 +c[i+1, 0, 1]*c[i, 0, 2*k]*eta[2*k, 0, 1]));
           end do; 
           c[i+1, t+1, 0] := expand(simplify((-1)^t*(-M[i, 0, i+2]
                                 -(sum((-1)^l*M[i, l, i+2-2*l], l = 1 .. t)))));
           c[i+1, 1, 0] := expand(simplify(-M[i, 0, 2]+M[i, 1, 0])); 
           if 2 <= t then 
              for k from 2 to t do 
                 c[i+1, k, 0] := expand(simplify((-1)^k*(sum((-1)^l
                                   *M[i, l, 2*k-2*l], l = 0 .. k))));
              end do 
           end if; 
           c[i+1, 1, 1] := expand(simplify(-M[i, 0, 3]*eta[3, 1, 1]
                                  +c[i+1, 0, 1]*(-c[i, 0, 2]+c[i, 1, 0]))); 
           if 2 <= t then 
              for k from 2 to t do 
                 c[i+1, k, 1] := expand(simplify((-1)^k*(-(sum((-1)^(l+1)
                 *M[i, l, 2*k-2*l+1]
                                    *eta[2*k-2*l+1, k-l, 1], l = 0 .. k-1))
                                +c[i+1, 0, 1]*(sum((-1)^l*c[i, l, 2*k-2*l], 
                                l = 0 .. k)))));
              end do 
           end if; 
           c[i+1, 1, 2] := expand(simplify(-M[i, 0, 4]*eta[4, 1, 1]
                           +M[i, 1, 2]*eta[2, 0, 1]
                           -c[i+1, 0, 1]*c[i, 0, 3]*eta[3, 1, 1]));
           if 2 <= t then 
              for k from 2 to t do 
                 for p to k-1 do 
                    c[i+1, p, 2*k-2*p+1] := expand(simplify((-1)^p*(sum((-1)^l
                                    *M[i, l, 2*k-2*l+1]*eta[2*k-2*l+1, p-l, 1],
                                                         l = 0 .. p)
                                    +c[i+1, 0, 1]*(sum((-1)^l*c[i, l, 2*k-2*l]
                                       *eta[2*k-2*l, p-l, 1], l = 0 .. p)))));
                 end do 
              end do; 
              for k from 3 to t+1 do 
                 for p to k-1 do 
                    c[i+1, p, 2*k-2*p] := expand(simplify((-1)^p*(sum((-1)^l
                         *M[i, l, 2*k-2*l]
                                          *eta[2*k-2*l, p-l, 1], l = 0 .. p)
                                  +c[i+1, 0, 1]*(sum((-1)^l*c[i, l, 2*k-2*l-1]
                             *eta[2*k-2*l-1, p-l, 1], l = 0 .. min(p, k-2))))));
                 end do 
              end do 
           end if 
        else 
           c[i+1, 0, 2] := expand(simplify(M[i, 0, 2]*eta[2, 0, 1])); 
           for k to t+1 do 
              c[i+1, 0, 2*k+1] := expand(simplify(M[i, 0, 2*k+1]
                               *eta[2*k+1, 0, 1]
                              +c[i+1, 0, 1]*c[i, 0, 2*k]*eta[2*k, 0, 1])); 
           end do; 
           for k from 2 to t+1 do 
              c[i+1, 0, 2*k] := expand(simplify(M[i, 0, 2*k]*eta[2*k, 0, 1]
                              +c[i+1, 0, 1]*c[i, 0, 2*k-1]*eta[2*k-1, 0, 1])); 
           end do; 
           c[i+1, 1, 0] := expand(simplify(-M[i, 0, 2]+M[i, 1, 0])); 
           for k from 2 to t+1 do 
              c[i+1, k, 0] := expand(simplify((-1)^k*(sum((-1)^l
                          *M[i, l, 2*k-2*l],l = 0 .. k))));
           end do; 
           c[i+1, 1, 1] := expand(simplify(-M[i, 0, 3]*eta[3, 1, 1]
                                          +c[i+1, 0, 1]*(-c[i, 0, 2]+c[i, 1, 0])));
           for k from 2 to t+1 do 
              c[i+1, k, 1] := expand(simplify((-1)^k*(sum((-1)^l*M[i, l, 2*k-2*l+1]
                                       *eta[2*k-2*l+1, k-l, 1], l = 0 .. k-1)
                                 +c[i+1, 0, 1]*(sum((-1)^l*c[i, l, 2*k-2*l], 
                                 l = 0 .. k))))); 
           end do; 
           c[i+1, 1, 2] := expand(simplify(-M[i, 0, 4]*eta[4, 1, 1]
                              +M[i, 1, 2]*eta[2, 0, 1]
                              -c[i+1, 0, 1]*c[i, 0, 3]*eta[3, 1, 1]));
           if 2 <= t then 
              for k from 3 to t+1 do 
                 for p to k-1 do 
                    c[i+1, p, 2*k-2*p] := expand(simplify((-1)^p
                       *(sum((-1)^l*M[i, l, 2*k-2*l]
                                        *eta[2*k-2*l, p-l, 1], l = 0 .. p)
                                        +c[i+1, 0, 1]*(sum((-1)^l
                                           *c[i, l, 2*k-2*l-1]
                                           *eta[2*k-2*l-1, p-l, 1], 
                                           l = 0 .. min(p, k-2))))));
                 end do 
              end do 
           end if; 
           for k from 2 to t+1 do 
              for p to k-1 do 
                 c[i+1, p, 2*k-2*p+1] := expand(simplify((-1)^p*(sum((-1)^l
                                              *M[i, l, 2*k-2*l+1]
                                         *eta[2*k-2*l+1, p-l, 1], l = 0 .. p)
                                          +c[i+1, 0, 1]*(sum((-1)^l*c[i, l, 2*k-2*l]
                                              *eta[2*k-2*l, p-l, 1],l = 0 .. p))))); 
             end do 
           end do 
        end if 
     end do 
  end if; 
  for p from 0 to trunc((1/2)*n+1) do 
     for k from 0 to n+2-2*p do 
        print(coef[n+1, p, k], "=", c[n+1, p, k]) 
     end do 
  end do 
end proc;
\end{verbatim}

\newpage
\section{APPENDIX F: Powers of $q-$Onsager generators}

\vspace{3mm}
The explicit expressions of the divided polynomials in terms of tensor products of Pauli matrices can be derived using (\ref{polymin}) and computing powers $\mathcal W_i^n$, $i=0,1$ ($n$ is a positive integer) of the fundamental generators of the $q-$Onsager algebra (\ref{TDbase})-(\ref{TDbase1}).  According to the parity of $n$, the expressions slightly differ. For simplicity, let us first study the case $\epsilon_\pm=0$.  To get the results below, we used:
\[w_0^n = \left\{ \begin{array}{l}
 {\left( {{k_ + }{k_ - }} \right)^{\frac{n}{2}}}\mathbb{I} \qquad \quad \text{if $n$ is even}, \\ 
 {\left( {{k_ + }{k_ - }} \right)^{\frac{{n - 1}}{2}}}{w_0} \ \quad \text{if $n$ is odd.} \\ 
 \end{array} \right.\]
 For $N = 2$ and $\epsilon_\pm=0$:
\begin{eqnarray*}
\mathcal W_0^2 & = & [2]!\sum_{1 \le j_1 < j_2 \le L}{q^{2\sigma_z}\otimes \dots \otimes q^{2\sigma_z}\otimes w_{0_{j_1}}\otimes q^{\sigma_z}\otimes \dots \otimes q^{\sigma_z}\otimes w_{0_{j_2}}\otimes \mathbb{I}\otimes \dots \otimes \mathbb{I}}\\
& &+ \sum_{1 \le j\le L}{q^{2\sigma_z}\otimes\dots \otimes q^{2\sigma_z}\otimes w_{0_{j}}^2\otimes\mathbb{I}\otimes\dots \otimes \mathbb{I}}.
\end{eqnarray*}

For $N =3$ and $\epsilon_\pm=0$:
\begin{eqnarray*}
\mathcal W_0^3&=&[3]!\sum_{1\le j_1 <j_2 <j_3\le L}{\left(q^{3\sigma_z}\otimes \dots \otimes q^{3\sigma_z}\otimes w_{0_{j_1}}\otimes q^{2\sigma_z}\otimes \dots \otimes q^{2\sigma_z}\otimes w_{0_{j_2}}\otimes\right.}\\
&&~~~~~~~~~~~~~~~~~~~~~~~~~\left. \otimes q^{\sigma_z}\otimes\dots\otimes q^{\sigma_z}\otimes w_{0_{j_3}}\otimes \mathbb{I}\otimes\dots \otimes \mathbb{I}\right)\\
& &+\sum_{1\le j\le L}{q^{3\sigma_z}\otimes \dots \otimes q^{3\sigma_z}\otimes w_{0_{j}}^3\otimes\mathbb{I}\otimes\dots \otimes\mathbb{I}}\\
& & +[3]\sum_{1\le j_1 < j_2 \le L}{q^{3\sigma_z}\otimes \dots\otimes q^{3\sigma_z}\otimes w_{0_{j_1}}\otimes q^{2\sigma_z}\otimes \dots \otimes q^{2\sigma_z}\otimes w_{0_{j_2}}^2\otimes \mathbb{I}\otimes\dots \otimes \mathbb{I}}\\
& & +[2]\sum_{1\le j_1 < j_2 \le L}{q^{3\sigma_z}\otimes\dots\otimes q^{3\sigma_z}\otimes w_{0_{j_1}}^2\otimes q^{\sigma_z}\otimes\dots\otimes q^{\sigma_z}\otimes w_{0_{j_2}}\otimes \mathbb{I}\otimes \dots\otimes \mathbb{I}}\\
& & +\sum_{1\le j_1 < j_2 \le L}{q^{3\sigma_z}\otimes\dots\otimes q^{3\sigma_z}\otimes q^{\sigma_z}_{j_1}w_{0_{j_1}}^2\otimes q^{\sigma_z}\otimes\dots\otimes q^{\sigma_z}\otimes w_{0_{j_2}}\otimes \mathbb{I}\otimes \dots\otimes \mathbb{I}}.
\end{eqnarray*}

\vspace{5mm}

More generally, by induction one finds the general structure:

\vspace{5mm}

\begin{itemize}
\item Case $n$ is odd and $\epsilon_\pm=0$:
\begin{eqnarray*}
 \mathcal W_0^n &=& \sum\limits_{h = 1}^n {\sum\limits_{\scriptstyle 0 < {i_1},{i_2},\dots,{i_h} \hfill \atop 
  \scriptstyle {i_1} + {i_2} + \dots + {i_h} = n \hfill} {\sum\limits_{1 \le {j_1}<{j_2}<\dots<{j_h} \le L} {\left(C_{\scriptstyle {i_1},\dots,{i_h} \hfill \atop 
  \scriptstyle {j_1},\dots,{j_h} \hfill}^{\left( h \right)} {q^{n{\sigma _z}}} \otimes \dots \otimes {q^{n{\sigma _z}}} \otimes w_{{0_{{j_1}}}}^{{i_1}} \otimes\right.}}}\\
  &&\otimes ~{q^{\left( {n - {i_1}} \right){\sigma _z}}} \otimes \dots \otimes {q^{\left( {n - {i_1}} \right){\sigma _z}}}\otimes w_{{0_{{j_2}}}}^{{i_2}} \otimes {q^{\left( {n - {i_1} - {i_2}} \right){\sigma _z}}} \otimes \dots \otimes {q^{\left( {n - {i_1} - {i_2}} \right){\sigma _z}}} \otimes \dots \\ 
  &&\otimes\left. {q^{\left( {n - {i_1} - \dots - {i_{h - 1}}} \right){\sigma _z}}} \otimes \dots \otimes {q^{\left( {n - {i_1} - \dots - {i_{h - 1}}} \right){\sigma _z}}} \otimes w_{{0_{{j_h}}}}^{{i_h}} \otimes \mathbb{I} \otimes \dots \otimes \mathbb{I}\right) \\ 
  &+& \sum\limits_{h = 2}^{n - 1} \quad {\sum\limits_{u = 1}^{[\frac{n-1}{2}]-|h-[\frac{n+1}{2}]|} \quad { {\sum\limits_{\scriptstyle 0 < {x_1},{x_2},\dots,{x_u} \hfill \atop 
        \scriptstyle {x_1} + {x_2} + \dots + {x_u} = u \hfill}^{{x_1} + {x_2} + \dots + {x_u} = Min\{\left[ {\frac{{n - 1}}{2}} \right],h\}} {\sum\limits_{\scriptstyle 0 < {i_{u+1}},{i_{u+2}},\dots,{i_h} \hfill \atop 
            \scriptstyle {i_{u+1}} + {i_{u+2}} + \dots + {i_h} = n-2(x_1+\dots+x_u) \hfill} {\sum\limits_{P }  } } } } }  \\ 
 & & \left(c_{\scriptstyle {t_1},{x_1},{j_1};\dots;{t_u},{x_u},{j_u} \hfill \atop 
   \scriptstyle {i_{u + 1}},{i_{u + 2}},\dots,{i_h};{{j'}_{u + 1}},\dots,{{j'}_h} \hfill}^{\left( h \right)}
  {q^{n{\sigma _z}}} \otimes \dots \otimes {q^{n{\sigma _z}}} \otimes w_{{0_{j'_{u + 1}}}}^{{i_{u + 1}}} \otimes {q^{\left( {n - {i_{u + 1}}} \right){\sigma _z}}} \otimes \dots \otimes \right.\\
    & & \otimes ~{q^{\left( {n - {i_{u + 1}}} \right){\sigma _z}}} \otimes q_{{j_1}}^{{t_1}}w_{0_{j_1}}^{2{x_1}} \otimes {q^{\left( {n - {i_{u + 1}} - 2{x_1}} \right){\sigma _z}}} \otimes \dots \otimes  {q^{\left( {n - {i_{u + 1}} - \dots - {i_{h - 1}} - 2\left( {{x_1} + \dots + {x_u}} \right)} \right)\sigma_z}} \otimes\\&&\left.\otimes~ w_{{0_{{{j'}_h}}}}^{{i_h}} \otimes \mathbb{I} \otimes \dots \otimes \mathbb{I}\right)
\end{eqnarray*}

\vspace{5mm}

\newpage

\item Case $n$ is even and $\epsilon_\pm=0$:

\begin{eqnarray*}
 \mathcal W_0^n &=& \sum\limits_{h = 1}^n {\sum\limits_{\scriptstyle 0 < {i_1},{i_2},\dots,{i_h} \hfill \atop 
  \scriptstyle {i_1} + {i_2} + \dots + {i_h} = n \hfill} {\sum\limits_{1 \le {j_1}<{j_2}<\dots,<{j_h} \le L} {\left(C_{\scriptstyle {i_1},...,{i_h} \hfill \atop 
  \scriptstyle {j_1},\dots,{j_h} \hfill}^{\left( h \right)} {q^{n{\sigma _z}}} \otimes \dots \otimes {q^{n{\sigma _z}}} \otimes w_{{0_{{j_1}}}}^{{i_1}} \otimes \right.}}}\\
  &\otimes& {q^{\left( {n - {i_1}} \right){\sigma _z}}} \otimes \dots \otimes {q^{\left( {n - {i_1}} \right){\sigma _z}}}\otimes w_{{0_{{j_2}}}}^{{i_2}} \otimes {q^{\left( {n - {i_1} - {i_2}} \right){\sigma _z}}} \otimes \dots \otimes {q^{\left( {n - {i_1} - {i_2}} \right){\sigma _z}}} \otimes \dots \\ 
  &\otimes& \left.{q^{\left( {n - {i_1} - \dots - {i_{h - 1}}} \right){\sigma _z}}} \otimes \dots \otimes {q^{\left( {n - {i_1} - \dots - {i_{h - 1}}} \right){\sigma _z}}} \otimes w_{{0_{{j_h}}}}^{{i_h}} \otimes \mathbb{I} \otimes \dots \otimes \mathbb{I}\right) \\ 
  &+& \sum\limits_{h = 2}^{n/2} \quad {\sum\limits_{u = 1}^{h-1} \quad {{\sum\limits_{\scriptstyle 0 < {x_1},{x_2},\dots,{x_u} \hfill \atop 
          \scriptstyle {x_1} + {x_2} + \dots + {x_u} = u \hfill}^{{x_1} + {x_2} + \dots + {x_u} = Min\{ \left[ {\frac{{n - 1}}{2}} \right],h\}} {\sum\limits_{\scriptstyle 0 < {i_{u+1}},{i_{u+2}},\dots,{i_h} \hfill \atop 
              \scriptstyle {i_{u+1}} + {i_{u+2}} + \dots + {i_h} = n-2(x_1+\dots+x_u) \hfill} {\sum\limits_{P }  } } } } }  \\
  & & \left(c_{\scriptstyle {t_1},{x_1},{j_1};\dots;{t_u},{x_u},{j_u} \hfill \atop 
     \scriptstyle {i_{u + 1}},{i_{u + 2}},\dots,{i_h};{{j'}_{u + 1}},\dots,{{j'}_h} \hfill}^{\left( h \right)}
    {q^{n{\sigma _z}}} \otimes \dots \otimes {q^{n{\sigma _z}}} \otimes w_{{0_{j'{u + 1}}}}^{{i_{u + 1}}} \otimes {q^{\left( {n - {i_{u + 1}}} \right){\sigma _z}}} \otimes \dots \otimes\right.\\&&~~\otimes ~{q^{\left( {n - {i_{u + 1}}} \right){\sigma _z}}} \otimes q_{{j_1}}^{{t_1}}w_{0_{j_1}}^{{x_1}} \otimes {q^{\left( {n - {i_{u + 1}} - 2{x_1}} \right){\sigma _z}}}\otimes \dots \otimes {q^{\left( {n - {i_{u + 1}} - \dots - {i_{h - 1}} - 2\left( {{x_1} + \dots + {x_u}} \right)} \right)}} \otimes\\&&~~\left.\otimes~w_{{0_{{{j'}_h}}}}^{{i_h}} \otimes \mathbb{I} \otimes \dots \otimes \mathbb{I}\right)\\ 
  &+& \sum\limits_{h = n/2+1}^{n - 1} \quad {\sum\limits_{u = 1}^{n-h} \quad { {\sum\limits_{\scriptstyle 0 < {x_1},{x_2},\dots,{x_u} \hfill \atop 
                \scriptstyle {x_1} + {x_2} + \dots + {x_u} = u \hfill}^{{x_1} + {x_2} + \dots + {x_u} = Min\{\left[ {\frac{{n - 1}}{2}} \right],h\}} \quad {\sum\limits_{\scriptstyle 0 < {i_{u+1}},{i_{u+2}},\dots,{i_h} \hfill \atop 
                              \scriptstyle {i_{u+1}} + {i_{u+2}} + \dots + {i_h} = n-2(x_1+\dots+x_u) \hfill} {\sum\limits_{P }  } } } } }  \\ 
 & & \left(c_{\scriptstyle {t_1},{x_1},{j_1};\dots;{t_u},{x_u},{j_u} \hfill \atop 
   \scriptstyle {i_{u + 1}},\dots,{i_h};{{j'}_{u + 1}},\dots,{{j'}_h} \hfill}^{\left( h \right)}
  {q^{n{\sigma _z}}} \otimes \dots \otimes {q^{n{\sigma _z}}} \otimes w_{{0_{j'{u + 1}}}}^{{i_{u + 1}}} \otimes {q^{\left( {n - {i_{u + 1}}} \right){\sigma _z}}} \otimes \dots \otimes \right.\\&& \otimes~{q^{\left( {n - {i_{u + 1}}} \right){\sigma _z}}} \otimes q_{{j_1}}^{{t_1}}w_{0_{j_1}}^{2{x_1}} \otimes {q^{\left( {n - {i_{u + 1}} - 2{x_1}} \right){\sigma _z}}} \otimes \dots \otimes {q^{\left( {n - {i_{u + 1}} - \dots - {i_{h - 1}} - 2\left( {{x_1} + \dots + {x_u}} \right)} \right)}} \otimes\\&&\left.\otimes~ w_{{0_{{{j'}_h}}}}^{{i_h}} \otimes \mathbb{I} \otimes \dots \otimes \mathbb{I}\right)
\end{eqnarray*}
\end{itemize}
where
\[P=\left \{ \begin{array}{l}
1 \le {j_1} < ... < {j_u} \le L,\\
1 \le {{j'}_{u + 1}} < ... < {{j'}_h} \le L,\\
\left\{ {{j_1},...,{j_u}} \right\} \cap \left\{ {{{j'}_{u + 1}},...,{{j'}_h}} \right\} = \emptyset , \\
t_1 \ge t_2 \ge \dots \ge t_u >0,\\
t_k \leq \sum\limits_{v>k}{2x_v}+\sum\limits_{j'_v>j_k}{i_v}, \quad k = 1,\dots,u.
\end{array}\right.
\]

\vspace{3mm}

For $n=1, 2, 3, 4$,  the explicit expressions for the coefficients are:\\
\begin{itemize}
\item For $n =1$: \qquad  $C^{(1)}_{\scriptstyle 1 \hfill \atop \scriptstyle j_1 \hfill}  =1$;
\item For  $n=2$: \qquad $C^{(2)}_{\scriptstyle 1,1 \hfill \atop \scriptstyle j_1,j_2 \hfill}= q+q^{-1}=[2]!,\quad C^{(1)}_{\scriptstyle 2 \hfill \atop \scriptstyle j_1 \hfill}=1  $, where $j_1 < j_2;$
\item For  $n=3$: \\
$C^{(3)}_{\scriptstyle 1,1,1 \hfill \atop \scriptstyle j_1,j_2,j_3 \hfill}=[3]!, \quad C^{(2)}_{\scriptstyle 1,2 \hfill \atop \scriptstyle j_1,j_2 \hfill}=[3], \quad C^{(2)}_{\scriptstyle 2,1 \hfill \atop \scriptstyle j_1,j_2 \hfill}=[2], \quad C^{(1)}_{\scriptstyle 3 \hfill \atop \scriptstyle j_1 \hfill}=1$,\\

$c^{(2)}_{\scriptstyle 1,2,j_1 \hfill \atop \scriptstyle 1,j_2 \hfill}=1$, where $j_1 < j_2 < j_3$;
\item For  $n=4$: \\
$C^{(4)}_{\scriptstyle 1,1,1,1 \hfill \atop \scriptstyle j_1,j_2,j_3,j_4 \hfill}=[4]!, \quad C^{(3)}_{\scriptstyle 2,1,1 \hfill \atop \scriptstyle j_1,j_2,j_3 \hfill}=[3]!,\quad C^{(3)}_{\scriptstyle 1,2,1 \hfill \atop \scriptstyle j_1,j_2,j_3 \hfill}=[2][4], \quad C^{(3)}_{\scriptstyle 1,1,2 \hfill \atop \scriptstyle j_1,j_2,j_3 \hfill}=[3][4],$\\

$ C^{(2)}_{\scriptstyle 3,1 \hfill \atop \scriptstyle j_1,j_2 \hfill}=2[2],\quad C^{(2)}_{\scriptstyle 1,3 \hfill \atop \scriptstyle j_1,j_2 \hfill}=[4],\quad C^{(2)}_{\scriptstyle 2,2 \hfill \atop \scriptstyle j_1,j_2 \hfill}=[3], \quad C^{(1)}_{\scriptstyle 4 \hfill \atop \scriptstyle j_1 \hfill}=1,$\\

$c^{(3)}_{\scriptstyle 1,2,j_2 \hfill \atop \scriptstyle 1,1;j_1,j_3 \hfill}=[4], \quad c^{(3)}_{\scriptstyle 1,2,j_1 \hfill \atop \scriptstyle 1,1;j_2,j_3 \hfill}=[2]^2, \quad c^{(3)}_{\scriptstyle 2,2,j_1 \hfill \atop \scriptstyle 1,1;j_2,j_3 \hfill}=[2], \quad c^{(2)}_{\scriptstyle 1,2,j_1 \hfill \atop \scriptstyle 2,j_2 \hfill}=[2],\quad c^{(2)}_{\scriptstyle 2,2,j_1 \hfill \atop \scriptstyle 2,j_2 \hfill}=1$ , 

where $j_1 < j_2 < j_3 < j_4$.
\end{itemize}

\vspace{5mm}

\begin{rem}
Powers of $\mathcal W_1$ for $\epsilon_-=0$ are readily obtained through the substitutions $q\rightarrow q^{-1}$ in above expressions.
\end{rem}

The results above correspond to the special choice $\epsilon_\pm=0$. For $\epsilon_\pm\neq0$, a similar analysis yields to:

\vspace{2mm}

\begin{eqnarray*}
\mathcal{W}_0^n &=& \mathcal{W}_0^n|_{\epsilon_+=0}+\epsilon_+^n q^{n\sigma_z}\otimes\dots\otimes q^{n\sigma_z}
\\
&+&\sum_{m=1}^{n-1}{\sum_{h=1}^m~{\sum_{P_1}{{\left(\epsilon_+^{n-m}C_{\scriptstyle {i_1},\dots,{i_h} \hfill \atop 
  \scriptstyle {j_1},\dots,{j_h} \hfill}^{\left( m,h \right)}q^{n\sigma_z}\otimes\dots\otimes q^{n\sigma_z}\otimes w_{0_{j_1}}^{i_1}\otimes q^{(n-i_1)\sigma_z}\otimes \dots\otimes w_{0_{j_h}}^{i_h}\otimes\right.}}}}\\&&~~~~~~~~~~~~~~~~~~~~~\left.\otimes ~q^{(n-m)\sigma_z}\otimes \dots\otimes q^{(n-m)\sigma_z}\right)\\
  &+& \sum\limits_{m = 2}^{n - 1} {\sum_{h=1}^{m-1}~{\sum\limits_{u = 1}^{Min([\frac{m}{2}],h)} \quad { {\sum\limits_{\scriptstyle 0 < {x_1},{x_2},\dots,{x_u} \hfill \atop 
          \scriptstyle {x_1} + {x_2} + \dots + {x_u} = u \hfill}^{{x_1} + {x_2} + \dots + {x_u} = [\frac{m}{2}]} {\sum\limits_{\scriptstyle 0 < {i_{u+1}},{i_{u+2}},\dots,{i_h} \hfill \atop 
              \scriptstyle {i_{u+1}} + {i_{u+2}} + \dots + {i_h} = m-2(x_1+\dots+x_u) \hfill} {\sum\limits_{P_2 }  } } } }} }  \\ 
   & &\left(\epsilon_+^{n-m} c_{\scriptstyle {t_1},{x_1},{j_1};\dots;{t_u},{x_u},{j_u} \hfill \atop 
     \scriptstyle {i_{u + 1}},{i_{u + 2}},\dots,{i_h};{{j'}_{u + 1}},\dots,{{j'}_h} \hfill}^{\left( m,h \right)}
    {q^{n{\sigma _z}}} \otimes \dots \otimes {q^{n{\sigma _z}}} \otimes w_{{0_{j'_{u + 1}}}}^{{i_{u + 1}}} \otimes {q^{\left( {n - {i_{u + 1}}} \right){\sigma _z}}} \otimes \dots \otimes {q^{\left( {n - {i_{u + 1}}} \right){\sigma _z}}} \otimes\right.\\&&~\left.\otimes~ q_{{j_1}}^{{t_1}}w_{0_{j_1}}^{2{x_1}}\otimes q^{\left( {n - {i_{u + 1}} - 2{x_1}} \right){\sigma _z}}\otimes\dots\otimes~ q^{\left( {n - {i_{u + 1}} - 2{x_1}} \right){\sigma _z}}\otimes\dots\otimes q^{(n-m)\sigma_z}\otimes\dots\otimes q^{(n-m)\sigma_z}\right)\\
      &+& \sum\limits_{t = 1}^{[\frac{n-1}{2}] } {\sum_{h=1}^{t}~{ { {\sum\limits_{\scriptstyle 0 < {x_1},{x_2},\dots,{x_h} \hfill \atop 
          \scriptstyle {x_1} + {x_2} + \dots + {x_h} = t \hfill}{ {\sum\limits_{P_3 }  } } } }} }\left(\epsilon_+^{n-2t} c_{\scriptstyle {t_1},{x_1},{j_1};\dots;{t_h},{x_h},{j_h} \hfill}^{\left( 2t,h \right)}
    {q^{n{\sigma _z}}} \otimes \dots \otimes {q^{n{\sigma _z}}} \otimes q^{t_1\sigma_z}_{j_1}w_{{0_{j_1}}}^{{2x_1}} \otimes\right.\\&&~~~~~\otimes~ {q^{\left( {n - 2x_1} \right){\sigma _z}}} \otimes \dots \otimes {q^{\left( {n - 2x_1} \right){\sigma _z}}}\otimes q^{t_2\sigma_z}_{j_2}w_{{0_{j_2}}}^{{2x_2}}\otimes {q^{\left( {n - 2x_1-2x_2} \right){\sigma _z}}} \otimes\\&&~~~~~\left.\otimes \dots \otimes {q^{\left( {n - 2x_1-2x_2} \right){\sigma _z}}}\otimes \dots \otimes q^{t_h\sigma_z}_{j_h}w_{{0_{j_h}}}^{{2x_h}}\otimes q^{(n-2t)\sigma_z}\otimes\dots\otimes q^{(n-2t)\sigma_z}\right)
\end{eqnarray*}
where
\[P_1=\left\{\begin{array}{l}
i_1+\dots+i_h =m, \\
 0<j_1<j_2<\dots<j_h\le L\\
 \end{array} \right.; \]
 \[P_2= \left\{\begin{array}{l}
 1 \le j_1 < \dots < j_u \le L,\\
 1 \le j'_{u+1} <\dots <j'_h \le L,\\
 \{j_1,\dots,j_u\} \cap \{j'_{u+1},\dots,j'_{h}\}=\emptyset,\\
 t_1\ge t_2\ge \dots \ge t_u >0,\\
 t_k \le n-2\sum\limits_{i\le k}{x_i}-\sum\limits_{j'_v<j_k}{i_v},\quad k=1,\dots, u.
 \end{array}
 \right. \]
 \[P_3= \left\{\begin{array}{l}
 1 \le j_1 < \dots < j_h \le L,\\
 t_1\ge t_2\ge \dots \ge t_h >0,\\
 t_k \le n-2\sum\limits_{i\le k}{x_i},\quad k=1,\dots, h.
 \end{array}
 \right. \]
For $n=2, 3, 4$,  the explicit expressions for the coefficients are:\\
\begin{itemize}
\item For $n =2$: \qquad  $C^{(1,1)}_{\scriptstyle 1 \hfill \atop \scriptstyle j_1 \hfill}  =[2]$;
\item For  $n=3$: \\
$C^{(2,2)}_{\scriptstyle 1,1 \hfill \atop \scriptstyle j_1,j_2 \hfill}=[3]!, \quad C^{(2,1)}_{\scriptstyle 2 \hfill \atop \scriptstyle j_1 \hfill}=[2],\quad C^{(1,1)}_{\scriptstyle 1 \hfill \atop \scriptstyle j_1 \hfill}=[3]$,\\

$c^{(2,1)}_{1,1,j_1}=1$, where $j_1 < j_2$;
\item For  $n=4$: \\
$C^{(3,3)}_{\scriptstyle 1,1,1 \hfill \atop \scriptstyle j_1,j_2,j_3 \hfill}=[4]!, \quad C^{(3,2)}_{\scriptstyle 2,1 \hfill \atop \scriptstyle j_1,j_2 \hfill}=[3]!,\quad C^{(3,2)}_{\scriptstyle 1,2 \hfill \atop \scriptstyle j_1,j_2,j_3 \hfill}=[2][4], \quad C^{(3,1)}_{\scriptstyle 3 \hfill \atop \scriptstyle j_1 \hfill}=2[2],$\\

$ C^{(2,2)}_{\scriptstyle 1,1 \hfill \atop \scriptstyle j_1,j_2 \hfill}=[3][4],\quad C^{(2,1)}_{\scriptstyle 2 \hfill \atop \scriptstyle j_1 \hfill}=[3],\quad C^{(1,1)}_{\scriptstyle 1 \hfill \atop \scriptstyle j_1 \hfill}=[4],$\\

$c^{(3,2)}_{\scriptstyle 1,1,j_2 \hfill \atop \scriptstyle 1;j_1 \hfill}=[4], \quad c^{(3,2)}_{\scriptstyle 2,1,j_1 \hfill \atop \scriptstyle 1;j_2 \hfill}=[2], \quad c^{(3,2)}_{\scriptstyle 1,1,j_1 \hfill \atop \scriptstyle 1;j_2 \hfill}=[2]^2,$ \\

$\quad c^{(2,1)}_{1,1,j_1}=[2],\quad c^{(2,1)}_{2,1,j_1}=1$ , 

where $j_1 < j_2 < j_3$.

\end{itemize}

\vspace{5mm}

\begin{rem}
Powers of $\mathcal W_1$ for $\epsilon_-\neq0$ are readily obtained through the substitutions $q\rightarrow q^{-1},\epsilon_+\rightarrow \epsilon_-$  in above expressions.
\end{rem}
\newpage


\end{document}